\documentclass[11pt]{article}

\usepackage[T1]{fontenc}
\usepackage[utf8]{inputenc}
\usepackage{lmodern}
\usepackage{microtype}

\usepackage{amsmath,amssymb,amsfonts,amsthm,amscd,mathtools,mathrsfs}

\usepackage[margin=0.87in]{geometry}
\usepackage{titling}
\usepackage{enumitem}
\usepackage{array}
\usepackage{titlesec}

\setlist{topsep=0.35em,itemsep=0.12em,parsep=0pt,partopsep=0pt}
\titlespacing*{\section}{0pt}{2.0ex plus .5ex minus .2ex}{0.8ex}
\titlespacing*{\subsection}{0pt}{1.45ex plus .35ex minus .15ex}{0.45ex}
\titleformat{\subsection}{\normalfont\large\itshape}{\thesubsection}{0.7em}{}
\usepackage{xcolor}
\usepackage[most]{tcolorbox}
\tcbset{
  abstractbox/.style={
    enhanced,
    breakable,
    colback=gray!10,
    colframe=gray!10,
    boxrule=0pt,
    arc=0pt,
    left=3mm,
    right=3mm,
    top=3mm,
    bottom=3mm,
    before skip=0pt,
    after skip=0.8em,
    before upper={\setlength{\parindent}{0pt}}
  }
}

\usepackage[authoryear]{natbib}
\bibpunct{(}{)}{,}{a}{}{,}

\usepackage{hyperref}
\usepackage{cleveref}
\hypersetup{
  colorlinks=true,
  linkcolor=magenta,
  citecolor=teal,
  urlcolor=magenta,
  pdfauthor={Abhiram Sripat},
  pdftitle={Quantum Mechanics on Non-Hausdorff One-Manifolds with Finite Graph Resolutions: Analytic Completion, Branching, and Invariant-Sector Scattering},
  pdfsubject={Closed quantum domains and scattering on finite graph-resolved non-Hausdorff one-manifolds},
  pdfkeywords={non-Hausdorff manifolds, quantum mechanics, vector bundles, section extension, transported jets, Sobolev closure, quantum graphs, branching, holonomy, compact groups, invariant sectors, scattering}
}

\pretitle{\begin{flushleft}\LARGE\bfseries}
\posttitle{\par\end{flushleft}\vspace{0.5em}}
\preauthor{\begin{flushleft}\large}
\postauthor{\par\end{flushleft}}
\predate{\begin{flushleft}}
\postdate{\par\end{flushleft}}

\theoremstyle{plain}
\newtheorem{theorem}{Theorem}[section]
\newtheorem{proposition}[theorem]{Proposition}
\newtheorem{corollary}[theorem]{Corollary}
\newtheorem{lemma}[theorem]{Lemma}
\newtheorem{principle}[theorem]{Principle}

\theoremstyle{definition}
\newtheorem{definition}[theorem]{Definition}
\newtheorem{hypothesis}[theorem]{Hypothesis}

\newtheorem{remark}[theorem]{Remark}

\newcommand{\R}{\mathbb{R}}
\newcommand{\C}{\mathbb{C}}

\newcommand{\Ltwo}{\mathbb{L}_{2}}

\title{Quantum Mechanics on Non-Hausdorff One-Manifolds\\
with Finite Graph Resolutions:\\
Analytic Completion, Branching,\\
and Invariant-Sector Scattering}
\author{Abhiram Sripat\\
\textit{Florence Quantum Labs}\\
\href{mailto:abhiram@florencequantumlabs.com}{\textit{abhiram@florencequantumlabs.com}}}
\date{}

\begin{document}
\pagestyle{plain}
\maketitle
\thispagestyle{empty}

\begin{tcolorbox}[abstractbox]
\textbf{Abstract.}
We develop quantum mechanics on a finite graph-resolved class of second-countable, \(T_1\), non-Hausdorff one-manifolds by separating local differential expressions from the global extension and completion data that determine closed operator domains. For binary adjunctions, transported-jet equalisers give the formal compatibility conditions, while Whitney realisability supplies the residual finite-order obstruction; exponentially tangent branches show that formal compatibility alone can fail. Under explicit trace hypotheses, Sobolev closures are determined by the realisable incidence traces in the noncritical range \(j<m-c/p\). For finite smooth graph resolutions this yields exact integer-order closures, formal endpoint-jet equalisers, signed return holonomy, and finite prolongation profiles.

At first order, scalar completion gives the continuous \(H^1\)-domain on the Hausdorff propagation graph. A marked resolving-presentation measure pushes forward to edge weights and the flat kinetic form gives the weighted Kirchhoff Hamiltonian. Multiple-origin lines and circles, branching junctions, finite trees, and split-and-rejoin networks provide explicit deficiency indices, bright--dark decompositions, transmission criteria, and compact dark states. For a rank-\(r\) Hermitian bundle on the line with \(k\) origins, relative transitions \(W_2,\ldots,W_k\in U(r)\) select the transmitting space \(V_{\mathbf W}=\bigcap_{a=2}^k\ker(W_a-I)\). Compact-group representations identify this with the invariant sector of the generated subgroup; connected compact semisimple groups require at most two relative gluing matrices for full invariant completion, and low-degree \(SU(3)\) tensors give explicit optimal ranks and a quartic rank-two invariant projector. The resulting hierarchy distinguishes weighted-graph data, bundle-selected trace domains, cycle holonomy, and higher transported jets.
\end{tcolorbox}

\noindent\textbf{Keywords:}
non-Hausdorff manifolds, quantum mechanics, vector bundles, section extension,
transported jets, Whitney extension, Sobolev closure, quantum graphs,
self-adjoint operators, branching, holonomy, compact groups,
invariant-sector scattering.

\vspace{1em}

\tableofcontents
\clearpage
\setlength{\emergencystretch}{3em}

\part{Analytic foundations and completed domains}
\label{part:analytic-foundations}

\section{Introduction}
\label{sec:introduction}

By a non-Hausdorff one-manifold we mean a second-countable, \(T_1\), locally Euclidean smooth space without the Hausdorff axiom. Such spaces occur in adjunction and leaf-space constructions \citep{HaefligerReeb1957,OConnell2023Adjunction,OConnell2024Bundles,LysynskyiMaksymenko2026Line,LysynskyiMaksymenko2026Union,VlasenkoMaksymenko2026} and have also appeared in mathematical-physics models \citep{HellerPysiakSasin11}.

We study finite smooth graph resolutions \(\pi:M\to\Gamma\) with finite metric Hausdorff reflection \(\Gamma\). Metric-measure models use a marked Hausdorff presentation \(q:\widetilde M\to M\), an edge metric, and a positive presentation measure \(\widetilde\mu\), inducing \(\mu_\Gamma=(\pi\circ q)_*\widetilde\mu\). The multiplicities entering the pushed-forward measure belong to this marked presentation; bundle transport, connections, and quadratic forms are additional quantum data. The class includes multiple-origin lines and circles, finite split sets, branching trees, and split-and-rejoin systems.

On the line with two origins, Hausdorff-valued scalars factor through \(\mathbb R\), whereas a relative-sign rank-one bundle forces smooth component representatives to be flat at the origin. Its quantum domain is therefore determined only after this extension condition is completed in the operator-relevant topology.

For binary adjunctions, component restriction becomes a Hausdorff prolongation problem. Transported jets give necessary equaliser conditions, but exponentially tangent branches show that full Whitney realisability is an independent obstruction; finite clean conically tame systems remove it. Under explicit trace and density hypotheses, \(W^{m,p}\)-closure retains precisely the realisable incidence traces in the noncritical range \(j<m-c/p\). Finite graph resolutions sharpen this to exact integer-order endpoint closures, signed formal holonomy, and prolongation profiles.

At first order the scalar domain is \(H^1_{\mathrm{cont}}(\Gamma)\); the pushed-forward marked measure supplies edge weights, and the flat derivative form yields the weighted Kirchhoff Hamiltonian. The oriented derivative has its own boundary form and deficiency theory. Branching, trees, and split-and-rejoin geometries then give explicit channel decompositions, transmission criteria, resonant transparency, and compact dark states.

For rank-\(r\) unitary gluing on the line with \(k\) origins, the relative transitions select \(V_{\mathbf W}=\bigcap_{a=2}^k\ker(W_a-I)\): the flat Friedrichs operator is free on \(V_{\mathbf W}\), Dirichlet on its orthogonal complement, and has transmission projector \(P_{V_{\mathbf W}}\). If \(W_a=\rho(g_a)\), this becomes the invariant sector of the generated compact subgroup. Connected compact semisimple groups require at most two relative gluing matrices for invariant completion, with explicit low-degree \(SU(3)\) ranks and a quartic rank-two projector.

The analysis uses standard Whitney extension, Sobolev trace, quadratic-form, self-adjoint extension, quantum-graph, and compact-group theory \citep{Whitney1934Extension,Fefferman2006Whitney,AdamsFournier2003,Kato1995,KostrykinSchrader99,BerkolaikoKuchment2013,Hall2015,FultonHarris1991}. The structural contribution is the derivation, within the stated non-Hausdorff class, of a single chain from incidence and bundle transport to extension images, closed trace domains, resolution-induced weights, and invariant-sector scattering projectors.

Throughout, \emph{resolution-induced} means determined by the marked analytic resolution together with its graph metric, presentation measure, stated bundle transport or flat connection, and unperturbed quadratic form. Topology alone does not determine these choices; other connections, potentials, measures, defect interactions, or self-adjoint extensions may retain additional information.

Parts~\ref{part:analytic-foundations}--\ref{part:detectability} develop the analytic foundations, graph-resolved dynamics, unitary gluing, and detectability hierarchy; the appendices contain the cutoff and Sobolev perturbation estimates.

\section{The line with two origins: the motivating quantum problem}
\label{sec:L2}

The line with two origins provides the basic contrast: its scalar sector is elementary, whereas the relative-sign rank-one sector has a nontrivial extension defect. Section~\ref{sec:L2-return} returns to the corresponding closed dynamics after the Sobolev analysis.

\subsection{Geometry and scalar calibration}
\label{subsec:L2-geometry}

Let
\[
  \Ltwo
  :=
  (\mathbb{R}_1\sqcup\mathbb{R}_2)/\!\sim,
  \qquad
  (x,1)\sim(x,2)
  \quad
  \text{for }x\neq0.
\]
Write
\(o_1=[0]_1, \qquad o_2=[0]_2\).
The Hausdorff reflection is
\(q:\Ltwo\longrightarrow\mathbb{R}, \qquad q([x])=x, \qquad q(o_1)=q(o_2)=0\).

\begin{proposition}[Hausdorff factorisation]
\label{prop:L2-hausdorff-factorisation}
Every continuous map from \(\Ltwo\) to a Hausdorff space identifies the two
origins and factors uniquely through \(q\).
\end{proposition}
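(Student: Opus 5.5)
The plan has two parts: first show that any continuous \(f:\Ltwo\to Y\) with \(Y\) Hausdorff satisfies \(f(o_1)=f(o_2)\); then obtain the factorisation from the universal property of the quotient. The key topological input is that the two origins cannot be separated by open sets. Every open neighbourhood of \(o_i\) contains the image of \((-\varepsilon,\varepsilon)\setminus\{0\}\) for some \(\varepsilon>0\), and this punctured interval is common to both charts. Concretely, take neighbourhoods \(U_1\ni o_1\), \(U_2\ni o_2\). Their preimages in \(\mathbb R_1\), \(\mathbb R_2\) contain intervals \((-\varepsilon_i,\varepsilon_i)\), so \(U_1\cap U_2\) contains \([x]\) for every \(0<|x|<\min\varepsilon_i\).

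For the identification, suppose \(f(o_1)\neq f(o_2)\). Since \(Y\) is Hausdorff, choose disjoint open \(V_1\ni f(o_1)\) and \(V_2\ni f(o_2)\). The sets \(f^{-1}(V_1)\) and \(f^{-1}(V_2)\) are disjoint open neighbourhoods of \(o_1\) and \(o_2\), which contradicts the non-separation above. A sequential version works equally well. The sequence \([1/n]\) converges to both origins, because the quotient map is continuous and \(1/n\to0\) in each copy. Continuity of \(f\) then gives \(f([1/n])\to f(o_1)\) and \(f([1/n])\to f(o_2)\), and limits are unique in a Hausdorff space.

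For the factorisation, observe that \(q\) is itself a quotient map. Indeed, \(q\circ\iota_1:\mathbb R_1\to\mathbb R\) is the identity, where \(\iota_1:\mathbb R_1\to\Ltwo\) is the chart inclusion. Hence a subset \(W\subseteq\mathbb R\) is open exactly when \(q^{-1}(W)\) is open. The reverse direction holds because \(q^{-1}(W)\) pulls back to \(W\) in both copies, and the pullbacks of \(q^{-1}(W)\) are open iff \(W\) is open, by the definition of the quotient topology on \(\Ltwo\). Define \(\bar f:\mathbb R\to Y\) by \(\bar f(x)=f(q^{-1}(x))\). This is well defined because the fibres of \(q\) are singletons except at \(0\), where the fibre is \(\{o_1,o_2\}\) and \(f\) agrees on it by the first part. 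Continuity is immediate from \(\bar f=f\circ\iota_1\). Uniqueness holds because \(q\) is surjective.

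None of the steps is a real obstacle. The only point requiring care is the description of neighbourhoods of the origins in the quotient topology, which is what forces the two origins to be inseparable.
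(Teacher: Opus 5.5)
Your proof is correct and follows the paper's argument. The first step is exactly the paper's: disjoint neighbourhoods of distinct images pull back to disjoint neighbourhoods of \(o_1,o_2\), contradicting their inseparability. The second step is the same factorisation through \(q\). The only small difference is that you get continuity of \(\bar f\) directly from \(\bar f=f\circ\iota_1\), which makes your separate check that \(q\) is a quotient map redundant, whereas the paper simply appeals to \(q\) being the quotient that collapses the exceptional fibre.
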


\begin{proof}
If the two origins had distinct images, disjoint neighbourhoods of those
images would pull back to disjoint neighbourhoods of \(o_1\) and \(o_2\),
contradicting their inseparability. Since \(q\) is the quotient collapsing
only the exceptional fibre, the induced factor map is unique and
continuous.
\end{proof}

For the identity-glued smooth structure,
\(C^\infty(\Ltwo,\mathbb C) \cong C^\infty(\mathbb R,\mathbb C)\).
For the scalar calibration, equip the Hausdorff reflection with Lebesgue
measure. Equivalently, after an irrelevant overall normalization, one may use
any marked resolving presentation whose pushed-forward measure is a positive
constant multiple of \(dx\). If \(f=\widehat f\circ q\), define its scalar
\(L^2\)-norm by
\(\|f\|_{L^2(\Ltwo)}^2 := \int_{\mathbb R}|\widehat f(x)|^2\,dx\).
\(q^*:L^2(\mathbb R)\to L^2(\Ltwo)\) is unitary for this calibration.
We take the free scalar kinetic form to be
\[
  \mathfrak q_0[u]
  :=
  \int_{\mathbb R}|u'(x)|^2\,dx,
  \qquad
  D(\mathfrak q_0)=H^1(\mathbb R).
\]
Its associated self-adjoint operator is the free Laplacian
\(-d^2/dx^2\) on \(H^2(\mathbb R)\).

\begin{theorem}[Scalar reduction]
\label{thm:L2-scalar-blindness}
The doubled origin creates neither an additional scalar \(L^2\)-degree of
freedom nor a scalar point interaction for the resolution-induced free
Hamiltonian.
\end{theorem}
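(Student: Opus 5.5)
The plan is to make both halves of the statement precise using the calibration fixed just above, and to reduce everything to unitary equivalence through \(q^*\).

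\emph{No additional \(L^2\)-degree of freedom.} By Proposition~\ref{prop:L2-hausdorff-factorisation}, every scalar function relevant to the scalar sector is continuous into \(\C\). It therefore takes the same value at \(o_1\) and \(o_2\) and factors uniquely as \(f=\widehat f\circ q\). For smooth scalars this is the identification \(C^\infty(\Ltwo,\C)\cong C^\infty(\R,\C)\). The scalar Hilbert space is then, by definition, the completion of such functions in the norm \(\|f\|^2=\int_\R|\widehat f|^2\,dx\), so \(q^*:L^2(\R)\to L^2(\Ltwo)\) is unitary. Two further points close this half:
\begin{enumerate}[label=(\roman*)]
\item The exceptional fibre \(q^{-1}(0)=\{o_1,o_2\}\) has pushed-forward measure zero, because the presentation measure is a constant multiple of \(dx\). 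So no atom, and hence no extra finite-dimensional summand such as \(\C^2\) at the origin, can appear in \(L^2\).
\item A function on the marked presentation \(\R_1\sqcup\R_2\) that distinguishes the two origins differs from its descent only on a null set. It therefore defines the same \(L^2\) class.
\end{enumerate}
Together these give \(L^2(\Ltwo)\cong L^2(\R)\) with no additional degree of freedom.

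\emph{No point interaction.} Transport the free form along \(q^*\), setting \(\mathfrak q[f]:=\mathfrak q_0[\widehat f]\) with domain \(q^*H^1(\R)\). This is the resolution-induced form: the local kinetic expression \(|u'|^2\) on each chart \(\R_i\) agrees with that of \(\widehat f\) on \(x\neq 0\), and the point \(0\) is null. The step needing care is checking that the closure of the smooth scalar core in the form norm is exactly \(q^*H^1(\R)\), with no boundary conditions imposed at the origin. I would proceed in two parts:
\begin{enumerate}[label=(\alph*)]
\item Since smooth scalars are exactly \(q^*C^\infty(\R)\), the core contains \(q^*C_c^\infty(\R)\), which is dense in \(q^*H^1(\R)\).
\item Conversely, the form is closed on \(q^*H^1(\R)\), because \(q^*\) is unitary and intertwines the form norms.
\end{enumerate}
So the form domain is all of \(q^*H^1(\R)\). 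In particular, a function in it has a single continuous value at \(0\) (\(H^1\subset C\) in one dimension), and the form contains no boundary term \(\alpha|\widehat f(0)|^2\) and no jump condition on \(\widehat f'\).

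By uniqueness in the Kato representation theorem, the associated operator is \(H=q^*(-d^2/dx^2)(q^*)^{-1}\) on \(q^*H^2(\R)\). A point interaction at the origin would mean a self-adjoint extension of \(-d^2/dx^2\restriction C_c^\infty(\R\setminus\{0\})\) other than the free one, for example a \(\delta\) or \(\delta'\) coupling. Since \(H\) is unitarily equivalent to the free Laplacian on \(H^2(\R)\), it is this free extension, which has no coupling at \(0\).

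The main obstacle I expect is the precise meaning of ``resolution-induced'' here. One must rule out that the doubled origin could legitimately contribute a trace datum, for instance the pair \((u_1(0),u_2(0))\) from the two charts. I would handle this by noting that such pairs are always equal, since the two chart representatives coincide off \(0\) and are continuous. Consequently the trace space is one-dimensional and matches that of \(H^1(\R)\), so the single-origin closure carries no extra information.
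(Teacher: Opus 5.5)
Your proposal is correct and follows essentially the same route as the paper: smooth compatibility in the identity-glued atlas identifies scalars with functions on \(\R\), the two origins carry zero mass so \(q^*\) is unitary, and the transported flat form on \(H^1(\R)\) yields the free Laplacian on \(H^2(\R)\). Your additional checks make explicit what the paper's three-line proof leaves implicit: density of \(q^*C_c^\infty(\R)\) in the form domain, uniqueness in the representation theorem, and equality of the two chart traces at the origin. One small correction: Proposition~\ref{prop:L2-hausdorff-factorisation} by itself only gives continuity of \(\widehat f\); its smoothness comes from reading \(f\) in a single identity-glued chart, which is how the paper phrases it.
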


\begin{proof}
Smooth compatibility in the identity-glued atlas identifies the two local
representatives with one function in \(C^\infty(\mathbb R)\). The two origins
have zero mass, so pullback by \(q\) is unitary from \(L^2(\mathbb R)\) onto
the scalar \(L^2\)-space. Under this unitary the flat kinetic form is
\(\int_{\mathbb R}|u'|^2\,dx\) on \(H^1(\mathbb R)\); its associated
self-adjoint operator is the free Laplacian on \(H^2(\mathbb R)\).
\end{proof}

\subsection{The nontrivial rank-one sector}
\label{subsec:L2-relative-sign}

For the standard orientation and flat metric,
\(\operatorname{Spin}(1)\cong\mathbb Z_2\).
We use the usual cocycle definition of a spin structure
\citep{LawsonMichelsohn89}: a locally trivial principal
\(\operatorname{Spin}(1)\)-bundle together with an equivariant fibrewise
double covering of the oriented orthonormal frame bundle. All bundles and
bundle maps are taken over the stated non-Hausdorff atlas. Since
\(SO(1)=\{1\}\), this is equivalently a lift of the trivial oriented-frame
cocycle by locally constant \(\{\pm1\}\)-valued transitions. This direct
cocycle formulation avoids any appeal to classification theorems requiring a
Hausdorff or paracompact base.

The overlap of the two standard charts has two connected components, so a
spin transition is determined by two signs
\(\varepsilon_-, \qquad \varepsilon_+\).
After a change of local frame, only the relative sign \(\eta=\varepsilon_+\varepsilon_-\) remains.

\begin{theorem}[Spin structures on \(\Ltwo\)]
\label{thm:L2-spin-classification}
There are exactly two spin-structure isomorphism classes: the trivial class
and the relative-sign class represented by
\[
  g(x)
  =
  \begin{cases}
    +1,&x<0,\\
    -1,&x>0.
  \end{cases}
\]
\end{theorem}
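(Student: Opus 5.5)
The plan is to reduce the classification to Čech-type cocycle data on the two-chart atlas \(\{U_1,U_2\}\), where \(U_i\) is the image of \(\mathbb R_i\), and then compute the orbit set of \(\{\pm1\}\)-valued transitions under change of local frame. Since \(SO(1)=\{1\}\), the paragraph preceding the statement lets us treat a spin structure as a principal \(\mathbb Z_2\)-bundle over \(\Ltwo\) in the stated atlas, with the covering map to the trivial frame bundle carrying no further data. First I would justify that each restriction to \(U_i\cong\mathbb R\) is trivial: a locally trivial principal \(\mathbb Z_2\)-bundle over the contractible Hausdorff space \(\mathbb R\) is a trivial double cover, and a global section can be chosen by covering \(\mathbb R\) with intervals and correcting signs successively. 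So every spin structure is described by a single locally constant transition \(g:U_1\cap U_2=\mathbb R\setminus\{0\}\to\{\pm1\}\), that is, by the pair \((\varepsilon_-,\varepsilon_+)\). There is no triple-overlap cocycle condition to check.

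Second, I would determine isomorphisms. A change of local frame consists of locally constant sign functions \(h_i:U_i\to\{\pm1\}\). Since \(U_i\) is connected, each \(h_i\) is a single constant sign. Under such a change the transition becomes \(h_1 g h_2^{-1}\), which multiplies both \(\varepsilon_-\) and \(\varepsilon_+\) by the same sign \(h_1h_2\). Isomorphisms of spin structures must also commute with the double covering of the trivial frame bundle. Over a point, the covering automorphisms are exactly multiplication by \(\pm1\), so the isomorphisms are precisely these frame changes. The product \(\eta=\varepsilon_+\varepsilon_-\) is therefore an invariant, and any two cocycles with equal \(\eta\) are related by \(h_1h_2=\varepsilon_-'\varepsilon_-\). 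The isomorphism classes thus biject with \(\eta\in\{\pm1\}\).

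Third, I would exhibit representatives. The choice \((+1,+1)\) is the trivial class with \(\eta=+1\). The function \(g\) in the statement, equal to \(+1\) for \(x<0\) and \(-1\) for \(x>0\), gives \(\eta=-1\). It remains to check that this \(g\) really defines a locally trivial principal bundle in the non-Hausdorff atlas. The gluing \((x,s)_1\sim(x,g(x)s)_2\) for \(x\neq0\) yields a space covering \(\Ltwo\), and local triviality holds over each \(U_i\) by construction. Combined with the invariance of \(\eta\), this shows the two classes are distinct and that there are exactly two.

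The main obstacle is conceptual rather than computational. One must make sure that "isomorphism class" is taken over the fixed atlas, with frame changes restricted to locally constant functions on each connected \(U_i\). One must also avoid invoking \(H^1(\Ltwo;\mathbb Z_2)\) through classification theorems that need a Hausdorff or paracompact base. I would handle this by working only with the explicit two-chart cocycle, and by noting that any refinement of the atlas reduces back to it, since each \(U_i\) is simply connected.
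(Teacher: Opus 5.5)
Your proposal is correct and follows essentially the same route as the paper. Both reduce to the two-sign cocycle \((\varepsilon_-,\varepsilon_+)\) on the two connected charts, note that constant frame changes multiply both signs by a common factor, and conclude that \(\eta=\varepsilon_+\varepsilon_-\) is the complete invariant. Your extra steps (triviality over each chart, and the reduction of spin isomorphisms to constant sign changes) only make explicit what the paper takes directly from its cocycle definition.
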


\begin{proof}
The transition function is locally constant because
\(\operatorname{Spin}(1)\) is discrete, hence is determined by
\((\varepsilon_-,\varepsilon_+)\in\{\pm1\}^2\). A change of spin frame on
the two connected charts multiplies both signs by the same relative gauge
factor. It can normalize \(\varepsilon_-=1\), leaving
\(\eta=\varepsilon_+\varepsilon_-\in\{\pm1\}\). The value of \(\eta\) is
unchanged by every further gauge transformation, and the two displayed
normal forms are therefore inequivalent.
\end{proof}

For the relative-sign class, smooth compatibility through the two origins
forces every one-sided derivative to vanish at the exceptional coordinate.
Equivalently, restriction to either Hausdorff chart identifies the smooth
global sections with
\[
  \mathcal F^\infty_0(\mathbb R)
  :=
  \left\{
    f\in C^\infty(\mathbb R):
    f^{(j)}(0)=0
    \text{ for every }j\ge0
  \right\}.
\]
The finite-\(C^k\) restriction image is identified in
Theorem~\ref{thm:sign-restriction-images}. Although the relative-sign bundle
is smooth on the non-Hausdorff atlas, its representative on either Hausdorff
chart must prolong through the missing origin with the prescribed regularity.
The resulting component-extension problem is formulated next for binary
adjunctions and finite branch systems; its Sobolev closure will determine the
closed quantum domain used in Part~\ref{part:quantum-dynamics}.

\section{Adjunction bundles and failure of unrestricted section extension}
\label{sec:extension-failure}

Throughout this section,
\(r\in\mathbb N_0\cup\{\infty\}\).
Let \(M_1\) and \(M_2\) be smooth Hausdorff manifolds, let \(U_i\subseteq M_i\) be open, and let \(f:U_1\longrightarrow U_2\) be a diffeomorphism. Write \(M=M_1\cup_f M_2\) for the resulting binary adjunction and set
\(S_i:=M_i\setminus U_i\).

Let \(E_i\longrightarrow M_i\) be smooth real or complex vector bundles of equal rank, and let \(\Phi:E_1|_{U_1}\longrightarrow E_2|_{U_2}\) be a vector-bundle isomorphism covering \(f\). Gluing by \(\Phi\)
produces the adjunction bundle
\(E=E_1\cup_\Phi E_2\longrightarrow M\)
\citep{OConnell2024Bundles}.

A \(C^r\)-section of \(E\) is represented uniquely by a compatible
pair \((s_1,s_2) \in \Gamma^r(M_1,E_1) \times \Gamma^r(M_2,E_2)\) satisfying
\(s_2\circ f=\Phi\circ s_1 \qquad \text{on }U_1\).
Equivalently,
\[
  \Gamma^r(M,E)
  \cong
  \left\{
    (s_1,s_2):
    s_i\in\Gamma^r(M_i,E_i),\
    s_2\circ f=\Phi\circ s_1
    \text{ on }U_1
  \right\}.
\]
This compatible-pair description concerns sections already defined on
both components. It does not imply that an arbitrary section on one
component admits a compatible section on the other.

\subsection{Restriction as a prolongation problem}
\label{subsec:exact-prolongation}

Let \(\operatorname{res}_i^r: \Gamma^r(M,E) \longrightarrow \Gamma^r(M_i,E_i)\) denote component restriction. Define \(\rho_2^r: \Gamma^r(M_2,E_2) \longrightarrow \Gamma^r(U_2,E_2|_{U_2})\) by ordinary restriction, and define \(T_\Phi^r: \Gamma^r(M_1,E_1) \longrightarrow \Gamma^r(U_2,E_2|_{U_2})\) by
\(T_\Phi^r(s_1) := \Phi\circ s_1\circ f^{-1}\).

\begin{proposition}[Exact prolongation criterion]
\label{prop:exact-prolongation}
For every
\[
  r\in\mathbb N_0\cup\{\infty\},
\]
\[
  \operatorname{Im}\operatorname{res}_1^r
  =
  \left(T_\Phi^r\right)^{-1}
  \left(
    \operatorname{Im}\rho_2^r
  \right).
\]
Equivalently, a section
\[
  s_1\in\Gamma^r(M_1,E_1)
\]
extends to a \(C^r\)-section of \(E\) if and only if its transported
section
\[
  T_\Phi^r(s_1)
  \in
  \Gamma^r(U_2,E_2|_{U_2})
\]
extends across \(S_2\) to a \(C^r\)-section of \(E_2\) on \(M_2\).
\end{proposition}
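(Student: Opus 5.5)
The proof is a direct unwinding of the compatible-pair description of \(\Gamma^r(M,E)\) given at the start of this section. I will prove the two inclusions separately. The one point that needs care is that the compatibility condition \(s_2\circ f=\Phi\circ s_1\) on \(U_1\) is equivalent to \(s_2|_{U_2}=T_\Phi^r(s_1)\) on \(U_2\). This equivalence holds because \(f:U_1\to U_2\) is a diffeomorphism and \(\Phi\) covers \(f\), so the equation may be composed with \(f^{-1}\). It also uses that \(T_\Phi^r(s_1)\) is again a \(C^r\)-section of \(E_2|_{U_2}\), since composing with the smooth maps \(\Phi\) and \(f^{-1}\) preserves \(C^r\)-regularity for every \(r\in\mathbb N_0\cup\{\infty\}\).

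For the inclusion \(\subseteq\), take \(s_1=\operatorname{res}_1^r(s)\) with \(s\in\Gamma^r(M,E)\). Let \((s_1,s_2)\) be the compatible pair representing \(s\). Rewriting the compatibility condition as above gives \(T_\Phi^r(s_1)=\rho_2^r(s_2)\in\operatorname{Im}\rho_2^r\).

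For \(\supseteq\), suppose \(T_\Phi^r(s_1)=\rho_2^r(s_2)\) for some \(s_2\in\Gamma^r(M_2,E_2)\). Evaluating at \(f(x)\) for \(x\in U_1\) gives \(s_2(f(x))=\Phi(s_1(x))\), so \((s_1,s_2)\) is a compatible pair. By the bijection \(\Gamma^r(M,E)\cong\{\text{compatible pairs}\}\) it defines a section \(s\) with \(\operatorname{res}_1^r(s)=s_1\).

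The equivalent formulation in the statement is just a rephrasing. The condition \(T_\Phi^r(s_1)\in\operatorname{Im}\rho_2^r\) says exactly that the transported section extends from \(U_2\) across \(S_2=M_2\setminus U_2\) to a \(C^r\)-section of \(E_2\) on all of \(M_2\).

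I expect no real obstacle, since the argument is formal. The only substantive input is the compatible-pair identification itself. That identification rests on the adjunction topology, in which \(M_1\) and \(M_2\) are open in \(M\), and on the standard fact that a section of \(E\) is \(C^r\) if and only if its pullbacks to both charts are \(C^r\). I will take this as given from \citep{OConnell2024Bundles}, as the section setup does.
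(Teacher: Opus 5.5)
Your proposal is correct and follows the paper's proof essentially verbatim: both directions unwind the compatible-pair description of \(\Gamma^r(M,E)\), rewriting \(s_2\circ f=\Phi\circ s_1\) on \(U_1\) as \(s_2|_{U_2}=T_\Phi^r(s_1)\) on \(U_2\) by composing with \(f^{-1}\). Your added remarks, that \(T_\Phi^r\) preserves \(C^r\)-regularity and that \(M_1,M_2\) are open in the adjunction, only make explicit what the paper leaves implicit.
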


\begin{proof}
If a global section is represented by \((s_1,s_2)\), compatibility
gives
\(s_2|_{U_2}=T_\Phi^r(s_1)\),
and therefore
\(T_\Phi^r(s_1)\in\operatorname{Im}\rho_2^r\).

Conversely, suppose that \(s_2\in\Gamma^r(M_2,E_2)\) satisfies
\(s_2|_{U_2}=T_\Phi^r(s_1)\).
Then
\(s_2\circ f=\Phi\circ s_1 \qquad \text{on }U_1\),
so \((s_1,s_2)\) defines a global \(C^r\)-section of \(E\).
\end{proof}

\begin{corollary}[Surjectivity criterion]
\label{cor:restriction-surjectivity}
The map
\[
  \operatorname{res}_1^r
\]
is surjective if and only if
\[
  T_\Phi^r
  \left(
    \Gamma^r(M_1,E_1)
  \right)
  \subseteq
  \operatorname{Im}\rho_2^r.
\]
In particular, surjectivity of \(\rho_2^r\) implies surjectivity of
\(\operatorname{res}_1^r\).
\end{corollary}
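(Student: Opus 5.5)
The plan is to read the corollary directly off Proposition~\ref{prop:exact-prolongation}, which identifies the image of \(\operatorname{res}_1^r\) as the preimage \((T_\Phi^r)^{-1}(\operatorname{Im}\rho_2^r)\). Surjectivity of \(\operatorname{res}_1^r\) means that this preimage equals all of \(\Gamma^r(M_1,E_1)\). For any map \(T:A\to B\) and subset \(C\subseteq B\), the preimage satisfies \(T^{-1}(C)=A\) if and only if \(T(A)\subseteq C\). Applying this with \(T=T_\Phi^r\), \(A=\Gamma^r(M_1,E_1)\) and \(C=\operatorname{Im}\rho_2^r\) gives the stated equivalence.

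It is worth spelling out both directions in section language, as a check that nothing beyond the proposition is used. Suppose \(\operatorname{res}_1^r\) is surjective. Then every \(s_1\) extends, and by the forward half of the proposition its transported section \(T_\Phi^r(s_1)\) lies in \(\operatorname{Im}\rho_2^r\). Conversely, suppose every transported section lies in \(\operatorname{Im}\rho_2^r\). Then for each \(s_1\) we can choose \(s_2\) with \(s_2|_{U_2}=T_\Phi^r(s_1)\). The compatible pair \((s_1,s_2)\) then defines a global section whose first component restriction is \(s_1\).

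For the final clause, surjectivity of \(\rho_2^r\) means \(\operatorname{Im}\rho_2^r=\Gamma^r(U_2,E_2|_{U_2})\). This set contains the image of \(T_\Phi^r\) trivially, so the criterion holds.

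There is no real obstacle here. The only point needing care is that \(T_\Phi^r\) indeed lands in \(\Gamma^r(U_2,E_2|_{U_2})\). This holds because \(f\) is a diffeomorphism and \(\Phi\) is a smooth bundle isomorphism covering it, so composition preserves \(C^r\) regularity. That fact is already built into the definition preceding the proposition.
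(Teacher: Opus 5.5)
Your proposal is correct and matches the paper, which states the corollary without a separate proof as an immediate consequence of Proposition~\ref{prop:exact-prolongation}. Surjectivity means $(T_\Phi^r)^{-1}(\operatorname{Im}\rho_2^r)=\Gamma^r(M_1,E_1)$, which is equivalent to the image inclusion, and the final clause then follows trivially as you say.
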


Component extension is therefore an ordinary prolongation problem on the opposite Hausdorff component; smoothness on the overlap alone does not imply prolongability across \(S_2\).

\subsection{The relative-sign line bundle}
\label{subsec:relative-sign-bundle}

Let
\[
  M_1=M_2=\mathbb R,
  \qquad
  U_1=U_2=\mathbb R\setminus\{0\},
  \qquad
  f=\operatorname{id}_{\mathbb R\setminus\{0\}}.
\]
The resulting adjunction is the line with two origins,
\(\Ltwo = \mathbb R_1 \cup_{\mathbb R\setminus\{0\}} \mathbb R_2\).

Take the trivial real line bundles \(E_i=\mathbb R_i\times\mathbb R \longrightarrow \mathbb R_i\) and define \(g:\mathbb R\setminus\{0\} \longrightarrow \mathbb R^\times\) by
\[
  g(x)
  =
  \begin{cases}
    1,&x<0,\\
    -1,&x>0.
  \end{cases}
\]
Because the overlap is disconnected, \(g\) is smooth. The map \(\Phi(x,\lambda) = \bigl(x,g(x)\lambda\bigr)\) is therefore a smooth vector-bundle isomorphism over the overlap and
defines a smooth real line bundle
\(E_g := E_1\cup_\Phi E_2 \longrightarrow \Ltwo\).

\begin{proposition}[Compatibility with the stated adjunction framework]
\label{prop:sign-example-hypotheses}
The base adjunction and bundle gluing above satisfy the hypotheses
imposed on the binary adjunction and colimit bundle in
\citep{OConnell2024Bundles}:

\begin{enumerate}[label=\textup{(\roman*)}]
  \item \(M_1\) and \(M_2\) are smooth Hausdorff manifolds;

  \item \(U_1\) and \(U_2\) are proper open submanifolds;

  \item \(f\) is a smooth open embedding and extends to the identity
        diffeomorphism
        \[
          \overline f:
          \overline{U_1}^{\,M_1}
          \longrightarrow
          \overline{U_2}^{\,M_2};
        \]

  \item the closures
        \[
          \overline{U_i}^{\,M_i}
          =
          \mathbb R
        \]
        are smooth closed submanifolds;

  \item \(\Phi\) is a smooth vector-bundle isomorphism on the open
        overlap, and the binary cocycle conditions are automatic.
\end{enumerate}
\end{proposition}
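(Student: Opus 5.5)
The plan is to verify the five items directly from the explicit data \(M_1=M_2=\mathbb R\), \(U_1=U_2=\mathbb R\setminus\{0\}\), \(f=\operatorname{id}\) and \(\Phi(x,\lambda)=(x,g(x)\lambda)\).

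Items (i) and (ii) are immediate. \(\mathbb R\) is a smooth Hausdorff manifold, and \(\mathbb R\setminus\{0\}\) is open, a smooth submanifold of full dimension, and proper because it omits \(0\).

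For (iii), \(f\) is the inclusion-restricted identity, so it is a smooth open embedding. Next we compute the closures: \(\overline{\mathbb R\setminus\{0\}}^{\,\mathbb R}=\mathbb R\), since \(0\) is a limit of \(\pm 1/n\). Hence \(\overline f=\operatorname{id}_{\mathbb R}\) extends \(f\) and is a diffeomorphism. This also gives (iv), because \(\mathbb R\) is trivially a smooth closed submanifold of itself.

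For (v), \(g\) is locally constant on the disconnected set \(\mathbb R\setminus\{0\}\), hence smooth. It is nowhere zero, with \(g^{-1}=g\). Therefore \(\Phi\) is a smooth fibrewise-linear bijection covering \(f\), and its inverse \((x,\lambda)\mapsto(x,g(x)\lambda)\) is also smooth. The cocycle conditions in a binary gluing reduce to \(\Phi_{11}=\operatorname{id}\), \(\Phi_{22}=\operatorname{id}\) and \(\Phi_{21}=\Phi^{-1}\). There is no triple-overlap condition beyond these, so they hold by definition.

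The only point needing care is to match the precise form of O'Connell's hypotheses, in particular whether the closure condition concerns the graph of \(f\) or the closures themselves. The plan is to state both forms: the closure of the graph of \(f\) in \(M_1\times M_2\) is the diagonal of \(\mathbb R\), which is the graph of \(\overline f\), and this is a closed embedded submanifold. That handles the hypothesis in either formulation.
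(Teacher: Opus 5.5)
Your proposal is correct and follows the paper's proof. The paper likewise treats the base-space and base-gluing items as immediate, derives smoothness of \(\Phi\) from the local constancy of \(g\) on the two components of \(\mathbb R\setminus\{0\}\), and notes that a binary adjunction has no nontrivial triple-overlap conditions; your extra check that the closure of the graph of \(f\) is the diagonal of \(\mathbb R\) is not needed for the items as stated, but it is correct and harmless.
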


\begin{proof}
All assertions concerning the base spaces and the base gluing map are
immediate. Smoothness of \(\Phi\) follows because \(g\) is locally
constant on each connected component of
\(\mathbb R\setminus\{0\}\). In a binary adjunction there are no
nontrivial triple-overlap conditions.
\end{proof}

The fibre transition does not extend continuously to the closure of
the overlap: its one-sided limits at the origin are \(1\) and \(-1\).
This is not excluded by the stated colimit-bundle hypotheses.

In the chosen trivialisations, a \(C^r\)-section of \(E_g\) is a pair \((u,v) \in C^r(\mathbb R)\times C^r(\mathbb R)\) such that
\(v(x)=g(x)u(x) \qquad (x\neq0)\),
or equivalently,
\(v=u \quad\text{on }(-\infty,0), \qquad v=-u \quad\text{on }(0,\infty)\).

The transported representative of \(u\) on the punctured second
component is
\[
  T_g^r u(x)
  =
  \begin{cases}
    u(x),&x<0,\\
    -u(x),&x>0.
  \end{cases}
\]
By Proposition~\ref{prop:exact-prolongation}, \(u\) lies in the
restriction image precisely when this piecewise-defined function
extends through the origin with the required regularity.

\subsection{Exact restriction images}
\label{subsec:sign-restriction-images}

For
\(k\in\mathbb N_0\),
let \(J_0^k: C^k(\mathbb R) \longrightarrow J_0^k(\mathbb R,\mathbb R)\) be the \(k\)-jet map at the origin.

\begin{theorem}[Restriction images of the relative-sign bundle]
\label{thm:sign-restriction-images}
For every
\[
  k\in\mathbb N_0
\]
and \(i=1,2\),
\[
  \boxed{
  \operatorname{Im}\operatorname{res}_i^k
  =
  \left\{
    u\in C^k(\mathbb R):
    u^{(j)}(0)=0,\
    0\leq j\leq k
  \right\}
  =
  \ker J_0^k.
  }
\]
\[
  C^k(\mathbb R)
  \big/
  \operatorname{Im}\operatorname{res}_i^k
  \cong
  J_0^k(\mathbb R,\mathbb R).
\]

In the smooth category,
\[
  \boxed{
  \operatorname{Im}\operatorname{res}_i^\infty
  =
  \mathcal F_0,
  }
\]
where
\[
  \mathcal F_0
  :=
  \left\{
    u\in C^\infty(\mathbb R):
    u^{(j)}(0)=0
    \text{ for every }j\geq0
  \right\}
\]
is the ideal of smooth functions flat at the origin.
\end{theorem}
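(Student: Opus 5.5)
We reduce everything to Proposition~\ref{prop:exact-prolongation}. By the symmetry \(v=gu\iff u=gv\) (since \(g^2=1\)), it suffices to treat \(i=1\); the case \(i=2\) is identical with the roles of the two components exchanged. The proposition says that \(u\in C^k(\mathbb R)\) lies in \(\operatorname{Im}\operatorname{res}_1^k\) if and only if the transported function \(T_g^k u\), equal to \(u\) on \((-\infty,0)\) and \(-u\) on \((0,\infty)\), extends to some \(v\in C^k(\mathbb R)\). The proof therefore splits into a necessity direction and a sufficiency direction for this one-variable extension problem.

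\emph{Necessity.} Suppose \(v\in C^k(\mathbb R)\) extends \(T_g^k u\). For each \(0\le j\le k\), \(v^{(j)}\) is continuous at \(0\). On \(x<0\) we have \(v^{(j)}=u^{(j)}\), and on \(x>0\) we have \(v^{(j)}=-u^{(j)}\). Taking one-sided limits and using continuity of \(u^{(j)}\) gives
\[
v^{(j)}(0)=u^{(j)}(0)=-u^{(j)}(0),
\]
so \(u^{(j)}(0)=0\). Hence the image lies in \(\ker J_0^k\).

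\emph{Sufficiency.} If \(u^{(j)}(0)=0\) for \(j\le k\), define \(v:=T_g^k u\) on \(x\neq0\) and \(v(0):=0\). We check by induction on \(j\le k\) that \(v\in C^j\) with \(v^{(j)}(0)=0\). Away from the origin this is clear. At \(0\), the one-sided difference quotient of \(v^{(j-1)}\) is \(\pm\) the corresponding quotient for \(u^{(j-1)}\), which tends to \(\pm u^{(j)}(0)=0\). Continuity of \(v^{(j)}\) at \(0\) follows because \(\pm u^{(j)}(x)\to 0\). This is the only place where care is needed: the derivative at the origin must be computed from one-sided difference quotients rather than read off from the pieces.

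For the quotient statement, \(J_0^k\colon C^k(\mathbb R)\to J_0^k(\mathbb R,\mathbb R)\) is surjective, since Taylor polynomials realise every jet. The first isomorphism theorem then gives
\[
C^k(\mathbb R)/\ker J_0^k\cong J_0^k(\mathbb R,\mathbb R).
\]
The smooth case follows by intersecting over all \(k\). The image \(\operatorname{Im}\operatorname{res}^\infty_1\) consists of \(u\in C^\infty\) with \(T_g u\) extending to a \(C^\infty\) function. By the necessity argument applied at every order, such \(u\) is flat, so \(u\in\mathcal F_0\). Conversely, for \(u\in\mathcal F_0\) the extension \(v\) constructed above lies in \(C^k\) for every \(k\), hence is smooth. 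No step is a genuine obstacle; the only delicate point is the derivative at the origin in the sufficiency direction, handled by the one-sided difference quotients.
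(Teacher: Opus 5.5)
Your proposal is correct and takes the same route as the paper. Necessity comes from continuity of \(v^{(j)}\) at the origin forcing \(u^{(j)}(0)=-u^{(j)}(0)\); sufficiency uses the piecewise companion \(v=u\), \(v(0)=0\), \(v=-u\); the second component follows from \(g^{-1}=g\), the quotient from surjectivity of \(J_0^k\), and the smooth case from running the finite-order argument at every order. Your difference-quotient induction supplies the detail the paper leaves implicit, namely that the piecewise companion is actually \(C^k\).
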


\begin{proof}
It suffices to treat the first component. If \(u\) is the restriction of a global section, its companion satisfies \(v=u\) on \(( -\infty,0)\) and \(v=-u\) on \((0,\infty)\). Continuity of \(v^{(j)}\) at \(0\) therefore gives \(u^{(j)}(0)=-u^{(j)}(0)\) for \(0\le j\le k\). Conversely, if these derivatives vanish, the piecewise function \(v=u\) on the left, \(v(0)=0\), \(v=-u\) on the right is \(C^k\), so \((u,v)\) is compatible. The second component follows from \(g^{-1}=g\), and the quotient statement from surjectivity of \(J_0^k\). Applying the finite-order result for every \(k\) gives the smooth flat ideal, while the same piecewise construction shows that every flat \(u\) extends smoothly.
\end{proof}

\begin{corollary}[Failure of unrestricted component extension]
\label{cor:sign-nonsurjectivity}
Neither restriction map
\[
  \operatorname{res}_i^\infty:
  \Gamma^\infty(\Ltwo,E_g)
  \longrightarrow
  C^\infty(\mathbb R)
\]
is surjective. In particular, the constant section \(u=1\) on either
component does not extend to a global smooth section of \(E_g\).
\end{corollary}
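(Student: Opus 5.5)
The plan is to derive the corollary directly from Theorem~\ref{thm:sign-restriction-images}, taking \(k=\infty\). That theorem gives \(\operatorname{Im}\operatorname{res}_i^\infty=\mathcal F_0\) for \(i=1,2\). So it is enough to exhibit one smooth function on \(\mathbb R\) that is not flat at the origin. The constant function \(u\equiv1\) works, since \(u(0)=1\neq0\). Hence \(u\notin\mathcal F_0\), and neither restriction map can be surjective.

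To make the statement about the constant section self-contained, I would also give the direct obstruction from Proposition~\ref{prop:exact-prolongation}. Take \(u=1\) on the first component. Its transported representative is \(T_g^\infty u=g\), which equals \(+1\) on \((-\infty,0)\) and \(-1\) on \((0,\infty)\). This function has one-sided limits \(1\) and \(-1\) at the origin, so it admits no continuous extension across \(S_2=\{0\}\), let alone a smooth one. By the exact prolongation criterion, \(u\) is therefore not in \(\operatorname{Im}\operatorname{res}_1^\infty\).

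For the second component the argument is the same, because the inverse transition is \(g^{-1}=g\). Transporting \(v=1\) back to the first component again produces the sign function, which cannot be extended.

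There is no real obstacle. The only point needing care is that the statement concerns the smooth category, and this requires the \(k=\infty\) case of the theorem rather than only the finite-order cases. The direct continuity argument above avoids this dependence entirely, because failure of a \(C^0\) extension already rules out a \(C^\infty\) extension.
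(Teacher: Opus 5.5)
Your proposal is correct and is essentially the paper's argument: the corollary is read off from the smooth case of Theorem~\ref{thm:sign-restriction-images}, since \(\operatorname{Im}\operatorname{res}_i^\infty=\mathcal F_0\) and \(u\equiv1\) is not flat at \(0\). Your direct obstruction, the transported section \(g\) with one-sided limits \(\pm1\), is exactly the reason the paper gives in the proof of Theorem~\ref{thm:oconnell-extension-counterexample}; as a minor point, you do not actually need the \(k=\infty\) case, because a smooth global section is continuous, so the \(k=0\) image \(\ker J_0^0\) already excludes \(u\equiv1\).
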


\begin{theorem}[Failure of unrestricted component extension under the stated hypotheses]
\label{thm:oconnell-extension-counterexample}
The relative-sign bundle \(E_g\to\Ltwo\) satisfies the binary adjunction and
colimit-bundle hypotheses used in \cite[Theorem~2.5]{OConnell2024Bundles},
but its component restriction maps are not surjective. Hence the stated hypotheses do not imply unrestricted component extension.
\end{theorem}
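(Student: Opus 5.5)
The proof assembles results already established, in two steps: verify the hypotheses, then exhibit the obstruction.

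\textbf{Hypotheses.} Proposition~\ref{prop:sign-example-hypotheses} has already checked, item by item, the conditions imposed on the binary adjunction and colimit bundle in \citep{OConnell2024Bundles}. These are: Hausdorff components, proper open overlap domains, a gluing map extending to a diffeomorphism of closures, closures that are closed submanifolds, and a smooth bundle isomorphism on the open overlap with vacuous cocycle conditions. I will cite that proposition directly. I will also note the one point a reader might worry about: the fibre transition \(g\) has one-sided limits \(\pm1\) at the origin and so does not extend to the closure of the overlap. The stated hypotheses only require \(\Phi\) on the open overlap. This is the main step to handle carefully, because the counterexample is legitimate only if no hidden condition on extendability of \(\Phi\) is part of \cite[Theorem~2.5]{OConnell2024Bundles}. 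I would argue this by quoting the hypotheses exactly as listed in Proposition~\ref{prop:sign-example-hypotheses}(v), where only smoothness of \(\Phi\) over \(U_1\) is demanded.

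\textbf{Failure of surjectivity.} Next I invoke Theorem~\ref{thm:sign-restriction-images}. It shows that \(\operatorname{Im}\operatorname{res}_i^\infty=\mathcal F_0\), a proper subspace of \(C^\infty(\mathbb R)\), and \(\operatorname{Im}\operatorname{res}_i^k=\ker J_0^k\) for finite \(k\). Corollary~\ref{cor:sign-nonsurjectivity} gives an explicit witness: the constant \(u=1\). By Proposition~\ref{prop:exact-prolongation}, \(u=1\) would require the transported function \(T_g u\), equal to \(1\) for \(x<0\) and \(-1\) for \(x>0\), to extend continuously through \(0\), which is impossible. The same argument applies for every \(r\in\mathbb N_0\cup\{\infty\}\), including \(r=0\), since \(J_0^0\) is already nonzero on constants.

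\textbf{Conclusion.} Combining the two steps, \(E_g\to\Ltwo\) satisfies the stated hypotheses but has non-surjective component restrictions. Hence those hypotheses alone cannot imply unrestricted component extension.
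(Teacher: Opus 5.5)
Your proposal is correct and follows essentially the same route as the paper: it cites Proposition~\ref{prop:sign-example-hypotheses} for the hypotheses and Corollary~\ref{cor:sign-nonsurjectivity}, via the exact prolongation criterion, for the witness $u=1$. The paper also pinpoints where the published argument breaks down. That argument restricts a component section to the closure of the overlap and transports it with a $\Phi$ that is specified only on the open overlap, and here $\Phi$ has incompatible one-sided limits $\pm1$. This diagnosis is explanatory and is not logically needed for the statement.
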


\begin{proof}
Proposition~\ref{prop:sign-example-hypotheses} verifies the stated base and
bundle-gluing hypotheses. Corollary~\ref{cor:sign-nonsurjectivity} shows that
the constant component section \(u=1\) has no global smooth extension.

The obstruction occurs at the closure of the gluing region. In the published
proof of \cite[Theorem~2.5]{OConnell2024Bundles}, a component section is
restricted to the closure of the overlap and the bundle isomorphism on the
open overlap is then used to interpret the transported data as a section over
the corresponding closed subbundle. The colimit-bundle hypotheses specify
that bundle isomorphism only on the open overlap. In the present example its
fibre multiplier has one-sided limits \(1\) and \(-1\), so no continuous
extension of the bundle transition to the closure exists. The transported
constant section therefore has incompatible one-sided limits and cannot be
prolonged.
\end{proof}

\begin{remark}[Scope of the correction]
\label{rem:oconnell-correction-scope}
The counterexample does not affect the construction of the colimit bundle, nor
the statement that a family of component sections already satisfying the
overlap compatibility equations determines a global section. It concerns the
additional surjectivity assertion that an arbitrary prescribed component
section always admits compatible companions. No claim is made here about later results that use the theorem; the point at issue is only the unrestricted component-surjectivity step. A sufficient replacement hypothesis for the present paper is stated in Theorem~\ref{thm:boundary-prolongable-gluing}.
\end{remark}

The two one-sided \(k\)-jets forced on the second component are
\(J_0^k u \qquad\text{and}\qquad -J_0^k u\).
They are the one-sided jets of a common \(C^k\)-germ exactly when
\(J_0^k u=0\).
The failure is therefore a boundary-prolongation defect rather than a
failure of bundle gluing.

\subsection{A sufficient boundary-prolongation criterion}
\label{subsec:boundary-prolongable-gluing}

Surjectivity is restored if the bundle transition itself prolongs smoothly to the closures.

Set
\(K_i:=\overline{U_i}^{\,M_i}\).

\begin{theorem}[Boundary-prolongable gluing]
\label{thm:boundary-prolongable-gluing}
Assume that:

\begin{enumerate}[label=\textup{(\roman*)}]
  \item \(K_i\) is a closed embedded smooth submanifold with boundary
        of \(M_i\);

  \item \(f\) extends to a diffeomorphism
        \[
          \overline f:
          K_1\longrightarrow K_2;
        \]

  \item \(\Phi\) extends to a smooth vector-bundle isomorphism
        \[
          \overline\Phi:
          E_1|_{K_1}
          \longrightarrow
          E_2|_{K_2}
        \]
        covering \(\overline f\).
\end{enumerate}

Then
\[
  \operatorname{res}_i^\infty:
  \Gamma^\infty(M,E)
  \longrightarrow
  \Gamma^\infty(M_i,E_i)
\]
is surjective for \(i=1,2\).
\end{theorem}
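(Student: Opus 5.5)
By the symmetry of the hypotheses (replace \(f,\Phi\) by \(f^{-1},\Phi^{-1}\) and \(\overline f,\overline\Phi\) by their inverses), it suffices to treat \(i=1\). By Proposition~\ref{prop:exact-prolongation}, surjectivity of \(\operatorname{res}_1^\infty\) is equivalent to the following claim: for every \(s_1\in\Gamma^\infty(M_1,E_1)\), the transported section \(T_\Phi^\infty(s_1)=\Phi\circ s_1\circ f^{-1}\) on \(U_2\) extends to a smooth section of \(E_2\) over all of \(M_2\). The plan is to produce this extension in two stages. First extend to the closure \(K_2\) using \(\overline\Phi\) and \(\overline f\). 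Then extend from the closed submanifold with boundary \(K_2\) to \(M_2\) by a Seeley/Whitney-type extension.

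\textbf{Step 1 (extension to \(K_2\)).} Define
\[
\sigma:=\overline\Phi\circ s_1|_{K_1}\circ\overline f^{-1}:K_2\to E_2|_{K_2}.
\]
This is a smooth section of \(E_2|_{K_2}\) in the manifold-with-boundary sense, since it is a composition of smooth maps. On \(U_2\subseteq K_2\) it agrees with \(T_\Phi^\infty(s_1)\), because \(\overline\Phi\) and \(\overline f\) restrict to \(\Phi\) and \(f\). For this to make sense, \(U_2\) must be open in \(K_2\) and \(\overline f\) must map \(U_1\) onto \(U_2\). Both follow from \(\overline f\) extending \(f\) and being a diffeomorphism. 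Continuity and density of \(U_1\) in \(K_1\) then force \(\overline f(U_1)=U_2\) and \(\overline f(K_1\setminus U_1)=K_2\setminus U_2\).

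\textbf{Step 2 (extension from \(K_2\) to \(M_2\)).} Here I use the extension property for closed embedded submanifolds with boundary: smooth sections of a bundle restricted to such a submanifold extend smoothly to the ambient manifold. The local model is as follows. Near a point of \(K_2\), choose a chart of \(M_2\) adapted to \(K_2\), together with a local trivialisation of \(E_2\).

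\begin{itemize}
\item At interior points of \(K_2\), and at points where \(K_2\) has positive codimension, \(K_2\) looks like a coordinate slice \(\R^k\times\{0\}\) or \([0,\infty)\times\R^{k-1}\times\{0\}\). Extend by making the section constant in the normal variables.
\item At boundary points, first apply Seeley's extension operator across the boundary hyperplane. This turns a smooth function on \([0,\infty)\times\R^{k-1}\) into a smooth function on \(\R^k\).
\end{itemize}

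Patch these local extensions with a partition of unity subordinate to a cover of \(K_2\) by such charts. Add the open set \(M_2\setminus K_2\), on which the local extension is taken to be zero. The patched section \(s_2\) is smooth on \(M_2\) and equals \(\sigma\) on \(K_2\), because each local piece does and the partition functions sum to one there. In particular \(s_2|_{U_2}=T_\Phi^\infty(s_1)\), so \((s_1,s_2)\) is compatible, and Proposition~\ref{prop:exact-prolongation} yields surjectivity.

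\textbf{Main obstacle.} I expect Step 2 to be the main difficulty, specifically the boundary points. One must justify that smoothness of \(\sigma\) up to \(\partial K_2\) means all one-sided derivatives extend continuously. Only then does Seeley's reflection formula \(\sum_j a_j\,\sigma(-b_jx_1,x')\) produce a \(C^\infty\) extension. If codimension and boundary interact awkwardly, I would fall back on Whitney's extension theorem \citep{Whitney1934Extension}. The jets of \(\sigma\) on the closed set \(K_2\) form a Whitney field in adapted charts, since \(K_2\) is locally a closed half-space slice. This fallback would handle the general case uniformly.
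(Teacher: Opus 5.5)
Your proof is correct and follows the paper's argument. You transport \(s_1|_{K_1}\) by \(\overline\Phi\) and \(\overline f\) to a smooth section of \(E_2|_{K_2}\), extend it to \(M_2\) by the extension lemma for sections on a closed subset (the paper simply cites Lee for this step, which you unpack via Seeley extension and a partition of unity), read off compatibility on \(U_2\), and handle \(i=2\) by symmetry; one small simplification is that \(U_2\) is open in \(M_2\) and dense in \(K_2\), so every component of \(K_2\) has full dimension and your positive-codimension case never arises.
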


\begin{proof}
Given \(s_1\in\Gamma^\infty(M_1,E_1)\), transport its restriction to \(K_1\) by the prolonged data:
\[
 t(y)=\overline\Phi\bigl(s_1(\overline f^{-1}(y))\bigr),\qquad y\in K_2.
\]
In adapted coordinates for the embedded submanifold-with-boundary \(K_2\), the coefficient functions of \(t\) extend smoothly across the boundary, so the standard extension lemma for sections on a closed subset yields \(s_2\in\Gamma^\infty(M_2,E_2)\) with \(s_2|_{K_2}=t\) \citep{Lee2012SmoothManifolds}. On \(U_2\), this is exactly \(\Phi\circ s_1\circ f^{-1}\); hence \((s_1,s_2)\) is compatible. The other restriction map is symmetric.
\end{proof}

\begin{remark}
\label{rem:boundary-prolongation-sufficient}
The hypotheses of
Theorem~\ref{thm:boundary-prolongable-gluing}
are sufficient, not necessary. Proposition~\ref{prop:exact-prolongation}
remains the exact criterion: a prescribed component section extends
precisely when its transported overlap section admits a prolongation
on the opposite component.
\end{remark}

The finite-regularity analogue holds under the corresponding \(C^k\)-extension hypotheses.

In the relative-sign example, the base gluing map extends smoothly to
the closures, but the fibre transition has incompatible one-sided
limits. \(\overline\Phi\) does not exist even continuously, and
Theorem~\ref{thm:boundary-prolongable-gluing} does not apply.
\section{Transported jets and finite-order extension defects}
\label{sec:transported-jets}

We retain the binary adjunction and bundle data \(M=M_1\cup_f M_2, \qquad E=E_1\cup_\Phi E_2\) from Section~\ref{sec:extension-failure}. Set
\(B_i := \overline{U_i}^{\,M_i}\setminus U_i, \qquad i=1,2\).
Only the incidence of \(U_2\) with \(B_2\) enters the construction
below.

For finite \(k\), the spaces of \(C^k\)-sections are equipped with
their compact-open \(C^k\)-topologies. Spaces of continuous sections
over incidence strata carry the compact-open topology. Unless
explicitly stated otherwise, all quotients in this section are
algebraic quotients; their images need not be closed.

All constructions are relative to a fixed finite branch system and
fixed smooth branchwise prolongations. Their dependence on these data
is controlled by truncation, refinement, and naturality below.

\subsection{Finite branch systems}
\label{subsec:finite-branch-systems}

\begin{definition}[Finite branch system]
\label{def:finite-branch-system}
A finite branch system for \(U_2\) along \(B_2\) consists of:

\begin{enumerate}[label=\textup{(\roman*)}]
  \item a finite index set \(A\);

  \item an open neighbourhood \(N\subseteq M_2\) of \(B_2\);

  \item pairwise disjoint open subsets
        \[
          V_a\subseteq U_2\cap N,
          \qquad
          a\in A,
        \]
        satisfying
        \[
          U_2\cap N
          =
          \bigsqcup_{a\in A}V_a.
        \]
\end{enumerate}

The incidence carrier of branch \(a\) is
\[
  S_a
  :=
  \overline{V_a}^{\,M_2}\cap B_2,
\]
and the set of branches incident at \(y\in B_2\) is
\[
  A_y
  :=
  \{a\in A:y\in S_a\}.
\]
Branches with empty incidence carrier are discarded.
\end{definition}

We assume that \(B_2\) has a finite stratification \(B_2 = \bigsqcup_{\lambda\in\Lambda}S_\lambda\) by locally closed embedded smooth submanifolds satisfying the frontier
condition, such that \(A_y\) is constant on every \(S_\lambda\). Its
common value is denoted by
\(A_\lambda\).
For every \(a\in A\) and \(\lambda\in\Lambda\),
\(a\in A_\lambda \quad\Longleftrightarrow\quad S_\lambda\subseteq S_a\).

\begin{definition}[Branchwise prolongation]
\label{def:branchwise-prolongation}
A smooth prolongation of the gluing data along \(V_a\) consists of:

\begin{enumerate}[label=\textup{(\roman*)}]
  \item an open neighbourhood
        \[
          \Omega_a\subseteq M_2
        \]
        of \(S_a\);

  \item a smooth map
        \[
          g_a:\Omega_a\longrightarrow M_1;
        \]

  \item a smooth vector-bundle isomorphism
        \[
          \Psi_a:
          g_a^*E_1
          \longrightarrow
          E_2|_{\Omega_a};
        \]
\end{enumerate}
such that, on \(V_a\cap\Omega_a\),
\[
  g_a=f^{-1},
\]
and
\[
  \Psi_a(y,\xi)
  =
  \Phi\bigl(g_a(y),\xi\bigr)
\]
under the natural identification
\[
  (g_a^*E_1)_y=E_{1,g_a(y)}.
\]
\end{definition}

The limiting source map is
\(\gamma_a := g_a|_{S_a}: S_a\longrightarrow M_1\).
Only the order-\(k\) germ of \((g_a,\Psi_a)\) along \(S_a\) enters the
order-\(k\) construction.

\begin{remark}
Smooth prolongations are imposed so that the transported jet maps vary
smoothly along every incidence stratum at all finite orders. More
generally, \(C^{k+q}\)-prolongation data give \(C^q\)-dependence of the
order-\(k\) transported maps.
\end{remark}

\subsection{Transported jets}
\label{subsec:transported-jets}

Fix
\(k\in\mathbb N_0\).
Functoriality of finite jet bundles
\citep{Saunders1989JetBundles,KolarMichorSlovak1993}
gives, for every \(a\in A\) and \(y\in S_a\), a linear map \(\Theta_{a,y}^k: J_{\gamma_a(y)}^kE_1 \longrightarrow J_y^kE_2\) defined by
\[
  \Theta_{a,y}^k
  \left(
    j_{\gamma_a(y)}^k s
  \right)
  :=
  j_y^k
  \left(
    \Psi_a(g_a^*s)
  \right).
\]

\begin{proposition}[Well-defined transported jets]
\label{prop:transported-jet-well-defined}
The following statements hold.

\begin{enumerate}[label=\textup{(\roman*)}]
  \item The value
        \[
          \Theta_{a,y}^k
          \left(
            j_{\gamma_a(y)}^k s
          \right)
        \]
        depends only on the source \(k\)-jet
        \(j_{\gamma_a(y)}^k s\).

  \item The family \(y\mapsto\Theta_{a,y}^k\) is continuous on
        \(S_a\).

  \item For every incidence stratum
        \[
          S_\lambda\subseteq S_a,
        \]
        the restriction is a smooth vector-bundle morphism
        \[
          \Theta_{a,\lambda}^k:
          \left(\gamma_a^*J^kE_1\right)|_{S_\lambda}
          \longrightarrow
          J^kE_2|_{S_\lambda}.
        \]

  \item The transported map depends only on the order-\(k\) germ of
        \((g_a,\Psi_a)\) along \(S_a\).
\end{enumerate}
\end{proposition}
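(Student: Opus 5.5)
I will work in local coordinates and reduce each claim to the chain rule for finite jets, in the form of the Faà di Bruno formula. Fix \(y\in S_a\) and choose a chart of \(M_2\) around \(y\) inside \(\Omega_a\), a chart of \(M_1\) around \(\gamma_a(y)=g_a(y)\), and local trivialisations of \(E_1\) and \(E_2\) over these charts. In these coordinates a section \(s\) of \(E_1\) becomes a vector-valued function \(\sigma\), and \(\Psi_a(g_a^*s)\) becomes
\[
  z\longmapsto \psi(z)\,\sigma\bigl(g(z)\bigr),
\]
where \(g\) is the coordinate expression of \(g_a\) and \(\psi\) is the smooth matrix-valued function representing \(\Psi_a\).

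For (i), I will apply the Leibniz and Faà di Bruno formulas. Each partial derivative of order at most \(k\) of \(\psi\cdot(\sigma\circ g)\) at \(y\) is a universal polynomial expression. This expression is linear in the derivatives \(\partial^\beta\sigma(g(y))\) with \(|\beta|\le k\). Its coefficients are polynomials in \(\partial^\alpha\psi(y)\) and \(\partial^\alpha g(y)\) with \(|\alpha|\le k\). Hence the target \(k\)-jet depends only on \(j^k_{g(y)}s\), and it does so linearly. The value does not depend on the chosen representative \(s\), because two sections with the same \(k\)-jet at \(\gamma_a(y)\) produce the same coefficients. This gives a well-defined linear map \(\Theta^k_{a,y}\).

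The same formula gives (iv) at once: only the derivatives of \(g_a\) and \(\Psi_a\) up to order \(k\) at points of \(S_a\) enter. So two prolongations with the same order-\(k\) germ along \(S_a\) define the same \(\Theta^k_{a,y}\).

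For (ii), I will use the matrix of \(\Theta^k_{a,y}\) with respect to the standard local frames of the jet bundles \(J^kE_1\) near \(\gamma_a(y)\) and \(J^kE_2\) near \(y\). Its entries are polynomials in the derivatives of \(g\) and \(\psi\) evaluated at \(y\). These are smooth functions on the whole chart domain in \(\Omega_a\), so the matrix entries are smooth on an open set containing \(S_a\cap(\text{chart})\). In particular \(y\mapsto\Theta^k_{a,y}\) is continuous on \(S_a\). The source frame is pulled back along \(\gamma_a\), which is continuous, so this continuity is compatible with the frames.

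For (iii), I will restrict these smooth matrix-valued functions to the embedded submanifold \(S_\lambda\). The restriction of a smooth function on an open set to an embedded submanifold is smooth. Likewise \(\gamma_a|_{S_\lambda}=g_a|_{S_\lambda}\) is smooth, so \((\gamma_a^*J^kE_1)|_{S_\lambda}\) is a smooth vector bundle with the induced pulled-back frames. Linear fibrewise maps whose local matrices are smooth define a smooth bundle morphism.

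The main obstacle is to check that all of this is chart-independent, so that the local statements glue. A change of charts or trivialisations acts on jet frames through the standard smooth jet-bundle transition maps. The defining formula for \(\Theta^k_{a,y}\) is intrinsic, since it is given by \(j^k_y(\Psi_a(g_a^*s))\). Consequently the local matrices transform by these smooth transition maps, and both continuity and smoothness are preserved. I will invoke the functoriality of jet bundles from \citep{Saunders1989JetBundles,KolarMichorSlovak1993} for this step instead of recomputing it.
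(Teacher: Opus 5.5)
Your proposal is correct and takes essentially the same route as the paper. Both write the prolonged transport locally as \(z\mapsto\psi(z)\,\sigma(g(z))\) and use the Leibniz and Fa\`a di Bruno formulas to see that its derivatives through order \(k\) are universal polynomials, linear in the source \(k\)-jet, with coefficients built from the \(k\)-jets of \(g_a\) and \(\Psi_a\); all four claims follow from this. Your extra remarks simply make explicit what the paper's one-paragraph proof leaves implicit: the coefficients are smooth on an open subset of \(\Omega_a\), so restricting to \(S_a\) gives continuity and restricting to an embedded stratum \(S_\lambda\) gives smoothness, and chart-independence follows from jet-bundle functoriality.
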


\begin{proof}
In local coordinates and frames the prolonged transport has the form \(y\mapsto G_a(y)u(g_a(y))\). Its derivatives through order \(k\) are universal polynomial expressions in the derivatives of \(u\) through order \(k\) at \(g_a(y)\) and in the derivatives of \(g_a,G_a\) through order \(k\). Hence the transported jet depends only on the source \(k\)-jet, varies continuously on \(S_a\), smoothly along each incidence stratum, and is unchanged when the prolongation data have the same order-\(k\) germ along \(S_a\).
\end{proof}

For
\(s_1\in\Gamma^k(M_1,E_1)\),
write
\[
  T_a^k(s_1)(y)
  :=
  \Theta_{a,y}^k
  \left(
    j_{\gamma_a(y)}^k s_1
  \right),
  \qquad
  y\in S_a.
\]
\(T_a^k(s_1) \in \Gamma^0 \left( S_a, J^kE_2|_{S_a} \right)\),
and its restriction to each incidence stratum is smooth whenever
\(s_1\) is smooth.

\subsection{The incidence quotient and compatibility operator}
\label{subsec:incidence-defect}

For a stratum \(S_\lambda\), define the incidence-jet bundle
\(\mathcal J_\lambda^k := \bigoplus_{a\in A_\lambda} J^kE_2|_{S_\lambda}\).
Its diagonal subbundle is
\[
  \operatorname{Diag}_\lambda^k
  :=
  \left\{
    (\xi_a)_{a\in A_\lambda}:
    \xi_a=\xi_b
    \text{ for all }a,b\in A_\lambda
  \right\}.
\]
Set
\(\mathcal Q_\lambda^k := \mathcal J_\lambda^k/ \operatorname{Diag}_\lambda^k\),
and let \(q_\lambda^k: \mathcal J_\lambda^k \longrightarrow \mathcal Q_\lambda^k\) be the quotient morphism.

If
\(\operatorname{rank}E_i=r, \qquad \dim M_2=n\),
then
\(\operatorname{rank}\mathcal Q_\lambda^k = (|A_\lambda|-1) r \binom{n+k}{k}\).

Define
\[
  \mathcal Q_A^k
  :=
  \bigoplus_{\lambda\in\Lambda}
  \Gamma^0
  \left(
    S_\lambda,
    \mathcal Q_\lambda^k
  \right).
\]
For
\(s_1\in\Gamma^k(M_1,E_1)\),
set
\[
  \mathfrak d_{A,\lambda}^k(s_1)
  :=
  q_\lambda^k
  \left(
    \left(
      T_a^k(s_1)|_{S_\lambda}
    \right)_{a\in A_\lambda}
  \right),
\]
and define
\(\mathfrak d_A^k := \bigoplus_{\lambda\in\Lambda} \mathfrak d_{A,\lambda}^k\).
Then \(\mathfrak d_A^k: \Gamma^k(M_1,E_1) \longrightarrow \mathcal Q_A^k\) is continuous and linear.

The operator is associated with the chosen branchwise prolongation
germs. Propositions~\ref{prop:branch-refinement} and
\ref{thm:defect-naturality} below describe the invariance available
under controlled changes of presentation.

\begin{definition}[Finite transported-jet equaliser]
\label{def:formal-equaliser}
The order-\(k\) transported-jet equaliser is
\[
  \mathfrak E_A^k
  :=
  \ker\mathfrak d_A^k.
\]
The corresponding finite compatibility quotient is
\[
  \mathfrak F_A^k
  :=
  \Gamma^k(M_1,E_1)/\mathfrak E_A^k.
\]
The first isomorphism theorem gives
\[
  \mathfrak F_A^k
  \cong
  \operatorname{Im}\mathfrak d_A^k.
\]
\end{definition}

Here ``formal'' records only equality of limiting transported jets, not existence of a companion section on \(M_2\).

\begin{theorem}[Transported-jet necessity]
\label{thm:universal-jet-obstruction}
For every
\[
  k\in\mathbb N_0,
\]
\[
  \operatorname{Im}\operatorname{res}_1^k
  \subseteq
  \mathfrak E_A^k.
\]
\end{theorem}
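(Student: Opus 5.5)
The plan is to show that if \(s_1\in\operatorname{Im}\operatorname{res}_1^k\), then every branch transports the same \(k\)-jet at each incidence point, namely the jet of the companion section. Fix a global \(C^k\)-section represented by a compatible pair \((s_1,s_2)\). By Proposition~\ref{prop:exact-prolongation}, \(s_2\in\Gamma^k(M_2,E_2)\) satisfies \(s_2|_{U_2}=\Phi\circ s_1\circ f^{-1}\). We will prove that for every \(a\in A\) and every \(y\in S_a\),
\[
  T_a^k(s_1)(y)=j_y^k s_2 .
\]
Granting this, for each stratum \(S_\lambda\) the tuple \(\bigl(T_a^k(s_1)|_{S_\lambda}\bigr)_{a\in A_\lambda}\) equals \(\bigl(j^k s_2|_{S_\lambda}\bigr)_{a\in A_\lambda}\). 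This tuple lies in \(\operatorname{Diag}_\lambda^k\), so \(\mathfrak d_{A,\lambda}^k(s_1)=0\) for every \(\lambda\), and hence \(s_1\in\mathfrak E_A^k\).

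To prove the key identity, fix \(a\) and set \(\sigma_a:=\Psi_a(g_a^*s_1)\), which is a \(C^k\)-section of \(E_2\) over the neighbourhood \(\Omega_a\) of \(S_a\). On \(V_a\cap\Omega_a\), Definition~\ref{def:branchwise-prolongation} gives \(g_a=f^{-1}\) and \(\Psi_a(y,\xi)=\Phi(g_a(y),\xi)\). Therefore
\[
  \sigma_a(y)=\Phi\bigl(s_1(f^{-1}(y))\bigr)=s_2(y),\qquad y\in V_a\cap\Omega_a .
\]
So the two \(C^k\)-sections \(\sigma_a\) and \(s_2\) agree on the open set \(V_a\cap\Omega_a\). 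In local coordinates and frames, all their partial derivatives of order \(\le k\) therefore agree on that open set. These derivatives are continuous on \(\Omega_a\), so they also agree on the closure of \(V_a\cap\Omega_a\) inside \(\Omega_a\).

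The point needing care is that each \(y\in S_a\) lies in this closure. By definition \(y\in\overline{V_a}^{\,M_2}\), and \(\Omega_a\) is an open neighbourhood of \(y\). Hence every neighbourhood \(W\subseteq\Omega_a\) of \(y\) meets \(V_a\), and so meets \(V_a\cap\Omega_a\). This places \(y\) in the closure of \(V_a\cap\Omega_a\) in \(\Omega_a\), and continuity of the derivatives up to order \(k\) gives \(j_y^k\sigma_a=j_y^k s_2\). Finally, the definition of \(\Theta_{a,y}^k\) together with Proposition~\ref{prop:transported-jet-well-defined}(i) gives
\[
  T_a^k(s_1)(y)=\Theta_{a,y}^k\bigl(j^k_{\gamma_a(y)}s_1\bigr)=j_y^k\sigma_a=j_y^k s_2 .
\]
This proves the key identity and completes the argument.

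The only other thing to check is that the \(C^k\) regularity is sufficient throughout: \(g_a\) and \(\Psi_a\) are smooth, so \(\sigma_a\) is \(C^k\) and its \(k\)-jet is defined and continuous. No Whitney-type realisability enters this argument, because the inclusion proved here is only the necessary direction.
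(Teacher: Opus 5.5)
Your proof is correct and follows the paper's argument: both show that \(\Psi_a(g_a^*s_1)\) and the companion \(s_2\) agree on \(V_a\cap\Omega_a\) and are \(C^k\) on \(\Omega_a\), so \(T_a^k(s_1)(y)=j_y^k s_2\) at every incidence point and \(\mathfrak d_A^k(s_1)=0\) because the incidence family is diagonal. Your explicit check that each \(y\in S_a\) lies in the closure of \(V_a\cap\Omega_a\) inside \(\Omega_a\) spells out a step the paper leaves implicit.
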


\begin{proof}
Let a global \(C^k\)-section be represented by
\((s_1,s_2) \in \Gamma^k(M_1,E_1) \times \Gamma^k(M_2,E_2)\).
On \(V_a\cap\Omega_a\),
\(s_2 = \Phi\circ s_1\circ f^{-1} = \Psi_a(g_a^*s_1)\).
Both sides extend as \(C^k\)-sections to \(\Omega_a\). Hence, for
every
\(y\in S_\lambda \quad\text{and}\quad a\in A_\lambda\),
\(T_a^k(s_1)(y) = j_y^k s_2\).
The transported incidence family is diagonal at every point of every
stratum, so
\(\mathfrak d_A^k(s_1)=0\).
\end{proof}

\subsection{Truncation, refinement, and naturality}
\label{subsec:truncation-naturality}

For
\(0\leq\ell\leq k\),
let \(\rho_{k,\ell}^{(i)}: J^kE_i \longrightarrow J^\ell E_i\) denote jet truncation.

\begin{proposition}[Truncation compatibility]
\label{prop:defect-truncation}
For every branch \(a\),
\[
  \rho_{k,\ell}^{(2)}
  \circ
  \Theta_a^k
  =
  \Theta_a^\ell
  \circ
  \gamma_a^*\rho_{k,\ell}^{(1)}.
\]
Truncation induces a continuous linear map
\[
  \rho_{k,\ell,A}:
  \mathcal Q_A^k
  \longrightarrow
  \mathcal Q_A^\ell
\]
such that
\[
  \mathfrak d_A^\ell(s_1)
  =
  \rho_{k,\ell,A}
  \left(
    \mathfrak d_A^k(s_1)
  \right)
\]
for every
\[
  s_1\in\Gamma^k(M_1,E_1).
\]
In particular,
\[
  \mathfrak E_A^k
  \subseteq
  \mathfrak E_A^\ell
  \cap
  \Gamma^k(M_1,E_1).
\]
\end{proposition}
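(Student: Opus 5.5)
The plan is to prove the pointwise identity first, then assemble the quotient-level map stratum by stratum, and read off the inclusion of equalisers at the end.

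\emph{Step 1: the pointwise identity.} Fix \(a\in A\), \(y\in S_a\), and a source jet \(j^k_{\gamma_a(y)}s\). By Proposition~\ref{prop:transported-jet-well-defined}(i), \(\Theta^k_{a,y}(j^k_{\gamma_a(y)}s)=j^k_y(\Psi_a(g_a^*s))\). Truncation of jets is the natural map \(j^k_y\sigma\mapsto j^\ell_y\sigma\), well defined because the \(\ell\)-jet of a section is determined by its \(k\)-jet. It therefore gives
\[
\rho^{(2)}_{k,\ell}\,\Theta^k_{a,y}\bigl(j^k_{\gamma_a(y)}s\bigr)=j^\ell_y\bigl(\Psi_a(g_a^*s)\bigr)=\Theta^\ell_{a,y}\bigl(j^\ell_{\gamma_a(y)}s\bigr)=\Theta^\ell_{a,y}\bigl(\rho^{(1)}_{k,\ell}\,j^k_{\gamma_a(y)}s\bigr).
\]
Every element of \(J^k_{\gamma_a(y)}E_1\) is the \(k\)-jet of some smooth local section, for instance a polynomial section in a chart and frame, extended by a cutoff. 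Hence the identity holds on the whole fibre, and so as bundle maps over \(S_a\).

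\emph{Step 2: the map on quotients.} On each stratum \(S_\lambda\), apply \(\rho^{(2)}_{k,\ell}\) componentwise to obtain a smooth bundle map \(\mathcal J^k_\lambda\to\mathcal J^\ell_\lambda\). It sends \(\operatorname{Diag}^k_\lambda\) into \(\operatorname{Diag}^\ell_\lambda\), since equal components stay equal. It therefore descends to a bundle morphism \(\bar\rho_\lambda:\mathcal Q^k_\lambda\to\mathcal Q^\ell_\lambda\) with \(q^\ell_\lambda\circ\rho=\bar\rho_\lambda\circ q^k_\lambda\). Composition with a continuous bundle morphism is continuous for the compact-open topology on continuous sections. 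Taking the direct sum over \(\lambda\in\Lambda\) defines \(\rho_{k,\ell,A}\), which is continuous and linear.

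\emph{Step 3: the relation between the operators.} For \(s_1\in\Gamma^k(M_1,E_1)\), Step~1 gives \(\rho^{(2)}_{k,\ell}\,T^k_a(s_1)(y)=T^\ell_a(s_1)(y)\), since \(j^\ell s_1\) is the truncation of \(j^k s_1\). Applying \(q^\ell_\lambda\) and using the commuting square from Step~2 yields \(\mathfrak d^\ell_{A,\lambda}(s_1)=\bar\rho_\lambda\,\mathfrak d^k_{A,\lambda}(s_1)\). Summing over \(\lambda\) gives \(\mathfrak d^\ell_A(s_1)=\rho_{k,\ell,A}(\mathfrak d^k_A(s_1))\).

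\emph{Step 4: the equaliser inclusion.} If \(\mathfrak d^k_A(s_1)=0\), then \(\mathfrak d^\ell_A(s_1)=0\). Since \(\Gamma^k\subseteq\Gamma^\ell\), this gives \(\mathfrak E^k_A\subseteq\mathfrak E^\ell_A\cap\Gamma^k(M_1,E_1)\).

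The main obstacle is only bookkeeping, namely the surjectivity of jet evaluation in Step~1. It is handled by the local polynomial-section argument together with the coordinate description in the proof of Proposition~\ref{prop:transported-jet-well-defined}.
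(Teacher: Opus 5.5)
Your proposal is correct and follows the paper's own route. The pointwise identity is the naturality of jet truncation under the prolonged transport \(s\mapsto\Psi_a(g_a^*s)\), and truncation maps diagonal incidence families to diagonal ones, so it descends to the quotient bundles and their section spaces; you simply make explicit the details the paper leaves implicit, namely surjectivity of jet evaluation and continuity of post-composition.
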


\begin{proof}
The first identity is the naturality of jet truncation under
prolongation. Since truncation maps diagonal incidence families to
diagonal incidence families, it descends to the quotient bundles and
then to their section spaces.
\end{proof}

A refinement of \(A\) consists of a finite branch system \(A'\), a
surjection
\(\varpi:A'\longrightarrow A\),
and branch data satisfying, after possibly shrinking the common
neighbourhood of \(B_2\),

\(V_a = \bigsqcup_{\varpi(a')=a}V_{a'}\),
with each prolongation \((g_{a'},\Psi_{a'})\) equal to the restriction
of \((g_a,\Psi_a)\) along the corresponding subbranch.

\begin{proposition}[Refinement invariance]
\label{prop:branch-refinement}
If \(A'\) is a refinement of \(A\), then
\[
  \mathfrak E_{A'}^k
  =
  \mathfrak E_A^k.
\]
\end{proposition}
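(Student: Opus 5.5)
The plan is to compare the two defect operators stratum by stratum and pointwise, using the fact that the equaliser is defined by pointwise equalities of transported jets. Fix \(s_1\in\Gamma^k(M_1,E_1)\), a stratum \(S_\lambda\) and a point \(y\in S_\lambda\). First I will relate the incidence sets: \(A'_y=\{a'\in A': y\in S_{a'}\}\) and \(A_y\). Since \(V_a=\bigsqcup_{\varpi(a')=a}V_{a'}\) near \(B_2\), we have \(\overline{V_a}=\bigcup_{\varpi(a')=a}\overline{V_{a'}}\). This is a finite union of closures, so \(S_a=\bigcup_{\varpi(a')=a}S_{a'}\). Consequently \(\varpi(A'_y)=A_y\), and \(\varpi\) restricts to a surjection \(A'_y\to A_y\). (If the refined branch system requires a finer stratification so that \(A'_y\) is locally constant, I will pass to a common refinement of strata. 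This is harmless because the kernel condition is pointwise on \(B_2\).)

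Second, I will identify the transported jets. For \(a'\) with \(\varpi(a')=a\) and \(y\in S_{a'}\), the prolongation \((g_{a'},\Psi_{a'})\) is the restriction of \((g_a,\Psi_a)\) to a neighbourhood of \(S_{a'}\); after shrinking, \(\Omega_{a'}\subseteq\Omega_a\). Both therefore have the same order-\(k\) germ at \(y\). By Proposition~\ref{prop:transported-jet-well-defined}(iv), \(\Theta^k_{a',y}=\Theta^k_{a,y}\), and hence \(T^k_{a'}(s_1)(y)=T^k_a(s_1)(y)\).

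Third, I will compare the kernels. The condition \(\mathfrak d^k_{A'}(s_1)=0\) means that for every \(y\in B_2\) the family \((T^k_{a'}(s_1)(y))_{a'\in A'_y}\) is constant. The condition \(\mathfrak d^k_A(s_1)=0\) means the same for \((T^k_a(s_1)(y))_{a\in A_y}\). By the second step, the first family is the pullback of the second along the surjection \(\varpi:A'_y\to A_y\). A family is constant exactly when its pullback along a surjection is constant. This gives \(\mathfrak E^k_{A'}=\mathfrak E^k_A\).

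The main obstacle I expect is the bookkeeping in the first step. Subbranches \(V_{a'}\) with empty incidence carrier are discarded, and I need to confirm that discarding them does not break surjectivity of \(\varpi\) onto branches actually incident at \(y\); this is handled by the finite-union identity for closures. A second point to check is the "after possibly shrinking the common neighbourhood" clause. I will note that shrinking \(N\) does not change any incidence carrier \(S_a\), because the closure near \(B_2\) is local. It also leaves the germs along \(S_a\) unchanged, so by Proposition~\ref{prop:transported-jet-well-defined}(iv) the operators \(\mathfrak d^k\) are unaffected.
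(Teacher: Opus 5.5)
Your proof is correct and follows the paper's argument. The paper's proof uses the same two facts: subbranches of one branch have identical transported jets, so refinement only repeats existing equalities, and every original incidence at a point is still represented by some subbranch. These correspond to your identification \(\Theta^k_{a',y}=\Theta^k_{a,y}\) and your surjection \(\varpi\colon A'_y\to A_y\) with the pullback-of-a-constant-family argument, and your extra bookkeeping (closures of finite unions, discarded empty carriers, independence from the stratification, shrinking of \(N\)) makes explicit what the paper's three-sentence proof leaves implicit.
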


\begin{proof}
Subbranches of one original branch have identical transported jets.
Refinement therefore adds only repeated copies of existing equality
conditions. Conversely, every original branch incidence remains
represented by its subbranches. The two systems impose the same
pointwise transported-jet equalities.
\end{proof}

Refinement may change the quotient target by repeating branch coordinates; only the equalisers are asserted equal.

An isomorphism of complete branch-labelled adjunction germs consists of:

\begin{enumerate}[label=\textup{(\roman*)}]
  \item diffeomorphisms
        \(F_i:M_i\longrightarrow M_i'\);

  \item vector-bundle isomorphisms \(\widehat F_i:E_i\longrightarrow E_i'\)         covering \(F_i\);

  \item a bijection of branch and stratum labels;
\end{enumerate}
intertwining the base adjunction maps, bundle gluing maps, incidence
carriers, source maps, and prolongation germs.

The induced map on sections is
\(F_{i,\sharp}^k(s) := \widehat F_i \circ s\circ F_i^{-1}\).
Jet functoriality and the branch-label bijection similarly induce
\[
  \mathcal G_{A,\sharp}^k:
  \mathcal Q_A^k
  \longrightarrow
  \mathcal Q_{A'}^{\prime\,k}.
\]

\begin{theorem}[Naturality]
\label{thm:defect-naturality}
The diagram
\[
\begin{CD}
  \Gamma^k(M_1,E_1)
  @>{\mathfrak d_A^k}>>
  \mathcal Q_A^k
  \\
  @V{F_{1,\sharp}^k}VV
  @VV{\mathcal G_{A,\sharp}^k}V
  \\
  \Gamma^k(M_1',E_1')
  @>{\mathfrak d_{A'}^{\prime\,k}}>>
  \mathcal Q_{A'}^{\prime\,k}
\end{CD}
\]
commutes.
\end{theorem}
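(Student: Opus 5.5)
The plan is to check commutativity pointwise, one stratum and one branch coordinate at a time. Both composites are built from transported jets followed by the diagonal quotient, so it suffices to prove an equivariance identity for the individual transported jet maps \(\Theta_{a,y}^k\) and then pass to the quotient.

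Fix \(s_1\in\Gamma^k(M_1,E_1)\), a stratum \(S_\lambda\), a branch \(a\in A_\lambda\), and a point \(y\in S_\lambda\). Write \(a'\) and \(\lambda'\) for the labels matched by the bijection. Because the isomorphism intertwines incidence carriers and strata, we have \(F_2(S_\lambda)=S'_{\lambda'}\) and \(A'_{\lambda'}\) corresponds to \(A_\lambda\). Because it intertwines prolongation germs, we may assume, after shrinking \(\Omega_a\), that
\[
  g'_{a'}\circ F_2=F_1\circ g_a
  \quad\text{and}\quad
  \Psi'_{a'}\circ (F_2^{*})^{-1}\text{-pullback of }\widehat F_1=\widehat F_2\circ\Psi_a .
\]
In words, \(\Psi'_{a'}\) applied to \(g_{a'}'^{*}(F_{1,\sharp}s)\) equals \(F_{2,\sharp}\bigl(\Psi_a(g_a^*s)\bigr)\) as sections on \(\Omega'_{a'}=F_2(\Omega_a)\). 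This identity is checked directly on \(V_a\cap\Omega_a\), where it reduces to the intertwining \(\Phi'\circ\widehat F_1=\widehat F_2\circ\Phi\) and \(f'\circ F_1=F_2\circ f\). The germ hypothesis then supplies it on the whole prolongation neighbourhood, or at least to order \(k\) along \(S_a\), which is all the construction uses by Proposition~\ref{prop:transported-jet-well-defined}(iv).

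Next take \(k\)-jets at \(y\) and use functoriality of jet prolongation: for a bundle isomorphism \(\widehat F_2\) covering \(F_2\), one has \(j^k_{F_2(y)}(F_{2,\sharp}t)=J^k\widehat F_2\,(j^k_y t)\). This gives
\[
  T'^{\,k}_{a'}(F_{1,\sharp}s_1)(F_2(y))=J^k\widehat F_2\bigl(T^k_a(s_1)(y)\bigr).
\]
The map \(\mathcal G^k_{A,\sharp}\) is defined by applying \(J^k\widehat F_2\) coordinatewise, relabelling branches, and transporting sections along \(F_2|_{S_\lambda}\). Since it is applied coordinatewise, it carries \(\operatorname{Diag}^k_\lambda\) onto \(\operatorname{Diag}'^{\,k}_{\lambda'}\), and so it descends to the quotients \(\mathcal Q\). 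Applying \(q_\lambda^k\) to the displayed equality then yields \(\mathfrak d'^{\,k}_{A'}\circ F^k_{1,\sharp}=\mathcal G^k_{A,\sharp}\circ\mathfrak d^k_A\), stratum by stratum, which is the commutativity of the square.

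The main obstacle is the first step: making precise that "intertwining the prolongation germs" gives the composite identity for \((g_a,\Psi_a)\) on a neighbourhood of \(S_a\), or at least to order \(k\) along it, rather than only on the open branch \(V_a\). I would treat this as the meaning of the hypothesis and read it as equality of order-\(k\) germs. Proposition~\ref{prop:transported-jet-well-defined}(iv) then makes this enough. Everything after that is formal jet functoriality together with preservation of the diagonal.
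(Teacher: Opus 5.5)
Your proposal is correct and follows the paper's route: intertwine the transported jets branch by branch via jet functoriality. Then observe that $\mathcal G^k_{A,\sharp}$ relabels branches and applies $J^k\widehat F_2$ coordinatewise, so it maps diagonal subbundles onto diagonal subbundles and descends to the quotients.

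The step you flag as the main obstacle needs no special reading of the germ hypothesis. The $C^k$ sections $\Psi'_{a'}\bigl((g'_{a'})^{*}F^k_{1,\sharp}s_1\bigr)$ and $F^k_{2,\sharp}\bigl(\Psi_a(g_a^{*}s_1)\bigr)$ coincide on $V'_{a'}$ near $S'_{a'}$. There they reduce to $\Phi'\circ F^k_{1,\sharp}s_1\circ (f')^{-1}$ and $F^k_{2,\sharp}(\Phi\circ s_1\circ f^{-1})$, which agree because the isomorphism intertwines $f,\Phi$ with $f',\Phi'$. By continuity their $k$-jets then agree on the closure of $V'_{a'}$ in the common domain, hence along $S'_{a'}$, exactly as in the proof of Theorem~\ref{thm:universal-jet-obstruction}.
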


\begin{proof}
For every corresponding pair of branches, jet functoriality
intertwines the source and target jet transports. On the incidence
direct sums, the induced map permutes branch coordinates and applies
the target jet isomorphism to every coordinate. It maps each diagonal
subbundle onto the corresponding diagonal subbundle, so it descends to
the quotient bundles and their section spaces.
\end{proof}

\begin{corollary}
The isomorphism \(F_{1,\sharp}^k\) maps
\[
  \mathfrak E_A^k
  \quad\text{onto}\quad
  \mathfrak E_{A'}^{\prime\,k},
\]
and \(\mathcal G_{A,\sharp}^k\) maps
\[
  \operatorname{Im}\mathfrak d_A^k
  \quad\text{onto}\quad
  \operatorname{Im}\mathfrak d_{A'}^{\prime\,k}.
\]
Hence the finite compatibility quotient is invariant under isomorphism
of the branch-labelled adjunction germ.
\end{corollary}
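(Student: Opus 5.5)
The corollary follows from the commuting square in Theorem~\ref{thm:defect-naturality} once we know that both vertical maps are linear isomorphisms. The plan is to establish invertibility of the vertical maps first, and then read off the three assertions by elementary diagram chasing.

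First, I would show that \(F_{1,\sharp}^k\) and \(\mathcal G_{A,\sharp}^k\) are bijective. The inverse adjunction-germ isomorphism \((F_i^{-1},\widehat F_i^{-1})\), together with the inverse label bijection, again intertwines all the data: base maps, gluing maps, incidence carriers, source maps and prolongation germs. It therefore induces maps \((F^{-1})_{1,\sharp}^k\) and \(\mathcal G^{-1}_{\sharp}\) in the reverse direction. Composing, \(\widehat F_1^{-1}\circ\widehat F_1\circ s\circ F_1^{-1}\circ F_1=s\), so \(F_{1,\sharp}^k\) is invertible. On the quotient side, jet functoriality is a functor, so the induced maps on \(J^kE_2\) compose correctly. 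The label bijection sends strata to strata, so the induced map on \(\bigoplus_\lambda\Gamma^0(S_\lambda,\mathcal Q_\lambda^k)\) has the reverse map as a two-sided inverse. Continuity in both directions is not needed for the algebraic statements.

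Second, I would prove the equaliser statement. If \(s_1\in\mathfrak E_A^k\), commutativity gives
\[
\mathfrak d_{A'}^{\prime\,k}\bigl(F_{1,\sharp}^k s_1\bigr)=\mathcal G_{A,\sharp}^k\bigl(\mathfrak d_A^k s_1\bigr)=0,
\]
so \(F_{1,\sharp}^k s_1\in\mathfrak E_{A'}^{\prime\,k}\). For the reverse inclusion, take \(t\in\mathfrak E_{A'}^{\prime\,k}\) and set \(s_1=(F_{1,\sharp}^k)^{-1}t\). Then \(\mathcal G_{A,\sharp}^k(\mathfrak d_A^k s_1)=0\), and injectivity of \(\mathcal G_{A,\sharp}^k\) forces \(\mathfrak d_A^k s_1=0\). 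Hence \(F_{1,\sharp}^k\) maps \(\mathfrak E_A^k\) onto \(\mathfrak E_{A'}^{\prime\,k}\).

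Third, I would prove the image statement. Applying commutativity gives \(\mathcal G_{A,\sharp}^k(\operatorname{Im}\mathfrak d_A^k)=\mathfrak d_{A'}^{\prime\,k}(F_{1,\sharp}^k\Gamma^k(M_1,E_1))\). Surjectivity of \(F_{1,\sharp}^k\) then identifies this with \(\operatorname{Im}\mathfrak d_{A'}^{\prime\,k}\).

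Finally, quotient invariance follows because a linear isomorphism mapping the subspace \(\mathfrak E_A^k\) onto \(\mathfrak E_{A'}^{\prime\,k}\) descends to an isomorphism \(\mathfrak F_A^k\cong\mathfrak F_{A'}^{\prime\,k}\). The identification \(\mathfrak F\cong\operatorname{Im}\mathfrak d\) from Definition~\ref{def:formal-equaliser} is compatible with this, by the same square.

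The only step requiring care is the bijectivity of \(\mathcal G_{A,\sharp}^k\). One has to check that the inverse data are again an isomorphism of branch-labelled germs, and that the descended quotient maps compose functorially. I would handle this by noting that the maps on the incidence sums are block permutations composed with fibrewise jet isomorphisms, each preserving the diagonal subbundles, so their inverses descend as well.
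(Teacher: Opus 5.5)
Your proposal is correct and follows the paper's route. The corollary is read off from the commuting square of Theorem~\ref{thm:defect-naturality}, using that both vertical maps are invertible because the inverse germ isomorphism induces their inverses. The paper leaves the invertibility check and the diagram chase implicit; your write-up makes them explicit.
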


Changes of coordinates and bundle frames are special cases.

\subsection{Fibrewise equaliser and compatibility bundles}
\label{subsec:equaliser-bundles}

On a positive-dimensional stratum, several branches may have the same
limiting source point. To avoid treating repeated copies of one source
jet as independent variables, assume that the equivalence relation \(a\sim_y b \quad\Longleftrightarrow\quad \gamma_a(y)=\gamma_b(y)\) is constant on \(S_\lambda\). After further stratification, this may be
imposed whenever the source-coincidence pattern is locally constant.

Let \(\mathcal P_\lambda = \{B_1,\ldots,B_{m_\lambda}\}\) be the resulting partition of \(A_\lambda\). For every block \(B\),
the maps \(\gamma_a\), \(a\in B\), agree on \(S_\lambda\); denote their
common value by
\(\gamma_B:S_\lambda\longrightarrow M_1\).
For distinct blocks \(B\neq B'\),
\(\gamma_B(y)\neq\gamma_{B'}(y) \qquad (y\in S_\lambda)\).

Define the source-jet bundle
\(\mathcal V_\lambda^k := \bigoplus_{B\in\mathcal P_\lambda} \gamma_B^*J^kE_1\).
For \(a\in A_\lambda\), let \(B(a)\) be its source block. Repetition
of the appropriate source jet followed by branch transport defines
\[
  \widetilde\Delta_\lambda^k:
  \mathcal V_\lambda^k
  \longrightarrow
  \mathcal J_\lambda^k
\]
by
\[
  \widetilde\Delta_{\lambda,y}^k
  \left(
    (\eta_B)_{B\in\mathcal P_\lambda}
  \right)
  =
  \left(
    \Theta_{a,y}^k
    \left(
      \eta_{B(a)}
    \right)
  \right)_{a\in A_\lambda}.
\]
Set
\[
  \Delta_\lambda^k
  :=
  q_\lambda^k
  \circ
  \widetilde\Delta_\lambda^k:
  \mathcal V_\lambda^k
  \longrightarrow
  \mathcal Q_\lambda^k.
\]

\begin{definition}
The fibrewise equaliser and compatibility-image spaces are
\[
  \mathcal E_{\lambda,y}^k
  :=
  \ker\Delta_{\lambda,y}^k,
  \qquad
  \mathcal D_{\lambda,y}^k
  :=
  \operatorname{Im}\Delta_{\lambda,y}^k.
\]
\end{definition}

\begin{theorem}[Constant-rank equaliser theorem]
\label{thm:constant-rank-equaliser}
If \(\Delta_\lambda^k\) has locally constant rank, then
\[
  \mathcal E_\lambda^k
  :=
  \ker\Delta_\lambda^k
\]
and
\[
  \mathcal D_\lambda^k
  :=
  \operatorname{Im}\Delta_\lambda^k
\]
are smooth vector subbundles, and the induced morphism gives a
canonical bundle isomorphism
\[
  \mathcal V_\lambda^k/
  \mathcal E_\lambda^k
  \cong
  \mathcal D_\lambda^k.
\]
\end{theorem}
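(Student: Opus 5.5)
The plan is to reduce the statement to the standard fact that a smooth vector-bundle morphism of locally constant rank has smooth kernel and image subbundles, after first checking that \(\Delta_\lambda^k\) is indeed a smooth morphism over \(S_\lambda\).

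\emph{Smoothness of the morphism.} The source bundle \(\mathcal V_\lambda^k=\bigoplus_B\gamma_B^*J^kE_1\) is smooth over the embedded submanifold \(S_\lambda\). Each \(\gamma_B\) is a smooth map \(S_\lambda\to M_1\), since it is the restriction of a smooth prolongation \(g_a\) with \(a\in B\). The target \(\mathcal Q_\lambda^k\) is the quotient of the smooth bundle \(\mathcal J_\lambda^k\) by the diagonal subbundle. That diagonal is the image of the fibrewise injective smooth map \(J^kE_2|_{S_\lambda}\to\mathcal J_\lambda^k\), \(\xi\mapsto(\xi)_{a\in A_\lambda}\), so it is a smooth subbundle and \(q_\lambda^k\) is a smooth surjective morphism. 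The map \(\widetilde\Delta_\lambda^k\) is built from two pieces:
\begin{enumerate}[label=\textup{(\roman*)}]
  \item the smooth repetition map \(\mathcal V_\lambda^k\to\bigoplus_{a\in A_\lambda}\gamma_{B(a)}^*J^kE_1\);
  \item the blockwise transports \(\Theta_{a,\lambda}^k\), which are smooth bundle morphisms by Proposition~\ref{prop:transported-jet-well-defined}(iii).
\end{enumerate}
Hence \(\Delta_\lambda^k\) is a smooth morphism of smooth vector bundles over \(S_\lambda\).

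\emph{Constant rank.} Fix \(y_0\in S_\lambda\) and work on a neighbourhood where both bundles are trivial and the rank equals \(\rho\). There \(\Delta_\lambda^k\) is a smooth matrix-valued function \(D(y)\) of rank \(\rho\). Choose a \(\rho\times\rho\) minor that is invertible at \(y_0\); it stays invertible on a smaller neighbourhood.
\begin{itemize}
  \item \emph{Image.} The \(\rho\) columns of \(D(y)\) indexed by this minor are linearly independent and, by the rank condition, span \(\operatorname{Im}D(y)\). They therefore give a smooth local frame for \(\mathcal D_\lambda^k\).
  \item \emph{Kernel.} Reorder coordinates so that \(D=\begin{pmatrix}P&Q\\R&S\end{pmatrix}\) with \(P\) the invertible minor. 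The rank condition forces \(S=RP^{-1}Q\). The kernel is then the graph \(\{(-P^{-1}Qw,w)\}\), which has the smooth frame obtained by letting \(w\) run over the standard basis. So \(\mathcal E_\lambda^k\) is a smooth subbundle of rank \(\operatorname{rank}\mathcal V_\lambda^k-\rho\).
\end{itemize}
Equivalently, one can simply cite the standard constant-rank kernel and image theorem for bundle morphisms \citep{Lee2012SmoothManifolds}.

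\emph{The isomorphism.} Since \(\mathcal E_\lambda^k\) is a smooth subbundle, \(\mathcal V_\lambda^k/\mathcal E_\lambda^k\) is a smooth quotient bundle. Because \(\Delta_\lambda^k\) vanishes on \(\mathcal E_\lambda^k\), it descends to a smooth morphism \(\overline\Delta:\mathcal V_\lambda^k/\mathcal E_\lambda^k\to\mathcal D_\lambda^k\). On each fibre this is the first-isomorphism-theorem map, so it is a linear bijection. A smooth bundle morphism that is bijective on every fibre has a smooth inverse, since locally its inverse is matrix inversion. Canonicity holds because no choices were made beyond the data defining \(\Delta_\lambda^k\).

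The main point needing care is the first step, namely the smoothness of \(\Delta_\lambda^k\) along \(S_\lambda\). This relies on the stratum-wise smoothness in Proposition~\ref{prop:transported-jet-well-defined}(iii) together with the constancy of the source-coincidence partition, which makes \(\mathcal V_\lambda^k\) a genuine smooth bundle. Once that is settled, the rest is the routine constant-rank argument.
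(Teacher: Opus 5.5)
Your proposal is correct and follows the paper's route. The paper's proof is a one-line appeal to the constant-rank kernel and image theorem for smooth bundle morphisms \citep{Lee2012SmoothManifolds}. Your argument unpacks that theorem in local frames, and it also checks explicitly something the paper takes for granted: that \(\Delta_\lambda^k\) is a smooth morphism, which follows from Proposition~\ref{prop:transported-jet-well-defined}(iii) and the constancy of the source-coincidence partition on \(S_\lambda\).
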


\begin{proof}
Apply the constant-rank theorem to the smooth vector-bundle morphism
\(\Delta_\lambda^k\)
\citep{Lee2012SmoothManifolds}.
\end{proof}

This statement is fibrewise. A smooth section of
\(\mathcal E_\lambda^k\) need not be holonomic and need not arise as
the restriction of the jet of a global section of \(E_1\). Likewise, \(\operatorname{Im}\mathfrak d_A^k\) may be strictly smaller than the full section space
\(\Gamma^0(S_\lambda,\mathcal D_\lambda^k)\).

\subsection{Isolated incidence}
\label{subsec:isolated-incidence}

Let \(x\in B_2\) be an isolated incidence point. Define \(P_x := \{\gamma_a(x):a\in A_x\} \subseteq M_1\) and
\(\mathcal V_x^k := \bigoplus_{p\in P_x}J_p^kE_1\).
Let \(\mathcal Q_x^k\) be the fibre at \(x\) of the incidence quotient belonging to the
zero-dimensional stratum containing \(x\). Define \(D_x^k: \mathcal V_x^k \longrightarrow \mathcal Q_x^k\) by
\[
  D_x^k
  \left(
    (\eta_p)_{p\in P_x}
  \right)
  :=
  \left[
    \left(
      \Theta_{a,x}^k
      \left(
        \eta_{\gamma_a(x)}
      \right)
    \right)_{a\in A_x}
  \right].
\]

Let \(\mathfrak d_{A,x}^k: \Gamma^k(M_1,E_1) \longrightarrow \mathcal Q_x^k\) denote evaluation at \(x\) of the corresponding component of
\(\mathfrak d_A^k\).

\begin{proposition}[Isolated-point reduction]
\label{prop:isolated-point-reduction}
At an isolated incidence point,
\[
  \operatorname{Im}\mathfrak d_{A,x}^k
  =
  \operatorname{Im}D_x^k.
\]
The local finite compatibility problem is represented by the
finite-dimensional linear map \(D_x^k\).
\end{proposition}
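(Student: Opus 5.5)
The plan is to factor \(\mathfrak d_{A,x}^k\) through a finite-jet evaluation map and then use surjectivity of jets at finitely many distinct points.

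First, define the multi-point jet evaluation
\[
  \operatorname{ev}_x^k:\Gamma^k(M_1,E_1)\longrightarrow\mathcal V_x^k,
  \qquad
  s_1\longmapsto\bigl(j_p^ks_1\bigr)_{p\in P_x}.
\]
Unwinding the definitions of \(T_a^k\) and \(\mathfrak d_{A,\lambda}^k\) at the zero-dimensional stratum \(\{x\}\) gives, for each \(a\in A_x\),
\[
  T_a^k(s_1)(x)=\Theta_{a,x}^k\bigl(j_{\gamma_a(x)}^ks_1\bigr).
\]
This is exactly the \(a\)-coordinate of \(\widetilde\Delta\) applied to \(\operatorname{ev}_x^k(s_1)\), with \(\eta_{\gamma_a(x)}=j^k_{\gamma_a(x)}s_1\). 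After applying the quotient \(q^k\) we obtain the factorisation \(\mathfrak d_{A,x}^k=D_x^k\circ\operatorname{ev}_x^k\). This immediately yields the inclusion \(\operatorname{Im}\mathfrak d_{A,x}^k\subseteq\operatorname{Im}D_x^k\).

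For the reverse inclusion, it suffices to show that \(\operatorname{ev}_x^k\) is surjective onto \(\mathcal V_x^k\), and indeed already on smooth sections. The set \(P_x\) is a finite set of distinct points of the Hausdorff manifold \(M_1\). Choose pairwise disjoint coordinate neighbourhoods \(W_p\ni p\) on which \(E_1\) is trivialised, together with bump functions \(\chi_p\) supported in \(W_p\) and equal to \(1\) near \(p\). Given a prescribed jet \(\eta_p\in J^k_pE_1\), take the local section whose components are the Taylor polynomial of degree \(k\) realising \(\eta_p\) in the chart and frame, multiply it by \(\chi_p\), and extend by zero. Summing over \(p\in P_x\) gives a smooth global section \(s_1\) with \(j^k_ps_1=\eta_p\) for every \(p\in P_x\). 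Disjointness of the supports ensures there is no interference between the different points.

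The only point needing care is that distinct branches with the same limiting source point use the same source jet. This is built into \(\mathcal V_x^k\), since it is indexed by \(P_x\) rather than by \(A_x\). The factorisation is therefore consistent: repeated source points impose no extra independence and no hidden constraint. Once this bookkeeping is in place, the proof is elementary, and the final sentence of the proposition is simply the restatement \(\operatorname{Im}\mathfrak d_{A,x}^k=\operatorname{Im}D_x^k\), with \(D_x^k\) a linear map between finite-dimensional spaces.
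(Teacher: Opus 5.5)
Your proposal is correct and follows essentially the same route as the paper. The paper's proof makes the same two moves: it factors \(\mathfrak d_{A,x}^k=D_x^k\circ\operatorname{ev}_x^k\), and it proves surjectivity of \(k\)-jet evaluation on the finite set \(P_x\) using cutoff polynomial sections in pairwise disjoint trivialising coordinate neighbourhoods.
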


\begin{proof}
Evaluation of \(k\)-jets at the finite set \(P_x\) is surjective.
Indeed, choose pairwise disjoint coordinate neighbourhoods around the
points of \(P_x\), realise the prescribed finite jets in local bundle
frames by polynomial sections, and multiply by compactly supported
cutoffs equal to one near the relevant points. Summing the resulting
sections realises the prescribed jet family.

The map \(\mathfrak d_{A,x}^k\) is the composition of this surjective
evaluation map with \(D_x^k\), which proves the equality of images.
\end{proof}

Suppose now that every branch has the same source point
\(\gamma_a(x)=p \qquad (a\in A_x)\),
and that every transported map \(\Theta_{a,x}^k: J_p^kE_1 \longrightarrow J_x^kE_2\) is invertible. Fix a reference branch \(a_0\) and define
\[
  H_a^k
  :=
  \left(
    \Theta_{a_0,x}^k
  \right)^{-1}
  \Theta_{a,x}^k
  \in
  \operatorname{GL}(J_p^kE_1).
\]

\begin{proposition}[Common-source fixed-space formula]
\label{prop:common-source-fixed-space}
The local equaliser is
\[
  \ker D_x^k
  =
  \bigcap_{a\in A_x}
  \operatorname{Fix}(H_a^k).
\]
Moreover,
\[
  \operatorname{Im}D_x^k
  \cong
  J_p^kE_1
  \bigg/
  \bigcap_{a\in A_x}
  \operatorname{Fix}(H_a^k).
\]
\end{proposition}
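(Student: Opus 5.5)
The argument is finite-dimensional linear algebra on the fibre \(\mathcal V_x^k\). Because all branches share the source point \(p\), we have \(P_x=\{p\}\), so \(\mathcal V_x^k=J_p^kE_1\). The map \(D_x^k\) then sends a single jet \(\eta\in J_p^kE_1\) to the class
\[
  \bigl[(\Theta_{a,x}^k\eta)_{a\in A_x}\bigr]\in\mathcal Q_x^k .
\]
No source jet is repeated as an independent variable, so the source-block bookkeeping of Section~\ref{subsec:equaliser-bundles} collapses to one block.

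First I would compute the kernel. By definition of the quotient \(\mathcal Q_x^k=\mathcal J_x^k/\operatorname{Diag}_x^k\), we have \(D_x^k\eta=0\) exactly when the family \((\Theta_{a,x}^k\eta)_{a\in A_x}\) is diagonal. That means \(\Theta_{a,x}^k\eta=\Theta_{b,x}^k\eta\) for all \(a,b\in A_x\). Since the reference branch \(a_0\) lies in \(A_x\), this is equivalent to \(\Theta_{a,x}^k\eta=\Theta_{a_0,x}^k\eta\) for every \(a\in A_x\): one direction is a special case, and the other follows by transitivity through \(a_0\). Applying the invertible map \((\Theta_{a_0,x}^k)^{-1}\) turns each of these equations into \(H_a^k\eta=\eta\), that is, \(\eta\in\operatorname{Fix}(H_a^k)\). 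Intersecting over \(a\) gives the kernel formula. The term \(a=a_0\) is harmless, since \(H_{a_0}^k=I\) and its fixed space is all of \(J_p^kE_1\).

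Second, the image statement follows from the first isomorphism theorem applied to the linear map \(D_x^k:J_p^kE_1\to\mathcal Q_x^k\). This gives
\[
  \operatorname{Im}D_x^k\cong J_p^kE_1/\ker D_x^k ,
\]
and substituting the kernel just computed yields the stated quotient. If desired, Proposition~\ref{prop:isolated-point-reduction} then identifies this with \(\operatorname{Im}\mathfrak d_{A,x}^k\), though the statement as posed does not require that step.

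I expect no serious obstacle. The only point needing care is the reduction of the diagonality condition to comparisons with the single reference branch \(a_0\), together with the use of invertibility of \(\Theta_{a_0,x}^k\) alone. Invertibility of the other transports \(\Theta_{a,x}^k\) is needed only so that \(H_a^k\) lies in \(\operatorname{GL}\); the fixed-space description itself does not use it.
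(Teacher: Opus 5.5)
Your proposal is correct and matches the paper's proof. Both identify membership in the kernel with the equations \(\Theta_{a,x}^k\eta=\Theta_{a_0,x}^k\eta\), apply \(\left(\Theta_{a_0,x}^k\right)^{-1}\) to obtain \(H_a^k\eta=\eta\), and get the image formula from the first isomorphism theorem. Your explicit reduction of diagonality to comparison with \(a_0\), and your remark that only the invertibility of \(\Theta_{a_0,x}^k\) is used, are correct refinements of the same argument.
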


\begin{proof}
A source jet \(\eta\in J_p^kE_1\) belongs to the equaliser precisely
when
\(\Theta_{a,x}^k\eta = \Theta_{a_0,x}^k\eta \qquad (a\in A_x)\).
Applying \(\left( \Theta_{a_0,x}^k \right)^{-1}\) gives
\(H_a^k\eta=\eta \qquad (a\in A_x)\).
The quotient formula follows from the first isomorphism theorem.
\end{proof}

When isolated incidence points have pairwise disjoint source sets, the
corresponding local maps depend on independent finite jet variables,
and their images form a direct sum. Shared source points may couple the
local compatibility conditions.

\subsection{First obstruction order}
\label{subsec:first-obstruction-order}

For
\(s_1\in\Gamma^\infty(M_1,E_1)\),
define its first transported-jet obstruction order by
\[
  \operatorname{ord}_A(s_1)
  :=
  \min
  \left\{
    k\geq0:
    \mathfrak d_A^k(s_1)\neq0
  \right\},
\]
with \(\operatorname{ord}_A(s_1)=\infty\) when
\(\mathfrak d_A^k(s_1)=0 \qquad \text{for every }k\).

For \(k\geq1\), define the new order-\(k\) compatibility quotient by
\[
  \mathfrak N_A^k
  :=
  \frac{
    \left\{
      s\in\Gamma^k(M_1,E_1):
      \mathfrak d_A^{k-1}(s)=0
    \right\}
  }{
    \mathfrak E_A^k
  }.
\]

\begin{proposition}[New compatibility conditions]
\label{prop:new-order-defect}
There is a canonical linear isomorphism
\[
  \mathfrak N_A^k
  \cong
  \operatorname{Im}\mathfrak d_A^k
  \cap
  \ker\rho_{k,k-1,A}.
\]
\end{proposition}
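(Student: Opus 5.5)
The plan is to restrict \(\mathfrak d_A^k\) to the subspace
\[
  Z^{k-1}:=\{s\in\Gamma^k(M_1,E_1):\mathfrak d_A^{k-1}(s)=0\}
\]
and apply the first isomorphism theorem to this restriction. Its kernel is \(\mathfrak E_A^k\cap Z^{k-1}\). By Proposition~\ref{prop:defect-truncation}, \(\mathfrak E_A^k\subseteq\mathfrak E_A^{k-1}\cap\Gamma^k(M_1,E_1)\), so \(\mathfrak E_A^k\subseteq Z^{k-1}\) and the kernel is exactly \(\mathfrak E_A^k\). Hence \(\mathfrak N_A^k\cong\mathfrak d_A^k(Z^{k-1})\) canonically, through the map \([s]\mapsto\mathfrak d_A^k(s)\).

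It then remains to identify the image. Proposition~\ref{prop:defect-truncation} gives \(\mathfrak d_A^{k-1}(s)=\rho_{k,k-1,A}(\mathfrak d_A^k(s))\) for every \(s\in\Gamma^k(M_1,E_1)\). Therefore \(s\in Z^{k-1}\) exactly when \(\mathfrak d_A^k(s)\in\ker\rho_{k,k-1,A}\). This gives the inclusion
\[
  \mathfrak d_A^k(Z^{k-1})\subseteq\operatorname{Im}\mathfrak d_A^k\cap\ker\rho_{k,k-1,A}.
\]
For the converse, take \(\omega=\mathfrak d_A^k(s)\) with \(\rho_{k,k-1,A}\omega=0\). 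The same identity shows \(\mathfrak d_A^{k-1}(s)=0\), so \(s\in Z^{k-1}\) and \(\omega\in\mathfrak d_A^k(Z^{k-1})\). This reverse inclusion is the only step with any content. It needs no jet-surjectivity argument, because it uses only the fact that every element of the image comes from some \(s\), together with the truncation identity.

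The main point to check is that \(\rho_{k,k-1,A}\) is well defined on \(\mathcal Q_A^k\), so that \(\ker\rho_{k,k-1,A}\) makes sense. This is already part of Proposition~\ref{prop:defect-truncation}: truncation preserves the diagonal subbundles stratumwise. Canonicity holds because no choices enter beyond the fixed branch system and prolongation germs.
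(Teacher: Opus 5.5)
Your proposal is correct and follows essentially the same route as the paper: restrict \(\mathfrak d_A^k\) to the sections with \(\mathfrak d_A^{k-1}(s)=0\), use the truncation identity of Proposition~\ref{prop:defect-truncation} for both inclusions of the image, identify the kernel as \(\mathfrak E_A^k\), and conclude by the first isomorphism theorem. You also check explicitly that \(\mathfrak E_A^k\) lies in that subspace, so the quotient defining \(\mathfrak N_A^k\) makes sense; the paper leaves this implicit.
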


\begin{proof}
Restrict \(\mathfrak d_A^k\) to
\[
  \left\{
    s\in\Gamma^k(M_1,E_1):
    \mathfrak d_A^{k-1}(s)=0
  \right\}.
\]
By truncation compatibility, its image lies in
\(\ker\rho_{k,k-1,A}\).
Conversely, if
\(\eta\in \operatorname{Im}\mathfrak d_A^k \cap \ker\rho_{k,k-1,A}\),
write
\(\eta=\mathfrak d_A^k(s)\).
Then
\(\mathfrak d_A^{k-1}(s) = \rho_{k,k-1,A}(\eta) = 0\).
The kernel of the restricted map is exactly
\(\mathfrak E_A^k\).
The first isomorphism theorem proves the result.
\end{proof}

\subsection{Smooth necessity and the residual obstruction}
\label{subsec:formal-actual-defects}

Because the fixed smooth branchwise prolongations define a compatible
transported-jet system at every finite order, set
\[
  \mathfrak E_A^\infty
  :=
  \left\{
    s\in\Gamma^\infty(M_1,E_1):
    \mathfrak d_A^k(s)=0
    \text{ for every }k\geq0
  \right\}.
\]
Equivalently,
\[
  \mathfrak E_A^\infty
  =
  \Gamma^\infty(M_1,E_1)
  \cap
  \bigcap_{k\geq0}\mathfrak E_A^k.
\]

\begin{corollary}[Smooth transported-jet necessity]
\label{cor:smooth-universal-necessity}
\[
  \operatorname{Im}\operatorname{res}_1^\infty
  \subseteq
  \mathfrak E_A^\infty.
\]
\end{corollary}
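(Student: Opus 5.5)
The plan is to deduce the smooth statement directly from Theorem~\ref{thm:universal-jet-obstruction}, applied at every finite order. Let \(s_1\in\operatorname{Im}\operatorname{res}_1^\infty\). Then there is a global smooth section of \(E\) represented by a compatible pair \((s_1,s_2)\in\Gamma^\infty(M_1,E_1)\times\Gamma^\infty(M_2,E_2)\). In particular \(s_1\in\Gamma^\infty(M_1,E_1)\), so the first defining condition of \(\mathfrak E_A^\infty\) holds.

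Next I fix an arbitrary \(k\in\mathbb N_0\). Every smooth section is \(C^k\), so the same pair \((s_1,s_2)\) is a compatible pair of \(C^k\)-sections. It therefore represents a global \(C^k\)-section of \(E\) whose first component is \(s_1\), which gives \(s_1\in\operatorname{Im}\operatorname{res}_1^k\). Theorem~\ref{thm:universal-jet-obstruction} then yields \(s_1\in\mathfrak E_A^k\), that is, \(\mathfrak d_A^k(s_1)=0\).

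Two bookkeeping points need checking. First, \(\mathfrak d_A^k\) must be evaluated with the same fixed smooth branchwise prolongations \((g_a,\Psi_a)\) for every \(k\); this is exactly how \(\mathfrak E_A^\infty\) is defined. Second, the inclusion \(\Gamma^\infty(M,E)\subseteq\Gamma^k(M,E)\) must be compatible with the compatible-pair description. It is, because the compatibility equation \(s_2\circ f=\Phi\circ s_1\) does not depend on the regularity class.

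Since \(k\) was arbitrary, \(s_1\in\Gamma^\infty(M_1,E_1)\cap\bigcap_{k\ge0}\mathfrak E_A^k=\mathfrak E_A^\infty\). There is no genuine obstacle here. The only point requiring care is the passage from a smooth global section to a \(C^k\) one at each finite order, which is immediate.
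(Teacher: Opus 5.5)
Your proposal is correct and follows the paper's proof: fix $k$, regard the smooth compatible pair $(s_1,s_2)$ as a $C^k$ one, and apply Theorem~\ref{thm:universal-jet-obstruction} to get $\mathfrak d_A^k(s_1)=0$ for every $k$. Your bookkeeping remarks (the fixed prolongations, and that the compatibility equation does not depend on regularity) just make explicit what the paper's one-line argument leaves implicit.
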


\begin{proof}
Apply
Theorem~\ref{thm:universal-jet-obstruction}
at every finite order.
\end{proof}

No converse is asserted here.

Set
\(\mathfrak R_A^k := \operatorname{Im}\operatorname{res}_1^k\).
The algebraic extension cokernel is
\(\mathfrak C_A^k := \Gamma^k(M_1,E_1)/\mathfrak R_A^k\),
while the finite transported-compatibility quotient is
\[
  \mathfrak F_A^k
  :=
  \Gamma^k(M_1,E_1)/\mathfrak E_A^k
  \cong
  \operatorname{Im}\mathfrak d_A^k.
\]
Their difference is the jet-invisible residual obstruction
\(\mathfrak W_A^k := \mathfrak E_A^k/\mathfrak R_A^k\).

\begin{proposition}[Residual-obstruction sequence]
\label{prop:residual-obstruction-sequence}
There is a canonical short exact sequence of vector spaces
\[
  0
  \longrightarrow
  \mathfrak W_A^k
  \longrightarrow
  \mathfrak C_A^k
  \longrightarrow
  \mathfrak F_A^k
  \longrightarrow
  0.
\]
Equivalently,
\[
  0
  \longrightarrow
  \frac{
    \ker\mathfrak d_A^k
  }{
    \operatorname{Im}\operatorname{res}_1^k
  }
  \longrightarrow
  \frac{
    \Gamma^k(M_1,E_1)
  }{
    \operatorname{Im}\operatorname{res}_1^k
  }
  \longrightarrow
  \operatorname{Im}\mathfrak d_A^k
  \longrightarrow
  0.
\]
\end{proposition}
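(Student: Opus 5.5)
The statement is a third-isomorphism-theorem argument for the chain of subspaces
\[
  \mathfrak R_A^k=\operatorname{Im}\operatorname{res}_1^k
  \subseteq
  \mathfrak E_A^k=\ker\mathfrak d_A^k
  \subseteq
  \Gamma^k(M_1,E_1).
\]
I will first record that this chain holds. The first inclusion is Theorem~\ref{thm:universal-jet-obstruction}. It is needed so that \(\mathfrak W_A^k=\mathfrak E_A^k/\mathfrak R_A^k\) is defined at all and so that the later maps are well defined. One point is worth noting: \(\operatorname{res}_1^k\) is linear, since compatible pairs form a linear subspace. Hence \(\mathfrak R_A^k\) is a linear subspace, and all three quotients are vector spaces.

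Next I will define the maps. The map \(\iota:\mathfrak W_A^k\to\mathfrak C_A^k\) is induced by the inclusion \(\mathfrak E_A^k\hookrightarrow\Gamma^k(M_1,E_1)\); it sends \(s+\mathfrak R_A^k\) to \(s+\mathfrak R_A^k\). It is well defined and injective because both quotients are taken by the same subspace \(\mathfrak R_A^k\). The map \(\pi:\mathfrak C_A^k\to\mathfrak F_A^k\) is the canonical projection \(s+\mathfrak R_A^k\mapsto s+\mathfrak E_A^k\). It is well defined because \(\mathfrak R_A^k\subseteq\mathfrak E_A^k\), and it is clearly surjective. For the second form of the sequence, I compose \(\pi\) with the first-isomorphism identification \(\mathfrak F_A^k\cong\operatorname{Im}\mathfrak d_A^k\) from Definition~\ref{def:formal-equaliser}. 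This gives the map \([s]\mapsto\mathfrak d_A^k(s)\), which is well defined because \(\mathfrak d_A^k\) vanishes on \(\mathfrak R_A^k\).

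Exactness in the middle is a direct check: \(\pi([s])=0\) iff \(s\in\mathfrak E_A^k\) iff \([s]\in\operatorname{Im}\iota\). Canonicity follows because no choices enter beyond the fixed branch data that define \(\mathfrak d_A^k\). If desired, I will also remark that by Theorem~\ref{thm:defect-naturality} the sequence is natural under isomorphisms of branch-labelled germs, since \(F^k_{1,\sharp}\) preserves both \(\mathfrak R_A^k\) and \(\mathfrak E_A^k\).

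There is no real obstacle in this argument. The only substantive input is the inclusion \(\operatorname{Im}\operatorname{res}_1^k\subseteq\ker\mathfrak d_A^k\), which is already established. I will add one caution: everything here is algebraic. No closedness or topological splitting is claimed, in line with the section's convention that quotients are algebraic.
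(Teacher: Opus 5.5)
Your proposal is correct and matches the paper's proof: the paper likewise invokes Theorem~\ref{thm:universal-jet-obstruction} to obtain the nested subspaces \(\mathfrak R_A^k\subseteq\mathfrak E_A^k\subseteq\Gamma^k(M_1,E_1)\) and then takes the canonical short exact sequence of quotients, with \(\mathfrak F_A^k\cong\operatorname{Im}\mathfrak d_A^k\) supplying the second form. Your explicit checks of well-definedness, injectivity, surjectivity and middle exactness are just this argument written out in full.
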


\begin{proof}
The universal transported-jet obstruction gives
\(\mathfrak R_A^k \subseteq \mathfrak E_A^k \subseteq \Gamma^k(M_1,E_1)\).
The result is the canonical short exact sequence associated with these
nested subspaces.
\end{proof}

\(\mathfrak W_A^k\) is the jet-invisible extension defect, identified below with failure of Whitney compatibility.
\section{Whitney realisation and conically tame extension}
\label{sec:whitney-realisation}

The equaliser \(\mathfrak E_A^k\) records limiting-jet compatibility; this section identifies the remaining Whitney obstruction and a sufficient condition for its vanishing.

Throughout,
\(K_2 := \overline{U_2}^{\,M_2}\).
Since \(U_2\) is open, \(K_2\) is a closed subset of \(M_2\).

\subsection{Bundle-valued Whitney fields}
\label{subsec:bundle-valued-whitney}

Let \(F\longrightarrow X\) be a smooth real or complex vector bundle over a smooth paracompact
manifold, and let \(K\subseteq X\) be closed.

A section \(\mathcal F:K\longrightarrow J^k F|_K\) is called a \emph{Whitney \(C^k\)-field} if, in every coordinate chart
and local bundle frame, its coefficient functions satisfy the
classical Whitney remainder conditions of order \(k\). A formal jet
field \(\mathcal F^\infty\) is called a \emph{smooth Whitney field} if each finite truncation is a
Whitney \(C^k\)-field.

Coordinate and frame changes act by smooth triangular jet automorphisms, so the definitions are intrinsic.

\begin{proposition}[Bundle-valued Whitney realisation]
\label{prop:bundle-valued-whitney-extension}
Let \(K\subseteq X\) be closed.

\begin{enumerate}[label=\textup{(\roman*)}]
  \item Every Whitney \(C^k\)-field
        \[
          \mathcal F:K\longrightarrow J^k F|_K
        \]
        is realised by a section
        \[
          s\in\Gamma^k(X,F)
        \]
        satisfying
        \[
          j^k s|_K=\mathcal F.
        \]

  \item Every smooth Whitney field on \(K\) is realised by a section
        \[
          s\in\Gamma^\infty(X,F)
        \]
        with the prescribed full jet on \(K\).
\end{enumerate}
\end{proposition}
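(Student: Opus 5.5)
The plan is to reduce to the classical scalar Whitney extension theorem on open subsets of \(\R^n\) and to globalise with a smooth partition of unity. Since \(X\) is paracompact, we can fix a locally finite cover \(\{W_\alpha\}\) of \(X\) by coordinate charts over which \(F\) is trivial. We also fix a smooth partition of unity \(\{\chi_\alpha\}\) with \(\operatorname{supp}\chi_\alpha\subseteq W_\alpha\).

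In each chart, choose a local frame \(e_1,\ldots,e_r\) of \(F\). This frame turns \(\mathcal F|_{K\cap W_\alpha}\) into an \(r\)-tuple of scalar \(k\)-jet fields on the relatively closed set \(K\cap W_\alpha\subseteq W_\alpha\cong\) an open subset of \(\R^n\). By the definition of a bundle-valued Whitney field, each of these scalar fields satisfies the Whitney remainder conditions. This is consistent across charts, because coordinate and frame changes act by smooth triangular jet automorphisms, and these automorphisms preserve the class of Whitney fields.

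The classical Whitney extension theorem, in the form for closed subsets of an open set of \(\R^n\) \citep{Whitney1934Extension}, then produces \(C^k\) functions \(u_{\alpha,i}\) on \(W_\alpha\) whose \(k\)-jets on \(K\cap W_\alpha\) equal the prescribed coefficients. If one prefers to work on all of \(\R^n\), first shrink to relatively compact charts so that \(K\cap\overline{W_\alpha}\) is closed in \(\R^n\). We set \(s_\alpha:=\sum_i u_{\alpha,i}e_i\) and define the global section \(s:=\sum_\alpha\chi_\alpha s_\alpha\).

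The key verification is that \(j^k s|_K=\mathcal F\). At \(y\in K\) we compute
\[
  j^k_y s=\sum_\alpha j^k_y(\chi_\alpha)\cdot j^k_y(s_\alpha)=\sum_\alpha j^k_y(\chi_\alpha)\cdot\mathcal F(y),
\]
where the product is the Leibniz product of jets, that is, multiplication in the jet algebra acting on \(J^kF\). The second equality uses only that \(j^k_y s_\alpha=\mathcal F(y)\) for every \(\alpha\) with \(y\in W_\alpha\); the charts with \(y\notin W_\alpha\) contribute zero near \(y\). Since \(\sum_\alpha\chi_\alpha\equiv1\), we have \(\sum_\alpha j^k_y\chi_\alpha=j^k_y1\), and so \(j^k_y s=\mathcal F(y)\). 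Local finiteness ensures that \(s\in\Gamma^k(X,F)\). This proves (i).

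For (ii), the argument is identical once the finite-order Whitney theorem is replaced by Whitney's \(C^\infty\) extension theorem for smooth Whitney fields (jets of infinite order each of whose truncations is Whitney) on closed subsets of \(\R^n\). The same partition-of-unity computation, applied at every order \(k\), shows that the full jet of \(s\) on \(K\) is \(\mathcal F^\infty\).

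The main obstacle is the well-definedness of the local scalar problem. One has to check that the Whitney remainder conditions, stated chartwise, restrict correctly to \(K\cap W_\alpha\) inside a possibly shrunken chart, and that they are genuinely invariant under the triangular jet transformations. I would handle this by using relatively compact charts and the uniform bounds on the derivatives of the transition maps on compact sets. Under those bounds, a Whitney remainder estimate \(o(|x-y|^{k-|\beta|})\) that holds uniformly on compact sets transforms into an estimate of the same form.
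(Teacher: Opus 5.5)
Your proposal is correct and follows the paper's argument: the scalar Whitney theorem is applied componentwise in bundle-trivialising charts, the local extensions are patched with a partition of unity, and the Leibniz rule for jets together with \(\sum_\alpha\chi_\alpha=1\) gives \(j^k_ys=\mathcal F(y)\) on \(K\), with the same patching plus Whitney's \(C^\infty\) theorem giving (ii). The only cosmetic difference is that the paper covers just \(K\) by shrunken charts \(O_i'\Subset O_i\), so that \(K\cap\overline{O_i'}\) is closed in Euclidean space, and adds \(X\setminus K\) carrying the zero section; this is your ``shrinking'' variant, with the partition of unity taken subordinate to the shrunken sets.
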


\begin{proof}
Choose a locally finite family of bundle-trivialising coordinate neighbourhoods \(O_i'\Subset O_i\) covering \(K\), and put \(K_i=K\cap\overline{O_i'}\). In each frame, \(\mathcal F|_{K_i}\) is a finite-dimensional vector-valued Whitney field on the closed Euclidean set \(\varphi_i(K_i)\); applying the scalar Whitney theorem componentwise gives \(s_i\in\Gamma^k(O_i,F)\) with \(j^ks_i|_{K_i}=\mathcal F|_{K_i}\) \citep{Whitney1934Extension,Fefferman2006Whitney,FeffermanLuli2014}.

Add \(O_0=X\setminus K\), set \(s_0=0\), and choose a partition of unity \(\{\eta_0,\eta_i\}\) subordinate to \(\{O_0,O_i'\}\). The locally finite sum \(s=\sum_i\eta_i s_i\) is \(C^k\). At any \(x\in K\), all local extensions meeting \(x\) have the same \(k\)-jet \(\mathcal F(x)\), while \(\eta_0\) vanishes near \(x\); the Leibniz rule and \(\sum_i\eta_i=1\) therefore give \(j_x^ks=\mathcal F(x)\). The \(C^\infty\) statement follows from the smooth Whitney theorem and the same patching argument.
\end{proof}

\subsection{Assembled transported fields}
\label{subsec:assembled-fields}

For
\(s_1\in\Gamma^k(M_1,E_1)\),
let \(\sigma[s_1] := \Phi\circ s_1\circ f^{-1} \in \Gamma^k(U_2,E_2|_{U_2})\) be the forced section on the second gluing region. For every branch
\(a\), define its prolonged representative by
\[
  \overline{\sigma}_a[s_1]
  :=
  \Psi_a(g_a^*s_1)
  \in
  \Gamma^k(\Omega_a,E_2|_{\Omega_a}).
\]
On
\(V_a\cap\Omega_a\),
one has
\(\overline{\sigma}_a[s_1] = \sigma[s_1]\).

Suppose that
\(s_1\in\mathfrak E_A^k\).
At every \(y\in B_2\), the transported jets
\(T_a^k(s_1)(y) = j_y^k \overline{\sigma}_a[s_1], \qquad a\in A_y\),
are equal. Denote their common value by
\(\xi_y^k(s_1)\in J_y^k E_2\).

\begin{definition}[Assembled transported field]
\label{def:assembled-transported-field}
For
\[
  s_1\in\mathfrak E_A^k,
\]
define
\[
  \mathcal A_A^k(s_1):
  K_2\longrightarrow J^kE_2|_{K_2}
\]
by
\[
  \mathcal A_A^k(s_1)(y)
  =
  \begin{cases}
    j_y^k \sigma[s_1],
    &y\in U_2,\\[1mm]
    \xi_y^k(s_1),
    &y\in B_2.
  \end{cases}
\]
\end{definition}

The field is relative to the fixed prolongation germs and transforms naturally under the refinements and isomorphisms of Section~\ref{sec:transported-jets}.

\begin{proposition}[Continuity of the assembled field]
\label{prop:assembled-field-continuity}
For every
\[
  s_1\in\mathfrak E_A^k,
\]
the assembled field
\[
  \mathcal A_A^k(s_1)
\]
is a continuous section of \(J^kE_2|_{K_2}\).
\end{proposition}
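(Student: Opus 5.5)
Continuity is local, so the plan is to fix \(y_0\in K_2\) and verify it in a coordinate chart and local frame of \(E_2\) around \(y_0\). There the jet bundle \(J^kE_2\) is trivialised and the field becomes a vector of functions on \(K_2\cap O\). Since \(K_2=U_2\sqcup B_2\) and \(U_2\) is open in \(M_2\), two cases arise.

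\emph{Case \(y_0\in U_2\).} Near \(y_0\) the field equals \(j^k\sigma[s_1]\) on an open set. Here \(\sigma[s_1]\) is \(C^k\) on \(U_2\), so its \(k\)-jet is continuous, and this case is immediate.

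\emph{Case \(y_0\in B_2\).} This case needs the branch system. I will shrink the neighbourhood \(O\) so that the following hold.
\begin{enumerate}[label=\textup{(\alph*)}]
\item \(O\subseteq N\).
\item \(O\) meets only branches \(V_a\) with \(a\in A_{y_0}\). This is possible because \(A\) is finite: for \(a\notin A_{y_0}\) we have \(y_0\notin\overline{V_a}\), so a small ball avoids \(V_a\).
\item \(O\subseteq\bigcap_{a\in A_{y_0}}\Omega_a\). Each \(\Omega_a\) is an open neighbourhood of \(S_a\ni y_0\).
\end{enumerate}
Then \(K_2\cap O\) is covered by the closed pieces \(C_a:=\overline{V_a}\cap O\), \(a\in A_{y_0}\). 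Points of \(U_2\cap O\) lie in some \(V_a\), and points of \(B_2\cap O\) lie in the closure of \(U_2\cap N=\bigsqcup V_a\), hence in some \(\overline{V_a}\).

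On each \(C_a\) the assembled field agrees with \(j^k\overline\sigma_a[s_1]\):
\begin{itemize}
\item on \(V_a\cap O\) this is the identity \(\overline\sigma_a[s_1]=\sigma[s_1]\);
\item on \(\overline{V_a}\cap B_2\cap O\subseteq S_a\) it is the definition of \(\xi^k_y(s_1)\) as the common value \(T_a^k(s_1)(y)\), which uses \(s_1\in\mathfrak E_A^k\) and \(a\in A_y\).
\end{itemize}
Because \(\overline\sigma_a[s_1]=\Psi_a(g_a^*s_1)\) is a \(C^k\)-section on \(\Omega_a\supseteq O\), its \(k\)-jet is continuous on \(O\). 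So the field restricted to each closed set \(C_a\) is continuous. A function on a finite union of relatively closed sets that is continuous on each piece is continuous by the pasting lemma, which gives continuity near \(y_0\).

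The main obstacle is the covering claim in the second case: that \(K_2\cap O\) is exactly the finite union of the closed sets \(\overline{V_a}\cap O\), and that every boundary point in such a set actually lies in \(S_a\) with \(a\in A_y\), so that the identification with \(j^k\overline\sigma_a\) is legitimate there. Both follow from Definition~\ref{def:finite-branch-system} (\(S_a=\overline{V_a}\cap B_2\), finiteness of \(A\), and \(N\supseteq B_2\) open). Well-definedness on the overlaps \(C_a\cap C_b\) is guaranteed by the equaliser condition. No Whitney remainder estimates are needed, since only continuity of the zeroth-order jet field is claimed.
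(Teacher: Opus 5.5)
Your proof is correct and is essentially the paper's argument. Both proofs use finiteness of the branch set to localise near a boundary point to a single incident branch, on whose closure the assembled field coincides with the continuous jet \(j^k\overline{\sigma}_a[s_1]\). The paper phrases this with sequences and a subsequence along a fixed branch, invoking continuity of \(T_a^k(s_1)\) on \(S_a\); you phrase it as a finite closed cover plus the pasting lemma. The one step you leave implicit is that \(\overline{V_a}\cap U_2\cap O=V_a\cap O\), which holds because the \(V_b\) are pairwise disjoint open sets covering \(U_2\cap O\).
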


\begin{proof}
On \(U_2\), \(\mathcal A_A^k(s_1)=j^k\sigma[s_1]\). Let \(y_n\to y\in B_2\). Finiteness of the branch set allows passage to a subsequence with a fixed incident branch \(a\). If \(y_n\in U_2\), then eventually \(y_n\in V_a\) and
\[
 \mathcal A_A^k(s_1)(y_n)=j_{y_n}^k\sigma_a[s_1]\to j_y^k\sigma_a[s_1]=\xi_y^k(s_1).
\]
If \(y_n\in B_2\), choose the fixed branch from \(A_{y_n}\); closedness of \(S_a\) in \(B_2\) gives \(y\in S_a\), and continuity of \(T_a^k(s_1)\) gives the same limit. Every convergent sequence has the required limit.
\end{proof}

\begin{proposition}[Local holonomicity]
\label{prop:assembled-local-holonomicity}
Let
\[
  s_1\in\mathfrak E_A^k.
\]

\begin{enumerate}[label=\textup{(\roman*)}]
  \item On \(U_2\),
        \[
          \mathcal A_A^k(s_1)
          =
          j^k \sigma[s_1].
        \]

  \item Near a point \(y\in B_2\) with
        \[
          |A_y|=1,
        \]
        the assembled field is the jet of the unique prolonged branch
        section.

  \item For every branch \(a\),
        \[
          \mathcal A_A^k(s_1)
          \big|
          _{\overline{V_a}^{\,M_2}\cap K_2\cap\Omega_a}
          =
          j^k \overline{\sigma}_a[s_1]
          \big|
          _{\overline{V_a}^{\,M_2}\cap K_2\cap\Omega_a}.
        \]
\end{enumerate}
\end{proposition}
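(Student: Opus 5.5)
Each of the three items follows directly from Definition~\ref{def:assembled-transported-field}, the branchwise identity \(\overline{\sigma}_a[s_1]=\sigma[s_1]\) on \(V_a\cap\Omega_a\), and the equaliser property \(s_1\in\mathfrak E_A^k\). I would verify them pointwise, splitting every point \(y\) of the relevant set according to whether \(y\in U_2\) or \(y\in B_2\).

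Item (i) is the first case of the definition and needs no argument. I prove (iii) next, since (ii) is a special case of it.

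For (iii), fix a branch \(a\) and a point \(y\in\overline{V_a}^{\,M_2}\cap K_2\cap\Omega_a\).
\begin{itemize}
\item Suppose \(y\in U_2\). The set \(U_2\) is open and the sets \(V_b\) are open, pairwise disjoint, and union to \(U_2\cap N\). A point of \(U_2\) lying in the closure of \(V_a\) therefore lies in \(V_a\) itself, at least when \(y\in N\). I expect this to be the main obstacle, namely the points of \(\overline{V_a}\cap U_2\) outside \(N\). My first attempt is to shrink \(\Omega_a\) so that \(\Omega_a\subseteq N\), which is harmless because only the germ along \(S_a\) matters by Proposition~\ref{prop:transported-jet-well-defined}(iv). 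Failing that, I would read the statement with \(\Omega_a\subseteq N\) understood.
\item Once \(y\in V_a\cap\Omega_a\), the two sections \(\overline{\sigma}_a[s_1]\) and \(\sigma[s_1]\) agree on the open set \(V_a\cap\Omega_a\), which is a neighbourhood of \(y\). Their \(k\)-jets at \(y\) therefore agree, and both equal \(\mathcal A_A^k(s_1)(y)\).
\item Suppose instead \(y\in B_2\). Then \(y\in\overline{V_a}\cap B_2=S_a\), so \(a\in A_y\). The definition gives \(\mathcal A_A^k(s_1)(y)=\xi_y^k(s_1)\), the common value of \(T_b^k(s_1)(y)\) over \(b\in A_y\). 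In particular it equals \(T_a^k(s_1)(y)=j_y^k\overline{\sigma}_a[s_1]\), by the definition of \(\Theta^k_{a,y}\) and Proposition~\ref{prop:transported-jet-well-defined}.
\end{itemize}

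For (ii), let \(A_y=\{a\}\). I first claim that some neighbourhood \(W\) of \(y\) in \(M_2\) satisfies \(W\cap K_2\subseteq\overline{V_a}\cap\Omega_a\). The closed sets \(\overline{V_b}\) with \(b\ne a\) do not contain \(y\), because their incidence carriers \(S_b\) miss \(y\). Their finite union therefore has an open complement containing \(y\). Intersecting this complement with \(N\) and \(\Omega_a\) gives \(W\).

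It remains to check the inclusion. A point of \(W\cap U_2\) lies in \(U_2\cap N\) but in no \(V_b\) with \(b\neq a\), so it lies in \(V_a\). A point of \(W\cap B_2\) lies in \(K_2=\overline{U_2}\). Points of \(U_2\) near it also lie in \(W\), hence in \(V_a\), so the point lies in \(\overline{V_a}\). Applying (iii) on \(W\cap K_2\) then shows that the field equals \(j^k\overline{\sigma}_a[s_1]\) there. This is the jet of the unique prolonged branch section, as claimed.
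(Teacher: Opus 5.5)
Your treatment of (i), of (ii), and of the boundary case \(y\in B_2\) in (iii) is correct. The computation \(\xi_y^k(s_1)=T_a^k(s_1)(y)=j_y^k\overline{\sigma}_a[s_1]\) for \(y\in S_a\) is in fact the paper's entire proof, and your neighbourhood \(W\) makes explicit what the paper leaves to ``the definitions''.

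The step that fails is your treatment of points \(y\in U_2\cap\overline{V_a}^{\,M_2}\cap\Omega_a\) lying outside \(N\). Such points occur. In the relative-sign example, take \(N=(-1,1)\), \(V_+=(0,1)\), and the prolongation \(g_+=\operatorname{id}\), \(\Psi_+=-1\) on \(\Omega_+=\R\); then \(y=1\) is such a point. Replacing \(\Omega_a\) by \(\Omega_a\cap N\), or reading the statement with \(\Omega_a\subseteq N\), only proves (iii) on the smaller set \(\overline{V_a}^{\,M_2}\cap K_2\cap\Omega_a\cap N\). The set in (iii) is fixed by the given prolongation domain. Proposition~\ref{prop:transported-jet-well-defined}(iv) concerns only the transported maps \(\Theta^k_{a,y}\) along \(S_a\), so it does not license this change.

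No change of domain is needed; the missing ingredient is continuity of jets. The difference \(\sigma[s_1]-\overline{\sigma}_a[s_1]\) is a \(C^k\) section on the open set \(U_2\cap\Omega_a\) that vanishes on the open subset \(V_a\cap\Omega_a\). Its derivatives through order \(k\) therefore vanish there, and by continuity also on the closure of \(V_a\cap\Omega_a\) in \(U_2\cap\Omega_a\). Because \(\Omega_a\) is open, every \(y\in U_2\cap\overline{V_a}^{\,M_2}\cap\Omega_a\) lies in that closure. Hence \(\mathcal A_A^k(s_1)(y)=j_y^k\sigma[s_1]=j_y^k\overline{\sigma}_a[s_1]\).

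This argument covers all points of \(U_2\) at once, with no case split on \(N\). Only the jets at closure points are forced: \(\overline{\sigma}_a[s_1]\) and \(\sigma[s_1]\) may genuinely differ on \(U_2\cap\Omega_a\) away from \(\overline{V_a}^{\,M_2}\). Your proof of (ii) is unaffected, since there \(W\subseteq N\).
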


\begin{proof}
Only the boundary values require verification. If
\(y\in S_a\),
then
\(\xi_y^k(s_1) = T_a^k(s_1)(y) = j_y^k \overline{\sigma}_a[s_1]\).
The remaining statements follow from the definitions.
\end{proof}

Only cross-branch approach to a multiple-incidence locus can therefore violate Whitney compatibility.

Define
\[
  \mathfrak X_A^k
  :=
  \left\{
    s_1\in\mathfrak E_A^k:
    \mathcal A_A^k(s_1)
    \text{ is a Whitney }C^k\text{-field}
  \right\}.
\]

\subsection{Exact Whitney criterion}
\label{subsec:exact-whitney-criterion}

\begin{theorem}[Exact finite-order Whitney realisation]
\label{thm:exact-whitney-realisation}
For every
\[
  k\in\mathbb N_0,
\]
\[
  \boxed{
  \operatorname{Im}\operatorname{res}_1^k
  =
  \mathfrak X_A^k.
  }
\]
Equivalently, a section
\[
  s_1\in\Gamma^k(M_1,E_1)
\]
extends if and only if:

\begin{enumerate}[label=\textup{(\roman*)}]
  \item
        \[
          s_1\in\ker\mathfrak d_A^k;
        \]

  \item the assembled field
        \[
          \mathcal A_A^k(s_1)
        \]
        is a Whitney \(C^k\)-field on \(K_2\).
\end{enumerate}
\end{theorem}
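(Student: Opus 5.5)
The plan is to prove the two inclusions separately, using Proposition~\ref{prop:exact-prolongation} to reduce extension of \(s_1\) to the existence of \(s_2\in\Gamma^k(M_2,E_2)\) with \(s_2|_{U_2}=\sigma[s_1]\).

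\emph{Necessity.} Suppose \(s_1=\operatorname{res}_1^k(s_1,s_2)\). Theorem~\ref{thm:universal-jet-obstruction} gives \(s_1\in\mathfrak E_A^k\), so \(\mathcal A_A^k(s_1)\) is defined. I claim \(\mathcal A_A^k(s_1)=j^ks_2|_{K_2}\).
\begin{itemize}
\item On \(U_2\) this holds because \(s_2=\sigma[s_1]\) there.
\item At \(y\in B_2\), the proof of Theorem~\ref{thm:universal-jet-obstruction} already shows \(T_a^k(s_1)(y)=j^k_ys_2\) for every incident branch \(a\), so \(\xi^k_y(s_1)=j^k_ys_2\).
\end{itemize}
The restriction to a closed set of the jet of a \(C^k\)-section is a Whitney \(C^k\)-field; locally this is Taylor's theorem with uniform remainder on compacts. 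Hence \(s_1\in\mathfrak X_A^k\).

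\emph{Sufficiency.} Let \(s_1\in\mathfrak X_A^k\). Proposition~\ref{prop:bundle-valued-whitney-extension}(i), applied with \(X=M_2\), \(F=E_2\) and \(K=K_2\), gives \(s_2\in\Gamma^k(M_2,E_2)\) with \(j^ks_2|_{K_2}=\mathcal A_A^k(s_1)\). Restricting to the open set \(U_2\subseteq K_2\), the zeroth-order part of this equality gives \(s_2|_{U_2}=\sigma[s_1]\). So \(s_2\circ f=\Phi\circ s_1\) on \(U_1\), and Proposition~\ref{prop:exact-prolongation} shows that \((s_1,s_2)\) is a global \(C^k\)-section. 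This proves \(\mathfrak X_A^k\subseteq\operatorname{Im}\operatorname{res}_1^k\).

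\emph{Expected obstacles.} Neither direction has a deep step; the work is in two technical checks.
\begin{itemize}
\item Whether the intrinsic definition of a Whitney field on the possibly non-regular closed set \(K_2\) matches "restriction of a \(C^k\) jet". In a chart I would invoke the standard fact that for \(u\in C^k\) the remainders \(R_x^\alpha(y)\) are \(o(|x-y|^{k-|\alpha|})\) uniformly on compacts, with no regularity of \(K\) needed.
\item Whether the first component of the jet on the interior of \(K_2\) determines the section there; it does, since the zeroth-order coefficient is the value.
\end{itemize}
Also, \(K_2\) need not be covered by a single chart. The locality built into the definition of a Whitney field and the patching in Proposition~\ref{prop:bundle-valued-whitney-extension} handle this, so I would not redo that argument.
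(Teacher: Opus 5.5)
Your proposal is correct and follows the paper's proof essentially step for step. Necessity comes from Theorem~\ref{thm:universal-jet-obstruction} together with the identity \(j^ks_2|_{K_2}=\mathcal A_A^k(s_1)\), and sufficiency from Proposition~\ref{prop:bundle-valued-whitney-extension} followed by reading off \(s_2|_{U_2}=\sigma[s_1]\) from the zero-order component. Your added checks make explicit what the paper leaves implicit: you identify \(\xi_y^k(s_1)\) with \(j_y^ks_2\) at boundary points, and you verify the Whitney property of a restricted \(C^k\)-jet via locally uniform Taylor remainders.
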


\begin{proof}
Suppose first that \(s_1\) is the restriction of a global section
represented by
\((s_1,s_2) \in \Gamma^k(M_1,E_1) \times \Gamma^k(M_2,E_2)\).
The universal transported-jet obstruction gives
\(s_1\in\mathfrak E_A^k\).
Moreover,
\(j^k s_2|_{K_2} = \mathcal A_A^k(s_1)\).
The assembled field is therefore Whitney \(C^k\).

Conversely, let
\(s_1\in\mathfrak X_A^k\).
By
Proposition~\ref{prop:bundle-valued-whitney-extension},
there exists \(s_2\in\Gamma^k(M_2,E_2)\) such that
\(j^k s_2|_{K_2} = \mathcal A_A^k(s_1)\).
On the open set \(U_2\),
\(j^k s_2 = j^k \sigma[s_1]\).
Equality of the zero-order components gives
\(s_2|_{U_2} = \sigma[s_1] = \Phi\circ s_1\circ f^{-1}\).
Hence \((s_1,s_2)\) is a compatible \(C^k\)-pair.
\end{proof}

\begin{corollary}[Residual Whitney obstruction]
\label{cor:residual-whitney-obstruction}
There is a canonical identification
\[
  \mathfrak W_A^k
  =
  \frac{
    \mathfrak E_A^k
  }{
    \operatorname{Im}\operatorname{res}_1^k
  }
  =
  \frac{
    \mathfrak E_A^k
  }{
    \mathfrak X_A^k
  }.
\]
\(\mathfrak W_A^k\) is exactly the failure of Whitney
realisability after all limiting order-\(k\) transported-jet equations
have been satisfied.
\end{corollary}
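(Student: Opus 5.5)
The identification is obtained by combining the definition of the residual obstruction with Theorem~\ref{thm:exact-whitney-realisation}. By definition, \(\mathfrak W_A^k=\mathfrak E_A^k/\mathfrak R_A^k\) with \(\mathfrak R_A^k=\operatorname{Im}\operatorname{res}_1^k\). This quotient makes sense because Theorem~\ref{thm:universal-jet-obstruction} gives \(\mathfrak R_A^k\subseteq\mathfrak E_A^k\). Theorem~\ref{thm:exact-whitney-realisation} gives the equality of subspaces \(\operatorname{Im}\operatorname{res}_1^k=\mathfrak X_A^k\) inside \(\Gamma^k(M_1,E_1)\). Both quotients in the statement are therefore quotients of the same space \(\mathfrak E_A^k\) by the same subspace. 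The identity map of \(\mathfrak E_A^k\) descends to the canonical identification, and no choice is involved.

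Before this, I would record that \(\mathfrak X_A^k\) is a linear subspace of \(\mathfrak E_A^k\), so that the quotient \(\mathfrak E_A^k/\mathfrak X_A^k\) is a vector space on its own terms. Linearity follows from two facts. First, \(s_1\mapsto\mathcal A_A^k(s_1)\) is linear on \(\mathfrak E_A^k\): on \(U_2\) it is \(j^k\sigma[s_1]\), and at boundary points it is the common value of the linear maps \(T_a^k\). Second, the Whitney remainder conditions are stable under finite linear combinations, by the triangle inequality in each chart and frame. Strictly, this step is redundant, since the subspace property is also inherited from the equality with \(\operatorname{Im}\operatorname{res}_1^k\). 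I would still note it, because it makes the second description intrinsic.

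For the interpretive sentence, I would combine the identification with Proposition~\ref{prop:residual-obstruction-sequence}. Any class in \(\mathfrak W_A^k\) is represented by an \(s_1\) whose limiting order-\(k\) transported jets all agree, that is, \(s_1\in\ker\mathfrak d_A^k\). Such a class is nonzero exactly when the assembled field \(\mathcal A_A^k(s_1)\) fails the Whitney conditions. By Proposition~\ref{prop:assembled-local-holonomicity}, this failure can only come from cross-branch approach to multiple-incidence points.

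There is no genuine obstacle here, since all of the analytic content sits in Theorem~\ref{thm:exact-whitney-realisation}, which is assumed. The only care needed is to make sure the two quotients are formed inside the same ambient space \(\Gamma^k(M_1,E_1)\) at the same regularity \(k\). The identification is then an equality of subspaces, not merely an abstract isomorphism.
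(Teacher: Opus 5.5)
Your proposal is correct and matches the paper's implicit argument (the paper gives no separate proof). The first equality is the definition of \(\mathfrak W_A^k\), the second is the subspace equality \(\operatorname{Im}\operatorname{res}_1^k=\mathfrak X_A^k\) from Theorem~\ref{thm:exact-whitney-realisation}, and the interpretive sentence is the cross-branch remark after Proposition~\ref{prop:assembled-local-holonomicity}; your separate linearity check on \(\mathfrak X_A^k\) is redundant, as you note, but harmless.
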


The criterion reduces to cross-branch remainders. In a coordinate chart and local frame, write
\(F_\alpha: K_2\longrightarrow\mathbb K^r, \qquad |\alpha|\leq k\),
for the coefficient functions of
\(\mathcal A_A^k(s_1)\), and define
\[
  R_\alpha(x,y)
  :=
  F_\alpha(x)
  -
  \sum_{|\beta|\leq k-|\alpha|}
  \frac{
    F_{\alpha+\beta}(y)
  }{
    \beta!
  }
  (x-y)^\beta.
\]

\begin{proposition}[Cross-branch reduction]
\label{prop:cross-branch-reduction}
Let
\[
  s_1\in\mathfrak E_A^k.
\]
Then \(\mathcal A_A^k(s_1)\) is a Whitney \(C^k\)-field if and only
if, locally near every multiple-incidence stratum,
\[
  R_\alpha(x,y)
  =
  o
  \left(
    |x-y|^{k-|\alpha|}
  \right)
\]
locally uniformly for
\[
  |\alpha|\leq k,
\]
that is, uniformly when \(x\) and \(y\) remain in a fixed compact
subset of a coordinate neighbourhood and approach the stratum through
the closures of two distinct incident branches.
\end{proposition}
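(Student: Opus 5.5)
The plan is to reduce the Whitney remainder condition on \(K_2\) to a case analysis on the pair \((x,y)\), and to show that every case except cross-branch approach is automatically satisfied. The reduction rests on local holonomicity (Proposition~\ref{prop:assembled-local-holonomicity}) together with Taylor's theorem for the \(C^k\) prolonged branch sections \(\overline\sigma_a[s_1]\).

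Necessity is immediate. The cross-branch condition is a special case of the full Whitney condition, which requires \(R_\alpha(x,y)=o(|x-y|^{k-|\alpha|})\) locally uniformly for all \(x,y\in K_2\) in a compact chart set.

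For sufficiency, fix a compact set \(L\) inside a coordinate chart and a frame, and suppose the Whitney condition fails on \(L\). Then there are \(\varepsilon>0\) and pairs \(x_n,y_n\in K_2\cap L\) with \(|x_n-y_n|\to0\) and
\[
  |R_\alpha(x_n,y_n)|\ge\varepsilon|x_n-y_n|^{k-|\alpha|}.
\]
Passing to a subsequence, \(x_n,y_n\to z\). The proof then splits by the location of \(z\).

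\begin{itemize}
  \item \emph{Case \(z\in U_2\).} Eventually both points lie in a ball inside \(U_2\). There \(\mathcal A_A^k(s_1)=j^k\sigma[s_1]\) is the jet of a \(C^k\) function, and Taylor's theorem with uniform continuity of the \(k\)-th derivatives contradicts the lower bound.
  \item \emph{Case \(z\in B_2\).} By finiteness of \(A\), pass to a subsequence on which \(x_n\) lies in \(\overline{V_a}\cap K_2\) and \(y_n\) lies in \(\overline{V_b}\cap K_2\) for fixed branches \(a,b\). This uses that near \(B_2\) every point of \(K_2\) lies in \(U_2\cap N=\bigsqcup V_a\) or in some \(S_a\subseteq\overline{V_a}\). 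Choose branches so that points in \(B_2\) are assigned an incident branch.
\end{itemize}

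If \(a=b\), part~(iii) of Proposition~\ref{prop:assembled-local-holonomicity} says the field on \(\overline{V_a}\cap K_2\cap\Omega_a\) equals \(j^k\overline\sigma_a[s_1]\). Here \(\overline\sigma_a[s_1]\) is a genuine \(C^k\) section on the open set \(\Omega_a\ni z\), so Taylor's theorem applies on a small convex ball around \(z\) and again gives a contradiction. If \(a\neq b\), then \(z\in S_a\cap S_b\) lies on a multiple-incidence stratum, and both points approach it through the closures of distinct incident branches. The hypothesis of Proposition~\ref{prop:cross-branch-reduction} then forbids the lower bound.

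The step I expect to be the main obstacle is the bookkeeping in the case \(z\in B_2\). First, the subsequence must be assigned to branch closures so that single-branch pairs really fall inside one \(\Omega_a\). This needs \(z\in S_a\subseteq\Omega_a\), which follows from closedness of \(\overline{V_a}\). Second, the uniformity of the \(o(\cdot)\) must be preserved. The contradiction argument handles this, since failure of local uniformity produces exactly such a sequence.

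A subtle point is that "locally near every multiple-incidence stratum" should mean a neighbourhood of \(z\), which the hypothesis supplies. Points of \(B_2\) with \(|A_y|=1\) fall under the case \(a=b\) via part~(ii).
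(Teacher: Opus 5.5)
Your proposal is correct and follows the same route as the paper. Local holonomicity (Proposition~\ref{prop:assembled-local-holonomicity}) handles interior pairs, same-branch pairs and single-incidence points, which leaves only cross-branch pairs near the multiple-incidence locus, and finiteness of the branch system supplies uniformity; your subsequence-and-contradiction argument makes the paper's two-sentence sketch rigorous, including the point that a pair converging into \(S_a\cap S_b\) with \(a\neq b\) is exactly a cross-branch pair.
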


\begin{proof}
By
Proposition~\ref{prop:assembled-local-holonomicity},
the Whitney conditions already hold on the closure of each individual
branch and near every single-incidence point.

On a sufficiently small neighbourhood of a multiple-incidence
stratum, the remaining pairs are those belonging to distinct incident
branch closures. Since the branch system is finite, uniformity over
all such pairs is equivalent to the finitely many displayed
cross-branch conditions.
\end{proof}

\subsection{Exact smooth realisation}
\label{subsec:exact-smooth-whitney}

Recall
\[
  \mathfrak E_A^\infty
  =
  \left\{
    s_1\in\Gamma^\infty(M_1,E_1):
    \mathfrak d_A^k(s_1)=0
    \text{ for every }k
  \right\}.
\]
For
\(s_1\in\mathfrak E_A^\infty\),
truncation compatibility gives a formal field \(\mathcal A_A^\infty(s_1)\) whose order-\(k\) truncation is
\(\mathcal A_A^k(s_1)\).

Set
\[
  \mathfrak X_A^\infty
  :=
  \left\{
    s_1\in\mathfrak E_A^\infty:
    \mathcal A_A^\infty(s_1)
    \text{ is a smooth Whitney field}
  \right\}.
\]

\begin{theorem}[Exact smooth Whitney realisation]
\label{thm:exact-smooth-whitney}
\[
  \boxed{
  \operatorname{Im}\operatorname{res}_1^\infty
  =
  \mathfrak X_A^\infty.
  }
\]
\end{theorem}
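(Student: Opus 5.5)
The plan is to run the argument of Theorem~\ref{thm:exact-whitney-realisation} at infinite order, with Proposition~\ref{prop:bundle-valued-whitney-extension}\,(ii) replacing part~(i). The two inclusions are proved separately.

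For \(\operatorname{Im}\operatorname{res}_1^\infty\subseteq\mathfrak X_A^\infty\), let \((s_1,s_2)\) be a compatible smooth pair. Corollary~\ref{cor:smooth-universal-necessity} already gives \(s_1\in\mathfrak E_A^\infty\). For every \(k\), the proof of Theorem~\ref{thm:exact-whitney-realisation} shows \(j^ks_2|_{K_2}=\mathcal A_A^k(s_1)\). On \(U_2\) this holds because \(s_2=\sigma[s_1]\). On \(B_2\) it holds because \(T_a^k(s_1)(y)=j^k_ys_2\), by the argument of Theorem~\ref{thm:universal-jet-obstruction}. Consequently the formal field \(\mathcal A_A^\infty(s_1)\) equals the full jet \(j^\infty s_2|_{K_2}\). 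Each finite truncation of it is then the restriction of the \(k\)-jet of a \(C^k\) section, and is therefore a Whitney \(C^k\)-field by Taylor's theorem, applied in charts and frames on compact subsets. So \(\mathcal A_A^\infty(s_1)\) is a smooth Whitney field.

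For the reverse inclusion, take \(s_1\in\mathfrak X_A^\infty\). Proposition~\ref{prop:bundle-valued-whitney-extension}\,(ii), applied with \(X=M_2\), \(F=E_2\) and the closed set \(K=K_2\), produces \(s_2\in\Gamma^\infty(M_2,E_2)\) whose full jet on \(K_2\) is \(\mathcal A_A^\infty(s_1)\). On the open set \(U_2\), the zero-order component gives \(s_2|_{U_2}=\sigma[s_1]=\Phi\circ s_1\circ f^{-1}\). Hence \((s_1,s_2)\) is a compatible smooth pair and \(s_1\in\operatorname{Im}\operatorname{res}_1^\infty\).

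The only point that needs care is consistency. We must check that the formal field is well defined: by Proposition~\ref{prop:defect-truncation}, the truncations of \(\mathcal A_A^{k}(s_1)\) are the fields \(\mathcal A_A^{\ell}(s_1)\) for \(\ell\le k\), so the family \(\{\mathcal A_A^k(s_1)\}_k\) is a projective system. We must also check that the smooth Whitney theorem, in its patched bundle form, applies to the field as a whole and not order by order. This is the content of Proposition~\ref{prop:bundle-valued-whitney-extension}\,(ii), which takes as input a smooth Whitney field in exactly the sense of the definition (every finite truncation is Whitney \(C^k\)). With these two checks the argument goes through, and I expect no further obstacle.
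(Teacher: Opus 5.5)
Your proposal is correct and follows the paper's proof. Necessity comes from the fact that the full jet of a smooth companion \(s_2\) on \(K_2\) is the assembled formal field. Sufficiency comes from the smooth bundle-valued Whitney extension, whose zero-order component on \(U_2\) yields \(s_2|_{U_2}=\Phi\circ s_1\circ f^{-1}\). You spell out the forward inclusion (Taylor remainders at each order and the truncation compatibility of the \(\mathcal A_A^k(s_1)\)), which the paper states in a single line.
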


\begin{proof}
A global smooth companion supplies the assembled formal jet field.

Conversely, let
\(s_1\in\mathfrak X_A^\infty\).
By the smooth part of
Proposition~\ref{prop:bundle-valued-whitney-extension},
there exists \(s_2\in\Gamma^\infty(M_2,E_2)\) whose full jet on \(K_2\) is
\(\mathcal A_A^\infty(s_1)\).
Its zero-order component on \(U_2\) is \(\sigma[s_1]\), so
\(s_2|_{U_2} = \Phi\circ s_1\circ f^{-1}\).
Hence \((s_1,s_2)\) is compatible.
\end{proof}

\subsection{Exponentially tangent branches}
\label{subsec:exponentially-tangent-branches}

Formal compatibility at every finite order need not imply Whitney
realisability.

Define
\[
  \delta(x)
  =
  \begin{cases}
    e^{-1/x^2},&x>0,\\
    0,&x\leq0,
  \end{cases}
  \qquad
  \rho(x)
  =
  \begin{cases}
    e^{-1/(2x^2)},&x>0,\\
    0,&x\leq0.
  \end{cases}
\]
Then
\(\delta=\rho^2\).
Set
\(w(x) := \frac{1}{10}\delta(x)^2\).

In
\(M_2=\mathbb R^2\),
define the open ribbons
\[
  V_0
  :=
  \left\{
    (x,y):
    x>0,\
    |y|<w(x)
  \right\},
\]
\[
  V_1
  :=
  \left\{
    (x,y):
    x>0,\
    |y-\delta(x)|<w(x)
  \right\}.
\]
For \(x>0\),
\(\delta(x)-2w(x) = \delta(x) \left( 1-\frac{\delta(x)}5 \right) >0\),
so \(V_0\) and \(V_1\) are disjoint. Their closures meet one another
only at
\(o=(0,0)\).

Let
\(M_1 = \mathbb R_{(0)}^2 \sqcup \mathbb R_{(1)}^2\).
In the \(i\)-th component, set
\[
  W_i
  :=
  \left\{
    (x,y):
    x>0,\
    |y|<w(x)
  \right\}.
\]
Let
\(U_1=W_0\sqcup W_1, \qquad U_2=V_0\sqcup V_1\),
and define the gluing diffeomorphism branchwise by
\(f_0(x,y)=(x,y), \qquad f_1(x,y)=(x,y+\delta(x))\).
Take trivial real line bundles and identity fibre transport.

The inverse branches prolong smoothly to all of \(M_2\) by \(g_0(x,y)=(x,y)\) with values in \(\mathbb R_{(0)}^2\), and \(g_1(x,y)=(x,y-\delta(x))\) with values in \(\mathbb R_{(1)}^2\).

Define
\(s_1|_{\mathbb R_{(0)}^2}=0, \qquad s_1(x,y)|_{\mathbb R_{(1)}^2} = \rho(x)\).
The forced section on \(U_2\) is
\(\sigma[s_1]|_{V_0}=0, \qquad \sigma[s_1]|_{V_1}=\rho(x)\).

\begin{proposition}[Formal compatibility at every order]
\label{prop:exponential-formal-compatibility}
For every
\[
  k\in\mathbb N_0,
\]
\[
  \mathfrak d_A^k(s_1)=0.
\]
Hence
\[
  s_1\in\mathfrak E_A^\infty.
\]
\end{proposition}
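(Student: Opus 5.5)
Since the incidence locus consists of the single point \(o=(0,0)\), the plan is to reduce the statement to a pointwise comparison of jets at \(o\), using Proposition~\ref{prop:isolated-point-reduction} and the definition of \(\mathfrak d_A^k\).

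First identify the geometry. Here \(B_2=\overline{U_2}^{\,M_2}\setminus U_2\) is the union of the boundary curves of the two ribbons together with \(o\). Take the branch system \(A=\{0,1\}\), with \(V_0,V_1\) as branches and prolongations \((g_0,\mathrm{id})\), \((g_1,\mathrm{id})\) defined on all of \(\mathbb R^2\). On a boundary point other than \(o\), only one branch is incident, because the closures of \(V_0\) and \(V_1\) meet only at \(o\). The diagonal quotient there is zero, so every stratum except \(\{o\}\) contributes nothing to \(\mathfrak d_A^k\). At \(o\) both branches are incident, so \(\mathfrak d_A^k(s_1)=0\) reduces to the single equality
\[
  j_o^k\bigl(s_1\circ g_0\bigr)=j_o^k\bigl(s_1\circ g_1\bigr),
\]
where \(s_1\circ g_0\) is computed in \(\mathbb R^2_{(0)}\) and \(s_1\circ g_1\) in \(\mathbb R^2_{(1)}\). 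If one prefers a finer stratification of \(B_2\) (the boundary curves, the point \(o\), and the rest), the same reduction applies stratum by stratum.

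Next compute the two transported sections. The prolonged representative of branch \(0\) is \(\overline\sigma_0[s_1]=s_1|_{\mathbb R^2_{(0)}}\circ g_0=0\). The prolonged representative of branch \(1\) is \(\overline\sigma_1[s_1](x,y)=\rho(x)\), because \(s_1\) on \(\mathbb R^2_{(1)}\) depends only on the first coordinate and \(g_1\) preserves \(x\). The function \(\rho\) is the standard flat function: it is smooth on \(\mathbb R\) and all its derivatives vanish at \(0\). Hence \((x,y)\mapsto\rho(x)\) is smooth on \(\mathbb R^2\) with zero \(k\)-jet at \(o\) for every \(k\). Both transported jets at \(o\) therefore vanish, they coincide, and \(\mathfrak d_A^k(s_1)=0\). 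The same flatness shows that \(s_1\) is smooth on \(M_1\) (it is zero on one sheet and equal to \(\rho(x)\) on the other), so \(s_1\in\mathfrak E_A^\infty\).

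The only point needing care is the bookkeeping of incidence. One must check that \(\overline{V_0}\cap\overline{V_1}=\{o\}\), which the paper records, and that no other boundary point lies in both carriers \(S_0\) and \(S_1\). This matters because single-incidence strata have trivial quotient \(\mathcal Q_\lambda^k\), of rank \((|A_\lambda|-1)\,r\binom{n+k}{k}=0\). Once that is settled, the remaining step is just the flatness of \(\rho\) at \(0\).
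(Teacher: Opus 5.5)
Your argument is correct and is essentially the paper's proof. Only $o$ has multiple incidence, single-incidence strata have zero quotient $\mathcal Q_\lambda^k$, and the two prolonged branch sections $0$ and $\rho(x)$ both have vanishing jets of every order at $o$ because $\rho$ is flat. Two small remarks: your opening sentence should say the \emph{multiple}-incidence locus is $\{o\}$, as your second paragraph correctly does, and Proposition~\ref{prop:isolated-point-reduction} is not needed, since you only evaluate $\mathfrak d_A^k(s_1)$ and never use a description of its image.
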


\begin{proof}
The only multiple-incidence point is \(o\). The two prolonged branch
sections at \(o\) are
\(0 \qquad\text{and}\qquad \rho(x)\).
Since \(\rho\) is flat at \(x=0\), both have zero jet of every order
at \(o\).

At all remaining boundary points, only one branch is incident, so no
cross-branch transported equality is imposed.
\end{proof}

For \(x>0\), set
\(p_x=(x,0)\in V_0, \qquad q_x=(x,\delta(x))\in V_1\).
Then
\(|p_x-q_x|=\delta(x)\),
while
\(\sigma[s_1](p_x)=0, \qquad \sigma[s_1](q_x)=\rho(x)\).

\begin{theorem}[Formal compatibility without \(C^1\)-realisation]
\label{thm:exponential-no-realisation}
The section \(s_1\) does not belong to
\[
  \operatorname{Im}\operatorname{res}_1^1.
\]
Equivalently,
\[
  \mathcal A_A^1(s_1)
\]
is not a Whitney \(C^1\)-field.
\end{theorem}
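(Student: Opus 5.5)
The plan is to argue by contradiction through the mean value theorem, which turns the vanishing of \(\rho\) at \(p_x\) into a derivative bound that \(\rho\) cannot satisfy. Suppose \(s_1\in\operatorname{Im}\operatorname{res}_1^1\). By Theorem~\ref{thm:exact-whitney-realisation} (or directly from the definition), there is then a companion \(s_2\in C^1(\mathbb R^2)\) with \(s_2|_{U_2}=\sigma[s_1]\). The equivalence with the Whitney formulation follows from the same theorem, since \(s_1\in\mathfrak E_A^1\) by Proposition~\ref{prop:exponential-formal-compatibility}. It therefore suffices to show that no such \(C^1\) companion exists.

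First I record the values of the companion on the two ribbons. Since \(s_2=0\) on \(V_0\) and \(s_2=\rho(x)\) on \(V_1\), continuity of \(s_2\) on the closures gives \(s_2(0,0)=0\). Moreover \(j^1_o s_2=0\), because \(s_2\) agrees with a function flat at \(o\) along each branch closure. In particular \(\nabla s_2\) is continuous and vanishes at \(o\). Hence
\[
  \varepsilon(x):=\sup_{|z|\le 2x}|\nabla s_2(z)|\longrightarrow 0
  \qquad (x\downarrow 0).
\]

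Next comes the key estimate. The points \(p_x=(x,0)\) and \(q_x=(x,\delta(x))\) lie in \(V_0\) and \(V_1\) respectively, and the vertical segment joining them lies in the ball \(|z|\le 2x\) for small \(x\). The mean value theorem along this segment gives
\[
  \rho(x)=|s_2(q_x)-s_2(p_x)|\le \varepsilon(x)\,\delta(x)=\varepsilon(x)\,\rho(x)^2 .
\]
Dividing by \(\rho(x)>0\) yields \(1\le \varepsilon(x)\rho(x)\to0\), which is a contradiction.

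In the Whitney formulation, the same computation shows that the order-zero remainder satisfies
\[
  R_0(q_x,p_x)=F_0(q_x)-F_0(p_x)-\nabla F(p_x)\cdot(q_x-p_x)=\rho(x),
\]
using that \(\nabla F(p_x)=0\) on \(V_0\). This remainder is not \(o(|q_x-p_x|)=o(\rho(x)^2)\), so the cross-branch criterion of Proposition~\ref{prop:cross-branch-reduction} fails.

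The main point needing care is that the vertical segment need not lie in \(K_2\). This is exactly why I work with the genuine \(C^1\) extension \(s_2\), which is defined on all of \(\mathbb R^2\), rather than with the field \(\mathcal A_A^1(s_1)\). For the Whitney-field version, the direct remainder computation avoids the issue, since it only evaluates the field at \(p_x,q_x\in K_2\).
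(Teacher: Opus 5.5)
Your proof is correct and follows the paper's argument. You use the same pair \((p_x,q_x)\) to show that the Whitney remainder fails to be \(o(\delta(x))\) because \(\rho(x)/\delta(x)=1/\rho(x)\to\infty\), and you use the same mean-value contradiction for a hypothetical \(C^1\) companion \(s_2\). The paper only needs boundedness of \(Ds_2\) near \(o\), so your extra observation \(\nabla s_2(o)=0\) is a harmless strengthening rather than a needed step. Your explicit inclusion of the vanishing gradient term at \(p_x\) in the remainder matches the definition of \(R_0\) slightly more precisely than the paper's quotient.
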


\begin{proof}
For the pair \((p_x,q_x)\), the order-zero Whitney remainder satisfies
\[
  \frac{
    |\sigma[s_1](p_x)-\sigma[s_1](q_x)|
  }{
    |p_x-q_x|
  }
  =
  \frac{\rho(x)}{\delta(x)}
  =
  e^{1/(2x^2)}
  \longrightarrow\infty.
\]
The Whitney \(C^1\)-condition fails.

Equivalently, suppose that a compatible \(s_2\in C^1(\mathbb R^2)\) existed. On a closed ball centred at \(o\), the derivative \(Ds_2\)
would be bounded. The mean-value estimate would therefore give
\(\rho(x) = |s_2(q_x)-s_2(p_x)| \leq C\delta(x)\),
contradicting
\(\rho(x)/\delta(x)\longrightarrow\infty\).
\end{proof}

\begin{corollary}
\label{cor:exponential-residual-nonzero}
\[
  \mathfrak W_A^1\neq0.
\]
\end{corollary}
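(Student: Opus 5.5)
The corollary follows from the residual-obstruction sequence together with the two preceding results on the exponentially tangent example. By definition \(\mathfrak W_A^1=\mathfrak E_A^1/\operatorname{Im}\operatorname{res}_1^1\), so it suffices to exhibit one element of \(\mathfrak E_A^1\) that does not lie in \(\operatorname{Im}\operatorname{res}_1^1\). The natural candidate is the section \(s_1\) built from \(\rho\): it vanishes on \(\mathbb R^2_{(0)}\) and equals \(\rho(x)\) on \(\mathbb R^2_{(1)}\).

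First we check membership in the equaliser. By Proposition~\ref{prop:exponential-formal-compatibility}, \(\mathfrak d_A^k(s_1)=0\) for every \(k\), and in particular for \(k=1\). Since \(s_1\) is smooth, it lies in \(\Gamma^1(M_1,E_1)\), and so \(s_1\in\ker\mathfrak d_A^1=\mathfrak E_A^1\). Alternatively, one may use the inclusion \(\mathfrak E_A^\infty\subseteq\mathfrak E_A^1\), which is immediate from the description of \(\mathfrak E_A^\infty\) as an intersection.

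Next we check non-membership in the restriction image. Theorem~\ref{thm:exponential-no-realisation} states that \(s_1\notin\operatorname{Im}\operatorname{res}_1^1\). Hence the class \([s_1]\in\mathfrak E_A^1/\operatorname{Im}\operatorname{res}_1^1\) is nonzero, and \(\mathfrak W_A^1\neq0\). Equivalently, Corollary~\ref{cor:residual-whitney-obstruction} identifies \(\mathfrak W_A^1\) with \(\mathfrak E_A^1/\mathfrak X_A^1\), and \(s_1\notin\mathfrak X_A^1\) because \(\mathcal A_A^1(s_1)\) fails to be a Whitney \(C^1\)-field.

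There is no real obstacle. The only point requiring care is that the branch system and prolongations used to define \(\mathfrak d_A^1\) are exactly those of this example: the prolongations \(g_0,g_1\) with identity fibre transport, defined on all of \(M_2\). This ensures that the equaliser in which \(s_1\) was shown to lie is the same one appearing in the definition of \(\mathfrak W_A^1\).
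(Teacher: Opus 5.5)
Your proposal is correct and matches the argument the paper intends; the paper gives the corollary with no separate proof, as an immediate consequence of Proposition~\ref{prop:exponential-formal-compatibility} and Theorem~\ref{thm:exponential-no-realisation}. The first result places the smooth section \(s_1\) in \(\mathfrak E_A^1\), and the second shows \(s_1\notin\operatorname{Im}\operatorname{res}_1^1\), so \([s_1]\) is a nonzero class in \(\mathfrak W_A^1\).
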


The obstruction is Whitney-theoretic: the value gap \(\rho(x)=\sqrt{\delta(x)}\) decays too slowly relative to the branch separation \(\delta(x)\), despite agreement of all limiting jets.

\subsection{Conically tame multiple incidence}
\label{subsec:conical-tameness}

Set
\[
  B_2^{\mathrm{mult}}
  :=
  \left\{
    y\in B_2:
    |A_y|\geq2
  \right\}.
\]
Only this locus requires cross-branch control.

\begin{definition}[Clean multiple-incidence stratum]
\label{def:clean-multiple-incidence}
A connected submanifold
\[
  S\subseteq B_2^{\mathrm{mult}}
\]
is a clean multiple-incidence stratum if:

\begin{enumerate}[label=\textup{(\roman*)}]
  \item \(S\) is closed and embedded in \(M_2\);

  \item \(A_y=A_S\) is constant on \(S\);

  \item there exists an open neighbourhood
        \[
          N_S\subseteq N
        \]
        of \(S\) such that
        \[
          B_2^{\mathrm{mult}}\cap N_S=S,
        \]
        and
        \[
          U_2\cap N_S
          =
          \bigsqcup_{a\in A_S}
          \left(
            V_a\cap N_S
          \right);
        \]
        equivalently, no branch not incident to \(S\) enters \(N_S\);

  \item every branch prolongation indexed by \(A_S\) is defined on a
        neighbourhood of \(S\).
\end{enumerate}
\end{definition}

Single-incidence faces carried by branches in \(A_S\) may remain in
\(N_S\). A finite clean
multiple-incidence system is a finite disjoint union \(B_2^{\mathrm{mult}} = \bigsqcup_{\mu=1}^N S^{(\mu)}\) of clean strata.

Fix a clean stratum \(S\), a Riemannian metric on \(M_2\), and a
tubular map
\(\tau: \mathcal N_\varepsilon \subseteq NS \longrightarrow \mathcal U\)
\citep{Lee2012SmoothManifolds}. For
\(x=\tau(v)\notin S\),
write
\(\pi_S(x)=\pi_{NS}(v), \qquad r_S(x)=|v|, \qquad \omega_S(x)=\frac{v}{|v|}\),
where \(\pi_{NS}:NS\longrightarrow S\) is the normal-bundle projection.

For an open set \(\Omega\subseteq\mathbb S(NS)\) and a positive function
\(\eta:S\longrightarrow(0,\infty)\),
set
\[
  \mathcal C(\Omega,\eta)
  :=
  \left\{
    \tau(r\omega):
    \omega\in\Omega,\
    0<r<\eta(\pi_{NS}(\omega))
  \right\}.
\]
Write \(\Omega^-\Subset_S\Omega^+\) when \(\overline{\Omega^-} \subseteq \Omega^+\) in \(\mathbb S(NS)\).

\begin{definition}[Conical tameness]
\label{def:conical-tameness}
The clean stratum \(S\) is conically tame if there exist angular
regions
\[
  \Omega_a^-\Subset_S\Omega_a^+,
  \qquad
  a\in A_S,
\]
and a positive smooth function
\[
  \eta:S\longrightarrow(0,\infty)
\]
such that
\[
  \overline{\Omega_a^+}
  \cap
  \overline{\Omega_b^+}
  =
  \varnothing
  \qquad
  (a\neq b),
\]
and, after shrinking the tubular neighbourhood,
\[
  V_a
  \cap
  \tau
  \left\{
    0<|v|<\eta(\pi_{NS}(v))
  \right\}
  \subseteq
  \mathcal C(\Omega_a^-,\eta).
\]
A finite clean system is conically tame when each of its strata is.
\end{definition}

\begin{proposition}[Separated normal directions imply conical tameness]
\label{prop:separated-normal-directions-conical}
Let \(S\) be a compact clean multiple-incidence stratum. Suppose that, for
each \(a\in A_S\), there is a smooth unit normal field
\[
  \nu_a:S\longrightarrow\mathbb S(NS)
\]
and a constant \(\theta>0\) such that
\[
  d_{\mathbb S(NS)}\bigl(\nu_a(p),\nu_b(p)\bigr)\geq\theta
  \qquad
  (p\in S,\ a\neq b),
\]
and, after shrinking a tubular neighbourhood, every point
\(x=\tau(v)\in V_a\) sufficiently near \(S\) satisfies
\[
  d_{\mathbb S(NS)}
  \left(
    \frac{v}{|v|},
    \nu_a(\pi_{NS}(v))
  \right)
  <\frac{\theta}{4}.
\]
Then \(S\) is conically tame.
\end{proposition}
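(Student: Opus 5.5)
The plan is to take the angular regions to be metric balls on the unit normal sphere bundle around the prescribed directions. For \(a\in A_S\) set
\[
  \Omega_a^-:=\bigl\{\omega\in\mathbb S(NS):d_{\mathbb S(NS)}(\omega,\nu_a(\pi_{NS}\omega))<\theta/4\bigr\},
\]
\[
  \Omega_a^+:=\bigl\{\omega\in\mathbb S(NS):d_{\mathbb S(NS)}(\omega,\nu_a(\pi_{NS}\omega))<\theta/3\bigr\}.
\]
Here \(d_{\mathbb S(NS)}\) is read fibrewise, as in the hypothesis, via the metric on \(NS\) induced by the fixed Riemannian metric. Both sets are open because \(\nu_a\) and the fibrewise distance are continuous.

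Next I will check \(\Omega_a^-\Subset_S\Omega_a^+\). The closure of \(\Omega_a^-\) lies in the set where the distance is \(\le\theta/4\). This uses continuity of \((\omega)\mapsto d(\omega,\nu_a(\pi\omega))\) on the total space, and the fact that the sphere bundle over compact \(S\) is compact. That closed set is contained in \(\Omega_a^+\), since \(\theta/4<\theta/3\).

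For disjointness, \(\overline{\Omega_a^+}\) lies in the set where the distance is \(\le\theta/3\). If \(\omega\) lies in both \(\overline{\Omega_a^+}\) and \(\overline{\Omega_b^+}\) with \(a\ne b\), then they share the base point \(p=\pi\omega\). The triangle inequality in the fibre \(\mathbb S(N_pS)\) then gives \(d(\nu_a(p),\nu_b(p))\le 2\theta/3<\theta\), contradicting the separation hypothesis.

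For the radius function, the hypothesis says the angular estimate holds for points of \(V_a\) "sufficiently near \(S\)". By compactness of \(S\), I will turn this into a uniform radius: there is \(\varepsilon_0>0\) such that the estimate holds for all \(x=\tau(v)\in V_a\) with \(0<|v|<\varepsilon_0\), for every \(a\in A_S\) (a finite set). If the hypothesis is only pointwise-local along \(S\), I will cover \(S\) by finitely many neighbourhoods and take the minimum of the radii. I then take \(\eta\) to be a positive constant \(\eta\equiv\varepsilon_0\), also smaller than the tubular radius; a constant is smooth.

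Now let \(x=\tau(v)\in V_a\) with \(0<|v|<\eta(\pi_{NS}v)\). Then \(\omega=v/|v|\in\Omega_a^-\) and \(r=|v|<\eta\), so \(x\in\mathcal C(\Omega_a^-,\eta)\). This is exactly the required inclusion in Definition~\ref{def:conical-tameness}.

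The main point requiring care is this uniformity step, i.e.\ reading "sufficiently near \(S\)" as a uniform tubular radius. Compactness of \(S\) and finiteness of \(A_S\) make it routine. The remaining steps are triangle-inequality bookkeeping on the fibres of \(\mathbb S(NS)\).
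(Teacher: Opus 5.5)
Your proof is correct and follows the paper's argument: fibrewise angular balls about \(\nu_a\), the triangle inequality in the fibre to separate the closures \(\overline{\Omega_a^+}\), and compactness of \(S\) to get one positive radius function \(\eta\). The only difference is the radii (the paper uses \(\theta/3\) and \(2\theta/5\) where you use \(\theta/4\) and \(\theta/3\)); both choices work, because the hypothesis is a strict inequality and twice the outer radius is less than \(\theta\).
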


\begin{proof}
Let \(\Omega_a^-\) and \(\Omega_a^+\) be the fibrewise angular balls about
\(\nu_a\) of radii \(\theta/3\) and \(2\theta/5\), respectively. The
hypothesis on the branch directions puts \(V_a\) inside
\(\mathcal C(\Omega_a^-,\eta)\) after shrinking the radial function
\(\eta\). Moreover,
\(\overline{\Omega_a^-}\subset\Omega_a^+\),
while the triangle inequality and
\(2(2\theta/5)<\theta\) give
\(\overline{\Omega_a^+}\cap\overline{\Omega_b^+}=\varnothing \qquad(a\neq b)\).
Compactness of \(S\) allows the local radial bounds to be replaced by one
positive smooth function \(\eta\). These are precisely the conditions of
Definition~\ref{def:conical-tameness}.
\end{proof}

\begin{corollary}[Isolated transverse incidence]
\label{cor:isolated-transverse-conical}
Let \(S=\{y\}\) be an isolated clean multiple-incidence stratum. Suppose
that, in a smooth coordinate chart centred at \(y\), each incident branch is
contained near \(y\) in a \(C^1\) sector with a limiting tangent ray, and the
limiting tangent rays of distinct branches are distinct. Then the incidence is
conically tame.
\end{corollary}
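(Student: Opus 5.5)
The plan is to reduce the corollary to Proposition~\ref{prop:separated-normal-directions-conical}, with the compact stratum taken to be \(S=\{y\}\). Over a point, the normal bundle is just \(T_yM_2\cong\mathbb R^n\) in the centred chart. Its unit sphere bundle \(\mathbb S(NS)\) is the round sphere \(\mathbb S^{n-1}\), and we may take the tubular map \(\tau\) to be the chart inverse restricted to a small ball. Any Riemannian metric on \(M_2\) that agrees with the Euclidean one at \(y\) will do. Changing the metric only changes \(\tau\) by a diffeomorphism tangent to the identity at \(0\), and so does not alter limiting directions. Under these identifications \(r_S(x)=|x|\) and \(\omega_S(x)=x/|x|\).

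Next I define the data. For each \(a\in A_y\), let \(\nu_a\in\mathbb S^{n-1}\) be the limiting tangent ray of branch \(a\). A field over a point is automatically smooth. Since \(A_y\) is finite and the rays are pairwise distinct, \(\theta:=\min_{a\neq b}d_{\mathbb S^{n-1}}(\nu_a,\nu_b)>0\).

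The remaining hypothesis of Proposition~\ref{prop:separated-normal-directions-conical} is that, near \(y\), every \(x\in V_a\) satisfies \(d(x/|x|,\nu_a)<\theta/4\). Here I read ``contained near \(y\) in a \(C^1\) sector with limiting tangent ray \(\nu_a\)'' as asserting that \(x/|x|\to\nu_a\) as \(x\to y\) within \(V_a\). For a sector this means that its half-opening shrinks to zero at the apex. Choosing \(\eta>0\) small enough that this angular deviation is below \(\theta/4\) on \(V_a\cap\{0<|x|<\eta\}\), for each of the finitely many \(a\), gives the required estimate. One also has to pass between the chart-based and tubular-map angles. Since \(\tau\) has differential equal to the identity at \(0\), the two angles differ by \(o(1)\), so shrinking \(\eta\) further absorbs the discrepancy.

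The main obstacle is that this reading of the hypothesis is essential. If a ``sector'' were allowed a fixed positive opening angle around its ray, the limiting-direction condition would fail. In that case I would instead construct the angular regions directly for Definition~\ref{def:conical-tameness}. I would require the sector openings to be smaller than half the separation, which is what distinctness of the rays must be taken to guarantee. Then I would take \(\Omega_a^-\) to be a slightly enlarged sector opening and \(\Omega_a^+\) to be a yet larger ball with disjoint closures. Cleanness of \(\{y\}\), namely that only branches in \(A_y\) enter \(N_S\), ensures that no other branch interferes. With these choices Proposition~\ref{prop:separated-normal-directions-conical}, or the definition directly, yields conical tameness.
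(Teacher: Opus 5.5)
Your main argument is correct and is essentially the paper's proof. You take the limiting rays as the (trivially smooth) unit normal fields over the point, use finiteness of \(A_y\) to get a positive minimum separation \(\theta\), use convergence of directions within each branch to get the \(\theta/4\) bound near \(y\), and apply Proposition~\ref{prop:separated-normal-directions-conical}.

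The fallback paragraph is unnecessary, since the paper reads the hypothesis exactly as you do. It is also not sound as stated: distinctness of the rays does not force fixed sector openings to be narrower than half the separation. That variant would therefore need an extra assumption rather than following from the corollary's hypotheses.
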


\begin{proof}
The finitely many limiting unit tangent directions have a positive minimum
angular separation. Each branch direction lies in an arbitrarily small
angular neighbourhood of its limiting ray sufficiently near \(y\). Apply
Proposition~\ref{prop:separated-normal-directions-conical}.
\end{proof}

\begin{proposition}[The exponentially tangent system is not conically tame]
\label{prop:exponential-not-conically-tame}
The multiple-incidence point in the exponentially tangent ribbon system of
Section~\ref{subsec:exponentially-tangent-branches} is clean but not
conically tame.
\end{proposition}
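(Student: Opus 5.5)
We prove the two assertions separately: cleanness by checking Definition~\ref{def:clean-multiple-incidence} directly, and failure of conical tameness by contradiction.

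\emph{Cleanness.} We first identify the incidence set. The boundary \(B_2=\overline{U_2}\setminus U_2\) consists of the two boundary curves of each ribbon over \(x>0\), namely \(y=\pm w(x)\) and \(y=\delta(x)\pm w(x)\), together with the origin \(o\). Since \(\delta(x)-2w(x)>0\) for \(x>0\), the closures of \(V_0\) and \(V_1\) share only \(o\). Hence \(B_2^{\mathrm{mult}}=\{o\}\) with \(A_o=\{0,1\}\).

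Take \(S=\{o\}\). It is closed and embedded, and \(A_S\) is trivially constant. Take \(N_S\) to be a small ball around \(o\). Then \(B_2^{\mathrm{mult}}\cap N_S=\{o\}\), and the only branches are \(V_0,V_1\), both incident to \(o\), so condition (iii) holds. The prolongations \(g_0,g_1\) are defined on all of \(\mathbb R^2\), so condition (iv) holds. Therefore \(S\) is clean.

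\emph{Failure of conical tameness.} Suppose \(S\) were conically tame. Here \(NS=T_o\mathbb R^2\), the tubular map may be taken to be the identity near \(o\) (any other choice differs by a diffeomorphism tangent to the identity at \(o\), and so preserves limiting directions), and \(\mathbb S(NS)=S^1\). Conical tameness then supplies angular regions \(\Omega_0^+,\Omega_1^+\) with disjoint closures, together with \(\eta>0\), such that
\[
  V_a\cap\{0<|v|<\eta\}\subseteq\mathcal C(\Omega_a^-,\eta)\qquad(a=0,1).
\]
We test this on the points \(p_x=(x,0)\in V_0\) and \(q_x=(x,\delta(x))\in V_1\). Their unit directions satisfy
\[
  \frac{p_x}{|p_x|}=(1,0),
  \qquad
  \frac{q_x}{|q_x|}=\frac{(x,\delta(x))}{\sqrt{x^2+\delta(x)^2}}.
\]
Since \(\delta(x)/x\to0\) as \(x\downarrow0\), both directions converge to \((1,0)\). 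For small \(x\) both points lie within radius \(\eta\), so \((1,0)\in\overline{\Omega_0^-}\) and \((1,0)\in\overline{\Omega_1^-}\). Because \(\overline{\Omega_a^-}\subseteq\Omega_a^+\), this gives \((1,0)\in\Omega_0^+\cap\Omega_1^+\), contradicting disjointness of the closures.

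The only step requiring care is the choice of tubular map and metric. The argument holds for every choice because the limiting direction of both branches is the tangent ray \((1,0)\) for any metric. In the bounded region the differential of \(\tau\) at the zero section is the identity, so directions converge to the same limit, and disjointness of the \(\Omega^+\) sets is a metric-independent topological condition on \(S^1\).
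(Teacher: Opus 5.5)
Your proof is correct and follows the same route as the paper. Cleanness is checked directly against Definition~\ref{def:clean-multiple-incidence}, and conical tameness fails because \(p_x\in V_0\) and \(q_x\in V_1\) share the limiting direction \((1,0)\), which would then have to lie in both \(\overline{\Omega_0^+}\) and \(\overline{\Omega_1^+}\). Your remarks on the choice of metric and tubular map spell out what the paper leaves to its ``standard tubular coordinates'' and to Theorem~\ref{thm:conical-tameness-invariance}.
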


\begin{proof}
The multiple-incidence locus is the single point \(o=(0,0)\), all incident
branches are isolated from any further multiple-incidence stratum, and the
chosen branch prolongations are defined near \(o\); hence the stratum is
clean. In the standard tubular coordinates at \(o\), the centre curves of
the two ribbons contain
\(p_x=(x,0), \qquad q_x=(x,\delta(x)), \qquad x>0\).
Since \(\delta(x)/x\to0\), both normalized direction vectors converge to
\((1,0)\in\mathbb S^1\). If conical tameness held, the tails of the two
branches would lie in angular regions \(\Omega_0^-\) and \(\Omega_1^-\)
whose larger closures \(\overline{\Omega_0^+}\) and
\(\overline{\Omega_1^+}\) are disjoint. The common limiting direction
\((1,0)\) would belong to both closures, a contradiction.
\end{proof}

\begin{theorem}[Invariance of conical tameness]
\label{thm:conical-tameness-invariance}
Let \(S\subset M_2\) and \(S'\subset M_2'\) be clean multiple-incidence
strata. Suppose that a diffeomorphism of neighbourhood germs
\[
  F:(M_2,S)\longrightarrow(M_2',S')
\]
maps the incident branch germs bijectively,
\(F(V_a)=V'_{\sigma(a)}\). Then \(S\) is conically tame if and only if
\(S'\) is conically tame. In particular, conical tameness is independent of
the auxiliary Riemannian metric and tubular neighbourhood used to state
Definition~\ref{def:conical-tameness}.
\end{theorem}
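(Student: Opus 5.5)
By symmetry (apply the argument to \(F^{-1}\)), it suffices to show that if \(S\) is conically tame then so is \(S'\). The plan is to reduce conical tameness to an intrinsic condition on the unit normal directions of the branches. Then I show that this condition is carried along by the normal derivative \(dF|_S\), which induces a bundle isomorphism \(NF:NS\to NS'\) covering \(F|_S:S\to S'\).

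\emph{Step 1: limit cones.} For each branch \(a\) and \(p\in S\), let \(L_a(p)\subseteq\mathbb S(N_pS)\) be the set of limits of \(\omega_S(x_n)\) over sequences \(x_n\in V_a\) with \(x_n\to p\), and put \(L_a=\bigcup_p L_a(p)\). I would prove the following characterisation: \(S\) is conically tame if and only if the closures of the sets \(L_a\) in \(\mathbb S(NS)\) are pairwise disjoint, locally uniformly over \(S\). More precisely, the condition is that there are open sets \(\Omega_a^+\supseteq\overline{L_a}\) with pairwise disjoint closures such that the tail of \(V_a\) near \(S\) lies in a cone over a compactly contained subset \(\Omega_a^-\). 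The forward direction is immediate from Definition~\ref{def:conical-tameness}, since each \(L_a\) lies in \(\overline{\Omega_a^-}\). For the converse, choose disjoint neighbourhoods of the limit sets and then shrink \(\eta\) using a compactness or sequence argument, locally on \(S\). A smooth \(\eta\) is obtained by taking local minima and smoothing below them with a partition of unity.

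\emph{Step 2: transport of normal directions.} Fix tubular maps \(\tau,\tau'\). Taylor expansion of \(\tau'^{-1}\circ F\circ\tau\) near the zero section gives
\[
\tau'^{-1}F\tau(v)=NF(v)+O(|v|^2),
\]
locally uniformly. Since \(NF\) is fibrewise a linear isomorphism, \(|NF(v)|\geq c|v|\) on compacts. Hence \(\omega_{S'}(F(x))\) and \(\widehat{NF}(\omega_S(x))\) differ by \(O(r_S(x))\), where \(\widehat{NF}\) is the induced homeomorphism of sphere bundles \(\omega\mapsto NF\omega/|NF\omega|\). Consequently \(\widehat{NF}(L_a)=L'_{\sigma(a)}\). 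Because \(\widehat{NF}\) is a homeomorphism, it carries disjoint closed neighbourhoods and compact containments \(\Omega^-\Subset\Omega^+\) to the same structures. By Step 1, \(S'\) is then conically tame. The statement about the auxiliary metric and tubular neighbourhood is the special case \(F=\mathrm{id}\), where \(NF\) is the canonical identification of normal bundles, up to rescaling fibres.

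\emph{Main obstacle.} I expect the main difficulty to be the uniformity in Step 1 when \(S\) is noncompact. The cones are required to have a positive radial function \(\eta\) and fixed angular regions, whereas Step 2 only gives an angular error \(O(r)\) that is locally uniform in \(p\). My first approach would be to work on a locally finite cover of \(S\) by relatively compact pieces. On each piece I would shrink \(\eta\) until the angular error is smaller than the distance from \(\widehat{NF}(\overline{\Omega_a^-})\) to the complement of an intermediate region. I would then patch these choices with a partition of unity, taking \(\eta'\) below the local minima. This is the same device used in Proposition~\ref{prop:separated-normal-directions-conical}. The angular regions \(\Omega'^{\pm}\) themselves are the images \(\widehat{NF}(\Omega^{\pm})\), slightly enlarged in the minus case, so they require no patching.
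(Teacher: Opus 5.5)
Your argument is correct. Its transport mechanism is the paper's. The normal differential \(NF\) induces \(\widehat{NF}:\mathbb S(NS)\to\mathbb S(NS')\), and the normalized normal direction of \(\tau'^{-1}F\tau\) lies within \(O(|v|)\) of \(\widehat{NF}\) locally uniformly. The angular regions are carried by \(\widehat{NF}\), a new radius \(\eta'\) comes from local shrinking and a partition of unity, \(F^{-1}\) gives the converse, and \(F=\operatorname{id}\) gives independence of the auxiliary data. Your final paragraph is essentially the paper's entire proof.

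What the paper does not have is your Step~1, the characterisation by limit-direction sets. It is valid: the converse is exactly the compactness argument over compact pieces of \(S\) that you sketch, with normality of the metrizable space \(\mathbb S(NS)\) supplying \(\Omega_a^\pm\). Moreover, a diagonal argument shows that each \(L_a=\bigcup_p L_a(p)\) is already closed in \(\mathbb S(NS)\). Conical tameness is therefore equivalent to the purely pointwise condition \(L_a(p)\cap L_b(p)=\varnothing\) for all \(p\in S\) and \(a\neq b\), and your ``locally uniformly'' qualifier is not needed.

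Once Step~1 is proved, Step~2 needs only sequential convergence of directions and the fact that \(\widehat{NF}\) is a homeomorphism. The patching in your last paragraph then becomes redundant: \(\eta\) is constructed once, inside the characterisation, and metric independence is formal. The paper's direct route avoids the extra lemma but must do the quantitative shrinking inside the transport step.

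Two harmless imprecisions:
\begin{enumerate}[label=\textup{(\roman*)}]
  \item The expansion \(\tau'^{-1}F\tau(v)=NF(v)+O(|v|^2)\) holds for the fibre component in a local trivialisation, while the base point moves by \(O(|v|)\). This is fine because you measure the direction error in \(\mathbb S(NS')\).
  \item For \(F=\operatorname{id}\) with two metrics, \(NF\) is the canonical linear identification of the two orthogonal-complement models of \(TM_2|_S/TS\). This is generally not a fibre rescaling, but any fibrewise linear isomorphism serves.
\end{enumerate}
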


\begin{proof}
The differential of \(F\) induces a normal-bundle isomorphism \(NF:NS\to NS'\), hence after radial normalization a diffeomorphism \(\widehat{NF}:\mathbb S(NS)\to\mathbb S(NS')\). For tubular maps \(\tau,\tau'\), the normalized normal direction of \(\tau'^{-1}F(\tau(p,v))\) converges, as \(|v|\to0\), locally uniformly on compact subsets of \(S\times\mathbb S(NS)\), to \(\widehat{NF}(p,v/|v|)\).

If \(S\) is conically tame, the finitely many closed angular regions \(\overline{\Omega_a^+}\) are disjoint. Their images under \(\widehat{NF}\) therefore admit pairwise disjoint open neighbourhoods with smaller nested angular regions containing \(\widehat{NF}(\Omega_a^-)\). The preceding convergence places each branch germ \(F(V_a)\) in the corresponding target cone after shrinking the radial neighbourhood; a positive smooth radius below the local bounds is obtained by a locally finite partition of unity. Hence \(S'\) is conically tame. Applying the same argument to \(F^{-1}\) proves the converse. Taking \(F\) to be the identity germ gives independence of the auxiliary metric and tubular map.
\end{proof}

For every \(a\in A_S\), choose an angular cutoff \(\chi_a: \mathcal U\setminus S \longrightarrow[0,1]\) such that, near \(S\),
\(\chi_a=1 \quad \text{on }\mathcal C(\Omega_a^-,\eta)\),
\(\chi_a=0 \quad \text{on }\mathcal C(\Omega_b^-,\eta), \qquad b\neq a\),
and, in normal-bundle coordinates \((p,v)\), the estimates
are locally uniform along \(S\): for every compact set \(K\Subset S\) and every pair of multi-indices \(\alpha,\beta\), there exists \(C_{K,\alpha,\beta}>0\) such that
\[
  \left|
    \partial_p^\alpha
    \partial_v^\beta
    \chi_a(p,v)
  \right|
  \leq
  C_{K,\alpha,\beta}
  |v|^{-|\beta|}
\]
for \(p\in K\) and sufficiently small \(v\neq0\).
The construction and estimates are given in
Appendix~\ref{app:conical-cutoffs}.

\subsection{Cutoff-flatness}
\label{subsec:cutoff-flatness}

For
\(r\in\Gamma^k(\mathcal U,F)\),
the notation \(j_S^k r=0\) means vanishing of the full ambient order-\(k\) jet along \(S\).
Equivalently, in every adapted coordinate chart and local bundle
frame,
\(\partial_p^\alpha\partial_v^\beta r(p,0)=0 \qquad (|\alpha|+|\beta|\leq k)\).
For such an \(r\), define
\[
  [\chi_a r]_0(x)
  :=
  \begin{cases}
    \chi_a(x)r(x),
    &x\notin S,\\
    0,
    &x\in S.
  \end{cases}
\]

\begin{theorem}[Finite-order cutoff-flatness]
\label{thm:finite-order-cutoff-flatness}
If
\[
  r\in\Gamma^k(\mathcal U,F),
  \qquad
  j_S^k r=0,
\]
then
\[
  [\chi_a r]_0\in\Gamma^k(\mathcal U,F),
  \qquad
  j_S^k[\chi_a r]_0=0.
\]
Moreover, after shrinking the neighbourhood of \(S\),
\[
  [\chi_a r]_0=r
  \quad
  \text{on }V_a,
\]
and
\[
  [\chi_a r]_0=0
  \quad
  \text{on }V_b,
  \qquad
  b\neq a.
\]
\end{theorem}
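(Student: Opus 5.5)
The plan is to work locally in adapted normal-bundle coordinates \((p,v)\) near a compact piece \(K\Subset S\), in a local bundle frame. There the statement reduces to a scalar estimate: a \(C^k\) function \(r\) whose derivatives of total order at most \(k\) all vanish on \(\{v=0\}\), multiplied by a cutoff obeying the symbol-type bounds \(|\partial_p^\alpha\partial_v^\beta\chi_a|\le C|v|^{-|\beta|}\). Off \(S\), the product \(\chi_a r\) is clearly \(C^k\), because \(\chi_a\) is smooth on \(\mathcal U\setminus S\). So the whole content lies at points of \(S\), where we must show that every derivative of order at most \(k\) extends continuously by zero. Vanishing of the jet along \(S\) then follows, and differentiability at \(S\) follows from the standard lemma: a function that is \(C^k\) off a closed set, continuous, and whose derivatives up to order \(k\) extend continuously by \(0\) is \(C^k\). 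One can get this lemma by induction on order via the mean value theorem along segments, or by noting that \(\{v=0\}\) has empty interior and that the relevant derivative limits exist.

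The key estimate comes from Taylor's theorem with remainder, applied in \(v\) with \(p\) as a parameter. Since \(\partial_p^\alpha\partial_v^\gamma r(p,0)=0\) for all \(|\alpha|+|\gamma|\le k\), the function \(\partial_p^\alpha\partial_v^{\gamma}r\) with \(|\alpha|+|\gamma|=j\le k\) is \(C^{k-j}\) with vanishing \((k-j)\)-jet on \(S\). Hence it is \(o(|v|^{k-j})\) uniformly for \(p\in K\) as \(v\to0\). This uses the integral form of the remainder together with uniform continuity of the top-order derivatives on compacts. Next expand \(\partial_p^{\alpha}\partial_v^{\beta}(\chi_a r)\) by the Leibniz rule into terms \(\partial_p^{\alpha_1}\partial_v^{\beta_1}\chi_a\cdot\partial_p^{\alpha_2}\partial_v^{\beta_2}r\) with \(|\alpha|+|\beta|\le k\). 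Each term is bounded by
\[
C|v|^{-|\beta_1|}\,o\bigl(|v|^{k-|\alpha_2|-|\beta_2|}\bigr)=o\bigl(|v|^{k-|\alpha|-|\beta|+|\alpha_1|}\bigr)=o(1),
\]
because \(|\beta_1|+|\alpha_2|+|\beta_2|\le|\alpha|+|\beta|\le k\). Thus every derivative of order at most \(k\) of \(\chi_a r\) tends to \(0\) at \(S\), locally uniformly. Extending by zero therefore gives \([\chi_a r]_0\in\Gamma^k\) with \(j_S^k[\chi_a r]_0=0\). Frame and coordinate changes preserve both conclusions, since they act by smooth triangular jet automorphisms.

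The support statements follow directly from the choice of \(\chi_a\). After shrinking the tubular neighbourhood so that conical tameness (Definition~\ref{def:conical-tameness}) puts \(V_b\) inside \(\mathcal C(\Omega_b^-,\eta)\) for every \(b\in A_S\), we have \(\chi_a=1\) on \(V_a\) and \(\chi_a=0\) on \(V_b\) for \(b\neq a\). Since the \(V_b\) avoid \(S\), the zero extension agrees there with \(\chi_a r\). This gives \([\chi_a r]_0=r\) on \(V_a\) and \(0\) on \(V_b\).

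The main obstacle is the uniform little-\(o\) Taylor bound for mixed derivatives when \(r\) is only \(C^k\). The parameter \(p\) must be handled uniformly, and the exponent bookkeeping must show that the \(|v|^{-|\beta_1|}\) losses from the cutoff are always absorbed. I would first verify the top-order case \(|\alpha|+|\beta|=k\) explicitly; there the bound is just \(o(1)\) times bounded symbol factors. I would then confirm that the passage from continuity of the extended derivatives to genuine \(C^k\) regularity across \(S\) is valid at each order, using the mean value theorem on segments that cross \(S\).
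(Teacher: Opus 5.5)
Your proposal is correct and matches the paper's proof (Proposition~\ref{prop:quantitative-cutoff-flatness} in Appendix~\ref{app:conical-cutoffs}). Both arguments use the symbol-type bounds on \(\chi_a\), the uniform normal Taylor decay of \(r\), and the same Leibniz exponent count giving \(o(|v|^{k-|\alpha|-|\beta|})\), then extend by zero inductively in adapted coordinates and read off the support identities from conical tameness on \(\{|v|<\eta(p)\}\). One point to tighten: ``\(\{v=0\}\) has empty interior'' does not by itself upgrade continuously extendable derivatives to \(C^k\), since Cantor-type sets also have empty interior. Rely instead on the mean value theorem along segments, which in adapted coordinates either lie in \(S\) or meet it only at their initial point.
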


\begin{proof}
This is
Proposition~\ref{prop:quantitative-cutoff-flatness}
of Appendix~\ref{app:conical-cutoffs}.
\end{proof}

\begin{corollary}[Smooth cutoff-flatness]
\label{cor:smooth-cutoff-flatness}
If
\[
  r\in\Gamma^\infty(\mathcal U,F)
\]
is flat along \(S\), then
\[
  [\chi_a r]_0
\]
is smooth and flat along \(S\).
\end{corollary}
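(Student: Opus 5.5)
The plan is to apply Theorem~\ref{thm:finite-order-cutoff-flatness} at every finite order and then pass to the limit. Let \(r\in\Gamma^\infty(\mathcal U,F)\) be flat along \(S\). For every \(k\in\mathbb N_0\) we have \(r\in\Gamma^k(\mathcal U,F)\) and \(j_S^k r=0\). The finite-order theorem then gives \([\chi_a r]_0\in\Gamma^k(\mathcal U,F)\) with \(j_S^k[\chi_a r]_0=0\).

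The first step is to observe that the section \([\chi_a r]_0\) does not depend on \(k\). It is defined pointwise, as \(\chi_a r\) off \(S\) and \(0\) on \(S\), with the cutoff \(\chi_a\) fixed once and for all. It is therefore one section lying in \(\bigcap_k\Gamma^k(\mathcal U,F)=\Gamma^\infty(\mathcal U,F)\). Smoothness on \(\mathcal U\setminus S\) is immediate, since \(\chi_a\) is smooth there. Along \(S\), each derivative of every order exists and is continuous, by the order-\(k\) statement for \(k\) at least that order.

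The second step is flatness. For every \(k\) the full ambient jet of order \(k\) vanishes along \(S\). Hence every partial derivative \(\partial_p^\alpha\partial_v^\beta[\chi_a r]_0(p,0)\) vanishes, taking \(k=|\alpha|+|\beta|\), and \([\chi_a r]_0\) is flat along \(S\).

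The one point to check is the neighbourhood shrinking in Theorem~\ref{thm:finite-order-cutoff-flatness}. If the shrinking depended on \(k\), the intersection of these neighbourhoods over all \(k\) might collapse. I expect the shrinking is needed only for the branch identities \([\chi_a r]_0=r\) on \(V_a\) and \([\chi_a r]_0=0\) on \(V_b\). Those identities come from the cone containments of Definition~\ref{def:conical-tameness}, which do not depend on \(k\). The regularity and flatness assertions hold on the whole domain where the locally uniform cutoff estimates are valid, and that domain is also independent of \(k\). So one neighbourhood serves for all orders, and the corollary follows. If the appendix estimates (Proposition~\ref{prop:quantitative-cutoff-flatness}) were only stated with \(k\)-dependent neighbourhoods, I would argue directly instead. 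Expand \(\partial_p^\alpha\partial_v^\beta(\chi_a r)\) by the Leibniz rule. Bound the cutoff factors by \(C|v|^{-|\beta'|}\) and the derivatives of \(r\) by \(o(|v|^N)\) for every \(N\), using Taylor's theorem and flatness. Every derivative then tends to zero at \(S\), locally uniformly, which gives smoothness and flatness directly.
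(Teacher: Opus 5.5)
Your proposal is correct and matches the paper's proof. The paper also just applies Theorem~\ref{thm:finite-order-cutoff-flatness} at every finite order to the single, \(k\)-independent section \([\chi_a r]_0\). Your check on neighbourhood shrinking is harmless but not needed. In that theorem the shrinking serves only the branch identities. In any case \([\chi_a r]_0\) is automatically smooth off \(S\), and regularity at points of \(S\) is local, so no uniformity in \(k\) is required.
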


\begin{proof}
Apply
Theorem~\ref{thm:finite-order-cutoff-flatness}
at every finite order.
\end{proof}

\subsection{Conically tame realisation}
\label{subsec:conically-tame-realisation}

Assume that \(B_2^{\mathrm{mult}} = \bigsqcup_{\mu=1}^N S^{(\mu)}\) is a finite clean conically tame system.

Fix
\(s_1\in\mathfrak E_A^k\).
Let \(S=S^{(\mu)}\) be one multiple-incidence stratum, and choose a reference branch
\(a_0\in A_S\).
After shrinking to a common neighbourhood on which every relevant
prolonged branch section is defined, set
\[
  r_a[s_1]
  :=
  \overline{\sigma}_a[s_1]
  -
  \overline{\sigma}_{a_0}[s_1],
  \qquad
  a\in A_S.
\]
Since \(s_1\in\mathfrak E_A^k\),
\(j_S^k r_a[s_1]=0\).

Define
\[
  s_{2,S}^{\mathrm{loc}}[s_1]
  :=
  \overline{\sigma}_{a_0}[s_1]
  +
  \sum_{a\in A_S\setminus\{a_0\}}
  [\chi_a r_a[s_1]]_0.
\]

\begin{proposition}[Local conical companion]
\label{prop:local-conical-companion}
After shrinking the neighbourhood of \(S\),
\[
  s_{2,S}^{\mathrm{loc}}[s_1]
  \in
  \Gamma^k(\mathcal U_S,E_2)
\]
and
\[
  s_{2,S}^{\mathrm{loc}}[s_1]
  =
  \sigma[s_1]
  \qquad
  \text{on }U_2\cap\mathcal U_S.
\]
\end{proposition}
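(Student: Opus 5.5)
The plan is to verify regularity and the branchwise identity separately, using Theorem~\ref{thm:finite-order-cutoff-flatness} for each correction term. First, regularity. The reference term \(\overline{\sigma}_{a_0}[s_1]=\Psi_{a_0}(g_{a_0}^*s_1)\) is \(C^k\) on \(\Omega_{a_0}\), because \(s_1\in\Gamma^k\) and the prolongation data \((g_{a_0},\Psi_{a_0})\) are smooth. By clause (iv) of Definition~\ref{def:clean-multiple-incidence}, every \(\Omega_a\) with \(a\in A_S\) contains a neighbourhood of \(S\); we shrink \(\mathcal U_S\) to lie in their finite intersection and inside the tubular neighbourhood on which the cutoffs \(\chi_a\) are defined. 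Each difference \(r_a[s_1]\) is then \(C^k\) on \(\mathcal U_S\). Since \(s_1\in\mathfrak E_A^k\), the transported jets \(T_a^k(s_1)(y)=j^k_y\overline\sigma_a[s_1]\) agree for \(y\in S\), so \(j^k_S r_a[s_1]=0\). Theorem~\ref{thm:finite-order-cutoff-flatness} therefore gives \([\chi_a r_a[s_1]]_0\in\Gamma^k(\mathcal U_S,E_2)\). Summing finitely many such terms shows \(s^{\mathrm{loc}}_{2,S}[s_1]\in\Gamma^k\).

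Second, the identity on \(U_2\cap\mathcal U_S\). By clause (iii) of cleanness, after shrinking we have \(U_2\cap\mathcal U_S=\bigsqcup_{b\in A_S}(V_b\cap\mathcal U_S)\), so it is enough to check the identity on each \(V_b\). Theorem~\ref{thm:finite-order-cutoff-flatness} says that \([\chi_a r_a]_0\) equals \(r_a\) on \(V_a\) and vanishes on \(V_b\) for \(b\neq a\); we shrink further so that this holds simultaneously for the finitely many \(a\).

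On \(V_{a_0}\) every correction term vanishes. The companion is then \(\overline\sigma_{a_0}[s_1]\), which coincides with \(\sigma[s_1]\) on \(V_{a_0}\cap\Omega_{a_0}\) by Definition~\ref{def:branchwise-prolongation}. On \(V_b\) with \(b\neq a_0\), only the term with \(a=b\) survives. The companion is then \(\overline\sigma_{a_0}+(\overline\sigma_b-\overline\sigma_{a_0})=\overline\sigma_b[s_1]\), and this equals \(\sigma[s_1]\) on \(V_b\).

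The main obstacle is the bookkeeping of the neighbourhood shrinkings, rather than any analysis. We must find one neighbourhood \(\mathcal U_S\) of \(S\) that satisfies four conditions at once: all prolongations are defined on it, the cone containments \(V_a\subseteq\mathcal C(\Omega_a^-,\eta)\) from Definition~\ref{def:conical-tameness} hold, no non-incident branch enters it, and the cutoff identities hold. Each of these is an open neighbourhood condition of \(S\) in \(M_2\), and there are only finitely many of them. Their intersection is therefore again an open neighbourhood of \(S\), and taking this intersection handles the obstacle. When \(S\) is noncompact, the radial bound \(\eta\) is already a positive function on \(S\), so the resulting tube \(\{0<|v|<\eta\}\) together with \(S\) still forms a neighbourhood of \(S\).
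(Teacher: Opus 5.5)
Your proposal is correct and follows the paper's proof: Theorem~\ref{thm:finite-order-cutoff-flatness} gives \(C^k\)-regularity of each correction (with \(j^k_S r_a[s_1]=0\) coming from \(s_1\in\mathfrak E_A^k\)). The identity is then checked separately on \(V_{a_0}\) and on each \(V_b\), \(b\neq a_0\), with the branch-isolation clause of Definition~\ref{def:clean-multiple-incidence} ensuring these are all the pieces of \(U_2\cap\mathcal U_S\). Your explicit handling of the finitely many neighbourhood shrinkings only spells out what the paper leaves implicit.
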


\begin{proof}
Finite-order cutoff-flatness gives \(C^k\)-regularity of every
correction.

On the reference branch \(V_{a_0}\), all correction terms vanish, so
\(s_{2,S}^{\mathrm{loc}}[s_1] = \overline{\sigma}_{a_0}[s_1] = \sigma[s_1]\).

On a branch
\(V_b\cap\mathcal U_S, \qquad b\in A_S\setminus\{a_0\}\),
the cutoff \(\chi_b\) equals one and all other correction cutoffs
vanish. Hence
\[
  s_{2,S}^{\mathrm{loc}}[s_1]
  =
  \overline{\sigma}_{a_0}[s_1]
  +
  r_b[s_1]
  =
  \overline{\sigma}_b[s_1]
  =
  \sigma[s_1].
\]
\end{proof}

The branch-isolation condition in
Definition~\ref{def:clean-multiple-incidence}
ensures that these are all branch pieces meeting
\(U_2\cap\mathcal U_S\). At a single-incidence point away from the
multiple-incidence strata, the unique prolonged branch section is
already a local companion. Hence every point of \(B_2\) has a
neighbourhood carrying a \(C^k\)-section agreeing with
\(\sigma[s_1]\) on its intersection with \(U_2\).

Choose a locally finite cover \(\{\mathcal U_i\}_{i\in I}\) of a neighbourhood \(\mathcal U\) of \(B_2\), local companions
\(t_i[s_1]\in\Gamma^k(\mathcal U_i,E_2)\),
and a smooth partition of unity \(\{\eta_i\}_{i\in I}\) subordinate to that cover. Set
\(s_2^{\mathrm{loc}}[s_1] := \sum_{i\in I}\eta_i t_i[s_1]\).
Every local companion agrees with \(\sigma[s_1]\) on the overlap, so
\(s_2^{\mathrm{loc}}[s_1] = \sigma[s_1] \qquad \text{on }U_2\cap\mathcal U\).

Choose \(\zeta\in C^\infty(M_2,[0,1])\) such that \(\zeta=1\) on a neighbourhood of \(B_2\) and
\(\operatorname{supp}\zeta\subseteq\mathcal U\).
Since \(1-\zeta\) vanishes near the boundary \(B_2\), the section \((1-\zeta)\sigma[s_1]\) extends by zero from \(U_2\) to a \(C^k\)-section of \(E_2\) on
\(M_2\). Since \(\operatorname{supp}\zeta\subseteq\mathcal U\), the
section \(\zeta s_2^{\mathrm{loc}}[s_1]\) also extends by zero from \(\mathcal U\) to \(M_2\). Their sum is a
global companion.

\begin{theorem}[Conically tame finite-order realisation]
\label{thm:conically-tame-finite-realisation}
For every
\[
  k\in\mathbb N_0,
\]
\[
  \boxed{
  \operatorname{Im}\operatorname{res}_1^k
  =
  \ker\mathfrak d_A^k.
  }
\]
\end{theorem}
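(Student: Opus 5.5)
The plan is to prove the two inclusions separately. Most of the work has already been done in the construction immediately preceding the statement.

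The inclusion \(\operatorname{Im}\operatorname{res}_1^k\subseteq\ker\mathfrak d_A^k\) is exactly Theorem~\ref{thm:universal-jet-obstruction}, so nothing new is needed there. For the reverse inclusion, fix \(s_1\in\ker\mathfrak d_A^k=\mathfrak E_A^k\). By Proposition~\ref{prop:exact-prolongation}, it suffices to produce \(s_2\in\Gamma^k(M_2,E_2)\) with \(s_2|_{U_2}=\sigma[s_1]=\Phi\circ s_1\circ f^{-1}\). Then \((s_1,s_2)\) is a compatible pair.

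The construction proceeds in three local-to-global steps.

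\emph{Multiple incidence.} On each clean stratum \(S=S^{(\mu)}\), the equaliser condition gives \(j_S^k r_a[s_1]=0\) for the branch differences \(r_a[s_1]=\overline\sigma_a[s_1]-\overline\sigma_{a_0}[s_1]\). Theorem~\ref{thm:finite-order-cutoff-flatness} then makes each correction \([\chi_a r_a[s_1]]_0\) a \(C^k\) section. Proposition~\ref{prop:local-conical-companion} shows that \(s^{\mathrm{loc}}_{2,S}[s_1]\) is \(C^k\) on a neighbourhood \(\mathcal U_S\) and agrees with \(\sigma[s_1]\) on \(U_2\cap\mathcal U_S\). Cleanness (Definition~\ref{def:clean-multiple-incidence}(iii)) guarantees that no other branch enters \(\mathcal U_S\), so this agreement holds on all of \(U_2\cap\mathcal U_S\).

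\emph{Single incidence.} At points of \(B_2\setminus B_2^{\mathrm{mult}}\) I take the unique prolonged branch section \(\overline\sigma_a[s_1]\) on a small neighbourhood. I shrink the neighbourhood so that it avoids the closed set \(B_2^{\mathrm{mult}}\) and all non-incident branch closures. This is possible because the branch system is finite and each \(S_b\) is closed, so a point not in \(S_b\) has a neighbourhood missing \(\overline{V_b}\). On that neighbourhood \(\overline\sigma_a[s_1]=\sigma[s_1]\) on \(U_2\) by Definition~\ref{def:branchwise-prolongation}.

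\emph{Gluing.} The local companions \(t_i[s_1]\) are patched by a partition of unity to \(s_2^{\mathrm{loc}}[s_1]\). Since every \(t_i\) agrees with \(\sigma[s_1]\) on \(U_2\) and \(\sum\eta_i=1\), the patched section agrees with \(\sigma[s_1]\) on \(U_2\cap\mathcal U\). I then set \(s_2=\zeta s_2^{\mathrm{loc}}[s_1]+(1-\zeta)\sigma[s_1]\) with the cutoff \(\zeta\) described before the statement. Each term extends by zero and is \(C^k\): the first because \(\operatorname{supp}\zeta\subseteq\mathcal U\), the second because \(1-\zeta\) vanishes near \(B_2\). Hence \(s_2\) is \(C^k\) on \(M_2\), and on \(U_2\) (where \(\zeta s_2^{\mathrm{loc}}=\zeta\sigma\)) it equals \(\sigma[s_1]\).

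I expect the main obstacle to be the regularity of the correction terms at \(S\), where the homogeneous cutoffs \(\chi_a\) have derivatives blowing up like \(|v|^{-|\beta|}\). This is exactly where conical tameness enters, through the cutoff estimates of Appendix~\ref{app:conical-cutoffs}. The flatness \(j^k_S r_a=0\) gives \(|\partial^\gamma r_a|=o(|v|^{k-|\gamma|})\) by Taylor's theorem, so Leibniz products of the form \(\partial^\beta\chi_a\,\partial^\gamma r_a\) are \(o(|v|^{k-|\beta|-|\gamma|})\). This yields continuity and vanishing of all derivatives up to order \(k\) at \(S\). The statement is already packaged as Theorem~\ref{thm:finite-order-cutoff-flatness}, so it can be invoked directly.

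A smaller point needs checking: on \(V_b\) one has \(\chi_b=1\) and \(\chi_a=0\) for \(a\neq b\). This holds because \(V_b\) lies in \(\mathcal C(\Omega_b^-,\eta)\) after shrinking the neighbourhood of \(S\), which is the second half of the same theorem.
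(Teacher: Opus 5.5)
Your proposal is correct and follows the paper's proof: necessity is Theorem~\ref{thm:universal-jet-obstruction}, and sufficiency is built from four pieces, namely the local conical companions of Proposition~\ref{prop:local-conical-companion}, single-branch companions at single-incidence points, partition-of-unity patching, and the final cutoff \(\zeta\). One wording fix: at a single-incidence point \(y\), a neighbourhood missing \(\overline{V_b}\) for \(b\notin A_y\) exists because \(\overline{V_b}\) is closed and \(y\in B_2\setminus S_b\) implies \(y\notin\overline{V_b}\), not because \(S_b\) is closed.
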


\begin{proof}
The inclusion from left to right is
Theorem~\ref{thm:universal-jet-obstruction}.
For the converse, the construction above gives, for every
\(s_1\in\ker\mathfrak d_A^k\),
a section \(s_2\in\Gamma^k(M_2,E_2)\) satisfying
\(s_2|_{U_2} = \Phi\circ s_1\circ f^{-1}\).
\((s_1,s_2)\) is a compatible global section.
\end{proof}

\begin{corollary}
\label{cor:tame-residual-vanishing}
Under the finite clean conically tame hypotheses,
\[
  \mathfrak W_A^k=0
  \qquad
  \text{for every }k\in\mathbb N_0.
\]
\end{corollary}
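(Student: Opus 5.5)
The statement to prove is Corollary~\ref{cor:tame-residual-vanishing}: under the finite clean conically tame hypotheses, \(\mathfrak W_A^k=0\) for every \(k\in\mathbb N_0\). The plan is to read this off directly from Theorem~\ref{thm:conically-tame-finite-realisation}, using the definition of the residual obstruction.

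Fix \(k\in\mathbb N_0\). By definition,
\[
  \mathfrak W_A^k
  =
  \mathfrak E_A^k/\mathfrak R_A^k,
  \qquad
  \mathfrak E_A^k=\ker\mathfrak d_A^k,
  \qquad
  \mathfrak R_A^k=\operatorname{Im}\operatorname{res}_1^k.
\]
Under the finite clean conically tame hypotheses, Theorem~\ref{thm:conically-tame-finite-realisation} gives
\(\operatorname{Im}\operatorname{res}_1^k=\ker\mathfrak d_A^k\). The numerator and denominator are therefore the same subspace of \(\Gamma^k(M_1,E_1)\), so the quotient vanishes.

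Two equivalent restatements follow at once. By Corollary~\ref{cor:residual-whitney-obstruction}, \(\mathfrak X_A^k=\mathfrak E_A^k\): every formally compatible section has a Whitney assembled field. In the short exact sequence of Proposition~\ref{prop:residual-obstruction-sequence}, the map \(\mathfrak C_A^k\to\mathfrak F_A^k\cong\operatorname{Im}\mathfrak d_A^k\) is an isomorphism.

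The only point needing care is that the hypotheses hold uniformly in \(k\). Conical tameness, cleanness and finiteness are conditions on the branch geometry alone and make no reference to \(k\). The cutoffs \(\chi_a\) satisfy the symbol estimates for all multi-indices. Consequently the construction preceding Theorem~\ref{thm:conically-tame-finite-realisation} applies at every finite order.
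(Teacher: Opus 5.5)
Your proof is correct and matches the paper's argument: the paper states the corollary immediately after Theorem~\ref{thm:conically-tame-finite-realisation}, and it follows at once because $\mathfrak W_A^k=\ker\mathfrak d_A^k/\operatorname{Im}\operatorname{res}_1^k$ and that theorem identifies the two subspaces for every $k\in\mathbb N_0$. Your additional restatements via $\mathfrak X_A^k=\mathfrak E_A^k$ and the residual-obstruction sequence are valid but not needed.
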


The smooth case uses the same patching with Corollary~\ref{cor:smooth-cutoff-flatness}.

\begin{theorem}[Conically tame smooth realisation]
\label{thm:conically-tame-smooth-realisation}
Under the finite clean conically tame hypotheses,
\[
  \boxed{
  \operatorname{Im}\operatorname{res}_1^\infty
  =
  \mathfrak E_A^\infty
  =
  \bigcap_{k\geq0}
  \ker\mathfrak d_A^k.
  }
\]
\end{theorem}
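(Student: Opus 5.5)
The plan is to mirror the proof of Theorem~\ref{thm:conically-tame-finite-realisation}, replacing each finite-order ingredient by its smooth counterpart. The equality \(\mathfrak E_A^\infty=\bigcap_{k\geq0}\ker\mathfrak d_A^k\) holds by definition, once the intersection is read inside \(\Gamma^\infty(M_1,E_1)\). The inclusion \(\operatorname{Im}\operatorname{res}_1^\infty\subseteq\mathfrak E_A^\infty\) is Corollary~\ref{cor:smooth-universal-necessity}. It therefore remains to construct, for each \(s_1\in\mathfrak E_A^\infty\), a smooth companion \(s_2\in\Gamma^\infty(M_2,E_2)\) with \(s_2|_{U_2}=\Phi\circ s_1\circ f^{-1}\).

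\textbf{Step 1 (prolonged branch sections are smooth).} Fix \(s_1\in\mathfrak E_A^\infty\). Because the prolongation data \((g_a,\Psi_a)\) are smooth, each prolonged branch section \(\overline\sigma_a[s_1]=\Psi_a(g_a^*s_1)\) is smooth on \(\Omega_a\).

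\textbf{Step 2 (flat differences).} On a clean stratum \(S\) with reference branch \(a_0\), the difference \(r_a[s_1]=\overline\sigma_a[s_1]-\overline\sigma_{a_0}[s_1]\) is smooth. Vanishing of \(\mathfrak d_A^k(s_1)\) for every \(k\) gives \(j_S^kr_a[s_1]=0\) for all \(k\), so \(r_a[s_1]\) is flat along \(S\). One point needs care: the equaliser is defined through the stratification \(\Lambda\), whereas the cutoff construction uses the clean strata. Since \(A_y=A_S\) is constant on \(S\), the diagonal condition on every stratum \(S_\lambda\subseteq S\) yields equality of all transported jets at each \(y\in S\). Hence the order-\(k\) jet of \(r_a[s_1]\) vanishes pointwise along \(S\), which is exactly \(j^k_S r_a=0\).

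\textbf{Step 3 (local smooth companion).} By Corollary~\ref{cor:smooth-cutoff-flatness}, each correction \([\chi_a r_a[s_1]]_0\) is smooth and flat along \(S\). The local companion \(s_{2,S}^{\mathrm{loc}}[s_1]\) is then smooth near \(S\), and the computation in Proposition~\ref{prop:local-conical-companion} applies verbatim to show that it agrees with \(\sigma[s_1]\) on \(U_2\cap\mathcal U_S\). At single-incidence points the unique prolonged branch section is already a smooth local companion.

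\textbf{Step 4 (global patching).} Patch the local companions with a smooth partition of unity subordinate to a locally finite cover of a neighbourhood of \(B_2\), and call the result \(s_2^{\mathrm{loc}}[s_1]\). The patched section is smooth and still equals \(\sigma[s_1]\) on the overlap with \(U_2\), because each summand does. Then set
\[
  s_2=(1-\zeta)\sigma[s_1]+\zeta s_2^{\mathrm{loc}}[s_1].
\]
This is smooth on \(M_2\): the first term vanishes near \(B_2\) and so extends by zero, while the second extends by zero from \(\mathcal U\). The section \(s_2\) restricts to \(\sigma[s_1]\) on \(U_2\), so \((s_1,s_2)\) is a compatible smooth pair.

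\textbf{Main obstacle.} The only genuinely new input is the smooth cutoff-flatness. The danger is that the neighbourhood on which Theorem~\ref{thm:finite-order-cutoff-flatness} holds might shrink as \(k\) grows, which would prevent a single smooth correction from serving at every order. I would address this by noting that the cutoff \(\chi_a\) and the branch-containment regions \(\mathcal C(\Omega_a^\pm,\eta)\) are chosen once, independently of \(k\). Only the derivative estimates of Appendix~\ref{app:conical-cutoffs} depend on \(k\). Smoothness of \([\chi_ar]_0\) is then checked order by order on one fixed neighbourhood, which is what Corollary~\ref{cor:smooth-cutoff-flatness} provides. The support properties \([\chi_a r]_0=r\) on \(V_a\) and \([\chi_a r]_0=0\) on \(V_b\) are likewise order-independent.
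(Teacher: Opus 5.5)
Your proposal is correct and follows the paper's own argument: necessity comes from Corollary~\ref{cor:smooth-universal-necessity}, flatness of the branch differences $r_a[s_1]$ along each clean stratum comes from the vanishing of every $\mathfrak d_A^k(s_1)$, and the same local conical companion and partition-of-unity patching as in the finite-order case then applies, with Corollary~\ref{cor:smooth-cutoff-flatness} making the corrections smooth. Your two extra checks are also correct: that the $\Lambda$-stratified equaliser gives pointwise jet agreement along each clean stratum, and that $\chi_a$ and the cones are fixed independently of $k$. They spell out what the paper's short proof leaves implicit.
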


\begin{proof}
Universal necessity gives
\(\operatorname{Im}\operatorname{res}_1^\infty \subseteq \mathfrak E_A^\infty\).
Conversely, let
\(s_1\in\mathfrak E_A^\infty\).
For every multiple-incidence stratum and every incident branch,
the difference \(r_a[s_1] = \overline{\sigma}_a[s_1] - \overline{\sigma}_{a_0}[s_1]\) has zero jet of every finite order along the stratum, and is therefore
flat there. Smooth cutoff-flatness and the preceding local-to-global
construction produce a smooth companion on \(M_2\).
\end{proof}

\begin{remark}
Conical tameness is a sufficient realisation hypothesis and is not
claimed to be necessary. The construction proves the set-theoretic
restriction-image identities above. No boundedness or continuity of
the resulting extension operator in a Fr\'echet or Sobolev topology is
asserted.
\end{remark}

\section{Sobolev completion through incidence traces}
\label{sec:sobolev-closure}

Fix
\(m\in\mathbb N, \qquad 1<p<\infty\).
Choose a smooth positive density on \(M_1\), a bundle metric on
\(E_1\), and a connection used to define the Sobolev norm
\(W^{m,p}(M_1,E_1)\).
These ambient analytic data remain fixed throughout the section.

Write
\[
  \mathfrak R_A^\infty
  :=
  \operatorname{Im}\operatorname{res}_1^\infty
  \subseteq
  \Gamma^\infty(M_1,E_1)
\]
for the smooth restriction image and
\[
  \mathfrak E_A^\infty
  =
  \left\{
    s\in\Gamma^\infty(M_1,E_1):
    \mathfrak d_A^k(s)=0
    \text{ for every }k\geq0
  \right\}
\]
for the full smooth transported-jet equaliser. Set
\(\mathfrak R_A^{\infty;m,p} := \mathfrak R_A^\infty \cap W^{m,p}(M_1,E_1)\),
and
\(\mathfrak E_A^{\infty;m,p} := \mathfrak E_A^\infty \cap W^{m,p}(M_1,E_1)\).

The objective is to determine
\(\overline{ \mathfrak R_A^{\infty;m,p} }^{\,W^{m,p}(M_1,E_1)}\).

\subsection{Geometric source carriers}
\label{subsec:visible-incidence-traces}

Retain the incidence strata \(S_\lambda\), source-coincidence
partitions \(\mathcal P_\lambda\), and limiting source maps
\(\gamma_B: S_\lambda\longrightarrow M_1, \qquad B\in\mathcal P_\lambda\),
from Section~\ref{sec:transported-jets}. The finite-jet theory did not
require these maps to be embeddings. Sobolev trace theory does require
additional geometric hypotheses.

\begin{hypothesis}[Embedded source-carrier hypothesis]
\label{hyp:embedded-source-carriers}
For every pair
\[
  (\lambda,B),
  \qquad
  B\in\mathcal P_\lambda,
\]
the map
\[
  \gamma_B:
  S_\lambda\longrightarrow M_1
\]
is a proper smooth embedding onto a closed embedded submanifold
\[
  Z_{\lambda,B}
  :=
  \gamma_B(S_\lambda)
  \subseteq M_1.
\]
After identifying equal images, the resulting finite family of
distinct geometric carriers
\[
  \mathscr Z_A
  =
  \{Z\}
\]
is clean: every nonempty finite intersection
\[
  Z_1\cap\cdots\cap Z_q
\]
is an embedded submanifold, and at each intersection point,
\[
  T
  \left(
    Z_1\cap\cdots\cap Z_q
  \right)
  =
  TZ_1\cap\cdots\cap TZ_q.
\]
\end{hypothesis}

Set
\(Z_A := \bigcup_{Z\in\mathscr Z_A}Z\).
Because the family is finite and every carrier is closed,
\(Z_A\) is closed.

Different branch or stratum labels may determine the same geometric
carrier. Such repetitions do not produce independent source traces:
they correspond to the same jet of the same section along the same
submanifold. The transported compatibility equations may nevertheless
use that common source trace in different ways.

Let \(Z\in\mathscr Z_A\) have codimension
\(c_Z:=\operatorname{codim}_{M_1}Z\).
Define
\[
  r_Z(m,p)
  :=
  \max
  \left\{
    j\in\mathbb N_0:
    j<m-\frac{c_Z}{p}
  \right\},
\]
with the convention \(r_Z(m,p)=-1\) when the set is empty.

After choosing a metric, connection, and tubular splitting near \(Z\),
write \(\gamma_Z^{(j)}u\) for the symmetrised order-\(j\) normal covariant trace. In the standard
trace regime,
\[
  \gamma_Z^{(j)}:
  W^{m,p}(M_1,E_1)
  \longrightarrow
  B_{p,p}^{\,m-j-c_Z/p}
  \left(
    Z;
    E_1|_Z\otimes
    \operatorname{Sym}^jN_Z^*
  \right)
\]
is bounded whenever
\(0\leq j\leq r_Z(m,p)\).

For bookkeeping, define the ambient product of visible trace spaces by
\[
  \mathcal B_A^{m,p}
  :=
  \bigoplus_{Z\in\mathscr Z_A}
  \;
  \bigoplus_{j=0}^{r_Z(m,p)}
  B_{p,p}^{\,m-j-c_Z/p}
  \left(
    Z;
    E_1|_Z\otimes
    \operatorname{Sym}^jN_Z^*
  \right),
\]
where a carrier with \(r_Z(m,p)=-1\) contributes no summand.

When carriers intersect, arbitrary elements of this direct sum need
not be jointly realisable as traces of a single Sobolev section.
Accordingly, the complete trace target below is allowed to be a proper
closed subspace of \(\mathcal B_A^{m,p}\) encoding all intersection
compatibilities.

\begin{hypothesis}[Complete incidence trace hypothesis]
\label{hyp:complete-incidence-trace}
There exist:

\begin{enumerate}[label=\textup{(\roman*)}]
  \item a Banach space
        \[
          \mathcal T_A^{m,p}
          \subseteq
          \mathcal B_A^{m,p},
        \]
        continuously embedded as a closed subspace;

  \item a bounded surjective trace map
        \[
          \operatorname{Tr}_A^{m,p}:
          W^{m,p}(M_1,E_1)
          \longrightarrow
          \mathcal T_A^{m,p};
        \]

  \item a bounded linear right inverse
        \[
          \mathcal J_A^{m,p}:
          \mathcal T_A^{m,p}
          \longrightarrow
          W^{m,p}(M_1,E_1);
        \]
\end{enumerate}
such that:

\begin{enumerate}[label=\textup{(\alph*)}]
  \item \(\operatorname{Tr}_A^{m,p}\) contains exactly one copy of
        every visible normal derivative trace on every distinct
        geometric carrier;

  \item
        \[
          \operatorname{Tr}_A^{m,p}
          \mathcal J_A^{m,p}
          =
          \operatorname{id}_{\mathcal T_A^{m,p}};
        \]

  \item
        \[
          \ker\operatorname{Tr}_A^{m,p}
          =
          \overline{
            C_c^\infty(M_1\setminus Z_A,E_1)
          }^{\,W^{m,p}(M_1,E_1)}.
        \]
\end{enumerate}
\end{hypothesis}

The hypothesis includes both local trace data and the global approximation needed on noncompact \(M_1\).

\begin{hypothesis}[Ambient Sobolev core]
\label{hyp:ambient-sobolev-core}
The chosen global Sobolev realisation satisfies
\[
  W^{m,p}(M_1,E_1)
  =
  \overline{
    C_c^\infty(M_1,E_1)
  }^{\,W^{m,p}(M_1,E_1)}.
\]
\end{hypothesis}

The hypothesis is automatic when \(M_1\) is compact. On a noncompact
manifold it is a genuine global analytic assumption and is not a
formal consequence of completeness alone at arbitrary higher Sobolev
orders; see \citep{HondaMariRimoldiVeronelli2021}. It may instead be
verified directly for the Sobolev realisation used in a particular
application.

\begin{lemma}[Local-to-global complete trace package]
\label{lem:local-to-global-trace-package}
Assume Hypothesis~\ref{hyp:ambient-sobolev-core}. Let
\[
  Z_A=Z_1\sqcup\cdots\sqcup Z_N
\]
be a finite disjoint union of compact closed subsets. Suppose that, for
each \(i\), there exist:

\begin{enumerate}[label=\textup{(\roman*)}]
  \item a Banach space \(\mathcal T_i^{m,p}\) and a bounded trace map
        \[
          \operatorname{Tr}_i^{m,p}:
          W^{m,p}(M_1,E_1)
          \longrightarrow
          \mathcal T_i^{m,p};
        \]

  \item a precompact open neighbourhood \(U_i\) of \(Z_i\), with the
        sets \(U_i\) pairwise disjoint, and a bounded right inverse
        \[
          \mathcal J_i^{m,p}:
          \mathcal T_i^{m,p}
          \longrightarrow
          W^{m,p}(M_1,E_1)
        \]
        whose image is supported in \(U_i\);

  \item a cutoff
        \[
          \chi_i\in C_c^\infty(U_i,[0,1]),
          \qquad
          \chi_i=1
          \text{ on a neighbourhood of }Z_i,
        \]
        such that
        \[
          \chi_i\ker\operatorname{Tr}_i^{m,p}
          \subseteq
          \overline{
            C_c^\infty(U_i\setminus Z_i,E_1)
          }^{\,W^{m,p}(M_1,E_1)}.
        \]
\end{enumerate}

Assume also that
\[
  \operatorname{Tr}_j^{m,p}
  \mathcal J_i^{m,p}=0
  \qquad
  (i\neq j).
\]
Then the direct-sum trace map
\[
  \operatorname{Tr}_A^{m,p}
  :=
  \bigoplus_{i=1}^N
  \operatorname{Tr}_i^{m,p}
  :
  W^{m,p}(M_1,E_1)
  \longrightarrow
  \bigoplus_{i=1}^N\mathcal T_i^{m,p}
\]
is bounded and surjective, admits the bounded right inverse
\[
  \mathcal J_A^{m,p}
  :=
  \sum_{i=1}^N\mathcal J_i^{m,p},
\]
and satisfies
\[
  \boxed{
  \ker\operatorname{Tr}_A^{m,p}
  =
  \overline{
    C_c^\infty(M_1\setminus Z_A,E_1)
  }^{\,W^{m,p}(M_1,E_1)}.
  }
\]
\end{lemma}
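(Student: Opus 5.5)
Boundedness of \(\operatorname{Tr}_A^{m,p}\) is immediate: it is a finite direct sum of bounded maps. For surjectivity and the right inverse, I will compute \(\operatorname{Tr}_A^{m,p}\mathcal J_A^{m,p}\) componentwise. For \(t=(t_1,\ldots,t_N)\), the \(j\)-th component is
\[
\operatorname{Tr}_j^{m,p}\sum_i\mathcal J_i^{m,p}t_i=\operatorname{Tr}_j^{m,p}\mathcal J_j^{m,p}t_j=t_j .
\]
This uses the cross-vanishing hypothesis \(\operatorname{Tr}_j^{m,p}\mathcal J_i^{m,p}=0\) for \(i\neq j\), together with the right-inverse property of each \(\mathcal J_j^{m,p}\). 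So \(\mathcal J_A^{m,p}\) is a bounded right inverse, and surjectivity follows.

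For the kernel identity, I plan to prove the two inclusions separately. The inclusion \(\supseteq\) holds for the following reasons. Each \(\operatorname{Tr}_i^{m,p}\) is a trace map attached to \(Z_i\), so it should vanish on \(C_c^\infty(M_1\setminus Z_A,E_1)\), and by continuity it then vanishes on the closure. This needs the implicit locality of the traces. If that is not literally part of hypotheses (i)--(iii), I would derive it as follows. Take \(\varphi\in C_c^\infty(M_1\setminus Z_A)\) and split it with cutoffs as \(\varphi=\chi_i\varphi+(1-\chi_i)\varphi\). Then argue that \(\operatorname{Tr}_i^{m,p}\) sees only germs near \(Z_i\), and \(\chi_i\varphi\) vanishes near \(Z_i\). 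I expect to have to state this locality (traces depend only on the germ along \(Z_i\)) as the standing interpretation of "trace map"; this is where I anticipate the main delicacy.

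For the inclusion \(\subseteq\), take \(u\in\ker\operatorname{Tr}_A^{m,p}\). Choose the cutoffs \(\chi_i\) from (iii) and decompose
\[
u=\sum_i\chi_iu+\Bigl(1-\sum_i\chi_i\Bigr)u .
\]
The supports of the \(\chi_i\) lie in the disjoint sets \(U_i\), so \(\sum_i\chi_i\) is smooth with values in \([0,1]\) and equals \(1\) near \(Z_A\). Since \(u\in\ker\operatorname{Tr}_i^{m,p}\) for every \(i\), hypothesis (iii) places \(\chi_iu\) in \(\overline{C_c^\infty(U_i\setminus Z_i,E_1)}\). Because \(U_i\) is disjoint from \(Z_j\) for \(j\neq i\), we have \(U_i\setminus Z_i\subseteq M_1\setminus Z_A\), so this closure lies in \(\overline{C_c^\infty(M_1\setminus Z_A,E_1)}\).

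It remains to handle the remainder \(w=(1-\sum_i\chi_i)u\). Multiplication by the smooth bounded function \(1-\sum_i\chi_i\), all of whose derivatives are compactly supported, is bounded on \(W^{m,p}\). By Hypothesis~\ref{hyp:ambient-sobolev-core}, choose \(\varphi_n\in C_c^\infty(M_1,E_1)\) with \(\varphi_n\to u\). Then \((1-\sum_i\chi_i)\varphi_n\in C_c^\infty(M_1\setminus Z_A,E_1)\), since this cutoff vanishes on a neighbourhood of \(Z_A\), and these sections converge to \(w\) in \(W^{m,p}\). Summing the pieces puts \(u\) in the closure, which is a closed subspace.

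The only nonformal input is the multiplier boundedness of smooth cutoffs with compactly supported derivatives on the chosen global realisation. I would verify this from the connection-based norm by Leibniz expansion, using that the derivatives of the cutoff are supported in the precompact set \(\bigcup_iU_i\).
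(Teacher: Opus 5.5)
Your proposal follows the paper's proof essentially step for step: you obtain the identity $\operatorname{Tr}_A^{m,p}\mathcal J_A^{m,p}=I$ from the cross-vanishing hypothesis, and for the kernel you use the decomposition $u=\sum_i\chi_iu+\bigl(1-\sum_i\chi_i\bigr)u$, with hypothesis (iii) handling the local pieces and Hypothesis~\ref{hyp:ambient-sobolev-core} handling the remainder. The locality you flag for the inclusion $\supseteq$ is exactly what the paper assumes tacitly when it asserts that sections in $C_c^\infty(M_1\setminus Z_A,E_1)$ have zero trace, and it genuinely does not follow from (i)--(iii) alone; for example, on $M_1=\mathbb R$ with $Z_A=\{0\}$ and $m=1$, the map $u\mapsto\bigl(u(0),\int ug\bigr)$ with $g\in C_c^\infty$ supported in $U\setminus\{0\}$ admits a right inverse and cutoff satisfying (i)--(iii), yet its kernel is strictly smaller than $\overline{C_c^\infty(\mathbb R\setminus\{0\})}$, so building locality into the meaning of ``trace map'' is the right call.
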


\begin{proof}
Finiteness gives boundedness. The support and cross-trace assumptions imply
\(\operatorname{Tr}_A^{m,p}\mathcal J_A^{m,p}=I\), so the direct-sum trace is surjective with the stated right inverse. Since sections in \(C_c^\infty(M_1\setminus Z_A,E_1)\) have zero trace, their closure is contained in the kernel.

Conversely, let \(u\in\ker\operatorname{Tr}_A^{m,p}\) and \(\chi=\sum_i\chi_i\). For each \(i\), the local kernel hypothesis approximates \(\chi_i u\) by \(\psi_{i,n}\in C_c^\infty(U_i\setminus Z_i,E_1)\); the disjoint supports avoid all of \(Z_A\). By Hypothesis~\ref{hyp:ambient-sobolev-core}, choose \(\varphi_n\in C_c^\infty(M_1,E_1)\) with \(\varphi_n\to u\). Since \(1-\chi\) vanishes near \(Z_A\), \((1-\chi)\varphi_n\in C_c^\infty(M_1\setminus Z_A,E_1)\) and tends to \((1-\chi)u\). A diagonal sum with the \(\psi_{i,n}\) approximates \(u\) by sections supported off \(Z_A\).
\end{proof}

\begin{lemma}[Local zero-trace approximation]
\label{lem:local-zero-trace-approximation}
Let
\[
  U\Subset M_1
\]
be a precompact coordinate domain with smooth boundary, and let
\[
  W_0^{m,p}(U,E_1)
  :=
  \overline{
    C_c^\infty(U,E_1)
  }^{\,W^{m,p}(U,E_1)}.
\]
The following statements hold.

\begin{enumerate}[label=\textup{(\roman*)}]
  \item Let
        \[
          Z\subset U
        \]
        be a compact closed embedded hypersurface admitting a tubular
        neighbourhood with closure contained in \(U\). Then
        \[
          \boxed{
          \overline{
            C_c^\infty(U\setminus Z,E_1)
          }^{\,W^{m,p}(U,E_1)}
          =
          \left\{
            v\in W_0^{m,p}(U,E_1):
            \gamma_Z^{(j)}v=0,
            \ 0\leq j\leq m-1
          \right\}.
          }
        \]

  \item Let \(x\in U\), set
        \[
          \sigma:=m-\frac{n}{p},
          \qquad
          n:=\dim M_1,
        \]
        and assume
        \[
          \sigma\notin\mathbb N_0.
        \]
        Define
        \[
          r_x
          :=
          \max\{j\in\mathbb N_0:j<\sigma\},
        \]
        with \(r_x=-1\) when the set is empty. Then
        \[
          \boxed{
          \overline{
            C_c^\infty(U\setminus\{x\},E_1)
          }^{\,W^{m,p}(U,E_1)}
          =
          \left\{
            v\in W_0^{m,p}(U,E_1):
            \nabla^{(j)}v(x)=0,
            \ 0\leq j\leq r_x
          \right\},
          }
        \]
        where the point conditions are omitted when \(r_x=-1\). In
        that case the right-hand side is all of
        \(W_0^{m,p}(U,E_1)\).
\end{enumerate}

Let \(K\) denote either \(Z\) or \(\{x\}\), and let
\[
  \chi\in C_c^\infty(U)
\]
be equal to one on a neighbourhood of \(K\). If
\[
  u\in W^{m,p}(M_1,E_1)
\]
has vanishing visible traces along \(K\), then
\[
  \chi u
  \in
  \overline{
    C_c^\infty(U\setminus K,E_1)
  }^{\,W^{m,p}(M_1,E_1)}.
\]
\end{lemma}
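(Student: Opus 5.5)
The plan is to localise each statement, reduce it to a model Euclidean problem, and identify the closure with the joint kernel of the visible traces. In both parts the inclusion ``\(\subseteq\)'' is immediate. Elements of \(C_c^\infty(U\setminus K,E_1)\) lie in \(C_c^\infty(U,E_1)\) and have all jets zero on \(K\). The traces \(\gamma_Z^{(j)}\) for \(j\le m-1\) (codimension one, so \(r_Z=m-1\)) and the point evaluations \(\nabla^{(j)}v(x)\) for \(j\le r_x<m-n/p\) are continuous on \(W^{m,p}\). For the latter we use the Sobolev embedding \(W^{m,p}\hookrightarrow C^{r_x}\) near \(x\). Continuity passes the vanishing to the closure, and \(W_0^{m,p}(U,E_1)\) is closed. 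The real work is the reverse inclusion. For this we first reduce to \(v\in W_0^{m,p}(U,E_1)\) with compact support in \(U\): approximate \(v\) by \(C_c^\infty(U)\)-sections and use a cutoff equal to one near \(K\), so that the approximation is unaffected near \(K\). We then work in a local bundle frame, which makes everything scalar and componentwise.

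For part (i), we use a finite cover of the compact \(Z\) by tubular charts straightening \(Z\) to \(\{t=0\}\), together with a partition of unity. Multiplication by smooth functions preserves vanishing traces, so it suffices to treat \(v\) supported in one chart \(\mathbb R^{n-1}\times\mathbb R\) with \(\partial_t^jv|_{t=0}=0\) for \(j\le m-1\). Restrict \(v\) to the half-spaces \(\pm t>0\). Each restriction has vanishing boundary traces of orders \(0,\dots,m-1\), so by the classical characterisation of \(W_0^{m,p}\) of a half-space (via Hardy's inequality or the trace theorem) it is the limit of \(C_c^\infty\) functions of the open half-space. The two approximants have disjoint supports away from \(\{t=0\}\), and their sum converges to \(v\) in \(W^{m,p}(\mathbb R^n)\). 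This convergence holds because \(v\in W^{m,p}\) across the interface and its two halves glue with no jump in any derivative up to order \(m-1\). This is exactly the condition that the zero extensions of the half-pieces are \(W^{m,p}\) and sum to \(v\).

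For part (ii), take \(x=0\) and \(v\) compactly supported with \(\nabla^jv(0)=0\) for \(j\le r_x\). Let \(\phi_\varepsilon(y)=\phi(y/\varepsilon)\) with \(\phi=1\) near \(0\), and set \(v_\varepsilon=(1-\phi_\varepsilon)v\). We must show \(\|\phi_\varepsilon v\|_{W^{m,p}}\to0\), after which mollification away from \(0\) finishes the argument. The Leibniz terms are \(\varepsilon^{-|\beta|}(\partial^\beta\phi)(\cdot/\varepsilon)\,\partial^{\alpha-\beta}v\) on \(B_{C\varepsilon}\). The main obstacle is bounding these uniformly. I would first try the Taylor remainder estimate: since all derivatives up to order \(r_x\) vanish at \(0\) and \(\sigma=m-n/p\notin\mathbb N_0\), Morrey/Hardy-type inequalities give \(|\partial^\gamma v(y)|\lesssim|y|^{m-|\gamma|-n/p}\,\|\nabla^mv\|_{L^p(B_{2|y|})}\) for \(|\gamma|\le r_x\). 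For higher \(|\gamma|\) (those \(>\sigma-\)) one uses Hardy's inequality \(\||y|^{-(m-|\gamma|)}\partial^\gamma v\|_{L^p}\lesssim\|\nabla^m v\|_{L^p}\), which is valid exactly because the noninteger condition \(\sigma\notin\mathbb N_0\) avoids the critical Hardy exponent. Substituting these bounds, each term is controlled by \(\|\nabla^mv\|_{L^p(B_{C\varepsilon})}\), which tends to zero by absolute continuity. When \(r_x=-1\), the same Hardy argument applies with no point conditions.

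For the final assertion, take \(u\in W^{m,p}(M_1,E_1)\) with vanishing visible traces along \(K\). Then \(\chi u\in W_0^{m,p}(U,E_1)\), since it is compactly supported in \(U\), and it has the same traces as \(u\) near \(K\) because \(\chi=1\) there. By (i) or (ii), \(\chi u\) is a \(W^{m,p}(U)\)-limit of elements of \(C_c^\infty(U\setminus K,E_1)\). Extension by zero is an isometry from \(W_0^{m,p}(U)\) into \(W^{m,p}(M_1)\), which gives the claim.
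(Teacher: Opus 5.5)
Your treatment of (i) and of the final localisation step follows the paper. The paper cites the interior normal-trace theorem on a tubular collar; your splitting of a compactly supported \(v\) in a straightened chart into two half-space pieces, each characterised by the trace description of \(W_0^{m,p}\) of a half-space, is a concrete form of it. The cutoff and extension-by-zero step is the same as the paper's.

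For (ii) your route is genuinely different. The paper argues by duality. The annihilator of \(C_c^\infty(U\setminus\{x\},E_1)\) in the dual of \(W_0^{m,p}(U,E_1)\) consists of distributions supported at \(x\), that is, finite combinations of \(D^\alpha\delta_x\). Sobolev--Morrey and rescaling show that such a combination is continuous exactly when only \(|\alpha|<\sigma\) occur, and Hahn--Banach then identifies the closure with the common kernel of the visible point derivatives. You instead show directly that \(\phi_\varepsilon v\to0\) in \(W^{m,p}\). This gives explicit approximants \((1-\phi_\varepsilon)v\) and shows that \(\sigma\notin\mathbb N_0\) is exactly the avoidance of critical Hardy exponents. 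The cost is that you need weighted Hardy inequalities for Sobolev functions with vanishing jets, whereas the duality proof needs only the structure of point-supported distributions and a scaling computation.

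One estimate in your (ii) is false as written and must be replaced. You claim \(|\partial^\gamma v(y)|\le C|y|^{m-|\gamma|-n/p}\|\nabla^m v\|_{L^p(B_{2|y|})}\) for \(|\gamma|\le r_x\); this fails whenever \(r_x<m-1\). Take \(n=3\), \(p=2\), \(m=3\), so \(\sigma=3/2\) and \(r_x=1\), and let \(v=|y|^2\) near \(0\). Then \(v(0)=0\), \(\nabla v(0)=0\) and \(\nabla^3 v=0\) near \(0\), yet \(v(y)\neq0\). The point jet controls only orders \(\le r_x\), so derivatives of orders \(r_x+1,\dots,m-1\) must appear on the right.

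The repair is routine. Write \(\gamma=\alpha-\beta\). The Leibniz terms with \(\beta\neq0\) live on \(\varepsilon\le|y|\le C\varepsilon\), where \(\varepsilon^{-|\beta|}\le C|y|^{-(m-|\gamma|)}\). So it suffices that \(|y|^{-(m-|\gamma|)}\partial^\gamma v\in L^p\) for every \(|\gamma|\le m-1\); absolute continuity of the integral then gives smallness on \(B_{C\varepsilon}\). This also replaces your phrase ``controlled by \(\|\nabla^mv\|_{L^p(B_{C\varepsilon})}\)'', which Hardy's inequality, being global, does not give.
\begin{itemize}
\item For \(|\gamma|>r_x\), this is the subcritical Hardy inequality you quote.
\item For \(|\gamma|\le r_x\), use the supercritical Hardy inequality. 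It holds for smooth compactly supported \(w=\partial^\gamma v\) whose Taylor jet at \(0\) vanishes through order \(r_x-|\gamma|\), which is exactly your point conditions; noncriticality excludes the forbidden exponents.
\item To pass to general \(v\), approximate it by mollifications minus their cut-off degree-\(r_x\) Taylor polynomials at \(0\). These converge in \(W^{m,p}\) by Morrey continuity of the point jets, and Fatou's lemma transfers the bound.
\end{itemize}
Alternatively, the correct pointwise bound adds the terms \(|y|^{j-|\gamma|-n/p}\|\nabla^jv\|_{L^p(B_{2|y|})}\) for \(r_x<j\le m\). Each of these is \(o(|y|^{m-|\gamma|-n/p})\) by H\"older's inequality and Sobolev embedding. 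With this correction your argument is complete.
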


\begin{proof}
For a hypersurface, the complete interior normal-trace theorem on a tubular collar gives precisely the stated kernel of the normal traces through order \(m-1\); the ambient space is \(W_0^{m,p}(U,E_1)\) because compact support in \(U\) also imposes the ordinary zero boundary condition at \(\partial U\) \citep{LionsMagenes1972,Triebel1992,GrosseSchneider2013,Leoni2017Sobolev}.

For a point, work in local coordinates and a frame and let
\[
 \mathcal Y=\overline{C_c^\infty(U\setminus\{x\},E_1)}^{\,W^{m,p}(U,E_1)}\subset W_0^{m,p}(U,E_1).
\]
The annihilator of \(\mathcal Y\) consists of distributions supported at \(x\), hence finite sums of derivatives \(D^\alpha\delta_x\). Such a derivative is continuous on \(W_0^{m,p}\) exactly when \(|\alpha|<m-n/p\): sufficiency is Sobolev--Morrey, and necessity follows by rescaling. Since the exponent is noncritical, the annihilator is generated exactly by \(|\alpha|\le r_x\). Hahn--Banach therefore yields \(\mathcal Y=\bigcap_{|\alpha|\le r_x}\ker D^\alpha|_x\); an invertible lower-triangular jet-coordinate change replaces coordinate derivatives by \(\nabla^{(j)}\). If \(r_x=-1\), the annihilator is zero \citep{AdamsFournier2003,Mazya2011Sobolev,AdamsHedberg1996}.

Finally, if \(\chi=1\) near \(K\), then \(\chi u\in W_0^{m,p}(U,E_1)\) and has the same vanishing visible traces. The appropriate local identity supplies approximants supported off \(K\); because all supports lie in a fixed compact subset of \(U\), local and ambient \(W^{m,p}\)-convergence agree.
\end{proof}

\begin{proposition}[Basic complete trace configurations]
\label{prop:basic-complete-trace-configurations}
Assume Hypothesis~\ref{hyp:ambient-sobolev-core}. Then
Hypothesis~\ref{hyp:complete-incidence-trace} holds in each of the
following situations.

\begin{enumerate}[label=\textup{(\roman*)}]
  \item \(Z_A=Z\) is a compact closed embedded hypersurface. In this
        case
        \[
          \mathcal T_A^{m,p}
          =
          \bigoplus_{j=0}^{m-1}
          B_{p,p}^{\,m-j-1/p}
          \left(
            Z;
            E_1|_Z\otimes\operatorname{Sym}^jN_Z^*
          \right).
        \]

  \item \(Z_A\) is a finite pairwise disjoint family of compact closed
        embedded hypersurfaces. The complete trace target is the direct
        sum of the hypersurface trace spaces in \textup{(i)}.

  \item \(Z_A=P\) is a finite set of points in an \(n\)-dimensional
        manifold and
        \[
          \sigma:=m-\frac{n}{p}
          \notin
          \mathbb N_0.
        \]
        Set
        \[
          r_P
          :=
          \max\{j\in\mathbb N_0:j<\sigma\},
        \]
        with \(r_P=-1\) when the set is empty. Then
        \[
          \mathcal T_A^{m,p}
          =
          \bigoplus_{x\in P}
          \bigoplus_{j=0}^{r_P}
          E_{1,x}\otimes\operatorname{Sym}^jT_x^*M_1,
        \]
        where the direct sum is zero when \(r_P=-1\).

  \item \(Z_A\) is a finite pairwise disjoint union of carriers of the
        types in \textup{(i)} and \textup{(iii)}.
\end{enumerate}
\end{proposition}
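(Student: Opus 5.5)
The plan is to deduce every case from Lemma~\ref{lem:local-to-global-trace-package}. The local input comes from Lemma~\ref{lem:local-zero-trace-approximation} together with the standard trace and extension theorems. Since Hypothesis~\ref{hyp:ambient-sobolev-core} is assumed, only the local data (i)--(iii) of the package lemma and the cross-trace condition need to be supplied for each compact component of \(Z_A\). The package lemma then yields conditions (a)--(c) of Hypothesis~\ref{hyp:complete-incidence-trace}, with \(\mathcal T_A^{m,p}\) the direct sum of the local targets.

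\emph{Hypersurface case (i).} Choose a tubular neighbourhood of \(Z\) and a precompact neighbourhood \(U\) of its closure; when \(Z\) has several components, take one tubular collar per component. Over the collar, the normal-trace theorem gives surjectivity of
\[
  u\mapsto(\gamma_Z^{(j)}u)_{0\le j\le m-1}
\]
onto \(\bigoplus_{j}B_{p,p}^{m-j-1/p}(Z;E_1|_Z\otimes\operatorname{Sym}^jN_Z^*)\). It also supplies a bounded right inverse, built in local charts and frames from the Euclidean half-space extension operator and glued with a partition of unity along the compact \(Z\); this gluing is routine but frame-dependent. Multiplying the right inverse by a fixed cutoff \(\chi\) that equals one near \(Z\) and is supported in \(U\) keeps it a right inverse, because \(\chi=1\) near \(Z\) leaves all traces unchanged, and its image is then supported in \(U\). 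The kernel inclusion (iii) of the package lemma is exactly the last assertion of Lemma~\ref{lem:local-zero-trace-approximation}(i), applied to \(\chi u\) with \(u\in\ker\operatorname{Tr}\). Case (ii) follows directly. Each hypersurface is compact, so the components have pairwise positive distance, and one can choose pairwise disjoint neighbourhoods \(U_i\). Because \(\mathcal J_i\) is supported in \(U_i\), it vanishes near \(Z_j\) for \(j\neq i\), so \(\operatorname{Tr}_j\mathcal J_i=0\).

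\emph{Point case (iii).} For each \(x\in P\), take a coordinate ball \(U_x\) with smooth boundary; the balls are pairwise disjoint. Define
\[
  \operatorname{Tr}_x u=(\nabla^{(j)}u(x))_{0\le j\le r_P}.
\]
This is bounded by Sobolev--Morrey, since \(j\le r_P<m-n/p\). For the right inverse, send a jet to \(\chi_x\) times the polynomial section with that jet in the chart and frame. The map is finite-dimensional, hence bounded, and has the correct jet once the triangular change between coordinate derivatives and \(\nabla^{(j)}\) is inverted. The kernel inclusion is the last assertion of Lemma~\ref{lem:local-zero-trace-approximation}(ii); this is where the noncritical assumption \(\sigma\notin\mathbb N_0\) enters. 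When \(r_P=-1\) the target is zero, and the same lemma shows that every \(\chi u\) is approximable off \(x\).

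\emph{Mixed case (iv)} combines the two constructions with pairwise disjoint neighbourhoods, and the cross-trace vanishing again follows from disjoint supports. Finally, the target \(\mathcal T_A^{m,p}\) is exactly the full visible product \(\mathcal B_A^{m,p}\), since distinct carriers are disjoint and impose no intersection compatibilities. It is therefore trivially closed in \(\mathcal B_A^{m,p}\), and condition (a) holds because each carrier appears once.

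I expect the main obstacle to be the construction of a bounded right inverse in the hypersurface case that handles all normal orders \(0\le j\le m-1\) simultaneously and with Besov targets on a curved \(Z\). I would take it from the cited trace literature on a collar \(Z\times(-\varepsilon,\varepsilon)\) and use flatness of the cutoff to keep the supports localised.
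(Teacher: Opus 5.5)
Your proposal is correct and follows essentially the same route as the paper: each case is reduced to Lemma~\ref{lem:local-to-global-trace-package}, with local inputs from the classical normal-jet trace theorem and a cutoff-localised right inverse for hypersurfaces, from Sobolev--Morrey and cutoff polynomial jets for points, and from Lemma~\ref{lem:local-zero-trace-approximation} for the kernel condition. Your explicit checks that the cross-traces vanish by disjointness of supports and that \(\mathcal T_A^{m,p}=\mathcal B_A^{m,p}\) is closed fill in details the paper leaves implicit.
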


\begin{proof}
For a compact hypersurface, the classical normal-jet trace theorem gives a bounded surjection onto the displayed Besov product and a bounded right inverse; multiplying by a cutoff supported in a tubular neighbourhood makes the right inverse local. Lemma~\ref{lem:local-zero-trace-approximation}\textup{(i)} verifies the local kernel condition, so Lemma~\ref{lem:local-to-global-trace-package} gives the global package. Pairwise disjoint compact hypersurfaces are handled by disjoint tubular neighbourhoods and direct sums.

For an isolated point, Sobolev--Morrey gives the visible derivative evaluations through order \(r_P\). Arbitrary finite jets are realised by cutoff polynomials in a local frame, hence admit a bounded right inverse because the target is finite-dimensional. Lemma~\ref{lem:local-zero-trace-approximation}\textup{(ii)} supplies the local kernel identity; applying Lemma~\ref{lem:local-to-global-trace-package} at the finitely many points proves \textup{(iii)}. The case \(r_P=-1\) is covered by the same lemma together with the ambient-core hypothesis. Combining mutually disjoint neighbourhoods proves \textup{(iv)}.
\end{proof}

\begin{corollary}[Trace package for the relative-sign model]
\label{cor:relative-sign-complete-trace-package}
For the relative-sign bundle over \(M_1=\mathbb R\),
Hypothesis~\ref{hyp:complete-incidence-trace} holds for every
\[
  m\in\mathbb N,
  \qquad
  1<p<\infty,
\]
with
\[
  \mathcal T_A^{m,p}
  =
  \mathbb K^m,
  \qquad
  \operatorname{Tr}_A^{m,p}u
  =
  \left(
    u(0),u'(0),\ldots,u^{(m-1)}(0)
  \right).
\]
\end{corollary}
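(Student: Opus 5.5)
The plan is to reduce the statement to Proposition~\ref{prop:basic-complete-trace-configurations}\textup{(iii)}, applied to the single point carrier \(Z_A=\{0\}\subset M_1=\mathbb R\). Hypothesis~\ref{hyp:ambient-sobolev-core} has to be checked first, since \(M_1\) is noncompact. On \(\mathbb R\) with Lebesgue density, the trivial line bundle and the flat connection, \(W^{m,p}(\mathbb R)\) is the usual Sobolev space. Density of \(C_c^\infty(\mathbb R)\) then follows from the standard two steps: truncate with \(\chi(\cdot/R)\), using the Leibniz rule and boundedness of derivatives of \(\chi(\cdot/R)\), and then mollify. So the ambient core holds for every \(m\) and \(1<p<\infty\).

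Next I identify the carrier. In the relative-sign model there is one incidence point, the origin of the second component, with two branches \((-\infty,0)\) and \((0,\infty)\). Both branches have the common source point \(\gamma_a(0)=0\in M_1\), and \(\gamma\) is trivially a proper embedding of a point. So \(\mathscr Z_A=\{\{0\}\}\) with \(c_Z=n=1\), and Hypothesis~\ref{hyp:embedded-source-carriers} holds vacuously as far as cleanness is concerned.

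The noncriticality hypothesis of Proposition~\ref{prop:basic-complete-trace-configurations}\textup{(iii)} asks that \(\sigma=m-1/p\notin\mathbb N_0\). This holds because \(0<1/p<1\), so \(\sigma\in(m-1,m)\). The same observation gives \(r_P=\max\{j:j<m-1/p\}=m-1\). Hence the target is \(\bigoplus_{j=0}^{m-1}E_{1,0}\otimes\operatorname{Sym}^jT_0^*\mathbb R\cong\mathbb K^m\). With the flat connection, the symmetrised covariant derivatives \(\nabla^{(j)}u(0)\) are just \(u^{(j)}(0)\). This gives the displayed trace map, and condition (a) (one copy per carrier) is automatic.

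The only point needing care is the one-dimensional boundary case \(j=m-1\). Sobolev--Morrey in dimension one gives \(W^{m,p}(\mathbb R)\hookrightarrow C^{m-1}\) for \(p>1\), so the evaluation \(u\mapsto u^{(m-1)}(0)\) is bounded; this is consistent with \(r_P=m-1\). Lemma~\ref{lem:local-zero-trace-approximation}\textup{(ii)} and the proof of Proposition~\ref{prop:basic-complete-trace-configurations}\textup{(iii)} then supply the bounded right inverse by cutoff polynomials \(\chi(x)\sum_j c_jx^j/j!\), and they supply the kernel identity. That completes the verification for every \(m\in\mathbb N\) and \(1<p<\infty\).
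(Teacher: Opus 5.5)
Your proposal is correct and follows the paper's own argument. The paper's proof is just the observation that Hypothesis~\ref{hyp:ambient-sobolev-core} holds on \(\mathbb R\) and that \(\{0\}\) is a single noncritical point carrier, followed by an application of Proposition~\ref{prop:basic-complete-trace-configurations}\textup{(iii)}; your extra checks (truncation plus mollification for the core, \(\sigma=m-1/p\in(m-1,m)\) giving \(r_P=m-1\), and the flat connection turning \(\nabla^{(j)}u(0)\) into \(u^{(j)}(0)\)) make explicit what the paper leaves implicit.
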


\begin{proof}
The ambient core hypothesis holds on \(\mathbb R\), and the carrier is
the single noncritical point \(\{0\}\). Apply
Proposition~\ref{prop:basic-complete-trace-configurations}\textup{(iii)}.
\end{proof}

Clean intersecting positive-dimensional carriers require a separate simultaneous trace theorem encoding intersection compatibilities; Proposition~\ref{prop:basic-complete-trace-configurations} is not a classification of such geometries.

\subsection{Changes of trace coordinates}
\label{subsec:trace-coordinate-invariance}

Derivative-trace coordinates depend on the metric, connection, tubular coordinates, and normal splitting. Smooth choices are boundedly equivalent on compact carriers; on noncompact carriers we impose the required uniform control.

\begin{definition}[Uniformly admissible change of trace data]
\label{def:uniformly-admissible-trace-change}
Let
\[
  \operatorname{Tr}_A^{m,p}:
  W^{m,p}(M_1,E_1)
  \longrightarrow
  \mathcal T_A^{m,p}
\]
and
\[
  \widetilde{\operatorname{Tr}}_A^{m,p}:
  W^{m,p}(M_1,E_1)
  \longrightarrow
  \widetilde{\mathcal T}_A^{m,p}
\]
be complete incidence-trace maps for the same geometric carrier
family. The change from the first auxiliary data to the second is
\emph{uniformly admissible at order \((m,p)\)} if, for every carrier
\(Z\) and every visible order \(j\), the two normal-trace systems
satisfy on smooth sections
\[
  \widetilde\gamma_Z^{(j)}u
  =
  A_{Z,j}\gamma_Z^{(j)}u
  +
  \sum_{\ell<j}
  P_{Z,j\ell}\gamma_Z^{(\ell)}u,
\]
where:

\begin{enumerate}[label=\textup{(\roman*)}]
  \item
        \[
          A_{Z,j}:
          E_1|_Z\otimes\operatorname{Sym}^jN_Z^*
          \longrightarrow
          E_1|_Z\otimes\operatorname{Sym}^jN_Z^*
        \]
        is a smooth bundle isomorphism whose induced action and inverse
        action are bounded on
        \[
          B_{p,p}^{\,m-j-c_Z/p}
          \left(
            Z;
            E_1|_Z\otimes\operatorname{Sym}^jN_Z^*
          \right);
        \]

  \item for \(\ell<j\), the tangential differential operator
        \(P_{Z,j\ell}\) extends boundedly as
        \[
          P_{Z,j\ell}:
          B_{p,p}^{\,m-\ell-c_Z/p}
          \left(
            Z;
            E_1|_Z\otimes\operatorname{Sym}^\ell N_Z^*
          \right)
          \longrightarrow
          B_{p,p}^{\,m-j-c_Z/p}
          \left(
            Z;
            E_1|_Z\otimes\operatorname{Sym}^j N_Z^*
          \right);
        \]

  \item the reverse change of auxiliary data admits relations of the
        same form with the same boundedness properties;

  \item the resulting lower-triangular operators on the ambient trace
        products map \(\mathcal T_A^{m,p}\) onto
        \(\widetilde{\mathcal T}_A^{m,p}\).
\end{enumerate}
\end{definition}

The last condition retains any compatibility conditions imposed at
intersections of carriers. It is automatic when both complete trace
targets are defined as the ranges of the corresponding simultaneous
trace maps and the lower-triangular transformation is boundedly
invertible on the ambient products.

\begin{proposition}[Bounded equivalence of uniformly admissible trace
coordinates]
\label{prop:trace-coordinate-invariance}
If two complete incidence-trace systems are related by a uniformly
admissible change at order \((m,p)\), then there exists a bounded
Banach-space isomorphism
\[
  L_A^{m,p}:
  \mathcal T_A^{m,p}
  \longrightarrow
  \widetilde{\mathcal T}_A^{m,p}
\]
such that
\[
  \widetilde{\operatorname{Tr}}_A^{m,p}
  =
  L_A^{m,p}\operatorname{Tr}_A^{m,p}.
\]
With the trace coordinates ordered by normal derivative degree,
\(L_A^{m,p}\) is lower triangular and has boundedly invertible
diagonal blocks.
\end{proposition}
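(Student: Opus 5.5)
The plan is to build \(L_A^{m,p}\) as the restriction to \(\mathcal T_A^{m,p}\) of the lower-triangular operator on the ambient product \(\mathcal B_A^{m,p}\) supplied by Definition~\ref{def:uniformly-admissible-trace-change}, and then to verify the identity \(\widetilde{\operatorname{Tr}}_A^{m,p}=L_A^{m,p}\operatorname{Tr}_A^{m,p}\) by density. First, for each carrier \(Z\), order the visible trace coordinates by normal degree \(j=0,\dots,r_Z(m,p)\). Define
\[
  \mathcal L_Z\bigl((\phi_\ell)_{\ell}\bigr)_j
  :=
  A_{Z,j}\phi_j+\sum_{\ell<j}P_{Z,j\ell}\phi_\ell .
\]
By conditions (i) and (ii), each entry maps the correct Besov space boundedly into the correct target, and there are finitely many entries. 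Hence \(\mathcal L:=\bigoplus_Z\mathcal L_Z\) is bounded on \(\mathcal B_A^{m,p}\). It is lower triangular with diagonal blocks \(A_{Z,j}\), which are boundedly invertible by (i).

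Second, I will show that \(\mathcal L\) is a Banach isomorphism of the ambient products. A lower-triangular operator with invertible diagonal is inverted by forward substitution:
\[
  \psi_j\mapsto A_{Z,j}^{-1}\Bigl(\psi_j-\sum_{\ell<j}P_{Z,j\ell}\phi_\ell\Bigr),
\]
applied recursively. Each step uses only the bounded maps already listed. Alternatively, condition (iii) directly supplies the reverse relations, and composing the two relations on smooth sections identifies that reverse operator as \(\mathcal L^{-1}\). By condition (iv), \(\mathcal L\) maps \(\mathcal T_A^{m,p}\) onto \(\widetilde{\mathcal T}_A^{m,p}\). It is injective there, being injective on the ambient product. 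Both spaces are closed subspaces, so the restriction \(L_A^{m,p}\) is a bounded bijection between Banach spaces and has a bounded inverse by the open mapping theorem. One can also avoid open mapping: \(\mathcal L^{-1}\) maps \(\widetilde{\mathcal T}_A^{m,p}\) back into \(\mathcal T_A^{m,p}\), since the image of \(\mathcal T_A^{m,p}\) is exactly \(\widetilde{\mathcal T}_A^{m,p}\).

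Third, the operator identity. The defining relations hold on smooth sections. Hypothesis~\ref{hyp:ambient-sobolev-core} makes \(C_c^\infty(M_1,E_1)\) dense in \(W^{m,p}(M_1,E_1)\), and on these sections \(\widetilde{\operatorname{Tr}}_A^{m,p}u=\mathcal L\operatorname{Tr}_A^{m,p}u\) coordinatewise. Both sides are bounded from \(W^{m,p}\) into \(\mathcal B_A^{m,p}\), since the traces are bounded by Hypothesis~\ref{hyp:complete-incidence-trace} and \(\mathcal L\) is bounded. The identity therefore extends to all of \(W^{m,p}\). Since \(\operatorname{Tr}_A^{m,p}\) takes values in \(\mathcal T_A^{m,p}\), the right-hand side equals \(L_A^{m,p}\operatorname{Tr}_A^{m,p}u\).

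I expect the main obstacle to be this density step when the ambient core hypothesis is not assumed globally. In that case I would localize: the trace identity only involves jets along \(Z_A\). I would multiply by a cutoff equal to one near each carrier on compact pieces and use local density of smooth sections in \(W^{m,p}\) near \(Z\). Together with the noncompact uniformity already built into conditions (i)–(ii), this should suffice. A second subtle point is ensuring that the "exactly one copy per carrier" bookkeeping of Hypothesis~\ref{hyp:complete-incidence-trace}(a) matches the coordinates in both systems, so that \(\mathcal L\) acts on identical index sets. This is immediate because both systems refer to the same geometric carrier family.
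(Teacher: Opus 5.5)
Your proposal is correct and follows essentially the same route as the paper. You build carrierwise lower-triangular operators from conditions (i)--(ii) of Definition~\ref{def:uniformly-admissible-trace-change}, obtain a bounded inverse on the ambient products (the paper takes it from the reverse change (iii), and your forward substitution works equally well), restrict to \(\mathcal T_A^{m,p}\) via (iv), and extend the trace identity from smooth sections by density and continuity. The one adjustment is that this last step needs neither Hypothesis~\ref{hyp:ambient-sobolev-core} nor your localization fallback: smooth sections in \(W^{m,p}(M_1,E_1)\), not necessarily compactly supported, are already dense by the Meyers--Serrin argument, and that is all the density-and-continuity step requires.
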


\begin{proof}
For each carrier, Definition~\ref{def:uniformly-admissible-trace-change} gives a bounded lower-triangular operator on the visible Besov product whose diagonal blocks are boundedly invertible; the reverse coordinate change supplies its bounded inverse. The finite direct sum is therefore a bounded isomorphism of the ambient trace products and, by condition~\textup{(iv)}, restricts to \(\mathcal L_A^{m,p}:\mathcal T_A^{m,p}\to\widetilde{\mathcal T}_A^{m,p}\). The trace identities hold on smooth sections and extend to all of \(W^{m,p}\) by density and continuity.
\end{proof}

\begin{corollary}[Compact carriers]
\label{cor:compact-trace-coordinate-invariance}
If every carrier in \(\mathscr Z_A\) is compact, then any two smooth
choices of metric, connection, tubular coordinates, and normal
splitting are uniformly admissible at every fixed order \((m,p)\).
The associated complete trace systems are boundedly
equivalent whenever they encode the same intersection compatibility
conditions.
\end{corollary}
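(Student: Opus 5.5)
The plan is to reduce the statement to Proposition~\ref{prop:trace-coordinate-invariance}: check that any two smooth choices of auxiliary data satisfy conditions (i)--(iv) of Definition~\ref{def:uniformly-admissible-trace-change}. Compactness of each carrier will turn every uniformity requirement into a statement about finitely many smooth coefficients on a compact set.

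\textbf{Step 1: the lower-triangular relation.} Fix a carrier \(Z\) and two sets of data: metrics, connections \(\nabla,\widetilde\nabla\), tubular maps and normal splittings. On smooth sections I would expand the symmetrised normal covariant derivative \(\widetilde\gamma_Z^{(j)}u\) in the first system's coordinates by the chain rule and Leibniz rule, exactly as in the proof of Proposition~\ref{prop:transported-jet-well-defined}. The difference \(\widetilde\nabla-\nabla\) is a smooth endomorphism-valued one-form. The change of tubular map is a diffeomorphism germ fixing \(Z\), and its normal differential along \(Z\) is a smooth bundle isomorphism \(N_Z\to N_Z\). From these facts:
\begin{itemize}
\item the top-order term is \(A_{Z,j}=\operatorname{Sym}^j\) of the inverse transpose of that normal isomorphism, tensored with the identity on \(E_1|_Z\);
\item every lower-order contribution is a finite sum of tangential differential operators of order at most \(j-\ell\) applied to \(\gamma_Z^{(\ell)}u\), with smooth coefficients on \(Z\).
\end{itemize}
This gives the displayed relation. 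Reversing the roles of the two data sets gives the reverse relation, which is condition (iii).

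\textbf{Step 2: boundedness on Besov spaces.} This is the analytic core. Since \(Z\) is compact, I would cover it by finitely many charts with a partition of unity. Multiplication by a smooth bundle endomorphism with bounded derivatives is bounded on every \(B^{s}_{p,p}\), which gives condition (i) for \(A_{Z,j}\) and its inverse. A tangential differential operator of order \(d\le j-\ell\) with smooth coefficients maps \(B^{s}_{p,p}\) to \(B^{s-d}_{p,p}\). It therefore maps \(B^{m-\ell-c_Z/p}_{p,p}\) into \(B^{m-\ell-d-c_Z/p}_{p,p}\), which embeds continuously into \(B^{m-j-c_Z/p}_{p,p}\) because \(\ell+d\le j\) and \(Z\) is compact. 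This gives condition (ii).

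The main obstacle is the case of negative smoothness index: if \(m-j-c_Z/p\) were negative, multiplier and differential-operator bounds would need more care. Visible orders satisfy \(j\le r_Z(m,p)\), so \(m-j-c_Z/p>0\) and standard positive-index Besov results apply. I would cite Triebel's multiplier and localisation theorems for this.

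\textbf{Step 3: carrier products and the complete targets.} Take the finite direct sum over the carriers in \(\mathscr Z_A\) of the block lower-triangular operators from Steps 1 and 2. This is a bounded operator on \(\mathcal B_A^{m,p}\), and its bounded inverse comes from the reverse change. For condition (iv), I would use the hypothesis that both systems encode the same intersection compatibility conditions, reading it as saying that both complete targets are the ranges of their simultaneous trace maps. By the remark following Definition~\ref{def:uniformly-admissible-trace-change}, the lower-triangular isomorphism then maps one range onto the other:
\begin{itemize}
\item the identity \(\widetilde{\operatorname{Tr}}=L\operatorname{Tr}\) holds on smooth sections, and extends to all of \(W^{m,p}\) by density and continuity;
\item surjectivity of both trace maps then forces \(L\mathcal T_A^{m,p}=\widetilde{\mathcal T}_A^{m,p}\).
\end{itemize}
Proposition~\ref{prop:trace-coordinate-invariance} then yields the bounded equivalence.
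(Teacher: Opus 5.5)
Your proposal is correct and follows the paper's route. Compactness of each carrier makes the coefficients of the leading bundle isomorphisms and of the lower-order tangential operators uniformly bounded, which gives Besov boundedness of the lower-triangular change and of its reverse; finiteness of the carrier family and Proposition~\ref{prop:trace-coordinate-invariance} then give the bounded equivalence. You supply details the paper's proof leaves implicit: the order count \(\ell+d\le j\), positivity of the Besov index in the visible range, and the derivation of condition (iv) from \(\widetilde{\operatorname{Tr}}=L\operatorname{Tr}\) together with surjectivity of both trace maps.
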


\begin{proof}
On a compact carrier, the coefficients and finitely many derivatives arising from changes of connection, tubular coordinates, and normal splitting are uniformly bounded. Multiplication by the leading bundle isomorphisms and their inverses is bounded on the relevant Besov spaces, and the lower-order tangential operators have the required Besov mapping properties. The same holds for the reverse change; finiteness of the carrier family gives a boundedly invertible direct-sum transformation.
\end{proof}

Hence the closed Sobolev subspaces are invariant under uniformly admissible trace-coordinate changes; no such global claim is made on noncompact carriers without uniform bounds.

\subsection{Zero-trace directions}
\label{subsec:zero-trace-directions}

Sections supported away from the source incidence locus are
automatically extendable.

\begin{lemma}[Off-incidence extension]
\label{lem:off-incidence-extension}
\[
  C_c^\infty(M_1\setminus Z_A,E_1)
  \subseteq
  \mathfrak R_A^\infty.
\]
\end{lemma}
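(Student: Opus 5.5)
To show that every $s_1\in C_c^\infty(M_1\setminus Z_A,E_1)$ extends, we construct a compatible companion directly. Exact Whitney realisation is not needed. Set $K:=\operatorname{supp}s_1$, a compact subset of $M_1$ disjoint from $Z_A$. By Proposition~\ref{prop:exact-prolongation}, it suffices to show that the transported section $\sigma[s_1]=\Phi\circ s_1\circ f^{-1}$ on $U_2$ extends smoothly across $B_2$. The natural candidate is $s_2:=\sigma[s_1]$ on $U_2$ and $s_2:=0$ on $M_2\setminus U_2$. The task is to show that $\sigma[s_1]$ vanishes on a neighbourhood (in $U_2$) of every point of $B_2$. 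Then $s_2$ is smooth near $B_2$, and elsewhere it is either $\sigma[s_1]$ on the open set $U_2$ or $0$ on the interior of $M_2\setminus U_2$.

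Fix $y\in B_2$. Only the finitely many branches $a\in A_y$ have $y$ in the closure of $V_a$. For any other branch $b$, the point $y$ lies outside $\overline{V_b}$, so some neighbourhood of $y$ misses $V_b$. Since $U_2\cap N=\bigsqcup_a V_a$, we may shrink to a neighbourhood $O\subseteq N$ of $y$ with $O\cap U_2\subseteq\bigcup_{a\in A_y}V_a$. For $a\in A_y$ we have $y\in S_{a}$, and the limiting source point $\gamma_a(y)$ lies on the carrier $Z_{\lambda,B(a)}\subseteq Z_A$, where $\lambda$ is the stratum containing $y$. Because $K$ is compact and avoids $\gamma_a(y)$, and $g_a$ is continuous on $\Omega_a$, there is a neighbourhood $O_a\subseteq\Omega_a$ of $y$ with $g_a(O_a)\cap K=\varnothing$. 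On $V_a\cap O_a$ we have $f^{-1}=g_a$, so $\sigma[s_1]=\Psi_a(g_a^*s_1)=0$ there. Put $O':=O\cap\bigcap_{a\in A_y}O_a$. Then $\sigma[s_1]$ vanishes on $O'\cap U_2$, and hence $s_2=0$ on all of $O'$.

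It remains to assemble. By the previous step $s_2$ vanishes on an open set $\mathcal O\supseteq B_2$. The set $M_2=U_2\cup\mathcal O\cup(M_2\setminus\overline{U_2})$ is an open cover, since $\overline{U_2}\setminus U_2=B_2\subseteq\mathcal O$. On each piece $s_2$ is smooth, and the definitions agree on overlaps. Hence $s_2\in\Gamma^\infty(M_2,E_2)$ and $s_2|_{U_2}=\Phi\circ s_1\circ f^{-1}$, so $(s_1,s_2)$ is a compatible pair and $s_1\in\mathfrak R_A^\infty$.

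The main point to check is the second step. It uses that $\gamma_a(y)$ genuinely belongs to $Z_A$ for every $y\in S_a$, which holds because the carriers are the images of the $\gamma_B$ on all strata. It also uses the local finiteness of the branch system at $y$, which comes from the finiteness of $A$.
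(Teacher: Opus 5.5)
Your proof is correct and follows the same route as the paper: since $\gamma_a(S_a)\subseteq Z_A$ is disjoint from $\operatorname{supp}s_1$, continuity of $g_a$ makes each prolonged branch representative $\Psi_a(g_a^*s_1)$ vanish near $S_a$; finiteness of the branch system then gives that $\sigma[s_1]$ vanishes near $B_2$, and extension by zero produces the companion. Your pointwise version spells out the local-finiteness step that the paper compresses into one sentence, namely that near $y\in B_2$ only the branches in $A_y$ meet $U_2$.
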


\begin{proof}
Let \(s_1\in C_c^\infty(M_1\setminus Z_A,E_1)\) and choose a neighbourhood \(W\supset Z_A\) disjoint from \(\operatorname{supp}s_1\). Since \(g_a(S_a)=\gamma_a(S_a)\subset Z_A\), shrink each \(\Omega_a\) so that \(g_a(\Omega_a)\subset W\). Then every prolonged branch representative \(\Psi_a(g_a^*s_1)\) vanishes near \(S_a\). Finiteness of the branch system implies that the forced overlap section \(\sigma[s_1]\) vanishes near \(B_2\), so extension by zero to \(M_2\) is smooth and gives a compatible companion.
\end{proof}

\begin{corollary}
\label{cor:trace-kernel-in-closure}
Under
Hypothesis~\ref{hyp:complete-incidence-trace},
\[
  \ker\operatorname{Tr}_A^{m,p}
  \subseteq
  \overline{
    \mathfrak R_A^{\infty;m,p}
  }^{\,W^{m,p}(M_1,E_1)}.
\]
\end{corollary}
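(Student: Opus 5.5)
The inclusion follows from three facts: Hypothesis~\ref{hyp:complete-incidence-trace}(c), Lemma~\ref{lem:off-incidence-extension}, and the fact that closure is monotone. The plan is to show that the generating set of \(\ker\operatorname{Tr}_A^{m,p}\) already lies in the smooth Sobolev restriction image, and then pass to closures.

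First, by Hypothesis~\ref{hyp:complete-incidence-trace}(c),
\[
  \ker\operatorname{Tr}_A^{m,p}
  =
  \overline{C_c^\infty(M_1\setminus Z_A,E_1)}^{\,W^{m,p}(M_1,E_1)} .
\]
Second, every element of \(C_c^\infty(M_1\setminus Z_A,E_1)\) lies in \(\mathfrak R_A^\infty\) by Lemma~\ref{lem:off-incidence-extension}. It is also smooth with compact support, so it lies in \(W^{m,p}(M_1,E_1)\) for the fixed ambient density, metric and connection. For this step I would only check that compactly supported smooth sections have finite \(W^{m,p}\)-norm, which is immediate from compactness of support. Hence
\[
  C_c^\infty(M_1\setminus Z_A,E_1)\subseteq \mathfrak R_A^{\infty;m,p}.
\]

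Third, taking \(W^{m,p}\)-closures of both sides preserves the inclusion. Combining this with the first step gives
\[
  \ker\operatorname{Tr}_A^{m,p}
  \subseteq
  \overline{\mathfrak R_A^{\infty;m,p}}^{\,W^{m,p}(M_1,E_1)} .
\]

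No step is a genuine obstacle. The only point needing care is making sure the closure in Hypothesis~\ref{hyp:complete-incidence-trace}(c) is taken in the same ambient \(W^{m,p}(M_1,E_1)\)-topology as the target closure, and it is, since both use the fixed Sobolev norm of this section. Hypothesis~\ref{hyp:ambient-sobolev-core} is not needed for this inclusion.
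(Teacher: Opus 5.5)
Your proof is correct and follows the paper's argument exactly. Hypothesis~\ref{hyp:complete-incidence-trace}\textup{(c)} identifies $\ker\operatorname{Tr}_A^{m,p}$ with the $W^{m,p}$-closure of $C_c^\infty(M_1\setminus Z_A,E_1)$, and Lemma~\ref{lem:off-incidence-extension} places that generating set inside $\mathfrak R_A^{\infty}$, hence inside $\mathfrak R_A^{\infty;m,p}$. Your explicit check that compactly supported smooth sections lie in $W^{m,p}(M_1,E_1)$ supplies the one step the paper leaves implicit.
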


\begin{proof}
By
Hypothesis~\ref{hyp:complete-incidence-trace}\textup{(c)},
the trace kernel is the \(W^{m,p}\)-closure of
\(C_c^\infty(M_1\setminus Z_A,E_1)\).
Apply Lemma~\ref{lem:off-incidence-extension}.
\end{proof}

\subsection{Realisable and formal trace spaces}
\label{subsec:realizable-formal-traces}

Define the closed realisable trace space by
\[
  \mathcal X_A^{m,p}
  :=
  \overline{
    \operatorname{Tr}_A^{m,p}
    \left(
      \mathfrak R_A^{\infty;m,p}
    \right)
  }^{\,\mathcal T_A^{m,p}}.
\]
It consists of incidence traces approximable by traces of smooth
extendable sections.

Define the closed formal trace space by
\[
  \mathcal K_A^{m,p}
  :=
  \overline{
    \operatorname{Tr}_A^{m,p}
    \left(
      \mathfrak E_A^{\infty;m,p}
    \right)
  }^{\,\mathcal T_A^{m,p}}.
\]
Since
\(\mathfrak R_A^\infty \subseteq \mathfrak E_A^\infty\),
one has
\(\mathcal X_A^{m,p} \subseteq \mathcal K_A^{m,p}\).

The corresponding quotient-valued trace maps are
\[
  \mathfrak D_{A,\mathrm{real}}^{m,p}
  :=
  q_{\mathcal X}
  \circ
  \operatorname{Tr}_A^{m,p},
\]
and
\[
  \mathfrak D_{A,\mathrm{form}}^{m,p}
  :=
  q_{\mathcal K}
  \circ
  \operatorname{Tr}_A^{m,p},
\]
where
\[
  q_{\mathcal X}:
  \mathcal T_A^{m,p}
  \longrightarrow
  \mathcal T_A^{m,p}/\mathcal X_A^{m,p},
\]
and
\[
  q_{\mathcal K}:
  \mathcal T_A^{m,p}
  \longrightarrow
  \mathcal T_A^{m,p}/\mathcal K_A^{m,p}
\]
are the Banach quotient maps.

The quotient \(\mathcal W_A^{m,p} := \mathcal K_A^{m,p}/\mathcal X_A^{m,p}\) is the Sobolev residual Whitney defect. It measures the formal trace
directions approximable by smooth formally compatible sections but not
by smooth extendable sections.

Under a uniformly admissible change of trace coordinates, the
isomorphism \(L_A^{m,p}\) from
Proposition~\ref{prop:trace-coordinate-invariance}
maps the spaces \(\mathcal X_A^{m,p}, \qquad \mathcal K_A^{m,p}, \qquad \mathcal W_A^{m,p}\) onto their counterparts for the new trace system.

\subsection{An abstract trace-closure lemma}
\label{subsec:abstract-trace-closure}

\begin{lemma}[Trace-closure lemma]
\label{lem:abstract-trace-closure}
Let \(X\) and \(Y\) be Banach spaces, let
\[
  T:X\longrightarrow Y
\]
be a bounded surjection with a bounded linear right inverse
\[
  J:Y\longrightarrow X,
\]
and let \(R\subseteq X\) be a linear subspace satisfying
\[
  \ker T\subseteq\overline R^{\,X}.
\]
Then
\[
  \overline R^{\,X}
  =
  T^{-1}
  \left(
    \overline{T(R)}^{\,Y}
  \right).
\]
\end{lemma}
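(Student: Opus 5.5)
The plan is to prove the two inclusions separately. The forward inclusion needs only continuity of \(T\); the reverse inclusion uses the right inverse \(J\) to lift approximating traces and the hypothesis \(\ker T\subseteq\overline R^{\,X}\) to absorb the kernel component.

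\emph{Inclusion \(\overline R^{\,X}\subseteq T^{-1}(\overline{T(R)}^{\,Y})\).} Since \(T\) is bounded, \(T(\overline R^{\,X})\subseteq\overline{T(R)}^{\,Y}\). Concretely, if \(r_n\in R\) and \(r_n\to x\) in \(X\), then \(T r_n\to Tx\), so \(Tx\) lies in the closure of \(T(R)\).

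\emph{Inclusion \(T^{-1}(\overline{T(R)}^{\,Y})\subseteq\overline R^{\,X}\).} Let \(x\in X\) satisfy \(Tx\in\overline{T(R)}^{\,Y}\), and choose \(r_n\in R\) with \(Tr_n\to Tx\) in \(Y\). Decompose
\[
  x-r_n=\bigl(x-r_n-J(Tx-Tr_n)\bigr)+J(Tx-Tr_n).
\]
The first summand lies in \(\ker T\), because \(TJ=\mathrm{id}_Y\). By hypothesis it therefore lies in \(\overline R^{\,X}\), which is a closed linear subspace since \(R\) is linear. Hence \(x_n:=x-J(Tx-Tr_n)\in r_n+\overline R^{\,X}=\overline R^{\,X}\). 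Boundedness of \(J\) gives \(\|x-x_n\|_X\le\|J\|\,\|Tx-Tr_n\|_Y\to0\). Closedness of \(\overline R^{\,X}\) then yields \(x\in\overline R^{\,X}\).

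There is no real obstacle here. The only points to get right are using the linearity of \(R\), so that \(\overline R^{\,X}\) is a closed subspace and is stable under adding \(r_n\), and using the bounded right inverse rather than mere surjectivity, so that the correction \(J(Tx-Tr_n)\) goes to zero quantitatively. Surjectivity alone would also work via the open mapping theorem, but \(J\) makes the estimate immediate. In later use the lemma will be applied with \(X=W^{m,p}(M_1,E_1)\), \(T=\operatorname{Tr}_A^{m,p}\), \(J=\mathcal J_A^{m,p}\) and \(R=\mathfrak R_A^{\infty;m,p}\). The kernel hypothesis is then supplied by Corollary~\ref{cor:trace-kernel-in-closure}, and \(\overline{T(R)}^{\,Y}=\mathcal X_A^{m,p}\).
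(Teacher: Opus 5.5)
Your proof is correct and follows the paper's argument: you use the same decomposition \(x-r_n = w_n + JT(x-r_n)\) with \(w_n\in\ker T\subseteq\overline R^{\,X}\), and the same boundedness of \(J\) to make the correction term tend to zero. The only cosmetic difference is that the paper approximates \(w_n\) by some \(k_n\in R\) and exhibits \(r_n+k_n\in R\) converging to \(x\), whereas you use directly that \(\overline R^{\,X}\) is a closed linear subspace containing \(r_n+w_n\).
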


\begin{proof}
Continuity of \(T\) gives
\(\overline R^{\,X} \subseteq T^{-1} \left( \overline{T(R)}^{\,Y} \right)\).

Conversely, let
\(x\in T^{-1} \left( \overline{T(R)}^{\,Y} \right)\).
Choose \(r_n\in R\) such that
\(T r_n\longrightarrow Tx\).
Set \(v_n:=x-r_n\) and
\(w_n:=v_n-J T v_n\).
Then
\(T w_n=0\),
so
\(w_n\in\ker T \subseteq \overline R^{\,X}\).
Choose \(k_n\in R\) such that
\(\|k_n-w_n\|_X<\frac1n\).
Since
\(\|J T v_n\|_X\longrightarrow0\),
one has
\(x-(r_n+k_n) = w_n-k_n+J T v_n \longrightarrow0\).
As \(r_n+k_n\in R\), the reverse inclusion follows.
\end{proof}

\subsection{Exact Sobolev closure}
\label{subsec:exact-sobolev-closure}

\begin{theorem}[Exact Sobolev closure]
\label{thm:exact-sobolev-closure}
Assume
Hypotheses~\ref{hyp:embedded-source-carriers}
and \ref{hyp:complete-incidence-trace}. Then
\[
  \boxed{
  \overline{
    \mathfrak R_A^{\infty;m,p}
  }^{\,W^{m,p}(M_1,E_1)}
  =
  \left(
    \operatorname{Tr}_A^{m,p}
  \right)^{-1}
  \left(
    \mathcal X_A^{m,p}
  \right).
  }
\]
Equivalently,
\[
  \overline{
    \mathfrak R_A^{\infty;m,p}
  }^{\,W^{m,p}}
  =
  \ker
  \mathfrak D_{A,\mathrm{real}}^{m,p}.
\]
\end{theorem}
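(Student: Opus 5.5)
The plan is to apply the abstract trace-closure Lemma~\ref{lem:abstract-trace-closure} with
\[
  X=W^{m,p}(M_1,E_1),\qquad Y=\mathcal T_A^{m,p},\qquad T=\operatorname{Tr}_A^{m,p},\qquad J=\mathcal J_A^{m,p},\qquad R=\mathfrak R_A^{\infty;m,p}.
\]
Hypothesis~\ref{hyp:complete-incidence-trace} supplies exactly the Banach-space data needed. By (ii), \(T\) is bounded and surjective onto the closed subspace \(\mathcal T_A^{m,p}\), which is a Banach space. By (iii) and (b), \(J\) is a bounded linear right inverse, so \(TJ=\operatorname{id}_Y\). The subspace \(R\) is linear because \(\mathfrak R_A^\infty\) is the image of the linear restriction map and \(W^{m,p}\) is a linear space.

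The only remaining hypothesis of the lemma is \(\ker T\subseteq\overline R^{\,X}\), and this is precisely Corollary~\ref{cor:trace-kernel-in-closure}. That corollary combines condition (c) of the complete trace hypothesis with the off-incidence extension Lemma~\ref{lem:off-incidence-extension}. It is worth checking that compactly supported smooth sections lie in \(W^{m,p}\), so that they belong to \(R\) and not merely to \(\mathfrak R_A^\infty\); this is immediate for the fixed Sobolev realisation.

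Applying the lemma then gives
\[
  \overline R^{\,X}=T^{-1}\bigl(\overline{T(R)}^{\,Y}\bigr).
\]
Since \(\overline{T(R)}^{\,Y}\) is by definition \(\mathcal X_A^{m,p}\), this is the boxed identity. The equivalent kernel form follows because \(q_{\mathcal X}\) is the Banach quotient map by the closed subspace \(\mathcal X_A^{m,p}\), so
\[
  \ker\bigl(q_{\mathcal X}\circ T\bigr)=T^{-1}\bigl(\mathcal X_A^{m,p}\bigr).
\]

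Hypothesis~\ref{hyp:embedded-source-carriers} enters only through the definition of \(Z_A\) and the trace spaces; the argument itself uses no further geometry. I expect no real obstacle. The one point to be careful about is that the closure of \(T(R)\) is taken in \(\mathcal T_A^{m,p}\) rather than in the ambient product \(\mathcal B_A^{m,p}\). Because \(\mathcal T_A^{m,p}\) is closed and continuously embedded in \(\mathcal B_A^{m,p}\), these two closures coincide, so either reading yields the same space \(\mathcal X_A^{m,p}\).
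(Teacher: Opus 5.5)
Your proposal is correct and matches the paper's proof: apply Lemma~\ref{lem:abstract-trace-closure} with $X=W^{m,p}(M_1,E_1)$, $Y=\mathcal T_A^{m,p}$, $T=\operatorname{Tr}_A^{m,p}$, $R=\mathfrak R_A^{\infty;m,p}$, get $\ker T\subseteq\overline R^{\,X}$ from Corollary~\ref{cor:trace-kernel-in-closure}, and identify $\overline{T(R)}^{\,Y}$ with $\mathcal X_A^{m,p}$ by definition. Your extra checks are sound refinements that the paper leaves implicit: that compactly supported smooth sections lie in $W^{m,p}$, and that closures taken in $\mathcal T_A^{m,p}$ and in $\mathcal B_A^{m,p}$ agree, which holds by the open mapping theorem.
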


\begin{proof}
Apply
Lemma~\ref{lem:abstract-trace-closure}
with
\(X=W^{m,p}(M_1,E_1), \qquad Y=\mathcal T_A^{m,p}\),
\(T=\operatorname{Tr}_A^{m,p}, \qquad R=\mathfrak R_A^{\infty;m,p}\).
Corollary~\ref{cor:trace-kernel-in-closure}
supplies
\(\ker T\subseteq\overline R^{\,X}\),
and by definition,
\(\overline{T(R)}^{\,Y} = \mathcal X_A^{m,p}\).
\end{proof}

The remaining problem is therefore to identify the closed realisable trace space \(\mathcal X_A^{m,p}\).

\begin{theorem}[Analytic defect quotient]
\label{thm:analytic-defect-quotient}
Under the same hypotheses, the map
\[
  u
  \longmapsto
  \operatorname{Tr}_A^{m,p}u
  +
  \mathcal X_A^{m,p}
\]
induces a canonical Banach-space isomorphism
\[
  W^{m,p}(M_1,E_1)
  \Big/
  \overline{
    \mathfrak R_A^{\infty;m,p}
  }^{\,W^{m,p}}
  \cong
  \mathcal T_A^{m,p}/\mathcal X_A^{m,p}.
\]
\end{theorem}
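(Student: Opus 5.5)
The plan is to compose the trace map with the Banach quotient map and apply the first isomorphism theorem for bounded operators, with the kernel identified by Theorem~\ref{thm:exact-sobolev-closure}. Set
\[
  \Theta:=q_{\mathcal X}\circ\operatorname{Tr}_A^{m,p}:
  W^{m,p}(M_1,E_1)\longrightarrow \mathcal T_A^{m,p}/\mathcal X_A^{m,p},
\]
which is exactly \(\mathfrak D_{A,\mathrm{real}}^{m,p}\).

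First, \(\Theta\) is well defined as a map between Banach spaces. The space \(\mathcal X_A^{m,p}\) is closed in \(\mathcal T_A^{m,p}\) by definition, so the quotient is a Banach space and \(q_{\mathcal X}\) is bounded with norm at most one. Since \(\operatorname{Tr}_A^{m,p}\) is bounded by Hypothesis~\ref{hyp:complete-incidence-trace}, \(\Theta\) is bounded.

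Second, \(\Theta\) is surjective, because both \(\operatorname{Tr}_A^{m,p}\) and \(q_{\mathcal X}\) are surjective. Explicitly, a class \(t+\mathcal X_A^{m,p}\) is the image of \(\mathcal J_A^{m,p}t\).

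Third, \(\ker\Theta=(\operatorname{Tr}_A^{m,p})^{-1}(\mathcal X_A^{m,p})\), and Theorem~\ref{thm:exact-sobolev-closure} identifies this with \(\overline{\mathfrak R_A^{\infty;m,p}}^{\,W^{m,p}}\). That closure is a closed subspace, so the domain quotient is a Banach space as well. The induced map
\[
  \widehat\Theta:
  W^{m,p}/\overline{\mathfrak R_A^{\infty;m,p}}
  \longrightarrow
  \mathcal T_A^{m,p}/\mathcal X_A^{m,p}
\]
is therefore a bounded linear bijection, and its boundedness follows from the definition of the quotient norm.

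The only step needing care is bounded invertibility of \(\widehat\Theta\). I would prove it directly rather than through the open mapping theorem, using the right inverse. The map \(t+\mathcal X_A^{m,p}\mapsto \mathcal J_A^{m,p}t+\overline{\mathfrak R_A^{\infty;m,p}}\) is well defined, because \(\operatorname{Tr}_A^{m,p}\mathcal J_A^{m,p}x=x\in\mathcal X_A^{m,p}\) places \(\mathcal J_A^{m,p}\mathcal X_A^{m,p}\) inside the kernel. It is bounded by \(\|\mathcal J_A^{m,p}\|\), since one can take the infimum over representatives \(t\). It is a right inverse by \(\operatorname{Tr}_A^{m,p}\mathcal J_A^{m,p}=\mathrm{id}\), and hence it is the two-sided inverse of the bijection \(\widehat\Theta\). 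The open mapping theorem would serve equally well as a fallback.

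Canonicity holds because \(\widehat\Theta\) is determined by \(u\mapsto\operatorname{Tr}_A^{m,p}u+\mathcal X_A^{m,p}\) and involves no choice. It is also compatible with uniformly admissible trace changes through \(L_A^{m,p}\), which maps \(\mathcal X_A^{m,p}\) onto its counterpart.
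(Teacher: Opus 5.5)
Your proposal is correct and follows the paper's own argument. The composite \(q_{\mathcal X}\circ\operatorname{Tr}_A^{m,p}\) is bounded and surjective, its kernel is identified with \(\overline{\mathfrak R_A^{\infty;m,p}}^{\,W^{m,p}}\) by Theorem~\ref{thm:exact-sobolev-closure}, and the bounded inverse is induced by \(\mathcal J_A^{m,p}\). The only difference is that you write out the well-definedness and the norm bound \(\|\mathcal J_A^{m,p}\|\) for that inverse, which the paper asserts in a single sentence.
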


\begin{proof}
The induced map is bounded and surjective. By
Theorem~\ref{thm:exact-sobolev-closure},
its kernel is
\[
  \left(
    \operatorname{Tr}_A^{m,p}
  \right)^{-1}
  \left(
    \mathcal X_A^{m,p}
  \right)
  =
  \overline{
    \mathfrak R_A^{\infty;m,p}
  }^{\,W^{m,p}}.
\]
The Banach-space first isomorphism theorem gives the result. A bounded
inverse on the quotient is induced by the right inverse
\(\mathcal J_A^{m,p}\).
\end{proof}

If every carrier is an isolated point, the trace target and analytic
defect quotient are finite-dimensional. For positive-dimensional
carriers, they are generally constructed from Besov spaces and need
not be finite-dimensional.

\subsection{Formal trace bounds}
\label{subsec:formal-trace-bounds}

The inclusion \(\mathcal X_A^{m,p} \subseteq \mathcal K_A^{m,p}\) gives a universal formal upper bound on the closure.

\begin{corollary}[Formal trace bound]
\label{cor:formal-trace-bound}
Under
Hypotheses~\ref{hyp:embedded-source-carriers}
and \ref{hyp:complete-incidence-trace},
\[
  \overline{
    \mathfrak R_A^{\infty;m,p}
  }^{\,W^{m,p}}
  \subseteq
  \left(
    \operatorname{Tr}_A^{m,p}
  \right)^{-1}
  \left(
    \mathcal K_A^{m,p}
  \right)
  =
  \ker
  \mathfrak D_{A,\mathrm{form}}^{m,p}.
\]
\end{corollary}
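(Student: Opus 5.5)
The corollary follows by combining Theorem~\ref{thm:exact-sobolev-closure} with the inclusion \(\mathcal X_A^{m,p}\subseteq\mathcal K_A^{m,p}\) recorded in Subsection~\ref{subsec:realizable-formal-traces}. I will first recheck that inclusion directly. Corollary~\ref{cor:smooth-universal-necessity} gives \(\mathfrak R_A^\infty\subseteq\mathfrak E_A^\infty\). Intersecting both sides with \(W^{m,p}(M_1,E_1)\) yields \(\mathfrak R_A^{\infty;m,p}\subseteq\mathfrak E_A^{\infty;m,p}\). Applying the linear map \(\operatorname{Tr}_A^{m,p}\) and taking closures in \(\mathcal T_A^{m,p}\) gives \(\mathcal X_A^{m,p}\subseteq\mathcal K_A^{m,p}\).

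Next, Theorem~\ref{thm:exact-sobolev-closure} applies under Hypotheses~\ref{hyp:embedded-source-carriers} and \ref{hyp:complete-incidence-trace}. It identifies the closure \(\overline{\mathfrak R_A^{\infty;m,p}}^{\,W^{m,p}}\) with \((\operatorname{Tr}_A^{m,p})^{-1}(\mathcal X_A^{m,p})\). Since taking preimages preserves inclusions, this set is contained in \((\operatorname{Tr}_A^{m,p})^{-1}(\mathcal K_A^{m,p})\).

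Finally, I will verify the equality with \(\ker\mathfrak D_{A,\mathrm{form}}^{m,p}\). The map \(\mathfrak D_{A,\mathrm{form}}^{m,p}=q_{\mathcal K}\circ\operatorname{Tr}_A^{m,p}\) is a composition, and \(\ker q_{\mathcal K}=\mathcal K_A^{m,p}\). Hence \(u\) lies in the kernel exactly when \(\operatorname{Tr}_A^{m,p}u\in\mathcal K_A^{m,p}\). Because \(\mathcal K_A^{m,p}\) is closed by definition, the Banach quotient is well defined.

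There is no real obstacle. The only point requiring care is that the inclusion holds at the level of the closed spaces, and this is immediate because closure and preimage are both monotone. An alternative that avoids Theorem~\ref{thm:exact-sobolev-closure} also works: \((\operatorname{Tr}_A^{m,p})^{-1}(\mathcal K_A^{m,p})\) is closed by continuity of \(\operatorname{Tr}_A^{m,p}\), and it contains \(\mathfrak R_A^{\infty;m,p}\), so it contains the closure of that set.
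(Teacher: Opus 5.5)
Your proposal is correct and follows the paper's argument: the paper derives this corollary directly from Theorem~\ref{thm:exact-sobolev-closure} together with the inclusion \(\mathcal X_A^{m,p}\subseteq\mathcal K_A^{m,p}\), which comes from \(\mathfrak R_A^\infty\subseteq\mathfrak E_A^\infty\). Your alternative is also valid and more economical. The preimage of the closed set \(\mathcal K_A^{m,p}\) under the bounded map \(\operatorname{Tr}_A^{m,p}\) is closed and contains \(\mathfrak R_A^{\infty;m,p}\), so the inclusion needs only boundedness of the trace map, not the kernel identity or the right inverse from Hypothesis~\ref{hyp:complete-incidence-trace}.
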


The possible strictness is measured by
\(\mathcal W_A^{m,p} = \mathcal K_A^{m,p}/\mathcal X_A^{m,p}\).

\begin{proposition}[Criterion for formal exactness]
\label{prop:criterion-formal-exactness}
The following are equivalent:

\begin{enumerate}[label=\textup{(\roman*)}]
  \item
        \[
          \mathcal W_A^{m,p}=0;
        \]

  \item
        \[
          \mathcal X_A^{m,p}
          =
          \mathcal K_A^{m,p};
        \]

  \item
        \[
          \overline{
            \mathfrak R_A^{\infty;m,p}
          }^{\,W^{m,p}}
          =
          \left(
            \operatorname{Tr}_A^{m,p}
          \right)^{-1}
          \left(
            \mathcal K_A^{m,p}
          \right).
        \]
\end{enumerate}
\end{proposition}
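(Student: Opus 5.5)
The equivalence (i)\(\Leftrightarrow\)(ii) is purely definitional. \(\mathcal X_A^{m,p}\) and \(\mathcal K_A^{m,p}\) are closed subspaces of the Banach space \(\mathcal T_A^{m,p}\), and \(\mathcal X_A^{m,p}\subseteq\mathcal K_A^{m,p}\). The quotient \(\mathcal W_A^{m,p}=\mathcal K_A^{m,p}/\mathcal X_A^{m,p}\) is therefore the zero space exactly when the two subspaces coincide. Here I will note that the quotient is taken as a vector space (or a Banach quotient, since \(\mathcal X_A^{m,p}\) is closed); either reading gives the same vanishing criterion.

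For (ii)\(\Rightarrow\)(iii), I will invoke Theorem~\ref{thm:exact-sobolev-closure} under the standing Hypotheses~\ref{hyp:embedded-source-carriers} and \ref{hyp:complete-incidence-trace}. It gives \(\overline{\mathfrak R_A^{\infty;m,p}}^{\,W^{m,p}}=(\operatorname{Tr}_A^{m,p})^{-1}(\mathcal X_A^{m,p})\). Substituting \(\mathcal X_A^{m,p}=\mathcal K_A^{m,p}\) yields (iii) immediately.

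For (iii)\(\Rightarrow\)(ii), I will again combine Theorem~\ref{thm:exact-sobolev-closure} with (iii) to get
\[
  (\operatorname{Tr}_A^{m,p})^{-1}(\mathcal X_A^{m,p})
  =
  (\operatorname{Tr}_A^{m,p})^{-1}(\mathcal K_A^{m,p}).
\]
The only point needing care is to cancel the preimage. This uses surjectivity of \(\operatorname{Tr}_A^{m,p}\) from Hypothesis~\ref{hyp:complete-incidence-trace}, since for a surjection \(T\) one has \(T(T^{-1}(Y_0))=Y_0\) for every subset \(Y_0\) of the target. Applying \(\operatorname{Tr}_A^{m,p}\) to both sides therefore gives \(\mathcal X_A^{m,p}=\mathcal K_A^{m,p}\). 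Concretely, any \(\kappa\in\mathcal K_A^{m,p}\) can be written as \(\kappa=\operatorname{Tr}_A^{m,p}\mathcal J_A^{m,p}\kappa\), and \(\mathcal J_A^{m,p}\kappa\) lies in the right-hand preimage, hence in the left-hand one, so \(\kappa\in\mathcal X_A^{m,p}\).

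There is no genuine obstacle. The only step where hypotheses are actually used is this cancellation, which requires surjectivity of the trace map (or its right inverse); I will point that out explicitly so the reader sees why the hypotheses of Theorem~\ref{thm:exact-sobolev-closure} are being retained in the proposition.
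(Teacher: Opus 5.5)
Your proposal is correct and follows the paper's proof: (i)\(\Leftrightarrow\)(ii) comes from the definition of \(\mathcal W_A^{m,p}\), (ii)\(\Rightarrow\)(iii) from Theorem~\ref{thm:exact-sobolev-closure}, and (iii)\(\Rightarrow\)(ii) from applying the surjective map \(\operatorname{Tr}_A^{m,p}\) to the two equal preimages. Your argument via the right inverse \(\mathcal J_A^{m,p}\) is just an explicit unpacking of that same surjectivity step.
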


\begin{proof}
The equivalence of \textup{(i)} and \textup{(ii)} follows from the
definition of \(\mathcal W_A^{m,p}\). The exact Sobolev closure theorem
gives
\(\textup{(ii)} \Longrightarrow \textup{(iii)}\).

Conversely, assume \textup{(iii)}. The exact closure theorem gives
\[
  \left(
    \operatorname{Tr}_A^{m,p}
  \right)^{-1}
  \left(
    \mathcal X_A^{m,p}
  \right)
  =
  \left(
    \operatorname{Tr}_A^{m,p}
  \right)^{-1}
  \left(
    \mathcal K_A^{m,p}
  \right).
\]
Applying the surjective map
\(\operatorname{Tr}_A^{m,p}\) to both sides yields
\(\mathcal X_A^{m,p} = \mathcal K_A^{m,p}\).
\end{proof}

\subsection{Conically tame Sobolev closure}
\label{subsec:conically-tame-sobolev-closure}

Suppose that the multiple-incidence locus is a finite clean conically
tame system. By
Theorem~\ref{thm:conically-tame-smooth-realisation},
\(\mathfrak R_A^\infty = \mathfrak E_A^\infty\).
\(\mathfrak R_A^{\infty;m,p} = \mathfrak E_A^{\infty;m,p}\),
and hence
\(\mathcal X_A^{m,p} = \mathcal K_A^{m,p}\).

\begin{theorem}[Conically tame Sobolev closure]
\label{thm:conically-tame-sobolev-closure}
Assume
Hypotheses~\ref{hyp:embedded-source-carriers}
and \ref{hyp:complete-incidence-trace}, and suppose that the
multiple-incidence locus is a finite clean conically tame system. Then
\[
  \boxed{
  \overline{
    \mathfrak R_A^{\infty;m,p}
  }^{\,W^{m,p}(M_1,E_1)}
  =
  \left(
    \operatorname{Tr}_A^{m,p}
  \right)^{-1}
  \left(
    \mathcal K_A^{m,p}
  \right).
  }
\]
Equivalently,
\[
  \overline{
    \mathfrak R_A^{\infty;m,p}
  }^{\,W^{m,p}}
  =
  \ker
  \mathfrak D_{A,\mathrm{form}}^{m,p}.
\]
\end{theorem}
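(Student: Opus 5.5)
The argument is a short combination of the exact Sobolev closure theorem with the conically tame smooth realisation. First, by Theorem~\ref{thm:conically-tame-smooth-realisation}, the finite clean conically tame hypothesis gives the set-theoretic identity \(\mathfrak R_A^\infty=\mathfrak E_A^\infty\) inside \(\Gamma^\infty(M_1,E_1)\). Intersecting both sides with \(W^{m,p}(M_1,E_1)\) yields \(\mathfrak R_A^{\infty;m,p}=\mathfrak E_A^{\infty;m,p}\). Applying \(\operatorname{Tr}_A^{m,p}\) and taking closures in \(\mathcal T_A^{m,p}\) then gives \(\mathcal X_A^{m,p}=\mathcal K_A^{m,p}\) directly from the definitions of the closed realisable and formal trace spaces. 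Equivalently, the Sobolev residual Whitney defect \(\mathcal W_A^{m,p}\) vanishes.

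Second, I invoke Theorem~\ref{thm:exact-sobolev-closure}, which is available under Hypotheses~\ref{hyp:embedded-source-carriers} and \ref{hyp:complete-incidence-trace}. It gives
\[
\overline{\mathfrak R_A^{\infty;m,p}}^{\,W^{m,p}}=(\operatorname{Tr}_A^{m,p})^{-1}(\mathcal X_A^{m,p}).
\]
Substituting \(\mathcal X_A^{m,p}=\mathcal K_A^{m,p}\) yields the boxed identity. Alternatively, one can cite the implication (ii)\(\Rightarrow\)(iii) of Proposition~\ref{prop:criterion-formal-exactness}. The kernel formulation follows because \(\ker\mathfrak D_{A,\mathrm{form}}^{m,p}=\ker(q_{\mathcal K}\circ\operatorname{Tr}_A^{m,p})=(\operatorname{Tr}_A^{m,p})^{-1}(\mathcal K_A^{m,p})\), using that \(\mathcal K_A^{m,p}\) is closed so the Banach quotient map has kernel exactly \(\mathcal K_A^{m,p}\).

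The only step requiring care is the first. Theorem~\ref{thm:conically-tame-smooth-realisation} is a purely set-theoretic statement about smooth sections with no Sobolev control. That suffices here, because both \(\mathcal X_A^{m,p}\) and \(\mathcal K_A^{m,p}\) are defined as closures of traces of \emph{smooth} sections already lying in \(W^{m,p}\); no bounded extension operator is needed. I would also check that the conically tame hypothesis refers to the same fixed branch system and prolongation germs used to define \(\mathfrak E_A^\infty\), so that the two sets are literally equal rather than merely analogous.
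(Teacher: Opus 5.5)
Your proposal is correct and follows the paper's proof: conical tameness gives \(\mathfrak R_A^\infty=\mathfrak E_A^\infty\), hence \(\mathfrak R_A^{\infty;m,p}=\mathfrak E_A^{\infty;m,p}\) and \(\mathcal X_A^{m,p}=\mathcal K_A^{m,p}\), and substituting this into the exact Sobolev closure theorem yields the boxed identity. Your kernel reformulation, which uses the closedness of \(\mathcal K_A^{m,p}\), and your remark that only the set-theoretic smooth identity is needed, with no bounded extension operator, also match the paper.
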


\begin{proof}
Apply
Theorem~\ref{thm:exact-sobolev-closure}
and
\(\mathcal X_A^{m,p} = \mathcal K_A^{m,p}\).
\end{proof}

\begin{corollary}[Formal analytic defect quotient]
\label{cor:formal-analytic-defect-quotient}
Under the hypotheses of
Theorem~\ref{thm:conically-tame-sobolev-closure},
\[
  W^{m,p}(M_1,E_1)
  \Big/
  \overline{
    \mathfrak R_A^{\infty;m,p}
  }^{\,W^{m,p}}
  \cong
  \mathcal T_A^{m,p}/\mathcal K_A^{m,p}.
\]
\end{corollary}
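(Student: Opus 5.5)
The statement is a direct consequence of two earlier results, and I would prove it by a short chain of identifications rather than a new argument. First, Theorem~\ref{thm:conically-tame-sobolev-closure} applies under the stated hypotheses. It gives
\[
  \overline{\mathfrak R_A^{\infty;m,p}}^{\,W^{m,p}}
  =
  \left(\operatorname{Tr}_A^{m,p}\right)^{-1}\left(\mathcal K_A^{m,p}\right).
\]
This holds because conical tameness yields \(\mathfrak R_A^\infty=\mathfrak E_A^\infty\) by Theorem~\ref{thm:conically-tame-smooth-realisation}. Intersecting with \(W^{m,p}\), taking traces and closing then gives \(\mathcal X_A^{m,p}=\mathcal K_A^{m,p}\).

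Second, I would invoke Theorem~\ref{thm:analytic-defect-quotient}. Its hypotheses, Hypotheses~\ref{hyp:embedded-source-carriers} and \ref{hyp:complete-incidence-trace}, are included in the present assumptions. It supplies a canonical Banach isomorphism
\[
  W^{m,p}/\overline{\mathfrak R_A^{\infty;m,p}}\;\cong\;\mathcal T_A^{m,p}/\mathcal X_A^{m,p},
\]
induced by \(u\mapsto \operatorname{Tr}_A^{m,p}u+\mathcal X_A^{m,p}\). Substituting \(\mathcal X_A^{m,p}=\mathcal K_A^{m,p}\) gives the claimed isomorphism, now induced by \(u\mapsto\operatorname{Tr}_A^{m,p}u+\mathcal K_A^{m,p}\), i.e. by \(\mathfrak D_{A,\mathrm{form}}^{m,p}\).

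Alternatively, one can argue directly. The map \(q_{\mathcal K}\circ\operatorname{Tr}_A^{m,p}\) is bounded. It is surjective, since \(\operatorname{Tr}_A^{m,p}\) is surjective by Hypothesis~\ref{hyp:complete-incidence-trace}. Its kernel is the closure, by Theorem~\ref{thm:conically-tame-sobolev-closure}. The open mapping theorem, or explicitly the bounded right inverse \(\mathcal J_A^{m,p}\) descended to quotients, then gives boundedness of the inverse.

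There is no substantive obstacle. The only point requiring care is that \(\mathcal K_A^{m,p}\) is closed, so that the quotient is a Banach space. This holds because it is defined as a closure in \(\mathcal T_A^{m,p}\).
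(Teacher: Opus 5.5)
Your proof is correct and follows the paper's route. Theorem~\ref{thm:conically-tame-smooth-realisation} gives $\mathfrak R_A^\infty=\mathfrak E_A^\infty$, hence $\mathcal X_A^{m,p}=\mathcal K_A^{m,p}$, and substituting this into Theorem~\ref{thm:analytic-defect-quotient} yields the isomorphism; your direct variant via $q_{\mathcal K}\circ\operatorname{Tr}_A^{m,p}$ and the descended right inverse $\mathcal J_A^{m,p}$ is the same argument unpacked.
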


No bounded smooth extension operator is required. Conical tameness is
used only through the set-theoretic identity
\(\mathfrak R_A^\infty = \mathfrak E_A^\infty\).
Sobolev approximation is supplied by the complete trace package,
including the bounded right inverse
\(\mathcal J_A^{m,p}\),
rather than by continuity of the smooth conical construction.

\subsection{Full formal data and finite visible traces}
\label{subsec:full-formal-visible-traces}

The formal trace space is defined from
\(\mathfrak E_A^\infty = \bigcap_{k\geq0} \ker\mathfrak d_A^k\),
not from one finite transported-jet equaliser.

Let \(r := \max_{Z\in\mathscr Z_A} r_Z(m,p)\) when at least one visible trace exists. Even though the trace map
contains no normal derivative above order \(r\), one cannot generally
replace \(\mathfrak E_A^\infty\) by
\(\mathfrak E_A^r = \ker\mathfrak d_A^r\).
Higher transported-jet equations may impose conditions on coefficients
already represented in the finite trace target. Equivalently,
projection of the full formal equaliser to visible order need not equal
the naive equaliser obtained by truncating the compatibility equations
at that order.

Finite Sobolev visibility need not imply finite formal determination; the next section treats a controlled diagonal transport class.

\section{Visibility, explicit closures, and form domains}
\label{sec:visibility}

Finite Sobolev closure is governed by the trace threshold, with surviving traces constrained to the transport equaliser.

Throughout this section, the source carriers satisfy
Hypothesis~\ref{hyp:embedded-source-carriers}, and every invocation of
the abstract Sobolev closure theorem assumes
Hypothesis~\ref{hyp:complete-incidence-trace}.

\subsection{The noncritical visibility threshold}
\label{subsec:visibility-threshold}

Let \(Z\subseteq M_1\) be a closed embedded source carrier of codimension \(c\). Fix
\(m\in\mathbb N, \qquad 1<p<\infty\).
An order-\(j\) normal derivative has a bounded \(W^{m,p}\)-trace in
the noncritical range
\(j<m-\frac{c}{p}\).
The trace takes values in
\(B_{p,p}^{\,m-j-c/p} \left( Z; E_1|_Z\otimes\operatorname{Sym}^jN_Z^* \right)\)
\citep{Triebel1992,JonssonWallin1984,GrosseSchneider2013}.

\begin{proposition}[Closedness of visible constraints]
\label{prop:visibility-closed}
Let
\[
  j<m-\frac{c}{p},
\]
let
\[
  F\longrightarrow Z
\]
be a smooth vector bundle, and let
\[
  L:
  E_1|_Z\otimes\operatorname{Sym}^jN_Z^*
  \longrightarrow
  F
\]
be a smooth vector-bundle morphism. Assume that fibrewise application
of \(L\) induces a bounded operator
\[
  L_*:
  B_{p,p}^{\,m-j-c/p}
  \left(
    Z;
    E_1|_Z\otimes\operatorname{Sym}^jN_Z^*
  \right)
  \longrightarrow
  B_{p,p}^{\,m-j-c/p}(Z;F).
\]
Then
\[
  \left\{
    u\in W^{m,p}(M_1,E_1):
    L\gamma_Z^{(j)}u=0
  \right\}
\]
is a closed linear subspace of \(W^{m,p}(M_1,E_1)\).

More generally, the common kernel of any finite system of bounded
linear equations involving visible incidence traces is closed.
\end{proposition}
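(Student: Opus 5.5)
The plan is to express the constrained set as the kernel of a composition of bounded linear maps. Closedness then follows at once, since the preimage of the closed set \(\{0\}\) under a continuous linear map is a closed linear subspace.

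First I will record that in the noncritical range \(j<m-c/p\) the normal trace
\[
  \gamma_Z^{(j)}:
  W^{m,p}(M_1,E_1)
  \longrightarrow
  B_{p,p}^{\,m-j-c/p}
  \left(
    Z;
    E_1|_Z\otimes\operatorname{Sym}^jN_Z^*
  \right)
\]
is bounded, by the trace theorem cited just before the statement. Although it is defined first on smooth sections, it extends by density and continuity to all of \(W^{m,p}\). Next, the hypothesis on \(L\) says precisely that \(L_*\) is bounded between the stated Besov spaces. The composition \(L_*\circ\gamma_Z^{(j)}\) is therefore a bounded linear map into \(B_{p,p}^{\,m-j-c/p}(Z;F)\).

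One point needs care: the condition \(L\gamma_Z^{(j)}u=0\) for a general Sobolev \(u\) has to be read as \(L_*\gamma_Z^{(j)}u=0\) in the Besov target. On smooth sections the fibrewise and the extended definitions agree, so this reading is consistent. The displayed set is then \(\ker(L_*\gamma_Z^{(j)})\), and this kernel is closed.

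For the general statement, take finitely many bounded linear maps \(\Lambda_i\) defined on the visible trace spaces, such as the components of \(\operatorname{Tr}_A^{m,p}\) or finite systems mixing traces of several orders and carriers. Each \(\Lambda_i\circ\operatorname{Tr}\) is bounded on \(W^{m,p}\). The common kernel is a finite intersection of closed subspaces, hence closed; equivalently, it is the kernel of the bounded map into the finite direct sum of the targets.

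I do not expect a genuine obstacle. The only delicate point is making sure that the trace constraint is interpreted through the bounded extension of the trace operator rather than pointwise, and that this relies on the noncritical hypothesis \(j<m-c/p\) to guarantee boundedness.
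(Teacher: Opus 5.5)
Your proposal is correct and follows essentially the same route as the paper. There the set is the kernel of the bounded composition \(L_*\gamma_Z^{(j)}\), and a finite system of such conditions is the kernel of the corresponding finite direct-sum map.
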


\begin{proof}
The composition \(L_*\gamma_Z^{(j)}\) is bounded. Its kernel is therefore closed. A finite family of such
conditions is the kernel of the corresponding finite direct-sum map.
\end{proof}

\begin{remark}
The bounded-multiplier assumption is automatic when \(Z\) is compact.
For noncompact carriers it follows, for example, from the usual
uniform multiplier hypotheses in a bounded-geometry Besov calculus.
Smoothness of \(L\) alone does not imply boundedness on a global Besov
space.
\end{remark}

Every transported compatibility equation that can be written
entirely in terms of bounded visible trace coordinates survives
Sobolev completion.

The opposite noncritical range is locally erasable. In tubular
coordinates
\((y,z) \in \mathbb R^{n-c}\times\mathbb R^c, \qquad Z=\{z=0\}\),
let
\[
  a
  \in
  C_c^\infty
  \left(
    \mathbb R^{n-c};
    F\otimes
    \operatorname{Sym}^j(\mathbb R^c)^*
  \right)
\]
be a prescribed order-\(j\) normal trace coefficient. The construction
of Appendix~\ref{app:sobolev-perturbations} gives smooth compactly
supported sections \(v_\varepsilon\) satisfying
\(\gamma_Z^{(\ell)}v_\varepsilon=0 \qquad (0\leq\ell<j)\),
\(\gamma_Z^{(j)}v_\varepsilon=a\),
and
\(\|v_\varepsilon\|_{W^{m,p}} \leq C \varepsilon^{\,j-m+c/p}\).

\begin{theorem}[Noncritical visibility and erasure]
\label{thm:visibility-erasure}
Let an incidence condition contain a pure order-\(j\) normal derivative
constraint along a codimension-\(c\) carrier.

\begin{enumerate}[label=\textup{(\roman*)}]
  \item If
        \[
          j<m-\frac{c}{p},
        \]
        the trace is bounded on \(W^{m,p}\), and every bounded linear
        constraint on that trace is closed.

  \item If
        \[
          j>m-\frac{c}{p},
        \]
        and the order-\(j\) coefficient can be varied independently of
        all lower visible trace data, then the constraint is locally
        erasable by smooth perturbations converging to zero in
        \(W^{m,p}\).
\end{enumerate}
\end{theorem}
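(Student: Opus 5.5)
Part (i) follows directly from the trace theory already recorded. In the noncritical range \(j<m-c/p\), the normal trace \(\gamma_Z^{(j)}\) is bounded from \(W^{m,p}(M_1,E_1)\) into \(B_{p,p}^{\,m-j-c/p}(Z;E_1|_Z\otimes\operatorname{Sym}^jN_Z^*)\). Any bounded linear constraint on this trace is a bounded functional, or bounded map, composed with \(\gamma_Z^{(j)}\). Its kernel is therefore closed. This is exactly Proposition~\ref{prop:visibility-closed}; finite systems are handled by its direct-sum form. I would only make explicit that ``bounded linear constraint'' means a bounded operator on the visible Besov trace space.

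For part (ii) I would use the perturbations \(v_\varepsilon\) from Appendix~\ref{app:sobolev-perturbations}. Fix \(u\in W^{m,p}\), or in the dense class of smooth sections, whose order-\(j\) coefficient \(a\) violates the constraint near a point of \(Z\). By the independence hypothesis, I choose a target coefficient \(a'\) that satisfies the constraint and has the same lower visible data. I localise \(a-a'\) by a cutoff in the tubular coordinates \((y,z)\) so that it lies in \(C_c^\infty\). The appendix then gives smooth compactly supported \(v_\varepsilon\) with \(\gamma_Z^{(\ell)}v_\varepsilon=0\) for \(\ell<j\), \(\gamma_Z^{(j)}v_\varepsilon=a'-a\), and
\[
\|v_\varepsilon\|_{W^{m,p}}\le C\varepsilon^{\,j-m+c/p}.
\]
Since \(j>m-c/p\), the exponent is positive, so \(v_\varepsilon\to0\). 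The section \(u+v_\varepsilon\) is smooth, has unchanged lower traces, satisfies the order-\(j\) constraint, and converges to \(u\). Hence the constraint is not closed, and its closure contains \(u\). By density this extends to all of \(W^{m,p}\).

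The main obstacle is that \(v_\varepsilon\) may alter other traces that the incidence condition uses, namely higher normal orders and cross terms. That is exactly why the hypothesis requires the order-\(j\) coefficient to vary independently of all lower visible data. I would check that the appendix construction has the form \(v_\varepsilon=\chi(z/\varepsilon)\,a(y)\,z^j/j!\), whose lower normal derivatives vanish on \(Z\). Its higher derivatives on \(Z\) are also determined and can be set to zero by choosing \(\chi=1\) near \(0\). The scaling estimate comes from counting \(\varepsilon^{j-|\beta|}\) against the measure factor \(\varepsilon^{c/p}\), with the worst term at \(|\beta|=m\). With that choice, the perturbation changes only the single pure order-\(j\) coefficient, which is what locality of the erasure needs.
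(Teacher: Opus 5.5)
Your proposal is correct and is essentially the paper's proof: part (i) is Proposition~\ref{prop:visibility-closed}, and part (ii) uses the jet-insertion sections of Lemma~\ref{lem:general-jet-insertion-scaling}, whose lower normal traces vanish, whose order-\(j\) trace is prescribed, and whose \(W^{m,p}\)-norm is \(O(\varepsilon^{\,j-m+c/p})\to0\), transferred to the manifold by tubular charts and frames as in Corollary~\ref{cor:supercritical-erasure}. Your closing ``by density'' step goes beyond the local claim of the theorem: it would also need Hypothesis~\ref{hyp:ambient-sobolev-core} and a finite partition of unity, because your cutoff enforces the constraint only locally and the order-\(j\) trace is not defined on general \(W^{m,p}\) elements in this range.
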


\begin{proof}
Part \textup{(i)} is
Proposition~\ref{prop:visibility-closed}.
For part \textup{(ii)}, the inserted sections satisfy
\(\|v_\varepsilon\|_{W^{m,p}} \leq C \varepsilon^{\,j-m+c/p} \longrightarrow0\).
Their lower normal traces vanish, while the prescribed order-\(j\)
trace remains fixed. Localisation and bundle trivialisation give the
manifold statement. See
Lemma~\ref{lem:general-jet-insertion-scaling}
and
Corollary~\ref{cor:supercritical-erasure}.
\end{proof}

No universal assertion is made at
\(j=m-\frac{c}{p}\).
At the critical index, continuity and removability depend on the
precise trace scale, the geometry of the carrier, and capacity
\citep{Mazya2011Sobolev,AdamsHedberg1996}.

\subsection{A single diagonal full-jet transport carrier}
\label{subsec:constant-transport}

Assume now that the source incidence locus consists of one closed
embedded carrier \(Z\subseteq M_1\) of codimension \(c\), and that no carrier-intersection conditions are
present. Assume also that the complete trace target is the full
visible normal-trace product
\[
  \mathcal T_{A,Z}^{m,p}
  =
  \bigoplus_{j=0}^{r_Z}
  B_{p,p}^{\,m-j-c/p}
  \left(
    Z;
    E_1|_Z\otimes
    \operatorname{Sym}^jN_Z^*
  \right),
\]
where
\(r_Z=r_Z(m,p)\).
In this single-carrier setting, write
\(\mathcal K_{A,Z}^{m,p} := \mathcal K_A^{m,p}\).

Fix a tubular neighbourhood \(\mathcal U_Z\subseteq M_1\) of \(Z\) and a bundle trivialisation \(E_1|_{\mathcal U_Z} \cong \mathcal U_Z\times F\) for a finite-dimensional real or complex vector space \(F\). Suppose
that the zeroth-order branch transports are represented in this
trivialisation by fixed invertible maps
\(A_a\in\operatorname{GL}(F), \qquad a\in A_Z\).
Define
\[
  F_{\mathrm{eq}}
  :=
  \left\{
    v\in F:
    A_av=A_bv
    \text{ for all }a,b\in A_Z
  \right\},
\]
and let
\[
  \widetilde F_{\mathrm{eq}}
  :=
  \mathcal U_Z\times F_{\mathrm{eq}}
  \subseteq
  E_1|_{\mathcal U_Z}.
\]

Constancy of the maps \(A_a\) at order zero is not sufficient for the
formulas below. Higher transported jets may also involve derivatives
of the base reparametrisations, derivatives of the fibre transition,
and lower-order mixing. We therefore impose the complete condition
actually used.

\begin{hypothesis}[Diagonal full-jet transport]
\label{hyp:diagonal-full-jet-transport}
The tubular trivialisation and the connection defining the normal
trace system satisfy the following conditions.

\begin{enumerate}[label=\textup{(\roman*)}]
  \item The subbundle
        \[
          \widetilde F_{\mathrm{eq}}
          \subseteq
          E_1|_{\mathcal U_Z}
        \]
        is preserved by the connection.

  \item For every \(q\in\mathbb N_0\) and every smooth section \(u\),
        the order-\(q\) transported compatibility equations along
        \(Z\) are equivalent to
        \[
          \gamma_Z^{(j)}u
          \in
          F_{\mathrm{eq}}
          \otimes
          \operatorname{Sym}^jN_Z^*
          \qquad
          (0\leq j\leq q).
        \]
\end{enumerate}

Equivalently, under the chosen jet splitting, every branch acts on
normal-derivative level \(j\) by
\[
  A_a\otimes\operatorname{id}_{\operatorname{Sym}^jN_Z^*},
\]
with no coupling between distinct normal orders.
\end{hypothesis}

This hypothesis holds, for example, when the prolonged base maps agree
to all orders along \(Z\), the fibre transports differ by constant
matrices \(A_a\), and the chosen tubular trivialisation and connection
make these matrices parallel. It is strictly stronger than constancy
of the limiting fibre maps alone.

The reverse trace-space inclusion also requires a density statement
for the chosen global Besov model.

\begin{hypothesis}[Smooth equaliser-trace density]
\label{hyp:smooth-equaliser-trace-density}
For every
\[
  0\leq j\leq r_Z,
\]
the space
\[
  \Gamma_c^\infty
  \left(
    Z;
    F_{\mathrm{eq}}
    \otimes
    \operatorname{Sym}^jN_Z^*
  \right)
\]
is dense in
\[
  B_{p,p}^{\,m-j-c/p}
  \left(
    Z;
    F_{\mathrm{eq}}
    \otimes
    \operatorname{Sym}^jN_Z^*
  \right).
\]
\end{hypothesis}

The density hypothesis is automatic for isolated carriers, compact carriers with standard Besov spaces, and the usual bounded-geometry inhomogeneous Besov models \citep{Triebel1992,GrosseSchneider2013}.

\begin{lemma}[Finite normal-trace realisation in the equaliser]
\label{lem:equaliser-normal-trace-realisation}
Assume
Hypothesis~\ref{hyp:diagonal-full-jet-transport}. Let
\[
  r\in\mathbb N_0
\]
and let
\[
  \tau_j
  \in
  \Gamma_c^\infty
  \left(
    Z;
    F_{\mathrm{eq}}
    \otimes
    \operatorname{Sym}^jN_Z^*
  \right),
  \qquad
  0\leq j\leq r.
\]
Then there exists
\[
  u\in\Gamma_c^\infty(M_1,E_1)
\]
such that
\[
  \gamma_Z^{(j)}u=\tau_j
  \qquad
  (0\leq j\leq r),
\]
and such that \(u\) takes values in
\(\widetilde F_{\mathrm{eq}}\) on a neighbourhood of \(Z\). In
particular,
\[
  u\in\mathfrak E_A^\infty.
\]
\end{lemma}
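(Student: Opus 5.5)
The plan is to build \(u\) as a finite normal Taylor polynomial in tubular coordinates, with coefficients chosen recursively, then cut it off radially. Fix the tubular neighbourhood \(\mathcal U_Z\cong\mathcal N_\varepsilon\subseteq NZ\) and the trivialisation \(E_1|_{\mathcal U_Z}\cong\mathcal U_Z\times F\) of Hypothesis~\ref{hyp:diagonal-full-jet-transport}. Writing points as \((y,v)\) with \(y\in Z\) and \(v\in N_yZ\), I would first define the ansatz
\[
  u_0(y,v):=\sum_{j=0}^{r}\frac{1}{j!}\,\widetilde\tau_j(y)\bigl(v,\ldots,v\bigr),
\]
where each \(\widetilde\tau_j\in\Gamma_c^\infty(Z;F_{\mathrm{eq}}\otimes\operatorname{Sym}^jN_Z^*)\) is still to be determined. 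Evaluating a symmetric \(j\)-form on \(v\) leaves the value in the constant subspace \(F_{\mathrm{eq}}\). Hence \(u_0\) takes values in \(\widetilde F_{\mathrm{eq}}\) throughout its domain, and its support projects into the compact set \(\bigcup_j\operatorname{supp}\widetilde\tau_j\).

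The next step is to choose the \(\widetilde\tau_j\) so that the traces come out right. The symmetrised normal covariant trace \(\gamma_Z^{(j)}\) of a section of this form equals \(\widetilde\tau_j\) plus a correction \(\sum_{\ell<j}P_{j\ell}\widetilde\tau_\ell\). Here the \(P_{j\ell}\) are tangential differential operators that come from the connection form and from the non-flatness of the tubular splitting. This is the same lower-triangular structure as in Definition~\ref{def:uniformly-admissible-trace-change}. I would therefore set \(\widetilde\tau_0=\tau_0\) and \(\widetilde\tau_j=\tau_j-\sum_{\ell<j}P_{j\ell}\widetilde\tau_\ell\) recursively.

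The main obstacle is to show that every correction \(P_{j\ell}\widetilde\tau_\ell\) stays in \(F_{\mathrm{eq}}\otimes\operatorname{Sym}^jN_Z^*\); otherwise the recursion leaves the equaliser. This is where Hypothesis~\ref{hyp:diagonal-full-jet-transport}(i) enters. Because the connection preserves \(\widetilde F_{\mathrm{eq}}\), covariant derivatives of an \(\widetilde F_{\mathrm{eq}}\)-valued section are again \(\widetilde F_{\mathrm{eq}}\)-valued, tensored with form factors. So \(\nabla^{(j)}u_0\), and hence its restriction and symmetrisation along \(Z\), lie in \(F_{\mathrm{eq}}\otimes\operatorname{Sym}^jN_Z^*\). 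The corrections are compactly supported and smooth because the \(P_{j\ell}\) are differential operators. Alternatively, the recursion can be bypassed entirely: \(u_0\) is \(\widetilde F_{\mathrm{eq}}\)-valued, so all its traces lie automatically in the equaliser, and the triangular map \((\widetilde\tau_j)\mapsto(\gamma_Z^{(j)}u_0)\) is invertible on equaliser-valued tuples because it has identity diagonal.

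Finally, I would pick \(\chi\in C^\infty(M_1,[0,1])\) supported in \(\mathcal U_Z\), equal to \(1\) near \(Z\) and depending only on \(|v|\) below a fixed radius, and set \(u:=\chi u_0\) extended by zero. Then \(u\in\Gamma_c^\infty(M_1,E_1)\): its support is closed in \(\mathcal U_Z\) and lies over a compact subset of \(Z\) within a bounded radius, which is compact. Since \(\chi=1\) near \(Z\), the traces are unchanged and \(u\) is \(\widetilde F_{\mathrm{eq}}\)-valued near \(Z\). Every normal derivative \(\gamma_Z^{(j)}u\) then lies in \(F_{\mathrm{eq}}\otimes\operatorname{Sym}^jN_Z^*\) for every \(j\), not only for \(j\le r\). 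By Hypothesis~\ref{hyp:diagonal-full-jet-transport}(ii), the order-\(q\) transported compatibility equations therefore hold for all \(q\), so \(\mathfrak d_A^q(u)=0\) for every \(q\) and \(u\in\mathfrak E_A^\infty\).
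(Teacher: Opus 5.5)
Your proposal is correct and follows essentially the same route as the paper. Both build an \(F_{\mathrm{eq}}\)-valued normal Taylor polynomial in the fixed tubular trivialisation and fix its coefficients by a triangular recursion, whose lower-order corrections remain in \(F_{\mathrm{eq}}\otimes\operatorname{Sym}^jN_Z^*\) because the connection preserves \(\widetilde F_{\mathrm{eq}}\). Both then apply a cutoff equal to one near \(Z\) and invoke Hypothesis~\ref{hyp:diagonal-full-jet-transport}(ii) to conclude \(u\in\mathfrak E_A^\infty\). The only difference is cosmetic: you apply the cutoff after solving the recursion rather than building it into the ansatz, which is harmless since the normal traces depend only on the jet of \(u\) along \(Z\).
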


\begin{proof}
In the fixed tubular trivialisation choose a cutoff equal to one near the zero section and, for coefficient fields \(b_j\) in \(F_{\rm eq}\), form
\[
 P_b(\tau(p,v))=\chi(p,v)\sum_{j=0}^r\frac1{j!}b_j(p)(v^{\otimes j}).
\]
Its normal covariant traces are triangular:
\(\gamma_Z^{(j)}P_b=b_j+Q_j(b_0,\ldots,b_{j-1})\), with each \(Q_j\) still \(F_{\rm eq}\)-valued because the connection preserves that subbundle. Recursively set \(b_0=\tau_0\) and \(b_j=\tau_j-Q_j(b_0,\ldots,b_{j-1})\). Then \(\gamma_Z^{(j)}P_b=\tau_j\) for \(j\le r\). The cutoff makes the section compactly supported and extendable by zero. Near \(Z\) it remains \(F_{\rm eq}\)-valued, so Hypothesis~\ref{hyp:diagonal-full-jet-transport} gives all transported-jet compatibility equations; hence \(u\in\mathfrak E_A^\infty\).
\end{proof}

\begin{proposition}[Formal trace space for diagonal transport]
\label{prop:constant-transport-traces}
Assume
Hypotheses~\ref{hyp:diagonal-full-jet-transport}
and \ref{hyp:smooth-equaliser-trace-density}. Then
\[
  \boxed{
  \mathcal K_{A,Z}^{m,p}
  =
  \bigoplus_{j=0}^{r_Z}
  B_{p,p}^{\,m-j-c/p}
  \left(
    Z;
    F_{\mathrm{eq}}
    \otimes
    \operatorname{Sym}^jN_Z^*
  \right).
  }
\]
\end{proposition}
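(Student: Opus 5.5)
We prove the two inclusions separately. Write
\[
  \mathcal B_{\mathrm{eq}}
  :=
  \bigoplus_{j=0}^{r_Z}
  B_{p,p}^{\,m-j-c/p}
  \left(
    Z;
    F_{\mathrm{eq}}\otimes\operatorname{Sym}^jN_Z^*
  \right).
\]
This is a closed subspace of \(\mathcal T_{A,Z}^{m,p}\). Indeed, \(F_{\mathrm{eq}}\) is a fixed linear subspace of \(F\). Choosing a complementary projection \(\Pi\) onto a complement gives \(\mathcal B_{\mathrm{eq}}\) as the kernel of the bounded fibrewise operator \(\Pi\otimes\operatorname{id}\). That operator is constant in the trivialisation and commutes with the parallel structure of Hypothesis~\ref{hyp:diagonal-full-jet-transport}\textup{(i)}, so Proposition~\ref{prop:visibility-closed} applies.

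\emph{Inclusion \(\mathcal K_{A,Z}^{m,p}\subseteq\mathcal B_{\mathrm{eq}}\).} Let \(u\in\mathfrak E_A^{\infty;m,p}\). Then \(\mathfrak d_A^q(u)=0\) for every \(q\). Taking \(q=r_Z\) in Hypothesis~\ref{hyp:diagonal-full-jet-transport}\textup{(ii)} gives
\[
  \gamma_Z^{(j)}u\in F_{\mathrm{eq}}\otimes\operatorname{Sym}^jN_Z^*
  \qquad (0\le j\le r_Z)
\]
pointwise on \(Z\). Since \(u\) is smooth, these smooth normal traces coincide with the Besov-valued traces of \(\operatorname{Tr}_A^{m,p}u\). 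Hence \(\operatorname{Tr}_A^{m,p}(\mathfrak E_A^{\infty;m,p})\subseteq\mathcal B_{\mathrm{eq}}\), and closedness of \(\mathcal B_{\mathrm{eq}}\) passes this to the closure \(\mathcal K_{A,Z}^{m,p}\).

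\emph{Inclusion \(\mathcal B_{\mathrm{eq}}\subseteq\mathcal K_{A,Z}^{m,p}\).} By Hypothesis~\ref{hyp:smooth-equaliser-trace-density}, tuples \((\tau_0,\ldots,\tau_{r_Z})\) with each \(\tau_j\in\Gamma_c^\infty(Z;F_{\mathrm{eq}}\otimes\operatorname{Sym}^jN_Z^*)\) are dense in \(\mathcal B_{\mathrm{eq}}\); the sum is finite, so density holds componentwise. For each such tuple, Lemma~\ref{lem:equaliser-normal-trace-realisation} with \(r=r_Z\) gives \(u\in\Gamma_c^\infty(M_1,E_1)\cap\mathfrak E_A^\infty\) whose visible normal traces are \(\tau_j\). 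Compact support puts \(u\) in \(W^{m,p}\), so \(u\in\mathfrak E_A^{\infty;m,p}\).

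Since the complete trace target is the full visible product (single carrier, no intersection conditions), \(\operatorname{Tr}_A^{m,p}u=(\tau_j)_j\). Thus a dense subset of \(\mathcal B_{\mathrm{eq}}\) lies in \(\operatorname{Tr}_A^{m,p}(\mathfrak E_A^{\infty;m,p})\). Taking closures gives \(\mathcal B_{\mathrm{eq}}\subseteq\mathcal K_{A,Z}^{m,p}\).

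The main obstacle is the first inclusion. There one must check that the full-jet equaliser conditions are expressed in exactly the trace coordinates \(\gamma_Z^{(j)}\) used by \(\operatorname{Tr}_A^{m,p}\), with no lower-order mixing. Hypothesis~\ref{hyp:diagonal-full-jet-transport}\textup{(ii)} is designed to give this, so the pointwise statement is immediate. The remaining care is that closedness of \(\mathcal B_{\mathrm{eq}}\) inside the global Besov product needs boundedness of the constant fibrewise projection. If \(Z\) is noncompact, I would justify this by the constancy of \(\Pi\) in the fixed trivialisation, rather than relying on the general multiplier remark.
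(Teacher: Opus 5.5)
Your proof is correct and follows essentially the same route as the paper. Hypothesis~\ref{hyp:diagonal-full-jet-transport}\textup{(ii)}, together with closedness of the \(F_{\mathrm{eq}}\)-valued Besov sum, gives one inclusion, and Hypothesis~\ref{hyp:smooth-equaliser-trace-density} with Lemma~\ref{lem:equaliser-normal-trace-realisation} gives the reverse inclusion after closure. The one small correction: closedness of that Besov sum follows simply because it is the kernel of the bounded fibre projection \(\Pi\otimes\operatorname{id}\). Proposition~\ref{prop:visibility-closed} is not the right citation, since it concerns subspaces of \(W^{m,p}\) and assumes, rather than supplies, the multiplier bound.
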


\begin{proof}
If \(u\in\mathfrak E_A^{\infty;m,p}\), Hypothesis~\ref{hyp:diagonal-full-jet-transport} places every visible trace \(\gamma_Z^{(j)}u\) in \(F_{\rm eq}\otimes\operatorname{Sym}^jN_Z^*\); closedness of the displayed Besov sum gives one inclusion. Conversely, approximate any tuple in that sum by compactly supported smooth \(F_{\rm eq}\)-valued trace fields using Hypothesis~\ref{hyp:smooth-equaliser-trace-density}. Lemma~\ref{lem:equaliser-normal-trace-realisation} realises each approximating tuple by an element of \(\mathfrak E_A^\infty\cap C_c^\infty\), proving the reverse inclusion after closure.
\end{proof}

\begin{theorem}[Diagonal-transport Sobolev closure]
\label{thm:constant-transport-closure}
Assume:

\begin{enumerate}[label=\textup{(\roman*)}]
  \item Hypothesis~\ref{hyp:complete-incidence-trace};

  \item the finite clean conically tame hypotheses;

  \item Hypothesis~\ref{hyp:diagonal-full-jet-transport};

  \item Hypothesis~\ref{hyp:smooth-equaliser-trace-density}.
\end{enumerate}

Then
\[
  \boxed{
  \overline{
    \mathfrak R_A^{\infty;m,p}
  }^{\,W^{m,p}(M_1,E_1)}
  =
  \left\{
    u\in W^{m,p}(M_1,E_1):
    \gamma_Z^{(j)}u
    \in
    F_{\mathrm{eq}}
    \otimes
    \operatorname{Sym}^jN_Z^*
    \text{ for }0\leq j\leq r_Z
  \right\}.
  }
\]
\end{theorem}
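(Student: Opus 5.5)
The argument is a direct chaining of three earlier results. Under hypotheses (i) and (ii), together with the standing Hypothesis~\ref{hyp:embedded-source-carriers} for the single carrier \(Z\), Theorem~\ref{thm:conically-tame-sobolev-closure} already gives
\[
  \overline{\mathfrak R_A^{\infty;m,p}}^{\,W^{m,p}(M_1,E_1)}
  =
  \left(\operatorname{Tr}_A^{m,p}\right)^{-1}\left(\mathcal K_A^{m,p}\right).
\]
The first step is simply to invoke this. It relies on the conically tame smooth realisation \(\mathfrak R_A^\infty=\mathfrak E_A^\infty\) and on Lemma~\ref{lem:abstract-trace-closure}, applied with the complete trace package.

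The second step identifies \(\mathcal K_A^{m,p}=\mathcal K_{A,Z}^{m,p}\). Under (iii) and (iv), Proposition~\ref{prop:constant-transport-traces} shows that this space is the closed Besov sum
\[
  \bigoplus_{j=0}^{r_Z}B_{p,p}^{\,m-j-c/p}\left(Z;F_{\mathrm{eq}}\otimes\operatorname{Sym}^jN_Z^*\right).
\]
Here we are in the single-carrier setting with the full visible normal-trace target, so \(\operatorname{Tr}_A^{m,p}u=(\gamma_Z^{(j)}u)_{0\le j\le r_Z}\). The preimage is therefore exactly the set of \(u\in W^{m,p}\) whose visible normal traces take values in the \(F_{\mathrm{eq}}\)-subbundle. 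That set is the displayed right-hand side.

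The one point needing care is a bookkeeping check. We must confirm that the subspace \(B_{p,p}^{s}(Z;F_{\mathrm{eq}}\otimes\operatorname{Sym}^jN_Z^*)\) is closed in \(B_{p,p}^{s}(Z;F\otimes\operatorname{Sym}^jN_Z^*)\). We must also confirm that membership of \(\gamma_Z^{(j)}u\) in it is the same as the pointwise condition in the statement.

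To do this, I would choose a complementary constant subspace \(F'\) with \(F=F_{\mathrm{eq}}\oplus F'\) and let \(L\) be the constant projection onto \(F'\) in the fixed tubular trivialisation. The condition is then \(L\gamma_Z^{(j)}u=0\). This is closed by Proposition~\ref{prop:visibility-closed}, since \(L_*\) is bounded: it is a constant-coefficient multiplier, provided the Besov model is the trivialised one used in Hypothesis~\ref{hyp:smooth-equaliser-trace-density}.

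Once this is verified, the three identifications compose, and the boxed equality follows.
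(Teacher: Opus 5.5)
Your proposal is correct and follows the paper's proof: combine the conically tame identity \(\mathfrak R_A^\infty=\mathfrak E_A^\infty\), via Theorem~\ref{thm:conically-tame-sobolev-closure}, with Proposition~\ref{prop:constant-transport-traces}, and read off the preimage under \(\operatorname{Tr}_A^{m,p}\). Your extra closedness check is harmless but not needed for the last step. Taking the preimage of \(\mathcal K_{A,Z}^{m,p}\) requires no closedness, and the closedness of the \(F_{\mathrm{eq}}\)-valued Besov sum is already used inside Proposition~\ref{prop:constant-transport-traces}.
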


\begin{proof}
Conically tame smooth realisation gives
\(\mathfrak R_A^\infty = \mathfrak E_A^\infty\).
Apply
Theorem~\ref{thm:conically-tame-sobolev-closure}
and
Proposition~\ref{prop:constant-transport-traces}.
\end{proof}

Under these hypotheses, the analytic defect quotient is
\[
  W^{m,p}(M_1,E_1)
  \Big/
  \overline{
    \mathfrak R_A^{\infty;m,p}
  }^{\,W^{m,p}}
  \cong
  \bigoplus_{j=0}^{r_Z}
  B_{p,p}^{\,m-j-c/p}
  \left(
    Z;
    (F/F_{\mathrm{eq}})
    \otimes
    \operatorname{Sym}^jN_Z^*
  \right).
\]
Indeed, choose a linear complement of \(F_{\mathrm{eq}}\) in \(F\).
The resulting bounded fibre projection identifies the quotient of each
\(F\)-valued Besov trace space by its closed
\(F_{\mathrm{eq}}\)-valued subspace with the corresponding
\(F/F_{\mathrm{eq}}\)-valued Besov space.

For positive-dimensional \(Z\) the quotient is generally infinite-dimensional; for an isolated point it is a finite sum of copies of \(F/F_{\rm eq}\).

\subsection{The relative-sign closure}
\label{subsec:relative-sign-closure}

Return to the relative-sign line bundle of
Section~\ref{subsec:relative-sign-bundle}. Here
\(M_1=\mathbb R, \qquad Z=\{0\}, \qquad c=1\),
and the two branch transports are
\(A_-=1, \qquad A_+=-1\).
They act by the same constants on every derivative order, so
Hypothesis~\ref{hyp:diagonal-full-jet-transport} holds and
\(F_{\mathrm{eq}} = \ker(A_--A_+) = \{0\}\).
Since \(Z\) is an isolated point,
Hypothesis~\ref{hyp:smooth-equaliser-trace-density} is automatic.

For \(0\leq j\leq m-1\) and \(p>1\),
\(j \leq m-1 < m-\frac{1}{p}\),
so all endpoint derivatives through order \(m-1\) are visible.

\begin{corollary}[Sobolev closure of the flat ideal]
\label{cor:flat-ideal-sobolev-closure}
For every
\[
  m\in\mathbb N,
  \qquad
  1<p<\infty,
\]
\[
  \boxed{
  \overline{
    \mathcal F_0
    \cap
    W^{m,p}(\mathbb R)
  }^{\,W^{m,p}(\mathbb R)}
  =
  \left\{
    u\in W^{m,p}(\mathbb R):
    u^{(j)}(0)=0,\
    0\leq j\leq m-1
  \right\}.
  }
\]
Under restriction to the two half-lines, this space is canonically
isomorphic to
\[
  W_0^{m,p}((-\infty,0))
  \oplus
  W_0^{m,p}((0,\infty)),
\]
where \(W_0^{m,p}\) denotes the closure of compactly supported smooth
functions and therefore has vanishing endpoint traces through order
\(m-1\).
\end{corollary}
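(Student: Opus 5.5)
The plan is to obtain the boxed identity as a direct specialisation of Theorem~\ref{thm:constant-transport-closure}, and then to read off the half-line decomposition from the one-dimensional trace theory. First I would check the hypotheses of that theorem for the relative-sign model. Corollary~\ref{cor:relative-sign-complete-trace-package} supplies Hypothesis~\ref{hyp:complete-incidence-trace}, with trace $u\mapsto(u(0),\ldots,u^{(m-1)}(0))$ onto $\mathbb K^m$. Here $r_Z=m-1$, since $j<m-1/p$ holds exactly for $j\le m-1$. The multiple-incidence locus is the single point $0\in M_2=\mathbb R$, approached by the two branches $V_-=(-\infty,0)$ and $V_+=(0,\infty)$. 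Their limiting tangent rays $\mp1$ are distinct, so Corollary~\ref{cor:isolated-transverse-conical} gives a finite clean conically tame system. The text preceding the statement verifies Hypothesis~\ref{hyp:diagonal-full-jet-transport}: the prolonged base maps are the identity, and the constant transports $\pm1$ act diagonally on every jet level with the trivial connection. Hypothesis~\ref{hyp:smooth-equaliser-trace-density} is automatic for a point.

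Since $F_{\mathrm{eq}}=\ker(A_--A_+)=\{0\}$, Theorem~\ref{thm:constant-transport-closure} then gives closure equal to $\{u: u^{(j)}(0)=0,\ 0\le j\le m-1\}$. Theorem~\ref{thm:sign-restriction-images} identifies $\mathfrak R_A^\infty$ with $\mathcal F_0$, so the left side is indeed $\overline{\mathcal F_0\cap W^{m,p}}$. As a cross-check, and as a fallback that avoids the general machinery, one can argue directly. The inclusion $\subseteq$ follows from continuity of the point evaluations $u\mapsto u^{(j)}(0)$ on $W^{m,p}(\mathbb R)$ for $j\le m-1$ (Sobolev--Morrey), since these vanish on $\mathcal F_0$. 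For $\supseteq$, Lemma~\ref{lem:local-zero-trace-approximation}(ii) with $n=1$ applies, because $\sigma=m-1/p\notin\mathbb N_0$. Together with the ambient core and a cutoff as in Lemma~\ref{lem:local-to-global-trace-package}, it shows that the space is the closure of $C_c^\infty(\mathbb R\setminus\{0\})\subseteq\mathcal F_0$.

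For the isomorphism, I would map $u$ to $(u|_{(-\infty,0)},u|_{(0,\infty)})$. This map is isometric for the sum of $L^p$ norms of derivatives, because a $W^{m,p}(\mathbb R)$ function has no singular derivative part at $0$. Its image lies in $W^{m,p}$ of each half-line with vanishing endpoint traces through order $m-1$. By the standard half-line characterisation $W_0^{m,p}((0,\infty))=\{v: v^{(j)}(0^+)=0,\ j\le m-1\}$, the image lands in $W_0^{m,p}\oplus W_0^{m,p}$. Conversely, a pair in $W_0^{m,p}\oplus W_0^{m,p}$ is approximated by pairs in $C_c^\infty$ of each half-line. Their concatenations lie in $C_c^\infty(\mathbb R\setminus\{0\})$ and converge in $W^{m,p}(\mathbb R)$, so the limit is a glued function in the closed space, giving surjectivity.

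The step I expect to need the most care is the gluing/identification at the end. One must check that the distributional derivatives of the concatenation acquire no delta terms at $0$. This holds precisely because all one-sided traces through order $m-1$ vanish, which I would verify by integrating by parts against test functions one derivative at a time. Everything else is bookkeeping of previously established results.
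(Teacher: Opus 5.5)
Your proposal is correct and follows the paper's proof. The paper likewise applies Theorem~\ref{thm:constant-transport-closure} with $F_{\mathrm{eq}}=\{0\}$ and reads off the half-line decomposition from the standard trace characterisation of $W_0^{m,p}$. Your hypothesis checks and your direct fallback (continuity of the point traces, together with density of $C_c^\infty(\mathbb R\setminus\{0\})\subseteq\mathcal F_0$ in the zero-trace space) are valid elaborations; the fallback also shows that conical tameness is not actually needed here, because every element of $\mathcal F_0$ already has zero visible traces.
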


\begin{proof}
Apply
Theorem~\ref{thm:constant-transport-closure}
with
\(F_{\mathrm{eq}}=\{0\}\).
The half-line description follows from the standard trace
characterisation of \(W_0^{m,p}\).
\end{proof}

The infinite-order smooth condition \(u\in\mathcal F_0\) collapses under \(W^{m,p}\)-completion to the finite family of endpoint
traces visible at order \(m\).

\subsection{A partial fibre defect}
\label{subsec:partial-fibre-defect}

Let
\(F=\mathbb R^2\),
and take
\[
  A_0=I,
  \qquad
  A_1
  =
  \begin{pmatrix}
    1&0\\
    0&-1
  \end{pmatrix}.
\]
Assume that these matrices act diagonally on the full normal-jet tower
as in
Hypothesis~\ref{hyp:diagonal-full-jet-transport}. Then
\(F_{\mathrm{eq}} = \operatorname{span}\{e_+\}, \qquad e_+=(1,0)\).

Write
\[
  u=
  \begin{pmatrix}
    u_+\\
    u_-
  \end{pmatrix}.
\]
The first fibre direction is invariant, while the second is
anti-invariant.

\begin{corollary}[Partial Sobolev defect]
\label{cor:partial-sobolev-defect}
Under the hypotheses of
Theorem~\ref{thm:constant-transport-closure},
\[
  \boxed{
  \overline{
    \mathfrak R_A^{\infty;m,p}
  }^{\,W^{m,p}}
  =
  \left\{
    (u_+,u_-)
    \in
    W^{m,p}(M_1,\mathbb R^2):
    \gamma_Z^{(j)}u_-=0,\
    0\leq j\leq r_Z
  \right\}.
  }
\]
No trace condition is imposed on \(u_+\).
\end{corollary}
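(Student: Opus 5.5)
This is a direct specialisation of Theorem~\ref{thm:constant-transport-closure}, so the plan is to identify \(F_{\mathrm{eq}}\) and read off its trace conditions componentwise. Under the stated hypotheses, that theorem already gives
\[
  \overline{\mathfrak R_A^{\infty;m,p}}^{\,W^{m,p}}
  =
  \left\{
    u\in W^{m,p}(M_1,\mathbb R^2):
    \gamma_Z^{(j)}u\in F_{\mathrm{eq}}\otimes\operatorname{Sym}^jN_Z^*,\
    0\le j\le r_Z
  \right\}.
\]
The proof therefore reduces to computing \(F_{\mathrm{eq}}\) for \(A_0=I\) and \(A_1=\operatorname{diag}(1,-1)\), and translating the membership condition into conditions on the components.

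\textbf{Computing \(F_{\mathrm{eq}}\).} By definition \(F_{\mathrm{eq}}=\ker(A_0-A_1)\). Since \(A_0-A_1=\operatorname{diag}(0,2)\), we get \(F_{\mathrm{eq}}=\operatorname{span}\{e_+\}\). This holds in characteristic zero, for real or complex scalars.

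\textbf{Translating the trace condition.} The splitting \(\mathbb R^2=\operatorname{span}\{e_+\}\oplus\operatorname{span}\{e_-\}\) is constant in the fixed tubular trivialisation. Hypothesis~\ref{hyp:diagonal-full-jet-transport}(i) requires the connection to preserve \(\widetilde F_{\mathrm{eq}}\); in the intended model the connection is the trivial one, so it also preserves the complementary line. I would state this explicitly, or argue via the projection \(\mathbb R^2\to\mathbb R^2/F_{\mathrm{eq}}\cong\mathbb R e_-\) instead.

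With this in place, the symmetrised normal trace acts componentwise: \(\gamma_Z^{(j)}u=(\gamma_Z^{(j)}u_+,\gamma_Z^{(j)}u_-)\) as a section of \((\mathbb R e_+\oplus\mathbb R e_-)\otimes\operatorname{Sym}^jN_Z^*\). Membership in \(F_{\mathrm{eq}}\otimes\operatorname{Sym}^jN_Z^*\) is then equivalent to the vanishing of the \(e_-\)-component, that is, \(\gamma_Z^{(j)}u_-=0\). Since \(e_+\) spans \(F_{\mathrm{eq}}\), no condition is imposed on \(\gamma_Z^{(j)}u_+\).

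\textbf{Main obstacle.} The only delicate point is the componentwise splitting of the covariant normal traces. If the connection mixed \(e_+\) into \(e_-\), the \(e_-\)-component of \(\gamma_Z^{(j)}u\) could pick up lower-order \(u_+\) terms. I would handle this by noting that the quotient-bundle description of the defect (the display after Theorem~\ref{thm:constant-transport-closure}) only requires the image in \(F/F_{\mathrm{eq}}\) to vanish. That image is computed by the induced quotient connection, which is flat on the constant line \(\mathbb R e_-\). Its normal traces are exactly those of \(u_-\), up to an invertible lower-triangular change among \(u_-\)-traces, which does not alter the joint kernel for \(0\le j\le r_Z\). Hence the two descriptions agree, and the stated equality follows.
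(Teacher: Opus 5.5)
Your proposal is correct and matches the paper, which treats the corollary as an immediate specialisation of Theorem~\ref{thm:constant-transport-closure}: with \(F_{\mathrm{eq}}=\ker(A_0-A_1)=\operatorname{span}\{e_+\}\), the condition \(\gamma_Z^{(j)}u\in F_{\mathrm{eq}}\otimes\operatorname{Sym}^jN_Z^*\) is read componentwise. Your extra care about the traces is sound, and the mixing you worry about cannot in fact occur: Hypothesis~\ref{hyp:diagonal-full-jet-transport}(i), preservation of \(\widetilde F_{\mathrm{eq}}\), is exactly what makes the \(e_-\)-component of \(\nabla u\) depend only on \(u_-\), so your quotient-connection argument is that hypothesis restated (the quotient connection need not be flat; the lower-triangularity you note is all that is used).
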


This separates the compatible and defective fibre sectors. Smooth
compatibility requires flatness only in the anti-invariant direction,
while finite Sobolev completion retains only the visible traces of that
direction.

\subsection{Sobolev erasure of the exponential Whitney obstruction}
\label{subsec:exponential-sobolev-erasure}

Return to the exponentially tangent branch system of
Section~\ref{subsec:exponentially-tangent-branches}. Multiply the
section constructed there by a compactly supported smooth cutoff equal
to one near the multiple-incidence point, and retain the notation
\(s_1\). Then
\(s_1\in \mathfrak E_A^\infty \setminus \mathfrak R_A^\infty\).

Choose \(\chi\in C^\infty(\mathbb R,[0,1])\) such that
\(\chi(t)=0 \quad (t\leq1), \qquad \chi(t)=1 \quad (t\geq2)\),
and set
\(\chi_\varepsilon(x) := \chi\!\left(\frac{x}{\varepsilon}\right)\).
On the second source component, replace \(\rho(x)\) by
\(\chi_\varepsilon(x)\rho(x)\),
and leave the first component equal to zero. Denote the resulting
section by
\(s_{1,\varepsilon}\).

The two branch values now agree on a neighbourhood of the only
multiple-incidence point. At all other incidence points the assembled
field is locally the jet of a single prolonged branch section.
\(s_{1,\varepsilon} \in \mathfrak R_A^\infty\) by
Theorem~\ref{thm:exact-smooth-whitney}.

\begin{theorem}[Erasure of the exponential Whitney obstruction]
\label{thm:exponential-sobolev-erasure}
For every finite
\[
  m\in\mathbb N_0
\]
and every
\[
  1\leq p<\infty,
\]
\[
  s_{1,\varepsilon}
  \longrightarrow
  s_1
  \qquad
  \text{in }W^{m,p}(M_1,E_1).
\]
Hence
\[
  \boxed{
  s_1
  \in
  \overline{
    \mathfrak R_A^\infty
    \cap
    W^{m,p}(M_1,E_1)
  }^{\,W^{m,p}(M_1,E_1)}
  }
\]
for every finite Sobolev order.
\end{theorem}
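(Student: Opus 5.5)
The difference \(s_1-s_{1,\varepsilon}\) vanishes identically on the first component \(\mathbb R_{(0)}^2\). On the second component \(\mathbb R_{(1)}^2\) it equals
\[
  h_\varepsilon(x,y):=\bigl(1-\chi_\varepsilon(x)\bigr)\rho(x)\,\zeta(x,y),
\]
where \(\zeta\) is the compactly supported cutoff introduced before the statement. Since \(1-\chi_\varepsilon\) vanishes for \(x\ge2\varepsilon\) and \(\rho\) vanishes for \(x\le0\), the function \(h_\varepsilon\) is supported in the strip \(0<x<2\varepsilon\), intersected with the fixed compact set \(\operatorname{supp}\zeta\). The plan is therefore to bound every derivative of \(h_\varepsilon\) of order at most \(m\) uniformly by a quantity tending to zero, and then use that the support has measure \(O(\varepsilon)\).

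\textbf{Step 1: expand the derivatives.} Apply the Leibniz rule to \(\partial_x^{\alpha}\partial_y^{\beta}h_\varepsilon\) with \(\alpha+\beta\le m\). The \(y\)-derivatives fall only on \(\zeta\), which has bounded derivatives. Each \(x\)-derivative lands on one of three factors: on \((1-\chi_\varepsilon)\), giving \(\varepsilon^{-i}\chi^{(i)}(x/\varepsilon)\), which is bounded by \(C\varepsilon^{-i}\); on \(\rho\), giving \(\rho^{(\ell)}\); or on \(\zeta\), giving a bounded factor. Every resulting term is thus bounded by \(C\varepsilon^{-i}\sup_{0<x<2\varepsilon}|\rho^{(\ell)}(x)|\) with \(i+\ell\le m\).

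\textbf{Step 2: use flatness of \(\rho\).} This is the only real point. Because \(\rho(x)=e^{-1/(2x^2)}\) is flat at \(0\), for every \(\ell\) and every \(N\) there is a constant \(C_{\ell,N}\) with \(|\rho^{(\ell)}(x)|\le C_{\ell,N}x^N\) on \((0,1]\). This follows from Taylor's theorem with zero jet, or from the explicit form \(\rho^{(\ell)}(x)=P_\ell(1/x)\,e^{-1/(2x^2)}\) with \(P_\ell\) a polynomial. Taking \(N=m\) gives
\[
  \sup_{0<x<2\varepsilon}|\rho^{(\ell)}(x)|\le C\varepsilon^{m},
\]
so each term in Step 1 is \(O(\varepsilon^{m-i})=O(1)\), and in fact \(O(\varepsilon)\) if we take \(N=m+1\). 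Hence \(\|\partial^\gamma h_\varepsilon\|_{L^\infty}\to0\) for every \(|\gamma|\le m\).

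\textbf{Step 3: integrate.} The support of \(h_\varepsilon\) has Lebesgue measure at most \(C\varepsilon\). Together with the uniform bounds of Step 2 this gives \(\|h_\varepsilon\|_{W^{m,p}}\le C\varepsilon^{1+1/p}\to0\) for every \(1\le p<\infty\), with any fixed choice of Sobolev norm, since all data are supported in a fixed compact set.

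\textbf{Conclusion.} Before the statement it was already noted that \(s_{1,\varepsilon}\in\mathfrak R_A^\infty\), via Theorem~\ref{thm:exact-smooth-whitney}. Each \(s_{1,\varepsilon}\) is compactly supported and smooth, hence lies in \(W^{m,p}\). Therefore \(s_1\) lies in the \(W^{m,p}\)-closure of \(\mathfrak R_A^\infty\cap W^{m,p}\).

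The only potential obstacle is keeping track of the negative powers of \(\varepsilon\) produced by the cutoff. Flatness of \(\rho\) absorbs every such power, so no critical-index issue arises. This is why the obstruction disappears at every finite order, even though the Whitney \(C^1\) condition fails.
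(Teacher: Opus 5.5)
Your proof is correct and takes essentially the paper's approach. The paper also reduces to the difference \(\eta(x,y)\bigl(1-\chi_\varepsilon(x)\bigr)\rho(x)\) on the second component and then applies Lemma~\ref{lem:one-sided-flat-cutoff-estimate}, whose proof consists of the same three steps as yours (the Leibniz expansion, the flatness bound \(|\rho^{(q)}(t)|\le C_{N,q}t^N\) absorbing the factors \(\varepsilon^{-i}\), and the \(O(\varepsilon)\) measure of the support strip); you have inlined that lemma with the concrete choice \(N=m+1\) instead of citing it.
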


\begin{proof}
The difference vanishes on the first source component. On the second
component it has the form
\(\eta(x,y) \left( 1-\chi_\varepsilon(x) \right) \rho(x)\),
where \(\eta\) is smooth and compactly supported. Since \(\rho\) is
flat at \(x=0\),
Lemma~\ref{lem:one-sided-flat-cutoff-estimate}
gives convergence in \(W^{m,p}\).
\end{proof}

This smooth Whitney obstruction is erased at every finite Sobolev order; no such claim is made for arbitrary residual obstructions.

For fixed \(m\) and \(p\), there is a natural map
\[
  \mathbf q_A^{m,p}:
  \frac{
    \mathfrak E_A^{\infty;m,p}
  }{
    \mathfrak R_A^{\infty;m,p}
  }
  \longrightarrow
  \frac{
    W^{m,p}(M_1,E_1)
  }{
    \overline{
      \mathfrak R_A^{\infty;m,p}
    }^{\,W^{m,p}}
  },
  \qquad
  [s]\longmapsto[s].
\]

\begin{corollary}[Noninjectivity of Sobolev completion]
\label{cor:noninjective-sobolev-completion}
For the exponentially tangent system,
\[
  [s_1]\neq0
\]
in the smooth residual quotient, whereas
\[
  \mathbf q_A^{m,p}([s_1])=0
\]
for every finite \(m\) and \(1<p<\infty\). Hence
\[
  \mathbf q_A^{m,p}
\]
is not injective.
\end{corollary}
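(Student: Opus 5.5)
The plan has two parts: show that \([s_1]\) is nonzero in the smooth residual quotient, and show that its image under \(\mathbf q_A^{m,p}\) vanishes.

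\textbf{Nonzero smooth class.} After multiplication by the compactly supported cutoff \(\eta\) (equal to one near \(o\)), I would first check that \(s_1\) lies in \(\mathfrak E_A^{\infty;m,p}\).
\begin{itemize}
  \item Smoothness and compact support give \(s_1\in W^{m,p}(M_1,E_1)\) for every finite \(m\) and every \(p\).
  \item Proposition~\ref{prop:exponential-formal-compatibility} gives formal compatibility at every order. Its proof uses only flatness at the single multiple-incidence point \(o\), and \(\eta=1\) near \(o\), so it applies to the cut-off section unchanged.
\end{itemize}
Next I would show that \(s_1\notin\mathfrak R_A^{\infty}\). Suppose a smooth companion \(s_2\) existed. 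It would in particular be \(C^1\) near \(o\), and near \(o\) the cut-off section coincides with the original one. The mean-value argument of Theorem~\ref{thm:exponential-no-realisation}, applied to the pairs \((p_x,q_x)\), then forces \(\rho(x)\le C\delta(x)\) for small \(x>0\). This contradicts \(\rho/\delta=e^{1/(2x^2)}\to\infty\). Hence \([s_1]\neq0\) in \(\mathfrak E_A^{\infty;m,p}/\mathfrak R_A^{\infty;m,p}\).

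\textbf{Vanishing image.} The approximants \(s_{1,\varepsilon}\) already lie in \(\mathfrak R_A^\infty\), as the paragraph before Theorem~\ref{thm:exponential-sobolev-erasure} established via Theorem~\ref{thm:exact-smooth-whitney}. They are compactly supported because the cutoff \(\eta\) persists, so they lie in \(\mathfrak R_A^{\infty;m,p}\). Theorem~\ref{thm:exponential-sobolev-erasure} gives \(s_{1,\varepsilon}\to s_1\) in \(W^{m,p}\) for every finite \(m\) and every \(1\le p<\infty\), in particular for \(1<p<\infty\). Therefore \(s_1\) lies in the closure \(\overline{\mathfrak R_A^{\infty;m,p}}^{\,W^{m,p}}\), which means \(\mathbf q_A^{m,p}([s_1])=0\). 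A map sending a nonzero class to zero is not injective.

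\textbf{Expected obstacle.} The only point needing care is that the cutoff \(\eta\) does not disturb any earlier result. Both the Whitney failure and the formal compatibility depend only on germs at \(o\), where \(\eta\equiv1\). Away from \(o\), only single-branch incidence occurs, so no new constraints arise there. I would state this locality explicitly and cite the earlier proofs verbatim, since everything else is direct bookkeeping.
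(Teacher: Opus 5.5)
Your proposal is correct and follows the paper's own (largely implicit) argument. The cut-off \(s_1\) lies in \(\mathfrak E_A^{\infty;m,p}\setminus\mathfrak R_A^{\infty;m,p}\) by Proposition~\ref{prop:exponential-formal-compatibility} and the mean-value obstruction of Theorem~\ref{thm:exponential-no-realisation}, while Theorem~\ref{thm:exponential-sobolev-erasure} places it in the \(W^{m,p}\)-closure of \(\mathfrak R_A^{\infty;m,p}\). Your explicit check that the cutoff leaves the germs at the source points of \(o\) unchanged, and adds no constraints at the single-incidence points, is a detail the paper leaves tacit.
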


Every derivative of \(s_1\) vanishes at the source
multiple-incidence carriers. Therefore \(\operatorname{Tr}_A^{m,p}s_1=0\) for every finite visible trace system. The obstruction is
cross-branch and Whitney-theoretic, not a nonzero finite incidence
trace.

\subsection{Quadratic-form closures}
\label{subsec:quadratic-form-closures}

Set \(p=2\) and let
\(\mathcal H=L^2(M_1,E_1)\).
Let \(D(\mathfrak q) = \mathfrak R_A^{\infty;m,2}\) and let \(\mathfrak q: D(\mathfrak q)\times D(\mathfrak q) \longrightarrow \mathbb C\) be a densely defined symmetric sesquilinear form. In the real case,
take a symmetric bilinear form. Assume that there exists \(\beta\geq0\) such that
\[
  \mathfrak q[u]
  :=
  \mathfrak q(u,u)
  \geq
  -\beta\|u\|_{\mathcal H}^2
  \qquad
  (u\in D(\mathfrak q)).
\]

Choose \(\lambda>\beta\) and define
\(\|u\|_{\mathfrak q,\lambda}^2 := \mathfrak q[u] + \lambda\|u\|_{\mathcal H}^2\).
Assume that this norm is equivalent on \(D(\mathfrak q)\) to the
\(H^m\)-norm:
\(c\|u\|_{H^m} \leq \|u\|_{\mathfrak q,\lambda} \leq C\|u\|_{H^m}\).

\begin{proposition}[Closability]
\label{prop:form-closability}
Under the norm-equivalence hypothesis, \(\mathfrak q\) is closable in
\(\mathcal H\).
\end{proposition}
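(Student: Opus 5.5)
We use the standard criterion for semibounded forms. A densely defined form bounded below is closable in \(\mathcal H\) exactly when the following holds: if \(u_n\in D(\mathfrak q)\), \(u_n\to0\) in \(\mathcal H\), and \((u_n)\) is Cauchy for \(\|\cdot\|_{\mathfrak q,\lambda}\), then \(\mathfrak q[u_n]\to0\) \citep{Kato1995}. Equivalently, the identity map from the pre-Hilbert space \((D(\mathfrak q),\|\cdot\|_{\mathfrak q,\lambda})\) into \(\mathcal H\) extends to an injective map on the completion. We verify this by transferring everything to the \(H^m\)-topology, where completeness and the continuous embedding into \(L^2\) are already available.

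First, \(\|\cdot\|_{\mathfrak q,\lambda}\) is a genuine norm on \(D(\mathfrak q)\). Because \(\lambda>\beta\), the expression \(\mathfrak q[u]+\lambda\|u\|^2\) is positive definite and is induced by the inner product \(\mathfrak q(u,v)+\lambda\langle u,v\rangle\). Now take a sequence \((u_n)\) as in the criterion. By the lower equivalence bound \(c\|u\|_{H^m}\le\|u\|_{\mathfrak q,\lambda}\), the sequence is Cauchy in \(H^m=W^{m,2}(M_1,E_1)\). Since \(H^m\) is complete, \(u_n\to u\) in \(H^m\) for some \(u\). The embedding \(H^m\hookrightarrow L^2\) is bounded, because the zeroth-order term is part of the Sobolev norm. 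Hence \(u_n\to u\) in \(\mathcal H\), and comparing with \(u_n\to0\) in \(\mathcal H\) gives \(u=0\). So \(u_n\to0\) in \(H^m\).

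Next we return to the form norm. Each difference \(u_n-u_k\) lies in \(D(\mathfrak q)\), so the upper bound applies and gives \(\|u_n\|_{\mathfrak q,\lambda}\le C\|u_n\|_{H^m}\to0\). Therefore
\[
  \mathfrak q[u_n]=\|u_n\|_{\mathfrak q,\lambda}^2-\lambda\|u_n\|_{\mathcal H}^2\longrightarrow0,
\]
which is exactly the closability condition.

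As a byproduct, the closure has domain \(\overline{\mathfrak R_A^{\infty;m,2}}^{\,H^m}\), which Theorem~\ref{thm:exact-sobolev-closure} identifies when the trace hypotheses hold. The extended form norm remains equivalent to the \(H^m\)-norm on this domain.

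The only delicate step is applying the upper bound to \(u_n\) itself. This requires \(u_n\in D(\mathfrak q)\), which holds, so no extension of the inequality to the completion is needed at this stage. The one real point to check is that the \(H^m\)-norm used in the equivalence dominates the \(L^2\)-norm, so that the \(H^m\)-limit coincides with the \(\mathcal H\)-limit. This is immediate from the fixed ambient Sobolev data chosen at the start of Section~\ref{sec:sobolev-closure}.
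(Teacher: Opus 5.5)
Your proof is correct and follows the paper's argument essentially step for step: the form-Cauchy sequence becomes Cauchy in \(H^m\), its \(H^m\)-limit is identified as zero through the continuous embedding into \(L^2\), and the upper norm bound then gives \(\|u_n\|_{\mathfrak q,\lambda}\to0\). The only slip is that your remark about the differences \(u_n-u_k\) belongs to the lower-bound step (where it makes \((u_n)\) Cauchy in \(H^m\)) rather than to the final upper-bound step, which does not affect validity.
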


\begin{proof}
Let \((u_n)\subseteq D(\mathfrak q)\) be Cauchy in the shifted form
norm and suppose that
\(u_n\longrightarrow0 \qquad \text{in }\mathcal H\).
Norm equivalence implies that \((u_n)\) is Cauchy in \(H^m\). Hence \(u_n\longrightarrow u \qquad \text{in }H^m\) for some \(u\in H^m(M_1,E_1)\).

The continuous embedding \(H^m(M_1,E_1) \hookrightarrow L^2(M_1,E_1)\) shows that \(u_n\to u\) in \(\mathcal H\). Since the
\(\mathcal H\)-limit is zero,
\(u=0\).
Therefore
\(u_n\longrightarrow0 \qquad \text{in }H^m\),
and norm equivalence gives
\(\|u_n\|_{\mathfrak q,\lambda} \longrightarrow0\).
This is the closability criterion.
\end{proof}

\begin{theorem}[Closed form domain]
\label{thm:closed-form-domain}
Under the preceding hypotheses,
\[
  \boxed{
  D(\overline{\mathfrak q})
  =
  \overline{
    \mathfrak R_A^{\infty;m,2}
  }^{\,H^m(M_1,E_1)}.
  }
\]
Under
Hypotheses~\ref{hyp:embedded-source-carriers}
and \ref{hyp:complete-incidence-trace},
\[
  \boxed{
  D(\overline{\mathfrak q})
  =
  \left(
    \operatorname{Tr}_A^{m,2}
  \right)^{-1}
  \left(
    \mathcal X_A^{m,2}
  \right).
  }
\]
If the adjunction is finite clean conically tame, then
\[
  \boxed{
  D(\overline{\mathfrak q})
  =
  \left(
    \operatorname{Tr}_A^{m,2}
  \right)^{-1}
  \left(
    \mathcal K_A^{m,2}
  \right).
  }
\]
\end{theorem}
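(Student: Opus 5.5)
The first identity follows from the definition of the closure of a closable form: \(D(\overline{\mathfrak q})\) is the completion of \(D(\mathfrak q)\) in the shifted form norm \(\|\cdot\|_{\mathfrak q,\lambda}\), realised inside \(\mathcal H\). Proposition~\ref{prop:form-closability} guarantees that this completion injects into \(\mathcal H\).

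I will identify this completion with the \(H^m\)-closure. Take \(u\in D(\overline{\mathfrak q})\). Then there is a sequence \(u_n\in\mathfrak R_A^{\infty;m,2}\) that is Cauchy in \(\|\cdot\|_{\mathfrak q,\lambda}\) and converges to \(u\) in \(\mathcal H\). By the two-sided norm equivalence, \((u_n)\) is Cauchy in \(H^m\), so it has an \(H^m\)-limit \(v\). The continuous embedding \(H^m\hookrightarrow L^2\) forces \(v=u\). Hence \(u\) lies in \(\overline{\mathfrak R_A^{\infty;m,2}}^{\,H^m}\).

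For the converse, take \(u\) in that \(H^m\)-closure, approximated by \(u_n\in\mathfrak R_A^{\infty;m,2}\) in \(H^m\). The upper equivalence bound shows that \((u_n)\) is Cauchy in the form norm. It also converges to \(u\) in \(\mathcal H\), so \(u\in D(\overline{\mathfrak q})\). This sequential comparison is the only step needing care: one must check that the abstract completion of \(D(\mathfrak q)\) is realised in \(\mathcal H\) via the \(L^2\)-limit. The norm equivalence extends to the closure by continuity, so the two closures coincide as subspaces of \(\mathcal H\) and carry equivalent norms.

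The second identity is Theorem~\ref{thm:exact-sobolev-closure} with \(p=2\), since \(W^{m,2}=H^m\). The third follows from Theorem~\ref{thm:conically-tame-sobolev-closure} with \(p=2\); equivalently, conical tameness gives \(\mathcal X_A^{m,2}=\mathcal K_A^{m,2}\) through Theorem~\ref{thm:conically-tame-smooth-realisation}. For this last step I will note explicitly that the finite clean conically tame case is still used together with Hypotheses~\ref{hyp:embedded-source-carriers} and~\ref{hyp:complete-incidence-trace}, since the trace map must exist for the statement to make sense.
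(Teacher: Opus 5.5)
Your proposal is correct and follows the paper's own proof. You identify the closed form domain with the \(H^m\)-closure using the two-sided norm equivalence, spelling out the sequential argument that the paper leaves implicit. You then apply the exact Sobolev closure theorem and the conically tame Sobolev closure theorem with \(p=2\), and you correctly note that the trace hypotheses are still needed in the conically tame case.
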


\begin{proof}
By
Proposition~\ref{prop:form-closability},
the closed form domain is the completion of \(D(\mathfrak q)\) in
the shifted form norm. Since that norm is equivalent to the
\(H^m\)-norm, this completion is canonically the \(H^m\)-closure of
\(\mathfrak R_A^{\infty;m,2}\).
The remaining identities follow from
Theorems~\ref{thm:exact-sobolev-closure}
and
\ref{thm:conically-tame-sobolev-closure}
with \(p=2\).
\end{proof}

The result concerns only the closed form domain; the associated operator still depends on the differential expression, maximal domain, and Green pairing.

\section{Scalar closure on finite graph resolutions}
\label{sec:graph-resolutions}

For finite metric-graph Hausdorff reflections, scalar closure can be proved directly from the graph resolution: finite resolution and endpoint nondegeneracy reduce the problem to endpoint jets.

\subsection{Finite smooth graph resolutions}
\label{subsec:finite-smooth-graph-resolutions}

Let \(\Gamma\) be a connected finite metric graph, allowing compact
edges, external half-lines, parallel edges, and loops
\citep{BerkolaikoKuchment2013}. Denote its vertex and edge sets by
\(V(\Gamma), \qquad E(\Gamma)\).
Each edge \(e\) is represented by an interval
\(I_e=(0,\ell_e), \qquad 0<\ell_e\leq\infty\).

Let \(\mathcal E(\Gamma)\) be the finite set of edge ends. A compact edge contributes two edge
ends, including two distinct end labels when it is a loop; an external
half-line contributes one finite edge end. For
\(\epsilon\in\mathcal E(\Gamma)\),
write \(e(\epsilon)\in E(\Gamma), \qquad v(\epsilon)\in V(\Gamma)\) for the corresponding edge and incident vertex. Set
\[
  \mathcal E(v)
  :=
  \left\{
    \epsilon\in\mathcal E(\Gamma):
    v(\epsilon)=v
  \right\}.
\]

Each edge end carries an outward coordinate
\(t_\epsilon\geq0\).
At the left end of an interval this is the ordinary coordinate; at the
right end it is the distance from the right endpoint.

\begin{definition}[Finite smooth graph resolution]
\label{def:finite-smooth-graph-resolution}
A finite smooth graph resolution is a triple
\[
  (M,\Gamma,\pi),
  \qquad
  \pi:M\longrightarrow\Gamma,
\]
satisfying the following conditions.

\begin{enumerate}[label=\textup{(\roman*)}]
  \item \(M\) is a second-countable \(T_1\) smooth one-manifold
        without boundary, not assumed Hausdorff.

  \item \(\pi\) is the Hausdorff reflection of \(M\).

  \item The exceptional set
        \[
          B_\pi:=\pi^{-1}(V(\Gamma))
        \]
        is finite, and
        \[
          \pi:
          M\setminus B_\pi
          \longrightarrow
          \Gamma\setminus V(\Gamma)
        \]
        is a diffeomorphism.

  \item The map \(\pi\) is proper in the sense that
        \[
          \pi^{-1}(K)
        \]
        is compact for every compact
        \(K\subseteq\Gamma\).

  \item For every \(p\in B_\pi\), there exist a centred chart
        \[
          \phi_p:
          U_p\longrightarrow(-\delta_p,\delta_p),
          \qquad
          \phi_p(p)=0,
        \]
        and two distinct edge ends
        \[
          \epsilon_{p,+},
          \epsilon_{p,-}
          \in
          \mathcal E(\pi(p))
        \]
        such that the two components of \(U_p\setminus\{p\}\) are
        mapped by \(\pi\) diffeomorphically onto punctured collars of
        those edge ends.

        For \(\sigma\in\{+1,-1\}\), define
        \[
          \kappa_{p,\sigma}(x)
          :=
          t_{\epsilon_{p,\sigma}}
          \left(
            \pi\bigl(\phi_p^{-1}(\sigma x)\bigr)
          \right),
          \qquad
          0<x<\delta_p.
        \]
        Each \(\kappa_{p,\sigma}\) extends smoothly to \(x=0\) and
        satisfies
        \[
          \kappa_{p,\sigma}(0)=0,
          \qquad
          \kappa_{p,\sigma}'(0)>0.
        \]

  \item Every edge end is represented by at least one local side at
        an exceptional point.
\end{enumerate}
\end{definition}

The Hausdorff-reflection condition is natural for the
one-dimensional quotients considered here
\citep{HaefligerReeb1957,OConnell2023Adjunction,VlasenkoMaksymenko2026}.
The nondegeneracy condition \(\kappa_{p,\sigma}'(0)>0\) implies, by the inverse-function theorem for one-sided germs, that
\(\kappa_{p,\sigma}\) has a smooth inverse half-germ. Smoothness in the manifold coordinate is equivalent to ordinary
edgewise smoothness up to the corresponding finite edge end.

The bare resolution \((M,\Gamma,\pi)\) does not contain the multiplicity
weights used later in the quantum theory. Those weights arise from a labelled
Hausdorff presentation of \(M\), on which distinct resolving representatives
are counted separately even when they project to the same regular point of
\(M\).

\begin{definition}[Marked analytic graph resolution]
\label{def:marked-analytic-graph-resolution}
Let \((M,\Gamma,\pi)\) be a finite smooth graph resolution. A \emph{marked
analytic graph resolution} consists additionally of the following data.

\begin{enumerate}[label=\textup{(\roman*)}]
  \item A finite index set \(A\) and a disjoint union of Hausdorff smooth
        one-manifolds
        \[
          \widetilde M
          :=
          \bigsqcup_{\alpha\in A}M_\alpha,
        \]
        together with a surjective quotient map
        \[
          q:\widetilde M\longrightarrow M
        \]
        such that every restriction \(q_\alpha:=q|_{M_\alpha}\) is a smooth
        local diffeomorphism onto its image. The labels \(\alpha\) are retained
        as part of the mark.

  \item The metric structure on \(\Gamma\), together with smooth positive
        Radon measures \(\mu_\alpha\) on the resolving sheets \(M_\alpha\).
        Writing
        \[
          \widetilde\mu
          :=
          \bigoplus_{\alpha\in A}\mu_\alpha,
        \]
        the induced graph measure is
        \[
          \mu_\Gamma
          :=
          (\pi\circ q)_*\widetilde\mu.
        \]
        We require \(\mu_\Gamma\) to be locally finite and absolutely
        continuous with respect to edge arclength on every open graph edge.

  \item When constant-edge models are used, the Radon--Nikodym derivative of
        \(\mu_\Gamma\) is constant on each edge \(e\); its positive value is
        denoted by \(a_e\). More general variable densities are permitted by
        the definition but are not used in the flat models of Part~\ref{part:quantum-dynamics}.
\end{enumerate}

The measure \(\widetilde\mu\) is a measure on the labelled resolving
presentation, not an ordinary density on the quotient manifold \(M\). Two
labels may contribute separately to \(\mu_\Gamma\) even when their images in
\(M\) agree on an open set.
\end{definition}

\begin{definition}[Isomorphism and projective equivalence of marked resolutions]
\label{def:marked-resolution-isomorphism}
Let \(\mathcal R\) and \(\mathcal R'\) be marked analytic graph resolutions.
A \emph{marked isomorphism} consists of a diffeomorphism \(F:M\to M'\), a
metric-graph isomorphism \(\Psi:\Gamma\to\Gamma'\), and a diffeomorphism
\(\widetilde F:\widetilde M\to\widetilde M'\) carrying resolving sheets to
resolving sheets up to a permutation of their labels, such that
\[
  \Psi\circ\pi=\pi'\circ F,
  \qquad
  F\circ q=q'\circ\widetilde F,
  \qquad
  \widetilde F_*\widetilde\mu=\widetilde\mu'.
\]
It is a \emph{projective marked isomorphism} if the last equality is replaced
by \(\widetilde F_*\widetilde\mu=c\widetilde\mu'\) for one constant \(c>0\).
\end{definition}

\begin{proposition}[Invariance of the pushed-forward marked measure]
\label{prop:marked-measure-invariance}
Under a marked isomorphism,
\[
  \Psi_*\mu_\Gamma=\mu_{\Gamma'}.
\]
Under a projective marked isomorphism,
\[
  \Psi_*\mu_\Gamma=c\,\mu_{\Gamma'}.
\]
In the constant-edge case, the relative weight vector \([\mathbf a]\) is
therefore an invariant of the projective marked-isomorphism class.
\end{proposition}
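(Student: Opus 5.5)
The argument is a functoriality computation for pushforward measures. It then reduces the edge-weight statement to the behaviour of metric isomorphisms on arclength.

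First I record the commuting square that the marked isomorphism supplies on the resolving presentation. From \(\Psi\circ\pi=\pi'\circ F\) and \(F\circ q=q'\circ\widetilde F\) we obtain
\[
  \Psi\circ(\pi\circ q)=\pi'\circ F\circ q=(\pi'\circ q')\circ\widetilde F .
\]
Pushforward of measures is functorial, \((g\circ h)_*=g_*h_*\). Applying this to \(\widetilde\mu\) gives
\[
  \Psi_*\mu_\Gamma=\Psi_*(\pi\circ q)_*\widetilde\mu
  =(\pi'\circ q')_*\widetilde F_*\widetilde\mu .
\]
Inserting \(\widetilde F_*\widetilde\mu=\widetilde\mu'\) yields \(\mu_{\Gamma'}\). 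In the projective case we insert \(c\widetilde\mu'\) instead, and linearity of pushforward gives \(c\,\mu_{\Gamma'}\).

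The only technical point is that all maps involved are measurable. They are continuous: \(\widetilde F\) is a diffeomorphism, \(q,q'\) are quotient maps, \(\pi,\pi'\) are Hausdorff reflections, and \(\Psi\) is a homeomorphism. Every set is therefore Borel and the identities hold setwise, so local finiteness is preserved as well. The label permutation in \(\widetilde F\) needs no separate treatment, because it is absorbed into \(\widetilde F\) as a map of disjoint unions.

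For the constant-edge case, \(\Psi\) is a metric-graph isomorphism. It therefore maps each open edge \(e\) isometrically onto an open edge \(\Psi(e)\) of \(\Gamma'\), and it carries arclength \(ds_e\) to arclength \(ds'_{\Psi(e)}\). This holds whichever way the edge is oriented, since an isometry of intervals is either a translation or a reflection.

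Restrict the identity \(\Psi_*\mu_\Gamma=c\,\mu_{\Gamma'}\) to \(\Psi(e)\). The left side has density \(a_e\) with respect to \(ds'_{\Psi(e)}\); the right side has density \(c\,a'_{\Psi(e)}\). Uniqueness of Radon–Nikodym derivatives then gives \(a_e=c\,a'_{\Psi(e)}\) for every edge.

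It follows that the weight vector \(\mathbf a\), transported along the edge bijection induced by \(\Psi\), equals \(c\,\mathbf a'\). Its projective class \([\mathbf a]\) is therefore invariant, and the edge relabelling is exactly the identification made by \(\Psi\).

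The main obstacle is this last step: making sure that a metric-graph isomorphism really induces a bijection of open edges that preserves arclength. Loops, parallel edges, and external half-lines could in principle cause trouble. I would handle it by arguing that \(\Psi\) restricted to \(\Gamma\setminus V(\Gamma)\) is a local isometry onto \(\Gamma'\setminus V(\Gamma')\), taking the vertex sets as part of the isomorphism data. Each edge is connected and isometric to an interval, so its image is a single edge.
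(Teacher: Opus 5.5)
Your proposal is correct and matches the paper's proof. Both use functoriality of pushforward along \(\Psi\circ\pi\circ q=\pi'\circ q'\circ\widetilde F\), then preservation or common rescaling of \(\widetilde\mu\), then an edge-by-edge comparison of Radon--Nikodym densities; your extra steps on measurability and on \(\Psi\) inducing an arclength-preserving bijection of open edges only make explicit what the paper leaves implicit.
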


\begin{proof}
Functoriality of pushforward gives
\[
  \Psi_*\mu_\Gamma
  =
  \Psi_*(\pi\circ q)_*\widetilde\mu
  =
  (\pi'\circ q'\circ\widetilde F)_*\widetilde\mu.
\]
The asserted identities follow from preservation, or common positive
rescaling, of \(\widetilde\mu\). On each open graph edge this identity is an
identity of Radon--Nikodym densities, so a common rescaling changes all edge
weights by the same factor and preserves their projective class.
\end{proof}

Bundle transport, connections, potentials, and quadratic forms are not part
of the bare triple \((M,\Gamma,\pi)\). When they are used below they are
additional quantum data placed over a specified marked analytic resolution.

\subsection{Endpoint transports and formal equalisers}
\label{subsec:graph-formal-equaliser}

For \(r\in\mathbb N_0\), let \(\mathbb J^r:=\mathbb C^{r+1}, \qquad \mathbb J^\infty:=\mathbb C^{\mathbb N_0}\) be the truncated and formal scalar jet spaces.

For a function smooth up to an edge end \(\epsilon\), write
\[
  J_\epsilon^r u
  :=
  \left(
    \partial_{t_\epsilon}^{\,j}
    u_{e(\epsilon)}(0)
  \right)_{j=0}^r,
\]
and
\[
  J_\epsilon^\infty u
  :=
  \left(
    \partial_{t_\epsilon}^{\,j}
    u_{e(\epsilon)}(0)
  \right)_{j\geq0}.
\]

Composition with a smooth nondegenerate half-germ \(\kappa:[0,\delta)\longrightarrow[0,\eta)\) induces an invertible triangular jet map
\[
  C_\kappa^r:
  \mathbb J^r\longrightarrow\mathbb J^r,
  \qquad
  C_\kappa^r(j_0^r u)
  :=
  j_0^r(u\circ\kappa).
\]
Its diagonal entries are
\(1, \kappa'(0), \ldots, \kappa'(0)^r\).
The same construction gives an automorphism \(C_\kappa^\infty\) of the formal jet space.

For
\(\sigma\in\{+1,-1\}\),
define \(S_\sigma^r(\xi_0,\ldots,\xi_r) := \left( \sigma^j\xi_j \right)_{j=0}^r\) and set
\(T_{p,\sigma}^r := S_\sigma^r C_{\kappa_{p,\sigma}}^r\).
Likewise, define
\(T_{p,\sigma}^\infty := S_\sigma^\infty C_{\kappa_{p,\sigma}}^\infty\).
The sign operator records that the two local sides are parameterised by
the signed manifold coordinates \(x\) and \(-x\).

Define the global endpoint-jet spaces
\[
  \mathbb J_\Gamma^r
  :=
  \prod_{\epsilon\in\mathcal E(\Gamma)}
  \mathbb J^r,
  \qquad
  \mathbb J_\Gamma^\infty
  :=
  \prod_{\epsilon\in\mathcal E(\Gamma)}
  \mathbb J^\infty.
\]

\begin{definition}[Truncated endpoint equaliser]
\label{def:truncated-graph-equaliser}
The order-\(r\) endpoint equaliser
\[
  \mathfrak E_\pi^r
  \subseteq
  \mathbb J_\Gamma^r
\]
consists of the families
\[
  \xi
  =
  (\xi_\epsilon)_{\epsilon\in\mathcal E(\Gamma)}
\]
satisfying:

\begin{enumerate}[label=\textup{(\alph*)}]
  \item for every \(v\in V(\Gamma)\),
        \[
          (\xi_\epsilon)_0
          =
          (\xi_{\epsilon'})_0
          \qquad
          \text{for all }
          \epsilon,\epsilon'\in\mathcal E(v);
        \]

  \item for every \(p\in B_\pi\),
        \[
          T_{p,+}^r\xi_{\epsilon_{p,+}}
          =
          T_{p,-}^r\xi_{\epsilon_{p,-}}.
        \]
\end{enumerate}
\end{definition}

Condition \textup{(a)} records continuity on the Hausdorff graph. It is
included separately because the exceptional points over a vertex need
not connect all incident edge ends into one transport component.

The full formal equaliser \(\mathfrak E_\pi^\infty \subseteq \mathbb J_\Gamma^\infty\) is defined by the corresponding relations at every finite order. Let
\[
  \operatorname{pr}_r:
  \mathbb J_\Gamma^\infty
  \longrightarrow
  \mathbb J_\Gamma^r
\]
denote truncation.

\begin{definition}[Prolongable endpoint jets]
\label{def:prolongable-endpoint-jets}
The order-\(r\) prolongation space is
\[
  \boxed{
  \mathfrak J_\pi^r
  :=
  \operatorname{pr}_r
  \left(
    \mathfrak E_\pi^\infty
  \right).
  }
\]
The resolution is \(r\)-prolongation complete when
\[
  \mathfrak J_\pi^r
  =
  \mathfrak E_\pi^r,
\]
and formally prolongation complete when this holds for every
\(r\in\mathbb N_0\).
\end{definition}

One always has
\(\mathfrak J_\pi^r \subseteq \mathfrak E_\pi^r\).
The inclusion may be strict because higher-order transport equations
can impose conditions on coefficients already present in a lower
truncation.

At order zero, no such distinction occurs.

\begin{proposition}[Zeroth-order prolongation]
\label{prop:zeroth-order-prolongation}
\[
  \boxed{
  \mathfrak J_\pi^0
  =
  \mathfrak E_\pi^0
  =
  \left\{
    (\xi_\epsilon):
    \xi_\epsilon=\xi_{\epsilon'}
    \text{ whenever }
    v(\epsilon)=v(\epsilon')
  \right\}.
  }
\]
\end{proposition}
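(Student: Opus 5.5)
We establish the three descriptions in the natural order: first \(\mathfrak E_\pi^0\) explicitly, then the inclusion \(\mathfrak J_\pi^0\subseteq\mathfrak E_\pi^0\) (already noted after Definition~\ref{def:prolongable-endpoint-jets}), and finally the reverse inclusion by a direct construction of full formal jets.

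\textbf{The order-zero equaliser.} At order \(r=0\) the jet space is \(\mathbb J^0=\mathbb C\). The composition map \(C_\kappa^0\) is the identity, because \((u\circ\kappa)(0)=u(\kappa(0))=u(0)\). The sign map \(S_\sigma^0\) is multiplication by \(\sigma^0=1\). Hence \(T_{p,\pm}^0=\mathrm{id}\), and condition (b) of Definition~\ref{def:truncated-graph-equaliser} becomes \(\xi_{\epsilon_{p,+}}=\xi_{\epsilon_{p,-}}\). Both of these edge ends lie in \(\mathcal E(\pi(p))\), so condition (b) is already implied by condition (a). Therefore \(\mathfrak E_\pi^0\) is exactly the set of families that are constant on each \(\mathcal E(v)\), which is the displayed right-hand side.

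\textbf{The reverse inclusion.} It remains to show \(\mathfrak E_\pi^0\subseteq\operatorname{pr}_0(\mathfrak E_\pi^\infty)\). Take \(\xi\in\mathfrak E_\pi^0\) and write \(c_v\) for its common value at the vertex \(v\). Define the formal family \(\Xi\in\mathbb J_\Gamma^\infty\) by \(\Xi_\epsilon:=(c_{v(\epsilon)},0,0,\ldots)\), so every end carries the jet of a constant. We verify that \(\Xi\in\mathfrak E_\pi^\infty\):
\begin{itemize}
  \item Condition (a) holds at every order, since the zeroth entries of \(\Xi_\epsilon\) depend only on \(v(\epsilon)\).
  \item Condition (b) holds at every order \(r\). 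By Fa\`a di Bruno, the order-\(j\ge1\) jet of \(u\circ\kappa\) at \(0\) involves only the derivatives \(u^{(i)}(0)\) with \(1\le i\le j\). So \(C_\kappa^r\) sends the constant jet \((c,0,\ldots,0)\) to itself, and \(S_\sigma^r\) fixes it as well. Both sides of (b) therefore equal \((c_{\pi(p)},0,\ldots,0)\), because \(\epsilon_{p,\pm}\in\mathcal E(\pi(p))\).
\end{itemize}
Thus \(\Xi\in\mathfrak E_\pi^\infty\) and \(\operatorname{pr}_0\Xi=\xi\).

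\textbf{Where care is needed.} The only point requiring attention is the triangular structure of \(C_\kappa^r\), specifically that its first column is \((1,0,\ldots,0)^{\mathsf T}\), so that constants are fixed. This follows from the chain rule because \(\kappa(0)=0\). No realisability of the jets by actual functions is needed, since \(\mathfrak J_\pi^0\) is defined purely formally. If one also wants actual realisability, the edgewise constants \(c_v\), cut off near each vertex, provide it.
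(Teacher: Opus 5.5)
Your proof is correct and follows the same route as the paper: every composition transport \(C_\kappa^r\) and sign operator \(S_\sigma^r\) fixes constant jets, so a vertex-constant family prolongs to an element of \(\mathfrak E_\pi^\infty\) by setting all positive-order coefficients to zero. The only difference is that you spell out two points the paper leaves implicit, namely that \(T_{p,\pm}^0=\mathrm{id}\) makes condition (b) redundant given (a) at order zero, and the Fa\`a di Bruno reason why constants are fixed.
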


\begin{proof}
Every composition transport and every sign operator fixes the
zero-order coefficient. Hence every formal transport fixes constant
jets. An assignment of one scalar value to each graph vertex therefore
extends to a compatible formal endpoint family by setting all
positive-order coefficients equal to zero.
\end{proof}

\subsection{The descended scalar core}
\label{subsec:descended-scalar-core}

Because \(\pi\) is the Hausdorff reflection, every continuous map from
\(M\) to a Hausdorff space is constant on the fibres of \(\pi\).
Every \(f\in C_c^\infty(M)\) factors uniquely as \(f=\widehat f\circ\pi\) for a continuous scalar function
\(\widehat f:\Gamma\longrightarrow\mathbb C\).

\begin{definition}[Descended scalar core]
\label{def:descended-scalar-core}
The descended compactly supported smooth core is
\[
  \mathscr A_M
  :=
  \left\{
    \widehat f:
    f\in C_c^\infty(M),\
    f=\widehat f\circ\pi
  \right\}.
\]
\end{definition}

Let \(C_{c,\oplus}^\infty(\Gamma)\) denote the compactly supported edge families that are smooth up to
every finite edge end, without initially imposing vertex
compatibility.

\begin{lemma}[Smooth gluing across an exceptional point]
\label{lem:graph-smooth-gluing}
Let
\[
  u\in C_{c,\oplus}^\infty(\Gamma)
\]
and let
\[
  p\in B_\pi.
\]
Then \(u\circ\pi\) is smooth near \(p\) if and only if
\[
  T_{p,+}^\infty
  J_{\epsilon_{p,+}}^\infty u
  =
  T_{p,-}^\infty
  J_{\epsilon_{p,-}}^\infty u.
\]
\end{lemma}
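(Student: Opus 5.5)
Work in the centred chart \(\phi_p:U_p\to(-\delta_p,\delta_p)\) and write \(w(x):=u(\pi(\phi_p^{-1}(x)))\) for \(x\neq0\), with \(w(0):=u(\pi(p))\). By Definition~\ref{def:finite-smooth-graph-resolution}\textup{(v)}, for \(\sigma\in\{\pm1\}\) and \(0<x<\delta_p\),
\[
  w(\sigma x)=u_{e(\epsilon_{p,\sigma})}\bigl(\kappa_{p,\sigma}(x)\bigr),
\]
where \(u_e\) is read in the outward coordinate \(t_{\epsilon_{p,\sigma}}\). The two restrictions \(w|_{(0,\delta_p)}\) and \(w|_{(-\delta_p,0)}\) are therefore smooth on the open half-intervals. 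Each extends smoothly to the closed half-interval, because \(u\) is smooth up to every finite edge end and \(\kappa_{p,\sigma}\) is a smooth nondegenerate half-germ. Smoothness of \(u\circ\pi\) near \(p\) is the same as smoothness of \(w\) near \(0\), since \(\phi_p\) is a chart of \(M\).

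The first step computes the one-sided jets of \(w\) at \(0\). On the right side, \(j^\infty_{0^+}w=C^\infty_{\kappa_{p,+}}J^\infty_{\epsilon_{p,+}}u\) by the definition of the composition jet map. On the left side, put \(\tilde w(x):=w(-x)=u\circ\kappa_{p,-}(x)\) for \(x>0\). Its jet at \(0^+\) is \(C^\infty_{\kappa_{p,-}}J^\infty_{\epsilon_{p,-}}u\). Since \(w^{(j)}(0^-)=(-1)^j\tilde w^{(j)}(0^+)\), the left jet of \(w\) is \(S^\infty_{-}C^\infty_{\kappa_{p,-}}J^\infty_{\epsilon_{p,-}}u=T^\infty_{p,-}J^\infty_{\epsilon_{p,-}}u\). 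Likewise the right jet is \(T^\infty_{p,+}J^\infty_{\epsilon_{p,+}}u\), because \(S_+\) is the identity. Strictly, \(C^\infty_\kappa\) was defined on truncations, but it is compatible with truncation, so these identities hold order by order.

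The second step is the elementary gluing fact. A function on \((-\delta,\delta)\) whose restrictions to \([0,\delta)\) and \((-\delta,0]\) are smooth up to \(0\) is smooth if and only if the two one-sided derivative sequences agree at every order. Necessity is clear. For sufficiency, induct on \(j\): the derivative \(w^{(j)}\) is continuous at \(0\) and has equal one-sided limits of its difference quotients, given by the matching \((j+1)\)-st one-sided derivatives. Alternatively, one can argue via the mean value theorem applied to each side. Combined with the first step, this gives the equivalence.

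One point needs care. The zero-order condition \(w(0^+)=w(0^-)=w(0)\) needs \(u\) to take a single value at the vertex \(\pi(p)\) from both edge ends. The zero-order component of the transported-jet identity says exactly that the two end values coincide. So, provided \(w(0)\) is defined as that common limit, continuity at \(0\) is part of the jet identity. For the forward implication, smoothness of \(u\circ\pi\) forces the end values to agree with \(u(\pi(p))\).

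The main obstacle is bookkeeping rather than analysis. The sign convention must be verified precisely: the left side is parametrised by \(-x\), which produces the factor \((-1)^j\) at order \(j\) via \(S_-\). In addition, \(C_\kappa\) must be applied before the sign in the composite \(T_{p,\sigma}=S_\sigma C_{\kappa_{p,\sigma}}\).
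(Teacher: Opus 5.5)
Your proposal is correct and follows the paper's route. You pass to the centred chart \(\phi_p\), read off the one-sided representations \(x\mapsto u_{e(\epsilon_{p,\sigma})}(\kappa_{p,\sigma}(\sigma x))\) for \(\sigma x>0\), and use the elementary fact that a function smooth up to \(0\) from both sides is smooth through \(0\) exactly when all one-sided derivatives agree; your explicit check that the left jet is \(S_-^\infty C_{\kappa_{p,-}}^\infty J_{\epsilon_{p,-}}^\infty u\), and your remark that the zero-order component supplies the common value at \(p\) because \(u\) is not assumed vertex-continuous, make explicit what the paper's short proof leaves implicit.
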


\begin{proof}
In the centred coordinate \(\phi_p\), the pullback \(u\circ\pi\) has
the one-sided representations
\(x\longmapsto u_{e(\epsilon_{p,+})} \bigl( \kappa_{p,+}(x) \bigr), \qquad x>0\),
and
\(x\longmapsto u_{e(\epsilon_{p,-})} \bigl( \kappa_{p,-}(-x) \bigr), \qquad x<0\).
A function that is smooth on the two half-intervals is smooth through
the origin precisely when all one-sided derivatives agree. Those
derivatives are exactly the two transported formal jets displayed
above.
\end{proof}

\begin{theorem}[Formal characterisation of the scalar core]
\label{thm:formal-characterization-scalar-core}
\[
  \boxed{
  \mathscr A_M
  =
  \left\{
    u\in C_{c,\oplus}^\infty(\Gamma):
    \operatorname{Tr}_\infty u
    \in
    \mathfrak E_\pi^\infty
  \right\}.
  }
\]
\end{theorem}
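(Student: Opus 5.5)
We prove the two inclusions separately. Throughout, \(\operatorname{Tr}_\infty u\) denotes the family \((J_\epsilon^\infty u)_{\epsilon\in\mathcal E(\Gamma)}\) of formal endpoint jets. The argument is local: away from \(B_\pi\), smoothness of \(u\circ\pi\) is the same as edgewise smoothness, by condition (iii) of Definition~\ref{def:finite-smooth-graph-resolution}. Near \(B_\pi\), smoothness is decided by Lemma~\ref{lem:graph-smooth-gluing}. Vertex continuity, condition (a) of Definition~\ref{def:truncated-graph-equaliser}, is handled through the Hausdorff factorisation.

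\emph{Inclusion \(\subseteq\).} Let \(f\in C_c^\infty(M)\) and \(f=\widehat f\circ\pi\). We first show that \(\widehat f\) is compactly supported. The set \(\pi(\operatorname{supp}f)\) is compact and contains the support of \(\widehat f\). This uses that \(\pi\) is a quotient map and that \(\widehat f\) vanishes off \(\pi(\operatorname{supp} f)\).

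Next we show edgewise smoothness up to every finite edge end. In the interior of each edge this follows from the diffeomorphism in (iii). Near an edge end \(\epsilon\), condition (vi) supplies an exceptional point \(p\) with a side \(\epsilon=\epsilon_{p,\sigma}\). On that side \(\widehat f_{e(\epsilon)}(t)=f(\phi_p^{-1}(\sigma\kappa_{p,\sigma}^{-1}(t)))\), and \(\kappa_{p,\sigma}^{-1}\) is a smooth half-germ by nondegeneracy. Hence \(\widehat f_{e(\epsilon)}\) is smooth up to \(t=0\), and so \(\widehat f\in C_{c,\oplus}^\infty(\Gamma)\).

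It remains to check the two equaliser conditions. Condition (b) at every finite order holds by Lemma~\ref{lem:graph-smooth-gluing}, since \(\widehat f\circ\pi=f\) is smooth near each \(p\). Condition (a) holds because \(\widehat f\) is continuous on \(\Gamma\): every edge end at \(v\) has zeroth coefficient equal to the limit of \(\widehat f\) at \(v\), which is \(\widehat f(v)\).

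\emph{Inclusion \(\supseteq\).} Let \(u\in C_{c,\oplus}^\infty(\Gamma)\) with \(\operatorname{Tr}_\infty u\in\mathfrak E_\pi^\infty\). By condition (a), the zeroth-order endpoint values agree at each vertex, so we may define \(u(v)\) as this common value. The result is a continuous function on \(\Gamma\).

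Set \(f=u\circ\pi\). We check smoothness of \(f\) locally on \(M\).
\begin{enumerate}[label=\textup{(\alph*)}]
  \item Off \(B_\pi\), \(f\) is smooth by (iii).
  \item Near each \(p\in B_\pi\), condition (b) at all orders and Lemma~\ref{lem:graph-smooth-gluing} give smoothness. The value of \(f\) at \(p\) itself is \(u(\pi(p))\), which matches the common one-sided limit because of (a).
\end{enumerate}

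For compact support, note that \(\operatorname{supp}f\subseteq\pi^{-1}(\operatorname{supp}u)\). The right-hand side is compact by properness, condition (iv). In a non-Hausdorff space a closed subset of a compact set is still compact, so \(\operatorname{supp} f\) is compact. Therefore \(f\in C_c^\infty(M)\) and \(u=\widehat f\in\mathscr A_M\).

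The main point needing care is the treatment of the exceptional points themselves. One must check that smoothness at \(p\) in the chart \(\phi_p\) is fully captured by the two one-sided germs. This holds because \(U_p\setminus\{p\}\) has exactly the two components described in condition (v). A second subtlety is that continuity at vertices not reached by any transport relation must come from condition (a) rather than from (b). Neither step is deep, and both reduce to the stated lemma together with the Hausdorff reflection property.
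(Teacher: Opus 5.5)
Your proof is correct and follows the paper's argument. For the forward inclusion, both use edgewise smoothness, the transport relations from Lemma~\ref{lem:graph-smooth-gluing}, and continuity at the vertices; for the converse, both use the zeroth-order equations, the same lemma, and properness of \(\pi\), and you additionally spell out details the paper leaves implicit (the value of \(f\) at the exceptional points, the use of condition (vi) for smoothness up to edge ends, and compactness of \(\operatorname{supp}f\) as a closed subset of \(\pi^{-1}(\operatorname{supp}u)\)).
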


\begin{proof}
If \(u\in\mathcal A_M\), write \(f=u\circ\pi\in C_c^\infty(M)\). Smoothness on the regular locus gives edgewise smoothness, and the nondegenerate half-germs convert smoothness at each exceptional point into the formal transport relations of Lemma~\ref{lem:graph-smooth-gluing}; continuity gives common vertex values. Compactness follows from \(\operatorname{supp}u\subseteq\pi(\operatorname{supp}f)\).

Conversely, if \(u\in C_{c,\oplus}^\infty(\Gamma)\) has \(\operatorname{Tr}^\infty u\in\mathfrak E_\pi^\infty\), the zeroth-order equations make \(u\) continuous and Lemma~\ref{lem:graph-smooth-gluing} makes \(f=u\circ\pi\) smooth at every exceptional point. If \(K=\operatorname{supp}u\), then \(\operatorname{supp}f\subseteq\pi^{-1}(K)\), which is compact by properness of \(\pi\). \(f\in C_c^\infty(M)\).
\end{proof}

\subsection{Realisation of compatible formal jets}
\label{subsec:graph-formal-realisation}

The classical Borel theorem realises an arbitrary formal jet at one
point by a smooth function
\citep{Borel1895,Hormander2003AnalysisI}. Since the graph has finitely
many edge ends, all endpoint jets can be realised simultaneously.

\begin{lemma}[Edgewise Borel realisation]
\label{lem:edgewise-borel-realisation}
The formal endpoint trace map
\[
  \operatorname{Tr}_\infty:
  C_{c,\oplus}^\infty(\Gamma)
  \longrightarrow
  \mathbb J_\Gamma^\infty
\]
is surjective.
\end{lemma}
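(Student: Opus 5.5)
The plan is to realise the prescribed jets one edge end at a time and sum the results. The two ingredients are Borel's theorem at a single point and cutoffs whose supports do not reach any other edge end. Let \(\xi=(\xi_\epsilon)_{\epsilon\in\mathcal E(\Gamma)}\in\mathbb J_\Gamma^\infty\) be given.

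First, fix an edge end \(\epsilon\) with outward coordinate \(t_\epsilon\geq0\). By the classical Borel theorem \citep{Borel1895,Hormander2003AnalysisI} there is \(\beta_\epsilon\in C^\infty(\mathbb R)\) with \(\beta_\epsilon^{(j)}(0)=(\xi_\epsilon)_j\) for every \(j\geq0\). Choose \(\eta\in C_c^\infty(\mathbb R,[0,1])\) equal to one near \(0\), with support in \((-\delta,\delta)\). Take \(\delta\) smaller than half of every compact edge length; external half-lines impose no constraint. Define \(u^{(\epsilon)}\) to be the function on the edge \(e(\epsilon)\) given, in the coordinate \(t_\epsilon\), by \(\eta(t_\epsilon)\beta_\epsilon(t_\epsilon)\) for \(0\le t_\epsilon<\delta\). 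Set it to zero on the rest of \(e(\epsilon)\) and on all other edges.

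Because \(\delta<\ell_e/2\), the support of \(u^{(\epsilon)}\) stays away from the opposite end of the same edge. This also covers loops and parallel edges, since each end label carries its own collar. Hence \(u^{(\epsilon)}\) is smooth up to every finite edge end and compactly supported, so \(u^{(\epsilon)}\in C_{c,\oplus}^\infty(\Gamma)\). Its formal jet is \(\xi_\epsilon\) at \(\epsilon\) and zero at every other end \(\epsilon'\neq\epsilon\).

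Next, set \(u:=\sum_{\epsilon\in\mathcal E(\Gamma)}u^{(\epsilon)}\). This is a finite sum because \(\mathcal E(\Gamma)\) is finite. Since \(\operatorname{Tr}_\infty\) is linear, \(\operatorname{Tr}_\infty u=\xi\), which proves surjectivity. No vertex compatibility needs to be checked, because \(C_{c,\oplus}^\infty(\Gamma)\) does not impose it.

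The only point requiring care is the bookkeeping for loops and for the right ends of intervals. The right end uses the reflected coordinate \(t_\epsilon=\ell_e-s\), and composing with an affine reflection preserves smoothness up to the endpoint. Separating the two collars of a loop is exactly what the choice \(\delta<\ell_e/2\) guarantees. Beyond this there is no genuine obstacle.
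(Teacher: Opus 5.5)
Your proposal is correct and is essentially the paper's argument. Both realise each endpoint jet by Borel's theorem in a collar, multiply by a cutoff equal to one near that end, keep the two collars of each compact edge disjoint (your uniform choice \(\delta<\ell_e/2\)), and assemble finitely many such pieces; the only difference is that you index the construction by edge ends rather than by edges, which treats loops and half-lines uniformly.
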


\begin{proof}
On a compact edge, choose disjoint collars of its two endpoint labels.
Apply Borel's theorem in each collar, multiply the resulting functions
by cutoffs supported in those collars and equal to one near the
corresponding endpoint, and add them. This realises the two formal jets
independently.

On an external half-line, realise the formal jet at its finite edge end
and multiply by a compactly supported cutoff equal to one near that
end. Since the graph has finitely many edges, the edgewise
constructions assemble into one element of
\(C_{c,\oplus}^\infty(\Gamma)\).
\end{proof}

\begin{theorem}[Formal trace realisation]
\label{thm:graph-formal-trace-realisation}
\[
  \boxed{
  \operatorname{Tr}_\infty(\mathscr A_M)
  =
  \mathfrak E_\pi^\infty.
  }
\]
For every
\[
  r\in\mathbb N_0,
\]
\[
  \boxed{
  \operatorname{Tr}_r(\mathscr A_M)
  =
  \mathfrak J_\pi^r.
  }
\]
\end{theorem}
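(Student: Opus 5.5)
The first identity will follow by combining the formal characterisation of the scalar core (Theorem~\ref{thm:formal-characterization-scalar-core}) with edgewise Borel realisation (Lemma~\ref{lem:edgewise-borel-realisation}). The second identity will then follow by truncation.

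For the inclusion \(\operatorname{Tr}_\infty(\mathscr A_M)\subseteq\mathfrak E_\pi^\infty\), I will read it off directly from Theorem~\ref{thm:formal-characterization-scalar-core}. Every \(u\in\mathscr A_M\) lies in \(C_{c,\oplus}^\infty(\Gamma)\) and has formal endpoint trace in \(\mathfrak E_\pi^\infty\).

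For the reverse inclusion, I fix \(\xi\in\mathfrak E_\pi^\infty\). Lemma~\ref{lem:edgewise-borel-realisation} supplies some \(u\in C_{c,\oplus}^\infty(\Gamma)\) with \(\operatorname{Tr}_\infty u=\xi\). The characterisation theorem then places \(u\) in \(\mathscr A_M\). The only point to check is that the characterisation theorem imposes no condition beyond the trace condition, and its statement shows this is so. Vertex continuity is condition (a) of the equaliser, and smoothness at the exceptional points is Lemma~\ref{lem:graph-smooth-gluing}. Compact support is built into \(C_{c,\oplus}^\infty(\Gamma)\), and it lifts to compact support on \(M\) by properness of \(\pi\). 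No cross-vertex interaction can arise, because the Borel realisation works independently in disjoint collars of each edge end and the equaliser constraints involve only endpoint jets.

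For the truncated statement, I note that \(\operatorname{Tr}_r=\operatorname{pr}_r\circ\operatorname{Tr}_\infty\) on \(C_{c,\oplus}^\infty(\Gamma)\), since truncating the formal jet gives the \(r\)-jet. Hence
\[
  \operatorname{Tr}_r(\mathscr A_M)
  =
  \operatorname{pr}_r\bigl(\operatorname{Tr}_\infty(\mathscr A_M)\bigr)
  =
  \operatorname{pr}_r\bigl(\mathfrak E_\pi^\infty\bigr)
  =
  \mathfrak J_\pi^r
\]
by Definition~\ref{def:prolongable-endpoint-jets}.

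I do not expect a real obstacle. The step needing the most care is the bookkeeping in the reverse inclusion. It must be clear that realising an arbitrary formal family in \(\mathfrak E_\pi^\infty\) edge by edge, including loops and parallel edges with their distinct end labels, produces exactly the prescribed jets at every end. Lemma~\ref{lem:edgewise-borel-realisation} guarantees this, because the endpoint trace is indexed by edge ends rather than vertices. It must also be clear that \(\operatorname{Tr}_\infty\) in the characterisation theorem is the same map as in that lemma. I would state this identification explicitly, noting that the truncated identity shows the gap \(\mathfrak J_\pi^r\subseteq\mathfrak E_\pi^r\) is purely formal.
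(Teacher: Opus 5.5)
Your proposal is correct and matches the paper's proof: the forward inclusion comes from Theorem~\ref{thm:formal-characterization-scalar-core}, the reverse inclusion realises \(\xi\) by Lemma~\ref{lem:edgewise-borel-realisation} and then applies the characterisation theorem, and the truncated identity follows by applying \(\operatorname{pr}_r\) together with Definition~\ref{def:prolongable-endpoint-jets}. Your extra remarks on edge-end indexing, properness and the absence of cross-vertex interaction are already built into the cited lemma and theorem, so they are harmless but not needed.
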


\begin{proof}
The inclusion \(\operatorname{Tr}_\infty(\mathscr A_M) \subseteq \mathfrak E_\pi^\infty\) follows from
Theorem~\ref{thm:formal-characterization-scalar-core}.

Conversely, let
\(\xi\in\mathfrak E_\pi^\infty\).
By
Lemma~\ref{lem:edgewise-borel-realisation},
there exists \(u\in C_{c,\oplus}^\infty(\Gamma)\) such that
\(\operatorname{Tr}_\infty u=\xi\).
The formal characterisation theorem then gives
\(u\in\mathscr A_M\).
This proves the formal identity. Applying
\(\operatorname{pr}_r\) gives
\[
  \operatorname{Tr}_r(\mathscr A_M)
  =
  \operatorname{pr}_r
  \left(
    \mathfrak E_\pi^\infty
  \right)
  =
  \mathfrak J_\pi^r.
\]
\end{proof}

\subsection{The one-dimensional endpoint trace package}
\label{subsec:graph-endpoint-trace-package}

For
\(m\in\mathbb N, \qquad 1\leq p<\infty\),
set
\(W_\oplus^{m,p}(\Gamma) := \bigoplus_{e\in E(\Gamma)} W^{m,p}(I_e)\).

\begin{proposition}[Endpoint traces and zero-trace density]
\label{prop:one-dimensional-trace-package}
For every
\[
  m\in\mathbb N,
  \qquad
  1\leq p<\infty,
\]
the endpoint trace map
\[
  \operatorname{Tr}_{m-1}:
  W_\oplus^{m,p}(\Gamma)
  \longrightarrow
  \mathbb J_\Gamma^{m-1}
\]
is bounded and surjective. Moreover,
\[
  \ker\operatorname{Tr}_{m-1}
  =
  \overline{
    C_c^\infty
    \bigl(
      \Gamma\setminus V(\Gamma)
    \bigr)
  }^{\,W_\oplus^{m,p}(\Gamma)}.
\]
\end{proposition}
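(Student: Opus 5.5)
The statement is edgewise, so the plan is to prove it one edge at a time and then take a finite direct sum. On each interval \(I_e=(0,\ell_e)\), \(W^{m,p}(I_e)\) embeds continuously into \(C^{m-1}(\overline{I_e})\). For compact edges this is the standard one-dimensional Sobolev embedding. For a half-line we apply it on \((0,1)\), since only the finite end carries a trace. This embedding gives boundedness of \(u\mapsto(\partial_{t_\epsilon}^j u_{e(\epsilon)}(0))_{j\le m-1}\) for each edge end. The sign conventions in the outward coordinates \(t_\epsilon\) only multiply coordinates by \(\pm1\). Boundedness of \(\operatorname{Tr}_{m-1}\) follows because \(\mathcal E(\Gamma)\) is finite.

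For surjectivity, the construction of Lemma~\ref{lem:edgewise-borel-realisation} can be reused in a simplified finite form. For each edge end, take the polynomial \(\sum_{j<m}\xi_j t_\epsilon^j/j!\) and multiply it by a cutoff supported in a collar of that end and equal to one near it. On a compact edge, choose the two collars disjoint. Each result is a \(C_c^\infty\) edge function, hence lies in \(W^{m,p}(I_e)\), and it realises the prescribed jet at that end while leaving the other end's jet zero. Summing these gives a linear right inverse. It is automatically bounded because the target \(\mathbb J_\Gamma^{m-1}\) is finite-dimensional.

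For the kernel identity, one inclusion is immediate. Elements of \(C_c^\infty(\Gamma\setminus V(\Gamma))\) vanish near every edge end, so they have zero trace, and the kernel is closed by boundedness. For the reverse inclusion, take \(u\) with zero traces and treat each edge separately. The claim then reduces to the standard one-dimensional fact that \(W_0^{m,p}(I)=\{v\in W^{m,p}(I): v^{(j)}=0 \text{ at finite endpoints},\ j<m\}\), including for \(p=1\). The natural route in one dimension goes through three steps.
\begin{enumerate}[label=(\roman*)]
  \item Extend \(u_e\) by zero across each finite endpoint. Vanishing of all traces through order \(m-1\) makes the extension lie in \(W^{m,p}\): the weak derivatives of the extension have no boundary delta terms, which one checks by integration by parts.
  \item Translate the extended function inward by \(\delta\), so that its support lies a positive distance from the endpoint. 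On a compact edge, a dilation toward the midpoint does this for both ends at once. Translation and dilation are continuous in \(W^{m,p}\) for \(p<\infty\).
  \item Mollify. On a half-line, also multiply by a cutoff at infinity, which converges in \(W^{m,p}\) by the standard tail estimate.
\end{enumerate}
Since \(\Gamma\) has finitely many edges, the approximants assemble into elements of \(C_c^\infty(\Gamma\setminus V(\Gamma))\).

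The only step needing care is the zero-extension/dilation argument with constants uniform enough to pass to the limit. I would cite the classical characterisation of \(W_0^{m,p}\) on intervals (e.g.\ \citep{AdamsFournier2003,Leoni2017Sobolev}) rather than redo it. Near the endpoint, the zero-trace condition together with Taylor's formula gives \(|u^{(j)}(t)|\le \int_0^t|u^{(m)}|\,t^{m-1-j}\) type bounds, which confirm that the cutoff alternative, \(\chi(t/\varepsilon)u\), also converges via a Hardy inequality. That Hardy argument is the fallback if \(p=1\) causes trouble with the extension approach.
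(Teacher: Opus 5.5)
Your proposal is correct and is essentially the paper's own argument: edgewise boundedness of the endpoint derivatives through order \(m-1\) (including \(p=1\)), cutoff polynomials in disjoint endpoint collars for surjectivity, and, for the kernel identity, zero extension followed by mollification and truncation at infinity, summed over the finitely many edges. Your inward translation or dilation before mollifying supplies a detail the paper leaves implicit, since mollifying the zero extension directly spreads support across the endpoint; your fallback also needs no Hardy inequality (which fails at \(p=1\)), because the Taylor bound you state already gives \(\|\varepsilon^{-k}\chi^{(k)}(t/\varepsilon)\,u^{(j)}\|_{L^p}\lesssim\|u^{(m)}\|_{L^p(0,2\varepsilon)}\to0\) for \(1\le k\le m-j\).
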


\begin{proof}
On each compact interval or half-line, \(W^{m,p}\) functions have well-defined endpoint derivatives through order \(m-1\), and the corresponding trace map is bounded also for \(p=1\). Disjoint endpoint collars and cutoff polynomials realise arbitrary trace data, giving surjectivity. If all endpoint traces vanish, extension by zero across the finite endpoints belongs to \(W^{m,p}\); mollification, with truncation at infinity on half-lines, approximates by smooth functions supported away from the endpoints. Taking the finite direct sum over edges gives the kernel identity.
\end{proof}

Write
\(W_{0,\oplus}^{m,p}(\Gamma) := \ker\operatorname{Tr}_{m-1}\).

\subsection{Exact integer-order Sobolev closure}
\label{subsec:graph-sobolev-closure}

\begin{theorem}[Exact graph-resolution closure]
\label{thm:exact-graph-resolution-closure}
Let
\[
  (M,\Gamma,\pi)
\]
be a finite smooth graph resolution. For every
\[
  m\in\mathbb N,
  \qquad
  1\leq p<\infty,
\]
\[
  \boxed{
  \overline{\mathscr A_M}^{\,W_\oplus^{m,p}(\Gamma)}
  =
  \left\{
    u\in W_\oplus^{m,p}(\Gamma):
    \operatorname{Tr}_{m-1}u
    \in
    \mathfrak J_\pi^{m-1}
  \right\}.
  }
\]
The same identity holds for every norm equivalent to the standard
direct-sum \(W^{m,p}\)-norm.
\end{theorem}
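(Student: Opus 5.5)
The plan is to apply the abstract trace-closure lemma, Lemma~\ref{lem:abstract-trace-closure}, with
\(X=W_\oplus^{m,p}(\Gamma)\), \(Y=\mathbb J_\Gamma^{m-1}\), \(T=\operatorname{Tr}_{m-1}\) and \(R=\mathscr A_M\). First check that \(\mathscr A_M\subseteq W_\oplus^{m,p}(\Gamma)\). Elements of \(\mathscr A_M\) are compactly supported edge families, smooth up to every finite edge end by Theorem~\ref{thm:formal-characterization-scalar-core}, so they lie in every \(W^{m,p}(I_e)\).

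By Proposition~\ref{prop:one-dimensional-trace-package}, \(T\) is bounded and surjective for all \(1\le p<\infty\). Since \(Y\) is finite-dimensional, a bounded linear right inverse \(J\) exists: pick preimages of a basis and extend linearly. Equivalently, use cutoff polynomials in disjoint endpoint collars.

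Next verify the hypothesis \(\ker T\subseteq\overline R^{\,X}\). Proposition~\ref{prop:one-dimensional-trace-package} identifies \(\ker T\) with the closure of \(C_c^\infty(\Gamma\setminus V(\Gamma))\). Any such \(u\) vanishes near every vertex, so all its endpoint jets are zero and \(\operatorname{Tr}_\infty u=0\in\mathfrak E_\pi^\infty\). Since \(\mathfrak E_\pi^\infty\) is a linear space, Theorem~\ref{thm:formal-characterization-scalar-core} gives \(u\in\mathscr A_M\). Hence \(C_c^\infty(\Gamma\setminus V(\Gamma))\subseteq\mathscr A_M\), and taking closures gives the required inclusion.

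The lemma now yields \(\overline{\mathscr A_M}^{\,X}=T^{-1}\bigl(\overline{T(\mathscr A_M)}\bigr)\). By Theorem~\ref{thm:graph-formal-trace-realisation}, \(T(\mathscr A_M)=\operatorname{Tr}_{m-1}(\mathscr A_M)=\mathfrak J_\pi^{m-1}\). This is a linear subspace of the finite-dimensional space \(\mathbb J_\Gamma^{m-1}\), so it is already closed, and the displayed identity follows.

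For an equivalent norm, closures and boundedness are unchanged, so the same argument applies verbatim.

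The only step that needs care is the kernel inclusion, in particular the use of zero traces of all orders, not just through \(m-1\), for functions supported away from vertices. This is immediate here. Beyond that, the substantive content has already been packaged into Theorem~\ref{thm:graph-formal-trace-realisation} (Borel realisation) and the one-dimensional trace package; note that the latter is valid for \(p=1\), which is what allows the range \(1\le p<\infty\).
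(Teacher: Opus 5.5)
Your proof is correct and follows the paper's route. The inclusion into the trace-defined space comes from continuity of \(\operatorname{Tr}_{m-1}\) and closedness of the finite-dimensional space \(\mathfrak J_\pi^{m-1}\), and the reverse inclusion comes from exact trace realisation (Theorem~\ref{thm:graph-formal-trace-realisation}) plus zero-trace density (Proposition~\ref{prop:one-dimensional-trace-package}). The only difference is packaging: the paper argues directly, choosing \(w\in\mathscr A_M\) with \(\operatorname{Tr}_{m-1}w=\operatorname{Tr}_{m-1}u\) and approximating \(u-w\) by elements of \(C_c^\infty(\Gamma\setminus V(\Gamma))\subset\mathscr A_M\), so it never needs the bounded right inverse that your appeal to Lemma~\ref{lem:abstract-trace-closure} requires, though you correctly get that inverse for free from finite-dimensionality.
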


\begin{proof}
Because \(\mathfrak J_\pi^{m-1}\) is finite-dimensional, continuity of the endpoint trace and Theorem~\ref{thm:graph-formal-trace-realisation} give the inclusion from the closure of \(\mathcal A_M\). Conversely, if \(u\in W_\oplus^{m,p}(\Gamma)\) has \(\operatorname{Tr}^{m-1}u\in\mathfrak J_\pi^{m-1}\), choose \(\xi\in\mathfrak E_\pi^\infty\) with that truncation and, by Theorem~\ref{thm:graph-formal-trace-realisation}, \(w\in\mathcal A_M\) with \(\operatorname{Tr}^\infty w=\xi\). Then \(u-w\) has zero endpoint traces through order \(m-1\), so Proposition~\ref{prop:one-dimensional-trace-package} approximates it by \(\phi_n\in C_c^\infty(\Gamma\setminus V(\Gamma))\subset\mathcal A_M\). Hence \(w+\phi_n\to u\). Equivalent norms have the same closure.
\end{proof}

The closure argument separates ordinary zero-trace approximation from the finite-dimensional prolongable endpoint data.

\subsection{First-order universality}
\label{subsec:graph-first-order-universality}

Define
\[
  W_{\mathrm{cont}}^{1,p}(\Gamma)
  :=
  \left\{
    u\in W_\oplus^{1,p}(\Gamma):
    u_\epsilon(v)
    =
    u_{\epsilon'}(v)
    \text{ for every }
    v\in V(\Gamma)
    \text{ and }
    \epsilon,\epsilon'\in\mathcal E(v)
  \right\}.
\]

\begin{corollary}[First-order scalar closure]
\label{cor:universal-first-order-graph-closure}
For every finite smooth graph resolution and every
\[
  1\leq p<\infty,
\]
\[
  \boxed{
  \overline{\mathscr A_M}^{\,W_\oplus^{1,p}(\Gamma)}
  =
  W_{\mathrm{cont}}^{1,p}(\Gamma).
  }
\]
\end{corollary}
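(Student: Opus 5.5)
This is the case \(m=1\) of Theorem~\ref{thm:exact-graph-resolution-closure}. That theorem gives
\[
  \overline{\mathscr A_M}^{\,W_\oplus^{1,p}(\Gamma)}
  =
  \left\{
    u\in W_\oplus^{1,p}(\Gamma):
    \operatorname{Tr}_{0}u\in\mathfrak J_\pi^{0}
  \right\},
\]
and it holds for every \(1\le p<\infty\). The corollary therefore reduces to identifying the zeroth-order prolongation space \(\mathfrak J_\pi^0\) and recognising the resulting trace condition as vertex continuity.

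For the identification, I invoke Proposition~\ref{prop:zeroth-order-prolongation}, which gives \(\mathfrak J_\pi^0=\mathfrak E_\pi^0\). It describes this space as the set of endpoint value families \((\xi_\epsilon)\) with \(\xi_\epsilon=\xi_{\epsilon'}\) whenever \(v(\epsilon)=v(\epsilon')\). The reason is that the sign operators and composition transports act trivially on zeroth-order coefficients. Any assignment of vertex values therefore extends to a formal compatible family by taking all higher coefficients equal to zero.

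It remains to check what \(\operatorname{Tr}_0u\) is. For \(u\in W_\oplus^{1,p}(\Gamma)\), it is the family of one-sided endpoint values \(u_{e(\epsilon)}(0)\) in the outward coordinate \(t_\epsilon\). These values are well defined by the one-dimensional Sobolev embedding on each edge, as in Proposition~\ref{prop:one-dimensional-trace-package}. The condition \(\operatorname{Tr}_0u\in\mathfrak E_\pi^0\) then says exactly that \(u_\epsilon(v)=u_{\epsilon'}(v)\) for all \(\epsilon,\epsilon'\in\mathcal E(v)\), which is the defining condition of \(W^{1,p}_{\mathrm{cont}}(\Gamma)\).

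There is no real obstacle. The only point needing care is that loops contribute two distinct end labels at the same vertex. Both labels lie in \(\mathcal E(v)\), so condition \textup{(a)} of Definition~\ref{def:truncated-graph-equaliser} forces the two values to agree, matching the continuity requirement.
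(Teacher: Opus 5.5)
Your proof is correct and follows the paper's own argument. You specialise Theorem~\ref{thm:exact-graph-resolution-closure} to \(m=1\), identify \(\mathfrak J_\pi^0=\mathfrak E_\pi^0\) with vertex-constant value families via Proposition~\ref{prop:zeroth-order-prolongation}, and read the resulting trace condition as vertex continuity; your check on loop end labels is a harmless extra detail that the paper leaves implicit.
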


\begin{proof}
At \(m=1\), the endpoint trace contains only vertex values. Apply
Proposition~\ref{prop:zeroth-order-prolongation}
and
Theorem~\ref{thm:exact-graph-resolution-closure}.
\end{proof}

First-order scalar closure therefore depends only on the Hausdorff
metric graph. The higher smooth compatibility data of the marked
resolution become visible only when derivative traces enter the
Sobolev topology.

This conclusion is specific to scalar functions and fixed trivial
coefficient spaces. A nontrivially transported vector bundle may
impose a proper fibre equaliser already at order zero, as in
Sections~\ref{sec:extension-failure} and \ref{sec:visibility}.

\subsection{Higher-order sensitivity and recovery}
\label{subsec:graph-recovery}

Let \((M_1,\Gamma,\pi_1), \qquad (M_2,\Gamma,\pi_2)\) be finite smooth graph resolutions of the same metric graph.

\begin{proposition}[Classification of smooth-resolution Sobolev closures]
\label{prop:classification-marked-closures}
For fixed
\[
  m\in\mathbb N,
  \qquad
  1\leq p<\infty,
\]
the following are equivalent:

\begin{enumerate}[label=\textup{(\roman*)}]
  \item
        \[
          \overline{\mathscr A_{M_1}}^{\,W_\oplus^{m,p}(\Gamma)}
          =
          \overline{\mathscr A_{M_2}}^{\,W_\oplus^{m,p}(\Gamma)};
        \]

  \item
        \[
          \mathfrak J_{\pi_1}^{m-1}
          =
          \mathfrak J_{\pi_2}^{m-1}.
        \]
\end{enumerate}
\end{proposition}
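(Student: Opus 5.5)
The plan is to apply Theorem~\ref{thm:exact-graph-resolution-closure} to each resolution and then use surjectivity of the endpoint trace. For \(i=1,2\), that theorem gives
\[
  \overline{\mathscr A_{M_i}}^{\,W_\oplus^{m,p}(\Gamma)}
  =
  \operatorname{Tr}_{m-1}^{-1}\bigl(\mathfrak J_{\pi_i}^{m-1}\bigr),
\]
with \(\operatorname{Tr}_{m-1}:W_\oplus^{m,p}(\Gamma)\to\mathbb J_\Gamma^{m-1}\) the common endpoint trace map. This map is the same for both resolutions, because it depends only on the metric graph and its edge-end coordinates \(t_\epsilon\), and not on \(M_i\) or \(\pi_i\). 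Both closures are therefore preimages of subspaces of one fixed finite-dimensional jet space under one fixed map.

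For \textup{(ii)}\(\Rightarrow\)\textup{(i)}: if \(\mathfrak J_{\pi_1}^{m-1}=\mathfrak J_{\pi_2}^{m-1}\), the two preimages displayed above coincide.

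For \textup{(i)}\(\Rightarrow\)\textup{(ii)}: apply \(\operatorname{Tr}_{m-1}\) to both sides of the equality of closures. Proposition~\ref{prop:one-dimensional-trace-package} says \(\operatorname{Tr}_{m-1}\) is surjective, so \(\operatorname{Tr}_{m-1}\bigl(\operatorname{Tr}_{m-1}^{-1}(X)\bigr)=X\) for every subset \(X\subseteq\mathbb J_\Gamma^{m-1}\). The images of the two closures are therefore exactly \(\mathfrak J_{\pi_1}^{m-1}\) and \(\mathfrak J_{\pi_2}^{m-1}\), and they are equal. This is the same step used in Proposition~\ref{prop:criterion-formal-exactness}.

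The only point needing care is that the two resolutions share the same edge-end labelling and outward coordinates, so that the jet spaces \(\mathbb J_\Gamma^{m-1}\) and trace maps really coincide. This holds because both resolutions are taken over the same metric graph \(\Gamma\), and the trace is defined edgewise on \(\Gamma\) alone. The remark in Theorem~\ref{thm:exact-graph-resolution-closure} that equivalent norms give the same closure means the equivalence holds for any norm equivalent to the standard direct-sum \(W^{m,p}\)-norm as well.
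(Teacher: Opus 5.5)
Your proof is correct and follows the paper's argument: (ii)\(\Rightarrow\)(i) comes from Theorem~\ref{thm:exact-graph-resolution-closure}, and (i)\(\Rightarrow\)(ii) comes from taking endpoint-trace images of the two equal closures. The only difference is how the trace image of each closure is identified with \(\mathfrak J_{\pi_i}^{m-1}\). The paper realises every prolongable jet by an element of \(\mathscr A_{M_i}\) (Theorem~\ref{thm:graph-formal-trace-realisation}), whereas you use surjectivity of \(\operatorname{Tr}_{m-1}\) on \(W_\oplus^{m,p}(\Gamma)\) (Proposition~\ref{prop:one-dimensional-trace-package}) together with the preimage description; both are valid.
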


\begin{proof}
The implication \(\textup{(ii)} \Longrightarrow \textup{(i)}\) follows from
Theorem~\ref{thm:exact-graph-resolution-closure}.

Conversely, the endpoint trace maps either closed space surjectively
onto its corresponding prolongable trace space. Indeed, every element
of \(\mathfrak J_{\pi_i}^{m-1}\) is realised by an element of
\(\mathscr A_{M_i}\). Equality of the closed spaces therefore implies
equality of their endpoint-trace images.
\end{proof}

\begin{theorem}[Recovery of the formal equaliser]
\label{thm:recovery-formal-equaliser}
Natural truncation induces a linear isomorphism
\[
  \boxed{
  \mathfrak E_\pi^\infty
  \cong
  \varprojlim_{r\in\mathbb N_0}
  \mathfrak J_\pi^r.
  }
\]
\end{theorem}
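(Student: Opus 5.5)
The inverse system is formed by the prolongation spaces \(\mathfrak J_\pi^r=\operatorname{pr}_r(\mathfrak E_\pi^\infty)\) together with the truncation maps \(\operatorname{pr}_{r+1,r}:\mathbb J_\Gamma^{r+1}\to\mathbb J_\Gamma^r\), which forget the top coefficient on every edge end. I would first check that this is a well-defined inverse system. Since \(\operatorname{pr}_r=\operatorname{pr}_{r+1,r}\circ\operatorname{pr}_{r+1}\), the truncation maps \(\mathfrak J_\pi^{r+1}\) into \(\mathfrak J_\pi^r\), and in fact onto it. The natural map
\[
  \Pi:\mathfrak E_\pi^\infty\longrightarrow\varprojlim_{r}\mathfrak J_\pi^r,
  \qquad
  \xi\longmapsto\bigl(\operatorname{pr}_r\xi\bigr)_{r\ge0},
\]
is linear. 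It lands in the limit because the truncations are compatible.

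For injectivity, note that \(\mathbb J_\Gamma^\infty=\prod_\epsilon\mathbb C^{\mathbb N_0}\) is itself the inverse limit of the \(\mathbb J_\Gamma^r\). A formal family whose truncations all vanish has every coefficient zero.

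For surjectivity, take a compatible sequence \((\eta_r)_r\) with \(\eta_r\in\mathfrak J_\pi^r\) and \(\operatorname{pr}_{r+1,r}\eta_{r+1}=\eta_r\). Define \(\xi\in\mathbb J_\Gamma^\infty\) coefficientwise: the \(j\)-th coefficient at edge end \(\epsilon\) is the \(j\)-th coefficient of \((\eta_r)_\epsilon\) for any \(r\ge j\), which is independent of \(r\) by compatibility. Then \(\operatorname{pr}_r\xi=\eta_r\) for every \(r\). It remains to show \(\xi\in\mathfrak E_\pi^\infty\), which is the only step with content.

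The key observation is that \(\mathfrak E_\pi^\infty\) is cut out by relations each involving only finitely many coefficients. Condition (a) of Definition~\ref{def:truncated-graph-equaliser} involves only zero-order coefficients. For (b), the order-\(j\) component of \(T_{p,\sigma}^\infty\) depends only on coefficients of order \(\le j\), because \(C_\kappa^\infty\) is triangular (Faà di Bruno) and \(S_\sigma\) is diagonal. Hence \(T^\infty\) is compatible with truncation: \(\operatorname{pr}_r T_{p,\sigma}^\infty=T_{p,\sigma}^r\operatorname{pr}_r\). It follows that \(\xi\in\mathfrak E_\pi^\infty\) if and only if \(\operatorname{pr}_r\xi\in\mathfrak E_\pi^r\) for every \(r\); this is exactly how the full formal equaliser was defined, by the relations at every finite order. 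Now \(\operatorname{pr}_r\xi=\eta_r\in\mathfrak J_\pi^r\subseteq\mathfrak E_\pi^r\), so every finite-order relation holds and \(\xi\in\mathfrak E_\pi^\infty\).

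So no Mittag-Leffler or compactness argument is needed. The possible strictness of \(\mathfrak J_\pi^r\subsetneq\mathfrak E_\pi^r\) plays no role, since we use only the inclusion \(\mathfrak J_\pi^r\subseteq\mathfrak E_\pi^r\). The one point to state carefully is the triangularity and compatibility of \(T^\infty\) with truncation, which comes from the displayed triangular form of \(C_\kappa^r\). The isomorphism is canonical, being induced by truncation.
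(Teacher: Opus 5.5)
Your proposal is correct and follows essentially the same route as the paper's proof: the injective truncation map, coefficientwise reconstruction of \(\xi\) from a compatible family, and membership \(\xi\in\mathfrak E_\pi^\infty\) because each defining relation involves only finitely many coefficients and so holds via \(\operatorname{pr}_r\xi\in\mathfrak J_\pi^r\subseteq\mathfrak E_\pi^r\). Your explicit check that \(\operatorname{pr}_r T_{p,\sigma}^\infty=T_{p,\sigma}^r\operatorname{pr}_r\) by triangularity is a detail the paper leaves implicit, and it is right.
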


\begin{proof}
Every \(\eta\in\mathfrak E_\pi^\infty\) determines the compatible family
\(\left( \operatorname{pr}_r\eta \right)_{r\geq0}\),
giving an injective linear map into the inverse limit.

Conversely, let
\((\xi^r)_{r\geq0} \in \varprojlim_r\mathfrak J_\pi^r\).
Compatibility under truncation determines a unique formal endpoint
family \(\xi\in\mathbb J_\Gamma^\infty\) such that
\(\operatorname{pr}_r\xi=\xi^r \qquad (r\geq0)\).
For every \(r\),
\(\xi^r \in \mathfrak J_\pi^r \subseteq \mathfrak E_\pi^r\).
Every defining equation of
\(\mathfrak E_\pi^\infty\) involves only finitely many jet
coefficients, and hence occurs in one of the finite systems
\(\mathfrak E_\pi^r\). It therefore holds for \(\xi\), so
\(\xi\in\mathfrak E_\pi^\infty\).
\end{proof}

\begin{corollary}[Recovery from the closure tower]
\label{cor:recovery-from-graph-closure-tower}
For two finite smooth resolutions of the same metric graph, the
following are equivalent:

\begin{enumerate}[label=\textup{(\roman*)}]
  \item for every \(m\in\mathbb N\) and one fixed
        \(p\in[1,\infty)\),
        \[
          \overline{\mathscr A_{M_1}}^{\,W_\oplus^{m,p}(\Gamma)}
          =
          \overline{\mathscr A_{M_2}}^{\,W_\oplus^{m,p}(\Gamma)};
        \]

  \item the equality in \textup{(i)} holds for every
        \(m\in\mathbb N\) and every \(1\leq p<\infty\);

  \item for every \(r\geq0\),
        \[
          \mathfrak J_{\pi_1}^r
          =
          \mathfrak J_{\pi_2}^r;
        \]

  \item
        \[
          \mathfrak E_{\pi_1}^\infty
          =
          \mathfrak E_{\pi_2}^\infty.
        \]
\end{enumerate}
\end{corollary}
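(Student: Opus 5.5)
The plan is to prove the cycle of implications \textup{(ii)}\(\Rightarrow\)\textup{(i)}\(\Rightarrow\)\textup{(iii)}\(\Rightarrow\)\textup{(iv)}\(\Rightarrow\)\textup{(ii)}. Every step is assembled from Proposition~\ref{prop:classification-marked-closures} and Theorem~\ref{thm:recovery-formal-equaliser}. No new analysis is needed, because both resolutions share the metric graph \(\Gamma\). Consequently the ambient space \(W_\oplus^{m,p}(\Gamma)\), the edge-end set \(\mathcal E(\Gamma)\) and the trace maps \(\operatorname{Tr}_r\) are literally the same for both. The spaces \(\mathfrak J_{\pi_i}^r\) and \(\mathfrak E_{\pi_i}^\infty\) therefore live in the common spaces \(\mathbb J_\Gamma^r\) and \(\mathbb J_\Gamma^\infty\), and equalities between them are meaningful.

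The implication \textup{(ii)}\(\Rightarrow\)\textup{(i)} is trivial, since \textup{(i)} only specialises \(p\). For \textup{(i)}\(\Rightarrow\)\textup{(iii)}, fix \(r\ge0\) and take \(m=r+1\in\mathbb N\) with the fixed \(p\). The implication \textup{(i)}\(\Rightarrow\)\textup{(ii)} of Proposition~\ref{prop:classification-marked-closures}, applied at this \((m,p)\), gives \(\mathfrak J_{\pi_1}^{r}=\mathfrak J_{\pi_2}^{r}\). As \(r\) ranges over \(\mathbb N_0\), every truncation order is reached, which yields \textup{(iii)}.

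For \textup{(iii)}\(\Rightarrow\)\textup{(iv)}, I use the isomorphism of Theorem~\ref{thm:recovery-formal-equaliser}. It is realised concretely inside \(\mathbb J_\Gamma^\infty\): the proof shows that \(\mathfrak E_\pi^\infty\) is exactly the set of \(\xi\in\mathbb J_\Gamma^\infty\) with \(\operatorname{pr}_r\xi\in\mathfrak J_\pi^r\) for all \(r\). The inclusion \(\subseteq\) holds by definition of \(\mathfrak J_\pi^r\). For \(\supseteq\), each defining equation of \(\mathfrak E_\pi^\infty\) is an equation of some \(\mathfrak E_\pi^r\supseteq\mathfrak J_\pi^r\). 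Hence \(\mathfrak E_\pi^\infty=\bigcap_r\operatorname{pr}_r^{-1}(\mathfrak J_\pi^r)\). The right-hand side depends only on the tower \((\mathfrak J_\pi^r)_r\) and the common truncation maps, so equal towers give equal formal equalisers. This identification as the same subset of \(\mathbb J_\Gamma^\infty\), rather than merely abstractly isomorphic inverse limits, is the one point that needs care. I regard it as the main, though mild, obstacle.

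Finally, \textup{(iv)}\(\Rightarrow\)\textup{(ii)}. Definition~\ref{def:prolongable-endpoint-jets} gives \(\mathfrak J_{\pi_i}^{m-1}=\operatorname{pr}_{m-1}(\mathfrak E_{\pi_i}^\infty)\), so \textup{(iv)} yields \(\mathfrak J_{\pi_1}^{m-1}=\mathfrak J_{\pi_2}^{m-1}\) for every \(m\in\mathbb N\). The implication \textup{(ii)}\(\Rightarrow\)\textup{(i)} of Proposition~\ref{prop:classification-marked-closures}, which holds for every \(1\le p<\infty\), then gives equality of the closures for all \(m\) and all \(p\). This closes the cycle.
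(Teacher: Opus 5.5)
Your proposal is correct and follows essentially the paper's proof: Proposition~\ref{prop:classification-marked-closures} at \(m=r+1\) gives \textup{(i)}\(\Leftrightarrow\)\textup{(iii)}, the exact closure theorem gives \textup{(iii)}\(\Rightarrow\)\textup{(ii)}, and Theorem~\ref{thm:recovery-formal-equaliser} gives \textup{(iii)}\(\Leftrightarrow\)\textup{(iv)}. The paper leaves implicit the point you make explicit, namely that \(\mathfrak E_\pi^\infty=\bigcap_r\operatorname{pr}_r^{-1}(\mathfrak J_\pi^r)\) as subsets of the common space \(\mathbb J_\Gamma^\infty\), not merely up to abstract inverse-limit isomorphism; this is exactly what makes \textup{(iii)}\(\Rightarrow\)\textup{(iv)} literal.
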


\begin{proof}
For any fixed \(p\), applying
Proposition~\ref{prop:classification-marked-closures}
at every \(m=r+1\) gives
\(\textup{(i)} \Longleftrightarrow \textup{(iii)}\).
The exact closure theorem shows that \textup{(iii)} implies
\textup{(ii)}. Finally,
Theorem~\ref{thm:recovery-formal-equaliser}
gives
\(\textup{(iii)} \Longleftrightarrow \textup{(iv)}\).
\end{proof}

The full integer-order closure tower recovers the formal endpoint relation over the fixed graph, but not necessarily the underlying non-Hausdorff manifold or unmarked resolution.

\begin{remark}[Scope]
\label{rem:graph-resolution-scope}
The results above use:

\begin{enumerate}[label=\textup{(\roman*)}]
  \item finitely many edges and exceptional points;

  \item properness of the reflection map over compact graph sets;

  \item nondegenerate endpoint-coordinate germs;

  \item scalar functions, or a fixed trivial finite-dimensional
        coefficient space with componentwise transport;

  \item integer Sobolev order and
        \(1\leq p<\infty\).
\end{enumerate}

For locally finite infinite graphs, simultaneous jet realisation and
closedness require additional control of edge lengths, vertex degrees,
weights, support, and endpoint trace norms. Degenerate endpoint
coordinates may destroy ordinary edgewise Sobolev regularity before any
compatibility equation is imposed. Neither case is treated here.

Signed transport graphs organise the local endpoint relations and yield the
strict inclusions \(\mathfrak J_\pi^r \subsetneq \mathfrak E_\pi^r\) produced by tangent-to-identity return holonomy.
\end{remark}
\section{Signed formal holonomy and prolongation defects}
\label{sec:formal-holonomy}

The endpoint equations reduce vertexwise to formal return holonomy; this section classifies the prolongation defects for tangent-to-identity return germs.

\subsection{Signed relative endpoint transport}
\label{subsec:signed-endpoint-transport}

Fix a finite smooth graph resolution \((M,\Gamma,\pi)\) and an exceptional point
\(p\in B_\pi\).
Recall the centred coordinate \(\phi_p: U_p\longrightarrow(-\delta_p,\delta_p)\) and the nondegenerate endpoint half-germs
\(\kappa_{p,\sigma}, \qquad \sigma\in\{+1,-1\}\).

For a scalar function smooth up to an edge end \(\epsilon\), write its
formal Taylor jet as
\[
  \widehat J_\epsilon^\infty u
  :=
  \sum_{j\geq0}
  \frac{
    \partial_{t_\epsilon}^{\,j}u(0)
  }{
    j!
  }
  T^j.
\]

For
\(\sigma,\sigma'\in\{+1,-1\}\),
define the signed relative formal germ
\[
  \rho_{p,\sigma\to\sigma'}(T)
  :=
  \kappa_{p,\sigma}
  \left(
    \sigma\sigma'
    \kappa_{p,\sigma'}^{-1}(T)
  \right).
\]
The expression is interpreted formally. Each
\(\kappa_{p,\sigma}\) has zero constant term and nonzero linear term,
so its Taylor series is formally invertible.

\begin{proposition}[Signed relative endpoint transport]
\label{prop:correct-relative-endpoint-transport}
The endpoint jets of a function smooth near \(p\) satisfy
\[
  \boxed{
  \widehat J_{\epsilon_{p,\sigma'}}^\infty u
  =
  \left(
    \widehat J_{\epsilon_{p,\sigma}}^\infty u
  \right)
  \circ
  \rho_{p,\sigma\to\sigma'}.
  }
\]
Moreover,
\[
  \rho_{p,\sigma\to\sigma'}'(0)
  =
  \sigma\sigma'
  \frac{
    \kappa_{p,\sigma}'(0)
  }{
    \kappa_{p,\sigma'}'(0)
  }.
\]
In particular,
\[
  \rho_{p,+\to-}'(0)<0,
  \qquad
  \rho_{p,-\to+}'(0)<0.
\]
\end{proposition}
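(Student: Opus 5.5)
The plan is to compute both endpoint jets directly from the two one-sided representations of \(u\circ\phi_p^{-1}\) in the centred chart, as in Lemma~\ref{lem:graph-smooth-gluing}. Write \(g:=u\circ\pi\circ\phi_p^{-1}\), a smooth function on \((-\delta_p,\delta_p)\). For \(\sigma\in\{+1,-1\}\) and \(0<x<\delta_p\), the definition of \(\kappa_{p,\sigma}\) gives
\[
  g(\sigma x)=u_{e(\epsilon_{p,\sigma})}\bigl(\kappa_{p,\sigma}(x)\bigr),
\]
with \(u\) read in the outward edge-end coordinate \(t_{\epsilon_{p,\sigma}}\). Set \(U_\sigma:=\widehat J_{\epsilon_{p,\sigma}}^\infty u\) and let \(G(X)\) be the Taylor series of \(g\) at \(0\). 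The one-sided Taylor series of \(x\mapsto g(\sigma x)\) at \(0^+\) is \(G(\sigma X)\), because \(g\) is smooth through the origin. The composite \(u\circ\kappa_{p,\sigma}\) has one-sided Taylor series \(U_\sigma\circ\kappa_{p,\sigma}\); this follows from the formal Faà di Bruno rule for one-sided germs, using that \(\kappa_{p,\sigma}\) is smooth up to \(0\) with zero constant term. Matching the two series gives the formal identity
\[
  G(\sigma X)=U_\sigma\bigl(\kappa_{p,\sigma}(X)\bigr).
\]

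Next, I eliminate \(G\) between the two sides \(\sigma\) and \(\sigma'\). Since \(\kappa_{p,\sigma'}\) has nonzero linear term, it is formally invertible. Substituting \(X=\kappa_{p,\sigma'}^{-1}(T)\) into the \(\sigma'\) identity gives \(U_{\sigma'}(T)=G\bigl(\sigma'\kappa_{p,\sigma'}^{-1}(T)\bigr)\). Now use the \(\sigma\) identity with \(X=\sigma\sigma'\kappa_{p,\sigma'}^{-1}(T)\); this is legitimate because \(\sigma\cdot\sigma\sigma'=\sigma'\). It yields
\[
  G\bigl(\sigma'\kappa_{p,\sigma'}^{-1}(T)\bigr)=U_\sigma\Bigl(\kappa_{p,\sigma}\bigl(\sigma\sigma'\kappa_{p,\sigma'}^{-1}(T)\bigr)\Bigr)=(U_\sigma\circ\rho_{p,\sigma\to\sigma'})(T).
\]
This is the boxed identity.

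The derivative formula follows from the formal chain rule:
\[
  \rho_{p,\sigma\to\sigma'}'(0)=\kappa_{p,\sigma}'(0)\cdot\sigma\sigma'\cdot\bigl(\kappa_{p,\sigma'}^{-1}\bigr)'(0)=\sigma\sigma'\,\kappa_{p,\sigma}'(0)/\kappa_{p,\sigma'}'(0).
\]
For \(\sigma\neq\sigma'\) we have \(\sigma\sigma'=-1\), and both \(\kappa'\) values are positive by Definition~\ref{def:finite-smooth-graph-resolution}(v). Hence both relative derivatives are negative.

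The main point needing care is that all compositions are taken of formal power series with zero constant term, so they are well defined coefficientwise. I would justify the one-sided Taylor expansion of \(u\circ\kappa_{p,\sigma}\) by Borel-extending \(\kappa_{p,\sigma}\) and \(u\) to two-sided smooth germs, where the ordinary chain rule applies. The remaining steps are routine.
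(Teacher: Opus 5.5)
Your proof is correct and follows the paper's argument. Both proofs write each edge-end representative through the chart representative \(g\) via \(\sigma\kappa_{p,\sigma}^{-1}\), eliminate \(g\), and obtain the linear coefficient from the chain rule. The only difference is that you pass to Taylor series before eliminating \(g\), which is the cleaner order. For \(\sigma\neq\sigma'\), the paper's substitution \(t=\kappa_{p,\sigma}\bigl(\sigma\sigma'\kappa_{p,\sigma'}^{-1}(t')\bigr)\) evaluates the half-germ \(\kappa_{p,\sigma}\) at negative arguments, so it only makes sense formally.
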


\begin{proof}
Let \(g:=u\circ\phi_p^{-1}\) be the local coordinate representative. On the side labelled by
\(\sigma\),
\(u_{\epsilon_{p,\sigma}}(t) = g \left( \sigma\kappa_{p,\sigma}^{-1}(t) \right)\).
If
\[
  t
  =
  \kappa_{p,\sigma}
  \left(
    \sigma\sigma'
    \kappa_{p,\sigma'}^{-1}(t')
  \right),
\]
then
\(\sigma\kappa_{p,\sigma}^{-1}(t) = \sigma' \kappa_{p,\sigma'}^{-1}(t')\).
\(u_{\epsilon_{p,\sigma}}(t) = u_{\epsilon_{p,\sigma'}}(t')\).
Passing to Taylor series gives the transport identity. The derivative
formula follows from the formal chain rule.
\end{proof}

Every elementary endpoint transport is therefore orientation reversing; the factor \(\sigma\sigma'\) is essential to the linear multiplier.

\subsection{Transport graphs and orientation character}
\label{subsec:transport-graphs}

Fix a vertex
\(v\in V(\Gamma)\).

\begin{definition}[Transport graph]
\label{def:transport-graph}
The transport graph
\[
  \mathcal G_v
\]
is the finite multigraph whose vertex set is
\[
  \mathcal E(v).
\]
Every exceptional point
\[
  p\in\pi^{-1}(v)
\]
gives an edge joining
\[
  \epsilon_{p,+}
  \quad\text{and}\quad
  \epsilon_{p,-}.
\]
An orientation
\[
  a:\epsilon\longrightarrow\epsilon'
\]
of this edge is labelled by the corresponding formal germ
\[
  \rho_a
  \in
  \widehat{\operatorname{Diff}}(\mathbb R,0).
\]
The reverse orientation is labelled by
\[
  \rho_{a^{-1}}=\rho_a^{-1}.
\]
\end{definition}

For a formal germ \(\rho\), let \(\mathcal T_\rho(f):=f\circ\rho\) denote its substitution action.

Let \(P=(a_1,\ldots,a_N)\) be an oriented path, traversed successively in the displayed order.
Define
\(\rho_P := \rho_{a_1} \circ\cdots\circ \rho_{a_N}\).
With this convention, successive transport along \(P\) sends
\(f \longmapsto f\circ\rho_{a_1} \longmapsto \cdots \longmapsto f\circ\rho_P\).
If \(P\cdot Q\) denotes traversal of \(P\) followed by \(Q\), then
\(\rho_{P\cdot Q} = \rho_P\circ\rho_Q\).

Since every elementary label has negative derivative,
\(\operatorname{sgn}\rho_P'(0) = (-1)^{|P|}\),
where \(|P|\) is the path length counted with multiplicity.

\begin{definition}[Orientation character]
\label{def:orientation-character}
Let \(C\) be a connected component of \(\mathcal G_v\), and choose a
root
\[
  \epsilon_C\in C.
\]
The orientation character is
\[
  \omega_C:
  \pi_1(C,\epsilon_C)
  \longrightarrow
  \{\pm1\},
  \qquad
  [P]\longmapsto
  \operatorname{sgn}\rho_P'(0).
\]
\end{definition}

This is well defined because cancelling an edge immediately followed
by its reverse removes two orientation-reversing factors. A closed
path of odd length has orientation-reversing return germ. In
particular, a transport component containing an odd cycle cannot have
all its return holonomy tangent to the identity.

\subsection{Formal holonomy and spanning-tree reduction}
\label{subsec:holonomy-reduction}

Let \(C\) be a connected component of a transport graph, with root
\(\epsilon_C\). Its based formal holonomy group is
\[
  H_C
  :=
  \operatorname{Hol}_\pi(C)
  :=
  \left\{
    \rho_P:
    P
    \text{ is a closed path based at }
    \epsilon_C
  \right\}.
\]
It is a subgroup of
\(\widehat{\operatorname{Diff}}(\mathbb R,0)\).
Changing the root conjugates \(H_C\) by the transport germ of a path
between the two roots.

Let
\[
  \mathfrak m=(T)\subseteq\mathbb K[[T]],
  \qquad
  \mathbb K\in\{\mathbb R,\mathbb C\},
\]
and define the reduced order-\(r\) jet space
\(\mathcal J_r := \mathfrak m/\mathfrak m^{r+1}\).
The constant coefficient is omitted because it is fixed by every
coordinate substitution and is already controlled by graph
continuity.

Choose a spanning tree
\(\mathcal T_C\subseteq C\).
For every edge end \(\epsilon\in C\), let \(P_\epsilon: \epsilon_C\longrightarrow\epsilon\) be the unique path in \(\mathcal T_C\).

\begin{proposition}[Holonomy reduction]
\label{prop:graph-holonomy-reduction}
A reduced order-\(r\) endpoint family
\[
  (\xi_\epsilon)_{\epsilon\in C}
\]
satisfies all transport relations in \(C\) if and only if there exists
\[
  \eta_C\in\mathcal J_r^{H_C}
\]
such that
\[
  \boxed{
  \xi_\epsilon
  =
  \mathcal T_{\rho_{P_\epsilon}}^r\eta_C
  \qquad
  (\epsilon\in C).
  }
\]
The corresponding statement holds for formal reduced jets.
\end{proposition}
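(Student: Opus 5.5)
The argument is the standard reduction of a connection-type (groupoid) constraint on a connected graph to a constraint at one root vertex. I would work with the substitution action \(\mathcal T_\rho^r(f)=f\circ\rho \bmod \mathfrak m^{r+1}\) on \(\mathcal J_r\). It is well defined because \(\rho\in\mathfrak m\) sends \(\mathfrak m^{k}\) into \(\mathfrak m^{k}\). It is contravariantly functorial: \(\mathcal T_{\rho\circ\rho'}=\mathcal T_{\rho'}\mathcal T_\rho\), i.e. \(f\circ(\rho\circ\rho')=(f\circ\rho)\circ\rho'\). By Proposition~\ref{prop:correct-relative-endpoint-transport}, each edge relation at an exceptional point \(p\) with orientation \(a:\epsilon\to\epsilon'\) reads \(\xi_{\epsilon'}=\mathcal T_{\rho_a}\xi_\epsilon\). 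The reversed orientation gives the equivalent relation, since \(\rho_{a^{-1}}=\rho_a^{-1}\). Iterating along a path \(P=(a_1,\dots,a_N)\) from \(\epsilon\) to \(\epsilon'\), the path convention \(\rho_P=\rho_{a_1}\circ\cdots\circ\rho_{a_N}\) gives \(\xi_{\epsilon'}=\mathcal T_{\rho_P}\xi_\epsilon\) whenever all edge relations along \(P\) hold. I would verify this once by induction on \(N\), using the displayed successive-transport identity \(f\mapsto f\circ\rho_{a_1}\mapsto\cdots\).

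For necessity, suppose \((\xi_\epsilon)\) satisfies all relations in \(C\), and set \(\eta_C:=\xi_{\epsilon_C}\). Applying the path identity to the tree path \(P_\epsilon\) gives \(\xi_\epsilon=\mathcal T^r_{\rho_{P_\epsilon}}\eta_C\). Applying it to any closed path \(P\) at the root gives \(\eta_C=\mathcal T_{\rho_P}\eta_C\). Hence \(\eta_C\) is fixed by every element of \(H_C\), that is, \(\eta_C\in\mathcal J_r^{H_C}\).

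For sufficiency, take \(\eta_C\in\mathcal J_r^{H_C}\) and define \(\xi_\epsilon:=\mathcal T_{\rho_{P_\epsilon}}\eta_C\). Consider an arbitrary edge \(a:\epsilon\to\epsilon'\) of \(C\), whether it lies in the tree or not. Form the closed path \(L=P_\epsilon\cdot a\cdot P_{\epsilon'}^{-1}\) at the root; its germ is \(\rho_L=\rho_{P_\epsilon}\circ\rho_a\circ\rho_{P_{\epsilon'}}^{-1}\in H_C\). Invariance gives \(\mathcal T_{\rho_L}\eta_C=\eta_C\). By functoriality this becomes \(\mathcal T_{\rho_a}\mathcal T_{\rho_{P_\epsilon}}\eta_C=\mathcal T_{\rho_{P_{\epsilon'}}}\eta_C\), which is exactly \(\mathcal T_{\rho_a}\xi_\epsilon=\xi_{\epsilon'}\). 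For tree edges \(\rho_L\) is trivial and the relation holds automatically.

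The only step needing care is the bookkeeping of composition order under the contravariant substitution action: checking that \(\mathcal T_{\rho_{P\cdot Q}}=\mathcal T_{\rho_Q}\mathcal T_{\rho_P}\) matches the convention \(\rho_{P\cdot Q}=\rho_P\circ\rho_Q\). I would settle this first with the explicit two-edge computation. The formal (untruncated) case is identical, with \(\mathcal J_r\) replaced by \(\mathfrak m\subseteq\mathbb K[[T]]\); substitution is well defined there because every \(\rho\) has zero constant term. Finally, the constant coefficient plays no role, since it is invariant under every substitution and was removed by passing to \(\mathfrak m\).
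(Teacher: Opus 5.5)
Your proof is correct and is essentially the paper's argument: take \(\eta_C=\xi_{\epsilon_C}\), read off \(H_C\)-invariance from closed paths at the root, and for sufficiency use the loop \(P_\epsilon\, a\, P_{\epsilon'}^{-1}\), then cancel \(\rho_{P_{\epsilon'}}\). Your explicit check that the contravariant rule \(\mathcal T_{\rho\circ\rho'}=\mathcal T_{\rho'}\mathcal T_\rho\) matches the path convention \(\rho_{P\cdot Q}=\rho_P\circ\rho_Q\) is exactly the composition-order bookkeeping that the paper's shorter proof leaves implicit.
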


\begin{proof}
Given a compatible family, set \(\eta_C=\xi_{\epsilon_C}\); transport along the spanning-tree path \(P_\epsilon\) gives \(\xi_\epsilon=T_{\rho_{P_\epsilon}}^r\eta_C\). Transport around any closed path fixes the root jet, so \(\eta_C\in\mathcal J_r^{H_C}\). Conversely, start with such an invariant root jet and define the family by tree transport. For an oriented edge \(a:\epsilon\to\epsilon'\), the loop \(P_\epsilon aP_{\epsilon'}^{-1}\) fixes \(\eta_C\); composing its fixedness relation by \(\rho_{P_{\epsilon'}}\) gives exactly \(\xi_{\epsilon'}=\xi_\epsilon\circ\rho_a\). The formal case is identical.
\end{proof}

\begin{corollary}[Tree components]
\label{cor:tree-components}
If \(C\) is a tree, then
\[
  H_C=\{\operatorname{id}\},
\]
and \(C\) is formally prolongation complete at every order.
\end{corollary}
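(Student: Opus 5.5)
The plan has two parts: first show that \(H_C\) is trivial, then feed this into Proposition~\ref{prop:graph-holonomy-reduction} to deduce prolongation completeness. For the first part I will use the combinatorics of closed paths in a tree. By definition, \(H_C\) consists of the germs \(\rho_P\) for closed paths \(P\) based at \(\epsilon_C\). In a tree every such path reduces to the trivial path by repeatedly cancelling an oriented edge \(a\) that is immediately followed by its reverse \(a^{-1}\). Each cancellation leaves \(\rho_P\) unchanged, because \(\rho_{a^{-1}}=\rho_a^{-1}\) and \(\rho_{P\cdot Q}=\rho_P\circ\rho_Q\). Hence \(\rho_P=\operatorname{id}\), so \(H_C=\{\operatorname{id}\}\), and consequently \(\mathcal J_r^{H_C}=\mathcal J_r\) at every order, with the same holding formally.

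For completeness at order \(r\) I restrict attention to the edge ends in \(C\). Take a reduced order-\(r\) family \((\xi_\epsilon)_{\epsilon\in C}\) that satisfies all truncated transport relations. By Proposition~\ref{prop:graph-holonomy-reduction}, it has the form \(\xi_\epsilon=\mathcal T^r_{\rho_{P_\epsilon}}\eta_C\) for some \(\eta_C\in\mathcal J_r\). Next I lift \(\eta_C\) to any formal reduced jet \(\widehat\eta_C\in\mathfrak m\); for instance, the polynomial representative works. Since \(H_C\) is trivial, \(\widehat\eta_C\) is automatically \(H_C\)-invariant. The formal version of Proposition~\ref{prop:graph-holonomy-reduction} then shows that the family \(\widehat\xi_\epsilon:=\widehat\eta_C\circ\rho_{P_\epsilon}\) satisfies every formal transport relation in \(C\).

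It remains to check that \(\widehat\xi_\epsilon\) truncates to \(\xi_\epsilon\), which is where the main care is needed. Substitution by a germ with zero constant term preserves the filtration by \(\mathfrak m^{r+1}\). Truncation therefore commutes with \(\mathcal T_\rho\), which gives \(\operatorname{pr}_r\widehat\xi_\epsilon=\mathcal T^r_{\rho_{P_\epsilon}}\eta_C=\xi_\epsilon\).

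To finish, I reinsert the constant coefficients. These are governed only by vertex continuity, as in Proposition~\ref{prop:zeroth-order-prolongation}, and substitution fixes them. Since the transport relations at different vertices and components involve disjoint sets of edge-end jets, the componentwise lifts assemble into an element of \(\mathfrak E_\pi^\infty\). This proves \(\mathfrak J^r=\mathfrak E^r\) on \(C\) for every \(r\), which is formal prolongation completeness.
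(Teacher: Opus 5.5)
Your proof is correct and follows the paper's argument: the holonomy is trivial because every closed path in a tree cancels to the empty path, and then an arbitrary finite root jet is lifted to a formal one and transported along the tree; your check that truncation commutes with substitution by a germ with zero constant term only makes explicit what the paper leaves implicit. The one loose point is the final assembly. Because the statement concerns $C$ alone, you should fill the other components with zero reduced jets rather than with ``their componentwise lifts'', since such lifts need not exist when another component has active holonomy.
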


\begin{proof}
A tree has no nontrivial reduced closed paths, so its holonomy is
trivial. Every reduced finite root jet may then be extended to an
arbitrary formal root jet and transported along the tree.
\end{proof}

A prolongation defect requires a cycle with nontrivial return holonomy.

\begin{definition}[Tangent and active components]
\label{def:tangent-active-component}
A connected transport component \(C\) is
\emph{tangent to the identity} if
\[
  H_C
  \subseteq
  \widehat{\operatorname{Diff}}_1(\mathbb R,0),
\]
where
\[
  \widehat{\operatorname{Diff}}_1(\mathbb R,0)
  :=
  \left\{
    h:
    h(0)=0,\
    h'(0)=1
  \right\}.
\]
A tangent component is \emph{active} if
\[
  H_C\neq\{\operatorname{id}\}.
\]
\end{definition}

For a transport component to be tangent to the identity, two
independent linear conditions are required: \(\omega_C\equiv1\) and
\(|\rho_P'(0)|=1 \qquad \text{for every closed path }P\).
The first excludes orientation-reversing returns, while the second
excludes orientation-preserving returns with nonunit multiplier.
Together they give \(\rho_P'(0)=1\) for every closed path.

For complex-valued scalar fields, the real substitution action is
extended \(\mathbb C\)-linearly.

\subsection{Fixed jets of tangent-to-identity holonomy}
\label{subsec:fixed-jet-classification}

Let
\(H \subseteq \widehat{\operatorname{Diff}}_1(\mathbb K,0)\).
Composition acts on
\(\mathcal J_r = \mathfrak m/\mathfrak m^{r+1}\).
Define \(F_H^r := \mathcal J_r^H\) and
\[
  F_H^\infty
  :=
  \mathfrak m^H
  =
  \left\{
    f\in\mathfrak m:
    f\circ h=f
    \text{ for every }h\in H
  \right\}.
\]
The finite projection of the formal fixed space is
\(I_H^r := \operatorname{pr}_r \left( F_H^\infty \right) \subseteq F_H^r\),
and the corresponding prolongation quotient is
\(P_H^r := F_H^r/I_H^r\).

For a nonidentity germ \(h\), set
\(q(h) := \operatorname{ord} \left( h-\operatorname{id} \right)-1\).
For notational convenience, set
\(q(\operatorname{id}):=\infty\).
If \(H\) is nontrivial, define
\(q(H) := \min_{h\in H}q(h) = \min_{h\in H\setminus\{\operatorname{id}\}}q(h)\).

The local dynamical theory of tangent-to-identity germs is classical
\citep{Ecalle1975,IlyashenkoYakovenko2008,Abate2010,EynardBontempsNavas2024}.
The result required here concerns their substitution action on
truncated scalar jets.

\begin{lemma}[Leading substitution term]
\label{lem:leading-substitution-term}
Suppose
\[
  h(T)
  =
  T+aT^{q+1}
  +
  O(T^{q+2}),
  \qquad
  a\neq0,
\]
and
\[
  f(T)
  =
  c_nT^n
  +
  O(T^{n+1}),
  \qquad
  c_n\neq0.
\]
Then
\[
  \boxed{
  f\circ h-f
  =
  nac_nT^{n+q}
  +
  O(T^{n+q+1}).
  }
\]
\end{lemma}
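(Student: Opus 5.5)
The identity will come from a direct Taylor expansion of the composition \(f\circ h\) in the formal power series ring \(\mathbb K[[T]]\). Write \(h(T)=T+g(T)\), where \(g(T)=aT^{q+1}+O(T^{q+2})\) and \(q\geq1\). Because \(h\) is tangent to the identity, this is the normal form. We treat \(f\) as a formal series \(\sum_{k\geq n}c_kT^k\). Composition is well defined since \(h\in\mathfrak m\), and substitution is continuous in the \(\mathfrak m\)-adic topology. It therefore suffices to expand \(h^k-T^k\) termwise and then control the orders of the resulting pieces.

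The first step is the binomial expansion for each \(k\geq n\):
\[
  h(T)^k-T^k=\sum_{i=1}^{k}\binom{k}{i}T^{k-i}g(T)^i .
\]
The term with \(i=1\) is \(kT^{k-1}g(T)=kaT^{k+q}+O(T^{k+q+1})\). Each term with \(i\geq2\) has order at least \((k-i)+i(q+1)=k+iq\geq k+2q\), and \(k+2q\geq k+q+1\) because \(q\geq1\). This gives \(h^k-T^k=kaT^{k+q}+O(T^{k+q+1})\).

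The second step is to sum over \(k\). We have
\[
  f\circ h-f=\sum_{k\geq n}c_k\,(h^k-T^k).
\]
The \(k=n\) term contributes \(nac_nT^{n+q}+O(T^{n+q+1})\). Every term with \(k\geq n+1\) has order at least \(k+q\geq n+q+1\). The sum converges \(\mathfrak m\)-adically, since the orders tend to infinity, so the tail lies in \(O(T^{n+q+1})\). This yields the boxed formula.

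No step here is a real obstacle. The only point needing care is the convergence and order bookkeeping for an infinite series \(f\), and the \(\mathfrak m\)-adic argument handles it. The other small point is the case of complex coefficients: the paper extends the substitution action \(\mathbb C\)-linearly, so the same computation holds verbatim over \(\mathbb K\). Note also that when \(\operatorname{char}\mathbb K=0\) the leading coefficient \(nac_n\) is nonzero. The lemma does not need this fact, but it is what later results will use.
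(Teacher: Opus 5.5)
Your proof is correct and follows essentially the same route as the paper. The $k=n$ power of $h$ gives the leading term $naT^{n+q}$. For $k>n$, the difference $h^k-T^k$ has order at least $k+q\ge n+q+1$ and so contributes nothing at degree $n+q$.

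Your explicit use of $q\ge1$ (tangency to the identity) to push the $i\ge2$ binomial terms to order $\ge k+2q$ is the same hypothesis the paper uses implicitly. The paper relies on it when it factors $h^n=T^n\bigl(1+aT^q+O(T^{q+1})\bigr)^n$.
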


\begin{proof}
One has
\[
  h(T)^n
  =
  T^n
  \left(
    1+aT^q+O(T^{q+1})
  \right)^n
  =
  T^n
  +
  naT^{n+q}
  +
  O(T^{n+q+1}).
\]
For every \(m>n\),
\(h(T)^m-T^m = O(T^{m+q}) = O(T^{n+q+1})\),
so no higher term of \(f\) contributes at degree \(n+q\).
\end{proof}

\begin{theorem}[Fixed jets of one germ]
\label{thm:fixed-jets-single-germ}
Let
\[
  h(T)
  =
  T+aT^{q+1}
  +
  O(T^{q+2}),
  \qquad
  a\neq0.
\]
Then, for every \(r\geq1\),
\[
  \boxed{
  F_{\langle h\rangle}^r
  =
  \frac{
    \mathfrak m^{\max\{1,r-q+1\}}
  }{
    \mathfrak m^{r+1}
  }.
  }
\]
\[
  \dim F_{\langle h\rangle}^r
  =
  \min\{q,r\},
\]
and
\[
  F_{\langle h\rangle}^\infty=0.
\]
\end{theorem}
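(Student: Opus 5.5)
The plan is to work with the linear operator \(L_h(f):=f\circ h-f\) on \(\mathcal J_r=\mathfrak m/\mathfrak m^{r+1}\). Since \(\langle h\rangle\) is generated by \(h\), a jet is fixed by the whole group exactly when it is fixed by \(h\). Thus \(F^r_{\langle h\rangle}=\ker L_h\) on \(\mathcal J_r\), and similarly \(F^\infty_{\langle h\rangle}=\ker L_h\) on \(\mathfrak m\). Lemma~\ref{lem:leading-substitution-term} then gives the whole argument once it is phrased as an order-raising statement. If \(f\in\mathfrak m\) has order exactly \(n\ge1\), then \(L_h f\) has order exactly \(n+q\), with leading coefficient \(nac_n\neq0\). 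This uses \(n\ge1\), \(a\neq0\), and characteristic zero.

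\emph{Inclusion ``\(\supseteq\)''.} Let \(f\) have order \(n\ge r-q+1\), with \(n\ge1\). Then \(L_hf=O(T^{n+q})\subseteq\mathfrak m^{r+1}\), so its class in \(\mathcal J_r\) is fixed. The one point to check is that \(L_h\) is well defined on the truncation. If \(g\in\mathfrak m^{r+1}\), then \(g\circ h\in\mathfrak m^{r+1}\) because \(h\in\mathfrak m\); hence \(L_h\) preserves \(\mathfrak m^{r+1}\) and descends to \(\mathcal J_r\).

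\emph{Inclusion ``\(\subseteq\)''.} Suppose the class \([f]\neq0\) is fixed. Choose a representative of order \(n\le r\); the order does not depend on the representative, since all representatives agree modulo \(\mathfrak m^{r+1}\). By the lemma, \(L_hf\) has order \(n+q\). This order must be at least \(r+1\), which forces \(n\ge r-q+1\). So \([f]\) lies in \(\mathfrak m^{\max\{1,r-q+1\}}/\mathfrak m^{r+1}\).

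\emph{Dimension.} The quotient \(\mathfrak m^{s}/\mathfrak m^{r+1}\) has dimension \(r+1-s\). With \(s=\max\{1,r-q+1\}\) this equals \(\min\{r,q\}\).

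\emph{Formal case.} A nonzero \(f\in\mathfrak m\) of order \(n\) has \(f\circ h-f\) of exact order \(n+q<\infty\), so it is not fixed. Hence \(F^\infty_{\langle h\rangle}=0\).

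The only genuinely delicate step is the exactness of the leading term in the lemma, and that is already established. Everything else is bookkeeping with orders. I would also remark that \(q\ge1\) holds automatically, since \(h\) is tangent to the identity. This guarantees that the cases \(r<q\) and \(r\ge q\) are correctly covered by the \(\max\).
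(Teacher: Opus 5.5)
Your proposal is correct and follows essentially the same route as the paper. Both arguments reduce everything to the order-raising statement of Lemma~\ref{lem:leading-substitution-term}: a nonzero reduced jet of order \(n\) is fixed modulo \(\mathfrak m^{r+1}\) exactly when \(n+q\ge r+1\), and no nonzero formal series in \(\mathfrak m\) is fixed. Your additional checks (that substitution by \(h\) preserves \(\mathfrak m^{r+1}\), that fixedness by \(h\) suffices for all of \(\langle h\rangle\), and that \(q\ge1\) is needed for the lemma) are details the paper leaves implicit.
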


\begin{proof}
Let a nonzero \(r\)-jet have lowest term \(c_nT^n\). By
Lemma~\ref{lem:leading-substitution-term}, \(f\circ h-f\) has lowest possible term of degree \(n+q\). The jet is fixed modulo
\(\mathfrak m^{r+1}\) exactly when
\(n+q\geq r+1\),
equivalently,
\(n\geq r-q+1\).
This gives the finite formula.

If \(f\in\mathfrak m\) is a nonzero formal series, its lowest nonzero
term produces a nonzero lowest term in \(f\circ h-f\). Hence no
nonzero reduced formal series is fixed.
\end{proof}

\begin{lemma}[Tangency filtration]
\label{lem:tangency-filtration}
For
\[
  g,h
  \in
  \widehat{\operatorname{Diff}}_1(\mathbb K,0),
\]
\[
  q(g^{-1})=q(g)
\]
and
\[
  q(g\circ h)
  \geq
  \min\{q(g),q(h)\}.
\]
If a set \(S\) generates \(H\), then
\[
  q(H)
  =
  \min_{h\in S}q(h).
\]
\end{lemma}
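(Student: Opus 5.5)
We prove the three assertions of Lemma~\ref{lem:tangency-filtration} in order, writing every nonidentity germ in the normal form \(g(T)=T+aT^{q+1}+O(T^{q+2})\) with \(a\neq0\) and \(q=q(g)\geq1\). The convention \(q(\operatorname{id})=\infty\) handles all degenerate cases, such as \(g=\operatorname{id}\) or \(g\circ h=\operatorname{id}\), without separate argument.

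\textbf{Inverse.} Let \(g\neq\operatorname{id}\). Compose \(g^{-1}(T)=T+b(T)\) with \(g\) and expand:
\[
T=g\bigl(g^{-1}(T)\bigr)=T+b(T)+a\bigl(T+b(T)\bigr)^{q+1}+O(T^{q+2}).
\]
Suppose \(b\) had lowest degree \(d\leq q\). Then \(a(T+b)^{q+1}=O(T^{q+1})\) cannot cancel the degree-\(d\) term of \(b\), a contradiction. Hence \(\operatorname{ord}b\geq q+1\), and the displayed identity then gives \(b(T)=-aT^{q+1}+O(T^{q+2})\). Therefore \(q(g^{-1})=q(g)\).

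\textbf{Composition.} Write \(g=\operatorname{id}+\beta\) and \(h=\operatorname{id}+\gamma\), with \(\operatorname{ord}\beta=q(g)+1\) and \(\operatorname{ord}\gamma=q(h)+1\). Then
\[
g\circ h-\operatorname{id}=\gamma+\beta\circ h .
\]
Since \(h(T)=T\cdot(\text{unit})\), substitution preserves order, so \(\operatorname{ord}(\beta\circ h)=\operatorname{ord}\beta\). The order of a sum is at least the minimum of the orders, which gives \(q(g\circ h)\geq\min\{q(g),q(h)\}\). Equivalently, for each \(k\) the set \(\{h:\operatorname{ord}(h-\operatorname{id})\geq k+1\}\) is a subgroup, namely the kernel of the action on \(\mathfrak m/\mathfrak m^{k+1}\). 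Seen this way, the first two statements follow immediately.

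\textbf{Generating sets.} Let \(q_0=\min_{h\in S}q(h)\) and let \(G_{q_0}\) be the subgroup of germs with \(q\geq q_0\). By the first two statements, \(G_{q_0}\) is closed under inverses and products, and it contains \(S\). Hence \(H=\langle S\rangle\subseteq G_{q_0}\), which gives \(q(H)\geq q_0\). Conversely, \(S\subseteq H\) gives \(q(H)\leq q_0\).

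The only delicate point is the order bookkeeping in the inverse computation, which the displayed identity settles.
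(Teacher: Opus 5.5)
Your proof is correct and follows essentially the same route as the paper. You compute the leading term of \(g^{-1}\), you do order bookkeeping for compositions (your identity \(g\circ h-\operatorname{id}=\gamma+\beta\circ h\) makes precise the paper's remark that no nonlinear term can appear below the minimum tangency order), and you use closure of \(\{q\geq q_0\}\) under products and inverses for the generating-set claim. Your observation that \(\{h:\operatorname{ord}(h-\operatorname{id})\geq k+1\}\) is the kernel of the substitution action on \(\mathfrak m/\mathfrak m^{k+1}\) neatly gives the first two assertions at once, but it repackages the same argument rather than changing it.
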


\begin{proof}
The inverse of \(T+aT^{q+1}+O(T^{q+2})\) begins
\(T-aT^{q+1}+O(T^{q+2})\),
which proves the inverse identity. In a composition, no nonlinear term
can occur below the minimum of the first nonlinear orders of the two
factors. Cancellation at that minimum may increase the order but
cannot decrease it.

Every word in the generators and their inverses therefore has tangency
order at least
\(\min_{h\in S}q(h)\).
Since a nonidentity generator attaining this minimum belongs to \(H\),
equality follows.
\end{proof}

\begin{theorem}[Fixed jets for tangent-to-identity holonomy]
\label{thm:complete-fixed-jet-classification}
Let
\[
  H
  \subseteq
  \widehat{\operatorname{Diff}}_1(\mathbb K,0)
\]
be nontrivial, and set
\[
  q=q(H).
\]
Then, for every \(r\geq1\),
\[
  \boxed{
  F_H^r
  =
  \frac{
    \mathfrak m^{\max\{1,r-q+1\}}
  }{
    \mathfrak m^{r+1}
  }.
  }
\]
Moreover,
\[
  \boxed{
  F_H^\infty=0,
  \qquad
  I_H^r=0,
  \qquad
  P_H^r=F_H^r.
  }
\]
In particular,
\[
  \boxed{
  \dim P_H^r
  =
  \min\{q(H),r\}.
  }
\]
\end{theorem}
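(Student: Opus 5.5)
The plan is to sandwich \(F_H^r\) between two copies of the single-germ fixed space, using Theorem~\ref{thm:fixed-jets-single-germ} and Lemma~\ref{lem:tangency-filtration}. Since \(H\) is nontrivial, \(q=q(H)\) is finite and attained by some nonidentity \(h_0\in H\) with \(q(h_0)=q\). Write \(h_0(T)=T+aT^{q+1}+O(T^{q+2})\) with \(a\neq0\).

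\textbf{Step 1: upper bound.} Every element of \(F_H^r\) is fixed by \(h_0\), so
\[
  F_H^r\subseteq F_{\langle h_0\rangle}^r=\mathfrak m^{\max\{1,r-q+1\}}/\mathfrak m^{r+1}.
\]

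\textbf{Step 2: lower bound.} Take \(f\in\mathfrak m^{n}\) with \(n\ge r-q+1\), and any \(h\in H\setminus\{\operatorname{id}\}\). By the definition of \(q(H)\) as a minimum, \(q(h)\ge q\). Lemma~\ref{lem:leading-substitution-term} applied to \(h\), with its own tangency order \(q(h)\), and to the lowest term of \(f\), gives
\[
  f\circ h-f\in\mathfrak m^{n+q(h)}\subseteq\mathfrak m^{r+1}.
\]
If \(f\) vanishes modulo \(\mathfrak m^{r+1}\) the conclusion is trivial. The estimate depends only on \(f\) modulo \(\mathfrak m^{r+1}\): changing \(f\) by \(g\in\mathfrak m^{r+1}\) changes \(f\circ h\) by \(g\circ h\in\mathfrak m^{r+1}\). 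Hence the class of \(f\) lies in \(F_H^r\). This gives equality with the displayed quotient, and the dimension count is \(r-\max\{1,r-q+1\}+1=\min\{q,r\}\).

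\textbf{Step 3: formal statements.} We have \(F_H^\infty\subseteq F_{\langle h_0\rangle}^\infty=0\) by Theorem~\ref{thm:fixed-jets-single-germ}. It follows that \(I_H^r=\operatorname{pr}_r(0)=0\), so \(P_H^r=F_H^r\) and \(\dim P_H^r=\min\{q(H),r\}\).

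\textbf{Main subtlety.} The only delicate point is Step 2. We must make sure that every element of \(H\), not just a generating set, has tangency order at least \(q\). This follows directly from defining \(q(H)\) as the minimum over \(H\). If one instead starts from a generating set, Lemma~\ref{lem:tangency-filtration} supplies the needed bound. We must also check that the leading-term estimate is applied with \(q(h)\), which may exceed \(q\); that only strengthens the containment.
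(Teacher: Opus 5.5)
Your proposal is correct and follows the paper's proof essentially step for step. The upper bound comes from a germ $h_0$ attaining $q(H)$ via Theorem~\ref{thm:fixed-jets-single-germ}. The reverse inclusion comes from Lemma~\ref{lem:leading-substitution-term}, using $q(h)\ge q(H)$ for every $h\in H$. Finally, $F_H^\infty=0$, and hence $I_H^r=0$ and $P_H^r=F_H^r$, follows from $h_0$-invariance. Your check that the condition $f\circ h-f\in\mathfrak m^{r+1}$ depends only on $f$ modulo $\mathfrak m^{r+1}$ is a small detail the paper leaves implicit.
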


\begin{proof}
Choose \(h_0\in H\) with \(q(h_0)=q(H)=q\). Theorem~\ref{thm:fixed-jets-single-germ} gives
\(F_H^r\subseteq F_{\langle h_0\rangle}^r=\mathfrak m^{\max\{1,r-q+1\}}/\mathfrak m^{r+1}\). Conversely, if \([f]\) lies in this terminal monomial subspace and \(g\in H\), then \(q(g)\ge q\); if \(n\) is the lowest degree of \(f\), Lemma~\ref{lem:leading-substitution-term} gives \(f\circ g-f\in\mathfrak m^{r+1}\). Hence equality holds. Any formal \(H\)-invariant series is in particular \(h_0\)-invariant, so Theorem~\ref{thm:fixed-jets-single-germ} gives \(F_H^\infty=0\); the remaining assertions follow.
\end{proof}

The first finite constraint appears at order \(q(H)+1\): \(F_H^r=\mathcal J_r\) for \(r\le q(H)\), while \(F_H^{q(H)+1}=\operatorname{span}\{T^2,\ldots,T^{q(H)+1}\}\).

\subsection{Finite horizons and interval decomposition}
\label{subsec:finite-horizon}

For
\(r\geq1, \qquad s\geq0\),
define the finite-horizon image
\[
  F_H^{r\langle s\rangle}
  :=
  \operatorname{Im}
  \left(
    F_H^{r+s}
    \longrightarrow
    F_H^r
  \right).
\]

\begin{theorem}[Finite-horizon prolongation]
\label{thm:finite-horizon-prolongation}
For every \(r\geq1\) and \(s\geq0\),
\[
  \boxed{
  F_H^{r\langle s\rangle}
  =
  \frac{
    \mathfrak m^{
      \max\{1,r+s-q(H)+1\}
    }
  }{
    \mathfrak m^{r+1}
  }.
  }
\]
The space is interpreted as zero when the lower exponent exceeds \(r\).
\[
  \boxed{
  \dim F_H^{r\langle s\rangle}
  =
  \min
  \left\{
    r,
    (q(H)-s)_+
  \right\}.
  }
\]
\end{theorem}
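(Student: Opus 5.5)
The plan is to read the finite-horizon image off directly from the explicit description of the fixed spaces in Theorem~\ref{thm:complete-fixed-jet-classification}. There, with \(q=q(H)\), the fixed space at order \(r+s\) is
\[
  F_H^{r+s}=\mathfrak m^{\max\{1,r+s-q+1\}}/\mathfrak m^{r+s+1}.
\]
The map \(F_H^{r+s}\to F_H^r\) is the restriction of the truncation \(\operatorname{pr}\colon\mathcal J_{r+s}\to\mathcal J_r\), which is reduction modulo \(\mathfrak m^{r+1}\). It is well defined on fixed spaces, because truncation commutes with substitution by any \(h\in H\). So the image is just the reduction modulo \(\mathfrak m^{r+1}\) of a space spanned by monomials.

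First I would fix the exponent \(N:=\max\{1,r+s-q+1\}\). The space \(F_H^{r+s}\) is spanned by the classes of \(T^n\) with \(N\le n\le r+s\). Truncation sends \(T^n\) to its class in \(\mathcal J_r\) when \(n\le r\), and to zero when \(n>r\). So the image is spanned by \(T^n\) for \(N\le n\le r\), which is exactly \(\mathfrak m^{N}/\mathfrak m^{r+1}\). It is interpreted as zero when \(N>r\), as stated. This step uses nothing beyond the monomial description, so there is no real obstacle here. The one point to check is that truncation of a monomial subspace yields the monomial subspace with the same lower exponent, and this is immediate because truncation is coordinatewise in the monomial basis.

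For the dimension I would count \(\max\{0,r-N+1\}\) and split into two cases.

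\begin{itemize}
  \item If \(r+s-q+1\le 1\), that is \(s\ge q-r\), then \(N=1\) and the dimension is \(r\). In this case \(q-s\le r\), so \(\min\{r,(q-s)_+\}=r\) exactly when \(q-s\ge r\) as well. Otherwise the count must be redone with \(N=r+s-q+1\).
  \item Cleanly: when \(N=r+s-q+1\ge1\), the dimension is \(\max\{0,q-s\}=(q-s)_+\). The requirement \(N\ge1\) means \(q-s\le r\), so \((q-s)_+=\min\{r,(q-s)_+\}\).
  \item When \(N=1>r+s-q+1\), we have \(q-s>r\ge1\), so the dimension \(r\) equals \(\min\{r,(q-s)_+\}\).
\end{itemize}

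Both cases agree with the boxed formula.

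The only mild care point is the bookkeeping of the \(\max\{1,\cdot\}\) together with the positive part \((\cdot)_+\), which the case split handles. If \(q(H)=\infty\) were allowed (trivial \(H\)), the formula would read \(N=1\), but the theorem presupposes that \(H\) is nontrivial, as in Theorem~\ref{thm:complete-fixed-jet-classification}.
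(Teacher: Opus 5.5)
Your proposal is correct and follows the paper's own proof: the paper also reads the image directly off the monomial description of \(F_H^{r+s}\) from Theorem~\ref{thm:complete-fixed-jet-classification} and truncates modulo \(\mathfrak m^{r+1}\). One slip: your first bullet inverts the inequality, since \(r+s-q+1\le1\) means \(s\le q-r\), i.e.\ \(q-s\ge r\). Your second and third bullets give the correct case split, so this is an expository error rather than a gap.
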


\begin{proof}
At order \(r+s\),
Theorem~\ref{thm:complete-fixed-jet-classification}
gives the terminal monomial space beginning in degree
\(\max\{1,r+s-q(H)+1\}\).
Truncating to order \(r\) gives the displayed formula.
\end{proof}

\begin{corollary}[Uniform determination horizon]
\label{cor:uniform-determination-horizon}
For every \(r\geq1\),
\[
  F_H^{r\langle q(H)\rangle}=0.
\]
For every \(r\geq q(H)\),
\[
  F_H^{r\langle q(H)-1\rangle}
  =
  \operatorname{span}\{T^r\}
  \neq0.
\]
\(q(H)\) is the least uniform number of additional jet levels
needed to eliminate all nonprolongable finite fixed directions.
\end{corollary}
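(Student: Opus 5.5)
The plan is to read both assertions off Theorem~\ref{thm:finite-horizon-prolongation}, substituting the two values \(s=q(H)\) and \(s=q(H)-1\). Write \(q=q(H)\), finite because \(H\) is nontrivial. That theorem gives
\[
  F_H^{r\langle s\rangle}=\mathfrak m^{\max\{1,r+s-q+1\}}/\mathfrak m^{r+1},
\]
interpreted as zero when the lower exponent exceeds \(r\), together with \(\dim F_H^{r\langle s\rangle}=\min\{r,(q-s)_+\}\).

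\emph{Step 1: horizon \(s=q\).} The exponent becomes \(\max\{1,r+1\}=r+1\), which exceeds \(r\), so \(F_H^{r\langle q\rangle}=\mathfrak m^{r+1}/\mathfrak m^{r+1}=0\) for every \(r\geq1\). Equivalently, \(\min\{r,0\}=0\).

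\emph{Step 2: horizon \(s=q-1\).} This is legitimate since \(q\geq1\) gives \(s\geq0\). For \(r\geq q\) the exponent is \(\max\{1,r\}=r\), so
\[
  F_H^{r\langle q-1\rangle}=\mathfrak m^r/\mathfrak m^{r+1}=\operatorname{span}\{T^r\}.
\]
This is one-dimensional, consistent with \(\min\{r,1\}=1\). I would note that the hypothesis \(r\geq q\) is stronger than needed, since \(r\geq1\) already suffices. It is nonetheless consistent with the statement.

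\emph{Step 3: minimality of \(q\).} To justify the phrase "least uniform number", I would combine the two computations with monotonicity in \(s\). The image of \(F_H^{r+s+1}\to F_H^r\) factors through \(F_H^{r+s}\to F_H^r\), so the finite-horizon images decrease in \(s\). Alternatively, this is visible directly from the formula, since \(\min\{r,(q-s)_+\}\) is nonincreasing in \(s\). Hence every horizon \(s\leq q-1\) leaves a nonzero image for suitable \(r\); Step 2 already exhibits one with \(r\geq q\). Step 1 shows that \(s=q\) works uniformly in \(r\). Therefore \(q\) is the minimal uniform horizon.

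There is no real obstacle here. The only care needed is the zero convention when the exponent equals \(r+1\), and checking that \(s=q-1\) is nonnegative.
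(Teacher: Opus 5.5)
Your proposal is correct and follows the paper's route: the paper gives no separate proof and treats the corollary as the direct substitution of \(s=q(H)\) and \(s=q(H)-1\) into Theorem~\ref{thm:finite-horizon-prolongation}. Your monotonicity-in-\(s\) argument for minimality is sound, and so is your remark that the second identity already holds for every \(r\geq1\).
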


For integers
\(1\leq a\leq b\),
let \(\mathbb I_{[a,b]}\) denote the inverse system equal to \(\mathbb K\) at orders
\(a,\ldots,b\), zero at all other orders, and with identity transition
maps inside the interval.

\begin{corollary}[Interval decomposition]
\label{cor:interval-decomposition}
The fixed-jet inverse system
\[
  \mathbf F_H
  :=
  \left(
    F_H^r
  \right)_{r\geq1}
\]
has the decomposition
\[
  \boxed{
  \mathbf F_H
  \cong
  \bigoplus_{n\geq1}
  \mathbb I_{[n,n+q(H)-1]}.
  }
\]
\end{corollary}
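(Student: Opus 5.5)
We need to establish the interval decomposition of the inverse system \(\mathbf F_H=(F_H^r)_{r\ge1}\) with truncation transition maps \(F_H^{r+1}\to F_H^r\). Write \(q=q(H)\). The plan is to exhibit an explicit monomial basis and show that each basis monomial generates one interval summand.

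\emph{Step 1: a monomial basis at each level.} By Theorem~\ref{thm:complete-fixed-jet-classification},
\[
  F_H^r=\operatorname{span}\{T^n:\max\{1,r-q+1\}\le n\le r\}.
\]
Hence \(T^n\) (more precisely, its class) lies in \(F_H^r\) exactly when \(n\le r\le n+q-1\). For each fixed \(n\ge1\), define the subsystem \(\mathbf L_n\subseteq\mathbf F_H\) by \(\mathbf L_n^r=\mathbb K\,[T^n]\) when \(n\le r\le n+q-1\), and \(\mathbf L_n^r=0\) otherwise.

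\emph{Step 2: each \(\mathbf L_n\) is a subsystem isomorphic to \(\mathbb I_{[n,n+q-1]}\).} Truncation \(\mathfrak m/\mathfrak m^{r+2}\to\mathfrak m/\mathfrak m^{r+1}\) sends \([T^n]\) to \([T^n]\) when \(n\le r\), and to \(0\) when \(n=r+1\). For \(n\le r<r+1\le n+q-1\), the transition \(\mathbf L_n^{r+1}\to\mathbf L_n^r\) is therefore the identity on \(\mathbb K\). At the bottom end, \(r+1=n\), the map \(\mathbf L_n^{n}\to\mathbf L_n^{n-1}=0\) is zero, as required. At the top end, \(\mathbf L_n^{n+q}=0\), so compatibility with the transitions is automatic. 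Thus \(\mathbf L_n\) is closed under the transition maps, and the identification \([T^n]\mapsto 1\) gives an isomorphism \(\mathbf L_n\cong\mathbb I_{[n,n+q-1]}\).

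\emph{Step 3: the direct sum.} At each level \(r\), only the finitely many \(n\) with \(r-q+1\le n\le r\) contribute, and these \([T^n]\) form a basis of \(F_H^r\) by Step 1. So \(\mathbf F_H=\bigoplus_n\mathbf L_n\) levelwise. Since each \(\mathbf L_n\) is a subsystem, this is a direct sum of inverse systems, because transitions respect the summands.

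The main obstacle is Step 2, namely checking that the truncation maps are diagonal in the monomial basis. Truncation of jets is exactly coefficient deletion in degree \(r+1\), so the step is immediate once written in coordinates. The case \(q=\infty\) does not arise, since \(H\) is nontrivial.
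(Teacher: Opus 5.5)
Your proof is correct and follows the paper's argument. By Theorem~\ref{thm:complete-fixed-jet-classification}, the class of $T^n$ lies in $F_H^r$ exactly when $n\le r\le n+q(H)-1$, and jet truncation acts diagonally on the monomial basis; you additionally spell out the transition-map checks at the two ends of each interval, which the paper leaves implicit.
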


\begin{proof}
The monomial direction \(T^n\) occurs in \(F_H^r\) exactly when \(n\leq r\) and
\(n\geq r-q(H)+1\).
Equivalently,
\(n\leq r\leq n+q(H)-1\).
Jet truncation acts diagonally on the monomial basis.
\end{proof}

\subsection{The graph prolongation profile}
\label{subsec:graph-prolongation-profile}

Assume throughout this subsection that every connected transport
component has either trivial holonomy or nontrivial
tangent-to-identity holonomy. Components with orientation-reversing or
orientation-preserving nonunit-linear return holonomy are excluded
from the formulas below.

For a vertex \(v\), let \(\mathscr C_v\) be the set of connected components of \(\mathcal G_v\). The scalar
endpoint value common to all edge ends at \(v\) is denoted by \(c_v\).
After subtracting this constant coefficient, each component carries an
independent reduced root jet.

\begin{proposition}[Rooted-tree component decomposition]
\label{prop:rooted-tree-component-decomposition}
Choose one root and one spanning tree in every connected transport
component. For every \(r\geq0\), these choices induce linear
isomorphisms
\[
  \mathfrak E_\pi^r
  \cong
  \mathbb K^{V(\Gamma)}
  \oplus
  \bigoplus_{v\in V(\Gamma)}
  \;
  \bigoplus_{C\in\mathscr C_v}
  F_{H_C}^r,
\]
and
\[
  \mathfrak J_\pi^r
  \cong
  \mathbb K^{V(\Gamma)}
  \oplus
  \bigoplus_{v\in V(\Gamma)}
  \;
  \bigoplus_{C\in\mathscr C_v}
  I_{H_C}^r.
\]
Here, at order zero the reduced summands are zero.
\end{proposition}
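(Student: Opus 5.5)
The plan is to decompose the defining equations of \(\mathfrak E_\pi^r\) vertex by vertex, and then component by component in the transport graphs. Two observations make this possible. First, every endpoint equation, condition (a) or (b) of Definition~\ref{def:truncated-graph-equaliser}, involves only edge ends incident to a single vertex \(v\). Second, each edge end \(\epsilon\) belongs to exactly one \(\mathcal E(v)\), and hence to exactly one component \(C\in\mathscr C_v\). Consequently \(\mathbb J_\Gamma^r=\prod_v\prod_{C\in\mathscr C_v}\prod_{\epsilon\in C}\mathbb J^r\), and the equaliser is the product of the solution spaces of these local systems.

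Within one vertex, I split each endpoint jet as \(\xi_\epsilon=(\xi_\epsilon)_0+\bar\xi_\epsilon\), with \(\bar\xi_\epsilon\in\mathcal J_r\) the reduced part; the translation from derivative coordinates to Taylor coefficients is the fixed diagonal rescaling by \(1/j!\). Substitution by a germ with zero constant term fixes the constant coefficient and preserves \(\mathfrak m\). By Proposition~\ref{prop:correct-relative-endpoint-transport}, the relation at \(p\) therefore splits into equality of the constant terms and the reduced relation \(\bar\xi_{\epsilon_{p,-}}=\mathcal T_{\rho_{p,+\to-}}^r\bar\xi_{\epsilon_{p,+}}\). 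I must check that \(T_{p,+}^r\xi=T_{p,-}^r\xi'\) is equivalent to this substitution identity at truncated order. Proposition~\ref{prop:correct-relative-endpoint-transport} is stated for smooth functions, but both sides depend only on \(r\)-jets and Borel realisation (Lemma~\ref{lem:edgewise-borel-realisation}) covers all jets, so the equivalence holds. Condition (a) makes all constant terms at \(v\) equal to a single \(c_v\), which supplies the \(\mathbb K^{V(\Gamma)}\) summand. The reduced relations involve only edges of \(\mathcal G_v\), so they decouple over components. On each component \(C\), Proposition~\ref{prop:graph-holonomy-reduction}, applied with the chosen root and spanning tree, identifies the solution space with \(F_{H_C}^r=\mathcal J_r^{H_C}\) through \(\eta_C\mapsto(\mathcal T^r_{\rho_{P_\epsilon}}\eta_C)_{\epsilon\in C}\). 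At \(r=0\) we have \(\mathcal J_0=0\), which gives the order-zero remark.

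For \(\mathfrak J_\pi^r=\operatorname{pr}_r(\mathfrak E_\pi^\infty)\), I apply the same decomposition to formal jets. The formal case of Proposition~\ref{prop:graph-holonomy-reduction} gives \(\mathfrak E_\pi^\infty\cong\mathbb K^{V(\Gamma)}\oplus\bigoplus_{v,C}F_{H_C}^\infty\). The isomorphisms commute with truncation, because tree transport \(\mathcal T_{\rho_{P_\epsilon}}\) commutes with \(\operatorname{pr}_r\) and the root-evaluation inverse obviously does. Hence the image of \(\operatorname{pr}_r\) is \(\mathbb K^{V(\Gamma)}\oplus\bigoplus\operatorname{pr}_r(F^\infty_{H_C})=\mathbb K^{V(\Gamma)}\oplus\bigoplus I^r_{H_C}\).

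I expect the main obstacle to be bookkeeping rather than depth. One point is making precise that the truncated transport relation in (b) is exactly substitution modulo \(\mathfrak m^{r+1}\), including the sign operators \(S_\sigma\), and that it splits off the constant term. Another is confirming that loops and parallel edges in \(\mathcal G_v\) cause no trouble; they only contribute holonomy, which the spanning-tree reduction already handles. Compatibility of the chosen isomorphisms with truncation is also needed for the second identity.
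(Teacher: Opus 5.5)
Your proposal is correct and follows essentially the same route as the paper: split off the common vertex constant $c_v$ using that substitutions fix constants, decouple the reduced equations over the components of $\mathcal G_v$, and apply Proposition~\ref{prop:graph-holonomy-reduction} in its finite and formal forms. Your extra bookkeeping fills in steps the paper leaves implicit: you identify condition (b) with substitution by $\rho_{p,+\to-}$ modulo $\mathfrak m^{r+1}$, and you check that the tree-transport isomorphisms commute with $\operatorname{pr}_r$, which is what gives the $\mathfrak J_\pi^r$ statement.
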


\begin{proof}
Graph continuity contributes one scalar value \(c_v\) at each vertex. Since every substitution fixes constants, the endpoint equations split as \(\mathbb K[[T]]=\mathbb K\oplus\mathfrak m\). After removing \(c_v\), different connected components of each transport graph decouple; Proposition~\ref{prop:graph-holonomy-reduction} identifies the reduced finite solution space on \(C\) with \(F_{H_C}^r\), and the finite projection of the formal solution space with \(I_{H_C}^r\). Summing over vertices and components gives the displayed decompositions.
\end{proof}

For a component \(C\), define its reduced prolongation quotient by
\(\mathfrak P_{\pi,C}^r := F_{H_C}^r/I_{H_C}^r\).
If \(C\) is active, set
\(q_{\pi,C}:=q(H_C)\).

\begin{theorem}[Componentwise prolongation profile]
\label{thm:componentwise-prolongation-profile}
If \(C\) has trivial holonomy, then
\[
  \mathfrak P_{\pi,C}^r=0
  \qquad
  \text{for every }r.
\]

If \(C\) is active, then
\[
  I_{H_C}^r=0,
\]
\[
  F_{H_C}^r
  =
  \frac{
    \mathfrak m^{
      \max\{1,r-q_{\pi,C}+1\}
    }
  }{
    \mathfrak m^{r+1}
  },
\]
and
\[
  \boxed{
  \dim\mathfrak P_{\pi,C}^r
  =
  \min\{q_{\pi,C},r\}.
  }
\]
Moreover,
\[
  \dim
  \operatorname{Im}
  \left(
    F_{H_C}^{r+s}
    \longrightarrow
    F_{H_C}^r
  \right)
  =
  \min
  \left\{
    r,
    (q_{\pi,C}-s)_+
  \right\}.
\]
\end{theorem}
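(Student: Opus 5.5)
This theorem is essentially a repackaging of the fixed-jet classification, applied componentwise. I will treat the trivial-holonomy and active cases separately, using the identification of $F_{H_C}^r$ and $I_{H_C}^r$ supplied by Proposition~\ref{prop:rooted-tree-component-decomposition}. That proposition identifies the reduced finite solution space of $C$ with $F_{H_C}^r=\mathcal J_r^{H_C}$. It also identifies the truncation of the formal solution space with $I_{H_C}^r=\operatorname{pr}_r(F_{H_C}^\infty)$. Hence $\mathfrak P_{\pi,C}^r$ is exactly the abstract quotient $P_{H_C}^r$ of Section~\ref{subsec:fixed-jet-classification}.

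\emph{Trivial holonomy.} If $H_C=\{\operatorname{id}\}$, every reduced jet is fixed, so $F_{H_C}^r=\mathcal J_r$ and $F_{H_C}^\infty=\mathfrak m$. Truncation $\mathfrak m\to\mathfrak m/\mathfrak m^{r+1}$ is surjective, so $I_{H_C}^r=\mathcal J_r$ and $\mathfrak P_{\pi,C}^r=0$. This is the content of Corollary~\ref{cor:tree-components}, now extended from trees to any component with trivial holonomy. In this case the rooted-tree transport of Proposition~\ref{prop:graph-holonomy-reduction} realises an arbitrary formal root jet.

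\emph{Active case.} By the standing assumption of the subsection, $H_C$ is a nontrivial subgroup of $\widehat{\operatorname{Diff}}_1(\mathbb K,0)$. Theorem~\ref{thm:complete-fixed-jet-classification} with $q=q(H_C)=q_{\pi,C}$ gives the displayed formula for $F_{H_C}^r$, together with $F_{H_C}^\infty=0$ and hence $I_{H_C}^r=0$. Consequently $\mathfrak P_{\pi,C}^r=F_{H_C}^r$. Its dimension counts the monomials $T^n$ with $\max\{1,r-q+1\}\le n\le r$, which gives $\min\{q,r\}$. The last assertion is Theorem~\ref{thm:finite-horizon-prolongation}, since $\operatorname{Im}(F_{H_C}^{r+s}\to F_{H_C}^r)=F_{H_C}^{r\langle s\rangle}$.

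The only step needing care is the identification of $\mathfrak P_{\pi,C}^r$ with the abstract $P_{H_C}^r$. This requires checking that the tree-transport isomorphism of Proposition~\ref{prop:graph-holonomy-reduction} commutes with truncation, so that projections of formal solutions correspond to projections of $H_C$-invariant formal root jets. The point is that $\mathcal T_\rho^r$ is induced by the formal substitution $\mathcal T_\rho$ and is compatible with $\operatorname{pr}_r$, so truncation passes through the transport isomorphism. Everything else is a direct citation of the earlier results.
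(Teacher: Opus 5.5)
Your proof is correct and matches the paper's argument. The trivial-holonomy case is the observation $F_{H_C}^r=I_{H_C}^r=\mathcal J_r$, and the active case applies Theorems~\ref{thm:complete-fixed-jet-classification} and \ref{thm:finite-horizon-prolongation} directly to $H_C$. The identification you flag as needing care is actually automatic: the paper defines $\mathfrak P_{\pi,C}^r$ directly as $F_{H_C}^r/I_{H_C}^r$, and compatibility of tree transport with truncation is needed only for Proposition~\ref{prop:rooted-tree-component-decomposition}, not for this theorem.
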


\begin{proof}
If \(H_C\) is trivial, every finite reduced root jet has an arbitrary
formal prolongation, so
\(F_{H_C}^r=I_{H_C}^r=\mathcal J_r\).

For an active component, apply
Theorems~\ref{thm:complete-fixed-jet-classification}
and \ref{thm:finite-horizon-prolongation}
to \(H_C\).
\end{proof}

Define the global order-\(r\) prolongation defect space by
\(\mathfrak D_\pi^r := \mathfrak E_\pi^r/ \mathfrak J_\pi^r\),
and set
\(d_\pi^r := \dim\mathfrak D_\pi^r\).
Let \(\mathscr C_\pi^{\mathrm{act}}\) be the finite set of active transport components.

\begin{corollary}[Global defect formula]
\label{cor:global-defect-formula}
After the rooted-tree choices of
Proposition~\ref{prop:rooted-tree-component-decomposition},
there is an isomorphism
\[
  \boxed{
  \mathfrak D_\pi^r
  \cong
  \bigoplus_{
    C\in\mathscr C_\pi^{\mathrm{act}}
  }
  \mathfrak P_{\pi,C}^r.
  }
\]
\[
  \boxed{
  d_\pi^r
  =
  \sum_{
    C\in\mathscr C_\pi^{\mathrm{act}}
  }
  \min\{q_{\pi,C},r\}.
  }
\]
\end{corollary}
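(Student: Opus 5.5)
The plan is to take quotients in the rooted-tree decomposition of Proposition~\ref{prop:rooted-tree-component-decomposition} and then read off dimensions from Theorem~\ref{thm:componentwise-prolongation-profile}. Fix one root and one spanning tree in every connected transport component. The proposition then gives isomorphisms
\[
  \mathfrak E_\pi^r\cong\mathbb K^{V(\Gamma)}\oplus\bigoplus_{v}\bigoplus_{C\in\mathscr C_v}F_{H_C}^r,
  \qquad
  \mathfrak J_\pi^r\cong\mathbb K^{V(\Gamma)}\oplus\bigoplus_{v}\bigoplus_{C\in\mathscr C_v}I_{H_C}^r .
\]

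The first point to verify is that these two isomorphisms are compatible with the inclusion \(\mathfrak J_\pi^r\subseteq\mathfrak E_\pi^r\). Both are built from the same data: the vertex values \(c_v\), followed by the reduced root jets \(\eta_C\) obtained by evaluating at the chosen root. The inclusion therefore corresponds to the identity on \(\mathbb K^{V(\Gamma)}\) plus the summandwise inclusions \(I_{H_C}^r\subseteq F_{H_C}^r\). This compatibility needs a short justification. A formal family in \(\mathfrak E_\pi^\infty\) decomposes into formal root jets in \(\mathfrak m^{H_C}\), by the formal case of Proposition~\ref{prop:graph-holonomy-reduction}. Truncation commutes with substitution by the tree-path germs \(\mathcal T_{\rho_{P_\epsilon}}\). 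Hence the image of \(\mathfrak J_\pi^r\) consists precisely of \(\mathbb K^{V(\Gamma)}\) together with \(\operatorname{pr}_r(F^\infty_{H_C})=I^r_{H_C}\) in each component slot.

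With compatibility in hand, a quotient of direct sums by summandwise subspaces is the direct sum of the quotients. The vertex summand cancels, which gives
\[
  \mathfrak D_\pi^r\cong\bigoplus_{v}\bigoplus_{C}F_{H_C}^r/I_{H_C}^r=\bigoplus_{C}\mathfrak P_{\pi,C}^r .
\]
By the standing assumption of this subsection, every component either has trivial holonomy or is active. Theorem~\ref{thm:componentwise-prolongation-profile} says \(\mathfrak P^r_{\pi,C}=0\) when the holonomy is trivial, so only the active components survive. The same theorem gives \(\dim\mathfrak P^r_{\pi,C}=\min\{q_{\pi,C},r\}\) for each active component. Summing over the finite set \(\mathscr C_\pi^{\mathrm{act}}\) yields the formula for \(d_\pi^r\). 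The case \(r=0\) is consistent, since both sides vanish by Proposition~\ref{prop:zeroth-order-prolongation}.

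The only nonformal step, and the one I expect to be the main obstacle, is the compatibility of the decomposition with the inclusion. In particular, one must check that the finite projection of the formal solution space on each component is exactly \(I^r_{H_C}\), independently across components. I would handle it as follows. Constants split off uniformly because every substitution fixes them. The transport equations involve only edge ends in a single component, so a product of formal invariant root jets, one for each component, is realised jointly.
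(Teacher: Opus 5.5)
Your proposal is correct and follows the paper's proof. You quotient the rooted-tree decomposition of Proposition~\ref{prop:rooted-tree-component-decomposition}, cancel the common vertex-value summand, discard the trivial-holonomy components, and read off \(\dim\mathfrak P_{\pi,C}^r=\min\{q_{\pi,C},r\}\) from Theorem~\ref{thm:componentwise-prolongation-profile}. The one addition is that you make explicit a step the paper uses tacitly: the two decompositions are compatible with the inclusion \(\mathfrak J_\pi^r\subseteq\mathfrak E_\pi^r\). That compatibility is immediate, since both isomorphisms are given by the vertex constants together with evaluation at the chosen roots, and evaluation commutes with truncation.
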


\begin{proof}
Apply
Proposition~\ref{prop:rooted-tree-component-decomposition}.
The common vertex-value summand occurs identically in
\(\mathfrak E_\pi^r\) and \(\mathfrak J_\pi^r\), and hence contributes
nothing to the quotient. Trivial transport components also contribute
zero. The remaining quotient is the direct sum of the active
component quotients.
\end{proof}

The vertex-indexed multiset
\[
  \mathbf q_\pi
  :=
  \left(
    \left\{
      q_{\pi,C}:
      C\subseteq\mathcal G_v
      \text{ is active}
    \right\}
  \right)_{v\in V(\Gamma)}
\]
is called the prolongation profile.

Set
\(d_\pi^0:=0\).
Then
\[
  \boxed{
  d_\pi^r-d_\pi^{r-1}
  =
  \#\left\{
    C\in\mathscr C_\pi^{\mathrm{act}}:
    q_{\pi,C}\geq r
  \right\}.
  }
\]
The aggregate sequence \((d_\pi^r)_{r\geq1}\) therefore recovers the global multiset of active tangency orders, although it
does not recover their distribution among graph vertices. The
vertexwise family of local defect sequences recovers the full
vertex-indexed profile.

Let
\(Q_\pi := \max_{ C\in\mathscr C_\pi^{\mathrm{act}} } q_{\pi,C}\),
with \(Q_\pi=0\) when no active component exists.

\begin{corollary}[Finite determination of prolongable jets]
\label{cor:finite-determination-prolongable-jets}
For every \(r\geq0\),
\[
  \boxed{
  \mathfrak J_\pi^r
  =
  \operatorname{Im}
  \left(
    \mathfrak E_\pi^{r+Q_\pi}
    \longrightarrow
    \mathfrak E_\pi^r
  \right).
  }
\]
\end{corollary}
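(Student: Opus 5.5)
The argument is componentwise, using the rooted-tree decomposition of Proposition~\ref{prop:rooted-tree-component-decomposition} with the same root and spanning-tree choices at orders \(r\) and \(r+Q_\pi\). Under these isomorphisms, truncation \(\mathfrak E_\pi^{r+Q_\pi}\to\mathfrak E_\pi^r\) acts in three ways. On the vertex summand \(\mathbb K^{V(\Gamma)}\) it is the identity. On each component summand it is truncation \(F_{H_C}^{r+Q_\pi}\to F_{H_C}^r\). This is compatible because tree transport \(\mathcal T_{\rho_{P_\epsilon}}\) commutes with jet truncation. Hence the image of \(\mathfrak E_\pi^{r+Q_\pi}\to\mathfrak E_\pi^r\) is
\[
\mathbb K^{V(\Gamma)}\oplus\bigoplus_{v}\bigoplus_{C\in\mathscr C_v}F_{H_C}^{r\langle Q_\pi\rangle},
\]
while \(\mathfrak J_\pi^r\) corresponds to the same sum with \(I_{H_C}^r\) in place of \(F_{H_C}^{r\langle Q_\pi\rangle}\). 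The statement therefore reduces to the componentwise identity \(F_{H_C}^{r\langle Q_\pi\rangle}=I_{H_C}^r\) for every component \(C\). The case \(r=0\) is trivial, since both sides equal \(\mathfrak E_\pi^0\) by Proposition~\ref{prop:zeroth-order-prolongation}.

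For a component with trivial holonomy, \(F_{H_C}^{k}=\mathcal J_k\) for every \(k\). Truncation \(\mathcal J_{r+Q_\pi}\to\mathcal J_r\) is surjective, so \(F_{H_C}^{r\langle Q_\pi\rangle}=\mathcal J_r=I_{H_C}^r\), as in Theorem~\ref{thm:componentwise-prolongation-profile}.

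For an active component, \(q_{\pi,C}\le Q_\pi\). Theorem~\ref{thm:finite-horizon-prolongation} gives
\[
\dim F_{H_C}^{r\langle Q_\pi\rangle}=\min\{r,(q_{\pi,C}-Q_\pi)_+\}=0,
\]
which also matches Corollary~\ref{cor:uniform-determination-horizon}, since a larger horizon only shrinks the image. Theorem~\ref{thm:complete-fixed-jet-classification} gives \(I_{H_C}^r=0\), so the two sides agree here as well. If no active component exists, then \(Q_\pi=0\) and only the trivial-holonomy case occurs.

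The one step that needs care is the compatibility of the decomposition isomorphisms with truncation across orders. Here the reduced root jet is simply truncated, and the constant coefficient \(c_v\) is untouched. This naturality is implicit in the proof of Proposition~\ref{prop:rooted-tree-component-decomposition} and should be stated explicitly. A second point to note is that \(\mathfrak J_\pi^r=\operatorname{pr}_r(\mathfrak E_\pi^\infty)\) also decomposes via the formal version of Proposition~\ref{prop:graph-holonomy-reduction}, with \(I_{H_C}^r\) defined as \(\operatorname{pr}_r(F_{H_C}^\infty)\), which the proposition already records.
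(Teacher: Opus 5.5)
Your proposal is correct and follows essentially the same componentwise route as the paper. Vertex values and trivial-holonomy components truncate surjectively, while on each active component the finite-horizon image vanishes once the horizon \(Q_\pi\ge q_{\pi,C}\) is reached, which matches \(I_{H_C}^r=0\). Your explicit treatment of the naturality of the rooted-tree decomposition under truncation, and of the case \(r=0\), supplies details the paper leaves implicit.
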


\begin{proof}
On an active component, every nonprolongable reduced direction is
eliminated after at most \(q_{\pi,C}\leq Q_\pi\) additional levels. On a trivial component, finite truncation is
surjective at every order. The common vertex-value components also
truncate surjectively.
\end{proof}

Combining this with
Theorem~\ref{thm:exact-graph-resolution-closure}
shows that the order-\(m\) Sobolev closure is determined by endpoint
compatibility through the finite order
\(m-1+Q_\pi\).

\subsection{Invariance}
\label{subsec:holonomy-invariance}

\begin{proposition}[Conjugacy invariance]
\label{prop:formal-conjugacy-invariance}
Let
\[
  \widetilde H
  =
  \psi^{-1}H\psi,
  \qquad
  \psi
  \in
  \widehat{\operatorname{Diff}}(\mathbb K,0).
\]
Substitution by \(\psi\) induces an isomorphism
\[
  \mathbf F_H
  \cong
  \mathbf F_{\widetilde H}
\]
of fixed-jet inverse systems and identifies their formal images and
prolongation quotients.

If
\[
  H
  \subseteq
  \widehat{\operatorname{Diff}}_1(\mathbb K,0),
\]
then
\[
  q(\widetilde H)=q(H).
\]
\end{proposition}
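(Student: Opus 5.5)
The plan is to let substitution by \(\psi\) act directly on the reduced jet spaces and check that it intertwines the two group actions. Define \(\Psi_r:\mathcal J_r\to\mathcal J_r\) by \(\Psi_r[f]:=[f\circ\psi]\). Since \(\psi(0)=0\) and \(\psi'(0)\neq0\), substitution preserves \(\mathfrak m\) and every power \(\mathfrak m^{k}\). Hence \(\Psi_r\) is well defined on \(\mathfrak m/\mathfrak m^{r+1}\). It is invertible, with inverse induced by \(\psi^{-1}\), and it is triangular with diagonal entries \(\psi'(0)^n\). The same formula defines an automorphism \(\Psi_\infty\) of \(\mathfrak m\).

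Truncation commutes with substitution, so \(\operatorname{pr}_r\circ\Psi_{r+s}=\Psi_r\circ\operatorname{pr}_r\) on the jet tower. The family \((\Psi_r)\) is therefore a morphism of inverse systems, and its inverse family is also a morphism.

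The key identity is the intertwining relation. For \(h\in H\), set \(\widetilde h=\psi^{-1}h\psi\). Then
\[
  (f\circ\psi^{-1})\circ\widetilde h
  =
  f\circ h\circ\psi\circ\psi^{-1}\cdots
\]
To get it right, compute \((f\circ\psi)\circ\widetilde h=f\circ h\circ\psi\). Consequently \(f\circ h=f\) modulo \(\mathfrak m^{r+1}\) if and only if \((f\circ\psi)\circ\widetilde h=f\circ\psi\) modulo \(\mathfrak m^{r+1}\). Here we use that right composition with \(\psi\) and \(\psi^{-1}\) preserves \(\mathfrak m^{r+1}\).

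It follows that \(\Psi_r(F_H^r)=F_{\widetilde H}^r\) for every \(r\), and likewise \(\Psi_\infty(F_H^\infty)=F_{\widetilde H}^\infty\). Applying \(\operatorname{pr}_r\) and using the commutation with truncation gives \(\Psi_r(I_H^r)=I_{\widetilde H}^r\). Thus \(\Psi_r\) descends to isomorphisms \(P_H^r\cong P_{\widetilde H}^r\). The finite-horizon images correspond as well, since they are images of the transition maps.

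For the tangency order, the main point is to show \(q(\psi^{-1}h\psi)=q(h)\) for a nonidentity \(h\in\widehat{\operatorname{Diff}}_1\). Write \(h=T+aT^{q+1}+O(T^{q+2})\) with \(a\neq0\), and let \(\psi(T)=bT+\cdots\) with \(b\neq0\). The difference \(h\circ\psi-\psi\) has leading term \(a b^{q+1}T^{q+1}\), as in the argument of Lemma~\ref{lem:leading-substitution-term}, applied coefficientwise via \(h(\psi)-\psi=a\psi^{q+1}+O(\psi^{q+2})\). Then
\[
  \widetilde h-\operatorname{id}
  =
  \psi^{-1}\circ(h\circ\psi)-\psi^{-1}\circ\psi .
\]
Since \(\psi^{-1}\) has linear part \(b^{-1}T\), the difference of \(\psi^{-1}\) evaluated at two series that agree up to order \(q\) has leading term \(b^{-1}\cdot ab^{q+1}T^{q+1}=ab^{q}T^{q+1}\). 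This is nonzero, so \(q(\widetilde h)=q(h)\). Conjugation is a bijection \(H\to\widetilde H\) preserving the identity, so taking the minimum gives \(q(\widetilde H)=q(H)\).

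Alternatively, this follows without computation from the first part: by Theorem~\ref{thm:complete-fixed-jet-classification}, \(\dim P_H^r=\min\{q(H),r\}\) is an isomorphism invariant, and it determines \(q(H)\). That requires \(\widetilde H\subseteq\widehat{\operatorname{Diff}}_1\), which holds because conjugation preserves the linear coefficient \(1\). The expected obstacle is only this leading-order bookkeeping, and the dimension-count shortcut avoids it entirely.
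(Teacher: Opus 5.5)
Your proof is correct and follows the paper's route: the intertwining identity \((f\circ\psi)\circ\widetilde h=f\circ h\circ\psi\), compatibility of substitution with truncation, and preservation of \(\operatorname{ord}(h-\operatorname{id})\) under conjugation. Your explicit leading-coefficient computation \(a\mapsto ab^{q}\), and the alternative via \(\dim P_H^r=\min\{q(H),r\}\), supply details the paper only asserts. You should delete the abandoned first display involving \((f\circ\psi^{-1})\circ\widetilde h\), which is not a valid identity as written.
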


\begin{proof}
Let
\(\widetilde h=\psi^{-1}h\psi\).
If \(f\) is fixed by \(h\), then
\((f\circ\psi)\circ\widetilde h = f\circ h\circ\psi = f\circ\psi\).
Substitution by \(\psi\) intertwines the fixed spaces, at both
finite and formal order. Formal conjugation preserves
\(\operatorname{ord} \left( h-\operatorname{id} \right)\),
which gives the tangency-order statement.
\end{proof}

An admissible coordinate change is a smooth or formal
reparametrisation with nonzero derivative that preserves the marked
endpoint incidence data. A reversal of a centred manifold coordinate
may interchange its two side labels.

\begin{theorem}[Geometric auxiliary-choice invariance]
\label{thm:geometric-holonomy-invariance}
For a fixed finite smooth graph resolution, the componentwise
fixed-jet inverse systems, prolongation quotients, orientation
characters, and tangency orders are unchanged, up to canonical
isomorphism and permutation of components, under:

\begin{enumerate}[label=\textup{(\roman*)}]
  \item change of root;

  \item change of spanning tree;

  \item reversal of a chosen transport-edge orientation;

  \item admissible changes of centred manifold coordinates and formal
        endpoint-jet coordinates.
\end{enumerate}
\end{theorem}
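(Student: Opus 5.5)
Every auxiliary change acts on a transport component by a single operation: it conjugates the holonomy group by a formal germ, or relabels a path by a homotopic one. Once each change is written this way, Proposition~\ref{prop:formal-conjugacy-invariance} transfers the fixed-jet inverse systems, their formal images $I_H^r$ and the quotients $P_H^r$. The orientation character $\operatorname{sgn}\rho_P'(0)$ and the tangency order $q(H)$ are preserved by formal conjugation, because conjugation fixes the linear coefficient of a tangent germ up to the factor $\psi'(0)\psi'(0)^{-1}=1$ and preserves $\operatorname{ord}(h-\operatorname{id})$. I treat the four items in turn.

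\emph{Items (i) and (iii).} For (i), if $\epsilon_C'$ is a second root and $P$ a path from $\epsilon_C$ to $\epsilon_C'$, then closed paths at $\epsilon_C'$ are exactly $P^{-1}LP$ with $L$ closed at $\epsilon_C$. Hence $H_C'=\rho_P^{-1}H_C\rho_P$, and Proposition~\ref{prop:formal-conjugacy-invariance} applies with $\psi=\rho_P$. The orientation character changes by $\omega(P)^{-1}\omega(L)\omega(P)=\omega(L)$, which is the canonical identification of $\pi_1$ under change of basepoint. For (iii), reversing an orientation replaces the label $\rho_a$ by $\rho_{a^{-1}}=\rho_a^{-1}$, which is built into Definition~\ref{def:transport-graph}. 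The set of germs $\rho_P$ over closed paths therefore does not change, and $H_C$ is literally unchanged.

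\emph{Item (ii).} The holonomy group $H_C$ is defined without reference to any spanning tree, so the tree enters only through the identification $\eta_C\mapsto(\mathcal T_{\rho_{P_\epsilon}}\eta_C)_\epsilon$ of Proposition~\ref{prop:graph-holonomy-reduction}. Both identifications, for the old and the new tree, are isomorphisms onto the same subspace of compatible endpoint families, namely the solution space of the transport relations. Their composite is therefore a canonical automorphism of $F_{H_C}^r$; it is in fact the identity, since both send a family to its root value $\xi_{\epsilon_C}$. This automorphism commutes with truncation and with the passage to formal jets, so it preserves the inverse system, $I^r$ and $P^r$.

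\emph{Item (iv).} This is the main obstacle. I would split it into two cases.

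In the first case, the centred chart $\phi_p$ is changed by $\phi_p\mapsto\theta\circ\phi_p$ with $\theta(0)=0$ and $\theta'(0)>0$. Each $\kappa_{p,\sigma}$ becomes $\kappa_{p,\sigma}\circ\theta^{-1}$ restricted to the appropriate side. The signed relative germ $\rho_{p,\sigma\to\sigma'}$ is unchanged, because Proposition~\ref{prop:correct-relative-endpoint-transport} characterises it intrinsically: it is the germ matching $u_{\epsilon_{p,\sigma}}$ with $u_{\epsilon_{p,\sigma'}}$ for every $u$ smooth near $p$, and that family of functions does not depend on the chart. The one point to check is that $\theta$ is a formal germ rather than an odd map. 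For this I will use the one-sided matching in the proof of that proposition, which needs only the coordinate identity $\sigma\kappa_{p,\sigma}^{-1}(t)=\sigma'\kappa_{p,\sigma'}^{-1}(t')$. A reversal $\theta'(0)<0$ interchanges the labels $\sigma\leftrightarrow-\sigma$ and so reverses the edge orientation, which reduces to (iii).

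In the second case, the endpoint coordinates $t_\epsilon$ are changed by $t_\epsilon\mapsto\psi_\epsilon(t_\epsilon)$ with $\psi_\epsilon'(0)>0$. Each label is then conjugated edgewise, $\rho_a\mapsto\psi_{\epsilon}^{-1}\rho_a\psi_{\epsilon'}$ up to the substitution convention. Along a closed path the intermediate factors telescope, so $H_C\mapsto\psi_{\epsilon_C}^{-1}H_C\psi_{\epsilon_C}$, and Proposition~\ref{prop:formal-conjugacy-invariance} finishes the argument, using $\psi_{\epsilon_C}'(0)>0$ for the orientation character. Permutations of components come only from relabelling incidence data. The delicate point is the bookkeeping of composition order against the convention $\rho_{P\cdot Q}=\rho_P\circ\rho_Q$, which I would verify on a single edge first.
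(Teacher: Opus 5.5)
Your proposal is correct and follows essentially the paper's own argument: change of root and change of endpoint coordinates conjugate the based holonomy group, with the intermediate reparametrisations telescoping along closed paths. The spanning tree never enters $H_C$, and edge reversal only replaces a label by its inverse, after which Proposition~\ref{prop:formal-conjugacy-invariance} transfers all the invariants. Your separate treatment of centred-chart changes, via the intrinsic characterisation of $\rho_{p,\sigma\to\sigma'}$, is a harmless refinement of the same cancellation. The assumption $\psi_{\epsilon_C}'(0)>0$ is not needed, since $(\psi h\psi^{-1})'(0)=h'(0)$ for any formal germ $\psi$.
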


\begin{proof}
Changing the root conjugates the based holonomy group by the transport
germ of a connecting path. Changing the spanning tree alters only the
chosen root parametrisation and generating loops. Reversing an edge
replaces its label by its inverse without changing the represented
transport groupoid.

Under a coordinate change, the intermediate reparametrisations cancel
along every closed path. The return holonomy is therefore conjugated
by the coordinate change at the root. Conjugacy preserves the
orientation sign, the linear multiplier, the fixed-jet inverse system,
and the tangency order.
\end{proof}

The same statement can be formulated at the presentation level using
finite labelled groupoids
\citep{MoerdijkMrcun2003}.

\begin{proposition}[Labelled-groupoid presentation invariance]
\label{prop:labelled-groupoid-invariance}
Suppose that two finite labelled transport-groupoid presentations are
isomorphic, including their objects, formal transport morphisms, and
composition. Then corresponding connected components have conjugate
based holonomy groups and isomorphic fixed-jet inverse systems.

This applies, in particular, to subdivision of a presentation by
composable morphisms, insertion or deletion of identity morphisms, and
insertion or deletion of redundant generators, provided that the
represented labelled groupoid is unchanged.
\end{proposition}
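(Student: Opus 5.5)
The plan is to reduce everything to the based-holonomy construction of Subsection~\ref{subsec:holonomy-reduction}. We show that an isomorphism of labelled transport groupoids induces a conjugacy of based holonomy groups. Proposition~\ref{prop:formal-conjugacy-invariance} then gives the fixed-jet statements.

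First, we recast $H_C$ intrinsically. A connected component $C$ of a presentation determines a connected subgroupoid $\mathcal G(C)$ of the finite labelled transport groupoid. Its objects are the edge ends in $C$, and its morphisms are the formal germs $\rho_P$ of paths $P$. The based holonomy group $H_C$ equals the vertex group $\mathcal G(C)(\epsilon_C,\epsilon_C)$, viewed inside $\widehat{\operatorname{Diff}}(\mathbb R,0)$ via the labelling. This holds because every morphism is by definition a composite of generating labels and their inverses. Composition of paths corresponds to composition of germs via $\rho_{P\cdot Q}=\rho_P\circ\rho_Q$.

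Second, let $\Phi$ be an isomorphism of labelled groupoids. It preserves objects, formal transport morphisms, and composition. It therefore maps connected components bijectively, and maps each vertex group $\mathcal G(C)(\epsilon_C,\epsilon_C)$ onto $\mathcal G(C')(\Phi\epsilon_C,\Phi\epsilon_C)$. Since the labels are preserved, this map is the identity on germs. The root $\Phi\epsilon_C$ may differ from the chosen root $\epsilon_{C'}$. In that case, choose any morphism $g:\epsilon_{C'}\to\Phi\epsilon_C$ in the connected groupoid. The two vertex groups are then conjugate by the germ $\rho_g$, as already noted after the definition of $H_C$.

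Proposition~\ref{prop:formal-conjugacy-invariance} then yields an isomorphism $\mathbf F_{H_C}\cong\mathbf F_{H_{C'}}$ of fixed-jet inverse systems, compatible with formal images and prolongation quotients. If the groups are tangent to the identity, it also gives $q(H_C)=q(H_{C'})$.

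For the listed operations, we check that each leaves the represented labelled groupoid unchanged.
\begin{enumerate}[label=\textup{(\roman*)}]
\item \emph{Subdivision.} Replacing an edge $a$ by a composable chain $a_1,\dots,a_N$ with $\rho_{a_1}\circ\cdots\circ\rho_{a_N}=\rho_a$ generates the same morphisms between the original objects. Any new intermediate objects are connected to an original object by a morphism. The vertex group at an original root is therefore unchanged, and any other root gives a conjugate group.
\item \emph{Identity morphisms.} Inserting or deleting identity morphisms changes no germ.
\item \emph{Redundant generators.} Inserting or deleting a generator that is already a composite of the others changes no germ either.
\end{enumerate}
In each case the second step applies with $\Phi$ the identity on the common groupoid.

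The main obstacle is the subdivision step. It introduces intermediate objects that are not edge ends of the original graph, so the component $C$ must be read at the groupoid level rather than as a subgraph of $\mathcal G_v$. The point that has to be verified there is that the new objects do not enlarge the holonomy group at an old root. This holds because any loop through new objects factors through original morphisms. Each new object is joined to the old ones only through the subdivided chains, whose composites are the original labels.
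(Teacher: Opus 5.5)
Your proposal is correct and follows the paper's argument. An isomorphism of labelled groupoids identifies closed-path germs up to a change of base object, so the based holonomy groups are conjugate, and Proposition~\ref{prop:formal-conjugacy-invariance} then transfers the fixed-jet inverse systems. Your treatment of subdivision with new intermediate objects is more careful than the paper's, which simply invokes the proviso that the represented groupoid is unchanged; strictly, the comparison there is through the full subgroupoid on the original objects rather than an identity isomorphism, but the vertex group at an original root is unchanged, so the conclusion stands.
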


\begin{proof}
A labelled-groupoid isomorphism identifies closed-path morphisms after
a possible change of base object. The corresponding based holonomy
groups are therefore conjugate.
\end{proof}

These are presentation moves, not geometric insertions into a fixed resolution; a genuine elementary endpoint transport is orientation reversing and cannot be the identity germ.

\subsection{Realisation of prescribed return holonomy}
\label{subsec:two-cycle-realisation}

\begin{theorem}[Two-cycle realisation]
\label{thm:two-cycle-realisation}
Let
\[
  h
  \in
  \widehat{\operatorname{Diff}}_1(\mathbb R,0).
\]
There exists a finite smooth graph resolution whose transport graph at
one vertex has two vertices joined by two parallel edges and whose
based formal return holonomy is generated by \(h\).

If
\[
  h\neq\operatorname{id},
\]
the resulting active component has tangency order
\[
  q(h).
\]
The same real transport acts on complex-valued scalar jets by complex
linear extension.
\end{theorem}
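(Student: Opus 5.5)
The plan is to build an explicit variant of the line with two origins in which one of the two origins carries a twisted chart. The reflection will be \(\Gamma=\mathbb R\) with a single vertex \(v=0\). It has two external half-line edges, with edge ends \(\epsilon_L\) and \(\epsilon_R\) at \(v\). There will be two exceptional points \(p_1,p_2\) over \(v\), each joining \(\epsilon_L\) and \(\epsilon_R\) in \(\mathcal G_v\). That gives exactly the required two-vertex, two-parallel-edge transport graph.

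\textbf{Step 1: realise the formal germ.} By Borel's theorem, choose a smooth germ \(k\) at \(0\) whose Taylor series is \(h^{-1}\); the inverse is taken because of the orientation conventions computed in Step 3. Since \(k(0)=0\) and \(k'(0)=1\), we can choose a smooth orientation-preserving diffeomorphism \(\varphi:[0,\infty)\to[0,\infty)\) with \(\varphi=k\) near \(0\) and \(\varphi=\mathrm{id}\) for large arguments. To do this, interpolate the derivative, \(\varphi'=\beta k'+(1-\beta)\cdot1\) with a cutoff \(\beta\). This stays positive near \(0\). Then adjust on a compact interval so that \(\varphi\) is a bijection, which is routine for a positive derivative.

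\textbf{Step 2: build \(M\) and check the resolution axioms.} Set \(M=(\mathbb R_1\sqcup\mathbb R_2)/\!\sim\) with \((x,2)\sim(x,1)\) for \(x<0\) and \((x,2)\sim(\varphi(x),1)\) for \(x>0\). The gluing map is a diffeomorphism of the open overlaps, so \(M\) is a second-countable \(T_1\) smooth one-manifold whose two origins \(p_1=[0]_1\) and \(p_2=[0]_2\) are inseparable. Define \(\pi\) to be the identity on \(\mathbb R_1\). On \(\mathbb R_2\), define \(\pi(x)=x\) for \(x\le0\) and \(\pi(x)=\varphi(x)\) for \(x>0\). As in Proposition~\ref{prop:L2-hausdorff-factorisation}, \(\pi\) is the Hausdorff reflection. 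It is proper, it is a diffeomorphism off \(\{p_1,p_2\}\), and both edge ends are represented. The endpoint germs, written in outward coordinates, are
\begin{itemize}
  \item at \(p_1\): \(\kappa_{p_1,+}=\kappa_{p_1,-}=\mathrm{id}\);
  \item at \(p_2\): \(\kappa_{p_2,+}=\varphi\) and \(\kappa_{p_2,-}=\mathrm{id}\).
\end{itemize}
All four germs are nondegenerate, so Definition~\ref{def:finite-smooth-graph-resolution} holds.

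\textbf{Step 3: compute the holonomy.} Apply Proposition~\ref{prop:correct-relative-endpoint-transport} along the loop \(\epsilon_R\to\epsilon_L\to\epsilon_R\), going through \(p_1\) and then \(p_2\). The first edge is \(p_1\), taken \(+\to-\), with label \(\rho_{p_1,+\to-}(T)=-T\). The second edge is \(p_2\), taken \(-\to+\), with label \(\rho_{p_2,-\to+}(T)=\kappa_{p_2,-}(-\kappa_{p_2,+}^{-1}(T))=-\varphi^{-1}(T)\). With the composition convention \(\rho_P=\rho_{a_1}\circ\rho_{a_2}\), the return germ is
\[
  (-T)\circ\bigl(-\varphi^{-1}(T)\bigr)=\varphi^{-1}(T).
\]
Its Taylor series is \(h\). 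The fundamental group of the two-cycle is infinite cyclic and generated by this loop, so \(H_C=\langle h\rangle\). This group lies in \(\widehat{\operatorname{Diff}}_1\). If \(h\neq\mathrm{id}\), Lemma~\ref{lem:tangency-filtration} gives \(q(H_C)=q(h)\); this is also where \(q(h^{-1})=q(h)\) justifies the earlier choice of \(k\). The complex-valued statement is just the \(\mathbb C\)-linear extension of the substitution action.

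\textbf{Main obstacle.} The delicate step is the sign and composition bookkeeping in Step 3. Getting the direction or composition order wrong would produce \(h^{-1}\) or a conjugate in place of \(h\). Conjugates would be harmless by Proposition~\ref{prop:formal-conjugacy-invariance}, but the aim is exact generation by \(h\). A secondary point is to verify that \(\pi\) really is the Hausdorff reflection: the only pair of non-separable points is \(\{p_1,p_2\}\), and both map to \(0\).
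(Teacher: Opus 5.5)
Your construction is correct and is essentially the paper's. The paper keeps the topological line with two origins and twists only the chart at the second origin by a Borel realisation $\widetilde h$ of $h$, so its $\kappa_{2,+}=\widetilde h^{-1}$ plays the role of your $\varphi$, and its loop computation $\rho_1\circ\rho_2^{-1}=(-\operatorname{id})\circ(-\operatorname{id})\circ h=h$ is exactly yours. The differences are cosmetic: you globalise the germ to a diffeomorphism of the half-line, which the paper avoids with a local chart, and you realise $h^{-1}$ rather than $h$, which is unnecessary because $\langle h\rangle=\langle h^{-1}\rangle$. Also, your $\varphi$ ends up equal to $x+c$ rather than the identity far out, and it is already a bijection without any further adjustment, since $\varphi'>0$ with $\varphi'=1$ eventually.
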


\begin{proof}
By Borel realisation choose an orientation-preserving smooth germ \(\widetilde h\) with Taylor series \(h\), after shrinking so that \(\widetilde h\) is a local diffeomorphism \citep{Borel1895,Hormander2003AnalysisI}. Let \(M\) be the topological line with two origins, with Hausdorff reflection \(\pi([t,i])=t\) onto \(\Gamma=\mathbb R\). Near the first origin use the centred coordinate \(\phi_1([t])=t\). Near the second use
\[
 \phi_2([t])=
 \begin{cases}
 t,&t<0,\\
 \widetilde h(t),&t>0,
 \end{cases}
 \qquad \phi_2(0_2)=0,
\]
with domains chosen so that both charts map onto an interval about \(0\). Their negative overlap transition is the identity and their positive transition is \(\widetilde h\); hence, together with the ordinary charts away from the origins, they define a smooth second-countable \(T_1\) one-manifold without boundary. The regular locus maps diffeomorphically to \(\mathbb R\setminus\{0\}\), and \(\pi\) is proper because the preimage of a compact set is the union of its two compact copies. \((M,\Gamma,\pi)\) is a finite smooth graph resolution.

Let \(\epsilon_\pm\) be the two edge ends at the graph vertex. At the first origin \(\kappa_{1,+}=\kappa_{1,-}=\operatorname{id}\), so \(\rho_1=-\operatorname{id}\). At the second, \(\kappa_{2,-}=\operatorname{id}\) and \(\kappa_{2,+}=\widetilde h^{-1}\), hence \(\rho_2=h^{-1}\circ(-\operatorname{id})\). Traversing the first transport edge and returning along the second gives
\[
 \rho_1\circ\rho_2^{-1}=(-\operatorname{id})\circ(-\operatorname{id})\circ h=h.
\]
The two-parallel-edge transport component has fundamental group \(\mathbb Z\), so its based formal holonomy is \(\langle h\rangle\); if \(h\ne\operatorname{id}\), its tangency order is \(q(h)\). Complex-valued jets are obtained by complex-linear extension.
\end{proof}

\subsection{Sharpness and limitations}
\label{subsec:holonomy-sharpness}

For every
\(q\geq1\),
the germ \(h_q(T) = T+T^{q+1}\) satisfies
\(q(h_q)=q\).
Theorem~\ref{thm:two-cycle-realisation}
therefore shows that every tangency order occurring in the
classification is geometrically realisable.

Moreover, for every
\(r\geq q\),
\[
  F_{\langle h_q\rangle}^{r\langle q-1\rangle}
  =
  \operatorname{span}\{T^r\}
  \neq0,
\]
whereas
\(F_{\langle h_q\rangle}^{r\langle q\rangle}=0\).

The invariant has the following limitations.

\begin{enumerate}[label=\textup{(\roman*)}]
  \item The fixed-jet inverse system depends only on \(q(H)\). It does
        not detect the leading nonlinear coefficient or any later
        coefficient of the return germs.

  \item Equal tangency order does not imply formal conjugacy. The
        prolongation profile is not a complete invariant of formal
        holonomy.

  \item Orientation-reversing and orientation-preserving
        nonunit-linear return groups are excluded from
        Theorem~\ref{thm:componentwise-prolongation-profile}. Their
        finite and formal fixed spaces require a separate linear and
        resonant analysis.

  \item The classification is scalar and one-dimensional.
        Vector-valued, fibre-coupled, and multivariable formal
        holonomy require separate fixed-space calculations.

  \item The profile measures the difference between compatibility at
        a finite order and compatibility through every formal order.
        It does not determine differential operators preserving those
        spaces or their spectral and dynamical consequences.
\end{enumerate}

Within the stated class of finite smooth graph resolutions,
Theorem~\ref{thm:graph-formal-trace-realisation}
identifies the full formal equaliser with the formal endpoint traces of
the descended smooth core, while
Theorem~\ref{thm:exact-graph-resolution-closure}
identifies its finite projections with the exact integer-order
Sobolev closures. The prolongation profile is therefore an invariant
of the smooth scalar graph resolution and its Sobolev closure tower, rather
than merely a count attached to an abstract formal transport system.

\clearpage
\part{Quantum dynamics on finite graph resolutions}
\label{part:quantum-dynamics}

\section{Quantum operators on graph-resolved domains}
\label{sec:analytic-framework}

For the scalar graph-resolved models, first-order completion gives \(H^1_{\mathrm{cont}}(\Gamma)\), the marked presentation supplies edge weights, and the flat derivative form gives the weighted Kirchhoff Hamiltonian. The same reduced data determine vertex scattering and the boundary form of the oriented derivative.

\subsection{Pushforward measure and edge weights}
\label{subsec:graph-measure}

Theorem~\ref{thm:exact-graph-resolution-closure} and Corollary~\ref{cor:universal-first-order-graph-closure} fix the scalar \(H^1\)-domain; the Hamiltonian additionally requires a measure.

The examples below are equipped with marked analytic graph resolutions in
the sense of Definition~\ref{def:marked-analytic-graph-resolution}. The
measure is specified on the disjoint labelled presentation
\[
  q:\widetilde M=\bigsqcup_{\alpha\in A}M_\alpha\longrightarrow M,
  \qquad
  \widetilde\mu=\bigoplus_{\alpha\in A}\mu_\alpha,
\]
and the graph measure is
\(\mu_\Gamma=(\pi\circ q)_*\widetilde\mu\).
When several labelled resolving representatives contribute to the same
propagation edge, they are counted separately in this pushforward. In the
flat models used below, the resulting Radon--Nikodym density is constant on
each edge; write
\[
  d\mu_\Gamma|_{I_e}=a_e\,dx,
  \qquad
  a_e>0,
  \qquad
  \mathbf a=(a_e)_{e\in E(\Gamma)}.
\]
The weights belong to the marked presentation, not the bare quotient. Write \(\mu_{\mathbf a}:=\mu_\Gamma\)
in the constant-edge case. The scalar Hilbert space is
\[
  \mathcal H_{\mathbf a}
  :=
  L^2(\Gamma,\mu_{\mathbf a})
  =
  \bigoplus_{e\in E(\Gamma)}
  L^2(I_e,a_e\,dx),
\]
with inner product
\[
  \langle u,v\rangle_{\mathbf a}
  =
  \sum_{e}
  a_e
  \int_{I_e}
  \overline{u_e(x)}\,v_e(x)\,dx.
\]

Because \(\Gamma\) has finitely many edges and every weight is strictly
positive, the weighted and unweighted direct-sum \(H^1\)-norms are
equivalent. Hence
\(\overline{\mathscr A_M}^{\,H^1_{\mathbf a}} = H^1_{\mathrm{cont}}(\Gamma)\).

Only the projective weight class matters for the flux laws and scattering amplitudes below.

\subsection{Resolution-induced kinetic form and weighted Kirchhoff Hamiltonian}
\label{subsec:weighted-form}

We use units
\(\hbar=2m_{\mathrm{particle}}=1\).

On \(\mathcal V_{\mathbf a} := H^1_{\mathrm{cont}}(\Gamma)\) define
\[
  \mathfrak q_{\mathbf a}[u,v]
  :=
  \sum_{e}
  a_e
  \int_{I_e}
  \overline{u_e'(x)}\,v_e'(x)\,dx.
\]

\begin{proposition}
\label{prop:weighted-form-closed}
The form
\[
  \mathfrak q_{\mathbf a}
\]
is densely defined, nonnegative, symmetric, and closed in
\(\mathcal H_{\mathbf a}\).
\end{proposition}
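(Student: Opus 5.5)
The plan is to check the four properties directly and to obtain closedness by comparing the form norm with the standard direct-sum \(H^1\)-norm. Everything reduces to finitely many edges with strictly positive constant weights. Throughout, put \(a_{\min}:=\min_e a_e>0\) and \(a_{\max}:=\max_e a_e<\infty\).

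\emph{Symmetry and nonnegativity.} These are immediate from the definition. For every \(u\in\mathcal V_{\mathbf a}\) we have
\(\mathfrak q_{\mathbf a}[v,u]=\overline{\mathfrak q_{\mathbf a}[u,v]}\) and
\(\mathfrak q_{\mathbf a}[u,u]=\sum_e a_e\|u_e'\|_{L^2(I_e)}^2\geq0\).

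\emph{Density.} By Corollary~\ref{cor:universal-first-order-graph-closure}, the space \(C_c^\infty(\Gamma\setminus V(\Gamma))\) is contained in \(\mathscr A_M\subseteq H^1_{\mathrm{cont}}(\Gamma)\). On each edge, \(C_c^\infty(I_e)\) is dense in \(L^2(I_e,a_e\,dx)\), because the weight is a positive constant. There are finitely many edges, so the direct sum of these spaces is dense in \(\mathcal H_{\mathbf a}\). Hence \(\mathcal V_{\mathbf a}\) is dense.

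\emph{Closedness.} The form norm is \(\|u\|_{\mathfrak q}^2:=\mathfrak q_{\mathbf a}[u]+\|u\|_{\mathcal H_{\mathbf a}}^2=\sum_e a_e\bigl(\|u_e'\|^2+\|u_e\|^2\bigr)\). It satisfies
\[
a_{\min}\|u\|_{H^1_\oplus}^2\leq\|u\|_{\mathfrak q}^2\leq a_{\max}\|u\|_{H^1_\oplus}^2 .
\]
Closedness therefore amounts to showing that \(H^1_{\mathrm{cont}}(\Gamma)\) is complete in the \(H^1_\oplus\)-norm, that is, closed in \(W^{1,2}_\oplus(\Gamma)\).

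Closedness in \(W^{1,2}_\oplus(\Gamma)\) can be obtained in two ways. First, by Proposition~\ref{prop:one-dimensional-trace-package} the endpoint value map \(\operatorname{Tr}_0\) is bounded on \(W^{1,2}_\oplus(\Gamma)\). The vertex-continuity conditions are finitely many equations \(u_\epsilon(v)=u_{\epsilon'}(v)\), each of which is the kernel of a bounded functional, so their intersection is closed. Second, and more directly, \(H^1_{\mathrm{cont}}\) is by Corollary~\ref{cor:universal-first-order-graph-closure} itself a closure of \(\mathscr A_M\), and a closure is closed by definition.

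To finish, take a sequence \(u_n\) that is \(\mathfrak q\)-Cauchy. By the norm equivalence it is \(H^1_\oplus\)-Cauchy, so it converges in \(H^1_\oplus\) to some \(u\in H^1_{\mathrm{cont}}\). Applying the equivalence again gives \(\|u_n-u\|_{\mathfrak q}\to0\). Hence \(\mathfrak q_{\mathbf a}\) is closed.

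There is no real obstacle. The only point requiring care is the norm equivalence, which uses strict positivity and finiteness of the weights together with finiteness of the edge set. On half-line edges the endpoint trace bound still holds, as recorded in Proposition~\ref{prop:one-dimensional-trace-package}.
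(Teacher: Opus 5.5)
Your proof is correct and follows the paper's argument: the form norm is the weighted \(H^1\)-norm, which is equivalent to the direct-sum \(H^1\)-norm because there are finitely many weights and each is positive, and the vertex-continuity conditions are kernels of bounded endpoint evaluations, so \(H^1_{\mathrm{cont}}(\Gamma)\) is closed. Your alternative route through the closure of \(\mathscr A_M\) is also valid here, but it needs \(\Gamma\) to come from a graph resolution, whereas the trace argument (the one the paper uses) works for any finite metric graph.
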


\begin{proof}
Its form norm is the weighted \(H^1\)-norm on
\(H^1_{\mathrm{cont}}(\Gamma)\). Vertex evaluation is continuous on each
one-dimensional \(H^1\)-edge space, so the finite system of continuity
conditions defining the form domain is closed.
\end{proof}

Let \(H_{\mathbf a}\) be the self-adjoint operator associated with this form.

\begin{theorem}[Weighted Kirchhoff realisation]
\label{thm:weighted-kirchhoff}
The operator \(H_{\mathbf a}\) acts edgewise as
\[
  (H_{\mathbf a}u)_e=-u_e''.
\]
Its domain consists of
\[
  u\in H^2_\oplus(\Gamma)
\]
that are continuous at every vertex and satisfy
\[
  \sum_{\epsilon\in\mathcal E(v)}
  a_{e(\epsilon)}
  \partial_{\nu_\epsilon}u(v)
  =
  0
\]
for every
\[
  v\in V(\Gamma).
\]
\end{theorem}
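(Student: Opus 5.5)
The plan is to use the first representation theorem for closed forms (Kato): $u\in D(H_{\mathbf a})$ with $H_{\mathbf a}u=f$ exactly when $u\in\mathcal V_{\mathbf a}=H^1_{\mathrm{cont}}(\Gamma)$ and
\[
  \mathfrak q_{\mathbf a}[v,u]=\langle v,f\rangle_{\mathbf a}
  \qquad\text{for all } v\in\mathcal V_{\mathbf a}.
\]
Proposition~\ref{prop:weighted-form-closed} supplies closedness and nonnegativity, so this applies. I will prove the stated domain in two inclusions.

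\emph{Domain contained in the stated set.} First test against $v$ supported in the interior of a single edge $e$, that is $v\in C_c^\infty(I_e)$ extended by zero. Such a $v$ lies in $\mathcal V_{\mathbf a}$ because it vanishes at all vertices. The weight $a_e$ cancels from both sides, leaving $\int\overline{v'}u_e'=\int\overline v f_e$. Hence $-u_e''=f_e$ in the distributional sense, and $u_e\in H^2(I_e)$ since $u_e,f_e\in L^2$. On half-lines, $H^2$ membership uses $u_e\in H^1$ and $u_e''\in L^2$.

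Next take a general $v\in\mathcal V_{\mathbf a}$ and integrate by parts on each edge. The boundary terms at infinity vanish because $H^1$ functions on a half-line have $u(t)\to0$ and $u'(t)\to0$ for $H^2$ functions. What remains is
\[
  \sum_{v\in V(\Gamma)}\overline{v(v)}
  \sum_{\epsilon\in\mathcal E(v)}a_{e(\epsilon)}\partial_{\nu_\epsilon}u(v)=0,
\]
with the sign convention of the outward (or inward) normal derivative fixed as in the statement. Continuity of $v$ makes the vertex value $v(v)$ a common factor at each vertex. By choosing $v$ equal to one near a single vertex and zero near all others (for example using cutoffs along the incident edges), the vertex values can be prescribed independently. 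This yields the weighted Kirchhoff condition at every vertex. Continuity of $u$ itself holds because $u\in\mathcal V_{\mathbf a}$.

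\emph{Stated set contained in the domain.} Take $u\in H^2_\oplus(\Gamma)$ that is continuous and satisfies weighted Kirchhoff at every vertex. The same integration by parts gives $\mathfrak q_{\mathbf a}[v,u]=\langle v,-u''\rangle_{\mathbf a}$ for every $v\in\mathcal V_{\mathbf a}$. Here the weight $a_e$ in the inner product exactly matches the weight in the form, which is the point of the weighted setting. The representation theorem then places $u$ in the domain with $H_{\mathbf a}u=-u''$.

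The only delicate point I expect is bookkeeping: the sign convention for $\partial_{\nu_\epsilon}$ relative to the outward coordinate $t_\epsilon$, and treating loops, which contribute two edge ends at the same vertex, consistently in the boundary sum. Everything else is standard quantum-graph form theory.
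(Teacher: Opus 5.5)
Your proposal is correct and takes essentially the same route as the paper. You apply the representation theorem and test against edge-interior functions to obtain $(H_{\mathbf a}u)_e=-u_e''$ and $H^2$ regularity; you then extract the weighted Kirchhoff condition from the edgewise integration-by-parts identity by prescribing vertex values independently, and reuse that identity for the converse, with your remarks on half-line boundary terms and loops only making explicit details the paper leaves implicit.
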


\begin{proof}
Testing the form identity against functions supported in the interior of an
edge gives
\((H_{\mathbf a}u)_e=-u_e''\).
For
\(\phi\in H^1_{\mathrm{cont}}(\Gamma)\),
edgewise integration by parts gives
\[
  \mathfrak q_{\mathbf a}[\phi,u]
  =
  \langle \phi,-u''\rangle_{\mathbf a}
  -
  \sum_{v\in V(\Gamma)}
  \overline{\phi(v)}
  \sum_{\epsilon\in\mathcal E(v)}
  a_{e(\epsilon)}
  \partial_{\nu_\epsilon}u(v).
\]
The vertex values \(\phi(v)\) may be varied independently, so the boundary
term vanishes for every \(\phi\) precisely under the stated weighted
Kirchhoff conditions. The converse follows from the same identity.
\end{proof}

The weights therefore modify the vertex flux law while leaving the local
free differential expression unchanged. When \(a_e=1\) on every edge, one recovers the standard Kirchhoff quantum-graph Laplacian
\citep{Kuchment04,BerkolaikoKuchment2013}.

\subsection{Vertex scattering and reflectionless balance}
\label{subsec:vertex-scattering}

Consider a vertex \(v\) of degree \(d\) with incident edge weights
\(a_1,\ldots,a_d>0\).
Represent the incident edges locally by half-lines \(x_j\geq0\) with \(x_j=0\) at \(v\), and set
\(A_v := \sum_{j=1}^{d}a_j\).

The weighted Kirchhoff conditions determine an energy-independent local
scattering matrix in the flux-normalized channel basis.

\begin{theorem}[Flux-normalized weighted Kirchhoff scattering matrix]
\label{thm:weighted-vertex-S}
Let
\[
  \mathbf v_{\mathbf a}
  :=
  \begin{pmatrix}
    \sqrt{a_1}\\
    \vdots\\
    \sqrt{a_d}
  \end{pmatrix}.
\]
Then
\[
  S_v
  =
  \frac{2}{A_v}
  \mathbf v_{\mathbf a}
  \mathbf v_{\mathbf a}^{*}
  -
  I_d.
\]
Equivalently,
\[
  (S_v)_{ji}
  =
  \frac{2\sqrt{a_ja_i}}{A_v}
  -
  \delta_{ji}.
\]
Moreover,
\[
  S_v^*=S_v,
  \qquad
  S_v^2=I_d.
\]
\end{theorem}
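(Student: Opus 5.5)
The plan is to solve the stationary problem at fixed energy \(k^2>0\) on the star of \(d\) half-lines and read off the scattering matrix in flux-normalized amplitudes. On edge \(j\) write
\[
  u_j(x_j)=\frac{1}{\sqrt{a_j}}\bigl(\alpha_j e^{-ikx_j}+\beta_j e^{ikx_j}\bigr),
\]
with \(\alpha_j\) the incoming and \(\beta_j\) the outgoing amplitude. The factor \(a_j^{-1/2}\) is chosen so that the weighted probability current \(a_j\,\mathrm{Im}(\overline{u_j}u_j')\) on edge \(j\) equals \(k(|\beta_j|^2-|\alpha_j|^2)\). This normalization is what makes the scattering matrix unitary, and it is how I read ``flux-normalized channel basis'' in the statement. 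The matrix \(S_v\) is defined by \(\beta=S_v\alpha\).

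Next, impose the vertex conditions of Theorem~\ref{thm:weighted-kirchhoff} locally. Continuity gives
\[
  (\alpha_j+\beta_j)/\sqrt{a_j}=c
\]
for a common vertex value \(c\), so \(\alpha+\beta=c\,\mathbf v_{\mathbf a}\). With outward derivative \(\partial_{\nu}u_j(v)=-u_j'(0)\), the weighted Kirchhoff law \(\sum_j a_j u_j'(0)=0\) becomes
\[
  \sum_j \sqrt{a_j}\,ik(\beta_j-\alpha_j)=0,
  \qquad\text{that is,}\qquad
  \mathbf v_{\mathbf a}^*(\beta-\alpha)=0 .
\]

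To solve, substitute \(\beta=c\,\mathbf v_{\mathbf a}-\alpha\) into the flux law. This gives \(c\,A_v=2\,\mathbf v_{\mathbf a}^*\alpha\), using \(\mathbf v_{\mathbf a}^*\mathbf v_{\mathbf a}=A_v\). Hence
\[
  \beta=\Bigl(\frac{2}{A_v}\mathbf v_{\mathbf a}\mathbf v_{\mathbf a}^*-I_d\Bigr)\alpha ,
\]
which is the stated formula. The expression contains no \(k\), so the matrix is energy-independent, and the entrywise form follows immediately. The solution is unique for each incoming \(\alpha\), since the two linear conditions determine \(c\) and then \(\beta\).

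For the algebraic identities, set \(P=\mathbf v_{\mathbf a}\mathbf v_{\mathbf a}^*/A_v\). Because \(\mathbf v_{\mathbf a}\) is real, \(P\) is the orthogonal projection onto \(\mathrm{span}\,\mathbf v_{\mathbf a}\). Then \(S_v=2P-I\) is a reflection: it is self-adjoint, and \((2P-I)^2=4P-4P+I=I\). This also gives unitarity, consistent with flux conservation.

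The only delicate step is fixing conventions: the sign of the outward normal, the incoming/outgoing labels, and the \(\sqrt{a_j}\) normalization. These must be chosen so that the result is exactly \(2P-I\) rather than its negative or a non-symmetric similarity transform. I will verify them by checking the flux identity explicitly on each edge.
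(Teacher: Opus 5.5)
Your proposal is correct and follows essentially the same route as the paper. Both impose continuity and the weighted Kirchhoff law at the vertex, solve the resulting linear system, and recognise \(S_v=2P-I_d\) with \(P=\mathbf v_{\mathbf a}\mathbf v_{\mathbf a}^*/A_v\) an orthogonal projection. The only difference is one of order: the paper first computes the unnormalized single-channel coefficients \(r_i=2a_i/A_v-1\), \(t_{ji}=2a_i/A_v\) and then rescales transmission by \(\sqrt{a_j/a_i}\), whereas you build the \(a_j^{-1/2}\) flux normalization into the ansatz and solve for a general incoming vector in one step, which also makes the current identity behind unitarity explicit.
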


\begin{proof}
For incidence on channel \(i\), continuity gives one common vertex
amplitude. The weighted Kirchhoff condition yields the unnormalized
coefficients
\(r_i=\frac{2a_i}{A_v}-1, \qquad t_{ji}=\frac{2a_i}{A_v} \quad (j\neq i)\).
Flux normalization multiplies transmission from \(i\) to \(j\) by
\(\sqrt{\frac{a_j}{a_i}}\),
giving the stated matrix.

Since
\(\mathbf v_{\mathbf a}^{*}\mathbf v_{\mathbf a}=A_v\),
the matrix \(\frac{1}{A_v} \mathbf v_{\mathbf a}\mathbf v_{\mathbf a}^{*}\) is the orthogonal projection onto
\(\operatorname{span}\{\mathbf v_{\mathbf a}\}\). Hence \(S_v=2P-I_d\) is a self-adjoint unitary.
\end{proof}

\begin{corollary}[Reflectionless channel]
\label{cor:reflectionless-channel}
Channel \(i\) is reflectionless if and only if
\[
  a_i
  =
  \sum_{j\neq i}a_j.
\]
Under this condition,
\[
  |(S_v)_{ji}|^2
  =
  \frac{a_j}{a_i},
  \qquad
  j\neq i.
\]
\end{corollary}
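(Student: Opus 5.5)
Reflectionlessness of channel \(i\) means \((S_v)_{ii}=0\), so I will simply read off the diagonal entry from Theorem~\ref{thm:weighted-vertex-S}. It gives
\[
  (S_v)_{ii}=\frac{2a_i}{A_v}-1 .
\]
This vanishes if and only if \(2a_i=A_v=a_i+\sum_{j\neq i}a_j\), that is, if and only if \(a_i=\sum_{j\neq i}a_j\). Both directions of the equivalence are contained in this single algebraic identity. Positivity of the weights guarantees \(A_v>0\), so no division issue arises.

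For the transmission probabilities, I again use the explicit entries. For \(j\neq i\) the Kronecker term drops out, so
\[
  |(S_v)_{ji}|^2=\frac{4a_ja_i}{A_v^2}.
\]
Under the balance condition \(A_v=2a_i\), this becomes \(4a_ja_i/(4a_i^2)=a_j/a_i\), which is the stated formula.

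As a consistency check, I will verify that the \(i\)-th column of \(S_v\) has unit norm, as unitarity of \(S_v\) (Theorem~\ref{thm:weighted-vertex-S}) requires. Summing gives \(\sum_{j\neq i}a_j/a_i=1\) by the balance condition, so all flux is transmitted and none is reflected. There is no real obstacle here. The only point needing care is that these are the \emph{flux-normalized} amplitudes, since the unnormalized coefficients \(t_{ji}=2a_i/A_v\) would give \(|t_{ji}|^2=1\) instead. I will therefore state explicitly that the formula refers to the normalized matrix \(S_v\).
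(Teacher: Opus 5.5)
Your proof is correct and takes the same approach as the paper, which gives no separate proof: it reads the corollary directly off the entries \((S_v)_{ji}=2\sqrt{a_ja_i}/A_v-\delta_{ji}\) of Theorem~\ref{thm:weighted-vertex-S}, exactly as you do. Your unitarity check and your remark separating the flux-normalized amplitudes from the unnormalized \(t_{ji}=2a_i/A_v\) agree with the paper's conventions.
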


In the branching models this balance follows from the pushed-forward presentation measure.

\subsection{The oriented first-order operator}
\label{subsec:scalar-first-order-framework}

After orienting the edges, the same graph and weights define \(D_{\mathbf a}=-i\,d/dx\). Momentum and Dirac-type graph operators have an independent extension theory \citep{FullingKuchmentWilson07,Post09,ExnerMomentum13,Richtsfeld24}; here the resolution selects the orientation, weights, and closed trace domain.

On every oriented edge set
\(D_{\mathbf a}u := -i\,u'\),
with domain
\(\mathcal D(D_{\mathbf a}) = H^1_{\mathrm{cont}}(\Gamma)\).

\begin{proposition}[Density and closedness of the oriented derivative]
\label{prop:first-order-closed-dense}
The operator
\[
  D_{\mathbf a}:
  H^1_{\mathrm{cont}}(\Gamma)
  \longrightarrow
  \mathcal H_{\mathbf a},
  \qquad
  D_{\mathbf a}u=-iu',
\]
is densely defined and closed. Its graph norm is equivalent to the weighted
direct-sum \(H^1\)-norm on \(H^1_{\mathrm{cont}}(\Gamma)\).
\end{proposition}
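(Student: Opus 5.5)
The plan is to compare the graph norm of \(D_{\mathbf a}\) with the weighted direct-sum \(H^1\)-norm, and then deduce closedness and density from earlier results. First, norm equivalence. For \(u\in H^1_{\mathrm{cont}}(\Gamma)\) we have
\[
  \|u\|_{\mathbf a}^2+\|D_{\mathbf a}u\|_{\mathbf a}^2
  =\sum_e a_e\int_{I_e}\bigl(|u_e|^2+|u_e'|^2\bigr)\,dx .
\]
This is exactly the squared weighted direct-sum \(H^1\)-norm. Because there are finitely many edges and every \(a_e>0\), it is equivalent to the unweighted \(W^{1,2}_\oplus\)-norm, with constants \(\min_e a_e\) and \(\max_e a_e\). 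So the graph-norm equivalence is immediate, and in fact it is an identity for the weighted norm.

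Second, closedness. Closedness of \(D_{\mathbf a}\) is equivalent to completeness of \(\mathcal D(D_{\mathbf a})\) under the graph norm. By the first step this is completeness of \(H^1_{\mathrm{cont}}(\Gamma)\) in the \(H^1_\oplus\)-norm. The space \(H^1_\oplus(\Gamma)\) is a Hilbert space. \(H^1_{\mathrm{cont}}(\Gamma)\) is the common kernel of finitely many bounded functionals \(u\mapsto u_\epsilon(v)-u_{\epsilon'}(v)\); these are bounded by the endpoint trace estimate of Proposition~\ref{prop:one-dimensional-trace-package} with \(m=1\). Hence it is a closed subspace and therefore complete. Equivalently, I can invoke Corollary~\ref{cor:universal-first-order-graph-closure}, which presents \(H^1_{\mathrm{cont}}(\Gamma)\) as a closure in \(W^{1,2}_\oplus\). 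Concretely, suppose \(u_n\to u\) in \(\mathcal H_{\mathbf a}\) and \(-iu_n'\to w\). Then \((u_n)\) is Cauchy in the graph norm, hence converges in \(H^1_{\mathrm{cont}}\) to some \(\tilde u\), which must equal \(u\) and satisfies \(-iu'=w\).

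Third, density. Functions in \(C_c^\infty(\Gamma\setminus V(\Gamma))\) lie in \(H^1_{\mathrm{cont}}(\Gamma)\), since they vanish near every vertex. They are dense in each \(L^2(I_e,a_e\,dx)\), and a finite direct sum of dense subspaces is dense in \(\mathcal H_{\mathbf a}\).

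None of these steps is a serious obstacle. The only point needing care is that the vertex-continuity conditions are bounded in the \(H^1\) topology, so that the domain is closed in the graph norm. In one dimension this follows from the Sobolev embedding \(H^1(I_e)\hookrightarrow C(\overline{I_e})\), including on half-lines.
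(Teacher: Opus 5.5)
Your proposal is correct and follows the paper's proof essentially step for step. Density comes from \(C_c^\infty(\Gamma\setminus V(\Gamma))\subset H^1_{\mathrm{cont}}(\Gamma)\), the graph norm coincides with the weighted \(H^1_\oplus\)-norm, and closedness holds because the vertex-continuity conditions are kernels of bounded endpoint traces, so they cut out a closed, hence complete, subspace of \(H^1_\oplus(\Gamma)\); your explicit sequence argument and the comparison constants \(\min_e a_e,\max_e a_e\) are harmless elaborations of that argument.
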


\begin{proof}
The subspace \(C_c^\infty(\Gamma\setminus V(\Gamma))\subset H^1_{\mathrm{cont}}(\Gamma)\) is dense in \(\mathcal H_{\mathbf a}\), so \(D_{\mathbf a}\) is densely defined. Moreover,
\[
 \|u\|_{\mathbf a}^2+\|D_{\mathbf a}u\|_{\mathbf a}^2
 =\sum_e a_e\bigl(\|u_e\|_{L^2}^2+\|u_e'\|_{L^2}^2\bigr),
\]
which is the weighted direct-sum \(H^1\)-norm. The finitely many vertex-continuity conditions are kernels of bounded endpoint traces, hence define a closed subspace of \(H_\oplus^1(\Gamma)\). Completeness in this graph norm proves closedness.
\end{proof}

For a vertex \(v\), let \(E_{\mathrm{in}}(v)\) be the incident edges whose orientation terminates at \(v\), and \(E_{\mathrm{out}}(v)\) those whose orientation begins at \(v\).

\begin{proposition}[First-order boundary form]
\label{prop:first-order-boundary-form-general}
For
\[
  u,v\in H^1_{\mathrm{cont}}(\Gamma),
\]
\[
  \langle D_{\mathbf a}u,v\rangle_{\mathbf a}
  -
  \langle u,D_{\mathbf a}v\rangle_{\mathbf a}
  =
  i
  \sum_{w\in V(\Gamma)}
  \left(
    \sum_{e\in E_{\mathrm{in}}(w)}a_e
    -
    \sum_{e\in E_{\mathrm{out}}(w)}a_e
  \right)
  \overline{u(w)}\,v(w).
\]
\(D_{\mathbf a}\) is symmetric if and only if
\[
  \sum_{e\in E_{\mathrm{in}}(v)}a_e
  =
  \sum_{e\in E_{\mathrm{out}}(v)}a_e
\]
at every vertex.
\end{proposition}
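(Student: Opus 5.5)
The identity follows from edgewise integration by parts in the weighted inner product, together with continuity at the vertices. Fix \(u,v\in H^1_{\mathrm{cont}}(\Gamma)\). On an oriented edge \(e\) identified with \(I_e=(0,\ell_e)\), the orientation runs from the initial vertex \(o(e)\) at \(x=0\) to the terminal vertex \(t(e)\) at \(x=\ell_e\). For \(H^1\)-functions on an interval, with the convention that the limit vanishes at \(\ell_e=\infty\), we have
\[
  \int_{I_e}\overline{(-iu_e')}\,v_e\,dx-\int_{I_e}\overline{u_e}\,(-iv_e')\,dx
  =i\int_{I_e}\bigl(\overline{u_e'}v_e+\overline{u_e}v_e'\bigr)\,dx
  =i\Bigl[\overline{u_e}v_e\Bigr]_0^{\ell_e}.
\]
On a half-line, \(\overline{u_e}v_e\in W^{1,1}(0,\infty)\), so it tends to zero at infinity. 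This is the only analytic point to check, and it is standard.

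Next, multiply by \(a_e\) and sum over the finitely many edges. The terminal endpoint of \(e\) contributes \(+i\,a_e\,\overline{u_e(t(e))}\,v_e(t(e))\), and the initial endpoint contributes \(-i\,a_e\,\overline{u_e(o(e))}\,v_e(o(e))\). Because \(u\) and \(v\) are continuous at each vertex, every endpoint value equals the common vertex values \(u(w)\) and \(v(w)\). Regrouping the sum by vertices then gives the displayed coefficient \(\sum_{E_{\mathrm{in}}(w)}a_e-\sum_{E_{\mathrm{out}}(w)}a_e\).

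We keep track of loops and external half-lines explicitly. A loop at \(w\) lies in both \(E_{\mathrm{in}}(w)\) and \(E_{\mathrm{out}}(w)\), and its two contributions cancel, consistent with the formula. A half-line contributes only at its finite end.

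For the symmetry criterion, sufficiency is immediate: if the balance holds at every vertex, the right-hand side vanishes. For necessity, suppose the balance fails at some vertex \(w_0\). Pick \(u=v\in H^1_{\mathrm{cont}}(\Gamma)\) with \(u(w_0)=1\) and \(u(w)=0\) at every other vertex. Such a function exists by Proposition~\ref{prop:zeroth-order-prolongation} and the trace surjectivity of Proposition~\ref{prop:one-dimensional-trace-package}; concretely, take cutoff functions near the edge ends at \(w_0\). The right-hand side is then \(i\) times the nonzero imbalance, so \(\langle D_{\mathbf a}u,u\rangle\ne\langle u,D_{\mathbf a}u\rangle\) and \(D_{\mathbf a}\) is not symmetric.

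The main obstacle is bookkeeping rather than analysis: getting the sign convention for in- versus out-edges right, and handling loops correctly.
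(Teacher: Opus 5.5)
Your proposal is correct and follows the paper's proof: integrate by parts edgewise in the weighted inner product, note that the boundary term at infinity vanishes on half-lines, and regroup the endpoint terms by vertex using continuity of the traces. You go slightly further than the paper by proving the necessity half of the symmetry criterion with a test function localised at one vertex, which the paper leaves implicit; the only cosmetic adjustment is that an incoming half-line should be parametrised as \((-\infty,0)\), so that its finite end is the terminal one.
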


\begin{proof}
On an oriented compact edge \(e=(0,\ell_e)\),
\[
  \langle -iu_e',v_e\rangle_{a_e}
  -
  \langle u_e,-iv_e'\rangle_{a_e}
  =
  i a_e
  \left[
    \overline{u_e(x)}v_e(x)
  \right]_{0}^{\ell_e}.
\]
The corresponding term at infinity vanishes on an external half-line.
Summing over the graph and using continuity of the vertex traces gives the
stated formula.
\end{proof}

First- and second-order balance are distinct: weighted in--out balance makes the oriented derivative symmetric, whereas the analogous Kirchhoff balance can make a chosen channel reflectionless; neither implies first-order self-adjointness.

\subsection{Geometry-selected and additional operator data}
\label{subsec:geometry-versus-vertex-data}

Quantum graphs admit many other self-adjoint vertex couplings \citep{KostrykinSchrader99,BerkolaikoKuchment2013}; none is assumed here.

For scalars, the resolution supplies the propagation graph and smooth core; first-order closure gives the continuous graph domain, the marked measure supplies weights, and the positive derivative form fixes the weighted Kirchhoff Hamiltonian without an additional vertex parameter.

Bundles can modify the \(H^1\)-trace domain, while equal nonzero first-order deficiency indices admit additional self-adjoint extension parameters. Such parameters are not part of the resolution unless selected by further geometry.

The subsequent quantum-graph formulas are used only after identifying which reduced data the non-Hausdorff construction selects.

\section{Return to the line with two origins: Sobolev closure and dynamics}
\label{sec:L2-return}

\subsection{Sobolev closure of the relative-sign core}
\label{subsec:L2-H1-return}

At first Sobolev order this infinite system collapses to one trace
condition.

\begin{theorem}[\(H^1\)-closure of the relative-sign core]
\label{thm:L2-H1-core}
If
\[
  \mathcal C_-
  :=
  \mathcal F^\infty_0(\mathbb R)
  \cap
  C_c^\infty(\mathbb R),
\]
then
\[
  \overline{\mathcal C_-}^{\,H^1(\mathbb R)}
  =
  H^1_{\mathrm{tr},0}(\mathbb R)
  :=
  \left\{
    \psi\in H^1(\mathbb R):
    \psi(0)=0
  \right\}.
\]
Equivalently,
\[
  H^1_{\mathrm{tr},0}(\mathbb R)
  =
  H^1_0((-\infty,0))
  \oplus
  H^1_0((0,\infty)).
\]
\end{theorem}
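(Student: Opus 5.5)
This is essentially the case \(m=1\), \(p=2\) of Corollary~\ref{cor:flat-ideal-sobolev-closure}. The only difference is that the core is \(\mathcal F^\infty_0\cap C_c^\infty\) rather than \(\mathcal F_0\cap W^{1,2}\). Since \(\mathcal C_-\subseteq\mathcal F_0\cap H^1(\mathbb R)\), the corollary already gives one inclusion. Rather than re-deriving the abstract machinery for the smaller core, I will give a direct two-sided argument that relies only on continuity of the point trace and density of test functions.

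First, the inclusion \(\subseteq\). On \(H^1(\mathbb R)\) the evaluation \(\psi\mapsto\psi(0)\) is bounded, by the Sobolev--Morrey embedding \(H^1(\mathbb R)\hookrightarrow C_b(\mathbb R)\). Every element of \(\mathcal C_-\) vanishes at \(0\), so its \(H^1\)-closure lies in the closed kernel \(H^1_{\mathrm{tr},0}(\mathbb R)\).

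Next, the inclusion \(\supseteq\). I invoke Lemma~\ref{lem:local-zero-trace-approximation}\textup{(ii)}, localised with the ambient core on \(\mathbb R\) as in Corollary~\ref{cor:relative-sign-complete-trace-package}. With \(n=1\), \(m=1\), \(p=2\) one has \(\sigma=1/2\notin\mathbb N_0\) and \(r_x=0\). The conclusion is that
\[
  \left\{\psi\in H^1(\mathbb R):\psi(0)=0\right\}
  =\overline{C_c^\infty(\mathbb R\setminus\{0\})}^{\,H^1(\mathbb R)}.
\]
Every \(\phi\in C_c^\infty(\mathbb R\setminus\{0\})\) vanishes on a neighbourhood of \(0\). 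It is therefore flat there and lies in \(\mathcal C_-\), which gives the inclusion. A self-contained version avoids the duality argument: write \(\psi=\psi_-\oplus\psi_+\), extend each half by zero, and use that \(\psi(0)=0\) makes these extensions lie in \(H^1\). Then translate each half away from the origin, say \(\psi_+(\cdot-\varepsilon)\); continuity of translation in \(H^1\) gives convergence as \(\varepsilon\to0\). Finally, truncate at infinity and mollify with a radius smaller than \(\varepsilon\). This produces approximants supported away from \(0\).

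Finally, the direct-sum statement. Restriction to the two half-lines maps \(H^1_{\mathrm{tr},0}(\mathbb R)\) into \(H^1((-\infty,0))\oplus H^1((0,\infty))\) with zero endpoint values. By the standard trace characterisation of \(H^1_0\) on a half-line, this image is \(H^1_0\oplus H^1_0\). Conversely, gluing two \(H^1_0\) functions gives a continuous function whose distributional derivative has no point mass at \(0\), so the result lies in \(H^1(\mathbb R)\) and vanishes at \(0\). The main obstacle, to the extent there is one, is justifying this zero-trace density, that is, the absence of capacity-type obstructions. It is handled by the noncriticality \(1/2\notin\mathbb N_0\), or concretely by the translation argument above.
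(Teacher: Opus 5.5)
Your proposal is correct and follows essentially the same route as the paper. Continuity of the point trace on \(H^1(\mathbb R)\) gives the inclusion into \(H^1_{\mathrm{tr},0}(\mathbb R)\). Conversely, a zero-trace function is approximated on each half-line by smooth functions compactly supported away from the origin, and these automatically lie in \(\mathcal C_-\). Your translation-and-mollification argument, or your appeal to Lemma~\ref{lem:local-zero-trace-approximation}, only makes explicit the half-line \(H^1_0\) density that the paper cites directly.
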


\begin{proof}
Continuity of the point trace on \(H^1(\mathbb R)\) gives the inclusion into
\(H^1_{\mathrm{tr},0}(\mathbb R)\). Conversely, a function with zero trace
restricts to \(H^1_0\) on each half-line and is approximable there by smooth
functions compactly supported away from the origin. Their direct sums belong
to \(\mathcal C_-\). The half-line decomposition follows from the same trace
characterisation.
\end{proof}

Here infinite-order smooth compatibility reduces, at the kinetic-form level, to a single value-trace condition.

\subsection{First- and second-order dynamics}
\label{subsec:L2-dynamics}

On the relative-sign smooth core define
\(D_0=-i\frac{d}{dx}\).
Its closure is the minimal derivative
\[
  D_{\min}\psi=-i\psi',
  \qquad
  \mathcal D(D_{\min})
  =
  H^1_{\mathrm{tr},0}(\mathbb R),
\]
while
\(\mathcal D(D_{\min}^*) = H^1((-\infty,0)) \oplus H^1((0,\infty))\).

\begin{theorem}[Deficiency indices]
\label{thm:L2-deficiency}
The deficiency indices of \(D_{\min}\) are
\[
  (n_+,n_-)=(1,1).
\]
\end{theorem}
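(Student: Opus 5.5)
We compute the deficiency subspaces \(\ker(D_{\min}^*\mp i)\) directly from the adjoint domain displayed above. Because \(D_{\min}\) is closed, the deficiency indices are the dimensions of these two kernels. We also use that \(D_{\min}^*\) acts as \(-i\,d/dx\) separately on each half-line, with domain \(H^1((-\infty,0))\oplus H^1((0,\infty))\).

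We first confirm the adjoint description stated before the theorem. If \(\phi\) lies in the adjoint domain, testing against \(C_c^\infty\) functions supported in either open half-line shows that \(\phi\) has weak derivative \(\phi'\in L^2\) on each half-line. Conversely, take \(\phi\) with \(H^1\) pieces on both sides and \(\psi\in H^1_{\mathrm{tr},0}(\mathbb R)\). Integrating by parts on each half-line produces boundary terms \(\pm i\,\overline{\psi(0)}\phi(0^\pm)\) and vanishing contributions at \(\pm\infty\), since \(H^1\) functions on half-lines decay. Because \(\psi(0)=0\), every boundary term is zero. This justifies the domain equality.

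Next we solve \(-i\phi'=\pm i\phi\) on each half-line, which gives \(\phi'=\mp\phi\) and hence \(\phi=c\,e^{\mp x}\) on each side. For \(z=+i\), the solution on \((0,\infty)\) is \(c_+e^{-x}\), which is square integrable. On \((-\infty,0)\) the only square-integrable solution of the same form is zero. So \(\ker(D_{\min}^*-i)\) is one-dimensional, spanned by \(\mathbf 1_{(0,\infty)}e^{-x}\). By the mirrored argument, \(\ker(D_{\min}^*+i)\) is spanned by \(\mathbf 1_{(-\infty,0)}e^{x}\). Both spanning functions lie in the adjoint domain, because no matching condition at the origin is imposed there. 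Therefore \((n_+,n_-)=(1,1)\).

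The only real point of care is the sign convention for \(n_\pm\), namely whether \(n_+\) is defined as \(\dim\ker(D^*-i)\) or as \(\dim\ker(D^*+i)\). Since the two dimensions coincide, the conclusion does not depend on that choice. We can also note, as a consistency check, that the two dimensions agree precisely because the cut at the origin removes one boundary condition. On the full line the free derivative is self-adjoint, with indices \((0,0)\). Splitting at the origin gives one incoming half-line and one outgoing half-line, and each contributes exactly one deficiency direction.
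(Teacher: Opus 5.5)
Your proposal is correct and follows the same route as the paper: you solve \((D_{\min}^*\mp i)\psi=0\) separately on each half-line of \(H^1((-\infty,0))\oplus H^1((0,\infty))\) and keep only the square-integrable exponentials, namely \(e^{-x}\) on the positive half-line for \(n_+\) and \(e^{x}\) on the negative half-line for \(n_-\). Your integration-by-parts verification of the adjoint domain is extra detail that the paper only asserts just before the theorem.
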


\begin{proof}
For
\((D_{\min}^*-i)\psi=0\),
the only square-integrable mode is proportional to \(e^{-x}\) on the
positive half-line. For
\((D_{\min}^*+i)\psi=0\),
the only square-integrable mode is proportional to \(e^x\) on the negative
half-line.
\end{proof}

The self-adjoint extensions are therefore
\(\psi(0^+)=e^{i\alpha}\psi(0^-), \qquad \alpha\in[0,2\pi)\),
and the phase \(\alpha\) is additional operator-domain data rather than a
parameter fixed by the spin structure.

\begin{proposition}[Self-adjointness versus global twisted sections]
\label{prop:L2-SA-not-global}
No self-adjoint extension of \(D_{\min}\) has domain contained in the space
of continuous global sections of the relative-sign spinor line.
\end{proposition}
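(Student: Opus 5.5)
The plan is to use the explicit description of the extension domains and compare it with the description of continuous global sections of the relative-sign line bundle \(E_g\). Every self-adjoint extension of \(D_{\min}\) lies between \(D_{\min}\) and \(D_{\min}^*\). By the deficiency computation in Theorem~\ref{thm:L2-deficiency} and von Neumann theory, the extensions are exactly the restrictions of \(D_{\min}^*\) to
\[
  \mathcal D_\alpha=\left\{\psi\in H^1((-\infty,0))\oplus H^1((0,\infty)):\psi(0^+)=e^{i\alpha}\psi(0^-)\right\},
  \qquad \alpha\in[0,2\pi).
\]
I would briefly justify this parametrisation. The boundary form of \(D_{\min}^*\) is \(i(\overline{\psi(0^-)}\phi(0^-)-\overline{\psi(0^+)}\phi(0^+))\), and the Lagrangian subspaces of \(\mathbb C^2\) for this form are precisely the graphs of unimodular multipliers.

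Next I would identify the continuous global sections in the first-component trivialisation. By the compatible-pair description in Section~\ref{subsec:relative-sign-bundle} and the \(k=0\) case of Theorem~\ref{thm:sign-restriction-images}, a function \(u\) is the first component of a continuous global section exactly when \(u\in C^0(\mathbb R)\) and \(u(0)=0\). I would read the domain \(\mathcal D_\alpha\) as sitting inside the first-chart representatives, which is the same identification used to define \(\mathcal C_-\). On this reading, the claim "domain contained in continuous global sections" means: every \(\psi\in\mathcal D_\alpha\) extends continuously across \(0\) with value \(0\).

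The contradiction then comes from a single element. For each \(\alpha\), take a cutoff \(\chi\in C_c^\infty(\mathbb R)\) with \(\chi=1\) near \(0\), and set
\[
  \psi=\chi\,\mathbf 1_{(-\infty,0)}+e^{i\alpha}\chi\,\mathbf 1_{(0,\infty)}.
\]
This \(\psi\) lies in \(\mathcal D_\alpha\), and its one-sided limits at \(0\) are \(1\) and \(e^{i\alpha}\), both nonzero. If \(e^{i\alpha}\ne1\), then \(\psi\) is not even continuous on \(\mathbb R\). If \(\alpha=0\), then \(\psi\) is continuous but \(\psi(0)=1\ne0\). In either case \(\psi\) is not the restriction of a continuous global section. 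The same reasoning applies in the second trivialisation, where the sign convention changes to \(v=gu\); the argument is unchanged because both one-sided values are nonzero.

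The main obstacle I expect is interpretive rather than technical. It is to make precise what "contained in the space of continuous global sections" means for \(L^2\)/\(H^1\) classes, by passing to the continuous one-sided representatives on each half-line and testing extendability at the origin. One also needs to state that no gauge change can help: the sign class is rigid by Theorem~\ref{thm:L2-spin-classification}, and the nonvanishing trace survives every frame choice. Once this is settled, the single cutoff example above finishes the proof for every \(\alpha\).
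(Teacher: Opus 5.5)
Your proof is correct and uses the same two facts as the paper's: continuous global sections of the relative-sign line have both one-sided traces equal to zero, whereas every self-adjoint extension domain contains elements with nonzero traces. You establish the second fact through the explicit \(U(1)\) parametrisation and a cutoff witness. The paper argues more briefly that requiring both traces to vanish would give back the minimal domain of \(D_{\min}\), which is not self-adjoint because its deficiency indices are \((1,1)\).
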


\begin{proof}
Every continuous global relative-sign section has both one-sided traces
equal to zero. A self-adjoint extension requires a maximal isotropic
one-dimensional boundary relation and therefore admits nonzero traces.
Restricting both traces to zero recovers the minimal symmetric domain,
which is not self-adjoint because its deficiency indices are \((1,1)\).
\end{proof}

The positive derivative form on the same geometric core closes on \(H^1_{\mathrm{tr},0}(\mathbb R)\) and therefore selects a unique Friedrichs Hamiltonian.

\begin{theorem}[Friedrichs realisation]
\label{thm:L2-Friedrichs}
The resolution-induced positive Hamiltonian in the relative-sign sector is
\[
  H_{\mathrm F}
  =
  H_{D,-}\oplus H_{D,+},
\]
the direct sum of the Dirichlet Laplacians on the two half-lines.
Equivalently,
\[
  H_{\mathrm F}=D_{\min}^*D_{\min}.
\]
\end{theorem}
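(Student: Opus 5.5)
The plan is to identify the closed form generated by the positive derivative form on the relative-sign core, and then read off its associated operator twice: once as a direct sum of half-line Dirichlet Laplacians, once as \(D_{\min}^*D_{\min}\).

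First, define the form \(\mathfrak q_-[\psi]=\int_{\mathbb R}|\psi'|^2\,dx\) on \(\mathcal C_-\). Its shifted form norm is exactly the \(H^1(\mathbb R)\)-norm, so it is closable as in Proposition~\ref{prop:form-closability}. By Theorem~\ref{thm:L2-H1-core}, its closure has domain \(H^1_{\mathrm{tr},0}(\mathbb R)=H^1_0((-\infty,0))\oplus H^1_0((0,\infty))\). The closed form therefore splits orthogonally as \(\mathfrak q_{D,-}\oplus\mathfrak q_{D,+}\), where \(\mathfrak q_{D,\pm}[\psi]=\int|\psi'|^2\) on \(H^1_0\) of the corresponding half-line. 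Since \(L^2(\mathbb R)=L^2(\mathbb R_-)\oplus L^2(\mathbb R_+)\) and the form domain splits compatibly, the associated self-adjoint operator is the direct sum of the operators associated with \(\mathfrak q_{D,\pm}\). Each of these is the Dirichlet Laplacian \(-d^2/dx^2\) with domain \(H^2\cap H^1_0\) of the half-line. To check this, integrate by parts against \(\phi\in C_c^\infty\) of the open half-line to get \(-\psi''\), and against general \(\phi\in H^1_0\) to see that no boundary condition beyond \(\psi(0)=0\) is imposed. This is the standard half-line case of the argument in Theorem~\ref{thm:weighted-kirchhoff}. The resulting positive operator is the Friedrichs extension of \(-d^2/dx^2\) on \(\mathcal C_-\), which justifies the name.

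Second, for the identity \(H_{\mathrm F}=D_{\min}^*D_{\min}\), I would use the general fact (von Neumann, see \citep{Kato1995}) that for a closed densely defined operator \(T\), the operator \(T^*T\) is self-adjoint and nonnegative, and is the operator associated with the closed form \(\|T\psi\|^2\) on \(\mathcal D(T)\). Take \(T=D_{\min}\), which is closed with domain \(H^1_{\mathrm{tr},0}(\mathbb R)\) as stated above. Then \(\|D_{\min}\psi\|^2=\int|\psi'|^2\) on \(H^1_{\mathrm{tr},0}(\mathbb R)\), which is exactly the closed form from the first step. Uniqueness of the operator associated with a closed form gives \(H_{\mathrm F}=D_{\min}^*D_{\min}\).

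As a consistency check, I would also compute directly: \(\mathcal D(D_{\min}^*D_{\min})=\{\psi\in H^1_{\mathrm{tr},0}:\psi'\in \mathcal D(D_{\min}^*)\}\), and \(\mathcal D(D_{\min}^*)=H^1(\mathbb R_-)\oplus H^1(\mathbb R_+)\). This gives \(\psi\in H^2\) on each half-line with \(\psi(0^\pm)=0\), matching the Dirichlet domain.

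The only delicate point is the identification of the closed-form domain, but Theorem~\ref{thm:L2-H1-core} already supplies it. The rest is routine bookkeeping: verifying that the direct-sum splitting is compatible with the form and operator constructions.
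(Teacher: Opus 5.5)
Your proposal is correct and follows the paper's own proof. Like the paper, you identify the closed form domain via Theorem~\ref{thm:L2-H1-core} as the orthogonal sum of the two half-line $H^1_0$-spaces, then apply the representation theorem for closed nonnegative forms to get $H_{D,-}\oplus H_{D,+}$. You then obtain $H_{\mathrm F}=D_{\min}^*D_{\min}$ from von Neumann's fact that $T^*T$ is the operator of the closed form $\|T\psi\|^2$ on $\mathcal D(T)$, which is what the paper's ``factorisation by the closed minimal derivative'' refers to; your direct domain check is a harmless extra.
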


\begin{proof}
The closed form is the orthogonal sum of
\(\int_{-\infty}^0|u'|^2\,dx\) and \(\int_0^\infty|u'|^2\,dx\) on the two
half-line \(H^1_0\)-spaces. The representation theorem for closed
nonnegative forms gives the two Dirichlet Laplacians. The factorisation by
the closed minimal derivative gives \(H_{\mathrm F}=D_{\min}^*D_{\min}\).
\end{proof}

\(R(k)=-1\) and \(T(k)=0\): the scalar sector is blind to the doubled origin, while the relative-sign sector survives through its closed trace domain.

\section{\texorpdfstring{The line with \(k\) origins}{The line with k origins}}
\label{sec:Lk}

For \(k>2\), the exceptional fibre carries more rank-one transport data while the regular scalar propagation geometry remains unchanged.

\subsection{Geometry and scalar reduction}
\label{subsec:Lk-geometry}

Fix \(k\geq2\) and define
\[
  \mathbb L_k
  :=
  \left(
    \bigsqcup_{i=1}^{k}\mathbb R_i
  \right)\big/\!\sim,
  \qquad
  (x,i)\sim(x,j)
  \quad
  (x\neq0).
\]
The \(k\) origins \(o_i=[(0,i)]\) remain distinct. With
\(U_i := \mathbb L_k\setminus\{o_j:j\neq i\}\),
one has \(U_i\cong\mathbb R\) and
\[
  U_i\cap U_j
  \cong
  \mathbb R_-\sqcup\mathbb R_+,
  \qquad
  \mathbb R_-:=(-\infty,0),
  \quad
  \mathbb R_+:=(0,\infty).
\]
We use the identity-glued smooth structure on both overlap components.

The Hausdorff reflection
\(q_k:\mathbb L_k\longrightarrow\mathbb R, \qquad q_k([x])=x, \quad q_k(o_i)=0\),
collapses precisely the exceptional fibre. Pairwise inseparability of the
origins implies the universal Hausdorff factorisation property, so \(q_k\)
is the Hausdorff reflection. For the identity-glued smooth structure,
\[
  C^\infty(\mathbb L_k,\mathbb C)
  \cong
  C^\infty(\mathbb R,\mathbb C),
  \qquad
  L^2(\mathbb L_k)
  \cong
  L^2(\mathbb R)
\]
for the normalized measure transported from the regular locus. Hence the
resolution-induced scalar Hamiltonian is the ordinary free Laplacian on
\(H^2(\mathbb R)\), independently of \(k\).

\(\mathbb L_k^\circ \cong \mathbb R_-\sqcup\mathbb R_+\) has exactly two scalar propagation regions for every finite \(k\):
exceptional-fibre multiplicity does not by itself create additional scalar
channels.

\subsection{Rank-one transport and classification}
\label{subsec:Lk-labelled-classification}
\label{subsec:Lk-unitary-lines}

Equip \(\mathbb L_k\) with the orientation induced by increasing \(x\) and
the flat metric \(dx^2\). Since
\(\operatorname{Spin}(1)\cong\mathbb Z_2\),
a spin structure is represented by locally constant transition signs on the
two components of each overlap. The negative-side cocycle can be
trivialised by local frame changes. The remaining positive-side cocycle is
therefore represented by a sign vector
\(\boldsymbol{\varepsilon} = (\varepsilon_1,\ldots,\varepsilon_k) \in\{\pm1\}^k\),
defined modulo simultaneous multiplication by a common sign.

\begin{theorem}[Rank-one spin classification]
\label{thm:Lk-labelled-spin}
The labelled spin structures on \(\mathbb L_k\) are classified by
\[
  \mathscr S(\mathbb L_k)
  \cong
  \{\pm1\}^k/\{\pm1\}
  \cong
  (\mathbb Z_2)^{k-1}.
\]
Hence
\[
  \#\mathscr S(\mathbb L_k)=2^{k-1}.
\]
Under orientation-preserving diffeomorphisms, the classes are determined by
\[
  m(\boldsymbol{\varepsilon})
  =
  \min\left\{
    \#\{i:\varepsilon_i=+1\},
    \#\{i:\varepsilon_i=-1\}
  \right\},
\]
and therefore form
\[
  \left\lfloor\frac{k}{2}\right\rfloor+1
\]
orbits.
\end{theorem}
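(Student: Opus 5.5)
The plan is to follow the cocycle argument of Theorem~\ref{thm:L2-spin-classification}, now with $k$ charts. Since $\operatorname{Spin}(1)\cong\mathbb Z_2$ is discrete, a spin structure on the atlas $\{U_i\}$ is a \v{C}ech cocycle of locally constant signs. On each overlap $U_i\cap U_j\cong\mathbb R_-\sqcup\mathbb R_+$ it gives a pair $(g_{ij}^-,g_{ij}^+)\in\{\pm1\}^2$, and these must satisfy $g_{ij}g_{jk}=g_{ik}$ on every triple overlap. Each triple overlap is again $\mathbb R_-\sqcup\mathbb R_+$, so the cocycle condition holds separately on the negative and the positive side. A change of frame is a choice of sign $\lambda_i$ on each connected chart $U_i\cong\mathbb R$, and it acts by $g_{ij}^\pm\mapsto\lambda_i g_{ij}^\pm\lambda_j$.

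First, the negative-side cocycle $(g^-_{ij})$ is a cocycle on a complete index set with a common domain. It is therefore a coboundary: set $\lambda_i:=g^-_{1i}$, which gives $g_{ij}^-=\lambda_i\lambda_j$. Applying this gauge makes every $g^-_{ij}=1$. The residual gauge group is the set of $\lambda$ preserving $g^-\equiv1$, which means $\lambda_i\lambda_j=1$ for all $i,j$; these are exactly the constant vectors $\lambda\equiv\pm1$.

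The positive-side cocycle is likewise a coboundary, so $g^+_{ij}=\varepsilon_i\varepsilon_j$ with $\varepsilon_i:=g^+_{1i}\cdot\varepsilon_1$. The vector $\boldsymbol\varepsilon$ is determined up to a common sign. The residual gauge $\lambda\equiv\pm1$ fixes $g^+$, so distinct classes of $\boldsymbol\varepsilon$ modulo $\pm1$ give non-isomorphic structures. Conversely, every $\boldsymbol\varepsilon$ defines a valid cocycle. This gives the bijection $\mathscr S(\mathbb L_k)\cong\{\pm1\}^k/\{\pm1\}\cong(\mathbb Z_2)^{k-1}$ and the count $2^{k-1}$. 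I would note, as in the earlier theorem, that no Hausdorff or paracompact classification result is used.

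For the orbit count, I would first determine the orientation-preserving diffeomorphisms of $\mathbb L_k$. Any diffeomorphism descends through the Hausdorff reflection $q_k$ to an orientation-preserving diffeomorphism of $\mathbb R$ fixing $0$, since it must preserve the unique non-Hausdorff fibre. It acts on the origins by some permutation $\sigma\in S_k$. Every permutation is realised by the map $(x,i)\mapsto(x,\sigma(i))$, which is smooth because the gluing is by the identity. The pulled-back cocycle has positive-side vector $\boldsymbol\varepsilon\circ\sigma$, possibly after renormalising the negative side, which only introduces a common sign. The orbits are therefore $S_k\times\{\pm1\}$-orbits on $\{\pm1\}^k$. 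These orbits are classified by the unordered pair $\{\#(+1),\#(-1)\}$, that is, by $m(\boldsymbol\varepsilon)\in\{0,\dots,\lfloor k/2\rfloor\}$, which gives $\lfloor k/2\rfloor+1$ orbits.

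The main obstacle is showing that the general diffeomorphism, whose map on $\mathbb R$ is nontrivial, acts only through $\sigma$ and a global sign. It could also be argued to flip the roles of the sides. However, orientation preservation together with fixing $0$ keeps $\mathbb R_\pm$ invariant, and the induced map on $\mathbb R$ is isotopic to the identity on each side, so locally constant signs are unaffected.
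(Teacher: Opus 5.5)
Your proposal is correct and follows essentially the same route as the paper: trivialise the negative-side cocycle, read off the positive-side sign vector modulo the residual common sign, and obtain the orbits from permutations of the origins together with the common-sign identification $p\leftrightarrow k-p$. You also prove that an arbitrary orientation-preserving diffeomorphism acts on classes only through its permutation of the origins, since the induced map on $\mathbb R$ preserves $\mathbb R_-$ and $\mathbb R_+$ and so leaves locally constant transitions unchanged; this fills in a step the paper's proof leaves implicit when it calls $m$ the complete orbit invariant (no isotopy argument is needed for it).
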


\begin{proof}
After negative-side trivialisation, the cocycle relation gives
\(g_{ij}^+ = \varepsilon_i\varepsilon_j^{-1}\).
A common sign leaves every relative transition unchanged, giving the
labelled quotient. Every permutation of the origins is realised by an
orientation-preserving diffeomorphism acting identically on the regular
coordinate, while the common-sign identification exchanges \(p\) positive
entries with \(k-p\). The stated value of \(m\) is therefore the complete
orbit invariant.
\end{proof}

Locally constant \(U(1)\)-transitions similarly give \(U(1)^{k-1}\), with the spin structures as its \(2\)-torsion. The \(H^1\)-reduction below is much coarser than this cocycle classification.

\subsection{Smooth compatibility and Sobolev reduction}
\label{subsec:Lk-Sobolev-collapse}

Fix a normalized sign vector
\(\boldsymbol{\varepsilon}\). On the regular locus, a spinor is represented
by
\(\psi_-\oplus\psi_+\),
and smoothness in the \(i\)-th exceptional chart requires
\(\psi_-^{(r)}(0^-) = \varepsilon_i\psi_+^{(r)}(0^+) \qquad (r\geq0)\).
For the trivial class this is ordinary full-jet matching. If the class is
nontrivial, choose \(i,j\) with
\(\varepsilon_i\neq\varepsilon_j\); comparison of the two relations forces
\(\psi_-^{(r)}(0^-) = \psi_+^{(r)}(0^+) = 0 \qquad (r\geq0)\).
Exactly the same argument applies to a nonconstant vector of \(U(1)\)
transport phases.

\begin{theorem}[Sobolev reduction of nontrivial rank-one sectors]
\label{thm:Lk-H1-collapse}
The trivial rank-one sector closes to
\[
  H^1(\mathbb R).
\]
Every nontrivial spin sector, and every nonconstant locally constant
\(U(1)\)-transport class considered above, closes to
\[
  H^1_0(\mathbb R_-)
  \oplus
  H^1_0(\mathbb R_+).
\]
\end{theorem}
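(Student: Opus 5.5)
The plan is to identify the smooth core in each sector explicitly, then compute its \(H^1(\mathbb R)\)-closure with the same trace argument used for Theorem~\ref{thm:L2-H1-core}. Throughout, work in the regular-locus representative \(\psi=\psi_-\oplus\psi_+\) on \(\mathbb R_-\sqcup\mathbb R_+\). The Hilbert space is \(L^2(\mathbb R)\) under the normalized measure of Section~\ref{subsec:Lk-geometry}, and the kinetic form is \(\int|\psi'|^2\,dx\). So the closed domain is the \(H^1_\oplus\)-closure of the compactly supported smooth sections, viewed inside \(H^1(\mathbb R_-)\oplus H^1(\mathbb R_+)\).

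\emph{Trivial sector.} After normalizing, all \(\varepsilon_i=+1\). The compatibility relations then say exactly that \(\psi_-\) and \(\psi_+\) have equal one-sided jets of every order at \(0\). This is equivalent to \(\psi\in C_c^\infty(\mathbb R)\), and the regular-locus restriction map is a bijection from compactly supported smooth global sections onto \(C_c^\infty(\mathbb R)\). Since \(C_c^\infty(\mathbb R)\) is dense in \(H^1(\mathbb R)\), the closure is \(H^1(\mathbb R)\). One could also phrase this through Corollary~\ref{cor:universal-first-order-graph-closure} with \(\Gamma=\mathbb R\) and a single vertex, where continuity at the vertex is the only first-order condition.

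\emph{Nontrivial sectors.} Suppose the spin class is nontrivial, or more generally the \(U(1)\) phases \(\varepsilon_i\) are not all equal. The comparison already recorded before the statement applies: choosing \(i,j\) with \(\varepsilon_i\ne\varepsilon_j\), the relations \(\psi_-^{(r)}(0^-)=\varepsilon_i\psi_+^{(r)}(0^+)\) and \(\psi_-^{(r)}(0^-)=\varepsilon_j\psi_+^{(r)}(0^+)\) together force all one-sided jets to vanish. So the smooth core consists of pairs \(\psi_\pm\in C^\infty\) that are flat at \(0^\pm\) and compactly supported. Conversely, any such flat pair satisfies every chart relation trivially, so it is a global smooth section. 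The core therefore equals \(\mathcal C_-\) from Theorem~\ref{thm:L2-H1-core}, where the two sides decouple.

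The closure in the nontrivial case has two inclusions. Continuity of the one-sided point traces \(\psi_\pm(0^\pm)\) on \(H^1(\mathbb R_\pm)\) shows that the closure lies in \(H^1_0(\mathbb R_-)\oplus H^1_0(\mathbb R_+)\). For the reverse inclusion, note that \(C_c^\infty(\mathbb R_-)\oplus C_c^\infty(\mathbb R_+)\) is contained in the core, since such functions vanish near \(0\), and it is dense in \(H^1_0(\mathbb R_-)\oplus H^1_0(\mathbb R_+)\) by the definition of \(H^1_0\). Equivalently, one can invoke Theorem~\ref{thm:L2-H1-core} verbatim.

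The only delicate point is making sure that, for a nonconstant \(U(1)\) vector, every pair of charts really produces the flatness conclusion. It is enough to have one pair of charts with distinct phases, together with a check that flat pairs remain compatible in every chart; both facts are immediate.
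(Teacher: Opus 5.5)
Your proposal is correct and follows essentially the same route as the paper: continuity of the one-sided traces gives the inclusion into \(H^1_0(\mathbb R_-)\oplus H^1_0(\mathbb R_+)\), and the reverse inclusion comes from the fact that \(C_c^\infty(\mathbb R_-)\oplus C_c^\infty(\mathbb R_+)\) lies in every nontrivial core and is dense in that sum. Your explicit treatment of the trivial sector, via density of \(C_c^\infty(\mathbb R)\) in \(H^1(\mathbb R)\), supplies a case the paper's proof leaves implicit.
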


\begin{proof}
The nontrivial smooth core has zero one-sided traces, so its closure is
contained in the displayed space. Conversely, \(C_c^\infty(\mathbb R_-) \oplus C_c^\infty(\mathbb R_+)\) is contained in every nontrivial smooth core and is dense in that direct
sum.
\end{proof}

\(2^{k-1}\) labelled spin classes and a continuous
\(U(1)^{k-1}\)-family reduce, at first Sobolev order, to the same two trace
domains distinguished only by whether the local rank-one transport is
coherent.

\subsection{First- and second-order consequences}
\label{subsec:Lk-first-order}
\label{subsec:Lk-positive-dynamics}

In the trivial sector, \(-i\frac{d}{dx} \quad\text{on }H^1(\mathbb R)\) is the ordinary self-adjoint momentum operator, and the positive derivative
form gives the free Laplacian.

Every nontrivial sector yields the same closed first-order operator
\[
  D_{\mathrm{ref}}
  =
  -i\frac{d}{dx},
  \qquad
  \mathcal D(D_{\mathrm{ref}})
  =
  H^1_0(\mathbb R_-)
  \oplus
  H^1_0(\mathbb R_+).
\]
This is exactly the minimal two-half-line derivative already encountered on
\(\mathbb L_2\). \((n_+,n_-)=(1,1)\),
and its self-adjoint extensions are
\(\psi(0^+)=e^{i\theta}\psi(0^-), \qquad \theta\in[0,2\pi)\).
The phase \(\theta\) is additional operator-domain data; it is not selected
by the rank-one transport class.

The associated positive form has domain \(H^1_0(\mathbb R_-) \oplus H^1_0(\mathbb R_+)\) and therefore produces
\(H_{\mathrm{ref}} = H_{D,-}\oplus H_{D,+}\).
Hence every nontrivial sector has
\(R(p)=-1, \qquad T(p)=0, \qquad p>0\),
while the trivial sector is the free line.

The \(2^{k-1}\) labelled spin classes therefore produce only two
resolution-induced Friedrichs regimes. Increasing the multiplicity of a
single exceptional fibre enlarges the bundle classification without
enlarging the scalar propagation geometry or the number of resulting
\(H^1\)-domains.

\section{Finite multiple-origin defects on the line}
\label{sec:multisplit-line}

With finitely many split coordinates, active nodal defects can partition the line and isolate bounded reducing components.

\subsection{Geometry and rank-one data}
\label{subsec:multisplit-definition}
\label{subsec:multisplit-spin-classification}

Fix
\(x_1<\cdots<x_N, \qquad k_j\geq2\),
and replace \(x_j\) by \(k_j\) origins
\(o_{j,1},\ldots,o_{j,k_j}\).
Denote the resulting identity-glued one-manifold by
\[
  \mathbb L_{\mathbf k}(\mathbf x),
  \qquad
  \mathbf x=(x_1,\ldots,x_N),
  \quad
  \mathbf k=(k_1,\ldots,k_N),
\]
with natural projection
\[
  q:
  \mathbb L_{\mathbf k}(\mathbf x)
  \longrightarrow
  \mathbb R,
  \qquad
  q(o_{j,\alpha})=x_j.
\]

\begin{proposition}
\label{prop:multisplit-basic}
The space \(\mathbb L_{\mathbf k}(\mathbf x)\) is a connected,
second-countable, \(T_1\), locally Euclidean one-manifold. Its
non-Hausdorff pairs are precisely the distinct origins in a common fibre
\(q^{-1}(x_j)\), and \(q\) is its Hausdorff reflection.
\end{proposition}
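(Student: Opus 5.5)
The plan is to build \(\mathbb L_{\mathbf k}(\mathbf x)\) explicitly from an atlas and check each property directly, imitating the treatment of \(\mathbb L_k\) in Section~\ref{subsec:Lk-geometry}.

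\textbf{Charts and manifold properties.} Index charts by choices \(\boldsymbol\alpha=(\alpha_1,\ldots,\alpha_N)\), \(1\le\alpha_j\le k_j\). Set \(U_{\boldsymbol\alpha}:=\mathbb L_{\mathbf k}(\mathbf x)\setminus\{o_{j,\beta}:\beta\neq\alpha_j\}\). Then \(q|_{U_{\boldsymbol\alpha}}\) is a bijection onto \(\mathbb R\). These finitely many sets cover the space. Each pairwise overlap is \(q^{-1}\) of \(\mathbb R\) minus those \(x_j\) where the two choices differ, so it is open. Transitions are the identity on these open subsets of \(\mathbb R\), which gives a smooth locally Euclidean structure.

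Second countability follows because there are finitely many charts, each homeomorphic to \(\mathbb R\). Connectedness follows because every chart is connected and all charts contain the common dense regular locus \(q^{-1}(\mathbb R\setminus\{x_j\})\). For \(T_1\), points lying in a common chart are separated by Hausdorffness of that chart. Two distinct origins \(o_{j,\alpha}\neq o_{j,\beta}\) are handled separately: the chart containing \(o_{j,\alpha}\) and avoiding \(o_{j,\beta}\) is an open set containing one but not the other, and symmetrically.

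\textbf{Non-Hausdorff pairs.} Distinct origins in one fibre are inseparable. Any neighbourhoods contain \(q^{-1}((x_j-\delta,x_j+\delta)\setminus\{x_j\})\) for small \(\delta\), since chart neighbourhoods of both points are preimages of intervals. Hence they meet.

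Now consider a pair of points with distinct \(q\)-images. Preimages of disjoint intervals separate them, because \(q\) is continuous. Continuity holds since \(q\) restricted to each chart is a homeomorphism onto \(\mathbb R\).

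Two distinct points with the same image must both be origins over the same \(x_j\). The other pairs are a regular point together with itself, which is excluded, and a regular point with an origin, whose images differ.

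\textbf{Hausdorff reflection.} Let \(F\) be continuous into a Hausdorff space. By the argument of Proposition~\ref{prop:L2-hausdorff-factorisation}, \(F\) identifies every inseparable pair, so it is constant on fibres of \(q\). It therefore factors uniquely as \(\widehat F\circ q\).

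For continuity of \(\widehat F\), take any single chart \(U_{\boldsymbol\alpha}\). Then \(\widehat F=F\circ(q|_{U_{\boldsymbol\alpha}})^{-1}\), and \(q|_{U_{\boldsymbol\alpha}}\) is a homeomorphism onto \(\mathbb R\).

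Since \(\mathbb R\) is Hausdorff and \(q\) is surjective, \(q\) satisfies the universal property. I do not expect a genuine obstacle here. The only step needing care is the verification that chart domains are open and that the quotient topology agrees with the atlas topology. I would settle this by defining the topology directly through the atlas: a set is open iff its intersection with each \(U_{\boldsymbol\alpha}\) maps to an open subset of \(\mathbb R\). This is consistent because the transitions are identities on open sets.
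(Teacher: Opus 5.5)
Your proposal is correct and follows essentially the same route as the paper: the identity-glued charts give local Euclideanity and second countability, origins in a common fibre are inseparable because their neighbourhoods share punctured neighbourhoods, points with distinct \(q\)-images are separated, and Hausdorff-valued maps are constant on fibres, which gives the factorisation through \(q\). You are more complete than the paper's sketch in two respects: you check \(T_1\) and connectedness explicitly, and you obtain continuity of the induced map as \(F\circ(q|_{U_{\boldsymbol\alpha}})^{-1}\) on a single chart (a continuous section of \(q\)), which cleanly replaces the paper's appeal to the quotient property of \(q\).
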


\begin{proof}
The standard exceptional charts give local Euclideanity and a countable
basis. Origins in a common fibre have intersecting punctured
neighbourhoods; fibres over distinct coordinates admit disjoint coordinate
neighbourhoods. Pairwise inseparability forces every Hausdorff-valued
continuous map to be constant on each exceptional fibre, and the quotient
property gives the universal factorisation through \(q\).
\end{proof}

Accordingly,
\[
  C^\infty(
    \mathbb L_{\mathbf k}(\mathbf x)
  )
  \cong
  C^\infty(\mathbb R),
  \qquad
  L^2(
    \mathbb L_{\mathbf k}(\mathbf x)
  )
  \cong
  L^2(\mathbb R)
\]
for the normalized regular-locus measure, so the scalar free Hamiltonian is
the ordinary Laplacian.

The regular locus has components
\[
  I_0=(-\infty,x_1),
  \qquad
  I_j=(x_j,x_{j+1}),
  \quad
  1\leq j\leq N-1,
  \qquad
  I_N=(x_N,\infty).
\]

For rank-one spin transport, the fibre over \(x_j\) is represented after
left-side normalization by
\(\boldsymbol{\varepsilon}_j \in \{\pm1\}^{k_j}/\{\pm1\}\).
Because the Hausdorff incidence graph is a line, the local data at distinct
coordinates are independent.

\begin{theorem}[Classification of labelled multi-split spin structures]
\label{thm:multisplit-spin-classification}
The labelled spin structures are classified by
\[
  \mathscr S(
    \mathbb L_{\mathbf k}(\mathbf x)
  )
  \cong
  \prod_{j=1}^{N}
  (\mathbb Z_2)^{k_j-1},
\]
and hence
\[
  \#\mathscr S
  =
  2^{\sum_{j=1}^{N}(k_j-1)}.
\]
\end{theorem}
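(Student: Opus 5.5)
The plan is to reduce the global classification to independent local problems at each split coordinate, and then apply the single-fibre classification of Theorem~\ref{thm:Lk-labelled-spin} at each one.

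\textbf{Local cocycle data.} Cover \(\mathbb L_{\mathbf k}(\mathbf x)\) by regular charts over the open intervals \(I_0,\ldots,I_N\), together with the exceptional charts \(U_{j,\alpha}\) centred at \(o_{j,\alpha}\) with small radius. Each exceptional chart meets only the two adjacent regular intervals. Two exceptional charts over distinct coordinates are disjoint once the radii are small. Since \(\operatorname{Spin}(1)\cong\mathbb Z_2\) is discrete and \(SO(1)\) is trivial, a spin structure is a \(\{\pm1\}\)-valued locally constant cocycle on this cover. Every overlap is a disjoint union of intervals, so the cocycle is just a finite sign assignment to overlap components, and triple overlaps impose the cocycle relations.

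\textbf{Normalization.} The regular charts \(I_j\) are connected and pairwise disjoint. I use frame changes on the exceptional charts to make each transition to the left interval \(I_{j-1}\) equal to \(+1\). This is the left-side normalization. What remains is the sign \(\varepsilon_{j,\alpha}\) on the component of \(U_{j,\alpha}\) lying in \(I_j\).

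The residual gauge freedom consists of a sign on each regular interval, together with the exceptional-frame signs forced by the normalization. A sign change on \(I_{j-1}\) flips every exceptional frame at \(x_j\). That leaves \(\boldsymbol\varepsilon_j\) unchanged and instead multiplies \(\boldsymbol\varepsilon_{j-1}\) by a common sign. A sign change on \(I_j\) multiplies \(\boldsymbol\varepsilon_j\) by a common sign. So the residual gauge group acts by independent common signs at each coordinate.

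The one point to check is that the gauge on one interval moves both adjacent fibres only by common signs. The invariants of a fibre are its relative signs \(\varepsilon_{j,\alpha}\varepsilon_{j,\beta}\), exactly as in the proof of Theorem~\ref{thm:Lk-labelled-spin}, and a common sign does not change them. Because the incidence graph is a line, and hence a tree, no cycle produces an additional global invariant.

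\textbf{Existence.} Conversely, any family \((\boldsymbol\varepsilon_j)\) defines a valid cocycle. The only triple overlaps are pairs of exceptional charts over the same \(x_j\), intersected with \(I_{j-1}\) or \(I_j\), and there the cocycle relation reads \(g^+_{\alpha\beta}=\varepsilon_{j,\alpha}\varepsilon_{j,\beta}^{-1}\). This holds by construction.

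Combining the two directions gives \(\mathscr S\cong\prod_j\{\pm1\}^{k_j}/\{\pm1\}\cong\prod_j(\mathbb Z_2)^{k_j-1}\). Counting then yields \(2^{\sum_j(k_j-1)}\).
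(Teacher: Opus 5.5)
Your route is the paper's route made explicit. The paper applies Theorem~\ref{thm:Lk-labelled-spin} at each split coordinate and notes that, because the Hausdorff incidence graph is a line, no cycle relation couples the local cocycles. Your conclusion is correct, but one step of your residual-gauge bookkeeping is wrong, and it is exactly the step where the line structure has to enter.

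\textbf{The misstated step.} Keep the left-side normalization and change the sign on \(I_{j-1}\) by \(d_{j-1}\). This flips the transition from every \(U_{j,\alpha}\) to \(I_{j-1}\). Restoring the normalization requires flipping every exceptional frame at \(x_j\). A frame flip on \(U_{j,\alpha}\) multiplies both of its transitions, so it also multiplies the transition to \(I_j\). Hence \(\boldsymbol\varepsilon_j\) is multiplied by \(d_{j-1}\); it is not left unchanged. The same \(d_{j-1}\) also multiplies \(\boldsymbol\varepsilon_{j-1}\) directly, since \(I_{j-1}\) is the right-hand interval at \(x_{j-1}\). With \(d_0,\ldots,d_N\) the interval signs, the correct residual action is
\[
  \boldsymbol\varepsilon_j\longmapsto d_{j-1}d_j\,\boldsymbol\varepsilon_j,
  \qquad 1\leq j\leq N.
\]

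\textbf{The missing surjectivity step.} To conclude that each fibre can be rescaled by an independent common sign, you need the map \((d_0,\ldots,d_N)\mapsto(d_{j-1}d_j)_{j=1}^N\) to be onto \(\{\pm1\}^N\). Given targets \(e_j\), set \(d_0=1\) and \(d_j=e_jd_{j-1}\); then \(d_{j-1}d_j=e_j\). This recursion is where the path (tree) structure is used.

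On a circle the same map reaches only the subgroup with \(\prod_j e_j=1\). That is why the extra invariant \(h\) appears in Theorem~\ref{thm:multisplit-circle-spin-classification}. Your version, in which each interval sign moves only one fibre, would wrongly give independent common signs on the circle as well and miss \(h\). So your closing remark about the tree is the right idea, but as written it is not connected to the computation.

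With the corrected action and the surjectivity step, your normalization, the relative-sign invariants, and the existence argument complete the proof. They give \(\prod_j\{\pm1\}^{k_j}/\{\pm1\}\) and the stated count \(2^{\sum_{j=1}^{N}(k_j-1)}\).
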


\begin{proof}
Apply Theorem~\ref{thm:Lk-labelled-spin} independently at each split
coordinate. Since the Hausdorff incidence graph is a line, there is no cycle
relation coupling the local cocycles, so the labelled classification is the
direct product of the local factors.
\end{proof}

At \(x_j\), call the local transport \emph{transparent} when the normalized
sign vector is constant and \emph{active} otherwise. Let
\(D := \{j:x_j\text{ is active}\}, \qquad m:=|D|\).
The finer orientation-preserving classification remembers, at each
coordinate, the number
\(m_j = \min\{p_j,k_j-p_j\}\),
where \(p_j\) is the number of positive signs. Hence the geometric classes
are indexed by
\((m_1,\ldots,m_N), \qquad 0\leq m_j\leq \left\lfloor\frac{k_j}{2}\right\rfloor\),
and their number is
\(\prod_{j=1}^{N} \left( \left\lfloor\frac{k_j}{2}\right\rfloor+1 \right)\).
Only the dichotomy \(m_j=0\) versus \(m_j>0\) survives the
\(H^1\)-reduction below.

\subsection{\texorpdfstring{Compatibility and \(H^1\)-reduction}{Compatibility and H1-reduction}}
\label{subsec:multisplit-H1}

Write a regular-locus section as
\(\psi=(\psi_0,\ldots,\psi_N)\).
Smoothness through \(o_{j,\alpha}\) requires
\[
  \psi_{j-1}^{(r)}(x_j^-)
  =
  \varepsilon_{j,\alpha}
  \psi_j^{(r)}(x_j^+)
  \qquad
  (r\geq0).
\]

\begin{theorem}[Local compatibility dichotomy]
\label{thm:multisplit-local-dichotomy}
At a transparent coordinate,
\[
  \psi_{j-1}^{(r)}(x_j^-)
  =
  \psi_j^{(r)}(x_j^+)
  \qquad
  (r\geq0).
\]
At an active coordinate,
\[
  \psi_{j-1}^{(r)}(x_j^-)
  =
  \psi_j^{(r)}(x_j^+)
  =
  0
  \qquad
  (r\geq0).
\]
\end{theorem}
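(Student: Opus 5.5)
The argument is purely local at the split coordinate \(x_j\), and the two cases reduce to elementary algebra on the family of smoothness relations displayed just before the statement. Fix \(j\) and \(r\geq0\). For each origin \(o_{j,\alpha}\), \(1\leq\alpha\leq k_j\), we have \(\psi_{j-1}^{(r)}(x_j^-)=\varepsilon_{j,\alpha}\psi_j^{(r)}(x_j^+)\). These relations come from smoothness in the \(\alpha\)-th exceptional chart after the left-side normalization. That chart is identity-glued on the regular locus, with fibre transition \(1\) on the left and \(\varepsilon_{j,\alpha}\) on the right. This is the same computation as in Theorem~\ref{thm:sign-restriction-images} and Subsection~\ref{subsec:Lk-Sobolev-collapse}, so I would only record that it applies verbatim at each \(x_j\). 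The reason is that the charts at distinct split coordinates have disjoint exceptional neighbourhoods (Proposition~\ref{prop:multisplit-basic}).

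In the transparent case the normalized sign vector \(\boldsymbol{\varepsilon}_j\) is constant. I would first note that its class in \(\{\pm1\}^{k_j}/\{\pm1\}\) is represented by the all-\(+1\) vector. The reason is that the simultaneous sign change is the residual local frame change on the right-hand interval \(I_j\), and it replaces \(\psi_j\) by \(-\psi_j\). After fixing this representative, every relation reads \(\psi_{j-1}^{(r)}(x_j^-)=\psi_j^{(r)}(x_j^+)\), which gives the claimed full-jet matching.

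In the active case, choose \(\alpha,\beta\) with \(\varepsilon_{j,\alpha}=+1\) and \(\varepsilon_{j,\beta}=-1\). Comparing the two relations gives \(\psi_j^{(r)}(x_j^+)=-\psi_j^{(r)}(x_j^+)\), so \(\psi_j^{(r)}(x_j^+)=0\). Substituting this back gives \(\psi_{j-1}^{(r)}(x_j^-)=0\). The result is independent of the chosen representative, since a common sign change preserves both the existence of such a pair and the vanishing conclusion.

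The only point needing care, and the main obstacle, is the normalization bookkeeping in the transparent case. The frame change that normalizes the sign at \(x_j\) acts on the whole interval \(I_j\), so it also affects the relations at \(x_{j+1}\). I would handle this by performing the normalizations sequentially from left to right. Theorem~\ref{thm:multisplit-spin-classification} permits this because the incidence graph is a line with no cycle constraint. Each normalization then only rescales the left-side data at the next coordinate, which is absorbed into the left-side normalization performed there.
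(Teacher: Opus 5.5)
Your proposal is correct and is essentially the paper's argument: in the active case, the two relations with opposite signs force the right jet to vanish, and then the left jet vanishes too. For the transparent case, you spell out the sequential, cycle-free frame normalization to the all-\(+1\) representative, which the paper leaves implicit with ``only the active case is nontrivial.''
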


\begin{proof}
Only the active case is nontrivial. Two different local signs give two
expressions for the same left jet; their difference forces the right jet to
vanish, and hence the left jet also vanishes.
\end{proof}

Define
\[
  \mathcal V_D
  :=
  \left\{
    \psi\in\bigoplus_{r=0}^{N}H^1(I_r):
    \begin{array}{ll}
      \psi_{j-1}(x_j^-)=\psi_j(x_j^+),
        &j\notin D,\\[1mm]
      \psi_{j-1}(x_j^-)=\psi_j(x_j^+)=0,
        &j\in D
    \end{array}
  \right\}.
\]

\begin{theorem}[\(H^1\)-reduction by the active set]
\label{thm:multisplit-H1-closure}
The smooth spinor core with active set \(D\) has \(H^1\)-closure
\[
  \mathcal V_D.
\]
The complete local sign vectors disappear at first Sobolev
order once the active subset is fixed.
\end{theorem}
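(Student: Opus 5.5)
We prove the two inclusions separately, following the pattern of Theorems~\ref{thm:L2-H1-core} and \ref{thm:Lk-H1-collapse}. Here the smooth spinor core is the space of compactly supported regular-locus families \(\psi=(\psi_0,\ldots,\psi_N)\) that are smooth up to each finite endpoint of each \(I_r\) and satisfy the full-jet relations at every \(o_{j,\alpha}\). Theorem~\ref{thm:multisplit-local-dichotomy} reduces these relations to two cases. At transparent coordinates \(j\notin D\) we have full-jet matching \(\psi_{j-1}^{(r)}(x_j^-)=\psi_j^{(r)}(x_j^+)\). At active coordinates \(j\in D\) both one-sided jets vanish to infinite order.

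\emph{Inclusion into \(\mathcal V_D\).} Each one-sided point evaluation \(\psi_r\mapsto\psi_r(x_j^\pm)\) is bounded on \(H^1(I_r)\). Hence \(\mathcal V_D\) is the common kernel of finitely many bounded functionals on \(\bigoplus_r H^1(I_r)\), and so it is closed. Every core element satisfies the zeroth-order relations defining \(\mathcal V_D\), because its order-zero jets match at transparent points and vanish at active ones. Therefore the closure lies in \(\mathcal V_D\).

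\emph{Density.} Let \(\psi\in\mathcal V_D\). The active points cut \(\mathbb R\) into maximal open intervals \(J_0,\ldots,J_m\), whose endpoints are the active coordinates \(x_j\), \(j\in D\), together with \(\pm\infty\). On each \(J_\ell\), continuity at the interior transparent points means that \(\psi|_{J_\ell}\) is a genuine \(H^1(J_\ell)\) function. It vanishes at every finite endpoint of \(J_\ell\), so \(\psi|_{J_\ell}\in H^1_0(J_\ell)\). By the standard density result, \(\psi|_{J_\ell}\) is an \(H^1\)-limit of \(\phi_n\in C_c^\infty(J_\ell)\).

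Each such \(\phi_n\), extended by zero, lies in the core. It is smooth across every transparent point inside \(J_\ell\), so its full jets match there. It vanishes identically near the active endpoints, so it is flat there. It is also compactly supported. Summing over \(\ell\) and using that there are finitely many intervals gives approximants to \(\psi\) in \(\bigoplus_r H^1(I_r)\).

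The final sentence of the theorem then follows directly. The description of \(\mathcal V_D\) involves only \(D\) and not the individual sign vectors \(\boldsymbol\varepsilon_j\).

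The only delicate point is the passage from an element of \(\mathcal V_D\) to \(H^1_0(J_\ell)\). One must use that an edgewise-\(H^1\) family which is continuous at the transparent points is globally \(H^1\) on \(J_\ell\), with no derivative jump contributing a delta term. This holds because weak derivatives of piecewise-\(H^1\) functions with matching traces glue, by integration by parts on each piece. Everything else is bookkeeping with finitely many intervals.
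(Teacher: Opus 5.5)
Your proposal is correct and follows the paper's own argument. Continuity of the one-sided value traces gives the inclusion into \(\mathcal V_D\). For density, you glue across the transparent points and approximate by \(C_c^\infty(J_\ell)\) on each component cut out by the active nodes. Your version supplies the details that the paper's two-line proof leaves implicit: closedness of \(\mathcal V_D\), the gluing of piecewise-\(H^1\) functions with matching traces, and flatness of the zero-extended approximants at active points.
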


\begin{proof}
Trace continuity gives one inclusion. For the converse, glue across
transparent coordinates and approximate independently on the components
separated by active nodes, where the traces vanish.
\end{proof}

The smooth regular-locus cores already collapse to the
\(2^N\) possibilities indexed by the active subset
\(D\subseteq\{1,\ldots,N\}\).

The same reduction holds for locally constant \(U(1)\)-transport: the
complete parameter space is
\(\prod_{j=1}^{N}U(1)^{k_j-1}\),
a local phase vector is transparent when it is constant modulo a common
phase, and it is active otherwise.

If the active coordinates are
\(a_1<\cdots<a_m\),
then transparent split points disappear under gluing and
\(\mathcal V_D \cong \bigoplus_{r=0}^{m}H^1_0(J_r)\),
where
\[
  J_0=(-\infty,a_1),
  \quad
  J_r=(a_r,a_{r+1}),
  \quad
  1\leq r\leq m-1,
  \quad
  J_m=(a_m,\infty).
\]
For \(D=\varnothing\), one recovers \(H^1(\mathbb R)\).

\subsection{Friedrichs dynamics and embedded cavities}
\label{subsec:multisplit-Friedrichs}
\label{subsec:multisplit-cavities}

The positive derivative form on \(\mathcal V_D\) gives
\(H_D = \bigoplus_{r=0}^{m} H_{D,J_r}\),
with the convention that \(D=\varnothing\) yields the free line. The
resulting second-order operator depends on the rank-one transport only
through the positions of the active defects.

If \(m\geq2\), the bounded components
\(J_r=(a_r,a_{r+1}), \qquad 1\leq r\leq m-1\),
have lengths \(L_r=a_{r+1}-a_r\) and standard Dirichlet eigenvalues
\(\lambda_{r,n} = \left( \frac{\pi n}{L_r} \right)^2, \qquad n\geq1\).
The corresponding eigenfunctions, extended by zero, are eigenfunctions of
the full operator. Since the two exterior components are Dirichlet
half-lines,
\(\sigma_{\mathrm{ess}}(H_D)=[0,\infty)\),
so these compactly supported cavity levels are embedded eigenvalues. Their
multiplicities are determined by coincidences among the interval Dirichlet
spectra.

Exterior scattering contains no further structure. For nonempty \(D\), a
wave incident from the left is reflected at \(a_1\):
\(R_L(p)=-e^{2ipa_1}, \qquad T_{L\to R}(p)=0\).
Additional active defects alter only the internal reducing components.

\subsection{First-order dynamics}
\label{subsec:multisplit-first-order}

For nonempty \(D\), the resolution-induced first-order operator is
\[
  D_{\min}
  =
  \bigoplus_{r=0}^{m}
  \left(
    -i\frac{d}{dx}
  \right),
  \qquad
  \mathcal D(D_{\min})
  =
  \bigoplus_{r=0}^{m}H^1_0(J_r).
\]
The left half-line contributes \((0,1)\), the right half-line contributes
\((1,0)\), and the \(m-1\) bounded components each contribute \((1,1)\).
Hence
\((n_+,n_-)=(m,m)\).
For \(m=0\), the closed derivative is the ordinary self-adjoint momentum.

Writing
\[
  \Psi_-
  =
  \bigl(
    \psi(a_1^-),\ldots,\psi(a_m^-)
  \bigr)^{\mathsf T},
  \qquad
  \Psi_+
  =
  \bigl(
    \psi(a_1^+),\ldots,\psi(a_m^+)
  \bigr)^{\mathsf T},
\]
the self-adjoint extensions are
\(\Psi_+=U\Psi_-, \qquad U\in U(m)\).
Diagonal \(U\) gives local phase reconnection at each defect; non-diagonal
\(U\) can couple traces belonging to different coordinates. These unitary
parameters are additional first-order operator data, not information fixed
by the original non-Hausdorff rank-one transport.

\section{Multiple-origin intervals}
\label{sec:multiple-origin-intervals}

On a bounded interval, the interior-supported smooth core imposes Dirichlet traces at \(0,L\), while active interior defects subdivide the spectrum.

\subsection{Closed domain and active-coordinate normal form}
\label{subsec:multiinterval-H1}

Fix
\(0<x_1<\cdots<x_N<L, \qquad k_j\geq2\),
and let \(\mathbb I_{\mathbf k}(\mathbf x;L)\) be obtained by replacing \(x_j\) with \(k_j\) inseparable origins. Its
Hausdorff reflection is \([0,L]\), and the local rank-one classification is
the restriction of
Theorem~\ref{thm:multisplit-spin-classification}. Let
\(D := \{j:x_j\text{ is active}\}\).

By Theorem~\ref{thm:multisplit-local-dichotomy}, transparent coordinates
impose ordinary matching and active coordinates impose zero one-sided
traces. The chosen boundary subcore adds
\(\psi(0)=\psi(L)=0\).
Hence
\[
  \overline{\mathcal C_D}^{\,H^1_\oplus}
  =
  \left\{
    \psi\in\bigoplus_{r=0}^{N}H^1(I_r):
    \begin{array}{l}
      \psi(0)=\psi(L)=0,\\
      \psi_{j-1}(x_j^-)=\psi_j(x_j^+),\ j\notin D,\\
      \psi_{j-1}(x_j^-)=\psi_j(x_j^+)=0,\ j\in D
    \end{array}
  \right\}.
\]

List the active coordinates as
\(0<a_1<\cdots<a_m<L, \qquad m=|D|\),
and put
\(a_0=0, \qquad a_{m+1}=L, \qquad J_r=(a_r,a_{r+1}), \qquad \ell_r=a_{r+1}-a_r\).
After gluing across transparent points,
\(\overline{\mathcal C_D}^{\,H^1_\oplus} \cong \bigoplus_{r=0}^{m}H^1_0(J_r)\).
The entire second-order problem is reduced to a Dirichlet subdivision
with
\(\ell_r>0, \qquad \sum_{r=0}^{m}\ell_r=L\).

\subsection{Spectral consequences of Dirichlet subdivision}
\label{subsec:multiinterval-spectrum}

The associated Friedrichs operator is
\(H_D = \bigoplus_{r=0}^{m} \left( -\frac{d^2}{dx^2} \right)_{D,J_r}\),
with domain
\(\bigoplus_{r=0}^{m} \left( H^2(J_r)\cap H^1_0(J_r) \right)\).
Accordingly,
\[
  \sigma(H_D)
  =
  \biguplus_{r=0}^{m}
  \left\{
    \left(
      \frac{\pi n}{\ell_r}
    \right)^2:
    n\in\mathbb N
  \right\}.
\]
The spatial ordering of the cells is absent from this direct-sum spectrum:
permuting the lengths gives a unitarily equivalent operator.

Two distinct cells contribute a common eigenvalue precisely when their
length ratio is rational. In particular, pairwise irrational ratios give
simple spectrum. Equal spacing,
\(\ell_0=\cdots=\ell_m=\frac{L}{m+1}\),
gives \(\lambda_n = \left( \frac{\pi n(m+1)}{L} \right)^2\) with multiplicity \(m+1\).

The ground-state energy is
\(E_0(D) = \frac{\pi^2}{\ell_{\max}^2}, \qquad \ell_{\max}:=\max_r\ell_r\).
Since
\(\ell_{\max}\geq\frac{L}{m+1}\),
one obtains
\(E_0(D) \leq \frac{\pi^2(m+1)^2}{L^2}\),
with equality exactly for equal spacing. Among configurations with
\(m\) active defects, equal spacing uniquely maximizes the ground-state
energy.

Insertion of additional active defects imposes further Dirichlet trace
constraints. If \(D_1\subseteq D_2\), the min--max principle gives
\(\lambda_n(D_1) \leq \lambda_n(D_2)\).
If \(D_2\) adds \(r\) active coordinates, the form-domain inclusion has
codimension \(r\), yielding the interlacing bound
\(\lambda_n(D_1) \leq \lambda_n(D_2) \leq \lambda_{n+r}(D_1)\).

\subsection{Counting function and heat trace}
\label{subsec:multiinterval-counting}
\label{subsec:multiinterval-heat}

The direct-sum form gives
\[
  N_D(\Lambda)
  =
  \sum_{r=0}^{m}
  \left\lfloor
    \frac{\ell_r\sqrt{\Lambda}}{\pi}
  \right\rfloor.
\]
Comparing with
\[
  N_\varnothing(\Lambda)
  =
  \left\lfloor
    \frac{L\sqrt{\Lambda}}{\pi}
  \right\rfloor
\]
and using the elementary floor inequality gives
\(0 \leq N_\varnothing(\Lambda)-N_D(\Lambda) \leq m\),
hence
\(N_D(\Lambda) = \frac{L}{\pi}\sqrt{\Lambda} + O(1)\).

Likewise, summing the standard Dirichlet interval heat expansions yields
\[
  \operatorname{Tr}(e^{-tH_D})
  =
  \frac{L}{\sqrt{4\pi t}}
  -
  \frac{m+1}{2}
  +
  O(e^{-c/t})
  \qquad
  (t\downarrow0).
\]
The leading Weyl coefficient retains only total length, whereas the constant
heat coefficient records the number of dynamically separated cells.

\subsection{First-order operator}
\label{subsec:multiinterval-first-order}

On the same closed trace domain, the resolution-induced first-order
operator is
\[
  D_{\min}
  =
  \bigoplus_{r=0}^{m}
  \left(
    -i\frac{d}{dx}
  \right),
  \qquad
  \mathcal D(D_{\min})
  =
  \bigoplus_{r=0}^{m}H^1_0(J_r).
\]
Each bounded component contributes deficiency indices \((1,1)\), so
\((n_+,n_-)=(m+1,m+1)\).
Even the unsplit interval has a one-parameter first-order
self-adjointness problem, and each active interior defect increases both
deficiency dimensions by one.

With left- and right-endpoint trace vectors
\[
  \Psi_L
  =
  \bigl(
    \psi(a_0^+),\ldots,\psi(a_m^+)
  \bigr)^{\mathsf T},
  \qquad
  \Psi_R
  =
  \bigl(
    \psi(a_1^-),\ldots,\psi(a_{m+1}^-)
  \bigr)^{\mathsf T},
\]
the self-adjoint extensions are
\(\Psi_R=U\Psi_L, \qquad U\in U(m+1)\).
These unitary parameters belong to the first-order completion, whereas the
positive Friedrichs operator above requires no additional extension choice.

\section{\texorpdfstring{The circle with \(k\) origins}{The circle with k origins}}
\label{sec:circle-k}

Assume throughout this section that
\(k\geq2\).
For \(k=1\) the construction is the ordinary circle and there is no mixed
origin sector.

Splitting a point on a circle differs from splitting a point on the line
because the regular locus remains connected. The local relative transport
data are accompanied by one global cycle holonomy. The resulting rank-one
theory has two coherent spin sectors and a large family of mixed sectors
that collapse to the same nodal interval after \(H^1\)-completion.

\subsection{Geometry and scalar reduction}
\label{subsec:circle-k-geometry}

Let \(S^1_L=\mathbb R/L\mathbb Z\) and replace one point \(p\) by \(k\) inseparable origins
\(p_1,\ldots,p_k\). Denote the resulting identity-glued manifold by
\(\mathbb C_k(L)\),
with natural projection
\(q:\mathbb C_k(L)\longrightarrow S^1_L\).
Its regular locus is
\(\mathbb C_k(L)^\circ\cong(0,L)\).

The standard exceptional charts show that \(\mathbb C_k(L)\) is compact,
connected, second-countable, \(T_1\), and locally Euclidean. Pairwise
inseparability of the \(p_i\) gives the Hausdorff universal property of
\(q\). \[
  C^\infty(\mathbb C_k(L))
  \cong
  C^\infty(S^1_L),
  \qquad
  L^2(\mathbb C_k(L))
  \cong
  L^2(S^1_L)
\]
for the normalized regular-locus measure. The scalar Hamiltonian is therefore
the ordinary periodic Laplacian on the circle for every finite \(k\geq2\).

\subsection{Rank-one transport and Sobolev trichotomy}
\label{subsec:circle-k-spin-classification}
\label{subsec:circle-k-H1}

Choose a spin frame on the regular interval and one near each origin. The
two components of the overlap at \(p_i\) carry signs
\(a_i^0,a_i^L\in\{\pm1\}\).
The gauge-invariant ratio \(\eta_i := (a_i^L)^{-1}a_i^0 \in\{\pm1\}\) records the transport through the \(i\)-th origin.

\begin{theorem}[Spin classification and Sobolev reduction]
\label{thm:circle-k-spin-classification}
The labelled spin structures are classified by
\[
  \boldsymbol{\eta}
  =
  (\eta_1,\ldots,\eta_k)
  \in
  \{\pm1\}^k,
\]
so
\[
  \mathscr S(\mathbb C_k(L))
  \cong
  (\mathbb Z_2)^k.
\]
Under orientation-preserving diffeomorphisms, the classes are determined by
the number of negative entries, giving \(k+1\) geometric orbits.

Their \(H^1\)-closures have exactly three forms:
\[
  H^1_+(0,L)
  :=
  \{\psi\in H^1(0,L):\psi(L)=\psi(0)\},
\]
\[
  H^1_-(0,L)
  :=
  \{\psi\in H^1(0,L):\psi(L)=-\psi(0)\},
\]
and
\[
  H^1_0(0,L).
\]
The first two occur when all \(\eta_i\) are respectively \(+1\) or
\(-1\); every mixed sign vector gives \(H^1_0(0,L)\).
\end{theorem}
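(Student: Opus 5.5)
The plan has three parts: the cocycle classification, the orbit count, and the Sobolev trichotomy. The first two imitate the proofs of Theorems~\ref{thm:L2-spin-classification} and~\ref{thm:Lk-labelled-spin}; the third imitates Theorems~\ref{thm:Lk-H1-collapse} and~\ref{thm:multisplit-H1-closure}.

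\emph{Classification.} Because \(\operatorname{Spin}(1)\) is discrete, a spin cocycle subordinate to the cover \(\{(0,L),U_{p_1},\ldots,U_{p_k}\}\) is given by locally constant signs. Each overlap \((0,L)\cap U_{p_i}\) has two components, near \(0\) and near \(L\), so the data are the \(2k\) signs \(a_i^0,a_i^L\). Pairwise overlaps \(U_{p_i}\cap U_{p_j}\) also have two components. Their transitions are forced by the cocycle condition through \((0,L)\), since every triple overlap lies in the regular interval.

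Gauge changes act as follows. A common sign \(c\) on the regular frame multiplies every \(a_i^0\) and \(a_i^L\) by \(c\). A sign \(b_i\) on the \(i\)-th origin frame multiplies the pair \((a_i^0,a_i^L)\) by \(b_i\). The quantity \(\eta_i=(a_i^L)^{-1}a_i^0\) is therefore invariant under both kinds of gauge change. Conversely, one can use \(b_i\) to normalise \(a_i^L=1\). After that normalisation the residual common gauge \(c\) is compensated by \(b_i=c\), so each \(\eta_i\) is recorded without any quotient. Unlike Theorem~\ref{thm:Lk-labelled-spin}, no common-sign identification appears, because the regular locus is connected and the holonomy around the cycle is gauge invariant. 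This gives \(\{\pm1\}^k\cong(\mathbb Z_2)^k\).

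\emph{Orbits.} Permutations of the origins are realised by orientation-preserving diffeomorphisms that are the identity on the regular locus, so the number of negative entries is an invariant of the orbit. I still need to check that it is a complete invariant, and that a rotation or any other orientation-preserving diffeomorphism fixing the exceptional point cannot change it. The argument: such a map preserves the regular interval and its endpoint labels \(0\) and \(L\), since an orientation-preserving self-map of \((0,L)\) extending over the origins keeps the two ends. It therefore permutes the \(\eta_i\) without flipping them. This yields \(k+1\) orbits.

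\emph{Sobolev trichotomy.} On the regular interval, smoothness through \(p_i\) requires \(\psi^{(r)}(0^+)=\eta_i\psi^{(r)}(L^-)\) for all \(r\ge0\), by the same one-sided jet argument as Lemma~\ref{lem:graph-smooth-gluing} with trivial \(\kappa\). There are three cases.
\begin{enumerate}[label=\textup{(\roman*)}]
  \item If all \(\eta_i=\pm1\) agree, the smooth core is the space of smooth functions on \([0,L]\) with full jets (anti)periodic. The value trace is continuous on \(H^1\), so the closure lies in \(H^1_\pm(0,L)\).
  \item For the reverse inclusion in case~(i), take \(u\in H^1_\pm\). Subtract a smooth (anti)periodic function \(w\) with the same endpoint values; for example \(w=c\cos(2\pi x/L)\) or \(w=c\cos(\pi x/L)\) scaled appropriately. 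Then \(u-w\in H^1_0(0,L)\), and \(C_c^\infty(0,L)\) is contained in the core, which gives the reverse inclusion.
  \item In the mixed case, two relations with \(\eta_i\ne\eta_j\) force all one-sided jets to vanish, as in Theorem~\ref{thm:multisplit-local-dichotomy}. The core then sits between \(C_c^\infty(0,L)\) and the flat functions, and its closure is \(H^1_0(0,L)\) by trace continuity and density of \(C_c^\infty\).
\end{enumerate}

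The only delicate point is the completeness of the orbit invariant in the second step. All other steps are routine trace and density arguments.
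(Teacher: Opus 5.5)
Your proposal is correct and follows the paper's route: the gauge-invariant and freely realisable ratios \(\eta_i\), permutation of origins for the orbit count, and then the jet relations \(\psi^{(r)}(L)=\eta_i\psi^{(r)}(0)\) followed by trace continuity and density, where your explicit (anti)periodic correction \(w\) supplies the density step the paper leaves implicit. One correction to your labelling: realising permutations by diffeomorphisms shows the number of negative entries is a \emph{complete} invariant (equal counts lie in one orbit), whereas your end-preservation argument is what shows it is an \emph{invariant} at all---so the ``delicate point'' is invariance rather than completeness, and you have in fact proved it.
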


\begin{proof}
The transition ratios are invariant under local frame changes, and every
choice of \(\boldsymbol{\eta}\) is realised. Permuting the origins is the
only orientation-preserving action on these labelled ratios, giving the
orbit count.

For a smooth regular-locus spinor, \(\psi^{(r)}(L) = \eta_i\psi^{(r)}(0) \qquad (r\geq0)\) for every \(i\). If all \(\eta_i\) agree, this gives periodic or
antiperiodic full-jet matching. If two ratios differ, both endpoint jets
vanish to every order. Trace continuity and standard smooth density then
give the three stated \(H^1\)-closures.
\end{proof}

The same argument for locally constant \(U(1)\)-valued transport replaces
\(\eta_i\) by phases \(z_i\in U(1)\). If all phases agree,
\(z_i=e^{i\theta}\),
the closed domain is the ordinary quasiperiodic space
\(H^1_\theta(0,L) := \{ \psi\in H^1(0,L): \psi(L)=e^{i\theta}\psi(0) \}\).
If the phase vector is nonconstant, the closure is \(H^1_0(0,L)\).
The specifically non-Hausdorff feature is the coexistence of incompatible local holonomies at one graph point through inseparable origins; circle holonomy itself is standard.

\subsection{First-order dynamics}
\label{subsec:circle-k-Dirac}

On the coherent domains,
\(D_\pm=-i\frac{d}{dx}, \qquad \mathcal D(D_\pm)=H^1_\pm(0,L)\),
are the standard self-adjoint periodic and antiperiodic momentum operators,
with spectra
\[
  \sigma(D_+)
  =
  \left\{
    \frac{2\pi n}{L}:n\in\mathbb Z
  \right\},
  \qquad
  \sigma(D_-)
  =
  \left\{
    \frac{(2n+1)\pi}{L}:n\in\mathbb Z
  \right\}.
\]
More generally, coherent \(U(1)\) holonomy \(e^{i\theta}\) gives
\[
  \sigma(D_\theta)
  =
  \left\{
    \frac{2\pi n+\theta}{L}:n\in\mathbb Z
  \right\}.
\]

Every mixed sector selects instead
\[
  D_{\mathrm{mix}}
  =
  -i\frac{d}{dx},
  \qquad
  \mathcal D(D_{\mathrm{mix}})=H^1_0(0,L).
\]
Its adjoint is the maximal derivative on \(H^1(0,L)\), and
\((n_+,n_-)=(1,1)\).
The self-adjoint extensions \(\psi(L)=e^{i\alpha}\psi(0)\) therefore introduce a phase \(\alpha\) not fixed by the mixed bundle
sector.

\subsection{Positive Hamiltonians}
\label{subsec:circle-k-positive}

Closing \(\mathfrak q[\psi] = \int_0^L|\psi'(x)|^2\,dx\) on the three spin domains yields the periodic, antiperiodic, and Dirichlet
realisations of \(-d^2/dx^2\). Their spectra are, respectively,
\[
  \left\{
    \left(
      \frac{2\pi n}{L}
    \right)^2:
    n\in\mathbb Z
  \right\},
\]
\[
  \left\{
    \left(
      \frac{(2n+1)\pi}{L}
    \right)^2:
    n\in\mathbb Z
  \right\},
\]
and
\[
  \left\{
    \left(
      \frac{\pi n}{L}
    \right)^2:
    n\in\mathbb N
  \right\}.
\]
The first two are the standard circle spectra: the periodic zero mode is
simple, every positive periodic eigenvalue has multiplicity two, and every
antiperiodic eigenvalue has multiplicity two. The Dirichlet spectrum is
simple. The third realisation is the non-Hausdorff mixed-sector reduction
produced by a vanishing endpoint equaliser.

For coherent flat \(U(1)\) transport,
\(\psi(L)=e^{i\theta}\psi(0), \qquad \psi'(L)=e^{i\theta}\psi'(0)\),
and
\(\lambda_n(\theta) = \left( \frac{2\pi n+\theta}{L} \right)^2\).
Noncoherent phase transport again yields the Dirichlet interval
realisation. Coherent transport preserves the cycle and its holonomy,
whereas incompatible local transport cuts the cycle at the level of the
closed form domain.

\section{Several split points on a circle}
\label{sec:multisplit-circle}

Several split fibres combine local nodal constraints with one global circle holonomy; any active fibre cuts the \(H^1\)-cycle and removes that global phase.

\subsection{Geometry and spin classification}
\label{subsec:multisplit-circle-spin-data}

Let \(S^1_L=\mathbb R/L\mathbb Z\) and choose split coordinates \(p_1,\ldots,p_N\) in cyclic order. Replace \(p_j\) by \(k_j\geq2\) inseparable origins and
denote the resulting identity-glued manifold by
\(\mathbb C_{\mathbf k}(\mathbf p;L)\).
Its Hausdorff reflection is \(S^1_L\), and its regular locus consists of the
\(N\) open arcs \(A_1,\ldots,A_N\) between successive split points. As in the line case, the standard
exceptional charts make the space compact, connected, second-countable,
\(T_1\), and locally Euclidean; scalar smooth functions and the normalized
scalar \(L^2\)-space reduce to those of the ordinary circle.

Choose a reference origin \(p_{j,1}\) in each exceptional fibre. If
\(\varepsilon_{j,\alpha}\in\{\pm1\}\) denotes the transition from the
incoming to the outgoing adjacent arc through \(p_{j,\alpha}\), define
\[
  \delta_{j,\alpha}
  :=
  \varepsilon_{j,\alpha}
  \varepsilon_{j,1}^{-1},
  \qquad
  \alpha=2,\ldots,k_j,
\]
and
\(h := \prod_{j=1}^{N} \varepsilon_{j,1} \in\{\pm1\}\).
The \(\delta_{j,\alpha}\) are the local relative signs and \(h\) is the
remaining cyclic invariant.

\begin{theorem}[Spin classification on a multi-split circle]
\label{thm:multisplit-circle-spin-classification}
The labelled spin structures are classified by
\[
  \mathbb Z_2
  \times
  \prod_{j=1}^{N}
  (\mathbb Z_2)^{k_j-1},
\]
and therefore
\[
  \#\mathscr S
  =
  2^{
    1+\sum_{j=1}^{N}(k_j-1)
  }.
\]
\end{theorem}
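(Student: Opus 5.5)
The plan is to compute the cocycle classification directly, as in Theorems~\ref{thm:Lk-labelled-spin} and~\ref{thm:circle-k-spin-classification}. I will reduce all transition data to signs and then identify exactly the gauge freedom. I use the atlas consisting of one chart for each regular arc \(A_j\) and one centred exceptional chart \(U_{j,\alpha}\) for each origin \(p_{j,\alpha}\). Since \(\operatorname{Spin}(1)\) is discrete and \(SO(1)\) is trivial, a spin structure is a locally constant \(\{\pm1\}\)-valued Čech cocycle on this cover.

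The first step is to determine where nontrivial overlaps occur. Each exceptional chart \(U_{j,\alpha}\) meets only the two adjacent arcs, in one punctured half-collar each. Two exceptional charts in the same fibre overlap in the two punctured half-collars, which are contained in the arcs. Triple overlaps therefore impose that the transition \(U_{j,\alpha}\to U_{j,\beta}\) on each side equals the product of the transitions through the adjacent arc. Consequently, the whole cocycle is determined by the signs \(a_{j,\alpha}^{\mathrm{in}}, a_{j,\alpha}^{\mathrm{out}}\) between \(U_{j,\alpha}\) and the incoming and outgoing arcs. All cocycle conditions are then automatic, because the only triple overlaps involve one arc and two charts of the same fibre. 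The gauge group is \(\{\pm1\}^{N}\times\prod_j\{\pm1\}^{k_j}\), consisting of one sign per arc and one per origin chart.

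The second step is to gauge away the local data. A sign \(c_{j,\alpha}\) on \(U_{j,\alpha}\) multiplies both \(a^{\mathrm{in}}_{j,\alpha}\) and \(a^{\mathrm{out}}_{j,\alpha}\), so it can normalise \(a^{\mathrm{in}}_{j,\alpha}=1\). This leaves the gauge-invariant ratios \(\varepsilon_{j,\alpha}\), the transport from the incoming to the outgoing arc through \(p_{j,\alpha}\). Arc signs \(b_j\) then act by \(\varepsilon_{j,\alpha}\mapsto b_{j}\varepsilon_{j,\alpha}b_{j-1}^{-1}\), with a common factor for all origins in one fibre. Hence the relative signs \(\delta_{j,\alpha}=\varepsilon_{j,\alpha}\varepsilon_{j,1}^{-1}\) are invariant. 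The product \(h=\prod_j\varepsilon_{j,1}\) is also invariant, since the arc factors telescope around the cycle.

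The third step is to show that these invariants are complete and independent. The arc gauge \(b\in\{\pm1\}^N\) acting on \((\varepsilon_{j,1})_j\in\{\pm1\}^N\) is the coboundary action for the cycle graph with \(N\) vertices and \(N\) edges. Its orbits are exactly the fibres of the holonomy product \(h\). Given two sign families with the same \(h\), I will construct \(b\) recursively along the cycle, and the last consistency condition reduces precisely to equality of \(h\). Every tuple \((h,(\delta_{j,\alpha}))\) is realised by taking \(\varepsilon_{j,1}=1\) for \(j<N\), \(\varepsilon_{N,1}=h\), and \(\varepsilon_{j,\alpha}=\delta_{j,\alpha}\varepsilon_{j,1}\). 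This gives the bijection with \(\mathbb Z_2\times\prod_j(\mathbb Z_2)^{k_j-1}\) and the count \(2^{1+\sum_j(k_j-1)}\).

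The main obstacle is the bookkeeping in the first step: I must check that no further gauge or cocycle relation couples the origins. In particular, any gauge that fixes the normalisation \(a^{\mathrm{in}}=1\) must have \(c_{j,\alpha}=b_{j-1}\). This is exactly why only the arc gauge remains and acts uniformly on a fibre.
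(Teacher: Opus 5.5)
Your proposal is correct and follows the same route as the paper. Both arguments show that the local ratios \(\delta_{j,\alpha}\) and the cyclic product \(h\) are invariant under arc-frame changes, and both prove completeness by choosing the arc gauges recursively around the circle once the discrepancy factors \(\lambda_j=\widetilde\varepsilon_{j,1}\varepsilon_{j,1}^{-1}\) have product one. What you add is explicit: the \v{C}ech bookkeeping that absorbs the origin-chart gauges into the through-origin transports \(\varepsilon_{j,\alpha}\), and the realisability of every tuple \((h,(\delta_{j,\alpha}))\), both of which the paper leaves implicit.
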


\begin{proof}
The local ratios and the product \(h\) are invariant under changes of arc
frames. Conversely, two transition systems with the same invariants differ
by factors \(\lambda_j = \widetilde{\varepsilon}_{j,1} \varepsilon_{j,1}^{-1}\) whose product is one; the arc gauges may therefore be chosen recursively
around the circle to identify the two systems.
\end{proof}

Call the fibre over \(p_j\) \emph{active} when at least one
\(\delta_{j,\alpha}\neq1\), and let
\(A := \{j:p_j\text{ is active}\}, \qquad m:=|A|\).

\subsection{Cycle cutting and Sobolev reduction}
\label{subsec:multisplit-circle-H1}

Let \(\psi_j\) denote the spinor on \(A_j\). Smooth compatibility through
\(p_{j,\alpha}\) requires
\[
  \psi_j^{(r)}(p_j^+)
  =
  \varepsilon_{j,\alpha}
  \psi_{j-1}^{(r)}(p_j^-)
  \qquad
  (r\geq0).
\]
If the fibre is coherent, all local transports agree and give one sign
relation. If it is active, two incompatible relations force both one-sided
jets to vanish.

If \(A=\varnothing\), all local coherent signs can be gauged away except
their product \(h\). The resulting \(H^1\)-domain is therefore
\(H^1_+(0,L) \quad\text{for }h=+1, \qquad H^1_-(0,L) \quad\text{for }h=-1\).

The labelled spin structures produce exactly \(2^N+1\) smooth and \(H^1\)-core types: two coherent sectors and one type for each
nonempty active subset.

If \(A\neq\varnothing\), list the active points in cyclic order as
\(a_1,\ldots,a_m, \qquad a_{m+1}:=a_1+L\),
and define
\(J_r=(a_r,a_{r+1}), \qquad \ell_r=a_{r+1}-a_r\).
Then
\(\overline{\mathcal C_A}^{\,H^1} \cong \bigoplus_{r=1}^{m}H^1_0(J_r)\).
All coherent transition signs on the resulting path components are gauge
removable, so neither the global holonomy nor the detailed active sign
patterns survive after a nodal cut.

Accordingly, the positive derivative form gives the periodic or
antiperiodic circle Laplacian when \(A=\varnothing\), and \(H_A = \bigoplus_{r=1}^{m}H_{D,J_r}\) when \(A\neq\varnothing\).

The same statement holds for locally constant \(U(1)\)-transport. With one
reference phase per fibre, the complete family is
\(U(1)^{ 1+\sum_{j=1}^{N}(k_j-1) }\).
If every fibre is locally coherent, only the total holonomy \(Z=e^{i\theta}\) survives and gives the quasiperiodic circle domain. If at least one fibre is
active, the zero trace cuts the cycle and \(Z\) disappears from the
resulting \(H^1\)-domain.

\subsection{Spectral and first-order consequences}
\label{subsec:multisplit-circle-spectrum}
\label{subsec:multisplit-circle-first-order}

Assume \(A\neq\varnothing\) and \(|A|=m\). The Friedrichs spectrum is
\[
  \sigma(H_A)
  =
  \biguplus_{r=1}^{m}
  \left\{
    \left(
      \frac{\pi n}{\ell_r}
    \right)^2:
    n\in\mathbb N
  \right\}.
\]
Pairwise irrational length ratios give simple spectrum, whereas equal
spacing,
\(\ell_r=\frac{L}{m}\),
gives \(\lambda_n = \left( \frac{\pi nm}{L} \right)^2\) with multiplicity \(m\).

Writing
\(\ell_{\max}:=\max_r\ell_r\),
the ground-state energy is
\(E_0(A)=\frac{\pi^2}{\ell_{\max}^2} \leq \left( \frac{\pi m}{L} \right)^2\),
with equality exactly for equal spacing. The heat trace satisfies
\[
  \operatorname{Tr}(e^{-tH_A})
  =
  \frac{L}{\sqrt{4\pi t}}
  -
  \frac{m}{2}
  +
  O(e^{-c/t}),
\]
while the coherent periodic or antiperiodic circle has no constant boundary
term.

For the first-order operator, the coherent periodic and antiperiodic domains
are self-adjoint. An active set gives
\[
  D_{\min}
  =
  \bigoplus_{r=1}^{m}
  \left(
    -i\frac{d}{dx}
  \right),
  \qquad
  \mathcal D(D_{\min})
  =
  \bigoplus_{r=1}^{m}H^1_0(J_r),
\]
and therefore
\((n_+,n_-)=(m,m)\).
Its self-adjoint extensions are parametrized by \(U(m)\). Those extension
parameters may reconnect the cuts, but they are additional first-order
operator choices and are not fixed by the original spin or flat-unitary
transport.

The circle models therefore isolate the global effect of a surviving cycle:
rank-one holonomy remains visible precisely while the closed domain contains
an uncut cycle. Branching changes the Hausdorff propagation graph itself.

\section{The non-Hausdorff branching line}
\label{sec:Bk}

For the branching line, the Hausdorff reflection is a star graph, so the non-Hausdorff incidence becomes scalar channel structure rather than only exceptional-fibre multiplicity.

A second feature is supplied by measure. If the resolving sheets carry
positive densities \(c_j\,dx\), their pushforward gives the common edge the
sum of the sheet weights while retaining the individual weights on the
outgoing arms. The resulting balance law makes the common channel exactly
reflectionless for the resolution-induced weighted Kirchhoff Hamiltonian. The same
balance makes the naturally oriented first-order derivative symmetric, but
its deficiency indices are
\((k-1,0)\).
Genuine branching separates second-order flux conservation from
first-order self-adjoint completeness.

\subsection{Geometry, Hausdorff reflection, and scalar closure}
\label{subsec:Bk-definition}
\label{subsec:Bk-reflection}
\label{subsec:Bk-smooth-scalars}
\label{subsec:Bk-versus-Lk}

Fix
\(k\geq2\).
Let \(\widetilde{\mathbb B}_k := \bigsqcup_{j=1}^{k}\mathbb R_j\) be the disjoint union of \(k\) labelled copies of the real line. Define
\(\mathbb B_k := \widetilde{\mathbb B}_k/\!\sim\),
let \(\varrho_k:\widetilde{\mathbb B}_k\longrightarrow\mathbb B_k\) be the quotient map, and impose the equivalence relation
\((x,i)\sim(x,j) \qquad \text{for }x<0\).
No identifications are imposed for \(x\geq0\).

For \(x<0\), write \([x]\) for the common equivalence class. At \(x=0\)
there are \(k\) distinct points
\(b_j:=[(0,j)], \qquad j=1,\ldots,k\),
and for \(x>0\) write \([x]_j\) for the point on the \(j\)-th positive branch.

A neighbourhood of \(b_j\) is
\[
  U_j(\varepsilon)
  =
  \{b_j\}
  \cup
  \{[x]:-\varepsilon<x<0\}
  \cup
  \{[x]_j:0<x<\varepsilon\},
\]
with coordinate
\(\varphi_j([x])=x, \qquad \varphi_j(b_j)=0, \qquad \varphi_j([x]_j)=x\).
Together with the ordinary coordinates away from the exceptional fibre,
these charts define the standard smooth structure.

\begin{proposition}
\label{prop:Bk-topology}
The space \(\mathbb B_k\) is a connected, second-countable, \(T_1\), smooth
one-manifold. Its only non-Hausdorff pairs are the distinct points
\[
  b_1,\ldots,b_k,
\]
which are pairwise nonseparable.
\end{proposition}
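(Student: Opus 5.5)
The plan is to verify the listed properties directly from the quotient description \(\mathbb B_k=\widetilde{\mathbb B}_k/\!\sim\). I will mirror the earlier arguments for \(\mathbb L_k\) and \(\mathbb L_{\mathbf k}(\mathbf x)\) in Proposition~\ref{prop:multisplit-basic}.

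First I record that the equivalence relation is open. The saturation of an open set \(W\subseteq\mathbb R_i\) is \(W\) together with the copies of \(W\cap(-\infty,0)\) in every \(\mathbb R_j\), and this is open. Hence \(\varrho_k\) is an open map. Each restriction \(\varrho_k|_{\mathbb R_j}\) is injective, so it is a homeomorphism onto its open image \(U^{(j)}:=\varrho_k(\mathbb R_j)\).

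The charts \(\varphi_j\) on \(U_j(\varepsilon)\) are restrictions of these inverses, and the same holds for the ordinary coordinates on \(\{[x]:x<0\}\) and on each positive branch. This gives local Euclideanity. Every overlap of two such charts lies over a subset of \(\mathbb R\setminus\{0\}\), and there every transition map is the identity \(x\mapsto x\). The atlas is therefore smooth.

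Second countability follows because \(\mathbb B_k\) is a union of the \(k\) open sets \(U^{(j)}\cong\mathbb R\). Connectedness follows because each \(U^{(j)}\) is connected and all of them contain the connected set \(\{[x]:x<0\}\). For \(T_1\) it suffices that points are closed. Every fibre of \(\varrho_k\) is finite, and each preimage meets each \(\mathbb R_j\) in at most one point, so the preimage of a point is closed in \(\widetilde{\mathbb B}_k\).

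For the separation statement I treat pairs by cases. Two distinct points among the \(b_j\) are nonseparable: every neighbourhood of \(b_i\) contains some \(\{[x]:-\varepsilon<x<0\}\), and the same is true for \(b_j\), so any two neighbourhoods meet. All remaining pairs are separated by open sets, in four cases.
\begin{enumerate}[label=\textup{(\roman*)}]
  \item Points over distinct coordinates of \(\mathbb R\): use coordinate neighbourhoods whose \(x\)-ranges are disjoint.
  \item Two points \([x]_i\ne[y]_j\) with \(x,y>0\) on different branches: use small intervals inside the respective open positive branches, which are disjoint for \(i\ne j\).
  \item A point \(b_j\) and a point \([x]\) with \(x<0\): these lie over distinct coordinates, so this reduces to case (i).
  \item A point \(b_j\) and a point \([x]_i\) with \(x>0\): use \(U_j(\varepsilon)\) with \(\varepsilon<x\) and a small interval around \([x]_i\) inside its positive branch.
\end{enumerate}
Together these cases show that the \(b_j\) are exactly the non-Hausdorff pairs.

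The only step needing care is openness of \(\varrho_k\). This is where the asymmetric identification (only for \(x<0\)) matters: a saturation must never add the point \(0\) or any positive points. Once openness is in place, every remaining claim is a routine check.
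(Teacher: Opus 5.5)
Your proposal is correct and follows essentially the same direct verification as the paper: charts coming from the sheet coordinates, a common negative interval that makes the \(b_j\) pairwise nonseparable, closed singletons, separating a regular point from each \(b_j\) by a sufficiently short collar, and connectedness through the common negative branch. You give more detail than the paper, namely openness of \(\varrho_k\), second countability via the \(k\) open copies \(\varrho_k(\mathbb R_j)\cong\mathbb R\), and an explicit case split; your cases (iii) and (iv) are already covered by (i), because the \(x\)-coordinate descends to a continuous map \(\mathbb B_k\to\mathbb R\).
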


\begin{proof}
The displayed charts give local Euclideanity, and second countability follows
from the finite exceptional set and a countable basis on the regular locus.
Any two branch-point neighbourhoods contain a common negative interval, so
the \(b_j\) cannot be separated. Away from the exceptional fibre the space
is locally Hausdorff. Every singleton is closed: this is immediate on the
regular locus, while \(\mathbb B_k\setminus\{b_j\}\) is open by the displayed
exceptional charts. A regular point can be separated from every \(b_j\) by
choosing the exceptional collar shorter than its distance from the origin,
so no further non-Hausdorff pairs occur. Connectedness follows from the common
negative branch.
\end{proof}

The regular locus is
\(\mathbb B_k^\circ = (-\infty,0) \sqcup \bigsqcup_{j=1}^{k}(0,\infty)_j\).
Hence
\(\#\pi_0(\mathbb B_k^\circ)=k+1\).

Let \(\Gamma_k\) be the metric star graph consisting of one negative
half-line \(e_0\) and \(k\) positive half-lines \(e_1,\ldots,e_k\) joined at a vertex \(v\). Define \(\pi:\mathbb B_k\longrightarrow\Gamma_k\) by sending the common negative branch to \(e_0\), every \(b_j\) to \(v\),
and the \(j\)-th positive branch to \(e_j\).

\begin{theorem}[Hausdorff reflection of the branching line]
\label{thm:Bk-Hausdorff-reflection}
The map
\[
  \pi:\mathbb B_k\longrightarrow\Gamma_k
\]
is the Hausdorff reflection.
\end{theorem}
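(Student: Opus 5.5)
The proof follows the pattern of Proposition~\ref{prop:L2-hausdorff-factorisation}. It has three parts. First, \(\pi\) is a continuous surjection onto a Hausdorff space. Second, every continuous map from \(\mathbb B_k\) to a Hausdorff space is constant on the fibres of \(\pi\). Third, \(\pi\) is a quotient map, so the induced factorisation is unique and continuous.

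\emph{Continuity and surjectivity.} Away from the exceptional fibre, \(\pi\) is the identity in the obvious edge coordinates. In the chart \(\varphi_j\) near \(b_j\), \(\pi\) sends the coordinate \(x\in(-\varepsilon,\varepsilon)\) to the point at distance \(|x|\) from \(v\): on \(e_0\) for \(x<0\) and on \(e_j\) for \(x>0\). This map is continuous into the star graph. Thus \(\pi\) restricted to \(U_j(\varepsilon)\) is a homeomorphism onto the open arc \(e_0\cup\{v\}\cup e_j\) near \(v\). Surjectivity is clear, and \(\Gamma_k\) is Hausdorff because it is a metric space.

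\emph{Fibre constancy.} The only nontrivial fibre is \(\pi^{-1}(v)=\{b_1,\ldots,b_k\}\). Let \(F:\mathbb B_k\to Y\) be continuous with \(Y\) Hausdorff, and suppose \(F(b_i)\neq F(b_j)\). Disjoint neighbourhoods of \(F(b_i)\) and \(F(b_j)\) would pull back to disjoint neighbourhoods of \(b_i\) and \(b_j\). This contradicts Proposition~\ref{prop:Bk-topology}, since any two neighbourhoods contain a common negative interval. Hence \(F=\widehat F\circ\pi\) for a unique set map \(\widehat F\).

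\emph{Quotient property.} This is the step requiring a little care, because the topology at the vertex of \(\Gamma_k\) must match the quotient topology. Let \(W\subseteq\Gamma_k\) be such that \(\pi^{-1}(W)\) is open. We must show \(W\) is open. Away from \(v\) this holds because \(\pi\) is a local homeomorphism there. If \(v\in W\), then every \(b_j\in\pi^{-1}(W)\), so for each \(j\) some \(U_j(\varepsilon_j)\subseteq\pi^{-1}(W)\). Take \(\varepsilon=\min_j\varepsilon_j\), which is positive because the set of \(j\) is finite. Then \(W\) contains the metric ball of radius \(\varepsilon\) about \(v\), which is the union of the sets \(\pi(U_j(\varepsilon))\). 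So \(W\) is open. Consequently \(\widehat F\) is continuous whenever \(F\) is, and it is unique because \(\pi\) is surjective. This gives the universal property of the Hausdorff reflection.
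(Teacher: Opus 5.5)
Your proposal is correct and follows essentially the same route as the paper: continuity and surjectivity, the quotient property via the minimum of the finitely many collar radii at the \(b_j\), and fibre constancy from pairwise inseparability of the \(b_j\), which together give the universal factorisation. You also make explicit two points the paper leaves implicit: that \(\Gamma_k\) is Hausdorff, and why the quotient property makes the induced map continuous.
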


\begin{proof}
The map is continuous and surjective. It is also a quotient map. Indeed, let
\(A\subseteq\Gamma_k\) and suppose \(\pi^{-1}(A)\) is open. Openness of
\(A\) away from \(v\) follows from the edgewise homeomorphism. If \(v\in A\),
openness at the finitely many points \(b_j\) supplies positive collar radii;
their minimum gives a full star neighbourhood of \(v\) contained in \(A\).

Every continuous map from \(\mathbb B_k\) to a Hausdorff space must identify
the pairwise inseparable points \(b_1,\ldots,b_k\). These are precisely the
nontrivial fibres of \(\pi\), so the required factorisation follows from the
quotient property.
\end{proof}

Unlike the line with \(k\) origins,
\(\Gamma_k\not\cong\mathbb R \qquad (k>1)\).
The same exceptional-fibre cardinality can therefore correspond either to
two scalar propagation regions, as for \(\mathbb L_k\), or to \(k+1\)
regions, as here.

A smooth scalar function on \(\mathbb B_k\) is represented by functions \(f_j\in C^\infty(\mathbb R_j)\) that agree for \(x<0\). Smoothness through the branch points gives
\(f_0^{(r)}(0^-) = f_1^{(r)}(0^+) = \cdots = f_k^{(r)}(0^+) \qquad (r\geq0)\).
The descended smooth scalar core has full jet matching at the star
vertex.

At first Sobolev order only the common value remains.

\begin{corollary}[First-order scalar closure]
\label{cor:Bk-H1-closure}
The \(H^1\)-closure of the descended smooth scalar core is
\[
  H^1_{\mathrm{cont}}(\Gamma_k)
  =
  \left\{
    u\in
    H^1(e_0)\oplus
    \bigoplus_{j=1}^{k}H^1(e_j):
    u_0(0)=u_1(0)=\cdots=u_k(0)
  \right\}.
\]
\end{corollary}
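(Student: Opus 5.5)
The plan is to obtain the statement as a direct instance of Corollary~\ref{cor:universal-first-order-graph-closure}, which gives \(W^{1,p}_{\mathrm{cont}}(\Gamma)\) as the first-order closure for every finite smooth graph resolution. Taking \(p=2\) then yields the displayed space. The only real work is to verify that \((\mathbb B_k,\Gamma_k,\pi)\) satisfies Definition~\ref{def:finite-smooth-graph-resolution}.

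\emph{Conditions (i)--(iv).} Condition (i) is Proposition~\ref{prop:Bk-topology}, and (ii) is Theorem~\ref{thm:Bk-Hausdorff-reflection}. For (iii), the exceptional set \(B_\pi=\{b_1,\dots,b_k\}\) is finite, and \(\pi\) restricted to the regular locus is the edgewise identity onto \(\Gamma_k\setminus\{v\}\). For (iv), the preimage of a compact \(K\subseteq\Gamma_k\) is a finite union of compact pieces: one interval on the negative branch, one on each positive branch, and finitely many branch points. Each such preimage is compact in the quotient topology, since finitely many compact sets cover it.

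\emph{Conditions (v)--(vi).} At \(b_j\) I use the chart \(\varphi_j\). Its two punctured sides map onto collars of the edge ends of \(e_0\) and \(e_j\), which are distinct ends. In the outward coordinates \(t=-x\) on \(e_0\) and \(t=x\) on \(e_j\), both half-germs \(\kappa_{b_j,\pm}\) equal the identity, so they are smooth with derivative \(1\). Every edge end is represented: \(e_0\) at every \(b_j\), and \(e_j\) at \(b_j\). This gives (vi).

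\emph{Alternative route.} One can bypass the corollary. Trace continuity gives the inclusion of the closure into \(H^1_{\mathrm{cont}}(\Gamma_k)\). For the converse, take \(u\) with common vertex value \(c\). Subtract \(c\,\chi\circ\pi\), where \(\chi\) is a cutoff equal to one near the vertex and constant on each edge nearby; this lies in \(\mathscr A_{\mathbb B_k}\). The remainder has zero traces at every edge end, so Proposition~\ref{prop:one-dimensional-trace-package} approximates it by functions in \(C_c^\infty(\Gamma_k\setminus\{v\})\subset\mathscr A_{\mathbb B_k}\).

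There is no genuine obstacle. The only point needing care is the orientation bookkeeping in the half-germs, namely that the outward coordinate on \(e_0\) is \(-x\). That sign does not affect the zeroth-order equaliser anyway, by Proposition~\ref{prop:zeroth-order-prolongation}.
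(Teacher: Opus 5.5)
Your proposal is correct and matches the paper, which obtains this corollary as the star-graph specialization of Corollary~\ref{cor:universal-first-order-graph-closure} with \(p=2\). Your check of conditions (i)--(vi) of Definition~\ref{def:finite-smooth-graph-resolution}, including the identity half-germs \(\kappa_{b_j,\pm}\) and properness of \(\pi\), makes explicit the hypothesis verification the paper leaves implicit. Your alternative route is also sound: it is the \(m=1\) case of the proof of Theorem~\ref{thm:exact-graph-resolution-closure}, with one notational correction, namely that the element of \(\mathscr A_{\mathbb B_k}\) you subtract is \(c\chi\) on \(\Gamma_k\), and \(c\chi\circ\pi\) is its pullback in \(C_c^\infty(\mathbb B_k)\).
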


This is the star-graph specialization of
Corollary~\ref{cor:universal-first-order-graph-closure}.

\subsection{Pushforward measure and the resolution-induced Hamiltonian}
\label{subsec:Bk-measure}
\label{subsec:Bk-Hamiltonian}
\label{subsec:Bk-topology-measure}

Equip the labelled resolving presentation
\(\varrho_k:\widetilde{\mathbb B}_k\to\mathbb B_k\) with the positive sheet
measures
\(d\mu_j=c_j\,dx, \qquad c_j>0\).
Together with \(\pi:\mathbb B_k\to\Gamma_k\), this is a marked analytic graph
resolution in the sense of
Definition~\ref{def:marked-analytic-graph-resolution}. Its graph measure is
\(\mu_{\Gamma_k} = (\pi\circ\varrho_k)_* \left( \bigoplus_{j=1}^k\mu_j \right)\).
Pushforward to \(\Gamma_k\) gives the common negative edge the weight
\(a_0 = \sum_{j=1}^{k}c_j\),
while the \(j\)-th positive arm has
\(a_j=c_j\).

Hence
\(a_0=\sum_{j=1}^{k}a_j\).

\begin{theorem}[Resolution-induced balance]
\label{thm:Bk-weight-balance}
For every positive choice of resolving-sheet densities,
\[
  a_0
  =
  \sum_{j=1}^{k}a_j.
\]
\end{theorem}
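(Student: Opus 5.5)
The plan is to compute the Radon--Nikodym density of \(\mu_{\Gamma_k}=(\pi\circ\varrho_k)_*\bigl(\bigoplus_j\mu_j\bigr)\) separately on each open edge of \(\Gamma_k\), directly from the definition of pushforward. No analytic input is needed beyond the explicit form of the quotient map; the single structural fact to record is how many resolving sheets lie over each edge.

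First I would describe the fibres of \(\pi\circ\varrho_k\) over the regular part of the graph. For a point \(t\) in the interior of the negative edge \(e_0\), the preimage in \(\widetilde{\mathbb B}_k\) consists of the \(k\) points \((x,j)\), \(j=1,\ldots,k\), with \(x<0\) the corresponding coordinate, since the relation \((x,i)\sim(x,j)\) for \(x<0\) identifies all of them with the same class \([x]\) and \(\pi\) sends \([x]\) to the point of \(e_0\) at that coordinate. For a point of the open positive arm \(e_j\), the preimage is the single point \((x,j)\) with \(x>0\), because no identifications are imposed for \(x\ge0\). The vertex \(v\) has preimage \(\{(0,1),\ldots,(0,k)\}\), a finite set and hence \(\widetilde\mu\)-null.

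Next, for a Borel set \(B\) contained in the open edge \(e_0\), with coordinate image \(B'\subseteq(-\infty,0)\), I would write
\[
 \mu_{\Gamma_k}(B)=\sum_{j=1}^k\mu_j\bigl(B'\times\{j\}\bigr)=\sum_{j=1}^k c_j\,|B'|,
\]
because on each sheet the restriction of \(\pi\circ\varrho_k\) to the negative half-line is the identity in the edge coordinate. This shows that the density on \(e_0\) is the constant \(a_0=\sum_j c_j\). The same computation on a Borel set \(B\subseteq e_j\) involves only the sheet \(\mathbb R_j\) and gives \(a_j=c_j\). These densities are constant, so the marked resolution falls under the constant-edge case of Definition~\ref{def:marked-analytic-graph-resolution}, and substituting \(a_j=c_j\) yields \(a_0=\sum_{j=1}^k a_j\).

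The only step needing care is orientation and edge-coordinate bookkeeping. The edge coordinate on \(e_0\) may run as \(-x\) rather than \(x\); a reflection preserves Lebesgue measure, so the density is unaffected. I do not expect a genuine obstacle, since the identity is a counting statement: every sheet passes over the common edge, and each arm carries exactly one sheet. Finally, I would note that the conclusion holds for every positive choice of the \(c_j\), with no relation among them required, exactly as the statement asserts.
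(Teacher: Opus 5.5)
Your proposal is correct and matches the paper's argument, which obtains the identity directly from the pushforward: all \(k\) sheets lie over the common negative edge, giving \(a_0=\sum_j c_j\), while each positive arm carries only its own sheet, giving \(a_j=c_j\). Your fibre count, the remark that the vertex fibre is \(\widetilde\mu\)-null, and the note on edge orientation simply spell out the same computation in more detail.
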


The equality is a pushforward identity. It is not an independently imposed
vertex condition. A common positive rescaling of the \(c_j\) leaves the
relative quantum dynamics unchanged.

The scalar Hilbert space is
\(\mathcal H_k = L^2(e_0,a_0\,dx) \oplus \bigoplus_{j=1}^{k} L^2(e_j,a_j\,dx)\),
and the positive kinetic form on
\(H^1_{\mathrm{cont}}(\Gamma_k)\) is
\[
  \mathfrak q_k[u]
  =
  a_0
  \int_{-\infty}^{0}|u_0'(x)|^2\,dx
  +
  \sum_{j=1}^{k}
  a_j
  \int_0^\infty|u_j'(x)|^2\,dx.
\]

The general result of
Theorem~\ref{thm:weighted-kirchhoff} gives:

\begin{theorem}[Resolution-induced branching Hamiltonian]
\label{thm:Bk-canonical-Hamiltonian}
The Friedrichs Hamiltonian \(H_k\) acts as
\[
  (H_ku)_e=-u_e''
\]
on every edge and has domain
\[
  \mathcal D(H_k)
  =
  \left\{
    u\in H^2_\oplus(\Gamma_k):
    \begin{array}{l}
      u_0(0^-)
      =
      u_1(0^+)
      =
      \cdots
      =
      u_k(0^+),\\[2mm]
      \displaystyle
      -a_0u_0'(0^-)
      +
      \sum_{j=1}^{k}a_ju_j'(0^+)
      =
      0
    \end{array}
  \right\}.
\]
\end{theorem}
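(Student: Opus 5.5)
The plan is to specialise Theorem~\ref{thm:weighted-kirchhoff} to the star graph \(\Gamma_k\), with only orientation bookkeeping to check. By Corollary~\ref{cor:Bk-H1-closure} the form domain of \(\mathfrak q_k\) is \(H^1_{\mathrm{cont}}(\Gamma_k)\). By Proposition~\ref{prop:weighted-form-closed} the form \(\mathfrak q_k\) is closed, nonnegative and densely defined in \(\mathcal H_k\). Hence the Friedrichs Hamiltonian \(H_k\) is exactly the self-adjoint operator associated with \(\mathfrak q_k\), with weights \(a_0,a_1,\ldots,a_k\) as determined in Theorem~\ref{thm:Bk-weight-balance}.

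First I apply the edgewise part of Theorem~\ref{thm:weighted-kirchhoff}: testing against \(C_c^\infty\) functions in the interior of each edge gives \((H_ku)_e=-u_e''\), and the standard one-dimensional argument places \(u\) in \(H^2_\oplus(\Gamma_k)\). Next, since \(\mathcal D(H_k)\subseteq D(\mathfrak q_k)=H^1_{\mathrm{cont}}(\Gamma_k)\), continuity at the unique vertex \(v\) gives \(u_0(0^-)=u_1(0^+)=\cdots=u_k(0^+)\).

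The remaining step is to convert the abstract vertex condition \(\sum_{\epsilon\in\mathcal E(v)}a_{e(\epsilon)}\partial_{\nu_\epsilon}u(v)=0\) into the displayed form, and this is the only point needing care. I will redo the integration by parts directly so that the sign convention for the outward derivative \(\partial_{\nu_\epsilon}\) does not have to be guessed. For \(\phi\in H^1_{\mathrm{cont}}(\Gamma_k)\) compactly supported and \(u\in H^2_\oplus\), one has \(a_0\int_{-\infty}^0\overline{\phi_0'}u_0'=a_0\overline{\phi(v)}u_0'(0^-)-a_0\int\overline{\phi_0}u_0''\). On each positive arm, \(a_j\int_0^\infty\overline{\phi_j'}u_j'=-a_j\overline{\phi(v)}u_j'(0^+)-a_j\int\overline{\phi_j}u_j''\). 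The boundary contribution is therefore \(\overline{\phi(v)}\bigl(a_0u_0'(0^-)-\sum_ja_ju_j'(0^+)\bigr)\). Since \(\phi(v)\) can be chosen freely, this vanishes for all \(\phi\) exactly when \(-a_0u_0'(0^-)+\sum_{j}a_ju_j'(0^+)=0\). The converse inclusion follows from the same identity, which gives \(\mathfrak q_k[\phi,u]=\langle\phi,-u''\rangle_{\mathcal H_k}\) on the displayed domain. Decay at infinity on the half-lines is automatic for \(H^2\) functions, so no boundary terms arise there.
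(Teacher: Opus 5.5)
Your proposal is correct and takes the paper's route: the paper obtains this theorem directly by specialising Theorem~\ref{thm:weighted-kirchhoff} to the star graph $\Gamma_k$ with weights $(a_0,a_1,\ldots,a_k)$. Your explicit integration by parts only makes visible the paper's convention that $\partial_{\nu_\epsilon}$ is the derivative pointing from the vertex into the edge, which gives $-u_0'(0^-)$ on $e_0$ and $u_j'(0^+)$ on the arms.
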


For equal resolving-sheet densities, the projective weight vector is
\((a_0,a_1,\ldots,a_k) \sim (k,1,\ldots,1)\).

Once the graph and weights have been obtained, \(H_k\) is a standard
weighted Kirchhoff quantum-graph Hamiltonian
\citep{ExnerSeba89,KostrykinSchrader99,BerkolaikoKuchment2013}.
The geometric content is that both the star graph and its balanced relative
weights arise from the resolution.

\subsection{Scattering and the bright--dark decomposition}
\label{subsec:Bk-forward-scattering}
\label{subsec:Bk-S-matrix}
\label{subsec:Bk-bright-dark}
\label{subsec:Bk-reverse-scattering}

For incidence from the common branch at energy
\(E=p^2, \qquad p>0\),
write
\(u_0(x) = e^{ipx} + R_0(p)e^{-ipx}, \qquad x<0\),
and
\(u_j(x) = T_{j0}(p)e^{ipx}, \qquad x>0\).

Continuity gives a common vertex amplitude,
\(1+R_0 = T_{10} = \cdots = T_{k0}\).
The weighted Kirchhoff condition gives
\(a_0(1-R_0) = \sum_{j=1}^{k}a_jT_{j0}\).
Using \(a_0=\sum_j a_j\) yields \(R_0(p)=0, \qquad T_{j0}(p)=1\) in the unnormalised edge-amplitude convention.

Since the flux carried by an amplitude \(A\) on edge \(e\) is proportional
to
\(a_ep|A|^2\),
the outgoing probabilities are

\begin{theorem}[Reflectionless branching]
\label{thm:Bk-reflectionless}
For incidence from the common branch,
\[
  R_0(p)=0
  \qquad
  (p>0),
\]
and
\[
  P_{j\leftarrow0}
  =
  \frac{a_j}{a_0}
  =
  \frac{c_j}{\sum_{\ell=1}^{k}c_\ell}.
\]
\end{theorem}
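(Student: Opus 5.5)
The plan is to compute the scattering solution explicitly and then convert edge amplitudes into probabilities via the weighted flux. Fix \(p>0\) and take the generalised eigenfunction ansatz displayed before the statement: \(u_0(x)=e^{ipx}+R_0(p)e^{-ipx}\) on \(e_0\) and \(u_j(x)=T_{j0}(p)e^{ipx}\) on the arms. Each component solves \(-u''=p^2u\) edgewise, so the only conditions to impose are the vertex conditions of Theorem~\ref{thm:Bk-canonical-Hamiltonian}.

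\textbf{Step 1: vertex equations.} Continuity gives the common amplitude \(1+R_0=T_{j0}\) for every \(j\). The weighted Kirchhoff condition \(-a_0u_0'(0^-)+\sum_j a_ju_j'(0^+)=0\) becomes, after dividing by \(ip\),
\[
a_0(1-R_0)=\sum_j a_jT_{j0}=(1+R_0)\sum_j a_j.
\]

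\textbf{Step 2: solve using balance.} Substituting \(\sum_j a_j=a_0\) from Theorem~\ref{thm:Bk-weight-balance} gives \(1-R_0=1+R_0\), hence \(R_0=0\) and \(T_{j0}=1\). Equivalently, this is the reflectionless condition of Corollary~\ref{cor:reflectionless-channel}.

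\textbf{Step 3: probabilities.} The current on edge \(e\) for amplitude \(A\) is \(a_ep|A|^2\), coming from the weighted inner product, i.e.\ the boundary term in the proof of Theorem~\ref{thm:weighted-kirchhoff}. The incident flux is therefore \(a_0p\), and the outgoing flux on arm \(j\) is \(a_jp\). Their ratio is \(P_{j\leftarrow0}=a_j/a_0=c_j/\sum_\ell c_\ell\), using \(a_j=c_j\) and \(a_0=\sum_\ell c_\ell\). As a cross-check, this matches \(|(S_v)_{j0}|^2=a_j/a_0\) from Corollary~\ref{cor:reflectionless-channel}, and the probabilities sum to one.

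The only delicate point is Step 3: one must justify that the weighted flux, rather than the bare \(|T|^2\), is the conserved probability current. I would settle this by noting that the Wronskian-type boundary form \(\sum_e a_e[\overline{u}u'-\overline{u}'u]\) vanishes for the domain of \(H_k\), so \(a_ep|A|^2\) is the conserved current.
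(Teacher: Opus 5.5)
Your proposal is correct and follows essentially the same route as the paper: continuity and the weighted Kirchhoff condition, combined with the pushforward balance $a_0=\sum_j a_j$, force $R_0=0$ and $T_{j0}=1$, and the weighted flux $a_e p|A|^2$ then gives $P_{j\leftarrow0}=a_j/a_0=c_j/\sum_\ell c_\ell$. Your justification of the weighted current is a useful supplement to the paper's one-line appeal to flux. Phrase it as the vanishing of the vertex boundary form for any function satisfying the vertex conditions: the scattering solution is not in $\mathcal D(H_k)$, since it is not square-integrable.
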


For equal resolving-sheet densities,
\(P_{j\leftarrow0}=\frac1k\).

The complete scattering matrix is most transparent in the flux-normalized
basis. Define
\[
  \mathbf s
  :=
  \frac{1}{\sqrt{a_0}}
  \begin{pmatrix}
    \sqrt{a_1}\\
    \vdots\\
    \sqrt{a_k}
  \end{pmatrix},
\]
so that
\(\|\mathbf s\|=1\).

\begin{theorem}[Branching scattering matrix]
\label{thm:Bk-S}
In the decomposition into the common channel and the \(k\) outgoing arms,
the resulting scattering matrix is
\[
  S_k
  =
  \begin{pmatrix}
    0 & \mathbf s^*\\
    \mathbf s & \mathbf s\mathbf s^*-I_k
  \end{pmatrix}.
\]
It is real-symmetric, Hermitian, and unitary.
\end{theorem}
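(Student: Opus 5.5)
The plan is to specialise Theorem~\ref{thm:weighted-vertex-S} to the star vertex of \(\Gamma_k\) with degree \(d=k+1\) and weights \((a_0,a_1,\ldots,a_k)\). That theorem gives, in the flux-normalized channel basis,
\[
  S=\frac{2}{A_v}\mathbf v_{\mathbf a}\mathbf v_{\mathbf a}^*-I_{k+1},
  \qquad
  A_v=a_0+\sum_{j=1}^k a_j,
  \qquad
  \mathbf v_{\mathbf a}=(\sqrt{a_0},\sqrt{a_1},\ldots,\sqrt{a_k})^{\mathsf T}.
\]
Theorem~\ref{thm:Bk-weight-balance} then gives \(A_v=2a_0\), so the prefactor becomes \(1/a_0\). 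Setting \(\mathbf w:=\mathbf v_{\mathbf a}/\sqrt{a_0}=(1,\mathbf s)^{\mathsf T}\), where \(\mathbf s\) is the vector defined before the statement, this reads \(S=\mathbf w\mathbf w^*-I_{k+1}\).

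Next I would expand \(\mathbf w\mathbf w^*\) in block form relative to the decomposition into the common channel and the \(k\) arms. Its blocks are \(1\), \(\mathbf s^*\), \(\mathbf s\) and \(\mathbf s\mathbf s^*\). Subtracting the identity gives the upper-left entry \(1-1=0\), the off-diagonal blocks \(\mathbf s^*\) and \(\mathbf s\), and the lower-right block \(\mathbf s\mathbf s^*-I_k\), which is exactly the displayed \(S_k\). As a consistency check, the upper-left entry \(0\) agrees with the reflectionless result of Theorem~\ref{thm:Bk-reflectionless}, and \(|s_j|^2=a_j/a_0\) reproduces the probabilities \(P_{j\leftarrow 0}\).

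For the structural properties, \(\mathbf s\) has real positive entries, so \(S_k\) is real. It is symmetric because its off-diagonal blocks are transposes of each other, and being real-symmetric it is Hermitian. For unitarity, \(\|\mathbf w\|^2=1+\|\mathbf s\|^2=2\), so \(P:=\tfrac12\mathbf w\mathbf w^*\) is an orthogonal projection and \(S_k=2P-I\) is a self-adjoint involution, hence unitary. Alternatively, unitarity follows directly from the last assertion of Theorem~\ref{thm:weighted-vertex-S}.

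The only point needing care is the channel convention. I must check that the half-line parametrisations of Theorem~\ref{thm:weighted-vertex-S} (outward \(x_j\ge0\) from the vertex) match those used here, where \(e_0\) is the negative half-line. Reflecting \(x\mapsto -x\) on \(e_0\) converts \(-a_0u_0'(0^-)\) into an outward derivative, so the Kirchhoff condition of Theorem~\ref{thm:Bk-canonical-Hamiltonian} is the weighted Kirchhoff condition of Theorem~\ref{thm:weighted-kirchhoff}. The incoming/outgoing plane waves \(e^{\pm ipx}\) on \(e_0\) map to the standard ones, and no phases appear because the vertex is at \(0\). With that identification the specialisation is immediate.
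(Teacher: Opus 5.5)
Your proposal is correct and follows essentially the paper's route: specialise Theorem~\ref{thm:weighted-vertex-S} to the star vertex, use the balance \(A_v=2a_0\) from Theorem~\ref{thm:Bk-weight-balance}, and read off the blocks. The paper computes the entries one by one and gets unitarity from \(\mathbf s^*\mathbf s=1\); your rank-one form \(S_k=\mathbf w\mathbf w^*-I\) with \(\|\mathbf w\|^2=2\) is the same computation written more compactly, and your check of the orientation of \(e_0\) is a correct extra detail.
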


\begin{proof}
The general weighted Kirchhoff formula of
Theorem~\ref{thm:weighted-vertex-S} has total incident weight
\(a_0+\sum_{j=1}^{k}a_j=2a_0\).
Substitution gives
\((S_k)_{00}=0\),
\((S_k)_{j0} = (S_k)_{0j} = \sqrt{\frac{a_j}{a_0}}\),
and
\((S_k)_{ji} = \frac{\sqrt{a_ja_i}}{a_0} - \delta_{ji}\).
The displayed block form follows. Unitarity also follows directly from
\(\mathbf s^*\mathbf s=1\).
\end{proof}

Define the normalized bright arm state \(|s\rangle := \sum_{j=1}^{k} \sqrt{\frac{a_j}{a_0}}\, |j\rangle\) and its orthogonal complement
\[
  \mathcal D_{\mathrm{dark}}
  :=
  \{
    w:
    \langle s,w\rangle=0
  \}
  \subset
  \operatorname{span}\{|1\rangle,\ldots,|k\rangle\}.
\]
Then
\(\dim\mathcal D_{\mathrm{dark}}=k-1\).

\begin{theorem}[Bright--dark decomposition]
\label{thm:Bk-bright-dark}
The scattering matrix satisfies
\[
  S_k|0\rangle=|s\rangle,
  \qquad
  S_k|s\rangle=|0\rangle,
\]
and
\[
  S_kw=-w
  \qquad
  \text{for every }
  w\in\mathcal D_{\mathrm{dark}}.
\]
\end{theorem}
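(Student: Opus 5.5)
The bright--dark decomposition follows by direct matrix multiplication using the block form of Theorem~\ref{thm:Bk-S}. I would work in the flux-normalized channel basis \(\{|0\rangle,|1\rangle,\ldots,|k\rangle\}\). In this basis the bright vector \(|s\rangle\) is the column \((0,\mathbf s)^{\mathsf T}\), with components \(\sqrt{a_j/a_0}\) on the arms. Dark vectors are arm vectors \((0,w)^{\mathsf T}\) with \(\mathbf s^*w=0\). The only input beyond the block form is the normalization \(\mathbf s^*\mathbf s=1\). That normalization is exactly the resolution-induced balance \(a_0=\sum_j a_j\) of Theorem~\ref{thm:Bk-weight-balance}.

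First I would apply \(S_k\) to \(|0\rangle=(1,0)^{\mathsf T}\). This picks out the first column of \(S_k\), namely \((0,\mathbf s)^{\mathsf T}=|s\rangle\).

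Next I would apply \(S_k\) to \((0,\mathbf s)^{\mathsf T}\). The common-channel component is \(\mathbf s^*\mathbf s=1\). The arm component is
\[
  (\mathbf s\mathbf s^*-I_k)\mathbf s=\mathbf s(\mathbf s^*\mathbf s)-\mathbf s=0.
\]
Together these give \(S_k|s\rangle=|0\rangle\). Alternatively, this follows from the first identity together with \(S_k^2=I\), since \(S_k\) is Hermitian and unitary.

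For a dark vector \(w\), the common-channel component of \(S_k(0,w)^{\mathsf T}\) is \(\mathbf s^*w=0\). The arm component is \(\mathbf s(\mathbf s^*w)-w=-w\). Hence \(S_kw=-w\).

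There is no genuine obstacle here; the computation closes immediately from \(\mathbf s^*\mathbf s=1\). The only care needed is to identify \(|s\rangle\) and \(\mathcal D_{\mathrm{dark}}\) with vectors in the full \((k+1)\)-channel space by zero padding in the common channel. I would also note as a consistency check that the dimensions add up. The span of \(|0\rangle,|s\rangle\) together with \(\mathcal D_{\mathrm{dark}}\) has dimension \(2+(k-1)=k+1\). So \(S_k\) acts as the swap \(\sigma_x\) on the bright pair and as \(-1\) on the dark space, which diagonalizes \(S_k\) completely.
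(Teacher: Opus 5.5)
Your proposal is correct and is essentially the paper's own argument: direct multiplication against the block form of $S_k$ from Theorem~\ref{thm:Bk-S}, using $\mathbf s^*\mathbf s=1$ (the pushforward balance $a_0=\sum_j a_j$) for the bright pair and $\mathbf s^*w=0$ for the dark sector. One small correction to your closing remark: this \emph{block}-diagonalizes $S_k$ rather than diagonalizing it, since the swap on $\operatorname{span}\{|0\rangle,|s\rangle\}$ is diagonal only in the basis $(|0\rangle\pm|s\rangle)/\sqrt2$.
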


\begin{proof}
The first two identities follow immediately from the block form and
\(\mathbf s^*\mathbf s=1\). If
\(\mathbf s^*w=0\),
then
\((\mathbf s\mathbf s^*-I_k)w=-w\).
\end{proof}

The common branch couples exactly to one weighted superposition of the
arms. For equal densities this is
\(|s\rangle = \frac1{\sqrt{k}} \sum_{j=1}^{k}|j\rangle\).

Incidence from a single arm \(i\) gives
\(P_{0\leftarrow i} = \frac{a_i}{a_0}\),
\(P_{i\leftarrow i} = \left( 1-\frac{a_i}{a_0} \right)^2\),
and, for \(j\neq i\),
\(P_{j\leftarrow i} = \frac{a_ia_j}{a_0^2}\).
An individual arm therefore contains both bright and dark components; only
the common channel and the bright arm state are perfectly exchanged.

\subsection{The oriented first-order operator}
\label{subsec:Bk-first-order}
\label{subsec:Bk-first-order-adjoint}
\label{subsec:Bk-deficiency}
\label{subsec:Bk-punctured-derivative}
\label{subsec:Bk-comparison}
\label{subsec:Bk-interpretation}

The resolving-sheet coordinate increases from the common negative branch
through the exceptional fibre and along each positive branch. Accordingly,
orient \(e_0\) toward the vertex and every \(e_j\), \(j\geq1\), away from
it. This orientation is compatible with the original sheetwise expression
\(-i\frac{d}{dx}\).

Define \(\mathsf D_k u := -iu'\) edgewise on
\(\mathcal D(\mathsf D_k) = H^1_{\mathrm{cont}}(\Gamma_k)\).

By Proposition~\ref{prop:first-order-boundary-form-general},
\[
  \langle\mathsf D_ku,v\rangle
  -
  \langle u,\mathsf D_kv\rangle
  =
  i
  \left(
    a_0-\sum_{j=1}^{k}a_j
  \right)
  \overline{u(v)}\,v(v).
\]

\begin{proposition}[Automatic first-order symmetry]
\label{prop:Bk-first-order-symmetry}
The resolution-induced weights make
\[
  \mathsf D_k
\]
symmetric.
\end{proposition}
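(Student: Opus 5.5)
The argument is a direct specialisation of Proposition~\ref{prop:first-order-boundary-form-general} to the star graph \(\Gamma_k\), followed by the pushforward identity of Theorem~\ref{thm:Bk-weight-balance}. I will first confirm that \(\mathsf D_k\) is densely defined, which follows from Proposition~\ref{prop:first-order-closed-dense}. Symmetry then reduces to showing that the boundary form \(\langle\mathsf D_ku,w\rangle-\langle u,\mathsf D_kw\rangle\) vanishes for all \(u,w\in H^1_{\mathrm{cont}}(\Gamma_k)\).

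With the stated orientation, \(E_{\mathrm{in}}(v)=\{e_0\}\), since \(e_0\) is directed toward the vertex, and \(E_{\mathrm{out}}(v)=\{e_1,\ldots,e_k\}\). The general boundary formula therefore has a single vertex term,
\[
  i\Bigl(a_0-\sum_{j=1}^k a_j\Bigr)\overline{u(v)}\,w(v),
\]
which is the display already recorded before the statement. There are no compact edges here. The contributions at infinity on the half-lines vanish because \(H^1\) functions on a half-line tend to zero at infinity. The signs follow the edge computation in that proposition: \(e_0\) is parametrised as \((-\infty,0)\), so \([\overline u w]_{-\infty}^{0}\) contributes \(+a_0\overline{u(v)}w(v)\). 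Each arm \(e_j\) is parametrised as \((0,\infty)\), so it contributes \(-a_j\overline{u(v)}w(v)\). Continuity at the vertex lets me replace every one-sided trace by the common vertex value.

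By Theorem~\ref{thm:Bk-weight-balance}, \(a_0=\sum_{j=1}^k a_j\) for every positive choice of sheet densities \(c_j\). The coefficient is therefore zero, the boundary form vanishes identically, and \(\mathsf D_k\subseteq\mathsf D_k^*\). Equivalently, the in--out balance criterion of Proposition~\ref{prop:first-order-boundary-form-general} holds at the only vertex.

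There is no real obstacle. The only point that needs care is the orientation and sign bookkeeping: the incoming edge must be \(e_0\) with weight \(a_0\), matching the pushforward weight on the common branch. The conclusion is exactly symmetry and not self-adjointness, and I will make no claim about deficiency indices here.
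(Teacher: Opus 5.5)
Your proposal is correct and follows the paper's own argument: you specialise the boundary form of Proposition~\ref{prop:first-order-boundary-form-general} to the single vertex of \(\Gamma_k\), and there the coefficient \(a_0-\sum_{j=1}^k a_j\) vanishes by the pushforward balance of Theorem~\ref{thm:Bk-weight-balance}. Your additional checks (dense definition via Proposition~\ref{prop:first-order-closed-dense}, and the orientation and sign bookkeeping for \(e_0\) and the arms) agree with the paper's setup.
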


\begin{proof}
The coefficient in the boundary form vanishes by
Theorem~\ref{thm:Bk-weight-balance}.
\end{proof}

The adjoint does not retain vertex continuity.

\begin{proposition}[Adjoint of the branching derivative]
\label{prop:Bk-adjoint}
The adjoint acts edgewise as
\[
  \mathsf D_k^*\psi=-i\psi'
\]
and has domain
\[
  \mathcal D(\mathsf D_k^*)
  =
  \left\{
    \psi\in H^1_\oplus(\Gamma_k):
    a_0\psi_0(0^-)
    =
    \sum_{j=1}^{k}a_j\psi_j(0^+)
  \right\}.
\]
\end{proposition}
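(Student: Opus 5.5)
The plan is to compute the adjoint directly from the boundary identity of Proposition~\ref{prop:first-order-boundary-form-general}. The operator \(\mathsf D_k\) is densely defined on \(H^1_{\mathrm{cont}}(\Gamma_k)\) by Proposition~\ref{prop:first-order-closed-dense}, so its adjoint is well defined. The argument has two inclusions.

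\emph{Adjoint elements are edgewise \(H^1\) and act by \(-i\,d/dx\).} Let \(\psi\in\mathcal D(\mathsf D_k^*)\) with \(\mathsf D_k^*\psi=\phi\). Test against \(u\in C_c^\infty(e_j^\circ)\) supported in the interior of a single edge; such \(u\) lie in \(H^1_{\mathrm{cont}}(\Gamma_k)\). The defining identity \(\langle \mathsf D_ku,\psi\rangle_{\mathbf a}=\langle u,\phi\rangle_{\mathbf a}\) then says, after dividing by the positive constant \(a_j\), that \(-i\psi_j'=\phi_j\) in the distributional sense on that edge. Hence each \(\psi_j\in H^1(e_j)\), so \(\psi\in H^1_\oplus(\Gamma_k)\) and \(\mathsf D_k^*\psi=-i\psi'\) edgewise.

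\emph{Identifying the vertex condition.} For \(\psi\in H^1_\oplus\) and \(u\in H^1_{\mathrm{cont}}\), integrate by parts edgewise with the given orientations: \(e_0\) terminates at the vertex and each \(e_j\) starts there. The terms at infinity vanish on half-lines. This gives
\[
\langle \mathsf D_ku,\psi\rangle_{\mathbf a}-\langle u,-i\psi'\rangle_{\mathbf a}
= i\,\overline{u(v)}\Bigl(a_0\psi_0(0^-)-\sum_{j=1}^k a_j\psi_j(0^+)\Bigr),
\]
using only continuity of \(u\) at the vertex. Now choose \(u\in H^1_{\mathrm{cont}}\) with \(u(v)\neq0\), for instance a compactly supported cutoff equal to \(1\) near the vertex on every edge. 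For such \(u\) the left side must vanish, which forces the stated weighted-sum condition. Conversely, if \(\psi\in H^1_\oplus\) satisfies that condition, the right side vanishes for every \(u\). Then \(u\mapsto\langle\mathsf D_ku,\psi\rangle\) is bounded, which puts \(\psi\) in the adjoint domain with \(\mathsf D_k^*\psi=-i\psi'\).

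The only point that needs care is the sign bookkeeping of the boundary terms under the chosen orientations. The integration by parts yields \(\bigl[\overline{u}\psi\bigr]\) at the terminal end of \(e_0\) with a plus sign and at the initial ends of the \(e_j\) with a minus sign, all multiplied by \(i a_e\). Beyond that the argument is routine, since all traces are continuous on one-dimensional \(H^1\).
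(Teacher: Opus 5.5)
Your proof is correct and follows the paper's argument: edgewise integration by parts against \(H^1_{\mathrm{cont}}(\Gamma_k)\) leaves a single vertex term proportional to \(a_0\psi_0(0^-)-\sum_{j}a_j\psi_j(0^+)\), and that term must vanish because the common vertex value \(u(v)\) is arbitrary. You also make explicit two steps the paper leaves implicit: testing against interior-supported functions to show that adjoint elements lie in \(H^1_\oplus(\Gamma_k)\) and act by \(-i\,d/dx\), and proving the converse inclusion. Your sign bookkeeping for the chosen orientations is right.
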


\begin{proof}
For a maximal piecewise \(H^1\)-function \(\psi\), integration by parts
against \(\phi\in H^1_{\mathrm{cont}}(\Gamma_k)\) leaves the vertex term proportional to
\(a_0\psi_0(0^-) - \sum_{j=1}^{k}a_j\psi_j(0^+)\).
Since the common value \(\phi(v)\) is arbitrary, this term must vanish.
\end{proof}

The deficiency spaces can now be calculated explicitly.

\begin{theorem}[First-order deficiency obstruction]
\label{thm:Bk-deficiency}
The resolution-induced oriented derivative on the branching line has deficiency
indices
\[
  \bigl(
    n_+(\mathsf D_k),
    n_-(\mathsf D_k)
  \bigr)
  =
  (k-1,0).
\]
\end{theorem}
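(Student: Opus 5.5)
Using Proposition~\ref{prop:Bk-adjoint}, the deficiency spaces reduce to explicit edgewise ODE solutions. For \(\lambda=\pm i\), an element of \(\ker(\mathsf D_k^*\mp i)\) satisfies \(-i\psi'=\pm i\psi\) on every edge, so \(\psi_e(x)=C_e e^{\mp x}\). I will follow the paper's convention from Theorem~\ref{thm:L2-deficiency}, where \(n_+\) counts solutions of \((D^*-i)\psi=0\), and then impose square integrability on each half-line together with the single vertex relation \(a_0\psi_0(0^-)=\sum_j a_j\psi_j(0^+)\) from the adjoint domain.

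For \(n_+\) we have \(\psi'=-\psi\), so \(\psi_e=C_e e^{-x}\). On \(e_0=(-\infty,0)\) the function \(e^{-x}\) blows up as \(x\to-\infty\), which forces \(C_0=0\). On each outgoing arm \(e_j=(0,\infty)\) the function \(e^{-x}\) is square integrable, so \(C_1,\dots,C_k\) are free subject only to the vertex relation. With \(C_0=0\) that relation becomes \(\sum_{j=1}^k a_jC_j=0\), a single nontrivial linear condition because all \(a_j>0\). The solution space is therefore \(\{(0,C_1,\dots,C_k):\sum_j a_jC_j=0\}\), of dimension \(k-1\).

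For \(n_-\) we have \(\psi_e=C_e e^{x}\). This is square integrable on \(e_0\) but not on any positive arm, so \(C_1=\cdots=C_k=0\). The vertex relation then gives \(a_0C_0=0\), and since \(a_0>0\) also \(C_0=0\). Hence \(n_-=0\).

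The main point to check carefully is the sign convention for \((n_+,n_-)\): the asymmetry arises because the incoming edge is one half-line and the outgoing edges are \(k\) half-lines. I will match the convention to Theorem~\ref{thm:L2-deficiency}, where the mode \(e^{-x}\) on the positive half-line is counted in \(n_+\). I will also note that symmetry from Proposition~\ref{prop:Bk-first-order-symmetry} ensures the deficiency spaces are meaningful. Since \(n_+\neq n_-\) when \(k\ge2\), it follows that \(\mathsf D_k\) has no self-adjoint extension.
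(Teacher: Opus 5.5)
Your proposal is correct and follows the paper's own proof: solve \(\psi'=\mp\psi\) edgewise, use square integrability to kill the coefficient on the incoming half-line (for \(n_+\)) or on the outgoing arms (for \(n_-\)), and impose the single weighted vertex relation \(a_0\psi_0(0^-)=\sum_j a_j\psi_j(0^+)\) from Proposition~\ref{prop:Bk-adjoint}. This gives dimensions \(k-1\) and \(0\), and your sign convention matches the paper's.
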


\begin{proof}
For
\(\psi\in\ker(\mathsf D_k^*-i)\),
the equation \(\psi'=-\psi\) has no nonzero square-integrable solution on the negative half-line. On the
positive arms,
\(\psi_j(x)=C_je^{-x}\).
The adjoint boundary condition becomes
\(\sum_{j=1}^{k}a_jC_j=0\),
which leaves a \((k-1)\)-dimensional solution space. Hence
\(n_+=k-1\).

For
\(\psi\in\ker(\mathsf D_k^*+i)\),
one has
\(\psi'=\psi\).
Square integrability forces all positive-arm components to vanish, while the
negative half-line permits
\(\psi_0(x)=C_0e^x\).
The adjoint boundary condition then gives
\(a_0C_0=0\),
so \(C_0=0\). Hence
\(n_-=0\).
\end{proof}

\begin{corollary}[No self-adjoint first-order completion]
\label{cor:Bk-no-SA}
For every \(k>1\), the geometrically selected first-order operator
\(\mathsf D_k\) admits no self-adjoint extension in the same Hilbert space.
\end{corollary}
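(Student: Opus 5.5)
The plan is to invoke the von Neumann criterion. A closed, densely defined symmetric operator has a self-adjoint extension in the same Hilbert space if and only if its deficiency indices are equal. Proposition~\ref{prop:first-order-closed-dense} shows that \(\mathsf D_k\) is densely defined and closed on \(H^1_{\mathrm{cont}}(\Gamma_k)\). Proposition~\ref{prop:Bk-first-order-symmetry} shows it is symmetric, and this uses the pushforward balance of Theorem~\ref{thm:Bk-weight-balance}. So von Neumann's extension theory applies directly, with no closure step required.

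The second step is to read off the indices from Theorem~\ref{thm:Bk-deficiency}: \((n_+,n_-)=(k-1,0)\). For \(k>1\) we have \(k-1\geq1>0=n_-\), so the indices are unequal. By the von Neumann characterisation, every closed symmetric extension of \(\mathsf D_k\) is parametrised by a partial isometry from a subspace of \(\ker(\mathsf D_k^*-i)\) into \(\ker(\mathsf D_k^*+i)\). Since the latter space is zero, the only closed symmetric extension is \(\mathsf D_k\) itself. But \(\mathsf D_k\) is not self-adjoint, because \(\ker(\mathsf D_k^*-i)\neq0\).

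For completeness I would also note the alternative argument. If \(A\) were a self-adjoint extension, then \(\mathsf D_k\subseteq A=A^*\subseteq\mathsf D_k^*\), and the Cayley transform of \(A\) would be a unitary extending the Cayley transform of \(\mathsf D_k\). That would require a unitary map between deficiency spaces of dimensions \(k-1\) and \(0\), which is impossible.

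There is no real obstacle, since the corollary follows immediately from the deficiency count. The only point needing care is the phrase ``in the same Hilbert space''. Self-adjoint extensions do exist in an enlarged Hilbert space, for example by adjoining \(k-1\) incoming channels, and this is exactly why that qualifier is included.
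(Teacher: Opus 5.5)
Your proposal is correct and matches the paper's argument: the paper gives no separate proof and reads the corollary off from the unequal deficiency indices \((k-1,0)\) of Theorem~\ref{thm:Bk-deficiency}, which, by von Neumann's criterion for the closed, densely defined, symmetric operator \(\mathsf D_k\), leave no room for a self-adjoint extension. Your Cayley-transform reformulation and the remark about extensions in an enlarged Hilbert space are fine but not needed.
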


The deficiency space has a direct relation to the scattering decomposition.
A positive-deficiency vector is specified by \((C_1,\ldots,C_k)\) subject to
\(\sum_{j=1}^{k}a_jC_j=0\).
After flux normalization this is precisely the \((k-1)\)-dimensional arm
sector orthogonal to the bright vector. The channel directions that are dark
to forward second-order transport therefore also account for the unmatched
positive first-order deficiency directions.

For comparison, the completely punctured minimal derivative on one negative
and \(k\) positive half-lines has deficiency indices
\((k,1)\).
Passing to the descended non-Hausdorff scalar core imposes the common vertex
trace and, together with the pushforward balance, reduces these to
\((k-1,0)\).
The descended smooth structure selects a closed symmetric operator that
is not obtained as a self-adjoint extension of the punctured derivative.

The branching line therefore exhibits two distinct consequences of the same
resolution-induced balance. The second-order Friedrichs Hamiltonian is
self-adjoint and exchanges the common channel with the bright arm mode
without reflection. The naturally oriented first-order derivative is
symmetric but has a one-sided deficiency space and cannot be made
self-adjoint. Both statements extend from a \(1\)-to-\(k\) splitter to
general finite \(m\)-to-\(n\) junctions.
\section{\texorpdfstring{General \(m\)-to-\(n\) non-Hausdorff junctions}{General m-to-n non-Hausdorff junctions}}
\label{sec:Bmn}

The branching line \(\mathbb B_k\) has one incoming channel and \(k\)
outgoing channels. A general finite junction has \(m\) incoming and \(n\) outgoing propagation channels.

Unlike the \(1\)-to-\(k\) case, the channel counts do not determine the
non-Hausdorff resolution. One must also specify which incoming and outgoing
branches belong to the same smooth resolving sheets. This information is
encoded by a finite bipartite multigraph.

The reduction separates sharply into three levels. The non-Hausdorff
resolution remembers the complete sheet incidence and the sheet densities.
The resolution-induced scalar \(H^1\)-theory retains only the induced weights on the
incoming and outgoing channels. The deficiency dimensions of the naturally
oriented first-order derivative retain only the channel counts:
\((n_+,n_-) = (n-1,m-1)\).
Self-adjoint first-order completion is possible precisely when
\(m=n\).

\subsection{Resolution data and Hausdorff reflection}
\label{subsec:Bmn-resolution-data}
\label{subsec:Bmn-smooth-structure}
\label{subsec:Bmn-reflection}

Let \(I=\{1,\ldots,m\}, \qquad O=\{1,\ldots,n\}\) be the incoming and outgoing channel labels. Let \(\mathcal S\) be a finite set of resolving sheets, equipped with incidence maps
\(\ell:\mathcal S\longrightarrow I, \qquad r:\mathcal S\longrightarrow O\).
Each \(\alpha\in\mathcal S\) connects the incoming channel \(\ell(\alpha)\) to the outgoing channel
\(r(\alpha)\).

Equivalently, \(\mathcal S\) is the edge set of a bipartite multigraph \(\mathcal R\) with vertex set
\(I\sqcup O\).
We assume throughout that \(\mathcal R\) is connected.

For every \(\alpha\in\mathcal S\), let \(\mathbb R_\alpha\) be a labelled copy of the real line, and set
\[
  \widetilde{\mathbb B}_{\mathcal R}
  :=
  \bigsqcup_{\alpha\in\mathcal S}
  \mathbb R_\alpha.
\]

Define an equivalence relation by
\[
  (x,\alpha)\sim(x,\beta)
  \quad\Longleftrightarrow\quad
  \ell(\alpha)=\ell(\beta),
  \qquad
  x<0,
\]
and
\[
  (x,\alpha)\sim(x,\beta)
  \quad\Longleftrightarrow\quad
  r(\alpha)=r(\beta),
  \qquad
  x>0.
\]
At \(x=0\), distinct sheets remain distinct.

Define
\(\mathbb B_{\mathcal R} := \widetilde{\mathbb B}_{\mathcal R}/\!\sim\),
let
\[
  \varrho_{\mathcal R}:
  \widetilde{\mathbb B}_{\mathcal R}
  \longrightarrow
  \mathbb B_{\mathcal R}
\]
be the quotient map, and denote the origin of sheet \(\alpha\) by
\(b_\alpha=[(0,\alpha)]\).

For \(i\in I\), the negative half-lines with left incidence \(i\) form one
incoming branch
\(e_i^-\).
For \(j\in O\), the positive half-lines with right incidence \(j\) form one
outgoing branch
\(e_j^+\).
Hence
\[
  \mathbb B_{\mathcal R}^{\circ}
  =
  \bigsqcup_{i=1}^{m}e_i^-
  \sqcup
  \bigsqcup_{j=1}^{n}e_j^+.
\]

A neighbourhood of \(b_\alpha\) contains a negative interval in
\(e_{\ell(\alpha)}^-\) and a positive interval in \(e_{r(\alpha)}^+\):
\[
  U_\alpha(\varepsilon)
  =
  \{b_\alpha\}
  \cup
  \{x\in e_{\ell(\alpha)}^-:-\varepsilon<x<0\}
  \cup
  \{x\in e_{r(\alpha)}^+:0<x<\varepsilon\}.
\]
Using the ordinary coordinate \(x\) on this neighbourhood gives the
identity-glued smooth structure.

\begin{proposition}
\label{prop:Bmn-basic}
If the incidence graph \(\mathcal R\) is connected, then
\[
  \mathbb B_{\mathcal R}
\]
is a connected, second-countable, \(T_1\), locally Euclidean smooth
one-manifold.
\end{proposition}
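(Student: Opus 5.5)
The plan follows the pattern of Propositions~\ref{prop:multisplit-basic} and~\ref{prop:Bk-topology}. I will verify the four properties in order: local Euclideanity together with the smooth atlas, second countability, the \(T_1\) axiom, and connectedness. Connectedness is the one place where the hypothesis on \(\mathcal R\) enters.

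\textbf{Local structure.} Every regular point lies in the interior of one half-line \(e_i^-\) or \(e_j^+\). Each such branch is homeomorphic to an open half-line, since the equivalence relation identifies its constituent sheet pieces pointwise through the common coordinate \(x\). It therefore has an ordinary chart. At \(b_\alpha\), the set \(U_\alpha(\varepsilon)\) with coordinate \(x\) is a bijection onto \((-\varepsilon,\varepsilon)\). I will check that it is open in the quotient topology. The preimage of \(U_\alpha(\varepsilon)\) in \(\widetilde{\mathbb B}_{\mathcal R}\) consists of three kinds of pieces. The first is \((-\varepsilon,0)\) in every sheet \(\beta\) with \(\ell(\beta)=\ell(\alpha)\). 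The second is \((0,\varepsilon)\) in every sheet with \(r(\beta)=r(\alpha)\). The third is \((-\varepsilon,\varepsilon)\) in sheet \(\alpha\) itself. This union is open. The chart is also a homeomorphism, because the quotient map restricted to \((-\varepsilon,\varepsilon)_\alpha\) is an open injection. Openness holds because the saturation of an open subset of \((-\varepsilon,\varepsilon)_\alpha\) is a union of open pieces of the same kind. All chart transitions are identities in \(x\) on open subintervals, so the atlas is smooth.

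\textbf{Countability and separation.} Second countability follows because \(\mathcal S\) is finite. Rational subintervals of the finitely many branches, together with the sets \(U_\alpha(1/n)\), form a countable basis. For \(T_1\), a regular singleton is closed because its preimage is a finite set of points. For \(\{b_\alpha\}\), the preimage of its complement is the union of \(\widetilde{\mathbb B}_{\mathcal R}\setminus\{(0,\alpha)\}\) with the points \((0,\beta)\) for \(\beta\neq\alpha\). This set is open, since \(x=0\) is never identified across distinct sheets.

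\textbf{Connectedness.} Each branch \(e_i^-\) and \(e_j^+\) is connected. For every \(\alpha\), the set \(\overline{e_{\ell(\alpha)}^-}\cup\{b_\alpha\}\cup e_{r(\alpha)}^+\) is the continuous image of \(\mathbb R_\alpha\), so it is connected. Consequently \(e_{\ell(\alpha)}^-\) and \(e_{r(\alpha)}^+\) lie in the same connected component. Since \(\mathcal R\) is connected, any two vertices of \(I\sqcup O\) are joined by a path of such sheets, and hence all branches lie in one component. Every \(b_\alpha\) lies in the image of \(\mathbb R_\alpha\), which meets \(e^-_{\ell(\alpha)}\). I expect this step to be the only non-routine one, and the main point to handle carefully is that the images of the sheets actually cover the whole space and chain together along the edges of \(\mathcal R\).

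I will also record why Hausdorffness fails. Points \(b_\alpha\) and \(b_\beta\) that share an incoming or an outgoing channel cannot be separated, because every pair of their neighbourhoods meets along a common collar.
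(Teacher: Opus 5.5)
Your proposal is correct and follows essentially the same route as the paper's proof: the charts \(U_\alpha(\varepsilon)\) and regular intervals form the atlas, finiteness gives second countability, open complements give \(T_1\), and chaining sheet images along the connected graph \(\mathcal R\) gives connectedness, with your version writing out the openness and homeomorphism checks in more detail. One small slip: the image of \(\mathbb R_\alpha\) is \(e^-_{\ell(\alpha)}\cup\{b_\alpha\}\cup e^+_{r(\alpha)}\) without the closure bar, because the closure of \(e^-_{\ell(\alpha)}\) also contains every origin \(b_\beta\) with \(\ell(\beta)=\ell(\alpha)\); the connectedness argument is unaffected.
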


\begin{proof}
The sets \(U_\alpha(\varepsilon)\), together with ordinary intervals on the
regular branches, form a smooth one-dimensional atlas. Finiteness of
\(\mathcal S\) and of the channel sets gives second countability. Regular
singletons are closed, and the complement of an origin \(b_\alpha\) is open
in every exceptional chart; hence the space is \(T_1\). Finally, every sheet
joins its incident incoming and outgoing branches through \(b_\alpha\).
Connectedness of the incidence graph propagates these joins across all
channels and origins, so \(\mathbb B_{\mathcal R}\) is connected.
\end{proof}

Distinct origins need not be directly nonseparable. If two sheets share an
incoming or outgoing channel, however, their origins are nonseparable.
Connectedness of \(\mathcal R\) propagates this relation through chains of
such incidences.

Let \(\Gamma_{m,n}\) be the metric star graph with one vertex \(v\), incoming half-lines
\(e_1^-,\ldots,e_m^-\),
and outgoing half-lines
\(e_1^+,\ldots,e_n^+\).
Define \(\pi: \mathbb B_{\mathcal R} \longrightarrow \Gamma_{m,n}\) to be the identity on the regular branches and to send every \(b_\alpha\)
to \(v\).

\begin{theorem}[Hausdorff reflection]
\label{thm:Bmn-Hausdorff-reflection}
If \(\mathcal R\) is connected, then
\[
  \pi:
  \mathbb B_{\mathcal R}
  \longrightarrow
  \Gamma_{m,n}
\]
is the Hausdorff reflection.
\end{theorem}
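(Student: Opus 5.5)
The argument follows the pattern of Theorem~\ref{thm:Bk-Hausdorff-reflection}. It has three parts: show that \(\pi\) is a continuous surjective quotient map, show that every continuous map to a Hausdorff space is constant on the exceptional fibre \(\{b_\alpha:\alpha\in\mathcal S\}\), and combine these through the universal property of quotients.

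\emph{Continuity and surjectivity.} Surjectivity is immediate because \(\pi\) is the identity on the regular branches. Continuity away from \(v\) holds because \(\pi\) is a homeomorphism on each open branch. At \(b_\alpha\), a star neighbourhood of \(v\) with radius \(\varepsilon\) pulls back to a set that contains \(U_\alpha(\varepsilon)\), so the preimage is open.

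\emph{Quotient property.} Let \(A\subseteq\Gamma_{m,n}\) have open preimage. Points of \(A\) other than \(v\) are interior points, by the edgewise homeomorphism. Now suppose \(v\in A\). Each of the finitely many \(b_\alpha\) has some \(U_\alpha(\varepsilon_\alpha)\) inside the preimage; set \(\varepsilon=\min_\alpha\varepsilon_\alpha\). Every incoming and every outgoing half-line is incident to at least one sheet. This follows from connectedness of \(\mathcal R\) together with \(m,n\ge1\); if some channel met no sheet it would be an isolated vertex. Hence each edge's \(\varepsilon\)-collar at \(v\) lies in \(A\), and \(A\) contains a full star neighbourhood of \(v\).

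\emph{Identification of the fibre.} This is the step where the proof departs from the branching line, because distinct origins need not be directly nonseparable. If \(\ell(\alpha)=\ell(\beta)\), every pair of neighbourhoods of \(b_\alpha\) and \(b_\beta\) meets on a short negative interval of \(e^-_{\ell(\alpha)}\); the same holds on the positive side when \(r(\alpha)=r(\beta)\). A continuous map \(F\) into a Hausdorff space must therefore give \(F(b_\alpha)=F(b_\beta)\) whenever the two sheets share a channel, by the argument of Proposition~\ref{prop:L2-hausdorff-factorisation}. Now take any two sheets. Connectedness of \(\mathcal R\) supplies a path in \(\mathcal R\) between them, that is, a chain of sheets \(\alpha=\alpha_0,\dots,\alpha_N=\beta\) in which consecutive sheets share a channel vertex. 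Applying the previous equality along the chain gives \(F(b_\alpha)=F(b_\beta)\), so \(F\) is constant on \(\pi^{-1}(v)\).

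\emph{Conclusion.} Since \(\pi\) is a quotient map whose only nontrivial fibre is \(\pi^{-1}(v)\), \(F\) factors uniquely and continuously through \(\pi\). The target \(\Gamma_{m,n}\) is Hausdorff, so \(\pi\) is the Hausdorff reflection. I expect the main obstacle to be the chain-propagation step in the fibre identification; the quotient and factorisation arguments are routine.
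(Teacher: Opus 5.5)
Your proof is correct and follows the paper's argument essentially step for step. The three steps are the same: the quotient property from the minimum of the finitely many collar radii, using that every channel meets a sheet by connectedness of \(\mathcal R\); identification of the exceptional fibre by propagating nonseparability along incidence chains in \(\mathcal R\); and factorisation through the quotient. Your explicit check that no channel is an isolated vertex of \(\mathcal R\), and your remark that \(\Gamma_{m,n}\) is itself Hausdorff, spell out points the paper leaves implicit.
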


\begin{proof}
The map \(\pi\) is continuous, surjective and quotient. For the last claim,
let \(A\subseteq\Gamma_{m,n}\) with \(\pi^{-1}(A)\) open. Away from the vertex
the map is an edgewise homeomorphism. If \(v\in A\), openness at every point
of the finite fibre \(\{b_\alpha\}\) supplies collars on the incident incoming
and outgoing channels. Taking the minimum of their radii gives a star
neighbourhood of \(v\) contained in \(A\), because every channel is incident
to a sheet in the connected graph \(\mathcal R\).

Let \(F:\mathbb B_{\mathcal R}\longrightarrow Y\) be continuous with \(Y\) Hausdorff. Whenever two sheets share an incoming
or outgoing channel, their origins are nonseparable and hence have the same
image under \(F\). Since \(\mathcal R\) is connected, equality propagates
along a finite incidence chain, so \(F(b_\alpha)\) is independent of \(\alpha\).

\(F\) is constant on the unique nontrivial fibre of \(\pi\), and the
quotient property gives the required factorisation.
\end{proof}

The number of origins \(|\mathcal S|\) and the number of propagation channels \(m+n\) are therefore independent geometric quantities.

\subsection{Scalar closure and resolution-induced weights}
\label{subsec:Bmn-scalar-core}
\label{subsec:Bmn-weight-matrix}
\label{subsec:Bmn-matrix-loss}
\label{subsec:Bmn-information-loss}

A smooth scalar function is represented on each resolving sheet by
\(f_\alpha\in C^\infty(\mathbb R)\).
For \(x<0\), sheets with the same left incidence give one function
\(f_i^-\),
while for \(x>0\), sheets with the same right incidence give one function
\(f_j^+\).

Smoothness through \(b_\alpha\) gives
\[
  (f_{\ell(\alpha)}^-)^{(r)}(0^-)
  =
  (f_{r(\alpha)}^+)^{(r)}(0^+)
  \qquad
  (r\geq0).
\]
Connectedness of \(\mathcal R\) therefore implies:

\begin{proposition}[Full scalar jet matching]
\label{prop:Bmn-full-jet-matching}
For every \(r\geq0\),
\[
  (f_1^-)^{(r)}(0^-)
  =
  \cdots
  =
  (f_m^-)^{(r)}(0^-)
  =
  (f_1^+)^{(r)}(0^+)
  =
  \cdots
  =
  (f_n^+)^{(r)}(0^+).
\]
\end{proposition}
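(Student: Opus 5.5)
The plan is to pass from the local sheetwise identities to a global equality by propagating along the connected bipartite incidence multigraph \(\mathcal R\). Fix an order \(r\geq0\). For each vertex of \(\mathcal R\), assign the number
\[
  \lambda(i):=(f_i^-)^{(r)}(0^-)\quad(i\in I),
  \qquad
  \lambda(j):=(f_j^+)^{(r)}(0^+)\quad(j\in O).
\]
These are well defined because sheets with a common left (resp. right) incidence share one function on the negative (resp. positive) half-line, and \(f_i^-\) and \(f_j^+\) are the restrictions of smooth sheet functions, so their one-sided derivatives exist. The displayed smoothness condition through \(b_\alpha\) says exactly that \(\lambda(\ell(\alpha))=\lambda(r(\alpha))\) for every sheet \(\alpha\in\mathcal S\). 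In other words, \(\lambda\) is constant across every edge of \(\mathcal R\).

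First, I will justify that smoothness condition from the chart structure. In the exceptional chart \(U_\alpha(\varepsilon)\) with coordinate \(x\), the local representative of a smooth scalar function equals \(f^-_{\ell(\alpha)}\) for \(x<0\) and \(f^+_{r(\alpha)}\) for \(x>0\), and it is smooth through \(0\). This is the same one-sided argument as in Lemma~\ref{lem:graph-smooth-gluing}. Here the half-germs are identities and no signs occur, because the coordinate is identity-glued and the edges are oriented incoming-to-outgoing.

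Next, take any two vertices \(u,w\in I\sqcup O\). Because \(\mathcal R\) is connected, there is a finite path \(u=u_0,u_1,\ldots,u_N=w\) whose consecutive vertices are joined by some sheet. Induction along the path gives \(\lambda(u)=\lambda(w)\). Hence \(\lambda\) is constant on \(I\sqcup O\), which is the claimed chain of equalities at order \(r\). Since \(r\) was arbitrary, the statement follows.

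There is no serious obstacle. The only care needed is the bookkeeping of well-definedness: the values \(f_i^-\) and \(f_j^+\) must depend only on the channel and not on the representing sheet. This holds by the equivalence relation defining \(\mathbb B_{\mathcal R}\) for \(x\neq0\), together with continuity of derivatives up to \(0\) from one side.
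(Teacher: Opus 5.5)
Your proposal is correct and follows the paper's argument: smoothness through each origin \(b_\alpha\) equates the one-sided \(r\)-th derivatives of the channels \(\ell(\alpha)\) and \(r(\alpha)\), and connectedness of \(\mathcal R\) propagates these equalities along incidence chains to every incoming and outgoing channel. Your extra checks, that channel values are independent of the chosen sheet and that no sign factors arise in the sheet coordinate \(x\), are sound.
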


At first Sobolev order only the common value remains.

\begin{corollary}[Scalar \(H^1\)-closure]
\label{cor:Bmn-H1}
The descended scalar smooth core closes to
\[
  H^1_{\mathrm{cont}}(\Gamma_{m,n}).
\]
Equivalently,
\[
  u_1^-(0)
  =
  \cdots
  =
  u_m^-(0)
  =
  u_1^+(0)
  =
  \cdots
  =
  u_n^+(0).
\]
\end{corollary}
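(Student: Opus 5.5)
The plan is to reduce Corollary~\ref{cor:Bmn-H1} to the general first-order closure, Corollary~\ref{cor:universal-first-order-graph-closure}, by checking that \((\mathbb B_{\mathcal R},\Gamma_{m,n},\pi)\) is a finite smooth graph resolution in the sense of Definition~\ref{def:finite-smooth-graph-resolution}. That corollary with \(p=2\) then gives \(\overline{\mathscr A_{\mathbb B_{\mathcal R}}}^{\,H^1_\oplus}=H^1_{\mathrm{cont}}(\Gamma_{m,n})\). Since \(\Gamma_{m,n}\) has a single vertex, the continuity condition is exactly the displayed equality of the \(m+n\) one-sided values.

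I would verify the axioms in the following order.

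\begin{enumerate}
  \item Conditions (i) and (ii) are Proposition~\ref{prop:Bmn-basic} and Theorem~\ref{thm:Bmn-Hausdorff-reflection}, both of which use connectedness of \(\mathcal R\).
  \item For (iii), the exceptional set \(\pi^{-1}(v)=\{b_\alpha\}\) is finite. Off this set, \(\pi\) is the identity on the regular branches, so it is a diffeomorphism onto \(\Gamma_{m,n}\setminus\{v\}\).
  \item For (iv), a compact \(K\subseteq\Gamma_{m,n}\) is contained in a finite union of bounded edge segments together with \(v\). Its preimage is covered by finitely many compact pieces \([-R,R]\subset\mathbb R_\alpha\) pushed down by \(\varrho_{\mathcal R}\), so it is compact.
  \item For (v), at \(b_\alpha\) I use the chart \(U_\alpha(\varepsilon)\) with coordinate \(x\). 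The positive side lands in the outgoing end of \(e^+_{r(\alpha)}\) and the negative side in the incoming end of \(e^-_{\ell(\alpha)}\). The outward coordinates are \(t=x\) and \(t=-(-x)\), so \(\kappa_{\alpha,\pm}=\mathrm{id}\), with \(\kappa(0)=0\) and \(\kappa'(0)=1>0\). These two ends are distinct, since one is incoming and the other outgoing.
  \item For (vi), every channel carries at least one sheet. Connectedness of \(\mathcal R\) gives this unless \(m+n\) channels include an isolated vertex, and an isolated vertex cannot occur in a connected graph with at least one edge.
\end{enumerate}

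For robustness I would also give a direct argument, mirroring Theorem~\ref{thm:L2-H1-core}. By Proposition~\ref{prop:Bmn-full-jet-matching}, every element of the descended core is continuous at \(v\). Continuity of the vertex traces on \(H^1_\oplus\) then places the closure inside \(H^1_{\mathrm{cont}}(\Gamma_{m,n})\).

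For the converse inclusion, take \(u\in H^1_{\mathrm{cont}}\) with common vertex value \(c\). Let \(w=c\,\chi\), where \(\chi\) is a cutoff equal to \(1\) near the vertex on every edge. The function \(w\) has constant jets at \(v\), so its pullback is smooth on every sheet through \(b_\alpha\) and lies in the core. Then \(u-w\) has zero traces and is approximated in \(H^1_\oplus\) by \(C_c^\infty(\Gamma_{m,n}\setminus\{v\})\), which is contained in the core by Proposition~\ref{prop:one-dimensional-trace-package}.

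The only point needing care, and the step I expect to be the main obstacle, is the sign and orientation bookkeeping in (v). I must check that the outward coordinate on an incoming branch is \(-x\), so that the nondegeneracy \(\kappa'(0)>0\) holds. The direct argument sidesteps this, so I would present it as the main proof and cite the general corollary as the conceptual source.
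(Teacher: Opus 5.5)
Your proposal is correct and follows the paper's intended route. The corollary is the star-graph specialisation of Corollary~\ref{cor:universal-first-order-graph-closure}, exactly as the paper states for $\mathbb B_k$, and your direct argument is the first-order case of the proof of Theorem~\ref{thm:exact-graph-resolution-closure}, with $w=c\chi$ realising the constant jet of Proposition~\ref{prop:zeroth-order-prolongation}.

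One small attribution fix: Proposition~\ref{prop:one-dimensional-trace-package} gives the density of $C_c^\infty(\Gamma_{m,n}\setminus\{v\})$ in the zero-trace kernel. Its inclusion in the descended core is a separate, trivial fact: such functions pull back to smooth, compactly supported functions that vanish near every $b_\alpha$.
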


Equip the labelled resolving presentation
\[
  \varrho_{\mathcal R}:
  \widetilde{\mathbb B}_{\mathcal R}
  \longrightarrow
  \mathbb B_{\mathcal R}
\]
with sheet measures
\(d\mu_\alpha=c_\alpha\,dx, \qquad c_\alpha>0\).
Together with the Hausdorff reflection
\(\pi:\mathbb B_{\mathcal R}\to\Gamma_{m,n}\), these data form a marked
analytic graph resolution. Its graph measure is
\[
  \mu_{\Gamma_{m,n}}
  =
  (\pi\circ\varrho_{\mathcal R})_*
  \left(
    \bigoplus_{\alpha\in\mathcal S}\mu_\alpha
  \right).
\]
Collect the resulting incidence weights in the nonnegative matrix
\(C=(C_{ij})\),
where
\[
  C_{ij}
  :=
  \sum_{\substack{
    \alpha\in\mathcal S\\
    \ell(\alpha)=i,\ r(\alpha)=j
  }}
  c_\alpha.
\]

Define the incoming marginal weights
\(a_i := \sum_{j=1}^{n}C_{ij}, \qquad i=1,\ldots,m\),
and the outgoing marginal weights
\(b_j := \sum_{i=1}^{m}C_{ij}, \qquad j=1,\ldots,n\).
Finally set
\[
  \mathcal C
  :=
  \sum_{i=1}^{m}a_i
  =
  \sum_{j=1}^{n}b_j
  =
  \sum_{\alpha\in\mathcal S}c_\alpha.
\]

\begin{theorem}[Resolution-induced global balance]
\label{thm:Bmn-global-balance}
The pushforward weights satisfy
\[
  \sum_{i=1}^{m}a_i
  =
  \sum_{j=1}^{n}b_j
  =
  \mathcal C.
\]
\end{theorem}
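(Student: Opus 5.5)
The identity is a double-counting of the total sheet mass, and I would prove it by computing the pushforward measure edge by edge. The first step identifies the Radon--Nikodym densities of \(\mu_{\Gamma_{m,n}}\) on each open edge with the marginals \(a_i\) and \(b_j\). On the incoming half-line \(e_i^-\), the preimage under \(\pi\circ\varrho_{\mathcal R}\) of a Borel set \(E\subseteq(-\infty,0)\) is the disjoint union, over the sheets \(\alpha\) with \(\ell(\alpha)=i\), of the copies of \(E\) in \(\mathbb R_\alpha\). The pushforward therefore assigns \(E\) the mass
\[
  \sum_{\ell(\alpha)=i} c_\alpha\,|E|
  =
  \sum_{j=1}^{n}\;\sum_{\ell(\alpha)=i,\ r(\alpha)=j} c_\alpha\,|E|
  =
  \sum_{j=1}^{n} C_{ij}\,|E|
  =
  a_i\,|E|.
\]
The same computation on \(e_j^+\) gives density \(b_j\). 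Here the only input is that each sheet has exactly one left incidence \(\ell(\alpha)\) and one right incidence \(r(\alpha)\). The exceptional fibre and the vertex have measure zero and can be ignored.

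The second step is the balance itself, which is exchanging the order of summation in the finite sum \(\sum_{i,j}C_{ij}\). Grouping by rows gives \(\sum_i a_i\), and grouping by columns gives \(\sum_j b_j\). Both equal \(\sum_{\alpha\in\mathcal S}c_\alpha=\mathcal C\), because the sets \(\{\alpha:\ell(\alpha)=i,\ r(\alpha)=j\}\) partition \(\mathcal S\).

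I expect no real obstacle. The only point needing care is to show that the pushforward counts each labelled sheet separately, even when sheets coincide in \(\mathbb B_{\mathcal R}\) over \(x<0\) or \(x>0\). This follows because the measure lives on the disjoint labelled presentation \(\widetilde{\mathbb B}_{\mathcal R}\), as in Definition~\ref{def:marked-analytic-graph-resolution}. For the same reason the identity is a pushforward identity, exactly as in Theorem~\ref{thm:Bk-weight-balance}, which is the special case \(m=1\).
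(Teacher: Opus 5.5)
Your proposal is correct and takes the same approach as the paper. The paper gives no separate proof: the identity follows from the definitions of \(C_{ij}\), \(a_i\), \(b_j\) (indeed \(\mathcal C\) is defined by exactly this chain), and it is the row/column interchange in the finite sum \(\sum_{i,j}C_{ij}=\sum_{\alpha}c_\alpha\). Your edgewise pushforward computation, which identifies the densities on \(e_i^-\) and \(e_j^+\) with \(a_i\) and \(b_j\), makes explicit a step the paper leaves implicit.
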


The matrix \(C\) contains strictly finer information than its row and column
sums. The scalar \(H^1\)-domain, however, depends only on the connected
Hausdorff star, and the kinetic form depends only on the induced edge
weights.

\begin{proposition}[Marginal-weight principle]
\label{prop:Bmn-marginal-principle}
For the resolution-induced scalar \(H^1\)-theory, the resolving density matrix
\[
  C=(C_{ij})
\]
enters the Hamiltonian only through
\[
  \mathbf a=(a_1,\ldots,a_m),
  \qquad
  \mathbf b=(b_1,\ldots,b_n).
\]
Connected resolutions with the same incoming and outgoing
marginal weights have identical scalar \(H^1\)-domains, canonical
Hamiltonians, and vertex scattering matrices.
\end{proposition}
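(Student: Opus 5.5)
The plan is to follow the chain of dependencies that produces the Hamiltonian and to check, at each link, that the only resolution data entering are the channel weights \(\mathbf a,\mathbf b\). The \(H^1\)-domain comes first. By Corollary~\ref{cor:Bmn-H1}, the closure of the descended scalar core is \(H^1_{\mathrm{cont}}(\Gamma_{m,n})\) whenever \(\mathcal R\) is connected. This space is defined entirely from the star graph \(\Gamma_{m,n}\): edgewise \(H^1\) together with equality of the \(m+n\) vertex values. It therefore does not depend on \(C\), nor even on \(\mathcal S\). We should note that this equality is with respect to the weighted direct-sum \(H^1\)-norm. Because finitely many positive weights give norms equivalent to the unweighted one, Theorem~\ref{thm:exact-graph-resolution-closure} shows the closure is unaffected by the weights.

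Next we identify the measure. Unwinding \(\mu_{\Gamma_{m,n}}=(\pi\circ\varrho_{\mathcal R})_*\bigl(\bigoplus_\alpha c_\alpha\,dx\bigr)\) edge by edge, the preimage of the incoming edge \(e_i^-\) is the disjoint union of the negative half-lines of the sheets \(\alpha\) with \(\ell(\alpha)=i\). Each such sheet maps isometrically onto \(e_i^-\), so the pushforward density there is \(\sum_{\ell(\alpha)=i}c_\alpha=\sum_j C_{ij}=a_i\). In the same way the density on \(e_j^+\) is \(b_j\). The origins \(b_\alpha\) are a finite null set, so they contribute nothing. Hence the Hilbert space is \(\mathcal H=\bigoplus_i L^2(e_i^-,a_i\,dx)\oplus\bigoplus_j L^2(e_j^+,b_j\,dx)\), and it depends on \(C\) only through its marginals.

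The kinetic form is \(\mathfrak q_{\mathbf a}\) of Section~\ref{subsec:weighted-form}, taken on \(H^1_{\mathrm{cont}}(\Gamma_{m,n})\) with these edge weights. By Proposition~\ref{prop:weighted-form-closed} it is closed. By Theorem~\ref{thm:weighted-kirchhoff} its operator acts as \(-u''\) edgewise, with vertex continuity and the weighted Kirchhoff condition \(\sum_{\epsilon}w_{e(\epsilon)}\partial_{\nu_\epsilon}u(v)=0\), where \(w\) runs over \((\mathbf a,\mathbf b)\). The scattering matrix follows from Theorem~\ref{thm:weighted-vertex-S}: it is \(S_v=\tfrac{2}{A_v}\mathbf v\mathbf v^*-I\), with \(\mathbf v=(\sqrt{a_1},\ldots,\sqrt{a_m},\sqrt{b_1},\ldots,\sqrt{b_n})\) and \(A_v=2\mathcal C\). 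This again involves only the marginals; Theorem~\ref{thm:Bmn-global-balance} is used only to simplify \(A_v\).

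Finally, given two connected incidence data \((\mathcal R,c)\) and \((\mathcal R',c')\) with the same \(\mathbf a,\mathbf b\), we obtain literally the same Hilbert space, the same form domain, the same form, and hence the same operator and the same \(S_v\). There is no real obstacle here. The only point needing care is the measure computation: we must verify that pushforward along the finite-to-one sheet projection sums the densities, and that the exceptional fibre is null. One should also remark that connectedness of \(\mathcal R\) is what makes \(\Gamma_{m,n}\) the Hausdorff reflection, via Theorem~\ref{thm:Bmn-Hausdorff-reflection}. Without connectedness the graph itself could change, so the hypothesis cannot be dropped.
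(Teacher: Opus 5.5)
Your proposal is correct and follows the paper's own reasoning. The paper gives no separate proof: it notes that the closed scalar domain is \(H^1_{\mathrm{cont}}(\Gamma_{m,n})\) for every connected incidence graph (Corollary~\ref{cor:Bmn-H1}), and that the pushforward measure and kinetic form depend only on the marginals \(a_i=\sum_j C_{ij}\), \(b_j=\sum_i C_{ij}\). It then obtains the Hamiltonian and the scattering matrix from Theorems~\ref{thm:weighted-kirchhoff} and \ref{thm:weighted-vertex-S}, exactly as you do, and your edge-by-edge pushforward computation and norm-equivalence remark just make those implicit steps explicit.
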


The ordinary scalar theory does not in general recover which resolving
sheets connect particular incoming and outgoing channels.

\subsection{Resolution-induced Hamiltonian and bright--dark scattering}
\label{subsec:Bmn-Hamiltonian}
\label{subsec:Bmn-S-matrix}
\label{subsec:Bmn-bright-dark}
\label{subsec:Bmn-individual-scattering}
\label{subsec:Bmn-complete-example}

The weighted Hilbert space is
\[
  \mathcal H_{m,n}
  =
  \bigoplus_{i=1}^{m}
  L^2(e_i^-,a_i\,dx)
  \oplus
  \bigoplus_{j=1}^{n}
  L^2(e_j^+,b_j\,dx).
\]

On
\(H^1_{\mathrm{cont}}(\Gamma_{m,n})\),
consider
\[
  \mathfrak q_{m,n}[u]
  =
  \sum_{i=1}^{m}
  a_i
  \int_{-\infty}^{0}
  |(u_i^-)'(x)|^2\,dx
  +
  \sum_{j=1}^{n}
  b_j
  \int_0^\infty
  |(u_j^+)'(x)|^2\,dx.
\]

By Theorem~\ref{thm:weighted-kirchhoff}:

\begin{theorem}[Resolution-induced \(m\)-to-\(n\) Hamiltonian]
\label{thm:Bmn-Hamiltonian}
The associated self-adjoint Hamiltonian acts edgewise as
\[
  -\frac{d^2}{dx^2}
\]
and has domain
\[
  \mathcal D(H_{m,n})
  =
  \left\{
    u\in H^2_\oplus(\Gamma_{m,n}):
    \begin{array}{l}
      u_i^-(0)=u_j^+(0)
      \text{ for all }i,j,\\[2mm]
      \displaystyle
      -\sum_{i=1}^{m}
      a_i(u_i^-)'(0^-)
      +
      \sum_{j=1}^{n}
      b_j(u_j^+)'(0^+)
      =
      0
    \end{array}
  \right\}.
\]
\end{theorem}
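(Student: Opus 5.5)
The plan is to specialise Theorem~\ref{thm:weighted-kirchhoff} to the star graph \(\Gamma_{m,n}\). First I will check that its hypotheses hold in this setting. By Corollary~\ref{cor:Bmn-H1}, the form domain is \(H^1_{\mathrm{cont}}(\Gamma_{m,n})\), and the weights \(a_i\), \(b_j\) are all strictly positive. Positivity holds because connectedness of \(\mathcal R\) guarantees that every channel is incident to at least one sheet, and each sheet carries \(c_\alpha>0\). Proposition~\ref{prop:weighted-form-closed} then shows that \(\mathfrak q_{m,n}\) is densely defined, nonnegative and closed in \(\mathcal H_{m,n}\), so the associated self-adjoint operator \(H_{m,n}\) exists.

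Next I will identify the action and the domain by repeating the argument of Theorem~\ref{thm:weighted-kirchhoff} in the present coordinates. Testing against \(\phi\in C_c^\infty\) supported inside a single edge gives \(-u''\) edgewise and \(u\in H^2\) on each edge, since the weight cancels from both sides. For general \(\phi\in H^1_{\mathrm{cont}}\), integrate by parts on each edge.

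The only step needing care is the orientation of outward normals. On an incoming half-line \((-\infty,0)\), the outward derivative at the vertex is \(+(u_i^-)'(0^-)\). Its boundary contribution is \(+a_i\overline{\phi(v)}(u_i^-)'(0^-)\), which enters \(\mathfrak q[\phi,u]-\langle\phi,-u''\rangle\) with the stated sign. On an outgoing half-line \((0,\infty)\), the contribution is \(-b_j\overline{\phi(v)}(u_j^+)'(0^+)\). The terms at infinity vanish for \(H^2\) functions. The total boundary term is therefore \(\overline{\phi(v)}\bigl(\sum_i a_i(u_i^-)'(0^-)-\sum_j b_j(u_j^+)'(0^+)\bigr)\).

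Because \(\phi(v)\) ranges over all of \(\mathbb C\) (choose, for example, a function equal to \(1\) near the vertex on every edge), this term vanishes exactly under the displayed flux condition. The condition matches the statement after multiplying by \(-1\). Continuity of \(u\) at the vertex comes from \(u\in H^1_{\mathrm{cont}}\).

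For the converse, take \(u\) in the displayed space. The same identity gives \(\mathfrak q[\phi,u]=\langle\phi,-u''\rangle_{\mathbf a,\mathbf b}\) for all \(\phi\) in the form domain, so \(u\in\mathcal D(H_{m,n})\) by the representation theorem. The main obstacle is only this sign bookkeeping between the incoming and outgoing orientations; everything else is inherited from Theorem~\ref{thm:weighted-kirchhoff}.
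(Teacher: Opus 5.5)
Your proposal is correct and follows the paper's route. The paper obtains this theorem directly by specialising Theorem~\ref{thm:weighted-kirchhoff} to the star graph \(\Gamma_{m,n}\), with weights \(a_i\) on the incoming edges and \(b_j\) on the outgoing edges and with the form domain supplied by Corollary~\ref{cor:Bmn-H1}. Your explicit integration-by-parts check for the incoming edges \((-\infty,0)\) and the outgoing edges \((0,\infty)\) reproduces the displayed flux condition, and your remark that connectedness of \(\mathcal R\) forces every \(a_i,b_j>0\) supplies the positivity that the general theorem requires.
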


Define
\[
  |s_{\mathrm{in}}\rangle
  :=
  \frac{1}{\sqrt{\mathcal C}}
  \begin{pmatrix}
    \sqrt{a_1}\\
    \vdots\\
    \sqrt{a_m}
  \end{pmatrix},
\]
and
\[
  |s_{\mathrm{out}}\rangle
  :=
  \frac{1}{\sqrt{\mathcal C}}
  \begin{pmatrix}
    \sqrt{b_1}\\
    \vdots\\
    \sqrt{b_n}
  \end{pmatrix}.
\]
By the global balance theorem, both vectors have unit norm.

Since the total weight incident at the star vertex is
\(2\mathcal C\),
Theorem~\ref{thm:weighted-vertex-S} gives:

\begin{theorem}[Flux-normalized scattering matrix]
\label{thm:Bmn-S}
The resulting vertex scattering matrix is
\[
  S_{m,n}
  =
  \begin{pmatrix}
  |s_{\mathrm{in}}\rangle
  \langle s_{\mathrm{in}}|-I_m
  &
  |s_{\mathrm{in}}\rangle
  \langle s_{\mathrm{out}}|
  \\[2mm]
  |s_{\mathrm{out}}\rangle
  \langle s_{\mathrm{in}}|
  &
  |s_{\mathrm{out}}\rangle
  \langle s_{\mathrm{out}}|-I_n
  \end{pmatrix}.
\]
It is Hermitian and unitary.
\end{theorem}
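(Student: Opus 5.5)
The plan is to specialise Theorem~\ref{thm:weighted-vertex-S} to the star vertex of \(\Gamma_{m,n}\). I would order the \(d=m+n\) channels with the \(m\) incoming edges first and the \(n\) outgoing edges second. Each channel is represented locally by a half-line with outward coordinate \(x\ge0\) at \(v\): for \(e_i^-\) this is \(x=-t\), and for \(e_j^+\) it is \(x=t\).

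With this convention, the condition in Theorem~\ref{thm:Bmn-Hamiltonian} reads
\[
  \sum_i a_i\,\partial_{\nu}u_i^- + \sum_j b_j\,\partial_\nu u_j^+ = 0 .
\]
This is exactly the weighted Kirchhoff condition of Theorem~\ref{thm:weighted-kirchhoff}, with weight vector \((a_1,\ldots,a_m,b_1,\ldots,b_n)\). Theorem~\ref{thm:weighted-vertex-S} therefore applies directly. It gives
\[
  S_{m,n}=\frac{2}{A_v}\mathbf v\mathbf v^*-I_{m+n},
  \qquad
  \mathbf v=(\sqrt{a_1},\ldots,\sqrt{a_m},\sqrt{b_1},\ldots,\sqrt{b_n})^{\mathsf T},
\]
with \(A_v=\sum_i a_i+\sum_j b_j\).

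Next I would invoke the global balance, Theorem~\ref{thm:Bmn-global-balance}, which gives \(A_v=2\mathcal C\). Hence \(2/A_v=1/\mathcal C\), and
\[
  \frac{1}{\mathcal C}\mathbf v\mathbf v^* = \mathbf w\mathbf w^*,
  \qquad
  \mathbf w := \mathcal C^{-1/2}\mathbf v = \bigl(|s_{\mathrm{in}}\rangle,\ |s_{\mathrm{out}}\rangle\bigr).
\]
Splitting \(\mathbf w\mathbf w^*-I\) into its incoming and outgoing blocks reproduces the displayed \(2\times2\) block matrix. The diagonal blocks are \(|s_{\mathrm{in}}\rangle\langle s_{\mathrm{in}}|-I_m\) and \(|s_{\mathrm{out}}\rangle\langle s_{\mathrm{out}}|-I_n\), and the off-diagonal blocks are the mixed outer products.

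For the Hermitian and unitary properties, I would not simply quote \(S_v^*=S_v\) and \(S_v^2=I\) from Theorem~\ref{thm:weighted-vertex-S}, but would also check them directly in block form. Note that \(\|\mathbf w\|^2=\|s_{\mathrm{in}}\|^2+\|s_{\mathrm{out}}\|^2=2\), since each vector has unit norm by global balance. Writing \(S=\mathbf w\mathbf w^*-I\), one gets
\[
  S^2 = \mathbf w(\mathbf w^*\mathbf w)\mathbf w^* - 2\mathbf w\mathbf w^* + I = I .
\]
Equivalently, \(S=2P-I\) where \(P=\tfrac12\mathbf w\mathbf w^*\) is the orthogonal projection onto \(\operatorname{span}\{\mathbf w\}\). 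The matrix \(S\) is Hermitian because it is built from a real vector.

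The only point needing care is the sign and orientation bookkeeping: the outward normal derivative on incoming edges must be \(-(u_i^-)'(0^-)\), so that the flux law matches the form used in Theorem~\ref{thm:weighted-vertex-S}. I would also check that the flux normalisation by \(\sqrt{\text{weight}}\) is applied uniformly across both channel types. Beyond that, the argument is immediate.
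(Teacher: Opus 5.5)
Your proposal is correct and follows the paper's own route. The paper likewise obtains \(S_{m,n}\) by specialising Theorem~\ref{thm:weighted-vertex-S} to the star vertex, and uses global balance to get total incident weight \(A_v=2\mathcal C\), so that \(\frac{2}{A_v}\mathbf v\mathbf v^*=\mathbf w\mathbf w^*\) with \(\mathbf w=(|s_{\mathrm{in}}\rangle,|s_{\mathrm{out}}\rangle)\). Your direct check of \(S^2=I\) via \(\|\mathbf w\|^2=2\), and your sign bookkeeping \(\partial_\nu u_i^-=-(u_i^-)'(0^-)\), agree with the paper's conventions and only make explicit what the paper inherits from that theorem.
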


Define
\[
  \mathcal D_{\mathrm{in}}
  :=
  \{
    u\in\mathbb C^m:
    \langle s_{\mathrm{in}},u\rangle=0
  \},
\]
and
\[
  \mathcal D_{\mathrm{out}}
  :=
  \{
    v\in\mathbb C^n:
    \langle s_{\mathrm{out}},v\rangle=0
  \}.
\]
Then
\(\dim\mathcal D_{\mathrm{in}}=m-1, \qquad \dim\mathcal D_{\mathrm{out}}=n-1\).

\begin{theorem}[Bright-channel transfer]
\label{thm:Bmn-bright-transfer}
The scattering matrix satisfies
\[
  S_{m,n}|s_{\mathrm{in}}\rangle
  =
  |s_{\mathrm{out}}\rangle,
\]
\[
  S_{m,n}|s_{\mathrm{out}}\rangle
  =
  |s_{\mathrm{in}}\rangle,
\]
and
\[
  S_{m,n}u=-u
  \qquad
  (u\in\mathcal D_{\mathrm{in}}),
\]
\[
  S_{m,n}v=-v
  \qquad
  (v\in\mathcal D_{\mathrm{out}}).
\]
Hence
\[
  S_{m,n}
  \cong
  \begin{pmatrix}
    0&1\\
    1&0
  \end{pmatrix}
  \oplus
  (-I_{m-1})
  \oplus
  (-I_{n-1}).
\]
\end{theorem}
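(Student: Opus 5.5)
The plan is to compute directly with the block form of \(S_{m,n}\) from Theorem~\ref{thm:Bmn-S}, using only the normalisations \(\langle s_{\mathrm{in}},s_{\mathrm{in}}\rangle=\langle s_{\mathrm{out}},s_{\mathrm{out}}\rangle=1\). These normalisations follow from the global balance identity \(\sum_i a_i=\sum_j b_j=\mathcal C\) of Theorem~\ref{thm:Bmn-global-balance}. Throughout, a vector \(u\in\mathbb C^m\) is identified with \((u,0)\in\mathbb C^m\oplus\mathbb C^n\), and a vector \(v\in\mathbb C^n\) with \((0,v)\).

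First I will apply \(S_{m,n}\) to \((s_{\mathrm{in}},0)\). The upper block gives \(|s_{\mathrm{in}}\rangle\langle s_{\mathrm{in}}|s_{\mathrm{in}}\rangle-s_{\mathrm{in}}=0\). The lower block gives \(|s_{\mathrm{out}}\rangle\langle s_{\mathrm{in}}|s_{\mathrm{in}}\rangle=s_{\mathrm{out}}\). Hence \(S_{m,n}s_{\mathrm{in}}=s_{\mathrm{out}}\), and the symmetric computation gives \(S_{m,n}s_{\mathrm{out}}=s_{\mathrm{in}}\).

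Next, for \(u\in\mathcal D_{\mathrm{in}}\), every term containing \(\langle s_{\mathrm{in}}|u\rangle\) vanishes. This leaves only \(-I_mu\) in the upper block and \(0\) in the lower block, so \(S_{m,n}u=-u\). The same argument with \(\langle s_{\mathrm{out}}|v\rangle=0\) gives \(S_{m,n}v=-v\) for \(v\in\mathcal D_{\mathrm{out}}\).

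For the block decomposition, I will use the orthogonal splitting
\[
\mathbb C^m\oplus\mathbb C^n=\operatorname{span}\{s_{\mathrm{in}},s_{\mathrm{out}}\}\oplus\mathcal D_{\mathrm{in}}\oplus\mathcal D_{\mathrm{out}}.
\]
This splitting is orthogonal because the two summand spaces \(\mathbb C^m\) and \(\mathbb C^n\) are themselves orthogonal, and \(\mathcal D_{\mathrm{in}}\), \(\mathcal D_{\mathrm{out}}\) are the orthogonal complements of the unit vectors \(s_{\mathrm{in}}\), \(s_{\mathrm{out}}\) within them. The dimensions add up: \(2+(m-1)+(n-1)=m+n\). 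Each summand is invariant under \(S_{m,n}\) by the identities just proved, so the restrictions are the swap matrix \(\begin{pmatrix}0&1\\1&0\end{pmatrix}\), \(-I_{m-1}\) and \(-I_{n-1}\). Choosing an orthonormal basis adapted to this splitting gives the stated unitary equivalence.

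There is no genuine obstacle here. The only point that needs care is the use of the unit normalisation, which depends on the pushforward balance. Without that balance, \(S_{m,n}s_{\mathrm{in}}\) would acquire a reflected component along \(s_{\mathrm{in}}\).
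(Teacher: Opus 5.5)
Your proposal is correct and takes essentially the paper's approach. The paper treats this as an immediate consequence of the block form in Theorem~\ref{thm:Bmn-S} together with the unit normalisation of \(|s_{\mathrm{in}}\rangle\) and \(|s_{\mathrm{out}}\rangle\) coming from global balance, just as in the proof of Theorem~\ref{thm:Bk-bright-dark}; your orthogonal splitting \(\operatorname{span}\{s_{\mathrm{in}},s_{\mathrm{out}}\}\oplus\mathcal D_{\mathrm{in}}\oplus\mathcal D_{\mathrm{out}}\) correctly supplies the stated unitary equivalence.
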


The resolution-induced junction transfers a single collective mode
between the incoming and outgoing channel spaces. The orthogonal
combinations are dark.

For a flux-normalized wave incident on a single incoming channel \(i\),
\(P_{\mathrm{out}\leftarrow i} = \frac{a_i}{\mathcal C}\),
and
\(P_{j\leftarrow i} = \frac{a_ib_j}{\mathcal C^2}\).

\begin{proposition}
\label{prop:Bmn-perfect-individual}
An individual incoming channel is transmitted with total probability one
if and only if
\[
  m=1.
\]
\end{proposition}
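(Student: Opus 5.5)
The plan is to compute the total transmitted probability directly from the block form of \(S_{m,n}\) in Theorem~\ref{thm:Bmn-S}, reducing the statement to an identity for the incoming marginal weights. A flux-normalized wave incident on incoming channel \(i\) is the basis vector \(|i\rangle\in\mathbb C^m\). Its outgoing components are given by the lower-left block, namely \(|s_{\mathrm{out}}\rangle\langle s_{\mathrm{in}}|i\rangle\). Since \(|s_{\mathrm{out}}\rangle\) has unit norm by the global balance theorem, the total transmitted probability is
\[
  P_{\mathrm{out}\leftarrow i}
  =
  |\langle s_{\mathrm{in}},i\rangle|^2
  =
  \frac{a_i}{\mathcal C}.
\]
This is the formula already recorded after Theorem~\ref{thm:Bmn-bright-transfer}, and summing \(P_{j\leftarrow i}=a_ib_j/\mathcal C^2\) over \(j\) gives the same value.

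The statement therefore reduces to the equivalence \(a_i=\mathcal C\Leftrightarrow m=1\). Because \(\mathcal C=\sum_{i'=1}^m a_{i'}\), the equation \(a_i=\mathcal C\) holds exactly when \(a_{i'}=0\) for every \(i'\neq i\). If \(m=1\), then \(a_1=\mathcal C\) trivially, so the single incoming channel is fully transmitted.

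For the converse, suppose \(m\ge2\). I will show that every marginal weight is strictly positive. Connectedness of \(\mathcal R\), assumed throughout, means each incoming vertex \(i'\in I\) is incident to at least one sheet \(\alpha\). Since \(c_\alpha>0\), this gives \(a_{i'}\ge C_{i'r(\alpha)}\ge c_\alpha>0\). Hence \(a_i<\mathcal C\) for every \(i\), so no individual incoming channel is transmitted with probability one.

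The only point needing care is this positivity of every \(a_{i'}\), since an incoming channel with no incident sheet would break the argument. Connectedness of \(\mathcal R\) (or equally the requirement that every edge end be represented at an exceptional point) rules that case out. The reflection amplitude gives a consistency check: \(P_{i\leftarrow i}=(1-a_i/\mathcal C)^2\) vanishes exactly when \(m=1\).
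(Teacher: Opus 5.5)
Your proposal is correct and follows the paper's route: perfect transmission from channel \(i\) means \(a_i/\mathcal C=1\), and this forces \(m=1\) because the incoming marginals are strictly positive and sum to \(\mathcal C\). Deriving \(P_{\mathrm{out}\leftarrow i}\) from the lower-left block of \(S_{m,n}\), and getting \(a_{i'}>0\) from connectedness of \(\mathcal R\), just make explicit two facts that the paper's two-line proof takes as given.
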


\begin{proof}
Perfect transmission from channel \(i\) requires
\(a_i=\mathcal C\).
Since every incoming weight is positive and
\(\mathcal C=\sum_{\ell=1}^{m}a_\ell\),
this is possible exactly when there is only one incoming channel.
\end{proof}

The \(1\)-to-\(k\) branching line is therefore the special case in which
the incoming channel space contains no dark sector.

As a symmetric example, suppose there is one equal-density resolving sheet
for every pair
\((i,j)\in I\times O\).
Then
\(a_i=nc, \qquad b_j=mc, \qquad \mathcal C=mnc\),
and
\[
  |s_{\mathrm{in}}\rangle
  =
  \frac1{\sqrt m}
  \sum_{i=1}^{m}|i\rangle,
  \qquad
  |s_{\mathrm{out}}\rangle
  =
  \frac1{\sqrt n}
  \sum_{j=1}^{n}|j\rangle.
\]
The uniform incoming superposition is transferred exactly to the uniform
outgoing superposition, while incidence on a single incoming channel
transfers total probability \(1/m\).

\subsection{First-order operator and the channel-index theorem}
\label{subsec:Bmn-first-order}
\label{subsec:Bmn-adjoint}
\label{subsec:Bmn-deficiency}
\label{subsec:Bmn-SA}
\label{subsec:Bmn-dark-deficiency}
\label{subsec:Bmn-orientation-reversal}
\label{subsec:Bmn-Bk-comparison}
\label{subsec:Bmn-interpretation}

Orient every incoming edge toward the junction and every outgoing edge away
from it. This orientation agrees with increasing sheet coordinate on the
resolving copies of \(\mathbb R\).

Define \(\mathsf D_{m,n} = -i\frac{d}{dx}\) edgewise on
\(\mathcal D(\mathsf D_{m,n}) = H^1_{\mathrm{cont}}(\Gamma_{m,n})\).

By Proposition~\ref{prop:first-order-boundary-form-general},
\[
  \langle\mathsf D_{m,n}u,v\rangle
  -
  \langle u,\mathsf D_{m,n}v\rangle
  =
  i
  \left(
    \sum_{i=1}^{m}a_i
    -
    \sum_{j=1}^{n}b_j
  \right)
  \overline{u(v)}\,v(v).
\]

\begin{proposition}[Automatic first-order symmetry]
\label{prop:Bmn-first-order-symmetry}
The resolution-induced global balance makes
\[
  \mathsf D_{m,n}
\]
symmetric.
\end{proposition}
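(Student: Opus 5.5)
The plan mirrors the proof of Proposition~\ref{prop:Bk-first-order-symmetry}. I would first check that the hypotheses of Proposition~\ref{prop:first-order-boundary-form-general} hold for the star graph \(\Gamma_{m,n}\). It is a finite metric graph with one vertex \(v\) and \(m+n\) external half-lines. The edges carry the constant positive weights \(a_i\) and \(b_j\) produced by the pushforward construction. The domain is \(H^1_{\mathrm{cont}}(\Gamma_{m,n})\), so \(\mathsf D_{m,n}\) is densely defined by Proposition~\ref{prop:first-order-closed-dense}.

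Next I would identify the orientation data. Incoming edges are oriented toward the junction, so \(E_{\mathrm{in}}(v)=\{e_1^-,\ldots,e_m^-\}\). Outgoing edges are oriented away from it, so \(E_{\mathrm{out}}(v)=\{e_1^+,\ldots,e_n^+\}\). The boundary-form identity displayed just before the statement then applies for all \(u,w\in H^1_{\mathrm{cont}}(\Gamma_{m,n})\):
\[
  \langle\mathsf D_{m,n}u,w\rangle-\langle u,\mathsf D_{m,n}w\rangle
  =
  i\Bigl(\sum_{i=1}^m a_i-\sum_{j=1}^n b_j\Bigr)\overline{u(v)}\,w(v).
\]
The half-line ends contribute nothing at infinity, since \(H^1\) functions on a half-line tend to zero.

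Finally, Theorem~\ref{thm:Bmn-global-balance} gives \(\sum_i a_i=\sum_j b_j=\mathcal C\). The coefficient therefore vanishes, and \(\langle\mathsf D_{m,n}u,w\rangle=\langle u,\mathsf D_{m,n}w\rangle\) on the whole domain. Together with density, this is symmetry.

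The global balance is itself immediate: both marginal sums equal \(\sum_{i,j}C_{ij}=\sum_\alpha c_\alpha\), because every sheet contributes its density exactly once to an incoming edge and exactly once to an outgoing edge. The only step needing care is the sign convention linking edge orientation to \(E_{\mathrm{in}}\) and \(E_{\mathrm{out}}\). Even that is harmless, because the coefficient vanishes whichever sign is used.
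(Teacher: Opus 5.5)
Your proposal is correct and follows the paper's own argument. It applies the general boundary form of Proposition~\ref{prop:first-order-boundary-form-general} at the single star vertex, where the coefficient is $\sum_{i=1}^m a_i-\sum_{j=1}^n b_j$, and notes that this coefficient vanishes by the pushforward identity of Theorem~\ref{thm:Bmn-global-balance}, exactly as in the $\mathbb B_k$ case. Your remarks on dense definition and on the vanishing of the terms at infinity are correct details that the paper leaves implicit.
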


The adjoint imposes one weighted boundary relation rather than continuity.

\begin{proposition}[Adjoint boundary condition]
\label{prop:Bmn-adjoint}
The adjoint acts edgewise as
\[
  \mathsf D_{m,n}^*\psi=-i\psi'
\]
with domain
\[
  \mathcal D(\mathsf D_{m,n}^*)
  =
  \left\{
    \psi\in H^1_\oplus(\Gamma_{m,n}):
    \sum_{i=1}^{m}
    a_i\psi_i^-(0)
    =
    \sum_{j=1}^{n}
    b_j\psi_j^+(0)
  \right\}.
\]
\end{proposition}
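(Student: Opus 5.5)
The plan follows the proof of Proposition~\ref{prop:Bk-adjoint}, now with $m$ incoming channels. We prove the two inclusions separately, starting from the minimal operator.

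For the easy direction, let $\phi\in C_c^\infty(\Gamma_{m,n}\setminus\{v\})\subset H^1_{\mathrm{cont}}(\Gamma_{m,n})$, supported in the interior of a single edge. Pairing $\psi\in\mathcal D(\mathsf D_{m,n}^*)$ against such $\phi$ in the weighted inner product shows that, on every edge, the distributional derivative of $\psi_e$ equals $i$ times the $L^2$ function $(\mathsf D_{m,n}^*\psi)_e$ (the weight $a_i$ or $b_j$ cancels). Hence $\psi\in H^1_\oplus(\Gamma_{m,n})$ and $\mathsf D_{m,n}^*\psi=-i\psi'$ edgewise.

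With edgewise regularity established, we integrate by parts against a general $\phi\in H^1_{\mathrm{cont}}(\Gamma_{m,n})$. The computation is the one in Proposition~\ref{prop:first-order-boundary-form-general}, except that $\psi$ is now only piecewise $H^1$ and need not be continuous. Incoming edges are oriented toward the vertex, so they contribute $+ia_i\overline{\phi(v)}\psi_i^-(0)$. Outgoing edges contribute $-ib_j\overline{\phi(v)}\psi_j^+(0)$. The terms at infinity vanish because $H^1$ functions on half-lines decay. With the paper's sign conventions, this gives
\[
  \langle \mathsf D_{m,n}\phi,\psi\rangle-\langle\phi,-i\psi'\rangle
  =
  i\,\overline{\phi(v)}
  \Bigl(\sum_{i=1}^m a_i\psi_i^-(0)-\sum_{j=1}^n b_j\psi_j^+(0)\Bigr).
\]
The common vertex value $\phi(v)$ can be prescribed arbitrarily, for instance by a cutoff equal to $1$ near $v$ on every edge. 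Therefore the bracket must vanish, and $\mathcal D(\mathsf D_{m,n}^*)$ is contained in the displayed space.

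For the reverse inclusion, take $\psi\in H^1_\oplus(\Gamma_{m,n})$ satisfying the weighted relation. The same identity shows that $\langle\mathsf D_{m,n}\phi,\psi\rangle=\langle\phi,-i\psi'\rangle$ for every $\phi$ in the domain. This is exactly the statement that $\psi\in\mathcal D(\mathsf D_{m,n}^*)$ with the stated action.

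The main obstacle is only bookkeeping: getting the orientation signs and the weights right so that the boundary functional is the weighted in--out difference. Continuity of $\phi$ is what collapses all $m+n$ endpoint terms onto the single scalar $\overline{\phi(v)}$. No balance hypothesis is needed for this computation; global balance enters only in the symmetry statement.
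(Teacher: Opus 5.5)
Your proof is correct and follows essentially the paper's argument: you integrate by parts against \(\phi\in H^1_{\mathrm{cont}}(\Gamma_{m,n})\), use continuity of \(\phi\) to collapse all endpoint terms onto \(\overline{\phi(v)}\), and use that this vertex value can be chosen freely to force the weighted relation. Your preliminary step with interior test functions, which shows that adjoint elements are edgewise \(H^1\) with \(\mathsf D_{m,n}^*\psi=-i\psi'\), supplies a justification the paper leaves implicit, and your orientation signs and weights check out.
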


\begin{proof}
Integration by parts against an element of
\(H^1_{\mathrm{cont}}(\Gamma_{m,n})\) leaves one boundary term proportional
to
\(\sum_{i=1}^{m} a_i\psi_i^-(0) - \sum_{j=1}^{n} b_j\psi_j^+(0)\).
The common test-function value at the vertex is arbitrary, giving the stated
condition.
\end{proof}

The calculation below concerns the particular weighted continuity domain
selected by the non-Hausdorff resolution. General first-order and index
theory on metric graphs is well established
\citep{FullingKuchmentWilson07,Post09,ExnerMomentum13,Richtsfeld24};
the point here is the reduction of the present resolution to this explicit
domain and the resulting channel-count formula.

The deficiency spaces follow directly from the adjoint boundary relation.

For
\(\psi\in\ker(\mathsf D_{m,n}^*-i)\),
one has
\(\psi'=-\psi\).
There is no nonzero square-integrable solution on an incoming half-line,
whereas on the outgoing edges
\(\psi_j^+(x)=A_je^{-x}\).
The adjoint condition becomes
\(\sum_{j=1}^{n}b_jA_j=0\).
Hence
\(n_+=n-1\).

For
\(\psi\in\ker(\mathsf D_{m,n}^*+i)\),
one has
\(\psi'=\psi\).
The outgoing components vanish by square integrability, while \(\psi_i^-(x)=B_ie^x\) on the incoming edges. The boundary relation gives
\(\sum_{i=1}^{m}a_iB_i=0\),
and therefore
\(n_-=m-1\).

\begin{theorem}[Resolution-induced channel-index theorem]
\label{thm:Bmn-deficiency}
For every connected \(m\)-to-\(n\) junction resolution,
\[
  \left(
    n_+(\mathsf D_{m,n}),
    n_-(\mathsf D_{m,n})
  \right)
  =
  (n-1,m-1).
\]
In particular,
\[
  n_+-n_-
  =
  n-m.
\]
\end{theorem}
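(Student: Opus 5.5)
The proof follows the pattern of Theorem~\ref{thm:Bk-deficiency}, with the adjoint domain of Proposition~\ref{prop:Bmn-adjoint} as the starting point. First I would justify that proposition: every piecewise-$H^1$ edge family with the single weighted relation $\sum_i a_i\psi_i^-(0)=\sum_j b_j\psi_j^+(0)$ lies in the adjoint domain. Conversely, any $\psi$ in the adjoint domain is edgewise $H^1$, because testing against $C_c^\infty(\Gamma_{m,n}\setminus V(\Gamma_{m,n}))\subset H^1_{\mathrm{cont}}$ gives distributional derivatives in $L^2$ on each edge. Integrating by parts on each half-line in the weighted inner product then produces the single vertex term $\overline{\phi(v)}\bigl(\sum_i a_i\psi_i^-(0)-\sum_j b_j\psi_j^+(0)\bigr)$, up to a factor of $i$. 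Because the common vertex value $\phi(v)$ ranges over all of $\mathbb C$, the weighted relation is exactly the adjoint condition.

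Next I would compute $\ker(\mathsf D_{m,n}^*\mp i)$ edgewise. For the $+i$ case the equation is $-i\psi'=i\psi$, so $\psi=Ce^{-x}$. On an incoming half-line $(-\infty,0)$ this is not square integrable, so those components vanish; on outgoing half-lines it is, so $\psi_j^+=A_je^{-x}$. The adjoint relation becomes $\sum_j b_jA_j=0$. This is a single nontrivial linear functional, since every $b_j>0$ (every outgoing channel carries at least one sheet of positive density, by connectedness of $\mathcal R$). Its kernel in $\mathbb C^n$ therefore has dimension $n-1$. The $-i$ case is symmetric: outgoing components vanish, $\psi_i^-=B_ie^{x}$, the relation $\sum_i a_iB_i=0$ holds with all $a_i>0$, and so $n_-=m-1$. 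The difference $n_+-n_-=n-m$ is then immediate.

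The main point needing care is positivity of the marginal weights, together with the fact that the adjoint condition is a single scalar equation rather than continuity. Both come from connectedness of $\mathcal R$ (no isolated channel vertices) and from $c_\alpha>0$. I would also note that symmetry (Proposition~\ref{prop:Bmn-first-order-symmetry}, from Theorem~\ref{thm:Bmn-global-balance}) is needed for deficiency indices to be meaningful, and closedness comes from Proposition~\ref{prop:first-order-closed-dense}.
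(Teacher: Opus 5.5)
Your proposal is correct and is essentially the paper's argument. You start from the weighted adjoint relation of Proposition~\ref{prop:Bmn-adjoint}, solve $\psi'=\mp\psi$ edgewise for $\ker(\mathsf D_{m,n}^*\mp i)$, discard the non-square-integrable half-lines, and count the kernel of the single functional $\sum_j b_jA_j$ (respectively $\sum_i a_iB_i$), which gives $n-1$ and $m-1$. Your additional justification of the adjoint domain, and of the positivity of the marginal weights via connectedness of $\mathcal R$, fills in details that the paper leaves implicit.
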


The result is independent of the number of resolving sheets, the detailed
bipartite incidence matrix, and the positive sheet densities.

\begin{corollary}[Self-adjointness criterion]
\label{cor:Bmn-SA-criterion}
The resolution-induced oriented derivative admits self-adjoint extensions if and only
if
\[
  m=n.
\]
If
\[
  m=n=r,
\]
the extensions are parametrized by
\[
  U(r-1).
\]
For
\[
  r=1,
\]
the operator is already self-adjoint.
\end{corollary}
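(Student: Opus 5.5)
The plan is to apply von Neumann's extension theory directly, using Theorem~\ref{thm:Bmn-deficiency}. By Proposition~\ref{prop:Bmn-first-order-symmetry}, \(\mathsf D_{m,n}\) is symmetric. By Proposition~\ref{prop:first-order-closed-dense}, it is densely defined and closed, so its closure is itself. Von Neumann's theorem then says that self-adjoint extensions exist if and only if \(n_+=n_-\). When they exist, they are in bijection with the unitary maps \(\ker(\mathsf D_{m,n}^*-i)\to\ker(\mathsf D_{m,n}^*+i)\). By Theorem~\ref{thm:Bmn-deficiency}, \(n_+=n-1\) and \(n_-=m-1\). These are equal exactly when \(m=n\), which proves the first assertion.

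For \(m=n=r\), both deficiency spaces have dimension \(r-1\). Fixing orthonormal bases of them, the extensions are parametrized by \(U(r-1)\). To make the parametrization concrete, I would identify the bases explicitly. The \(+i\) deficiency space is \(\{(A_je^{-x})_j:\sum_j b_jA_j=0\}\) on the outgoing edges. The \(-i\) deficiency space is \(\{(B_ie^{x})_i:\sum_i a_iB_i=0\}\) on the incoming edges. Each edge carries the weighted norm, so after the flux normalization \(A_j\mapsto\sqrt{b_j}A_j\) and \(B_i\mapsto\sqrt{a_i}B_i\), these become the dark complements \(\mathcal D_{\mathrm{out}}\) and \(\mathcal D_{\mathrm{in}}\) of Theorem~\ref{thm:Bmn-bright-transfer}. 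Each has dimension \(r-1\).

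Equivalently, the extensions can be described by boundary conditions. On \(\mathcal D(\mathsf D_{m,n}^*)\), the boundary form is \(i\bigl(\sum_i a_i\overline{\phi_i^-(0)}\psi_i^-(0)-\sum_j b_j\overline{\phi_j^+(0)}\psi_j^+(0)\bigr)\). It is restricted to the hyperplane \(\sqrt{\mathbf a}\cdot\Psi_{\mathrm{in}}=\sqrt{\mathbf b}\cdot\Psi_{\mathrm{out}}\), written in the flux-normalized traces \(\Psi_{\mathrm{in}}=(\sqrt{a_i}\psi_i^-(0))\) and \(\Psi_{\mathrm{out}}=(\sqrt{b_j}\psi_j^+(0))\), and then quotiented by the minimal-domain traces. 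The maximal isotropic subspaces are then the conditions \(P_{\mathrm{out}}\Psi_{\mathrm{out}}=U\,P_{\mathrm{in}}\Psi_{\mathrm{in}}\), where \(P\) denotes the projection onto the dark part and \(U:\mathcal D_{\mathrm{in}}\to\mathcal D_{\mathrm{out}}\) is unitary. The bright components are already matched by the adjoint relation together with the continuity of the common value. This is the only step needing care: one has to check that the bright direction is not an extra boundary degree of freedom. It is not, because the adjoint constraint together with membership in the minimal domain already pins it. The clean route is to rely on von Neumann counting, so the boundary-triple picture serves only as an illustration.

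For \(r=1\), the deficiency indices are \((0,0)\), and a closed symmetric operator with zero deficiency indices is self-adjoint. This also agrees with the fact that \(U(0)\) is a single point.
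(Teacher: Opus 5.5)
Your proposal is correct and follows the paper's route. The paper states the corollary without a separate proof, as an immediate consequence of von Neumann's theory applied to the closed, densely defined symmetric operator $\mathsf D_{m,n}$ with the deficiency indices $(n-1,m-1)$ of Theorem~\ref{thm:Bmn-deficiency}, and $r=1$ giving $(0,0)$. Your boundary-form description of the extensions as $P_{\mathrm{out}}\Psi_{\mathrm{out}}=U\,P_{\mathrm{in}}\Psi_{\mathrm{in}}$ is a correct additional illustration: the bright direction is not a boundary degree of freedom, because it is absorbed into the minimal-domain traces, and this matches Proposition~\ref{prop:Bmn-dark-deficiency}.
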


The deficiency spaces have the same weighted orthogonality structure as the
dark channel sectors. At \(+i\),
\(\sum_{j=1}^{n}b_jA_j=0\),
which becomes orthogonality to
\(|s_{\mathrm{out}}\rangle\) after flux normalization. Similarly, the
negative deficiency condition becomes orthogonality to
\(|s_{\mathrm{in}}\rangle\).

\begin{proposition}[Deficiency as dark-channel imbalance]
\label{prop:Bmn-dark-deficiency}
There are natural identifications
\[
  \ker(\mathsf D_{m,n}^*-i)
  \cong
  \mathcal D_{\mathrm{out}},
\]
and
\[
  \ker(\mathsf D_{m,n}^*+i)
  \cong
  \mathcal D_{\mathrm{in}}.
\]
\[
  n_+
  =
  \dim\mathcal D_{\mathrm{out}}
  =
  n-1,
\]
and
\[
  n_-
  =
  \dim\mathcal D_{\mathrm{in}}
  =
  m-1.
\]
\end{proposition}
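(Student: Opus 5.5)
The plan is to write down the explicit maps and check them against the deficiency computation carried out just before Theorem~\ref{thm:Bmn-deficiency}. That computation gives
\[
  \ker(\mathsf D_{m,n}^*-i)
  =
  \left\{
    \psi:\ \psi_i^-=0,\ \psi_j^+(x)=A_je^{-x},\ \sum_{j=1}^{n}b_jA_j=0
  \right\},
\]
and
\[
  \ker(\mathsf D_{m,n}^*+i)
  =
  \left\{
    \psi:\ \psi_j^+=0,\ \psi_i^-(x)=B_ie^{x},\ \sum_{i=1}^{m}a_iB_i=0
  \right\}.
\]
Both descriptions rest on Proposition~\ref{prop:Bmn-adjoint} and on square integrability on the half-lines. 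In each case, evaluation at the vertex, \(\psi\mapsto(A_j)_j\) respectively \(\psi\mapsto(B_i)_i\), is a linear isomorphism onto the hyperplane cut out by the single weighted condition. It is injective because the exponential profile is fixed, and surjective because every admissible coefficient vector defines a solution.

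Next I will compose with the flux normalization used in Theorem~\ref{thm:weighted-vertex-S} and Theorem~\ref{thm:Bmn-S}. This is the diagonal map \(A_j\mapsto v_j:=\sqrt{b_j}\,A_j\) on outgoing channels and \(B_i\mapsto u_i:=\sqrt{a_i}\,B_i\) on incoming ones. Since every weight is positive, the map is invertible. It transforms the conditions as follows:
\[
  \sum_{j}b_jA_j=\sqrt{\mathcal C}\,\langle s_{\mathrm{out}},v\rangle,
  \qquad
  \sum_{i}a_iB_i=\sqrt{\mathcal C}\,\langle s_{\mathrm{in}},u\rangle.
\]
This uses \(|s_{\mathrm{out}}\rangle=\mathcal C^{-1/2}(\sqrt{b_j})_j\), the analogous formula for \(|s_{\mathrm{in}}\rangle\), and the fact that the weights are real so no conjugation issue arises. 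The hyperplanes are therefore carried exactly onto \(\mathcal D_{\mathrm{out}}\) and \(\mathcal D_{\mathrm{in}}\). The composite maps are the natural identifications in the statement; they are natural in the sense that they use only vertex traces and the flux normalization already fixed by the scattering theory.

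The dimension statements then follow. Both \(|s_{\mathrm{in}}\rangle\) and \(|s_{\mathrm{out}}\rangle\) are unit vectors by Theorem~\ref{thm:Bmn-global-balance}, so \(\mathcal D_{\mathrm{out}}\) and \(\mathcal D_{\mathrm{in}}\) are hyperplanes of dimensions \(n-1\) and \(m-1\). This agrees with Theorem~\ref{thm:Bmn-deficiency}.

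No step is a serious obstacle. The one point needing care is bookkeeping: the normalization factor must be \(\sqrt{b_j}\) (resp.\ \(\sqrt{a_i}\)) and not its inverse, so that the weighted linear condition becomes Euclidean orthogonality to the bright vector. Checking this against the convention in the proof of Theorem~\ref{thm:weighted-vertex-S} settles it.
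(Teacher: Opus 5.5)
Your proposal is correct and follows essentially the same route as the paper. The paper likewise takes the explicit deficiency spaces computed just before Theorem~\ref{thm:Bmn-deficiency} and observes that, after flux normalization (\(A_j\mapsto\sqrt{b_j}A_j\), resp.\ \(B_i\mapsto\sqrt{a_i}B_i\)), the weighted conditions \(\sum_j b_jA_j=0\) and \(\sum_i a_iB_i=0\) become orthogonality to \(|s_{\mathrm{out}}\rangle\) and \(|s_{\mathrm{in}}\rangle\); your version spells out the bijectivity and the choice of normalization factor more explicitly than the paper's brief remark.
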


The same channel decomposition appears in both operator orders. The
second-order Hamiltonian transfers the unique bright mode and reflects the
dark sectors, whereas the first-order deficiency spaces are precisely those
incoming and outgoing dark sectors. Self-adjoint first-order completion is
possible exactly when their dimensions agree.

Reversing every edge orientation exchanges the two deficiency indices. The
absolute imbalance \(|n_+-n_-| = |n-m|\) is therefore orientation independent, while the signed index records the
oriented excess of outgoing over incoming channels.

For
\(m=1, \qquad n=k\),
the results reduce to those of the branching line:
\((n_+,n_-) = (k-1,0)\).

Finite branching trees compose these junctions. Local pushforward balance
then holds at every resolution-generated vertex, allowing global propagation
and first-order index calculations on the resulting acyclic network.
\section{Iterated branching and finite non-Hausdorff trees}
\label{sec:trees}

The single-junction bright--dark law extends directly to finite acyclic networks.

For a rooted tree, resolution-induced measure is recursively additive: the
weight of every parent edge equals the sum of the weights of its children.
A wave incident through the unique root channel encounters a
reflectionless channel at every successive vertex. The resulting
root-to-leaf probabilities depend only on the terminal sheet weights, not on
how the branching is factored into intermediate stages.

The first-order theory admits a similarly global description. For a finite
oriented tree with positive weights satisfying the incoming--outgoing balance
condition at every internal vertex, the deficiency indices are
\((n_+,n_-) = (N_{\mathrm{out}}-1,N_{\mathrm{in}}-1)\),
where \(N_{\mathrm{in}}\) and \(N_{\mathrm{out}}\) count the incoming and
outgoing external channels. The proof requires only acyclicity, local
balance, and a rank argument for the vertex equations.

\subsection{Rooted resolving trees and descendant weights}
\label{subsec:trees-rooted}
\label{subsec:trees-descendant-weights}
\label{subsec:trees-H1}

Let \(\Gamma\) be a finite rooted metric tree with one external root edge \(e_{\mathrm{root}}\) and a finite set of external leaf edges
\(\mathcal L = \{\lambda_1,\ldots,\lambda_N\}\).

Orient the root edge toward the finite tree and every other edge away from
the root. Every internal vertex \(v\) has a unique incoming parent edge \(p(v)\) and a finite set of outgoing child edges
\(\operatorname{Ch}(v)\).

A non-Hausdorff resolution of this tree is obtained by assigning one
one-dimensional resolving sheet to each terminal leaf. The sheet associated
with \(\lambda\) follows the unique path from the root to that leaf. Regular
points belonging to sheets that traverse the same open tree edge are
identified, while the corresponding sheet origins remain distinct at
branching vertices.

The resulting non-Hausdorff one-manifold will be denoted by
\(M_\Gamma\),
with Hausdorff reflection
\(\pi: M_\Gamma \longrightarrow \Gamma\).
Locally, a vertex with \(d\) outgoing children is the branching model
\(\mathbb B_d\).

Assign to the terminal sheet \(\lambda\) a positive density
\(c_\lambda>0\).
For an edge \(e\), let \(\mathcal L(e)\) be the set of terminal leaves descending from \(e\), and define
\(a_e := \sum_{\lambda\in\mathcal L(e)} c_\lambda\).
\[
  a_{\mathrm{root}}
  =
  \sum_{\lambda\in\mathcal L}c_\lambda,
  \qquad
  a_\lambda=c_\lambda
\]
for every terminal edge.

At an internal vertex,
\(\mathcal L(p(v)) = \bigsqcup_{e\in\operatorname{Ch}(v)} \mathcal L(e)\),
and therefore:

\begin{theorem}[Recursive weight conservation]
\label{thm:trees-weight-conservation}
At every internal vertex,
\[
  a_{p(v)}
  =
  \sum_{e\in\operatorname{Ch}(v)}
  a_e.
\]
\end{theorem}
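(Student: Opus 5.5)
The plan is to verify the identity directly from the definition \(a_e=\sum_{\lambda\in\mathcal L(e)}c_\lambda\), using the displayed partition \(\mathcal L(p(v))=\bigsqcup_{e\in\operatorname{Ch}(v)}\mathcal L(e)\) and finite additivity of the sum. The only real content is justifying that partition; once it holds,
\[
  a_{p(v)}=\sum_{\lambda\in\mathcal L(p(v))}c_\lambda
  =\sum_{e\in\operatorname{Ch}(v)}\sum_{\lambda\in\mathcal L(e)}c_\lambda
  =\sum_{e\in\operatorname{Ch}(v)}a_e,
\]
because the index sets of the inner sums are pairwise disjoint and their union is \(\mathcal L(p(v))\).

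For the partition I will use that \(\Gamma\) is a finite rooted tree oriented away from the root. Each leaf \(\lambda\) determines a unique root-to-leaf path, and \(\lambda\in\mathcal L(e)\) precisely when that path traverses \(e\).

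First, if a leaf's path passes through \(p(v)\), it reaches \(v\). Since \(\lambda\) is an external edge distinct from \(p(v)\), the path must then leave \(v\). By acyclicity it cannot return along \(p(v)\), so it leaves along exactly one child edge. This gives the covering of \(\mathcal L(p(v))\) by the sets \(\mathcal L(e)\).

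Second, uniqueness of paths in a tree shows that no leaf path uses two distinct children of \(v\), which gives disjointness.

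Third, every path through a child edge of \(v\) must enter \(v\) through its unique parent edge \(p(v)\), which gives the reverse inclusion.

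I should also check that this combinatorial weight \(a_e\) agrees with the pushforward weight of Definition~\ref{def:marked-analytic-graph-resolution}. This is the only step needing the geometry. The regular points of \(M_\Gamma\) over an open edge \(e\) are exactly the identified points of the sheets traversing \(e\). So \((\pi\circ q)_*\widetilde\mu\) restricted to \(e\) is \(\sum_{\lambda\in\mathcal L(e)}c_\lambda\,dx\), because each labelled sheet is counted separately, as in the computation for \(\mathbb B_k\) (Theorem~\ref{thm:Bk-weight-balance}).

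No real obstacle is expected. The main care point is the bookkeeping of leaf paths at \(v\), which is exactly where acyclicity and the orientation are used.
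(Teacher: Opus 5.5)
Your proposal is correct and follows the paper's own argument, which derives the identity in one line from the disjoint decomposition \(\mathcal L(p(v))=\bigsqcup_{e\in\operatorname{Ch}(v)}\mathcal L(e)\) and additivity of the leaf weights; you simply spell out that decomposition using unique root-to-leaf paths. Your extra check that \(a_e\) matches the pushforward density is not needed for the statement as given, since the paper takes \(a_e:=\sum_{\lambda\in\mathcal L(e)}c_\lambda\) as the definition, but it is a harmless consistency remark.
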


The identity is simply additivity of the resolving-sheet measure. In
particular, the parent edge is a reflectionless channel for the weighted
Kirchhoff vertex.

The scalar smooth core of \(M_\Gamma\) imposes the transported smooth
compatibility inherited from the resolving sheets. Its first-order closure
is, by Corollary~\ref{cor:universal-first-order-graph-closure},
\(H^1_{\mathrm{cont}}(\Gamma)\).

On
\(\mathcal H_\Gamma = \bigoplus_{e\in E(\Gamma)} L^2(e,a_e\,dx)\),
consider \(\mathfrak q_\Gamma[u] = \sum_{e\in E(\Gamma)} a_e \int_e|u_e'(x)|^2\,dx\) with domain
\(H^1_{\mathrm{cont}}(\Gamma)\).
Its associated Hamiltonian is the weighted Kirchhoff Laplacian. In
coordinates increasing with the rooted orientation, the vertex condition is
\(a_{p(v)}u_{p(v)}'(v^-) = \sum_{e\in\operatorname{Ch}(v)} a_eu_e'(v^+)\).

\subsection{Root-to-leaf scattering and associativity}
\label{subsec:trees-local-bright}
\label{subsec:trees-global-reflectionless}
\label{subsec:trees-associativity}
\label{subsec:trees-equal-leaves}

At an internal vertex \(v\), define the normalized child bright state
\[
  |s_v\rangle
  :=
  \sum_{e\in\operatorname{Ch}(v)}
  \sqrt{\frac{a_e}{a_{p(v)}}}\,
  |e\rangle.
\]
Theorem~\ref{thm:trees-weight-conservation} gives
\(\langle s_v,s_v\rangle=1\).

The local scattering result of
Theorem~\ref{thm:Bk-bright-dark} therefore gives \(|p(v)\rangle \longmapsto |s_v\rangle\) with no reflected amplitude in the parent edge.

Because every non-root internal edge is the unique parent edge at the next
vertex, this reflectionless transfer iterates down the tree.

For a terminal leaf
\(\lambda\in\mathcal L\),
let \(d_\lambda\) denote the total finite metric length along the unique path from the root
vertex to \(\lambda\).

\begin{theorem}[Root-to-leaf scattering law]
\label{thm:trees-root-to-leaf}
For a unit flux-normalized wave of momentum \(p>0\) incident through the
root edge,
\[
  R_{\mathrm{root}}(p)=0.
\]
The amplitude in terminal channel \(\lambda\) is
\[
  T_{\lambda,\mathrm{root}}(p)
  =
  e^{ipd_\lambda}
  \sqrt{
    \frac{a_\lambda}{a_{\mathrm{root}}}
  },
\]
and hence
\[
  P_{\lambda\leftarrow\mathrm{root}}
  =
  \frac{a_\lambda}{a_{\mathrm{root}}}
  =
  \frac{c_\lambda}
  {\displaystyle\sum_{\mu\in\mathcal L}c_\mu}.
\]
\end{theorem}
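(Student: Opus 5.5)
The plan is to construct the scattering solution explicitly rather than iterate local $S$-matrices abstractly. On each edge $e$ of the rooted tree, use the coordinate increasing with the rooted orientation, with $x=0$ at the parent vertex of $e$ (for the root edge, $x\le 0$ ending at the root vertex). For every non-root edge $e$, let $d(e)$ be the metric distance from the root vertex to the starting vertex of $e$. The ansatz is
\[
  u_{\mathrm{root}}(x)=\frac{1}{\sqrt{a_{\mathrm{root}}}}\,e^{ipx},
  \qquad
  u_e(x)=\frac{1}{\sqrt{a_{\mathrm{root}}}}\,e^{ip(d(e)+x)}
\]
on every other edge. This is a pure forward wave with no reflected component on any edge. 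I will check that $u$ is a generalised eigenfunction of the weighted Kirchhoff Hamiltonian. Uniqueness of the scattering solution for a given incoming datum then gives $R_{\mathrm{root}}(p)=0$, and the leaf amplitudes can be read off directly.

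\emph{Step 1 (vertex conditions).} At an internal vertex $v$ at distance $d_v$ from the root vertex, the parent edge arrives with value $a_{\mathrm{root}}^{-1/2}e^{ipd_v}$ and every child edge starts with the same value, so continuity holds. The derivatives are $ip$ times these common values. The weighted Kirchhoff condition
\[
  a_{p(v)}u_{p(v)}'(v^-)=\sum_{e\in\operatorname{Ch}(v)}a_eu_e'(v^+)
\]
therefore reduces to $a_{p(v)}=\sum_{e\in\operatorname{Ch}(v)}a_e$, which is Theorem~\ref{thm:trees-weight-conservation}. At the root vertex the same argument applies with the root edge as parent. Edgewise, $-u''=p^2u$ holds trivially.

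\emph{Step 2 (normalisation).} The flux carried on edge $e$ by an amplitude $A$ is proportional to $a_ep|A|^2$, as in the branching-line discussion. The incident flux is $a_{\mathrm{root}}p\,a_{\mathrm{root}}^{-1}=p$, so the incident wave has unit flux. On leaf $\lambda$ the amplitude at distance $x$ along the leaf is $a_{\mathrm{root}}^{-1/2}e^{ipd_\lambda}e^{ipx}$, where $d_\lambda$ is the full finite path length from the root vertex to the start of the leaf. The flux-normalised amplitude multiplies this by $\sqrt{a_\lambda}$, giving $e^{ipd_\lambda}\sqrt{a_\lambda/a_{\mathrm{root}}}$. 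Squaring gives $P_{\lambda\leftarrow\mathrm{root}}=a_\lambda/a_{\mathrm{root}}$, and then $a_\lambda=c_\lambda$ and $a_{\mathrm{root}}=\sum_\mu c_\mu$ give the final formula.

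\emph{Step 3 (identification with the scattering solution).} The main point to justify is that this explicit solution is \emph{the} scattering solution, i.e.\ that scattering solutions with prescribed incoming data are unique. I would argue as follows. The difference of two solutions with the same incoming data is purely outgoing on every external edge. On the finite tree, the current identity from Green's formula forces its outgoing flux to vanish, so it is zero on the external edges. Unique continuation along edges, together with the vertex conditions at the finitely many compact edges, then propagates vanishing inward. Concretely, at a leaf-adjacent vertex, zero data on all children forces zero value and zero weighted derivative on the parent. Alternatively, one can simply invoke the standard fact that the on-shell scattering matrix of a self-adjoint quantum graph with finitely many edges is well defined for all $p>0$ outside a discrete set, and extend by continuity in $p$, since the ansatz is analytic in $p$. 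A consistency check is that $\sum_\lambda a_\lambda=a_{\mathrm{root}}$, so the leaf probabilities sum to one, in agreement with $R_{\mathrm{root}}=0$ and unitarity.
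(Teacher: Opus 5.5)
Your proof is correct. It uses the same mechanism as the paper: the recursive balance \(a_{p(v)}=\sum_{e\in\operatorname{Ch}(v)}a_e\) makes a pure forward wave compatible with the weighted Kirchhoff condition at every vertex, and the flux-normalized amplitude changes only through the ratio of weights. You organise it differently, though.

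The paper iterates the local bright--dark result of Theorem~\ref{thm:Bk-bright-dark} vertex by vertex. It maps the parent channel into the local bright state with factor \(\sqrt{a_f/a_e}\) and telescopes these factors along the path to \(\lambda\). You instead write down a single global forward wave whose unnormalized edge amplitude has constant modulus \(a_{\mathrm{root}}^{-1/2}\), and convert to flux normalization only at the leaf, so no telescoping is needed.

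Your route gains two things:
\begin{enumerate}[label=\textup{(\roman*)}]
  \item It does not reuse a local scattering matrix derived for half-line arms at vertices whose child edges are compact and lead to further vertices. Your global ansatz shows at once that nothing is ever reflected back from downstream; the paper's iteration assumes this tacitly.
  \item Your Step~3 proves that the scattering solution is unique, which the paper does not address. Flux conservation removes the outgoing amplitudes of a difference of two solutions. The rooted structure then lets zero Cauchy data propagate from the vertices whose children are all external up to the root, so trapped internal states are also excluded.
\end{enumerate}
Your fallback appeal to generic well-definedness of the on-shell \(S\)-matrix plus continuity in \(p\) is vaguer than this, and the direct argument makes it unnecessary.

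The paper's version is shorter and keeps the local bright-state picture in view. It uses that picture immediately afterwards to explain why root incidence never excites the dark sectors \(\mathcal D_v\).
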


\begin{proof}
At every vertex on the unique path to \(\lambda\), the incoming parent
channel is mapped into the local bright state with zero parent reflection.
The transmission factor from a parent edge \(e\) to the selected child
\(f\) is
\(\sqrt{\frac{a_f}{a_e}}\).
Their product along the path telescopes:
\[
  \prod_{e\to f}
  \sqrt{\frac{a_f}{a_e}}
  =
  \sqrt{
    \frac{a_\lambda}{a_{\mathrm{root}}}
  }.
\]
Free propagation along the finite edges contributes the phase
\(e^{ipd_\lambda}\).
Since
\(\sum_{\lambda\in\mathcal L}a_\lambda = a_{\mathrm{root}}\),
the terminal probabilities sum to one.
\end{proof}

The theorem gives an immediate factorisation invariance.

\begin{corollary}[Associativity of resolution-induced splitting]
\label{cor:trees-associativity}
Let two rooted resolving trees have the same terminal channels and the same
terminal sheet weights \(c_\lambda\). Then their root-to-leaf transmission
probabilities agree:
\[
  P_{\lambda\leftarrow\mathrm{root}}
  =
  \frac{c_\lambda}
  {\sum_\mu c_\mu}.
\]
The forward probabilities are therefore independent of how the total split is
factored into intermediate branching vertices.
\end{corollary}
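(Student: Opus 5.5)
The plan is to derive the corollary directly from Theorem~\ref{thm:trees-root-to-leaf}, which has already computed the root-to-leaf probability for any single rooted resolving tree. For each of the two trees, that theorem gives
\[
  P_{\lambda\leftarrow\mathrm{root}}
  =
  \frac{a_\lambda}{a_{\mathrm{root}}}.
\]
It remains to express the numerator and denominator through terminal sheet data alone.

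First, the descendant-weight definition \(a_e=\sum_{\lambda\in\mathcal L(e)}c_\lambda\) gives \(a_\lambda=c_\lambda\) on a terminal edge. Second, every leaf descends from the root edge, so \(a_{\mathrm{root}}=\sum_{\mu\in\mathcal L}c_\mu\). Both identities are recorded just before Theorem~\ref{thm:trees-weight-conservation}. Substituting them yields \(c_\lambda/\sum_\mu c_\mu\) for each tree.

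This expression involves only the common terminal channel set \(\mathcal L\) and the common weights \(c_\lambda\). It is therefore the same number for both trees. The intermediate weights \(a_e\) for internal edges differ between the two factorisations, but they do not appear in the answer.

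The only point needing care is that the cancellation of intermediate weights really occurs. In the proof of Theorem~\ref{thm:trees-root-to-leaf} this happens through the telescoping product \(\prod_{e\to f}\sqrt{a_f/a_e}\), whose validity rests on recursive weight conservation (Theorem~\ref{thm:trees-weight-conservation}) holding at every internal vertex of each tree. I would therefore verify that the hypotheses of that theorem hold for arbitrary factorisations. They do, because each tree is built by the same leaf-sheet construction.

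I would also note that the amplitudes \(T_{\lambda,\mathrm{root}}(p)\) may differ between the two trees through the phase \(e^{ipd_\lambda}\), since \(d_\lambda\) depends on the edge lengths. Only the probabilities are claimed to agree, and the phase has unit modulus, so it drops out of \(P_{\lambda\leftarrow\mathrm{root}}\).
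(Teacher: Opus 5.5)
Your proposal is correct and follows the paper's route: the corollary is read off directly from Theorem~\ref{thm:trees-root-to-leaf}, whose formula \(P_{\lambda\leftarrow\mathrm{root}}=a_\lambda/a_{\mathrm{root}}=c_\lambda/\sum_\mu c_\mu\) involves only the terminal sheet weights. Your caveat that the path phases \(e^{ipd_\lambda}\) may differ between the two trees but drop out of the probabilities is the same remark the paper makes after the corollary.
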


The internal metric structure is not completely lost. Root-to-leaf
amplitudes retain the path phases
\(e^{ipd_\lambda}\).
Hence two trees with the same terminal weights but different
root-to-leaf lengths have identical forward probabilities but need not have
identical amplitudes.

If every terminal sheet has equal density and there are \(N\) leaves, then
\(P_{\lambda\leftarrow\mathrm{root}} = \frac1N\),
and
\(T_{\lambda,\mathrm{root}}(p) = \frac{e^{ipd_\lambda}}{\sqrt N}\).
Equal path lengths therefore produce, up to a common phase, the uniform
terminal superposition.

\subsection{Dark sectors and the limits of forward equivalence}
\label{subsec:trees-dark-modes}

At an internal vertex \(v\), the child-channel dark space is
\[
  \mathcal D_v
  :=
  \left\{
    w\in
    \mathbb C^{|\operatorname{Ch}(v)|}:
    \langle s_v,w\rangle=0
  \right\},
\]
with
\(\dim\mathcal D_v = |\operatorname{Ch}(v)|-1\).

A wave incident through the root never excites these sectors: at every
successive vertex it arrives through the unique parent channel and is
mapped directly into the local bright state.

This is special to root incidence. A wave arriving from a terminal branch
generally enters a branching vertex through one child basis vector and
therefore contains both bright and dark components. The dark component is
reflected within the child sector, allowing reverse and leaf-to-leaf
scattering to probe internal branching structure.

The associativity statement above applies to the
root-incidence scattering column, not to the complete multi-channel
scattering matrix.

\subsection{General balanced oriented trees}
\label{subsec:trees-general-oriented}
\label{subsec:trees-global-first-order}

The rooted orientation is no longer assumed.

Let \(\Gamma\) be a connected finite metric tree with a finite set of external half-lines.
Choose an orientation on every edge. An external edge is called incoming if
its orientation points from infinity toward the finite tree and outgoing if
its orientation points from the tree toward infinity.

Let \(\mathcal E_{\mathrm{in}}, \qquad \mathcal E_{\mathrm{out}}\) be the two external sets, and write
\[
  N_{\mathrm{in}}
  :=
  |\mathcal E_{\mathrm{in}}|,
  \qquad
  N_{\mathrm{out}}
  :=
  |\mathcal E_{\mathrm{out}}|.
\]

Assume positive edge weights \(a_e\) satisfy \(\sum_{e\in E_{\mathrm{in}}(v)}a_e = \sum_{e\in E_{\mathrm{out}}(v)}a_e\) at every finite vertex \(v\). Resolution-induced tree weights of the type
constructed above satisfy this condition.

Summing over the finite vertices cancels every finite internal edge once on
each side and gives
\[
  \sum_{e\in\mathcal E_{\mathrm{in}}}a_e
  =
  \sum_{e\in\mathcal E_{\mathrm{out}}}a_e.
\]

\begin{lemma}[External channels are present in both directions]
\label{lem:trees-external-directions}
Under the positive local balance hypothesis,
\[
  N_{\mathrm{in}}\geq1,
  \qquad
  N_{\mathrm{out}}\geq1.
\]
\end{lemma}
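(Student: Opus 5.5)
We argue by contradiction, using only the positivity of the weights and the local balance at each finite vertex. Suppose first that \(N_{\mathrm{out}}=0\), so every external half-line is incoming. Summing the local balance identities over all finite vertices gives \(\sum_{e\in\mathcal E_{\mathrm{in}}}a_e=\sum_{e\in\mathcal E_{\mathrm{out}}}a_e\), as shown just before the statement. With \(\mathcal E_{\mathrm{out}}=\varnothing\) the right-hand side is zero. Each \(a_e\) is strictly positive, so \(\mathcal E_{\mathrm{in}}=\varnothing\) as well. The symmetric argument shows that \(N_{\mathrm{in}}=0\) forces \(N_{\mathrm{out}}=0\).

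It therefore remains to exclude the case \(N_{\mathrm{in}}=N_{\mathrm{out}}=0\), a finite tree with no external half-lines. The plan is a sink argument on the finite oriented tree.

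\begin{itemize}
\item If the tree has no edges at all, it is a single vertex. This degenerate case is excluded by the standing setting, which includes external half-lines (or by convention); we note it and set it aside.
\item Otherwise, acyclicity of the underlying finite tree implies that the orientation admits no directed cycle. Following outgoing edges from any vertex must therefore terminate after finitely many steps at a vertex \(v\) with \(E_{\mathrm{out}}(v)=\varnothing\) but \(E_{\mathrm{in}}(v)\neq\varnothing\). (If we only ever reach vertices with no incident edges at all, the graph was edgeless.)
\item At this sink \(v\), local balance reads \(\sum_{e\in E_{\mathrm{in}}(v)}a_e=0\). This contradicts positivity of the weights.
\end{itemize}

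An alternative for this step is to pass to a leaf. A finite tree with at least one edge has a vertex of degree one. Its single incident edge lies either in \(E_{\mathrm{in}}\) or in \(E_{\mathrm{out}}\), so one side of the balance is a single positive weight and the other side is zero. This version is cleaner and does not rely on acyclicity of the orientation, only on the existence of a leaf.

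The main point requiring care is this degenerate case with no external edges. There, the global identity alone is vacuous, and the contradiction has to come from a local leaf or sink argument together with the nonemptiness convention for the tree. Everything else is immediate from the summed balance identity and the positivity of the weights.
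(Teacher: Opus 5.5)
Your proof is correct and uses the same two ingredients as the paper, the summed global balance identity and a degree-one (leaf) vertex of the finite core, only in the opposite order. The paper first uses the leaf to produce an external half-line and then applies the global identity; you first use the global identity to show \(N_{\mathrm{in}}=0\Leftrightarrow N_{\mathrm{out}}=0\) and then use the leaf (or your sink variant) to rule out the case with no external half-lines. Your explicit exclusion of the edgeless one-vertex graph is also implicit in the paper, whose single-vertex case tacitly assumes at least one incident edge.
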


\begin{proof}
If the finite core consists of a single vertex, local balance with positive
incident weights already requires at least one incoming and one outgoing
external edge.

Assume instead that the finite core contains an internal edge. A finite tree
has a leaf vertex \(v\) in its finite core. If no external half-line were
incident at \(v\), the unique finite edge meeting \(v\) would contribute a
strictly positive weight to exactly one side of the local balance equation,
while the other side would be zero, a contradiction. Hence at least one
external half-line is present. The global identity above equates the total
incoming and outgoing external weights. Since all weights are positive,
neither external orientation class can then be empty.
\end{proof}

On
\(\mathcal H_{\mathbf a} = \bigoplus_eL^2(e,a_e\,dx)\),
define \(\mathsf D_\Gamma = -i\frac{d}{dx}\) edgewise with
\(\mathcal D(\mathsf D_\Gamma) = H^1_{\mathrm{cont}}(\Gamma)\).

By Proposition~\ref{prop:first-order-boundary-form-general}, the local
balance identities make \(\mathsf D_\Gamma\) symmetric.

Its adjoint acts edgewise by the same differential expression. At every
finite vertex,
\[
  \sum_{e\in E_{\mathrm{in}}(v)}
  a_e\psi_e(v)
  =
  \sum_{e\in E_{\mathrm{out}}(v)}
  a_e\psi_e(v).
\]

\subsection{Rank of the deficiency boundary system}
\label{subsec:trees-deficiency-rank}

The global deficiency calculation reduces to the rank of these vertex
relations.

Let \(V\) be the number of finite vertices. Since the finite core is a tree, it has \(V-1\) finite internal edges.

\begin{lemma}[Independence of the deficiency vertex relations]
\label{lem:trees-deficiency-rank}
For either deficiency equation
\[
  (\mathsf D_\Gamma^*\mp i)\psi=0,
\]
the \(V\) adjoint vertex relations are linearly independent after the
non-\(L^2\) external coefficients have been removed.
\end{lemma}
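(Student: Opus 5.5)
We set up the deficiency boundary system concretely. Fix the sign (say \((\mathsf D_\Gamma^*-i)\psi=0\); the other is symmetric after reversing all orientations). On each finite internal edge \(e\), \(\psi_e(x)=A_e e^{-x}\) in the oriented coordinate, with two endpoint values \(\psi_e(\text{tail})=A_e\) and \(\psi_e(\text{head})=A_ee^{-\ell_e}\). On an external half-line, \(L^2\) forces the coefficient to vanish on incoming edges (respectively outgoing edges for the other sign), so only the outgoing external coefficients \(B_f\) survive. The \(V\) vertex relations are then linear functionals on the unknown space \(\mathbb C^{V-1}\oplus\mathbb C^{N_{\mathrm{out}}}\):
\[
  R_v:\ \sum_{e\in E_{\mathrm{in}}(v)} a_e\psi_e(v)-\sum_{e\in E_{\mathrm{out}}(v)} a_e\psi_e(v)=0 .
\]
Linear independence of \(\{R_v\}\) is equivalent to injectivity of the transpose map \(\lambda=(\lambda_v)\mapsto \sum_v\lambda_vR_v\). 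So we suppose \(\sum_v\lambda_vR_v=0\) and aim to show \(\lambda=0\).

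Next we read off the coefficient of each unknown. A finite internal edge \(e\) from \(v\) to \(w\) appears in exactly two relations. Its coefficient in \(\sum\lambda R\) is \(a_e(\lambda_w e^{-\ell_e}-\lambda_v)\). Vanishing gives \(\lambda_v=e^{-\ell_e}\lambda_w\), a nonzero multiplicative propagation rule along every internal edge. Because the finite core is a connected tree, all \(\lambda_v\) are then determined by any single one, and they are either all zero or all nonzero. An outgoing external edge \(f\) at \(v\) appears only in \(R_v\), with coefficient \(-a_f\lambda_v\). By Lemma~\ref{lem:trees-external-directions}, \(N_{\mathrm{out}}\ge1\), so some \(\lambda_v=0\), and propagation forces \(\lambda\equiv0\). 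For the \(-i\) equation the exponential becomes \(e^{x}\), the surviving externals are incoming, and we use \(N_{\mathrm{in}}\ge1\) in the same way.

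The only delicate point is the bookkeeping of orientations and endpoint values: we must make sure each internal edge really contributes to precisely its two endpoint relations, with nonzero factors \(a_e\) and \(e^{\mp\ell_e}\) on both sides. Positivity of the weights guarantees this. Acyclicity is exactly what prevents the propagation rule from producing a consistency condition around a cycle; on a graph with cycles, a nontrivial \(\lambda\) could exist. As a consequence, the deficiency dimension is \((V-1+N_{\mathrm{out}})-V=N_{\mathrm{out}}-1\), and similarly \(N_{\mathrm{in}}-1\).
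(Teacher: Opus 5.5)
Your proof is correct and is essentially the paper's argument: a vanishing combination \(\sum_v\lambda_v R_v\) forces \(\lambda_u=e^{\mp\ell_e}\lambda_v\) along each internal edge, a surviving external edge (supplied by Lemma~\ref{lem:trees-external-directions}) forces some \(\lambda_v=0\), and connectivity of the finite tree propagates this to every vertex. One small correction to your closing remark: the propagation uses only connectivity, and cycles would only add further constraints on \(\lambda\), so they cannot allow a nontrivial \(\lambda\); acyclicity enters only through the count \(V-1\) of internal edges and through Lemma~\ref{lem:trees-external-directions}.
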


\begin{proof}
For \((\mathsf D_\Gamma^*-i)\psi=0\), only outgoing external half-lines carry nonzero \(L^2\)-modes; each internal edge has one free coefficient. Suppose a linear combination of the vertex relations, with coefficients \(\lambda_v\), vanishes identically. An outgoing external coefficient occurs in only its incident vertex relation, so \(\lambda_v=0\) there. Lemma~\ref{lem:trees-external-directions} supplies such a vertex. On an internal edge oriented from \(u\) to \(v\), the endpoint values differ by the nonzero factor \(e^{-\ell_e}\), so cancellation forces \(\lambda_u=e^{-\ell_e}\lambda_v\). Connectivity of the finite tree propagates \(\lambda=0\) to every vertex. For \((\mathsf D_\Gamma^*+i)\psi=0\), start from an incoming external edge and use the nonzero factor \(e^{\ell_e}\). The relations are independent in both deficiency equations.
\end{proof}

\subsection{Global channel-index theorem}
\label{subsec:trees-global-deficiency}
\label{subsec:trees-SA}

We can now count the deficiency dimensions.

For
\(\ker(\mathsf D_\Gamma^*-i)\),
the equation is
\(\psi'=-\psi\).
Incoming external half-lines carry no nonzero square-integrable solution.
Each outgoing external half-line contributes one coefficient, as does each
finite internal edge. Before imposing the vertex relations there are
therefore \((V-1)+N_{\mathrm{out}}\) free coefficients.

By Lemma~\ref{lem:trees-deficiency-rank}, the \(V\) vertex equations have
full rank. Hence
\(n_+ = (V-1)+N_{\mathrm{out}}-V = N_{\mathrm{out}}-1\).

Similarly, \(\ker(\mathsf D_\Gamma^*+i)\) contains one coefficient for each incoming external edge and for each
finite internal edge. Therefore
\(n_- = (V-1)+N_{\mathrm{in}}-V = N_{\mathrm{in}}-1\).

\begin{theorem}[Channel-index formula for finite trees]
\label{thm:trees-global-index}
Let \(\Gamma\) be a connected finite oriented metric tree with positive
weights satisfying
\[
  \sum_{e\in E_{\mathrm{in}}(v)}a_e
  =
  \sum_{e\in E_{\mathrm{out}}(v)}a_e
\]
at every finite vertex. Then
\[
  \left(
    n_+(\mathsf D_\Gamma),
    n_-(\mathsf D_\Gamma)
  \right)
  =
  \left(
    N_{\mathrm{out}}-1,\,
    N_{\mathrm{in}}-1
  \right).
\]
\[
  n_+-n_-
  =
  N_{\mathrm{out}}
  -
  N_{\mathrm{in}}.
\]
\end{theorem}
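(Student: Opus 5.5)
The plan is to assemble Theorem~\ref{thm:trees-global-index} from the three ingredients already established in this section: symmetry of \(\mathsf D_\Gamma\), the explicit adjoint domain, and the full-rank statement of Lemma~\ref{lem:trees-deficiency-rank}.

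First, symmetry. By Proposition~\ref{prop:first-order-boundary-form-general}, the boundary form of \(\mathsf D_\Gamma\) has vertex coefficients \(\sum_{E_{\mathrm{in}}(v)}a_e-\sum_{E_{\mathrm{out}}(v)}a_e\). These coefficients vanish by the balance hypothesis, so \(\mathsf D_\Gamma\) is symmetric. Closedness and density follow from Proposition~\ref{prop:first-order-closed-dense}, so the deficiency indices are defined.

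Second, the adjoint. I would identify \(\mathcal D(\mathsf D_\Gamma^*)\) exactly as in Proposition~\ref{prop:Bmn-adjoint}:
\begin{itemize}
\item test against \(C_c^\infty\) functions supported in edge interiors to get the edgewise action \(-i\psi'\) with \(\psi\in H^1_\oplus(\Gamma)\);
\item test against elements of \(H^1_{\mathrm{cont}}(\Gamma)\) whose vertex values \(\phi(v)\) can be prescribed independently at each finite vertex, using edgewise cutoffs near each vertex, which is possible because there are finitely many vertices.
\end{itemize}
This yields exactly one weighted relation \(\sum_{\mathrm{in}}a_e\psi_e(v)=\sum_{\mathrm{out}}a_e\psi_e(v)\) per finite vertex, and no continuity requirement.

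Third, the count for \(n_+\). I would solve \(\psi'=-\psi\) edgewise.
\begin{itemize}
\item On an incoming half-line, oriented from infinity toward the core, the coordinate runs over \((-\infty,0]\). The solution \(Ce^{-x}\) is not \(L^2\) there, so \(C=0\).
\item On an outgoing half-line, oriented over \([0,\infty)\), the solution \(Ce^{-x}\) is \(L^2\), giving one free coefficient.
\item Each of the \(V-1\) compact internal edges carries one free coefficient.
\end{itemize}
The kernel is therefore the solution space of the \(V\) vertex relations in \((V-1)+N_{\mathrm{out}}\) unknowns. Lemma~\ref{lem:trees-deficiency-rank} says these relations are independent, so
\[
n_+=(V-1)+N_{\mathrm{out}}-V=N_{\mathrm{out}}-1 .
\]
The same count with \(\psi'=\psi\) swaps the roles of incoming and outgoing edges and gives \(n_-=N_{\mathrm{in}}-1\). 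Lemma~\ref{lem:trees-external-directions} guarantees both numbers are nonnegative. Subtracting gives the stated signed index.

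The main obstacle is the full-rank step. Independence of the vertex relations is what requires acyclicity: on a cycle, the propagation \(\lambda_u=e^{\mp\ell_e}\lambda_v\) could close up consistently and produce a nontrivial dependency. That step is supplied by Lemma~\ref{lem:trees-deficiency-rank}, whose argument propagates \(\lambda=0\) from a vertex carrying an external edge of the appropriate direction. The one point I would double-check there is that an external coefficient really appears in only one vertex relation, with nonzero weight \(a_e\). This holds because external edges meet the finite core at exactly one vertex and all weights are positive.
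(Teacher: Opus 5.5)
Your proposal is correct and follows the paper's own argument. It obtains symmetry from the balanced boundary form and a single weighted adjoint relation per finite vertex. It then counts \((V-1)+N_{\mathrm{out}}\) (respectively \((V-1)+N_{\mathrm{in}}\)) free square-integrable coefficients, and gets full rank of the \(V\) vertex relations from Lemma~\ref{lem:trees-deficiency-rank}.

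One small correction to your closing remark: the rank step does not use acyclicity. Once \(\lambda\) vanishes at a vertex carrying an external edge of the appropriate direction, each edge relation \(\lambda_u=e^{\mp\ell_e}\lambda_v\) transmits vanishing along any path of a connected graph, so no cycle can create a dependency. Acyclicity enters instead through the internal-edge count \(V-1\), and through the leaf argument of Lemma~\ref{lem:trees-external-directions} that guarantees external edges exist.
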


The numbers and degrees of the internal vertices, their metric separations,
and the multiplicities of the underlying resolving sheets do not enter the
deficiency indices.

\begin{corollary}[Global channel-balance criterion]
\label{cor:trees-SA}
The resolution-induced oriented first-order operator admits self-adjoint extensions
if and only if
\[
  N_{\mathrm{in}}
  =
  N_{\mathrm{out}}.
\]
If
\[
  N_{\mathrm{in}}
  =
  N_{\mathrm{out}}
  =
  r,
\]
then
\[
  (n_+,n_-)
  =
  (r-1,r-1),
\]
and the self-adjoint extensions are parametrized by
\[
  U(r-1).
\]
\end{corollary}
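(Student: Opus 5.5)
The plan is to reduce the corollary to Theorem~\ref{thm:trees-global-index} and then apply von Neumann's extension theory to the closed symmetric operator \(\mathsf D_\Gamma\). Proposition~\ref{prop:first-order-closed-dense} gives closedness and dense definition in the weighted space \(\mathcal H_{\mathbf a}\). Proposition~\ref{prop:first-order-boundary-form-general}, combined with the local balance identities, gives symmetry. Under these conditions von Neumann's theorem says that self-adjoint extensions exist if and only if \(n_+=n_-\). When they exist, they correspond bijectively to unitary maps \(\ker(\mathsf D_\Gamma^*-i)\to\ker(\mathsf D_\Gamma^*+i)\).

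The first step substitutes the formula \((n_+,n_-)=(N_{\mathrm{out}}-1,N_{\mathrm{in}}-1)\) from Theorem~\ref{thm:trees-global-index}. Equality of the deficiency indices is then equivalent to \(N_{\mathrm{out}}-1=N_{\mathrm{in}}-1\), that is, \(N_{\mathrm{in}}=N_{\mathrm{out}}\). This proves the ``if and only if'' statement.

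For the parametrisation, put \(N_{\mathrm{in}}=N_{\mathrm{out}}=r\). The deficiency spaces are then both of dimension \(r-1\). Lemma~\ref{lem:trees-external-directions} guarantees \(r\ge1\), so the count is never negative. Fixing orthonormal bases of the two deficiency spaces identifies the set of unitaries between them with \(U(r-1)\). When \(r=1\), \(U(0)\) is the one-point group, so \(\mathsf D_\Gamma\) is itself self-adjoint.

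The only potential obstacle is checking that every hypothesis of von Neumann's theorem holds in the weighted Hilbert space. This is already settled by the two earlier propositions, so the remaining work is bookkeeping.
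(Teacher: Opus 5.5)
Your proposal is correct and is essentially the paper's argument. The paper gives no separate proof; the corollary is read off from Theorem~\ref{thm:trees-global-index} by von Neumann's criterion \(n_+=n_-\) for the closed, densely defined, symmetric operator \(\mathsf D_\Gamma\), with the extensions parametrized by unitaries between the two \((r-1)\)-dimensional deficiency spaces, exactly as you describe.
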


For a rooted tree with one incoming root and \(N\) outgoing terminal
channels,
\((n_+,n_-) = (N-1,0)\).
Every genuinely branching rooted tree has no self-adjoint extension of
its geometrically selected first-order derivative in the same Hilbert
space.

The branching depth does not alter this result: a direct \(1\)-to-\(N\)
junction and any finite rooted factorisation with the same terminal channel
count have identical first-order deficiency indices.

The two operator orders therefore exhibit complementary forms of
compositional stability. Resolution-induced balance makes the
second-order Hamiltonian exactly reflectionless from the root, while the
first-order deficiency imbalance reduces to the asymptotic channel count
\(N_{\mathrm{out}}-N_{\mathrm{in}}\).

A tree cannot, however, produce genuine root-to-terminal path interference:
each terminal channel is reached through a unique route. Recombination
provides several coherent propagation paths between the same exterior
channels. The resulting cycles convert relative path phases into observable
interference.
\section{Split-and-rejoin geometries and quantum interference}
\label{sec:split-rejoin}

Recombination introduces interference between distinct routes to the same terminal channel, absent in a tree.

We consider \(k\) resolving sheets that coincide before a splitting point,
remain distinct through a finite intermediate region, and coincide again
after a recombination point. The Hausdorff reflection has two external leads
connected by \(k\) parallel internal edges. Locally, the two vertices are
the balanced branching junctions studied above. Globally, propagation along
the distinct internal paths rotates the bright mode into dark directions,
producing repeated internal scattering, transmission resonances, zeros, and
compactly supported embedded eigenstates.

\subsection{Geometry, weights, and the resolution-induced Hamiltonian}
\label{subsec:split-rejoin-geometry}
\label{subsec:split-rejoin-metric}
\label{subsec:split-rejoin-vertex-conditions}

Take \(k\geq1\) labelled copies \(\mathbb R_1,\ldots,\mathbb R_k\) of the real line. Define an equivalence relation on their disjoint union by \((x,i)\sim(x,j)\) for
\(x<0 \qquad\text{or}\qquad x>1\).
For
\(0<x<1\),
the sheets remain distinct, and the sheet origins at \(x=0\) and \(x=1\)
remain distinct.

Denote the resulting non-Hausdorff one-manifold by
\(M_{\Theta,k}\).
Its Hausdorff reflection is the metric graph \(\Theta_k\) with two vertices \(v_L,v_R\), one external half-line attached to each
vertex, and \(k\) parallel internal edges \(e_1,\ldots,e_k\) joining \(v_L\) to \(v_R\).

Its first Betti number is
\(\beta_1(\Theta_k)=k-1\).

Choose on the resolving sheets smooth positive one-dimensional metrics that
agree on the identified exterior regions. Let \(\ell_j>0\) denote the induced length of the resolved segment on the \(j\)-th sheet;
after arc-length parametrization, the corresponding internal edge is
identified with \((0,\ell_j)\). The metric lengths used below are part
of the source resolution rather than data inserted only after Hausdorff
reduction.

Assign the corresponding resolving sheet a positive density
\(a_j>0\),
and set
\(A:=\sum_{j=1}^{k}a_j\).

Pushforward of the resolving-sheet measure gives weight \(A\) on each
external lead and weight \(a_j\) on the \(j\)-th internal edge. Both
vertices satisfy the resolution-induced balance relation
\(A=\sum_{j=1}^{k}a_j\).

The scalar Hilbert space is
\[
  \mathcal H_\Theta
  =
  L^2(e_L,A\,dx)
  \oplus
  \left(
    \bigoplus_{j=1}^{k}
    L^2(0,\ell_j;a_j\,dx)
  \right)
  \oplus
  L^2(e_R,A\,dx).
\]

By Corollary~\ref{cor:universal-first-order-graph-closure}, the descended scalar
smooth core closes to
\(H^1_{\mathrm{cont}}(\Theta_k)\).
The positive kinetic form is
\[
  \mathfrak q_\Theta[u]
  =
  A\int_{e_L}|u_L'|^2\,dx
  +
  \sum_{j=1}^{k}
  a_j\int_0^{\ell_j}|u_j'|^2\,dx
  +
  A\int_{e_R}|u_R'|^2\,dx.
\]

Its associated operator is the weighted Kirchhoff Laplacian. If
\(U:=u(v_L), \qquad V:=u(v_R)\),
then continuity gives \(u_j(0)=U, \qquad u_j(\ell_j)=V\) for every internal arm, together with the corresponding external traces.
The weighted Kirchhoff conditions are
\(A\,\partial_{\nu_L}u_L(v_L) + \sum_{j=1}^{k}a_ju_j'(0) = 0\),
and
\(A\,\partial_{\nu_R}u_R(v_R) - \sum_{j=1}^{k}a_ju_j'(\ell_j) = 0\).

\subsection{Bright--dark propagation and multiple scattering}
\label{subsec:split-rejoin-bright-dark}
\label{subsec:split-rejoin-propagation}
\label{subsec:split-rejoin-multiple-scattering}

Define
\(w_j:=\frac{a_j}{A}, \qquad \sum_{j=1}^{k}w_j=1\),
and the normalized internal bright state
\(|s\rangle := \sum_{j=1}^{k}\sqrt{w_j}\,|j\rangle\).
Let
\(P:=|s\rangle\langle s|, \qquad Q:=I_k-P\).

At either vertex the local flux-normalized scattering matrix has the form
\[
  S_v
  =
  \begin{pmatrix}
    0 & \langle s|\\[1mm]
    |s\rangle & -Q
  \end{pmatrix}.
\]
The external channel is exchanged exactly with the bright internal
mode, while every vector in \(\operatorname{Ran}Q\) is reflected with phase \(-1\).

At momentum \(p>0\), propagation through the internal arms is
\(D(p) := \operatorname{diag} \left( e^{ip\ell_1}, \ldots, e^{ip\ell_k} \right)\).
After one traversal, a bright state becomes
\(D(p)|s\rangle\).
Its bright overlap at the second vertex is
\(F(p) := \langle s,D(p)s\rangle = \sum_{j=1}^{k}w_je^{ip\ell_j}\).
Accordingly,
\(\|QD(p)s\|^2 = 1-|F(p)|^2\).

Since every \(w_j>0\), \(|F(p)|=1\) if and only if
\(e^{ip\ell_1} = \cdots = e^{ip\ell_k}\).

The single-pass overlap \(F(p)\) is not, in general, the full transmission
amplitude. If a dark component reaches \(v_R\), it is reflected, returns to
\(v_L\), and may be converted back into the bright sector after further
propagation.

Let \(a\in\mathbb C^k\) be the right-moving internal amplitude immediately
after \(v_L\). For unit incidence from the left and no incidence from the
right,
\(a = |s\rangle-Qb\),
where the returning amplitude is
\(b=-DQDa\).
Hence
\(a = |s\rangle+QDQD\,a\).

Whenever the inverse exists,
\(a = (I_k-QDQD)^{-1}|s\rangle\),
and therefore
\(t(p) = \langle s| D(I_k-QDQD)^{-1} |s\rangle\),
\(r(p) = - \langle s| DQD(I_k-QDQD)^{-1} |s\rangle\).

For \(\operatorname{Im}p>0\), the inverse has the convergent expansion
\((I_k-QDQD)^{-1} = \sum_{q=0}^{\infty}(QDQD)^q\),
whose successive terms describe repeated dark-sector round trips.

\subsection{Exact two-port scattering and resonant energies}
\label{subsec:split-rejoin-exact-scattering}
\label{subsec:split-rejoin-perfect-transmission}
\label{subsec:split-rejoin-zeros}

Away from internal Dirichlet energies, the scattering problem reduces to
two scalar vertex values. Assume first that
\(\sin(p\ell_j)\neq0 \qquad \text{for every }j\),
and define
\(K(p) := \sum_{j=1}^{k} a_j\cot(p\ell_j)\),
\(J(p) := \sum_{j=1}^{k} a_j\csc(p\ell_j)\).

For fixed endpoint values \(U,V\), the solution on the \(j\)-th arm is
\[
  u_j(x)
  =
  U
  \frac{\sin(p(\ell_j-x))}
       {\sin(p\ell_j)}
  +
  V
  \frac{\sin(px)}
       {\sin(p\ell_j)}.
\]

For incidence from the left,
\(u_L(x) = e^{ipx}+r(p)e^{-ipx}, \qquad x<0\),
and
\(u_R(x) = t(p)e^{ipx}, \qquad x>0\).
Hence
\(U=1+r, \qquad V=t\).

\begin{theorem}[Two-port scattering law]
\label{thm:split-rejoin-exact-S}
Suppose
\[
  \sin(p\ell_j)\neq0
  \qquad
  \text{for all }j.
\]
Set
\[
  \Delta(p)
  :=
  A^2+J(p)^2-K(p)^2+2iAK(p).
\]
Then
\[
  t(p)
  =
  \frac{2iA\,J(p)}{\Delta(p)},
\]
and
\[
  r(p)
  =
  \frac{A^2+K(p)^2-J(p)^2}{\Delta(p)}.
\]
For real \(p\),
\[
  |r(p)|^2+|t(p)|^2=1.
\]
\end{theorem}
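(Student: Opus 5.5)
The proof reduces the scattering problem to a $2\times2$ linear system in the vertex values $U=1+r$ and $V=t$, with each arm eliminated by the explicit interpolating solution already displayed. Because $\sin(p\ell_j)\neq0$, that solution is the unique solution of $-u''=p^2u$ on $(0,\ell_j)$ with endpoint values $U,V$. Differentiating it gives
\[
  u_j'(0)=p\bigl(-U\cot(p\ell_j)+V\csc(p\ell_j)\bigr),
  \qquad
  u_j'(\ell_j)=p\bigl(-U\csc(p\ell_j)+V\cot(p\ell_j)\bigr).
\]
Multiplying by $a_j$ and summing yields $\sum_j a_ju_j'(0)=p(-KU+JV)$ and $\sum_j a_ju_j'(\ell_j)=p(-JU+KV)$.

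Next come the external derivatives. The outward normal derivative at $v_L$ is $\partial_{\nu_L}u_L(v_L)=-u_L'(0^-)=-ip(1-r)$, and at $v_R$ it is $\partial_{\nu_R}u_R(v_R)=-u_R'(0^+)=-ipt$. I would double-check the sign convention here against Theorem~\ref{thm:weighted-kirchhoff}, since the two displayed Kirchhoff conditions in the setup carry opposite signs on the internal sums, reflecting the arm orientation. Substituting into the two Kirchhoff conditions and dividing by $p$ gives
\[
  -iA(1-r)-K(1+r)+Jt=0,
  \qquad
  -iAt+J(1+r)-Kt=0.
\]
From the second equation, $1+r=t(K+iA)/J$. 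If $J=0$, the second equation instead forces $t=0$ directly; this edge case I would handle separately or by continuity. Writing $1-r=2-(1+r)$ and substituting into the first equation gives
\[
  t\bigl[(K+iA)^2-J^2\bigr]=-2iAJ.
\]
Since $(K+iA)^2=K^2-A^2+2iAK$, this is $t=2iAJ/\Delta$, as claimed. Then $r=t(K+iA)/J-1$ simplifies to
\[
  r=\frac{2iA(K+iA)-\Delta}{\Delta}=\frac{A^2+K^2-J^2}{\Delta}.
\]
This also requires $\Delta\neq0$. For real $p$, $\operatorname{Im}\Delta=2AK$ and $\operatorname{Re}\Delta=A^2+J^2-K^2$, so $\Delta=0$ would force $K=0$ and hence $A^2+J^2=0$, which is impossible.

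For unitarity, $K$ and $J$ are real when $p$ is real, so it suffices to check
\[
  (A^2+K^2-J^2)^2+4A^2J^2=(A^2+J^2-K^2)^2+4A^2K^2.
\]
Both sides expand to the same symmetric polynomial: their difference is
\[
  (A^2+K^2-J^2)^2-(A^2+J^2-K^2)^2=4A^2(K^2-J^2),
\]
which is exactly cancelled by $4A^2J^2-4A^2K^2$. Alternatively, unitarity follows from flux conservation, since the external leads carry equal weight $A$.

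The main obstacle is purely bookkeeping: keeping the signs of the normal derivatives and arm orientations consistent with the paper's stated Kirchhoff conditions. A sign slip there would flip $K$ or $J$ in $\Delta$.
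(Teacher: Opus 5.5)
Your route matches the paper's: eliminate each arm through its Dirichlet-to-Neumann data, reduce to a \(2\times2\) system in \(U=1+r\), \(V=t\), and check unitarity. Your arm derivatives, the sums \(p(JV-KU)\) and \(p(KV-JU)\), the nonvanishing of \(\Delta\) for real \(p\), and the final polynomial identity are all correct. The genuine error is the sign you flagged but did not resolve. In the paper, \(\partial_\nu\) is the derivative taken from the vertex \emph{into} the edge. The integration-by-parts formula in the proof of Theorem~\ref{thm:weighted-kirchhoff} forces this convention. It is also why the arms enter the displayed conditions as \(+u_j'(0)\) at \(v_L\) and \(-u_j'(\ell_j)\) at \(v_R\). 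You applied it correctly on the left lead, \(-u_L'(0^-)=ip(r-1)\). The right lead's coordinate, however, already increases away from \(v_R\), so \(\partial_{\nu_R}u_R(v_R)=+u_R'(0^+)=ipt\), not \(-ipt\). Your second equation should therefore be \(J(1+r)+(iA-K)t=0\), not \(J(1+r)-(iA+K)t=0\). With your sign, the system in \((U,V)\) has determinant \(A^2+K^2-J^2\). This vanishes identically when all arm lengths are equal, since then \(J^2-K^2=A^2\). Your equations would then have no solution at any regular energy of the equal-arm network, which is impossible for a genuine scattering problem.

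You reach the stated formulas only because the subsequent algebra does not follow from your own equations. Substituting your \(1+r=t(K+iA)/J\) into \((iA-K)(1+r)+Jt=2iA\) gives \(t\,(J^2-A^2-K^2)=2iAJ\), not \(t\,[(K+iA)^2-J^2]=-2iAJ\). Likewise \(2iA(K+iA)-\Delta=K^2-J^2-3A^2\), not \(A^2+K^2-J^2\). So two compensating slips mask the sign error.

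With the corrected sign, \(1+r=t(K-iA)/J\). Using \((iA-K)(K-iA)=-(K-iA)^2\) and \(J^2-(K-iA)^2=\Delta\), you get \(t\Delta=2iAJ\) and \(r=2iA(K-iA)/\Delta-1=(A^2+K^2-J^2)/\Delta\). The paper does this more cleanly by solving \((iA-K)U+JV=2iA\), \(JU+(iA-K)V=0\) with Cramer's rule. The determinant is \((iA-K)^2-J^2=-\Delta\), which your argument shows is nonzero for real \(p\). No division by \(J\) is needed, so the \(J=0\) case requires no separate treatment.
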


\begin{proof}
The arm solutions give
\(\sum_j a_ju_j'(0)=p(JV-KU)\) and \(-\sum_j a_ju_j'(\ell_j)=p(JU-KV)\). With \(U=1+r\), \(V=t\), and exterior outward derivatives \(ip(r-1)\) and \(ipt\), the two Kirchhoff equations are
\[
 (iA-K)U+JV=2iA,\qquad JU+(iA-K)V=0.
\]
Their determinant is \((iA-K)^2-J^2=-\Delta\); Cramer's rule yields the displayed \(t\) and \(r\). For real regular \(p\), \(J,K\in\mathbb R\), and direct substitution gives
\( |\Delta|^2=(A^2+J^2-K^2)^2+4A^2K^2 \) and \(|r|^2+|t|^2=1\).
\end{proof}

The corresponding transmission probability is
\(T(p) = \frac{ 4A^2J(p)^2 }{ \left[ A^2+J(p)^2-K(p)^2 \right]^2 + 4A^2K(p)^2 }\).

Away from the arm poles, \(r(p)=0\) if and only if
\(J(p)^2-K(p)^2=A^2\).
This yields the following criterion.

\begin{proposition}[Perfect-transmission criterion]
\label{prop:split-rejoin-perfect}
At every regular energy,
\[
  T(p)=1
\]
if and only if
\[
  J(p)^2-K(p)^2=A^2.
\]
\end{proposition}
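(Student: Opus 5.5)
The claim follows directly from the closed formulas of Theorem~\ref{thm:split-rejoin-exact-S} together with unitarity, so the plan is short. At a regular energy we have \(\sin(p\ell_j)\neq0\) for every \(j\), and \(p>0\) is real. Then \(J(p)\) and \(K(p)\) are real, and \(\Delta(p)=A^2+J^2-K^2+2iAK\) satisfies
\[
  |\Delta|^2=(A^2+J^2-K^2)^2+4A^2K^2 .
\]
This is strictly positive: if \(K=0\), the first term is \((A^2+J^2)^2>0\) because \(A>0\). Hence \(t\) and \(r\) are well defined.

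Theorem~\ref{thm:split-rejoin-exact-S} gives \(|r|^2+|t|^2=1\) for real \(p\). Therefore \(T(p)=|t(p)|^2=1\) holds if and only if \(r(p)=0\). Since \(\Delta\neq0\), the formula \(r=(A^2+K^2-J^2)/\Delta\) shows that \(r=0\) is equivalent to \(A^2+K^2-J^2=0\), that is, \(J(p)^2-K(p)^2=A^2\).

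As a sanity check, one can verify the equivalence directly from the displayed \(T(p)\). Substituting \(J^2=A^2+K^2\) makes the numerator \(4A^2(A^2+K^2)\). The denominator becomes \((2A^2)^2+4A^2K^2=4A^2(A^2+K^2)\), so \(T=1\). Conversely, \(T=1\) means
\[
  (A^2+J^2-K^2)^2+4A^2K^2=4A^2J^2 .
\]
Expanding, this is \((A^2+K^2-J^2)^2=0\), which is the same condition.

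I expect no real obstacle here. The only point needing care is confirming that \(\Delta\) cannot vanish at a regular real \(p\), so that dividing by it is legitimate; the positivity of \(|\Delta|^2\) shown above handles this.
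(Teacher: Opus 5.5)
Your proof is correct and follows the paper's route: show $\Delta(p)\neq0$ at a regular real energy, then use $|r|^2+|t|^2=1$ from Theorem~\ref{thm:split-rejoin-exact-S} to reduce $T(p)=1$ to $r(p)=0$, i.e.\ $A^2+K^2-J^2=0$. You add two details the paper's proof only asserts: an explicit check that $|\Delta|^2>0$, and a direct verification from the $T(p)$ formula via $(A^2+K^2-J^2)^2=0$. Both are sound.
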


\begin{proof}
At a regular real energy the denominator \(\Delta(p)\) is nonzero. By
Theorem~\ref{thm:split-rejoin-exact-S}, perfect transmission is equivalent
to \(r(p)=0\), which is precisely the displayed identity.
\end{proof}

Complete phase alignment is sufficient.

\begin{proposition}[Phase-aligned transparency]
\label{prop:split-rejoin-phase-alignment}
If
\[
  e^{ip\ell_1}
  =
  \cdots
  =
  e^{ip\ell_k}
  =
  e^{i\phi},
\]
then
\[
  r(p)=0,
  \qquad
  t(p)=e^{i\phi}.
\]
\end{proposition}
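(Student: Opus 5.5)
Two routes are available. Under the phase-alignment hypothesis we have \(D(p)=e^{i\phi}I_k\), and the quickest route is the multiple-scattering representation derived above. Then \(QD(p)QD(p)=e^{2i\phi}Q^2=e^{2i\phi}Q\), and \(Q|s\rangle=0\). The equation \(a=|s\rangle+QDQD\,a\) is therefore solved by \(a=|s\rangle\), and this solution is unique whenever \(I_k-e^{2i\phi}Q\) is invertible, that is, when \(e^{2i\phi}\neq1\). In that case substitution gives
\[
t(p)=\langle s|D|s\rangle=e^{i\phi}, \qquad r(p)=-\langle s|DQD|s\rangle=-e^{2i\phi}\langle s|Q|s\rangle=0.
\]
This is consistent with \(F(p)=e^{i\phi}\) and \(\|QDs\|=0\): the bright mode never leaks into the dark sector.

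The first route breaks down exactly when \(e^{2i\phi}=1\), i.e.\ \(e^{i\phi}=\pm1\). Then every \(\sin(p\ell_j)=0\): these are the internal Dirichlet energies, where \(I_k-QDQD\) is singular on \(\operatorname{Ran}Q\) and Theorem~\ref{thm:split-rejoin-exact-S} does not apply either. I expect this degenerate case to be the main obstacle, and I would handle it by working directly with the vertex equations rather than the inverse. I would take the explicit candidate
\[
u_L(x)=e^{ipx}, \qquad u_j(x)=e^{ipx}\ \ \text{on }(0,\ell_j), \qquad u_R(x)=e^{i\phi}e^{ipx},
\]
and verify that it satisfies all the conditions. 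Continuity holds, since \(u_j(0)=1=u_L(0)\) and \(u_j(\ell_j)=e^{i\phi}=u_R(0)\). At \(v_L\), Kirchhoff holds because \(A\cdot(-ip)+\sum_j a_j\,ip=0\) by balance. At \(v_R\), Kirchhoff holds because \(A\,ip\,e^{i\phi}-\sum_j a_j\,ip\,e^{i\phi}=0\). So this is a generalized eigenfunction with \(r=0\) and \(t=e^{i\phi}\). The same verification covers the regular case as well, so I might present it as the main proof and keep the resolvent-series argument as the interpretation.

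What remains is to identify this generalized eigenfunction with the scattering solution. In the degenerate case, any other solution with the same incoming data differs from it by a solution with no incoming waves. The outgoing parts of that difference must vanish by flux conservation (unitarity of the vertex scattering), so the difference is supported on the internal arms with \(U=V=0\). These are exactly the compactly supported dark states discussed next in the paper, and they do not affect \(r\) and \(t\). I would state this briefly, noting that the scattering amplitudes are defined modulo such bound states.
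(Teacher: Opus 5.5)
Your proof is correct, but it is organised differently from the paper's.

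\textbf{The paper's route.} It obtains the regular case as a corollary of the two-port law. Phase alignment gives \(K=A\cot\phi\) and \(J=A\csc\phi\), hence \(J^2-K^2=A^2\). This yields \(r=0\) by the perfect-transmission criterion, and \(t=e^{i\phi}\) from \(\Delta\). At the Dirichlet energies \(e^{i\phi}=\pm1\), where \(J\) and \(K\) are undefined, the paper disposes of the case in one line by appealing to the all-energy matching system or to continuity of the amplitudes.

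\textbf{Your route.} You write down the descended plane wave \(e^{ipx}\) on the left lead and on every arm. You then check continuity and both weighted Kirchhoff conditions directly from the balance \(A=\sum_j a_j\). This works uniformly in \(p\), with no division by \(\sin(p\ell_j)\). In the matching-system coordinates it is just the explicit solution \(U=1\), \(B_j=i\), \(V=e^{i\phi}\). So it supplies exactly the verification the paper leaves implicit at the pole energies.

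Your current-conservation step makes precise the paper's remark that \(r\) and \(t\) are unaffected by compact dark states. It also gives genuine uniqueness at regular energies, where no dark states exist. You can therefore state it for all \(p>0\), not only in the degenerate case.

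\textbf{What each buys.} The paper's route places the result inside the general \(J,K\) scattering law, exhibiting phase alignment as a special solution of \(J^2-K^2=A^2\). Yours gives a self-contained, energy-uniform proof. Through the scalar propagator \(D(p)=e^{i\phi}I_k\) in the multiple-scattering picture, it also explains transparently why the bright mode never leaks into \(\operatorname{Ran}Q\). Note that this resolvent route shares the paper's limitation at the pole energies, which is why your plane-wave verification is the right main argument.
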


\begin{proof}
At regular phase-aligned energies,
\(K=A\cot\phi, \qquad J=A\csc\phi\),
so
\(J^2-K^2=A^2\).
Substitution into
Theorem~\ref{thm:split-rejoin-exact-S} gives the result. The remaining
phase-aligned Dirichlet energies follow from the all-energy matching system
of Proposition~\ref{prop:split-rejoin-pole-matching}, or equivalently by
continuity of the exterior scattering amplitudes.
\end{proof}

In particular, if
\(\ell_1=\cdots=\ell_k=\ell\),
then \(r(p)=0, \qquad t(p)=e^{ip\ell}\) for every \(p>0\).

Transmission zeros at regular energies are equally explicit.

\begin{proposition}[Transmission zeros]
\label{prop:split-rejoin-transmission-zero}
If
\[
  \sin(p\ell_j)\neq0
  \qquad
  \text{for every }j,
\]
then
\[
  t(p)=0
\]
if and only if
\[
  J(p)=0.
\]
At such an energy,
\[
  |r(p)|=1.
\]
\end{proposition}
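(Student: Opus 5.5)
The argument runs directly through the explicit formulas of Theorem~\ref{thm:split-rejoin-exact-S}. At a regular energy, meaning \(\sin(p\ell_j)\neq0\) for every \(j\), the quantities \(K(p)\) and \(J(p)\) are finite real numbers. The first step is to show that the denominator
\[
  \Delta(p)=A^2+J^2-K^2+2iAK
\]
never vanishes. Suppose it did. Its imaginary part would give \(2AK=0\), and since \(A>0\) this forces \(K=0\). Its real part would then read \(A^2+J^2=0\), which is impossible because \(A>0\). Hence \(\Delta(p)\neq0\).

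With the denominator nonzero, the formula \(t(p)=2iAJ(p)/\Delta(p)\) shows that \(t(p)=0\) holds exactly when \(J(p)=0\), again using \(A\neq0\). This proves the equivalence. For the modulus statement I will invoke the flux identity \(|r|^2+|t|^2=1\) from Theorem~\ref{thm:split-rejoin-exact-S}, which is valid for real regular \(p\). Setting \(t=0\) there gives \(|r(p)|=1\).

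As an independent check, I can compute \(r\) directly. When \(J=0\) the reflection coefficient becomes
\[
  r=\frac{A^2+K^2}{A^2-K^2+2iAK}=\frac{A^2+K^2}{(A+iK)^2},
\]
and its modulus is visibly \(1\).

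I expect no real obstacle. The only point that needs care is confirming that the regularity hypothesis keeps \(J\) and \(K\) finite and real, so that the nonvanishing argument for \(\Delta\) is legitimate.
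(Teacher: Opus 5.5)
Your proof is correct and matches the paper's argument: you show \(\Delta(p)\neq0\) from its imaginary part \(2AK\) and its real part \(A^2+J^2>0\), read off \(t(p)=0\iff J(p)=0\) from \(t=2iAJ/\Delta\), and then obtain \(|r(p)|=1\) from \(|r|^2+|t|^2=1\). Your direct computation \(r=(A^2+K^2)/(A+iK)^2\) when \(J=0\) is an additional check that reaches \(|r|=1\) without appealing to unitarity.
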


\begin{proof}
For real regular \(p\), the imaginary part of \(\Delta(p)\) is \(2AK(p)\).
If \(\Delta(p)=0\), then \(K(p)=0\), whereas its real part would be
\(A^2+J(p)^2>0\), a contradiction. The formula for \(t(p)\) therefore
vanishes exactly when \(J(p)=0\); unitarity then gives \(|r(p)|=1\).
\end{proof}

These zeros require at least two internal paths. Unlike rooted trees, the
energy-dependent external scattering can therefore detect relative internal
path lengths.

The functions \(J\) and \(K\) are singular when one or more arms satisfy
\(p\ell_j\in\pi\mathbb Z\).
These are singularities of the Dirichlet-to-Neumann coordinates, not of the
underlying matching problem. To treat them without division by
\(\sin(p\ell_j)\), write on every internal arm
\(u_j(x) = U\cos(px)+B_j\sin(px)\),
and set
\(c_j:=\cos(p\ell_j), \qquad s_j:=\sin(p\ell_j)\).

\begin{proposition}[Matching system at arbitrary energy]
\label{prop:split-rejoin-pole-matching}
At every \(p>0\), including energies for which some \(s_j=0\), the
left-incidence scattering problem is equivalent to
\[
  U=1+r,
  \qquad
  V=t,
\]
together with
\[
  V=Uc_j+B_js_j,
  \qquad
  j=1,\ldots,k,
\]
\[
  \sum_{j=1}^{k}a_jB_j
  =
  iA(2-U),
\]
and
\[
  \sum_{j=1}^{k}a_jc_jB_j
  =
  U\sum_{j=1}^{k}a_js_j
  +
  iAV.
\]
When all \(s_j\neq0\), elimination of the \(B_j\) reduces this system to
Theorem~\ref{thm:split-rejoin-exact-S}.
\end{proposition}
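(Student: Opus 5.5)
The plan is to derive the matching system directly from the operator domain of the weighted Kirchhoff Laplacian, avoiding any division by \(s_j\). Fix \(p>0\) and a generalised eigenfunction for energy \(p^2\) with left incidence: \(u_L(x)=e^{ipx}+re^{-ipx}\) on \(e_L\), \(u_R(x)=te^{ipx}\) on \(e_R\), and \(-u_j''=p^2u_j\) on each arm \((0,\ell_j)\). Every solution on an arm has the form \(u_j(x)=\alpha_j\cos(px)+B_j\sin(px)\) for uniquely determined constants \(\alpha_j,B_j\); this parametrisation involves no Dirichlet-to-Neumann coordinates. Conversely, any such data define an edgewise \(H^2_{\mathrm{loc}}\) solution. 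The first step is to show that the vertex conditions hold if and only if the displayed system holds.

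For continuity at \(v_L\), set \(U:=u(v_L)\). Then \(u_L(0)=1+r=U\) and \(u_j(0)=\alpha_j=U\), so every arm has the form \(U\cos(px)+B_j\sin(px)\). Continuity at \(v_R\) with \(V:=u(v_R)=u_R(0)=t\) gives \(V=Uc_j+B_js_j\) for each \(j\). For the Kirchhoff condition at \(v_L\), use the outward derivative of the left lead, \(\partial_{\nu_L}u_L(v_L)=-u_L'(0)=-ip(1-r)=ip(r-1)\), together with \(u_j'(0)=pB_j\). Dividing by \(p\) gives \(iA(r-1)+\sum_j a_jB_j=0\). Substituting \(r=U-1\) yields \(\sum_j a_jB_j=iA(2-U)\). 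At \(v_R\), \(\partial_{\nu_R}u_R=-u_R'(0)\) in the exterior coordinate with the sign convention used in Theorem~\ref{thm:split-rejoin-exact-S}. Its proof records the exterior outward derivative as \(ipt=ipV\). Since \(u_j'(\ell_j)=p(-Us_j+B_jc_j)\), the condition becomes \(\sum_j a_jc_jB_j=U\sum_j a_js_j+iAV\). Every step is an equivalence, so the system is equivalent to the scattering problem at every \(p>0\).

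The main obstacle is purely bookkeeping: keeping the sign conventions for outward normals on the right vertex consistent with those fixed in the proof of Theorem~\ref{thm:split-rejoin-exact-S}. I would check this by comparing with the reduced system obtained there.

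For the reduction, assume all \(s_j\neq0\). Solve \(B_j=(V-Uc_j)/s_j\) and substitute. The first Kirchhoff equation becomes \(JV-KU=iA(2-U)\), that is \((iA-K)U+JV=2iA\). The second becomes
\[
\sum_j a_j\frac{c_jV-Uc_j^2}{s_j}-U\sum_j a_js_j=iAV .
\]
Using \(c_j^2/s_j+s_j=1/s_j\), this simplifies to \(KV-JU=iAV\), equivalently \(JU+(iA-K)V=0\). These are exactly the two Kirchhoff equations in the proof of Theorem~\ref{thm:split-rejoin-exact-S}, and Cramer's rule then recovers that theorem.
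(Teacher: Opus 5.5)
Your proposal is correct and follows the paper's own route: parametrise each arm as \(U\cos(px)+B_j\sin(px)\), read off continuity at \(v_R\), and substitute the exterior waves into the two weighted Kirchhoff conditions. You also carry out the \(s_j\neq0\) elimination explicitly, which the paper leaves implicit, and you do it correctly via \(c_j^2/s_j+s_j=1/s_j\).

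One small fix at \(v_R\): in the coordinate where \(u_R(x)=te^{ipx}\), the derivative entering the Kirchhoff condition is \(+u_R'(0)=ipt\), not \(-u_R'(0)\). The value \(ipt\) that you actually substitute is the correct one, so your resulting equation stands.
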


\begin{proof}
The representation \(u_j(x)=U\cos(px)+B_j\sin(px)\) automatically gives \(u_j(0)=U\), while evaluation at \(x=\ell_j\) gives
the first displayed relation. The remaining two equations are precisely the
weighted Kirchhoff conditions at \(v_L\) and \(v_R\), after substituting
the exterior scattering waves.
\end{proof}

Pole energies require no separate singular boundary condition. The
finite matching system remains well defined there. If its homogeneous
internal problem has nonzero solutions, the internal coefficients are
determined only modulo the corresponding compact dark eigenspace; the
external scattering amplitudes are unchanged by adding such a state.

\subsection{Compact dark eigenstates}
\label{subsec:split-rejoin-dark-eigenstates}
\label{subsec:split-rejoin-equal-dark}
\label{subsec:split-rejoin-generic-dark}

The homogeneous internal solutions at resonant arm energies are classified
by the two weighted Kirchhoff constraints.

Suppose an eigenfunction vanishes on both external leads. Continuity then
forces
\(U=V=0\).
On arm \(j\), a nonzero solution is possible only when
\(p\ell_j\in\pi\mathbb Z\),
in which case
\(u_j(x)=B_j\sin(px)\).

Define
\[
  \mathcal R(p)
  :=
  \left\{
    j:
    p\ell_j\in\pi\mathbb Z
  \right\},
\]
and, for \(j\in\mathcal R(p)\),
\(\sigma_j := \cos(p\ell_j) \in\{\pm1\}\).

The two Kirchhoff equations become
\(\sum_{j\in\mathcal R(p)} a_jB_j=0\),
and
\(\sum_{j\in\mathcal R(p)} a_j\sigma_jB_j=0\).

\begin{theorem}[Dark-state multiplicity]
\label{thm:split-rejoin-dark-state-multiplicity}
The compactly supported eigenspace at energy \(p^2\) has dimension
\[
  \dim\mathcal E_{\mathrm{dark}}(p)
  =
  |\mathcal R(p)|
  -
  \operatorname{rank}
  \begin{pmatrix}
    (a_j)_{j\in\mathcal R(p)}\\
    (a_j\sigma_j)_{j\in\mathcal R(p)}
  \end{pmatrix}.
\]

If \(\mathcal R(p)\neq\varnothing\) and all resonant arms have the same
parity, then
\[
  \dim\mathcal E_{\mathrm{dark}}(p)
  =
  |\mathcal R(p)|-1.
\]
If both signs \(+1\) and \(-1\) occur among the resonant arms, then
\[
  \dim\mathcal E_{\mathrm{dark}}(p)
  =
  |\mathcal R(p)|-2.
\]
\end{theorem}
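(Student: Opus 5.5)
The plan is to identify the compactly supported eigenspace at energy \(p^2\) with the solution space of the two homogeneous weighted Kirchhoff equations in the unknowns \((B_j)_{j\in\mathcal R(p)}\), and then count dimensions by rank–nullity. First I will justify the reduction already sketched before the statement. If \(u\) is an eigenfunction of the weighted Kirchhoff Laplacian on \(\Theta_k\) at energy \(p^2>0\) and vanishes on both external leads, then continuity gives \(U=V=0\). On each arm, \(u_j\) solves \(-u_j''=p^2u_j\) with \(u_j(0)=u_j(\ell_j)=0\). Hence \(u_j=B_j\sin(px)\), with \(B_j=0\) unless \(j\in\mathcal R(p)\). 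The external leads contribute nothing to the vertex flux sums, since \(u_L\) and \(u_R\) vanish identically. Using \(u_j'(0)=pB_j\) and \(u_j'(\ell_j)=pB_j\cos(p\ell_j)=p\sigma_jB_j\) for resonant arms, and dividing by \(p\neq0\), the two Kirchhoff conditions become exactly the displayed equations \(\sum a_jB_j=0\) and \(\sum a_j\sigma_jB_j=0\).

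Conversely, take any vector \((B_j)_{j\in\mathcal R(p)}\) solving these two equations. Set \(u_j=B_j\sin(px)\) on the resonant arms and \(u=0\) elsewhere. This function lies in \(H^2_\oplus\), is continuous with common vertex values \(0\), and satisfies the weighted Kirchhoff conditions. By Theorem~\ref{thm:weighted-kirchhoff} it is therefore in the operator domain and is an eigenfunction. The map \((B_j)\mapsto u\) is linear and injective, since the functions \(\sin(px)\) are nonzero on resonant arms. So \(\mathcal E_{\mathrm{dark}}(p)\) is isomorphic to the kernel of the \(2\times|\mathcal R(p)|\) matrix with rows \((a_j)\) and \((a_j\sigma_j)\). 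Rank–nullity then gives the general formula.

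The remaining step is computing the rank, which is straightforward but is the only place where positivity of the weights enters. If \(\mathcal R(p)\neq\varnothing\), the first row is nonzero because every \(a_j>0\), so the rank is at least \(1\).

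If all resonant \(\sigma_j\) are equal to a common \(\sigma\), the second row is \(\sigma\) times the first, so the rank is \(1\).

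If both signs occur, I will choose \(i,j\) with \(\sigma_i=1\) and \(\sigma_j=-1\). The corresponding \(2\times2\) minor is \(a_ia_j(-1)-a_ja_i=-2a_ia_j\neq0\), so the rank is \(2\). This yields the two stated dimension counts.

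The only point needing care is interpreting "compactly supported eigenspace" as the eigenfunctions vanishing on both leads. This is in fact automatic: an \(L^2\) solution of \(-u''=p^2u\) on a half-line with \(p>0\) real is zero. So every eigenfunction at \(p^2\) is of this type, and the count covers the whole eigenspace.
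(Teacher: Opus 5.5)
Your proof is correct and follows the paper's route: you identify the compactly supported eigenspace with the kernel of the two-row weighted Kirchhoff matrix on the resonant arms, apply rank--nullity, and use positivity of the weights to get rank one or two. Your extra checks are valid refinements that the paper leaves implicit: the converse construction and injectivity, and the observation that any positive-energy \(L^2\) eigenfunction vanishes on the half-line leads.
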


\begin{proof}
The coefficient vector \((B_j)_{j\in\mathcal R(p)}\) lies in the kernel of
the displayed two-row matrix, so the first formula is rank--nullity. The two
rows are proportional exactly when all \(\sigma_j\) agree. Positivity of the
weights makes the nonzero row rank one in that case and two when both signs
occur.
\end{proof}

These eigenfunctions are supported entirely on the compact parallel-edge
region. Since the external leads contribute continuous spectrum
\([0,\infty)\),
their eigenvalues are embedded in the essential spectrum. They are the
standard compactly supported bound states in the continuum familiar from
parallel-path quantum graphs; here the graph and weighted coupling are
selected by the non-Hausdorff resolution
\citep{BerkolaikoKuchment2013,ColinDeVerdiereTruc18}.

For equal arm lengths,
\(\ell_1=\cdots=\ell_k=\ell\),
all arms resonate simultaneously at
\(p_n=\frac{\pi n}{\ell}\).
Their parity signs are equal, and therefore:

\begin{corollary}[Equal-arm dark eigenspaces]
\label{cor:split-rejoin-equal-dark}
At every energy
\[
  E_n
  =
  \left(
    \frac{\pi n}{\ell}
  \right)^2,
  \qquad
  n\geq1,
\]
the equal-arm system has a compactly supported dark eigenspace of dimension
\[
  k-1.
\]
\end{corollary}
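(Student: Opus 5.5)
The plan is to specialise Theorem~\ref{thm:split-rejoin-dark-state-multiplicity} to \(\ell_1=\cdots=\ell_k=\ell\) at \(p_n=\pi n/\ell\). First I will identify the resonant set. Since \(p_n\ell_j=\pi n\in\pi\mathbb Z\) for every \(j\), we get \(\mathcal R(p_n)=\{1,\ldots,k\}\), so \(|\mathcal R(p_n)|=k\). The parity signs are \(\sigma_j=\cos(\pi n)=(-1)^n\), which is the same for all \(j\). Hence all resonant arms have equal parity, and the first case of the theorem gives dimension \(k-1\).

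For completeness I will also check directly that the two Kirchhoff rows are proportional. The second row is \((-1)^n\) times the first, and the first row \((a_j)\) is nonzero because the weights are positive. The rank is therefore exactly one, and rank--nullity gives \(k-1\).

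I will also confirm that the count is correct as a count of eigenfunctions. Each \(u_j=B_j\sin(p_nx)\) vanishes at both endpoints, so continuity holds with \(U=V=0\). The external components are zero, so the functions lie in the domain of the weighted Kirchhoff operator, and the map from \(B\) to \(u\) is injective.

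The one point needing care is the hypothesis \(n\ge1\). It guarantees \(p_n>0\), and the theorem concerns positive momenta. The case \(k=1\) is consistent, since the dimension is then \(0\). I expect no real obstacle beyond this bookkeeping.
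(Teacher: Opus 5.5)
Your proposal is correct and follows the paper's route: the corollary is obtained by specialising Theorem~\ref{thm:split-rejoin-dark-state-multiplicity}. At \(p_n=\pi n/\ell\) every arm resonates with the common parity \(\sigma_j=(-1)^n\), so the two Kirchhoff rows are proportional and the dimension is \(k-1\). Your direct rank check and your verification that the functions \(B_j\sin(p_nx)\) are genuine domain eigenfunctions are extra bookkeeping that the paper leaves implicit.
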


Hence exact transparency of the equal-arm network coexists with a
\((k-1)\)-dimensional internal eigenspace at every common Dirichlet level.

At the opposite extreme:

\begin{proposition}[Absence of compact dark states]
\label{prop:split-rejoin-generic-no-dark}
If
\[
  \frac{\ell_i}{\ell_j}\notin\mathbb Q
  \qquad
  (i\neq j),
\]
then the resolution-induced Hamiltonian has no compactly supported internal
eigenstates.
\end{proposition}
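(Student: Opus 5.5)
The plan is to reduce the claim to the dark-state multiplicity count of Theorem~\ref{thm:split-rejoin-dark-state-multiplicity}, and then show that the irrationality hypothesis leaves at most one resonant arm at each energy. The first step is to check that a compactly supported eigenfunction must vanish identically on both external leads. On a half-line, \(-u''=p^2u\) forces \(u=\alpha e^{ipx}+\beta e^{-ipx}\). Compact support, or vanishing on an open subset of the lead, then gives \(\alpha=\beta=0\) on the whole lead. Continuity at the vertices yields \(U=V=0\). This puts us exactly in the setting preceding Theorem~\ref{thm:split-rejoin-dark-state-multiplicity}.

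Next I would rule out nonpositive energies. Zero energy gives affine arm functions with zero endpoint values, so they vanish. Negative energy gives \(u_j=B_j\sinh(\kappa x)\) with \(\sinh(\kappa\ell_j)\neq0\), so \(B_j=0\). Nonnegativity of the form already excludes negative energies anyway. It therefore suffices to treat \(p>0\).

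The key counting step handles \(p>0\). Suppose two distinct arms \(i\neq j\) both lie in \(\mathcal R(p)\). Then \(p\ell_i=\pi n_i\) and \(p\ell_j=\pi n_j\) with \(n_i,n_j\geq1\). Hence \(\ell_i/\ell_j=n_i/n_j\in\mathbb Q\), which contradicts the hypothesis. So \(|\mathcal R(p)|\leq1\).

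If \(\mathcal R(p)=\varnothing\), every \(B_j\) vanishes and \(u\equiv0\). If \(\mathcal R(p)=\{j\}\), the first Kirchhoff equation reads \(a_jB_j=0\), and positivity of \(a_j\) forces \(B_j=0\). Equivalently, the multiplicity formula gives \(1-1=0\). In every case the compactly supported eigenspace is trivial.

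The main obstacle is only bookkeeping, namely justifying that compact support forces vanishing on the leads and hence \(U=V=0\); the rest is immediate from the earlier rank count. The case \(k=1\) is trivial in the same way, with the hypothesis vacuous.
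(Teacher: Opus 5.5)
Your proposal is correct and follows essentially the same route as the paper: irrational length ratios leave at most one resonant arm at any positive momentum, and the first Kirchhoff constraint $\sum_{j\in\mathcal R(p)}a_jB_j=0$ then forces that arm's coefficient to vanish. Your extra bookkeeping only makes explicit what the paper's two-line proof takes from the setup preceding Theorem~\ref{thm:split-rejoin-dark-state-multiplicity}, namely that vanishing on the leads gives $U=V=0$ and that $p\le 0$ contributes nothing.
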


\begin{proof}
Two distinct arms cannot then satisfy \(p\ell_i\in\pi\mathbb Z, \qquad p\ell_j\in\pi\mathbb Z\) at the same positive momentum. A single resonant arm cannot satisfy the
first Kirchhoff constraint with nonzero coefficient.
\end{proof}

Interference therefore persists generically, whereas exact compact dark
eigenstates require metric commensurability.

\subsection{The oriented first-order operator}
\label{subsec:split-rejoin-first-order}
\label{subsec:split-rejoin-deficiency}
\label{subsec:split-rejoin-SA}
\label{subsec:split-rejoin-tree-comparison}

Orient the left external lead toward \(v_L\), every internal arm from
\(v_L\) to \(v_R\), and the right external lead away from \(v_R\). This
orientation is compatible with the sheetwise coordinate used in the
resolution.

Define \(\mathsf D_\Theta = -i\frac{d}{dx}\) edgewise on
\(\mathcal D(\mathsf D_\Theta) = H^1_{\mathrm{cont}}(\Theta_k)\).
The balance relation \(A=\sum_{j=1}^{k}a_j\) holds at both vertices, so
\(\mathsf D_\Theta\) is symmetric by
Proposition~\ref{prop:first-order-boundary-form-general}.

Its adjoint acts by the same edgewise differential expression and satisfies
the two boundary relations
\(A\psi_L(v_L) = \sum_{j=1}^{k}a_j\psi_j(0)\),
\(\sum_{j=1}^{k}a_j\psi_j(\ell_j) = A\psi_R(v_R)\).

\begin{theorem}[First-order index of the generalized theta resolution]
\label{thm:split-rejoin-deficiency}
The resolution-induced oriented first-order operator satisfies
\[
  (n_+,n_-)
  =
  (k-1,k-1).
\]
For this generalized-theta family,
\[
  n_+=n_-=\beta_1(\Theta_k).
\]
\end{theorem}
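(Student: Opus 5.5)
The plan is to compute the two deficiency spaces directly, in the same way as for the trees (Theorem~\ref{thm:trees-global-index}), by counting free coefficients against the adjoint boundary relations displayed just above.

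For \(\ker(\mathsf D_\Theta^*-i)\) every component solves \(\psi'=-\psi\). On the incoming lead \(e_L\) the only square-integrable solution is zero. On the outgoing lead \(e_R\) we have \(\psi_R=Ce^{-x}\). On each internal arm \(\psi_j(x)=B_je^{-x}\), so \(\psi_j(\ell_j)=B_je^{-\ell_j}\). This gives \(k+1\) unknowns, constrained by two relations. At \(v_L\) we get \(0=\sum_j a_jB_j\), and at \(v_R\) we get \(\sum_j a_je^{-\ell_j}B_j=AC\). The second relation is solved uniquely for \(C\) because \(A>0\). The first is a single nontrivial linear constraint on \((B_j)\), since all \(a_j>0\). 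The solution space therefore has dimension \((k+1)-2=k-1\), so \(n_+=k-1\).

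For \(\ker(\mathsf D_\Theta^*+i)\) the equation is \(\psi'=\psi\). Now \(\psi_R=0\), \(\psi_L=C e^{x}\) on \((-\infty,0)\) with trace \(C\) at \(v_L\), and \(\psi_j=B_je^{x}\) on the arms. The \(v_L\) relation \(AC=\sum_j a_jB_j\) determines \(C\). The \(v_R\) relation \(\sum_j a_je^{\ell_j}B_j=0\) is one nontrivial constraint. Hence \(n_-=k-1\). Equivalently, this is the rank argument of Lemma~\ref{lem:trees-deficiency-rank}: the two vertex relations are independent because each contains an external coefficient that appears in no other relation.

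Finally, \(\beta_1(\Theta_k)=k-1\) was already recorded, which gives \(n_\pm=\beta_1(\Theta_k)\). This is also consistent with a general count: the \(V=2\) vertex equations are full rank against \((E_{\mathrm{int}}=k)+N_{\mathrm{out}}\) coefficients, giving \(k+1-2=E_{\mathrm{int}}-V+N_{\mathrm{out}}=\beta_1+N_{\mathrm{out}}-1\), with \(N_{\mathrm{in}}=N_{\mathrm{out}}=1\).

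The only step needing care is the full-rank check, meaning that each boundary relation is genuinely nontrivial and the two are not dependent. Positivity of \(a_j\) and \(A\) and the presence of a distinct external unknown in each relation settle this. The case \(k=1\) is consistent, since it gives \((0,0)\).
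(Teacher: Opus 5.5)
Your proposal is correct and matches the paper's proof. In each deficiency equation you discard the non-square-integrable external mode and write the arms as \(B_je^{\mp x}\). The vertex relation with no external unknown then gives one nontrivial constraint on \((B_j)\), and the other relation is solved uniquely for the surviving external coefficient, so both indices equal \(k-1\).

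One minor slip is in your ``equivalently'' aside. In each case only one of the two vertex relations contains an external unknown; for \(+i\), the \(v_L\) relation is just \(\sum_j a_jB_j=0\). Independence therefore comes from that unknown appearing in a single relation while the other is a nontrivial constraint on the \(B_j\), which is exactly what your main argument already uses.
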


\begin{proof}
For
\(\psi\in\ker(\mathsf D_\Theta^*-i)\),
the equation is
\(\psi'=-\psi\).
The left external solution is not square-integrable and therefore vanishes.
On the internal edges,
\(\psi_j(x)=C_je^{-x}\).
The left boundary relation gives
\(\sum_{j=1}^{k}a_jC_j=0\),
leaving \(k-1\) free coefficients. The right external coefficient is then
uniquely determined by
\(A\,C_R = \sum_{j=1}^{k} a_jC_je^{-\ell_j}\).
\(n_+=k-1\).

For
\(\psi\in\ker(\mathsf D_\Theta^*+i)\),
one has
\(\psi'=\psi\).
The right external component vanishes. On the internal edges,
\(\psi_j(x)=C_je^x\),
and the right boundary relation imposes
\(\sum_{j=1}^{k} a_jC_je^{\ell_j}=0\).
This again leaves \(k-1\) free coefficients, while the left external
coefficient is uniquely determined by the remaining boundary relation.
Hence
\(n_-=k-1\).
\end{proof}

For \(k>1\), the geometrically selected derivative is not self-adjoint, but
its self-adjoint extensions exist and are parametrized by
\(U(k-1)\).

The equality with \(\beta_1(\Theta_k)\) is a property of this particular
two-vertex parallel-path family and will not be used as a general
first-order index formula for arbitrary cyclic graphs.

The result nevertheless marks the structural difference from a rooted
branching tree. Splitting alone produces the one-sided deficiency indices
\((k-1,0)\), whereas recombination supplies the opposite deficiency sector
and gives \((k-1,k-1)\). At the same time, the parallel paths create the
phase-sensitive scattering and compact dark states derived above.

Closing the external geometry gives compact parallel-path networks in which
the same bright--dark structure is encoded entirely in the discrete spectrum.
\section{Compact split-and-rejoin networks}
\label{sec:compact-networks}

Closing the split-and-rejoin geometry produces parallel compact paths between two vertices; interference then appears spectrally rather than through a two-port matrix.

The compactification also gives a useful comparison for the first-order
operator. The return path restores a globally balanced orientation, but the
relative parallel-path degrees of freedom remain. For \(k\) resolved paths,
the geometrically selected derivative has deficiency indices
\((k-1,k-1)\).
The first Betti number of the compact graph is instead \(k\), showing already
that the equality between deficiency dimension and cycle rank found for the
open generalized theta graph is not universal.

\subsection{Compact resolution and resolution-induced Hamiltonian}
\label{subsec:compact-theta-resolution}
\label{subsec:compact-theta-metric}
\label{subsec:compact-theta-Hamiltonian}

Take \(k\geq1\) labelled circles
\(S_1,\ldots,S_k\).
On each \(S_j\), choose two marked points
\(p_j, \qquad q_j\),
which divide the circle into two open arcs. Denote by \(C_j\) the arc from
\(q_j\) to \(p_j\), and by \(I_j\) the complementary arc from \(p_j\) to
\(q_j\).

Identify corresponding regular points of \(C_1,\ldots,C_k\) through a fixed orientation-preserving smooth identification, while leaving
the arcs \(I_1,\ldots,I_k\) distinct. The marked endpoints themselves remain distinct. Denote the
resulting non-Hausdorff one-manifold by
\(\widehat M_{\Theta,k}\).

The points \(p_1,\ldots,p_k\) form one pairwise nonseparable exceptional fibre, and \(q_1,\ldots,q_k\) form another. The two exceptional fibres are Hausdorff separated.

The Hausdorff reflection \(\widehat\pi: \widehat M_{\Theta,k} \longrightarrow \widehat\Theta_k\) has two vertices
\(v_L, \qquad v_R\),
and \(k+1\) parallel edges connecting them: one common return edge \(e_0\) and \(k\) resolved edges
\(e_1,\ldots,e_k\).
Hence
\(|E(\widehat\Theta_k)|=k+1, \qquad |V(\widehat\Theta_k)|=2\),
and
\(\beta_1(\widehat\Theta_k)=k\).

For \(k=1\), the Hausdorff reflection is an ordinary circle. For \(k=2\),
it is the standard three-edge theta graph.

Equip the resolving circles with smooth positive one-dimensional metrics
that agree under the identification of the common arcs
\(C_1,\ldots,C_k\). Let \(\ell_0>0\) be the resulting common length of the identified return arc and let
\(\ell_j>0, \qquad j=1,\ldots,k\),
be the length of the complementary resolved arc \(I_j\) on the \(j\)-th
circle. These lengths therefore descend from the source metrics on
\(\widehat M_{\Theta,k}\).

Assign the \(j\)-th resolving circle a positive density
\(a_j>0\),
and set
\(A:=\sum_{j=1}^{k}a_j\).
Pushforward gives the common edge weight \(w_0=A\) and the resolved-edge weights
\(w_j=a_j, \qquad 1\leq j\leq k\).
\(w_0=\sum_{j=1}^{k}w_j\).

The scalar Hilbert space is
\[
  \mathcal H_{\widehat\Theta}
  =
  L^2(0,\ell_0;A\,dx)
  \oplus
  \bigoplus_{j=1}^{k}
  L^2(0,\ell_j;a_j\,dx).
\]
By the scalar closure theorem,
\[
  \overline{\mathcal A_{\widehat M_{\Theta,k}}}^{\,H^1}
  =
  H^1_{\mathrm{cont}}(\widehat\Theta_k).
\]

The resolution-induced kinetic form is
\[
  \mathfrak q_{\widehat\Theta}[u]
  =
  \sum_{e=0}^{k}
  w_e
  \int_0^{\ell_e}|u_e'(x)|^2\,dx
\]
on
\(H^1_{\mathrm{cont}}(\widehat\Theta_k)\).

For the second-order calculation, orient every edge coordinate from
\(v_L\) to \(v_R\), and write
\(U:=u(v_L), \qquad V:=u(v_R)\).

\begin{theorem}[Compact split-and-rejoin Hamiltonian]
\label{thm:compact-theta-Hamiltonian}
The resolution determines the weighted Kirchhoff Laplacian
\[
  H_{\widehat\Theta}
\]
acting as
\[
  (H_{\widehat\Theta}u)_e=-u_e''
\]
on each edge, with domain
\[
  \mathcal D(H_{\widehat\Theta})
  =
  \left\{
    u\in H^2_\oplus(\widehat\Theta_k):
    \begin{array}{l}
      u_e(0)=U
      \text{ for every }e,\\[1mm]
      u_e(\ell_e)=V
      \text{ for every }e,\\[1mm]
      \displaystyle
      \sum_{e=0}^{k}w_eu_e'(0)=0,\\[3mm]
      \displaystyle
      \sum_{e=0}^{k}w_eu_e'(\ell_e)=0
    \end{array}
  \right\}.
\]
It is nonnegative, self-adjoint, and has compact resolvent.
\end{theorem}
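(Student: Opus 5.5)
The plan is to obtain this as a direct specialisation of Theorem~\ref{thm:weighted-kirchhoff}, followed by a short check of the spectral properties. The form \(\mathfrak q_{\widehat\Theta}\) on \(H^1_{\mathrm{cont}}(\widehat\Theta_k)\) is exactly the weighted form \(\mathfrak q_{\mathbf w}\) of Section~\ref{subsec:weighted-form}, with weight vector \(\mathbf w=(w_0,\ldots,w_k)\). By Proposition~\ref{prop:weighted-form-closed} it is densely defined, nonnegative, symmetric and closed in \(\mathcal H_{\widehat\Theta}\). Hence it has an associated nonnegative self-adjoint operator \(H_{\widehat\Theta}\), and this settles nonnegativity and self-adjointness.

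For the domain, I will apply Theorem~\ref{thm:weighted-kirchhoff}. It gives the edgewise action \(-u_e''\), membership in \(H^2_\oplus\), continuity at \(v_L\) and \(v_R\), and the weighted flux laws \(\sum_{\epsilon\in\mathcal E(v)}w_{e(\epsilon)}\partial_{\nu_\epsilon}u(v)=0\). The work here is only sign bookkeeping. Every edge is oriented from \(v_L\) to \(v_R\), so at \(v_L\) the outward-from-vertex derivative along each edge is \(u_e'(0)\). At \(v_R\) it is \(-u_e'(\ell_e)\). This gives the two displayed sums, with the overall sign at \(v_R\) dropped since the sum equals zero. Continuity reads \(u_e(0)=U\) and \(u_e(\ell_e)=V\). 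Because \(\widehat\Theta_k\) has no loops, each edge contributes exactly one end at each vertex, so the edge-end labels cause no complications.

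For compact resolvent, all \(k+1\) edges are compact intervals of finite length. The form domain embeds continuously into \(H^1_\oplus(\widehat\Theta_k)=\bigoplus_e H^1(0,\ell_e)\), with norm equivalence because the weights are positive and finite in number. By Rellich, \(H^1(0,\ell_e)\hookrightarrow L^2(0,\ell_e)\) is compact on each edge. The finite direct sum is then compact into \(\mathcal H_{\widehat\Theta}\), again using equivalence of weighted and unweighted \(L^2\)-norms. A closed nonnegative form whose form domain embeds compactly into the Hilbert space has an operator with compact resolvent.

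I expect no step to be hard. The only point needing care is the orientation sign convention at \(v_R\) when translating the outward normal derivatives of Theorem~\ref{thm:weighted-kirchhoff}. I would verify it by integrating by parts on a single edge against a test function that is continuous at both vertices.
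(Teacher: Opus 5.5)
Your proposal is correct and follows essentially the same route as the paper. The domain and edgewise action come from specialising Theorem~\ref{thm:weighted-kirchhoff} to the stated orientations and weights, and compact resolvent follows from the compact embedding of the closed form domain, since the graph has finitely many compact edges; your sign check at \(v_R\) matches the paper's remark that the last Kirchhoff equation is the outward-normal condition multiplied by \(-1\).
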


\begin{proof}
The domain and differential expression follow from
Theorem~\ref{thm:weighted-kirchhoff} with the stated orientations and
weights. The closed kinetic form is nonnegative, and the embedding of its
form domain into \(\mathcal H_{\widehat\Theta}\) is compact because the graph
has finitely many compact edges. The associated self-adjoint operator
therefore has compact resolvent.
\end{proof}

The final Kirchhoff equation is written after multiplication by \(-1\) from
the outward-normal convention at \(v_R\).

\subsection{Spectral matching at arbitrary energy}
\label{subsec:compact-theta-DN}

Let
\(E=p^2, \qquad p>0\).
Write
\(c_e:=\cos(p\ell_e), \qquad s_e:=\sin(p\ell_e)\).

Instead of dividing initially by \(s_e\), represent the solution on every
edge as
\(u_e(x) = U\cos(px)+B_e\sin(px)\).
The value at \(v_R\) gives
\(V=Uc_e+B_es_e\).

The two Kirchhoff conditions become \(\sum_{e=0}^{k}w_eB_e=0\) and
\(\sum_{e=0}^{k}w_ec_eB_e = U\sum_{e=0}^{k}w_es_e\).

\begin{proposition}[All-energy spectral matching system]
\label{prop:compact-theta-all-energy}
For every \(p>0\), including values satisfying
\[
  \sin(p\ell_e)=0
\]
on one or more edges, \(p^2\) is an eigenvalue of
\(H_{\widehat\Theta}\) if and only if the finite homogeneous system
\[
  V=Uc_e+B_es_e,
  \qquad
  e=0,\ldots,k,
\]
\[
  \sum_{e=0}^{k}w_eB_e=0,
\]
\[
  \sum_{e=0}^{k}w_ec_eB_e
  =
  U\sum_{e=0}^{k}w_es_e
\]
has a nonzero solution
\[
  (U,V,B_0,\ldots,B_k).
\]
\end{proposition}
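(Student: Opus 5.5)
The plan is to translate the eigenvalue problem for \(H_{\widehat\Theta}\) into the finite matching system, using the explicit domain from Theorem~\ref{thm:compact-theta-Hamiltonian}. Let \(p>0\). A function \(u\) lies in \(\ker(H_{\widehat\Theta}-p^2)\) exactly when \(u\in\mathcal D(H_{\widehat\Theta})\) and \(-u_e''=p^2u_e\) on every edge. Each edge is a compact interval, so the ODE solutions are smooth and form a two-dimensional space. Every solution on edge \(e\) can be written uniquely as \(u_e(x)=U_e\cos(px)+B_e\sin(px)\), where \(U_e=u_e(0)\). Continuity at \(v_L\) forces \(U_e=U\) for all \(e\), which gives the stated representation.

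Next I impose the remaining domain conditions one at a time. Evaluating at \(x=\ell_e\) gives \(u_e(\ell_e)=Uc_e+B_es_e\), so continuity at \(v_R\) is precisely \(V=Uc_e+B_es_e\) for every \(e\). Since \(u_e'(0)=pB_e\) and \(p\neq0\), the Kirchhoff condition at \(v_L\) is equivalent to \(\sum_ew_eB_e=0\). The derivative at the right end is \(u_e'(\ell_e)=p(-Us_e+B_ec_e)\), so the Kirchhoff condition at \(v_R\) becomes \(\sum_ew_ec_eB_e=U\sum_ew_es_e\). The derivation never divides by \(s_e\), so it remains valid at the pole energies.

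The map \(u\mapsto(U,V,B_0,\ldots,B_k)\) is then linear and bijective between the eigenspace and the solution space of the system. For injectivity, \(U\) and the \(B_e\) determine \(u\) on every edge. For surjectivity, any solution of the system defines edgewise smooth functions that lie in \(H^2_\oplus\), satisfy all the domain conditions, and solve the ODE. Also, \(u\neq0\) if and only if \((U,B)\neq0\). If \(U=B=0\), the first relation gives \(V=0\), so nonzero tuples correspond exactly to nonzero eigenfunctions. This proves the equivalence, and it also shows that the multiplicity equals the nullity of the system.

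The only step that needs care is the sign convention at \(v_R\). The domain in Theorem~\ref{thm:compact-theta-Hamiltonian} states this condition with the outward normal already absorbed, as \(\sum_ew_eu_e'(\ell_e)=0\), so I will check that this is exactly the condition I substitute. Apart from that, the argument is routine.
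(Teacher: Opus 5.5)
Your proposal is correct and follows essentially the same route as the paper: the same \(U\cos(px)+B_e\sin(px)\) edge representation, continuity at \(v_R\) as the first family, and the two weighted Kirchhoff conditions via \(u_e'(0)=pB_e\) and \(u_e'(\ell_e)=p(-Us_e+B_ec_e)\), with no division by \(s_e\). Your explicit check that nonzero tuples correspond exactly to nonzero eigenfunctions, so that the multiplicity equals the nullity of the system, spells out a point the paper leaves implicit.
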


\begin{proof}
The edge representation solves \(-u_e''=p^2u_e\) and automatically gives the common value \(U\) at \(v_L\). The first family
of equations is continuity at \(v_R\). Since \(u_e'(0)=pB_e\) and
\(u_e'(\ell_e) = p(-Us_e+B_ec_e)\),
the remaining two equations are precisely the weighted Kirchhoff
conditions.
\end{proof}

\subsection{Regular secular equations}
\label{subsec:compact-theta-half-angle}

Suppose now that
\(s_e\neq0 \qquad \text{for every }e\).
Then
\(B_e = \frac{V-Uc_e}{s_e}\).

Define
\(K_c(p) := \sum_{e=0}^{k} w_e\cot(p\ell_e)\),
and
\(J_c(p) := \sum_{e=0}^{k} w_e\csc(p\ell_e)\).

Eliminating the \(B_e\) from
Proposition~\ref{prop:compact-theta-all-energy} gives
\(-K_c(p)U+J_c(p)V=0\),
\(J_c(p)U-K_c(p)V=0\).

\begin{theorem}[Regular secular equation]
\label{thm:compact-theta-secular}
At energies satisfying
\[
  \sin(p\ell_e)\neq0
  \qquad
  \text{for every }e,
\]
one has
\[
  p^2\in\sigma(H_{\widehat\Theta})
\]
if and only if
\[
  K_c(p)^2=J_c(p)^2.
\]
More precisely:
\begin{enumerate}[label=\textup{(\roman*)}]
  \item if
        \[
          K_c(p)=J_c(p)\neq0,
        \]
        then the endpoint kernel is one-dimensional and is spanned by
        \[
          (U,V)=(1,1);
        \]

  \item if
        \[
          K_c(p)=-J_c(p)\neq0,
        \]
        then the endpoint kernel is one-dimensional and is spanned by
        \[
          (U,V)=(1,-1);
        \]

  \item if
        \[
          K_c(p)=J_c(p)=0,
        \]
        then the endpoint matrix vanishes and its kernel is
        \[
          \mathbb C^2.
        \]
        In this simultaneous-zero case the regular eigenspace is
        two-dimensional, with one vertex-symmetric and one
        vertex-antisymmetric direction.
\end{enumerate}
\end{theorem}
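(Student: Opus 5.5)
The plan is to start from the all-energy matching system of Proposition~\ref{prop:compact-theta-all-energy}, which characterises eigenvalues, and specialise it to regular energies. When every \(s_e\neq0\), the continuity relations let us solve uniquely for \(B_e=(V-Uc_e)/s_e\). Consequently the linear map \((U,V)\mapsto(U,V,B_0,\ldots,B_k)\) is an isomorphism from the space of endpoint pairs satisfying the two reduced Kirchhoff equations onto the solution space of the full homogeneous system. In particular, the eigenspace at \(p^2\) is isomorphic to the kernel of the \(2\times2\) endpoint matrix, because the edge functions \(u_e(x)=U\cos(px)+B_e\sin(px)\) depend linearly and injectively on \((U,V)\).

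The first step is to substitute \(B_e\) into the two Kirchhoff equations. From \(\sum_e w_eB_e=0\) we get \(-K_cU+J_cV=0\). From \(\sum_e w_ec_eB_e=U\sum_e w_es_e\) we get \(\sum_e w_e(c_eV-c_e^2U)/s_e-U\sum_e w_es_e=0\). Using \(c_e^2/s_e+s_e=1/s_e\), this becomes \(K_cV-J_cU=0\), that is, \(J_cU-K_cV=0\). The endpoint matrix is therefore
\[
M(p)=\begin{pmatrix}-K_c&J_c\\ J_c&-K_c\end{pmatrix},
\]
with \(\det M=K_c^2-J_c^2\). Hence \(p^2\) is an eigenvalue if and only if \(K_c^2=J_c^2\).

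For the refinements, note that \(M\) is symmetric Toeplitz, so it has eigenvectors \((1,1)\) and \((1,-1)\) with eigenvalues \(J_c-K_c\) and \(-(J_c+K_c)\) respectively.
\begin{enumerate}[label=\textup{(\roman*)}]
\item If \(K_c=J_c\neq0\), then \(M\) is a nonzero rank-one matrix whose kernel is spanned by \((1,1)\).
\item If \(K_c=-J_c\neq0\), the kernel is spanned by \((1,-1)\).
\item If \(K_c=J_c=0\), then \(M=0\) and the kernel is \(\mathbb C^2\). Through the isomorphism above, the eigenspace is then two-dimensional and spanned by the vertex-symmetric and vertex-antisymmetric solutions.
\end{enumerate}

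The main care point is the trigonometric simplification in the second Kirchhoff equation, together with the observation that at regular energies no compactly supported dark states exist, since \(U=V=0\) forces every \(B_e=0\). This ensures the endpoint kernel captures the whole eigenspace, with no extra internal contributions to track.
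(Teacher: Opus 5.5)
Your proposal is correct and follows essentially the paper's route: eliminating the \(B_e\) from the all-energy matching system (equivalently, using the regular outgoing derivatives \(p(J_cV-K_cU)\) and \(p(J_cU-K_cV)\)) gives the same symmetric endpoint matrix \(\begin{pmatrix}-K_c&J_c\\ J_c&-K_c\end{pmatrix}\), and your determinant and kernel computations are right. Your remark that \(U=V=0\) forces every \(B_e=0\) at regular energies makes explicit what the paper's final sentence asserts, namely that the edge functions are determined uniquely by \((U,V)\), so the eigenspace is isomorphic to the endpoint kernel.
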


\begin{proof}
At regular energy the unique solution on each compact path with endpoint values \(U,V\) has outgoing derivatives \(p(J_cV-K_cU)\) at the left vertex and \(p(J_cU-K_cV)\) at the right. Kirchhoff matching is therefore
\[
 \begin{pmatrix}-K_c&J_c\\J_c&-K_c\end{pmatrix}
 \binom UV=0.
\]
A nonzero endpoint pair exists exactly when \(K_c^2=J_c^2\). If \(K_c=J_c\ne0\) the kernel is \(\operatorname{span}(1,1)\); if \(K_c=-J_c\ne0\) it is \(\operatorname{span}(1,-1)\); and if \(K_c=J_c=0\), the matrix vanishes and the endpoint kernel is \(\mathbb C^2\). Regularity of every path then determines the corresponding edge functions uniquely from \((U,V)\).
\end{proof}

Using
\(\cot x-\csc x = -\tan\frac{x}{2}\),
and
\(\cot x+\csc x = \cot\frac{x}{2}\),
the regular secular equations become \(\sum_{e=0}^{k} w_e \tan \left( \frac{p\ell_e}{2} \right) = 0\) in the vertex-symmetric sector, and \(\sum_{e=0}^{k} w_e \cot \left( \frac{p\ell_e}{2} \right) = 0\) in the vertex-antisymmetric sector.

These are the regular nonzero-vertex secular equations. If \(K_c=J_c=0\), both sectors occur; pole energies are governed by Proposition~\ref{prop:compact-theta-all-energy}.

\subsection{Zero-vertex dark states and commensurability}
\label{subsec:compact-theta-dark}
\label{subsec:compact-theta-perturbation}

The all-energy system contains a distinguished class with
\(U=V=0\).
Such an eigenfunction can be nonzero only on edges satisfying
\(p\ell_e\in\pi\mathbb Z\).

Define
\[
  \mathcal R_c(p)
  :=
  \left\{
    e\in\{0,\ldots,k\}:
    p\ell_e\in\pi\mathbb Z
  \right\},
\]
and, for \(e\in\mathcal R_c(p)\),
\(\sigma_e := \cos(p\ell_e) \in\{\pm1\}\).

On a resonant edge,
\(u_e(x)=C_e\sin(px)\).
The two Kirchhoff conditions reduce to
\(\sum_{e\in\mathcal R_c(p)} w_eC_e=0\),
and
\(\sum_{e\in\mathcal R_c(p)} w_e\sigma_eC_e=0\).

\begin{theorem}[Compact dark-state multiplicity]
\label{thm:compact-theta-dark}
The zero-vertex eigenspace at energy \(p^2\) has dimension
\[
  |\mathcal R_c(p)|
  -
  \operatorname{rank}
  \begin{pmatrix}
    (w_e)_{e\in\mathcal R_c(p)}
    \\[1mm]
    (w_e\sigma_e)_{e\in\mathcal R_c(p)}
  \end{pmatrix}.
\]

If \(\mathcal R_c(p)\neq\varnothing\) and all resonant edges have the same
parity, then
\[
  \dim\mathcal E_{\mathrm{dark}}(p)
  =
  |\mathcal R_c(p)|-1.
\]
If both parities occur, then
\[
  \dim\mathcal E_{\mathrm{dark}}(p)
  =
  |\mathcal R_c(p)|-2.
\]
\end{theorem}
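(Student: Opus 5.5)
The plan mirrors the proof of Theorem~\ref{thm:split-rejoin-dark-state-multiplicity}, adapted to the compact graph with the common edge \(e_0\) included. First I will identify the zero-vertex eigenspace with the solution set of a finite linear system. By Proposition~\ref{prop:compact-theta-all-energy}, an eigenfunction with \(U=V=0\) is encoded by coefficients \((B_0,\ldots,B_k)\) satisfying \(0=B_es_e\) for every edge, together with the two Kirchhoff equations. On a nonresonant edge \(s_e\neq0\) forces \(B_e=0\). On a resonant edge \(e\in\mathcal R_c(p)\) the coefficient \(C_e:=B_e\) is unconstrained by continuity, and \(c_e=\sigma_e\in\{\pm1\}\).

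Substituting these into the Kirchhoff equations, the right-hand side \(U\sum w_es_e\) vanishes because \(U=0\). The conditions therefore become exactly \(\sum_{e\in\mathcal R_c(p)}w_eC_e=0\) and \(\sum_{e\in\mathcal R_c(p)}w_e\sigma_eC_e=0\). Conversely, every vector \((C_e)_{e\in\mathcal R_c(p)}\) in the kernel of this two-row matrix defines, via \(u_e=C_e\sin(px)\) on resonant edges and zero elsewhere, a function in \(\mathcal D(H_{\widehat\Theta})\) with eigenvalue \(p^2\). Such a function has zero vertex values since \(\sin(p\ell_e)=0\) at \(x=\ell_e\). The correspondence \((C_e)\mapsto u\) is linear and injective, because the functions \(\sin(px)\) on distinct edges are linearly independent. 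Rank--nullity then gives the displayed dimension formula.

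Next comes the rank computation. If \(\mathcal R_c(p)\neq\varnothing\), the first row has strictly positive entries \(w_e>0\), so the rank is at least one. The second row equals \(\sigma\) times the first exactly when all \(\sigma_e\) equal a common sign \(\sigma\), and then the rank is one. If both signs occur, pick \(e,e'\) with \(\sigma_e=1\) and \(\sigma_{e'}=-1\). The \(2\times2\) minor on these columns is \(w_ew_{e'}(\sigma_{e'}-\sigma_e)=-2w_ew_{e'}\neq0\), so the rank is two. This yields \(|\mathcal R_c(p)|-1\) and \(|\mathcal R_c(p)|-2\) respectively.

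No step is a real obstacle. The only point needing care is to confirm that the zero-vertex eigenspace is precisely the \(U=V=0\) subspace of the all-energy kernel, i.e.\ that nothing is lost or double-counted. This follows because \(U,V\) are determined by \(u\) and the edge representation is unique.
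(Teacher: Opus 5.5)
Your proposal is correct and follows the paper's argument: the zero-vertex eigenfunctions are identified with the kernel of the two-row weighted matrix, rank--nullity gives the dimension formula, and the rank is one or two according to whether the parities $\sigma_e$ all agree. You give more detail than the paper, namely the explicit converse via $u_e=C_e\sin(px)$ and the nonvanishing $2\times2$ minor $-2w_ew_{e'}$, but the route is the same.
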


\begin{proof}
The zero-vertex coefficient vector belongs to the kernel of the displayed
matrix. Rank--nullity gives the first formula. If the resonant set is
nonempty and all \(\sigma_e\) agree, the two nonzero rows are proportional;
if both signs occur, positivity of the weights makes them independent.
\end{proof}

The theorem gives only the zero-vertex part of a pole eigenspace; Proposition~\ref{prop:compact-theta-all-energy} also allows solutions with nonzero endpoint values.

Simultaneous Dirichlet resonance is therefore not synonymous with a purely dark eigenvalue.

If the lengths are pairwise rationally incommensurate, no two distinct
edges can resonate simultaneously at positive momentum. A single resonant
edge cannot satisfy the first dark Kirchhoff condition with nonzero
coefficient. Hence the zero-vertex dark sector is absent in that case.

\subsection{Equal-length spectrum}
\label{subsec:compact-theta-equal}

Assume
\(\ell_0 = \ell_1 = \cdots = \ell_k = \ell\).

The spectrum can then be determined directly from the all-energy matching
system.

At
\(p_n=\frac{\pi n}{\ell}, \qquad n\geq1\),
one has \(s_e=0, \qquad c_e=(-1)^n\) for every edge. Continuity gives
\(V=(-1)^nU\).
The two Kirchhoff equations both reduce to the single constraint
\(\sum_{e=0}^{k}w_eB_e=0\).

\(U\) contributes one coherent direction, while the coefficient vector \((B_0,\ldots,B_k)\) has a \(k\)-dimensional weighted-orthogonal subspace.

At \(p=0\), the only eigenfunctions are constants.

\begin{theorem}[Equal-length spectrum]
\label{thm:compact-theta-equal-spectrum}
If all \(k+1\) paths have common length \(\ell\), then
\[
  E_0=0
\]
is simple.

For every \(n\geq1\),
\[
  E_n
  =
  \left(
    \frac{\pi n}{\ell}
  \right)^2
\]
has multiplicity
\[
  k+1.
\]
The eigenspace decomposes as
\[
  \mathcal E_n
  =
  \mathcal E_n^{\mathrm{coh}}
  \oplus
  \mathcal E_n^{\mathrm{dark}},
\]
where
\[
  \dim\mathcal E_n^{\mathrm{coh}}=1,
  \qquad
  \dim\mathcal E_n^{\mathrm{dark}}=k.
\]
\end{theorem}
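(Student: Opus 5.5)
The plan is to work entirely with the all-energy matching system of Proposition~\ref{prop:compact-theta-all-energy}, specialised to $\ell_e=\ell$ for every edge. This avoids the regular secular equation, which is singular at exactly the energies of interest. The proof splits into three energy ranges: $p=0$, the pole energies $p_n=\pi n/\ell$, and the regular energies $p\notin(\pi/\ell)\mathbb Z$.

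\textbf{Zero energy.} Simplicity of $E_0=0$ comes from the form. If $\mathfrak q_{\widehat\Theta}[u]=0$, then $u$ is constant on every edge. Vertex continuity and connectedness of $\widehat\Theta_k$ then make $u$ globally constant. Constants lie in $L^2$ because all edges are compact, so the kernel is exactly the constants and is one-dimensional.

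\textbf{Pole energies.} At $p_n$ one has $s_e=0$ and $c_e=(-1)^n$ for every $e$. The continuity equations $V=Uc_e+B_es_e$ collapse to the single relation $V=(-1)^nU$, and they impose nothing on the $B_e$. The right-hand Kirchhoff equation reads $(-1)^n\sum_e w_eB_e=U\sum_e w_es_e=0$, so it coincides with the left one. The solution space is therefore parametrised by $U\in\mathbb C$ and $(B_0,\ldots,B_k)$ in the hyperplane $\sum_e w_eB_e=0$, which has dimension $1+k$.

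To turn this into multiplicity $k+1$, I will check that the map from solution tuples to eigenfunctions $u_e(x)=U\cos(p_nx)+B_e\sin(p_nx)$ is injective. On each edge the functions $\cos$ and $\sin$ are linearly independent on $(0,\ell)$, so $u=0$ forces $U=0$ and every $B_e=0$. The splitting then follows:
\begin{itemize}
\item $\mathcal E_n^{\mathrm{coh}}$ is the line $U=1$, $B_e=0$. It is the function $\cos(p_nx)$ on every edge, which is nonzero at the vertices.
\item $\mathcal E_n^{\mathrm{dark}}$ is $U=0$ with $B$ weighted-orthogonal to the constants. This agrees with Theorem~\ref{thm:compact-theta-dark}, since all $k+1$ edges resonate with equal parity $\sigma_e=(-1)^n$, giving dimension $(k+1)-1=k$.
\end{itemize}
The sum is direct because the coherent states have $U\neq0$ while the dark states have $U=0$.

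\textbf{Regular energies.} For $p\notin(\pi/\ell)\mathbb Z$, $p>0$, I will show there is no spectrum. Here $K_c=W\cot(p\ell)$ and $J_c=W\csc(p\ell)$ with $W=\sum_e w_e>0$, so
\[
K_c^2-J_c^2=W^2(\cot^2-\csc^2)(p\ell)=-W^2\neq0 .
\]
Theorem~\ref{thm:compact-theta-secular} then excludes $p^2$ from the spectrum. Negative energies are excluded by nonnegativity of $H_{\widehat\Theta}$, so the listed values exhaust the spectrum.

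I expect the only delicate points to be bookkeeping: confirming that the two Kirchhoff equations become a single condition at the poles, and that the coefficient parametrisation is injective. The regular-energy exclusion is where the whole argument could fail if a sign were wrong, so I would check the identity $\cot^2-\csc^2=-1$ first.
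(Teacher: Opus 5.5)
Your proposal is correct and follows the paper's own route. You use the all-energy matching system at the pole energies $p_n=\pi n/\ell$ to get one free coherent value plus the $k$-dimensional weighted hyperplane $\sum_e w_eB_e=0$, and you exclude regular energies via $K_c^2-J_c^2=-(\sum_e w_e)^2\neq0$. Your form-based argument for the simplicity of $E_0$ and your explicit check that the coefficient parametrisation is injective spell out steps the paper leaves implicit.
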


\begin{proof}
At \(p=0\), continuity and the Kirchhoff conditions leave only the constant
mode. For \(p_n=\pi n/\ell\), the all-energy system above gives one free
coherent endpoint value and the \(k\)-dimensional kernel of the nonzero
functional \((B_e)\mapsto\sum_e w_eB_e\). If
\(\sin(p\ell)\neq0\), then
\(K_c^2-J_c^2=-(\sum_e w_e)^2\neq0\), so the regular secular equation has no
solutions. The listed pole energies therefore exhaust the positive spectrum
and have the stated multiplicities.
\end{proof}

The coherent direction may be represented by \(u_e(x) = \cos \left( \frac{\pi n x}{\ell} \right)\) on every edge. The dark directions have zero vertex values and coefficients
satisfying
\(\sum_{e=0}^{k}w_eC_e=0\).

The multiplicity \(k+1\) is an exact equal-length degeneracy and generically splits under length perturbation.

\subsection{Weyl asymptotics and relation to the open interferometer}
\label{subsec:compact-theta-Weyl}
\label{subsec:compact-theta-cutting}
\label{subsec:compact-theta-open-comparison}

Let
\(L_{\mathrm{tot}} := \sum_{e=0}^{k}\ell_e\).
Since \(H_{\widehat\Theta}\) is a compact quantum-graph Laplacian,
\[
  N(\Lambda)
  =
  \frac{L_{\mathrm{tot}}}{\pi}
  \sqrt{\Lambda}
  +
  O(1)
  \qquad
  (\Lambda\to\infty).
\]
The leading coefficient therefore records the total metric length of the
Hausdorff graph. The resolution-induced weights affect the matching and the
detailed spectrum but not this leading one-dimensional Weyl term.

Cutting the common return edge and replacing the cut directions by leads recovers the open parallel-path core of Section~\ref{sec:split-rejoin}; compact phase quantization is then replaced by scattering.

This comparison is structural rather than an operator identification, because compact closure and asymptotic opening impose different global problems.

\subsection{Balanced first-order orientation}
\label{subsec:compact-theta-first-order-orientation}
\label{subsec:compact-theta-adjoint}

For the first-order operator, orient the resolved edges \(e_1,\ldots,e_k\) from \(v_L\) to \(v_R\), and orient the common return edge \(e_0\) from \(v_R\) to \(v_L\). This cyclic orientation is inherited from the
source circles.

At \(v_L\), the incoming weight is \(A\) and the total outgoing weight is
\(\sum_{j=1}^{k}a_j=A\).
At \(v_R\), the incoming resolved weight is \(A\) and the outgoing return
weight is again \(A\).

\begin{proposition}[Balanced orientation]
\label{prop:compact-theta-balanced-orientation}
With this orientation,
\[
  \mathsf D_{\widehat\Theta}
  =
  -i\frac{d}{dx}
\]
on
\[
  H^1_{\mathrm{cont}}(\widehat\Theta_k)
\]
is symmetric.
\end{proposition}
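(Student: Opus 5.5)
The statement is a direct application of Proposition~\ref{prop:first-order-boundary-form-general}. That result shows the boundary form of \(\mathsf D_{\widehat\Theta}\) is a sum over vertices of the weighted in--out imbalance \(\sum_{e\in E_{\mathrm{in}}(w)}w_e-\sum_{e\in E_{\mathrm{out}}(w)}w_e\), multiplied by \(\overline{u(w)}\,v(w)\). Symmetry holds exactly when this imbalance vanishes at each vertex. The plan is therefore to compute the incoming and outgoing weights at \(v_L\) and at \(v_R\) under the stated cyclic orientation, and check that they are equal.

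First, I record the orientation data. The resolved edges \(e_1,\ldots,e_k\) run from \(v_L\) to \(v_R\). The return edge \(e_0\) runs from \(v_R\) to \(v_L\). Hence \(E_{\mathrm{in}}(v_L)=\{e_0\}\) and \(E_{\mathrm{out}}(v_L)=\{e_1,\ldots,e_k\}\). Likewise \(E_{\mathrm{in}}(v_R)=\{e_1,\ldots,e_k\}\) and \(E_{\mathrm{out}}(v_R)=\{e_0\}\).

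Second, I use the pushforward weights \(w_0=A\) and \(w_j=a_j\), together with the resolution-induced identity \(w_0=\sum_{j=1}^k w_j\), which is simply additivity of the resolving-sheet measure over the identified arc. This identity gives \(A-\sum_j a_j=0\) at \(v_L\) and \(\sum_j a_j-A=0\) at \(v_R\).

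Third, I apply the boundary-form identity. Every edge is compact, so there are no terms at infinity. The formula then gives \(\langle \mathsf D u,v\rangle-\langle u,\mathsf D v\rangle=0\) for all \(u,v\in H^1_{\mathrm{cont}}(\widehat\Theta_k)\). Density and closedness of \(\mathsf D_{\widehat\Theta}\) were already established in Proposition~\ref{prop:first-order-closed-dense}, so \(\mathsf D_{\widehat\Theta}\) is symmetric.

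The only point needing care is orientation bookkeeping. Proposition~\ref{prop:first-order-boundary-form-general} uses each edge's own parametrisation as \((0,\ell_e)\) in its chosen direction. For \(e_0\), this means reparametrising from \(v_R\) to \(v_L\), which differs from the \(v_L\to v_R\) coordinate used in the second-order calculation. This reparametrisation changes neither the Hilbert space nor the form domain: continuity at the vertices is unaffected, and the operator \(-i\,d/dx\) on \(e_0\) changes only by the sign convention that defines the oriented operator. No genuine obstacle is expected.
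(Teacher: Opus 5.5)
Your proposal is correct and matches the paper's argument. The paper states this proposition without a separate proof: it relies on the preceding observation that the cyclic orientation gives incoming and outgoing weight $A=\sum_{j}a_j$ at both $v_L$ and $v_R$, so every vertex coefficient in Proposition~\ref{prop:first-order-boundary-form-general} vanishes; your remark on parametrising $e_0$ in its own orientation agrees with the paper's coordinate $x_0$ running from $v_R$ to $v_L$.
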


Use coordinates
\(x_j\in(0,\ell_j), \qquad j=1,\ldots,k\),
increasing from \(v_L\) to \(v_R\), and \(x_0\in(0,\ell_0)\) increasing from \(v_R\) to \(v_L\).

The adjoint acts edgewise as the same differential expression and has the
boundary relations \(A\psi_0(\ell_0) = \sum_{j=1}^{k}a_j\psi_j(0)\) at \(v_L\), and \(\sum_{j=1}^{k}a_j\psi_j(\ell_j) = A\psi_0(0)\) at \(v_R\).

\subsection{First-order deficiency and self-adjoint completion}
\label{subsec:compact-theta-deficiency}
\label{subsec:compact-theta-SA}

Consider
\(\psi\in \ker( \mathsf D_{\widehat\Theta}^*-i )\).
On every oriented edge,
\(\psi'=-\psi\).
Write
\(\psi_0(x)=C_0e^{-x}, \qquad \psi_j(x)=C_je^{-x}\).
The two adjoint relations become
\(Ae^{-\ell_0}C_0 = \sum_{j=1}^{k}a_jC_j\),
and
\(\sum_{j=1}^{k} a_je^{-\ell_j}C_j = AC_0\).

These are two independent relations among the \(k+1\) coefficients. Indeed,
if the second were proportional to the first, comparison of the \(C_0\)
coefficients would force the proportionality factor to be \(-e^{\ell_0}\) after both relations are written on one side, while comparison with any
\(C_j\) would require
\(e^{-\ell_j}=e^{\ell_0}\),
which is impossible for positive lengths.

Hence
\(n_+=k-1\).

For
\(\ker( \mathsf D_{\widehat\Theta}^*+i )\),
one has
\(\psi'=\psi\).
The corresponding equations are
\(Ae^{\ell_0}C_0 = \sum_{j=1}^{k}a_jC_j\),
and
\(\sum_{j=1}^{k} a_je^{\ell_j}C_j = AC_0\).
The same argument shows that they are independent, so
\(n_-=k-1\).

\begin{theorem}[First-order compact deficiency]
\label{thm:compact-theta-deficiency}
For the compact split-and-rejoin resolution,
\[
  (n_+,n_-)
  =
  (k-1,k-1).
\]
Since
\[
  \beta_1(\widehat\Theta_k)=k,
\]
this family satisfies
\[
  n_+=n_-=\beta_1(\widehat\Theta_k)-1.
\]
\end{theorem}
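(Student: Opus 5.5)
The plan is to compute the adjoint of \(\mathsf D_{\widehat\Theta}\), reduce each deficiency equation to a finite linear system on edge coefficients, and show that this system has full rank.

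\textbf{Adjoint domain.} First I will record, via Proposition~\ref{prop:first-order-boundary-form-general}, the form of the adjoint. Integrating by parts against an arbitrary element of \(H^1_{\mathrm{cont}}(\widehat\Theta_k)\) gives the following description.

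\begin{itemize}
  \item \(\mathsf D_{\widehat\Theta}^*\) acts edgewise as \(-i\,d/dx\) on \(H^1_\oplus(\widehat\Theta_k)\).
  \item It is subject to exactly one weighted relation per vertex, namely the two displayed relations at \(v_L\) and \(v_R\).
  \item The reason is that the common vertex values \(\phi(v_L),\phi(v_R)\) of the test function can be chosen independently.
\end{itemize}

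Symmetry, Proposition~\ref{prop:compact-theta-balanced-orientation}, follows from the balance \(A=\sum_j a_j\) at both vertices.

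\textbf{Reduction to a linear system.} Since every edge is compact, every solution of \(\psi'=\mp\psi\) is square integrable. So \(\ker(\mathsf D_{\widehat\Theta}^*\mp i)\) is the solution space in \((C_0,\dots,C_k)\in\mathbb C^{k+1}\) of two linear equations. Thus \(n_\pm=k+1-\operatorname{rank}\), and it suffices to show each \(2\times(k+1)\) coefficient matrix has rank two.

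\textbf{Rank computation at \(+i\).} The rows are
\[
  (Ae^{-\ell_0},-a_1,\dots,-a_k)
  \quad\text{and}\quad
  (-A,a_1e^{-\ell_1},\dots,a_ke^{-\ell_k}).
\]
Suppose the second row is \(\lambda\) times the first.
\begin{itemize}
  \item The \(C_0\) entries force \(\lambda=-e^{\ell_0}\).
  \item The \(C_j\) entry, for any \(j\ge1\), then forces \(e^{-\ell_j}=e^{\ell_0}\). This is impossible since \(\ell_0,\ell_j>0\).
\end{itemize}
Note also that \(k\ge1\) guarantees such a \(j\) exists. Both rows are nonzero because the weights are positive, so the rank is two and \(n_+=k-1\).

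\textbf{Rank computation at \(-i\).} The same argument with \(e^{\pm\ell}\) interchanged forces \(e^{\ell_j}=e^{-\ell_0}\), again impossible, so \(n_-=k-1\).

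\textbf{Betti number.} Finally, \(\beta_1(\widehat\Theta_k)=|E|-|V|+1=(k+1)-2+1=k\), which gives \(n_\pm=\beta_1-1\).

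\textbf{Main obstacle.} The one genuine point is the rank argument. One must check that the sign conventions of the orientation (return edge from \(v_R\) to \(v_L\)) place \(C_0\) with opposite exponential factors in the two relations relative to the \(C_j\). It is precisely this that rules out proportionality. A wrong orientation would make the rows proportional, as happens in the degenerate \(\ell\to0\) limit. I would double-check that endpoint bookkeeping by direct substitution of \(\psi_0(x)=C_0e^{-x}\) at \(x=0,\ell_0\).
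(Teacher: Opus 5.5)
Your proposal is correct and is essentially the paper's own proof. It uses the same two adjoint vertex relations $Ae^{\mp\ell_0}C_0=\sum_j a_jC_j$ and $\sum_j a_je^{\mp\ell_j}C_j=AC_0$, the same proportionality test forcing $\lambda=-e^{\pm\ell_0}$ and hence the impossible identity $e^{\mp\ell_j}=e^{\pm\ell_0}$, and the same count $\beta_1=(k+1)-2+1=k$. Your explicit check that both rows are nonzero is a point the paper leaves implicit; your closing remark about a ``wrong orientation'' is not needed and is only true in special cases (for example, with all edges oriented from $v_L$ to $v_R$ the rows are proportional only when all lengths coincide).
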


For \(k=1\),
\((n_+,n_-)=(0,0)\),
and the operator is the ordinary self-adjoint momentum operator on the
corresponding circle.

For \(k>1\), self-adjoint extensions exist and are parametrized by
\(U(k-1)\).

Both the open and compact generalized theta models have \((n_+,n_-)=(k-1,k-1)\), although compact closure increases the first Betti number by one. Hence no universal identity between deficiency and cycle rank holds; the ordinary circulation cycle already exists for \(k=1\) without deficiency.

The split-and-rejoin models complete the scalar recombination analysis; finite-rank unitary transport is treated in Part~\ref{part:unitary-gluing}.

\clearpage
\part{Unitary gluing and invariant-sector scattering}
\label{part:unitary-gluing}

For finite-rank gluing, the common fixed space is the transmitting sector. Part~\ref{part:unitary-gluing} identifies this space with compact-group invariants and studies the number of relative gluings required for invariant completion.

\section{Finite-rank unitary gluing on the line with two origins}
\label{sec:unitary-L2}

Retain the standard line \(\Ltwo\), its origins \(o_1,o_2\), and the
charts \(U_1,U_2\) from Section~\ref{sec:L2}. Their overlap is
\(U_1\cap U_2\cong\R\setminus\{0\} =(-\infty,0)\sqcup(0,\infty)\).
A rank-\(r\) Hermitian bundle in this part means a locally trivial complex
bundle on the identity-glued atlas with smooth \(U(r)\)-valued transitions.

Let \(E\to \Ltwo\) be such a bundle with transition function constant on each component of the overlap,
\[
g_{21}(x)=
\begin{cases}
A_-,&x<0,\\
A_+,&x>0,
\end{cases}
\qquad A_\pm\in U(r).
\]

\subsection{Normalization and smooth sections}

A constant unitary change of frame sends
\(A_\pm\longmapsto C_2A_\pm C_1^{-1}\).
Every pair is equivalent to
\[
g_{21}(x)=
\begin{cases}
I,&x<0,\\
W,&x>0,
\end{cases}
\qquad
W=A_-^{-1}A_+\in U(r).
\]

\begin{proposition}[Normalized classification]
\label{prop:unitary-L2-classification}
Two normalized bundles \(E_W\) and \(E_{W'}\) are unitarily isomorphic by a bundle map covering the identity on \(\Ltwo\) if and only if
\[
W'=CWC^{-1}
\]
for some \(C\in U(r)\).
\end{proposition}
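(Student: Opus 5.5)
The plan is to reduce a bundle isomorphism covering the identity to a pair of local gauge transformations on the two standard charts $U_1,U_2$, and then read off the constraint on each overlap component. In the normalized trivialisations, a unitary bundle map $E_W\to E_{W'}$ covering $\operatorname{id}_{\Ltwo}$ is given by smooth maps $h_i:U_i\to U(r)$, $i=1,2$, satisfying the cocycle intertwining relation
\[
h_2(x)\,g_{21}(x)=g'_{21}(x)\,h_1(x)\qquad (x\in U_1\cap U_2).
\]
Conversely, any such pair defines a unitary isomorphism, because the atlas consists of these two charts (together with regular charts contained in both). So the statement becomes a question about solving this functional equation.

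First I will show that, on each connected piece, the relation forces $h_1$ and $h_2$ to agree up to constant matrices. On $(-\infty,0)$ both transitions equal $I$, so $h_2=h_1$ there. On $(0,\infty)$ we get $h_2W=W'h_1$. The key point is that $h_1$ and $h_2$ are smooth on the whole charts $U_1\cong\R$ and $U_2\cong\R$, both of which contain an origin. Evaluating at the origins by continuity, set $C:=h_1(o_1)$ and $D:=h_2(o_2)$. Taking the limit $x\to0^-$ in $h_2=h_1$ gives $D=C$, since $h_i([x])\to h_i(o_i)$ along the regular coordinate as $x\to 0$. Taking $x\to0^+$ in $h_2W=W'h_1$ gives $DW=W'C$. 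Combining the two, $CW=W'C$, that is, $W'=CWC^{-1}$ with $C\in U(r)$.

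For the converse, given $W'=CWC^{-1}$, I take constant gauges $h_1\equiv h_2\equiv C$. These satisfy $C\cdot I=I\cdot C$ on the negative overlap and $C W=W' C$ on the positive overlap, so they define a unitary bundle isomorphism covering the identity.

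The main point requiring care, though it is not deep, is the necessity direction. One must not assume the isomorphism is given by constant matrices. The argument works precisely because continuity at the two origins ties the one-sided limits of $h_1$ and $h_2$ together, and these two limits are exactly what produce the conjugation relation.
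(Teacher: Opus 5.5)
Your proposal is correct and follows essentially the same route as the paper. In both, the negative-overlap relation together with continuity at the origins forces \(h_1(0)=h_2(0)=:C\), the limit \(x\to0^+\) in \(h_2W=W'h_1\) gives \(CW=W'C\), and the converse uses the constant gauge \(h_1=h_2=C\).
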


\begin{proof}
If \(h_a:U_a\to U(r)\) intertwine the normalized transition functions, then
\(h_2=h_1\) on \(x<0\), so smoothness gives
\(h_1(0)=h_2(0)=:C\). On \(x>0\),
\(h_2(x)W=W'h_1(x)\),
and the limit \(x\to0^+\) gives \(CW=W'C\). The converse is obtained with the constant gauge transformation \(h_1=h_2=C\).
\end{proof}

Write
\(V_W:=\ker(W-I), \qquad P_W:=P_{V_W}, \qquad Q_W:=I-P_W\).
A section is determined by its representative \(\psi=\psi_1\), since
\[
\psi_2(x)=
\begin{cases}
\psi(x),&x<0,\\
W\psi(x),&x>0.
\end{cases}
\]

\begin{theorem}[Fixed-space compatibility]
\label{thm:unitary-L2-fixed}
A function \(\psi\in C^\infty(\R;\C^r)\) represents a smooth global section of \(E_W\) if and only if
\[
\psi^{(n)}(0)\in V_W
\qquad(n\geq0).
\]
The compactly supported smooth section core is
\[
\mathcal C_W
=
C_c^\infty(\R;V_W)
\oplus
C_{c,\mathrm{flat}}^\infty(\R;V_W^\perp),
\]
where the second summand consists of functions whose derivatives of every order vanish at \(0\).
\end{theorem}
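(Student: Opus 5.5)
The proof reduces to the exact prolongation criterion, Proposition~\ref{prop:exact-prolongation}, specialised to the normalized gluing. The argument is the vector-valued analogue of Theorem~\ref{thm:sign-restriction-images}. Fix \(\psi=\psi_1\in C^\infty(\R;\C^r)\). A global smooth section exists precisely when the forced companion
\[
\psi_2(x)=\psi(x)\ (x<0),\qquad \psi_2(x)=W\psi(x)\ (x>0)
\]
extends smoothly through the second origin. Both one-sided pieces are restrictions of smooth functions on \(\R\), namely \(\psi\) and \(W\psi\). So \(\psi_2\) extends smoothly, with \(\psi_2(0)\) defined by continuity, if and only if all one-sided derivatives agree at \(0\):
\[
\psi^{(n)}(0)=W\psi^{(n)}(0)\qquad(n\ge0).
\]
This is exactly \(\psi^{(n)}(0)\in\ker(W-I)=V_W\).

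\emph{Necessity.} If \(\psi_2\) is smooth, continuity of \(\psi_2^{(n)}\) at \(0\) forces the identity above.

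\emph{Sufficiency.} Glue the two pieces. A function smooth on each closed half-line whose one-sided derivatives of all orders coincide at \(0\) is \(C^\infty\). This is the standard fact already used in Lemma~\ref{lem:graph-smooth-gluing}. I expect no real obstacle in this first part; the one point to state carefully is that smoothness at \(o_2\) is tested in the second chart, where the representative is \(\psi_2\).

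For the core decomposition, split \(\psi=P_W\psi+Q_W\psi\) using the constant orthogonal projections. The constant projections commute with differentiation and preserve compact support. The condition \(\psi^{(n)}(0)\in V_W\) is therefore equivalent to \((Q_W\psi)^{(n)}(0)=0\) for all \(n\), with no condition on \(P_W\psi\). Hence
\[
P_W\psi\in C_c^\infty(\R;V_W),\qquad Q_W\psi\in C^\infty_{c,\mathrm{flat}}(\R;V_W^\perp).
\]
Conversely, every sum of elements from the two summands satisfies the jet condition. The sum is direct because \(V_W\cap V_W^\perp=0\). Compact support of the section on \(\Ltwo\) corresponds to compact support of \(\psi\), since \(\psi_2\) has the same support away from \(0\).

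The only mildly delicate step is this last bookkeeping: checking that compact support on the non-Hausdorff space matches compact support of the representative \(\psi\). Properness of the Hausdorff reflection handles it.
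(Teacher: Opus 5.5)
Your proposal is correct and follows essentially the same route as the paper. Smoothness of the forced companion $\psi_2$ at the second origin is equivalent to agreement of all one-sided jets of $\psi$ and $W\psi$, that is, $\psi^{(n)}(0)\in\ker(W-I)=V_W$, and the core decomposition then follows by applying the constant projections $P_W$ and $Q_W$. Your remarks on the one-sided gluing fact and on matching compact support on $\Ltwo$ with compact support of $\psi$ just make explicit what the paper's terse proof leaves implicit.
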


\begin{proof}
Smoothness of the second representative at its origin requires the one-sided jets of
\(\psi\) and \(W\psi\) to agree. Since \(W\) is constant, this is \(W\psi^{(n)}(0)=\psi^{(n)}(0)\) for every \(n\). The converse follows by defining the second representative with the displayed transition rule; the equalities of all one-sided jets give a smooth extension through \(0\). The decomposition follows by applying \(P_W\) and \(Q_W\).
\end{proof}

\subsection{Sobolev closure and Friedrichs realisation}

Let \(q:\Ltwo\longrightarrow\R\) be the collapse map that is the identity on the common regular locus and sends both origins to \(0\). Write
\(\Ltwo^\circ:=\Ltwo\setminus\{o_1,o_2\}\).
We use the Borel measure \(\mu\) characterised by \(\mu(\{o_1\})=\mu(\{o_2\})=0\) and by transport of Lebesgue measure through
\(q|_{\Ltwo^\circ}:\Ltwo^\circ\longrightarrow\R\setminus\{0\}\).
In a unitary reference trivialisation,
\(L^2(\Ltwo,E_W;\mu)\cong L^2(\R;\C^r)\).
Because the transition function is locally constant,
\(d(g_{21}\psi)=g_{21}\,d\psi\),
so the ordinary derivative in the two unitary frames defines a flat Hermitian connection. On \(\mathcal C_W\) consider
\(q_W^{(0)}[\psi] = \int_\R\|\psi'(x)\|^2\,dx\).

Point evaluation is continuous on \(H^1(\R)\), and the zero-trace subspace on either half-line is \(H_0^1\)
\citep{AdamsFournier2003}. These facts give the following closure directly.

\begin{theorem}[Form domain and operator decomposition]
\label{thm:unitary-L2-Friedrichs}
The \(H^1\)-closure of \(\mathcal C_W\) is
\[
\mathcal H_W^1
=
\{\psi\in H^1(\R;\C^r):\psi(0)\in V_W\},
\]
with orthogonal decomposition
\[
\mathcal H_W^1
=
H^1(\R;V_W)
\oplus
H_0^1((-\infty,0);V_W^\perp)
\oplus
H_0^1((0,\infty);V_W^\perp).
\]
The form \(q_W^{(0)}\) closes on \(\mathcal H_W^1\), and its associated nonnegative self-adjoint operator is
\[
H_W
\cong
H_{\mathrm{free}}\otimes I_{V_W}
\oplus
(H_{D,-}\oplus H_{D,+})\otimes I_{V_W^\perp}.
\]
Its domain is
\[
\mathcal D(H_W)
=
H^2(\R;V_W)
\oplus
\{u\in H^2(\R\setminus\{0\};V_W^\perp):
u(0^-)=u(0^+)=0\},
\]
where
\[
H^2(\R\setminus\{0\};V_W^\perp)
:=
H^2((-\infty,0);V_W^\perp)
\oplus
H^2((0,\infty);V_W^\perp).
\]
\end{theorem}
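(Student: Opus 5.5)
The plan is to split along the orthogonal decomposition \(\C^r=V_W\oplus V_W^\perp\). Because \(W\) is unitary, \(V_W^\perp\) is also \(W\)-invariant, so \(P_W\) and \(Q_W\) are constant orthogonal projections. They commute with \(d/dx\), they act as isometries on each \(H^1\)-component, and they preserve \(\mathcal C_W\) by Theorem~\ref{thm:unitary-L2-fixed}. The closure, the form and the operator therefore all reduce componentwise, and the whole statement follows once the two scalar-type pieces are identified.

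\textbf{Closure.} The inclusion \(\overline{\mathcal C_W}^{H^1}\subseteq\mathcal H_W^1\) is direct. Each element of \(\mathcal C_W\) satisfies \(\psi(0)\in V_W\), point evaluation at \(0\) is continuous on \(H^1(\R;\C^r)\), and \(V_W\) is closed.

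For the reverse inclusion, take \(\psi\in\mathcal H^1_W\) and write \(\psi=P_W\psi+Q_W\psi\).
\begin{enumerate}
\item The first component lies in \(H^1(\R;V_W)\). It is approximated in \(H^1\) by elements of \(C_c^\infty(\R;V_W)\), using standard density and then projecting the approximants by \(P_W\).
\item The second component lies in \(H^1(\R;V_W^\perp)\) and has zero trace at \(0\), because \(Q_W\psi(0)=Q_W\psi(0)\) with \(\psi(0)\in V_W\) gives \(0\). By the half-line \(H^1_0\) characterisation (equivalently the argument of Theorem~\ref{thm:L2-H1-core} applied componentwise), it is approximated by functions in \(C_c^\infty((-\infty,0)\sqcup(0,\infty);V_W^\perp)\). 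These are flat at \(0\), so they lie in \(C^\infty_{c,\mathrm{flat}}(\R;V_W^\perp)\subseteq\mathcal C_W\).
\end{enumerate}
The same splitting gives the displayed orthogonal decomposition of \(\mathcal H^1_W\). Restricting the \(V_W^\perp\)-part to the two half-lines gives \(H^1_0((-\infty,0);V_W^\perp)\oplus H^1_0((0,\infty);V_W^\perp)\). Orthogonality holds in both \(L^2\) and \(H^1\) because the projections are constant and orthogonal.

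\textbf{Form and operator.} The form norm of \(q_W^{(0)}\) shifted by \(1\) equals the \(H^1\)-norm. Hence the form is closable, in the sense of Proposition~\ref{prop:form-closability}, and its closure has domain \(\mathcal H_W^1\) and is given by the same integral there. Cross terms between the summands vanish because \(\langle P_W\psi',Q_W\phi'\rangle_{\C^r}=0\) pointwise, so the closed form is the orthogonal sum of the following pieces:
\begin{enumerate}
\item \(\int\|u'\|^2\) on \(H^1(\R;V_W)\), which is \(H^1(\R)\otimes V_W\);
\item the Dirichlet forms on the two half-lines, tensored with \(V_W^\perp\).
\end{enumerate}
By the representation theorem for closed forms, an orthogonal sum of closed forms gives the orthogonal sum of the associated operators. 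The first piece gives \(H_{\mathrm{free}}\otimes I_{V_W}\) with domain \(H^2(\R;V_W)\), as in Section~\ref{subsec:L2-geometry}. The second gives \((H_{D,-}\oplus H_{D,+})\otimes I_{V_W^\perp}\) with domain \(H^2\cap H^1_0\) on each half-line, as in Theorem~\ref{thm:L2-Friedrichs}. The vector-valued case follows by choosing an orthonormal basis of each subspace.

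\textbf{Main obstacle.} The only point needing care is confirming that the half-line approximants belong to \(\mathcal C_W\). Their supports avoid \(0\), so all their derivatives vanish there and they satisfy the condition of Theorem~\ref{thm:unitary-L2-fixed}. Everything else is routine bookkeeping with the constant projections.
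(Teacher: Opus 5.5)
Your proposal is correct and follows the paper's own route. Trace continuity gives one inclusion. The splitting \(\psi=P_W\psi+Q_W\psi\), with density of \(C_c^\infty(\R;V_W)\) for the invariant part and half-line \(H_0^1\) approximation by functions supported away from \(0\) for the complementary part, gives the other inclusion. The representation theorem applied to the resulting orthogonal sum of closed forms then gives the free\(\oplus\)Dirichlet operator.
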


\begin{proof}
The trace condition is necessary by continuity of
\(\psi\mapsto Q_W\psi(0)\). Conversely, write
\(\psi=P_W\psi+Q_W\psi\). The invariant component is approximated in \(H^1\) by \(C_c^\infty(\R;V_W)\). The complementary component has zero trace, hence belongs to the direct sum of the two half-line \(H_0^1\) spaces and is approximated there by smooth compactly supported functions away from the origin. This proves the form-domain decomposition.

The representation theorem for closed nonnegative forms
\citep{Kato1995,Schmudgen2012}
then gives the orthogonal direct sum of the free-line Laplacian and the two Dirichlet half-line Laplacians.
\end{proof}

The smooth condition retains every defect jet, the \(H^1\) form domain retains only the value trace, and the underlying \(L^2\) space retains no pointwise condition. Accordingly,
\(\overline{\mathcal C_W}^{\,L^2}=L^2(\R;\C^r)\),
because \(C_c^\infty(\R\setminus\{0\};\C^r)\subset\mathcal C_W\) is \(L^2\)-dense.

\subsection{Stationary scattering}

Fix an energy \(E=p^2>0\) and use the standard left/right incoming-outgoing normalization
\citep{KostrykinSchrader99,BerkolaikoKuchment2013}. On \(V_W\), the free-line component has transmission amplitude \(1\) and reflection amplitude \(0\). On \(V_W^\perp\), each Dirichlet half-line has transmission amplitude \(0\) and reflection amplitude \(-1\). Hence the internal transmission and reflection amplitude operators are
\(T_W=P_W, \qquad R_W=-Q_W\),
and
\[
S_W=
\begin{pmatrix}
-Q_W&P_W\\
P_W&-Q_W
\end{pmatrix}.
\]
The matrix satisfies
\(S_W^\ast S_W=I, \qquad S_W^2=I\).
For \(\|\xi\|=1\),
\(\mathcal T_W(\xi)=\|P_W\xi\|^2, \qquad \mathcal R_W(\xi)=\|Q_W\xi\|^2\).
The scattering data therefore depend on \(W\) only through its \(1\)-eigenspace. If
\(\ker(W-I)=\ker(W'-I)\),
then the closed forms, Friedrichs operators, and scattering matrices coincide in the common reference realisation.

\section{Finite-rank unitary gluing on lines with finitely many origins}
\label{sec:unitary-Lk}

Retain the line \(\mathbb L_k\), its origins \(o_1,\ldots,o_k\), and the standard
charts \(U_a\) from Section~\ref{sec:Lk}. Each \(U_a\) is identified with
\(\R\), and every pairwise overlap is \(\R\setminus\{0\}\).

Let \(E\to \mathbb L_k\) be a rank-\(r\) Hermitian bundle whose transition functions are unitary and constant on each component of each overlap. Choose \(U_1\) as a reference chart. Let \(q_k:\mathbb L_k\to\R\) send the common nonzero point represented by \((x,a)\) to \(x\) and each origin to \(0\). The measure used below assigns zero mass to the origins and is the transport of Lebesgue measure through
\(q_k: \mathbb L_k\setminus\{o_1,\ldots,o_k\} \longrightarrow \R\setminus\{0\}\).

\subsection{Normalized cocycle data}

\begin{proposition}[Normalization and simultaneous conjugacy]
\label{prop:unitary-Lk-normalization}
After constant unitary changes of frame, the transitions from \(U_1\) to \(U_a\) may be written
\[
g_{a1}(x)=
\begin{cases}
I,&x<0,\\
W_a,&x>0,
\end{cases}
\qquad
a=2,\ldots,k,
\]
for arbitrary \(W_a\in U(r)\). The remaining transitions are forced by the cocycle:
\[
g_{ba}(x)=
\begin{cases}
I,&x<0,\\
W_bW_a^{-1},&x>0.
\end{cases}
\]
Two normalized tuples
\(\mathbf W=(W_2,\ldots,W_k)\) and \(\mathbf W'=(W_2',\ldots,W_k')\) define bundles isomorphic by a unitary bundle map covering the identity if and only if
\[
W_a'=CW_aC^{-1}
\qquad(a=2,\ldots,k)
\]
for a single \(C\in U(r)\).
\end{proposition}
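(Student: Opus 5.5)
The plan mirrors the two-origin argument of Proposition~\ref{prop:unitary-L2-classification}, now for the whole \(k\)-chart cocycle. There are three steps: normalize the negative side, read off the forced transitions from the cocycle, and then classify by gauge transformations.

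\emph{Normalization.} Write the original transitions as \(g_{a1}(x)=A_a^-\) for \(x<0\) and \(A_a^+\) for \(x>0\), with \(A_a^\pm\in U(r)\). We change the frame on \(U_a\) (\(a\geq2\)) by the constant unitary \((A_a^-)^{-1}\) and leave the frame on \(U_1\) fixed. This replaces \(g_{a1}\) by \((A_a^-)^{-1}g_{a1}\). The new transition equals \(I\) on \(x<0\) and \(W_a:=(A_a^-)^{-1}A_a^+\) on \(x>0\).

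Every \(W_a\) occurs, because any pair of constants on the two overlap components defines a smooth transition. The overlap is disconnected, and the cocycle conditions \(g_{ba}=g_{b1}g_{a1}^{-1}\) can be imposed by definition.

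\emph{Forced transitions.} On a triple overlap, \(U_1\cap U_a\cap U_b\) equals the full punctured line, so the cocycle relation \(g_{ba}=g_{b1}g_{1a}\) holds there. Evaluating it on \(x<0\) gives \(I\), and on \(x>0\) it gives \(W_bW_a^{-1}\).

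\emph{Classification.} Suppose a unitary bundle isomorphism covering the identity is given by smooth maps \(h_a:U_a\to U(r)\). On \(x<0\) the intertwining relation is \(h_a(x)=h_1(x)\). Both sides are smooth on their charts, and the only origin in \(U_1\cap\ldots\) is handled by taking limits \(x\to0^-\). At the origin \(o_a\), the function \(h_a\) is continuous, and on the negative side it agrees with \(h_1\). Hence \(h_a(0)=\lim_{x\to0^-}h_1(x)=h_1(0)=:C\) for every \(a\), so the limit is the same value \(C\) for all charts.

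On \(x>0\) the intertwining relation is \(h_a(x)W_a=W_a'h_1(x)\). Letting \(x\to0^+\), continuity of \(h_1\) at \(o_1\) and of \(h_a\) at \(o_a\) gives \(CW_a=W_a'C\). This is the simultaneous conjugacy with a single \(C\).

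For the converse, the constant gauge \(h_a\equiv C\) on every chart intertwines the normalized transitions directly. It also intertwines the forced transitions, since \(CW_bW_a^{-1}C^{-1}=W_b'W_a'^{-1}\).

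The only delicate point is the common-limit step. The value \(C\) must be the same for every \(a\), and this holds because all charts share the negative half-line with \(U_1\). The rest is routine.
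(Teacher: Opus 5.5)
Your proposal is correct and follows essentially the same route as the paper: you normalize the negative-side transitions by the constant frame change \((A_a^-)^{-1}\) on \(U_a\), derive the remaining transitions from \(g_{ba}=g_{b1}g_{a1}^{-1}\), and use the negative-overlap relation \(h_a=h_1\) together with continuity at the origins to obtain a common value \(C=h_a(0)=h_1(0)\). The positive-overlap limit then gives \(CW_a=W_a'C\), constant gauges give the converse, and your remark that \(U_1\cap U_a\cap U_b\) is the whole punctured line (so the cocycle holds on all of \(U_a\cap U_b\)) makes explicit a point the paper leaves implicit.
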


\begin{proof}
If the original transition from \(U_1\) to \(U_a\) is
\((A_{a1,-},A_{a1,+})\), the frame change
\(C_a=A_{a1,-}^{-1}\), \(C_1=I\), gives the normalized pair
\((I,W_a)\) with \(W_a=A_{a1,-}^{-1}A_{a1,+}\).
The pairwise formula follows from
\(g_{ba}=g_{b1}g_{a1}^{-1}\).

For an isomorphism between normalized tuples, the negative-overlap relation gives
\(h_a(0)=h_1(0)=C\), while the positive-overlap relation gives
\(CW_a=W_a'C\). Constant \(h_a=C\) proves the converse.
\end{proof}

Set
\[
V_{\mathbf W}:=
\bigcap_{a=2}^{k}\ker(W_a-I),
\qquad
P_{\mathbf W}:=P_{V_{\mathbf W}},
\qquad
Q_{\mathbf W}:=I-P_{\mathbf W}.
\]

\subsection{Common fixed-space theorem}

\begin{theorem}[Finite-origin fixed-space dynamics]
\label{thm:unitary-Lk-main}
With the measure defined above and the same flat Dirichlet form as in the two-origin case, the normalized tuple
\(\mathbf W=(W_2,\ldots,W_k)\) has the following properties.

\begin{enumerate}
\item A reference representative \(\psi\in C^\infty(\R;\C^r)\) defines a smooth global section if and only if
\[
\psi^{(n)}(0)\in V_{\mathbf W}
\qquad(n\geq0).
\]

\item The \(H^1\)-closure of the compactly supported smooth section core is
\[
\mathcal H_{\mathbf W}^1
=
\{\psi\in H^1(\R;\C^r):\psi(0)\in V_{\mathbf W}\}.
\]

\item The Friedrichs operator is
\[
H_{\mathbf W}
\cong
H_{\mathrm{free}}\otimes I_{V_{\mathbf W}}
\oplus
(H_{D,-}\oplus H_{D,+})\otimes I_{V_{\mathbf W}^\perp}.
\]

\item The stationary transmission and reflection amplitude operators are
\[
T_{\mathbf W}=P_{\mathbf W},
\qquad
R_{\mathbf W}=-Q_{\mathbf W}.
\]
\end{enumerate}
\end{theorem}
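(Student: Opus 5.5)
The plan is to reduce everything to the two-origin case, Theorem~\ref{thm:unitary-L2-fixed} and Theorem~\ref{thm:unitary-L2-Friedrichs}, with $W$ replaced by the common fixed space.

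\emph{Step 1: smooth sections.} A section is determined by its reference representative $\psi=\psi_1$ on $U_1\cong\R$. By Proposition~\ref{prop:unitary-Lk-normalization}, the representative on $U_a$ is $\psi$ for $x<0$ and $W_a\psi$ for $x>0$. Smoothness of this representative through its origin $o_a$ is exactly the two-origin condition for the pair $(U_1,U_a)$: the one-sided jets of $\psi$ and $W_a\psi$ must agree at $0$. Since $W_a$ is constant, this says $W_a\psi^{(n)}(0)=\psi^{(n)}(0)$ for all $n$. Smoothness on the overlaps $U_a\cap U_b$ is automatic, because the pairwise transitions are forced by the cocycle and every point away from the origins lies in $U_1$. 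Intersecting over $a=2,\dots,k$ gives item~1. The converse is the same as in Theorem~\ref{thm:unitary-L2-fixed}: define each $\psi_a$ by the transition rule, and the jet equalities give smooth extensions at every $o_a$. Applying $P_{\mathbf W}$ and $Q_{\mathbf W}$ then gives the core decomposition $C_c^\infty(\R;V_{\mathbf W})\oplus C^\infty_{c,\mathrm{flat}}(\R;V_{\mathbf W}^\perp)$.

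\emph{Step 2: $H^1$-closure.} The measure gives the origins zero mass, so $L^2(\mathbb L_k,E)\cong L^2(\R;\C^r)$ in the reference frame. Because the transitions are locally constant, the flat connection is the ordinary derivative, and the form is $\int\|\psi'\|^2$. The inclusion of the closure into $\mathcal H^1_{\mathbf W}$ follows from continuity of $\psi\mapsto Q_{\mathbf W}\psi(0)$ on $H^1$. For the reverse inclusion, split $\psi=P_{\mathbf W}\psi+Q_{\mathbf W}\psi$. The first part is approximated in $H^1$ by $C_c^\infty(\R;V_{\mathbf W})$, which lies in the core. The second part has zero trace, so it lies in $H_0^1$ of the two half-lines and is approximated by functions in $C_c^\infty(\R\setminus\{0\};V_{\mathbf W}^\perp)$, which are flat and hence in the core. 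This is Theorem~\ref{thm:unitary-L2-Friedrichs} verbatim with $V_W$ replaced by $V_{\mathbf W}$.

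\emph{Step 3: operator and scattering.} The closed form is an orthogonal sum: the free form on $H^1(\R;V_{\mathbf W})$, plus Dirichlet half-line forms on $V_{\mathbf W}^\perp$. The representation theorem for closed nonnegative forms then gives item~3. For item~4, fix an energy and read off the amplitudes on each summand: the free line has transmission $1$ and reflection $0$, while each Dirichlet half-line has transmission $0$ and reflection $-1$. This gives $T_{\mathbf W}=P_{\mathbf W}$ and $R_{\mathbf W}=-Q_{\mathbf W}$.

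The only step needing care is Step~1. There one must check that no extra condition arises from the non-reference charts, that is, that the conditions at the $o_a$ for $a\ge2$ are exhausted by comparison with $U_1$. This holds because $U_1\cup U_a$ already covers a neighbourhood of $o_a$. Every other step is a direct transcription of the two-origin proofs.
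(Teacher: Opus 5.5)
Your proposal is correct and follows the paper's proof essentially step for step: smoothness at each origin $o_a$ gives $W_a\psi^{(n)}(0)=\psi^{(n)}(0)$ for all $a$ simultaneously, and the trace, density, form-closure and free--Dirichlet direct-sum arguments of the two-origin case carry over unchanged once $V_W$ is replaced by $V_{\mathbf W}$. Your additional check that the non-reference overlaps impose no further conditions makes explicit a point the paper leaves implicit, and does not change the argument.
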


\begin{proof}
For each \(a\geq2\), the local representative is
\[
\psi_a(x)=
\begin{cases}
\psi(x),&x<0,\\
W_a\psi(x),&x>0.
\end{cases}
\]
Smoothness at \(o_a\) therefore requires
\(W_a\psi^{(n)}(0)=\psi^{(n)}(0)\) for every \(n\), simultaneously for all \(a\), which gives the first assertion.

Once \(V_W\) in the proof of Theorem~\ref{thm:unitary-L2-Friedrichs} is replaced by the closed subspace \(V_{\mathbf W}\), the trace, density, form-closure, and direct-sum arguments are unchanged. The scattering statement follows from the resulting free--Dirichlet decomposition.
\end{proof}

The additional origins contribute the nested internal constraints
\[
\ker(W_2-I)
\supseteq
\ker(W_2-I)\cap\ker(W_3-I)
\supseteq\cdots\supseteq
V_{\mathbf W}.
\]

\section{The generated compact group}
\label{sec:generated-group}

Define
\[
\Gamma_{\mathbf W}:=\langle W_2,\ldots,W_k\rangle,
\qquad
G_{\mathbf W}:=\overline{\Gamma_{\mathbf W}}\subseteq U(r).
\]
Let \(d_{\mathrm{top}}(G)\) denote the least cardinality of a subset whose generated subgroup is dense in the compact group \(G\).

\begin{proposition}[Origin bound on gluing-group generation]
\label{prop:generator-bound}
For every normalized \(\mathbb L_k\) gluing tuple,
\[
d_{\mathrm{top}}(G_{\mathbf W})\leq k-1.
\]
In particular, for \(k=2\),
\[
G_W=\overline{\langle W\rangle}
\]
is abelian. Hence a non-Abelian compact gluing group cannot occur on \(\Ltwo\) within the present locally constant model.
\end{proposition}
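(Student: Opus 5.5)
The proof is a direct unwinding of the definitions. By construction \(G_{\mathbf W}=\overline{\Gamma_{\mathbf W}}\), where \(\Gamma_{\mathbf W}\) is generated by the \(k-1\) matrices \(W_2,\ldots,W_k\). So the set \(S=\{W_2,\ldots,W_k\}\) has at most \(k-1\) elements, and the subgroup it generates is \(\Gamma_{\mathbf W}\), which is dense in \(G_{\mathbf W}\) by definition of closure. Before using this I will note that \(G_{\mathbf W}\) is a compact group, so that \(d_{\mathrm{top}}\) is defined. It is compact as a closed subset of the compact group \(U(r)\). It is a subgroup because the closure of a subgroup in a topological group is again a subgroup, by continuity of multiplication and inversion. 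Minimality in the definition of \(d_{\mathrm{top}}\) then gives \(d_{\mathrm{top}}(G_{\mathbf W})\le |S|\le k-1\). Repeated matrices in the tuple only lower \(|S|\). If \(S\) is empty (the case \(k=1\)), or every \(W_a=I\), the bound is trivial, since the trivial group is topologically generated by the empty set.

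For \(k=2\) the tuple has a single entry \(W=W_2\), so \(G_W=\overline{\langle W\rangle}\). The cyclic group \(\langle W\rangle\) is abelian. Abelianness passes to the closure: the commutator map \((g,h)\mapsto ghg^{-1}h^{-1}\) is continuous on \(U(r)\times U(r)\) and is identically \(I\) on the dense subset \(\langle W\rangle\times\langle W\rangle\) of \(G_W\times G_W\), hence identically \(I\) on \(G_W\times G_W\).

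The final sentence about \(\Ltwo\) then follows from Proposition~\ref{prop:unitary-L2-classification} together with Proposition~\ref{prop:unitary-Lk-normalization} at \(k=2\). Within the locally constant model, every rank-\(r\) Hermitian bundle on \(\Ltwo\) reduces to a single relative transition \(W\in U(r)\), and its gluing group is \(\overline{\langle W\rangle}\), which is abelian. Conjugating \(W\) only conjugates this group, so the conclusion does not depend on the chosen normalization. No step is a real obstacle. The only point to state explicitly is that the gluing group is defined from the normalized relative transitions \(W_a\), not from the unnormalized pairs \(A_\pm\), which is the convention fixed by Proposition~\ref{prop:unitary-Lk-normalization}.
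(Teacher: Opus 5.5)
Your proposal is correct and follows the paper's proof. The paper likewise unwinds the definition of \(G_{\mathbf W}\) as the closure of the group generated by the \(k-1\) relative transitions, and then notes that the closure of a cyclic subgroup of \(U(r)\) is abelian. Your extra checks (compactness and subgroup property of the closure, continuity of the commutator map, conjugation-invariance of the normalization) only make explicit what the paper leaves implicit. The aside about \(k=1\) lies outside the paper's standing assumption \(k\geq2\) but does no harm.
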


\begin{proof}
By definition, \(G_{\mathbf W}\) is the closure of the subgroup generated by the \(k-1\) matrices
\(W_2,\ldots,W_k\). For \(k=2\) the generating subgroup is cyclic and therefore abelian; its closure in \(U(r)\) remains abelian.
\end{proof}

\begin{proposition}[Invariant-sector formulation]
\label{prop:group-invariant}
One has
\[
V_{\mathbf W}
=
(\C^r)^{\Gamma_{\mathbf W}}
=
(\C^r)^{G_{\mathbf W}}.
\]
\end{proposition}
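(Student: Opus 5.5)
The plan is to prove the two equalities by showing two inclusions chains, working directly with the linear action of \(U(r)\) on \(\C^r\).

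First, \(V_{\mathbf W}=(\C^r)^{\Gamma_{\mathbf W}}\). If \(v\) is fixed by every element of \(\Gamma_{\mathbf W}\), then in particular \(W_av=v\) for \(a=2,\ldots,k\), so \(v\in\bigcap_a\ker(W_a-I)=V_{\mathbf W}\). Conversely, suppose \(v\in V_{\mathbf W}\). Then each \(W_a\) fixes \(v\), and hence so does \(W_a^{-1}\), since \(v=W_a^{-1}W_av=W_a^{-1}v\). The stabiliser \(\operatorname{Stab}(v)=\{g\in U(r):gv=v\}\) is a subgroup. It contains the generators and their inverses, so it contains every finite word in them, that is, all of \(\Gamma_{\mathbf W}\).

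Second, \((\C^r)^{\Gamma_{\mathbf W}}=(\C^r)^{G_{\mathbf W}}\). The inclusion \(\supseteq\) is immediate from \(\Gamma_{\mathbf W}\subseteq G_{\mathbf W}\). For \(\subseteq\), fix \(v\) fixed by \(\Gamma_{\mathbf W}\). The stabiliser \(\operatorname{Stab}(v)\) is closed in \(U(r)\), because it is the preimage of \(\{0\}\) under the continuous map \(g\mapsto gv-v\). Since it contains \(\Gamma_{\mathbf W}\), it contains the closure \(G_{\mathbf W}\). Equivalently, if \(g_n\to g\) with \(g_n\in\Gamma_{\mathbf W}\), then \(gv=\lim g_nv=v\).

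No step is a genuine obstacle; the argument is elementary. The one point that needs care is the inverse-closure step in the first part. A subgroup generated by \(W_a\) includes the inverses \(W_a^{-1}\), and these fix \(v\) precisely because \(W_a\) is invertible. In the compact setting one could alternatively note that the closed subsemigroup generated by the \(W_a\) already equals \(G_{\mathbf W}\), but the stabiliser argument avoids needing this.
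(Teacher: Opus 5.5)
Your proposal is correct and follows essentially the paper's argument: fixedness by the generators propagates to all words in the generators and their inverses, and then extends to the closure \(G_{\mathbf W}\) by continuity. Phrasing this through the closed stabiliser subgroup of \(v\) is just a clean packaging of the same reasoning.
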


\begin{proof}
A vector fixed by the generators is fixed by every word in them and their inverses. Invariance extends from \(\Gamma_{\mathbf W}\) to its closure by continuity.
\end{proof}

\begin{theorem}[Fibrewise-unitary classification of the flat dynamics]
\label{thm:dimension-classification}
Let \(\mathbf W\) and \(\mathbf W'\) be rank-\(r\) normalized gluing tuples, possibly on lines with different numbers of origins. The associated flat Friedrichs operators are intertwined by a constant fibrewise unitary
\[
I_{L^2(\R)}\otimes C,
\qquad C\in U(r),
\]
if and only if
\[
\dim V_{\mathbf W}=\dim V_{\mathbf W'}.
\]
For fixed rank \(r\), there are exactly \(r+1\) classes under constant fibrewise unitary equivalence, indexed by
\[
m=\dim V_{\mathbf W}\in\{0,1,\ldots,r\}.
\]
\end{theorem}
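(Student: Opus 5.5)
The plan is to use the explicit free--Dirichlet decomposition of Theorem~\ref{thm:unitary-Lk-main}(3), which holds verbatim for $k=2$ by Theorem~\ref{thm:unitary-L2-Friedrichs}. Both operators act on the common reference space $L^2(\R;\C^r)$, so the rank $r$ is fixed but the number of origins is irrelevant. Write $H_{\mathbf W}=H_{\mathrm{free}}\otimes P_{\mathbf W}+(H_{D,-}\oplus H_{D,+})\otimes Q_{\mathbf W}$, and read this as an operator on $L^2(\R)\otimes\C^r$ in which the Dirichlet part acts on $L^2(\R_-)\oplus L^2(\R_+)=L^2(\R)$.

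For sufficiency, suppose $\dim V_{\mathbf W}=\dim V_{\mathbf W'}=m$. Choose $C\in U(r)$ mapping $V_{\mathbf W}$ onto $V_{\mathbf W'}$; it then also maps $V_{\mathbf W}^\perp$ onto $V_{\mathbf W'}^\perp$, so $CP_{\mathbf W}C^{-1}=P_{\mathbf W'}$ and likewise for $Q$. Conjugating the decomposition by $I\otimes C$ gives $(I\otimes C)H_{\mathbf W}(I\otimes C)^{-1}=H_{\mathbf W'}$, including equality of domains. At the domain level, the form domain $\{\psi(0)\in V_{\mathbf W}\}$ is mapped onto $\{\psi(0)\in V_{\mathbf W'}\}$. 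Since the two forms are both $\int\|\psi'\|^2$ and unitarily intertwined, the associated operators are intertwined as well.

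For necessity, suppose $(I\otimes C)H_{\mathbf W}=H_{\mathbf W'}(I\otimes C)$. The constant fibre unitary $I\otimes C$ commutes with the form $\int\|\psi'\|^2$ and with the underlying $L^2$ norm, so it maps the closed form domain $\mathcal H^1_{\mathbf W}$ onto $\mathcal H^1_{\mathbf W'}$. This is the one step that needs care. Form domains are determined by operators, namely $D(|H|^{1/2})$, so intertwining operators map them onto each other. Evaluating traces at $0$, and using surjectivity of $\psi\mapsto\psi(0)$ from $\mathcal H^1_{\mathbf W}$ onto $V_{\mathbf W}$, we get $CV_{\mathbf W}=V_{\mathbf W'}$ and hence equal dimensions.

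An alternative route to necessity avoids form domains and uses spectral invariants. $H_{\mathbf W}$ has multiplicity-$r$ absolutely continuous spectrum in either case, so that alone does not distinguish them. Instead use the kernel of the resolvent difference with $H_{\mathrm{free}}\otimes I$, which has rank $r-m$, or the scattering matrix $S_{\mathbf W}$ from Section~\ref{sec:unitary-L2}, whose transmission $T=P_{\mathbf W}$ has rank $m$. I would present the form-domain argument as the main proof.

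Finally, for the count: every $m\in\{0,\dots,r\}$ is realised, for instance on $\Ltwo$ by $W=\mathrm{diag}(I_m,-I_{r-m})$, so there are exactly $r+1$ classes. I expect the main obstacle to be the necessity direction, where one must justify that an operator intertwiner preserves the trace-defined form domain.
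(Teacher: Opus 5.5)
Your proposal is correct and follows essentially the same route as the paper. Sufficiency chooses \(C\) with \(CV_{\mathbf W}=V_{\mathbf W'}\) and conjugates the free--Dirichlet decomposition. Necessity uses functional calculus to show that \(I\otimes C\) maps \(\mathcal D(H_{\mathbf W}^{1/2})=\mathcal H^1_{\mathbf W}\) onto \(\mathcal H^1_{\mathbf W'}\), then compares value traces at \(0\). The count comes from a diagonal unitary on \(\Ltwo\).

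One caveat on wording: your preliminary remark that \(I\otimes C\) ``commutes with the form'' does not by itself make the domains correspond. The \(\mathcal D(H^{1/2})\) argument you give next is the step that actually does the work, exactly as in the paper.
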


\begin{proof}
If \(\dim V_{\mathbf W}=\dim V_{\mathbf W'}\), choose \(C\in U(r)\) with
\(CV_{\mathbf W}=V_{\mathbf W'}\).
Then \(C\) maps \(V_{\mathbf W}^{\perp}\) onto \(V_{\mathbf W'}^{\perp}\), and
\(I\otimes C\) intertwines the direct-sum realisations in Theorem~\ref{thm:unitary-Lk-main}.

Conversely, suppose \(U=I\otimes C\) intertwines the two nonnegative self-adjoint operators. By functional calculus,
\(U H_{\mathbf W}^{1/2} = H_{\mathbf W'}^{1/2}U\),
and hence \(U\) maps the square-root form domain of \(H_{\mathbf W}\) onto that of \(H_{\mathbf W'}\):
\(U\mathcal H_{\mathbf W}^1 = \mathcal H_{\mathbf W'}^1\).
The sets of admissible traces are
\[
\{\psi(0):\psi\in\mathcal H_{\mathbf W}^1\}=V_{\mathbf W},
\qquad
\{\psi(0):\psi\in\mathcal H_{\mathbf W'}^1\}=V_{\mathbf W'}.
\]
Since
\((U\psi)(0)=C\psi(0)\),
it follows that
\(CV_{\mathbf W}=V_{\mathbf W'}\).
Hence the fixed spaces have equal dimension. Every \(m\in\{0,\ldots,r\}\) occurs, for example by taking on \(\Ltwo\) a diagonal unitary with \(1\)-eigenspace of dimension \(m\).
\end{proof}

\begin{remark}[Equivalence notion]
\label{rem:fibrewise-equivalence}
Theorem~\ref{thm:dimension-classification} is a classification under
\emph{constant fibrewise} unitaries \(I_{L^2(\R)}\otimes C\), which preserve
the spatial/fibre factorisation and the free--Dirichlet sector
identification. No claim is made that
\(\dim V_{\mathbf W}\) is an invariant under an arbitrary unitary operator
on \(L^2(\R;\C^r)\), which need not preserve the trace domain, the spatial
factorisation, or the left--right scattering identification. All
subsequent statements about the \(r+1\) dynamical classes use this
fibrewise notion of equivalence.
\end{remark}

Hence Theorem~\ref{thm:unitary-Lk-main} may be written
\[
H_{\mathbf W}
\cong
H_{\mathrm{free}}\otimes I_{(\C^r)^{G_{\mathbf W}}}
\oplus
(H_{D,-}\oplus H_{D,+})
\otimes I_{((\C^r)^{G_{\mathbf W}})^\perp}.
\]

\subsection{Finite and permutation gluing groups}

For a finite group \(G\) with a unitary representation
\(\rho:G\to U(V)\), the standard averaging operator \(P_G=\frac1{|G|}\sum_{g\in G}\rho(g)\) is the orthogonal projector onto \(V^G\), and
\(\dim V^G=\frac1{|G|}\sum_{g\in G}\chi_\rho(g)\)
\citep{BrockerTomDieck1985,Hall2015}.

For the natural permutation representation of
\(G\leq S_r\) on \(\C^r\), invariant vectors are exactly those whose coordinates are constant on \(G\)-orbits. If \(\mathcal O_1,\ldots,\mathcal O_m\) are the orbits, then \(v_j= \frac1{\sqrt{|\mathcal O_j|}} \sum_{i\in\mathcal O_j}e_i\) form an orthonormal basis of \((\C^r)^G\). Therefore
\(\dim(\C^r)^G = \#(\{1,\ldots,r\}/G)\).
For \(k=2\), \(G=\langle\sigma\rangle\), and the orbits are the cycles of \(\sigma\).

\subsection{Compact groups}

Let \(G\) be compact and \(\rho:G\to U(V)\) a finite-dimensional unitary representation. If the relative gluing elements generate a dense subgroup of \(G\), Theorem~\ref{thm:unitary-Lk-main} selects \(V^G\).

For normalized Haar measure \(\mu_G\), \(P_G = \int_G\rho(g)\,d\mu_G(g)\) is the orthogonal projector onto \(V^G\), and
\(\dim V^G = \int_G\chi_\rho(g)\,d\mu_G(g)\).
If
\(V\cong \bigoplus_{\lambda\in\widehat G} M_\lambda\otimes V_\lambda\),
then
\(V^G\cong M_{\mathbf1}\),
so its dimension is the multiplicity of the trivial representation
\citep{Hall2015,BrockerTomDieck1985}. For the invariant-complete flat model, the transmission and reflection amplitude operators are \(T=P_G, \qquad R=-(I-P_G)\) for the corresponding non-Hausdorff Friedrichs problem.

\section{Connected groups and infinitesimal invariants}
\label{sec:lie}

Let \(G\) be a connected compact Lie group with Lie algebra \(\mathfrak g\), and let \(\rho:G\to U(V)\) be a finite-dimensional continuous unitary representation with derived representation
\(d\rho:\mathfrak g\to\mathfrak u(V)\).
Standard Lie theory gives
\[
V^G
=
\{v\in V:d\rho(X)v=0\ \text{for all }X\in\mathfrak g\}
=
\bigcap_{X\in\mathfrak g}\ker d\rho(X)
\]
\citep{Hall2015}. If \(X_1,\ldots,X_m\) generate \(\mathfrak g\) as a Lie algebra, then
\(V^G = \bigcap_{a=1}^{m}\ker d\rho(X_a)\),
because the common kernel of the \(d\rho(X_a)\) is invariant under iterated Lie brackets.

For a single \(X\in\mathfrak g\), \(\ker(\rho(\exp X)-I)\) can strictly contain \(\ker d\rho(X)\), since a nonzero infinitesimal eigenvalue may exponentiate to \(1\). The one-parameter fixed-space identity is
\[
\rho(\exp(tX))v=v\quad\text{for all }t\in\R
\quad\Longleftrightarrow\quad
d\rho(X)v=0.
\]
\section{Fixed-space filtrations and invariant completion}
\label{sec:invariant-completion}

Let \(\rho:G\to GL(V)\) be a representation on a finite-dimensional complex vector space. For \(g_1,\ldots,g_m\in G\), set
\[
V^{(0)}:=V,
\qquad
V^{(j)}
:=
\bigcap_{a=1}^{j}\ker(\rho(g_a)-I),
\qquad
1\leq j\leq m.
\]
Then
\(V=V^{(0)} \supseteq V^{(1)} \supseteq\cdots\supseteq V^{(m)} \supseteq V^G\).
When \(\rho(G)\subseteq U(V)\), the matrices \(W_{a+1}=\rho(g_a)\) may be used as relative transitions on \(\mathbb L_k\) for \(m\leq k-1\). The transmission rank after the first \(j\) chosen constraints is
\(s_j(g_1,\ldots,g_j) := \dim V^{(j)} = \operatorname{rank}T_j\).
The sequence \(s_j\) is attached to one nested gluing tuple.

\begin{definition}[Effective and redundant gluing]
\label{def:effective-gluing}
For a fixed sequence \(g_1,\ldots,g_m\), the element \(g_j\) is \emph{effective relative to} \(g_1,\ldots,g_{j-1}\) if
\[
V^{(j)}\subsetneq V^{(j-1)};
\]
it is \emph{redundant} if
\[
V^{(j)}=V^{(j-1)}.
\]
\end{definition}

\begin{proposition}[Redundancy criterion]
\label{prop:redundancy}
The \(j\)-th gluing is redundant if and only if
\[
V^{(j-1)}
\subseteq
\ker(\rho(g_j)-I).
\]
\end{proposition}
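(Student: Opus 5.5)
The statement is a direct unwinding of the definitions, so the plan is a two-line set-theoretic argument. By definition,
\[
V^{(j)} = V^{(j-1)}\cap\ker(\rho(g_j)-I),
\]
since the intersection defining \(V^{(j)}\) is the intersection defining \(V^{(j-1)}\) with one additional factor. The case \(j=1\) is included through the convention \(V^{(0)}=V\). This is the only structural input and requires no representation theory beyond linearity of \(\rho(g_j)-I\).

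With this identity in hand, both directions follow from the elementary fact \(A\cap B=A\iff A\subseteq B\) for subsets \(A,B\) of \(V\).

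If the \(j\)-th gluing is redundant, then
\[
V^{(j-1)}=V^{(j)}=V^{(j-1)}\cap\ker(\rho(g_j)-I)\subseteq\ker(\rho(g_j)-I).
\]
Conversely, suppose \(V^{(j-1)}\subseteq\ker(\rho(g_j)-I)\). Then the intersection \(V^{(j-1)}\cap\ker(\rho(g_j)-I)\) equals \(V^{(j-1)}\). Hence \(V^{(j)}=V^{(j-1)}\), which is redundancy in the sense of Definition~\ref{def:effective-gluing}.

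There is no genuine obstacle. The only point needing care is to state the recursive identity for \(V^{(j)}\) explicitly, including the case \(j=1\), rather than working with the full \(j\)-fold intersection.

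It may also be worth noting that the criterion refers only to the subspace \(V^{(j-1)}\), not to the earlier elements individually. In the unitary case it can be rephrased as \(\rho(g_j)\) acting trivially on the current transmitting space \(\operatorname{Ran}T_{j-1}\). This remark is optional and not needed for the proof.
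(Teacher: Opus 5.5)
Your proof is correct and is essentially the paper's argument: both rest on the identity \(V^{(j)}=V^{(j-1)}\cap\ker(\rho(g_j)-I)\) together with the fact that \(A\cap B=A\) exactly when \(A\subseteq B\). Your explicit handling of \(j=1\) through \(V^{(0)}=V\), and your optional reformulation in terms of \(\operatorname{Ran}T_{j-1}\), are both consistent with the paper.
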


\begin{proof}
By definition,
\(V^{(j)} = V^{(j-1)}\cap\ker(\rho(g_j)-I)\),
so equality with \(V^{(j-1)}\) is equivalent to the displayed inclusion.
\end{proof}

\subsection{Invariant-completion number}

\begin{definition}[Invariant-completion number]
\label{def:completion-number}
Let \(\rho:G\to GL(V)\) be finite dimensional. Its \emph{invariant-completion number} is
\[
\nu(\rho)
:=
\min
\left\{
m\geq0:
\exists\,g_1,\ldots,g_m\in G,\ 
\bigcap_{a=1}^{m}\ker(\rho(g_a)-I)=V^G
\right\},
\]
where the intersection for \(m=0\) is \(V\).
\end{definition}

The equality \(\nu(\rho)=0\) holds exactly for the trivial representation. Common fixed spaces and finite-subfamily questions for matrix families have been studied independently
\citep{BernikEtAl2005CommonFixed}. The following dimension bound is used here to convert finite-dimensional fixed-space completion into an origin bound.

\begin{theorem}[Finite invariant completion]
\label{thm:finite-invariant-completion}
For every finite-dimensional representation \(\rho:G\to GL(V)\),
\[
\nu(\rho)
\leq
\dim V-\dim V^G.
\]
\end{theorem}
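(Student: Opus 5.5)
We will use a greedy construction that strictly decreases the dimension of the current fixed space, together with a termination argument. Set $d:=\dim V-\dim V^G$. Put $V^{(0)}=V$. Suppose the elements $g_1,\ldots,g_j$ have been chosen, and that the common fixed space $V^{(j)}$ still strictly contains $V^G$. Because $V^{(j)}\neq V^G$, there is a vector $v\in V^{(j)}$ with $v\notin V^G$. Hence some $g\in G$ satisfies $\rho(g)v\neq v$. Choose $g_{j+1}:=g$. Then $v\in V^{(j)}$ but $v\notin\ker(\rho(g_{j+1})-I)$. By Proposition~\ref{prop:redundancy}, this choice is effective, so
\[
V^{(j+1)}=V^{(j)}\cap\ker(\rho(g_{j+1})-I)\subsetneq V^{(j)}.
\]
The inclusion is of finite-dimensional subspaces, so $\dim V^{(j+1)}\le\dim V^{(j)}-1$.

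Next we check that the procedure stays above $V^G$ and stops in time. Every $V^{(j)}$ is an intersection of fixed spaces of single elements of $G$, so $V^G\subseteq V^{(j)}$ for all $j$. Together with the strict drop in dimension, induction gives
\[
\dim V^G\le \dim V^{(j)}\le \dim V-j
\]
for as long as the process continues. The process therefore halts after some $m\le d$ steps, and it halts exactly at the first $m$ for which $V^{(m)}=V^G$. That $m$ is admissible in Definition~\ref{def:completion-number}, so $\nu(\rho)\le m\le d$. When $V=V^G$ we take $m=0$, which covers the case $d=0$.

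Only one step needs any care: producing a non-fixing element from a vector outside $V^G$. This is just the definition of $V^G=\{v:\rho(g)v=v\ \forall g\in G\}$. So there is no real obstacle. In particular, neither unitarity nor compactness of $G$ is needed, which matches the statement being made for arbitrary $\rho:G\to GL(V)$. We may also remark that the bound counts one step per unit of codimension. Each step removes at least one dimension and possibly more, so the bound need not be sharp.
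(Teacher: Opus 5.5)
Your proof is correct and is essentially the paper's argument: repeatedly pick a vector in the current common fixed space that lies outside \(V^G\), together with a group element that moves it, so each new intersection is a proper subspace that still contains \(V^G\), and counting dimensions bounds the number of steps by \(\dim V-\dim V^G\). Your remark that neither unitarity nor compactness is used is also accurate, and it matches the statement's generality for arbitrary \(\rho:G\to GL(V)\).
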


\begin{proof}
Set \(U_0=V\). If \(U_j\neq V^G\), choose \(v_j\in U_j\setminus V^G\). There exists \(g_{j+1}\in G\) with
\(\rho(g_{j+1})v_j\neq v_j\).
Then \(U_{j+1} = U_j\cap\ker(\rho(g_{j+1})-I)\) is a proper subspace of \(U_j\) containing \(V^G\). Each such step reduces the dimension by at least one, so the process reaches \(V^G\) after at most \(\dim V-\dim V^G\) steps.
\end{proof}

For a unitary representation \(\rho:G\to U(V)\), if
\(k-1\geq\nu(\rho)\),
Corollary~\ref{cor:origin-completion} below identifies the corresponding threshold on \(\mathbb L_k\).

\begin{corollary}[Origin bound for invariant completion]
\label{cor:origin-completion}
Let \(\rho:G\to U(V)\) be finite dimensional. If
\[
k-1\geq\nu(\rho),
\]
there is a locally constant gluing system on \(\mathbb L_k\) with relative matrices in \(\rho(G)\) and transmitting sector \(V^G\).
\end{corollary}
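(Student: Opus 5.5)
By Definition~\ref{def:completion-number}, the hypothesis \(k-1\geq\nu(\rho)\) supplies a minimal family \(g_1,\ldots,g_m\in G\) with \(m=\nu(\rho)\leq k-1\) and
\[
\bigcap_{a=1}^{m}\ker(\rho(g_a)-I)=V^G.
\]
The plan is to pad this family to length exactly \(k-1\) by adding \(g_{m+1}=\cdots=g_{k-1}=e\). Since \(\rho(e)=I\), each padded element contributes \(\ker(\rho(e)-I)=V\) and leaves the intersection unchanged. (If \(m=0\), the representation is trivial, and all relative matrices are \(I\).) Fixing a unitary identification \(V\cong\C^r\), I would then set \(W_{a+1}:=\rho(g_a)\in U(r)\) for \(a=1,\ldots,k-1\).

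Next I would build the bundle on \(\mathbb L_k\) from the normalized cocycle of Proposition~\ref{prop:unitary-Lk-normalization}: \(g_{a1}=I\) on \(x<0\) and \(g_{a1}=W_a\) on \(x>0\), with the remaining transitions \(g_{ba}=W_bW_a^{-1}\) on \(x>0\). Arbitrary tuples in \(U(r)\) are admissible there, and the cocycle identities hold by construction. This gives a locally constant unitary gluing system whose relative matrices all lie in \(\rho(G)\).

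Theorem~\ref{thm:unitary-Lk-main} identifies the transmitting sector as \(V_{\mathbf W}=\bigcap_{a=2}^{k}\ker(W_a-I)\), with transmission operator \(P_{\mathbf W}\). By the padding, this equals \(V^G\). As a consistency check, Proposition~\ref{prop:group-invariant} gives \(V_{\mathbf W}=V^{G_{\mathbf W}}\supseteq V^G\), since \(G_{\mathbf W}\subseteq\rho(G)\), and the chosen family attains the lower bound. No step is a real obstacle. The one point to watch is bookkeeping: all \(k-1\) relative transitions must be specified, and the identity element handles the surplus ones harmlessly.
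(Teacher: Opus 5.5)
Your proposal is correct and follows the paper's proof: take a tuple of length \(\nu(\rho)\) realising Definition~\ref{def:completion-number}, set \(W_{a+1}=\rho(g_a)\), fill the remaining \(k-1-\nu(\rho)\) slots with the identity, and read off \(V_{\mathbf W}=V^G\) from Theorem~\ref{thm:unitary-Lk-main}. One small imprecision sits in your optional consistency check, which the proof does not need: \(G\) is not assumed compact, so you only have \(G_{\mathbf W}\subseteq\overline{\rho(G)}\) rather than \(G_{\mathbf W}\subseteq\rho(G)\). Continuity still gives \(V^{\overline{\rho(G)}}=V^G\), so \(V_{\mathbf W}\supseteq V^G\) remains true.
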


\begin{proof}
Choose \(g_1,\ldots,g_{\nu(\rho)}\) realising Definition~\ref{def:completion-number} and set
\(W_{a+1}=\rho(g_a)\).
If \(k-1>\nu(\rho)\), take the remaining relative transitions to be the identity. Theorem~\ref{thm:unitary-Lk-main} then gives
\(V_{\mathbf W}=V^G\).
\end{proof}

If the base is required to be non-Hausdorff, the least admissible origin number for invariant completion is
\(k_{\mathrm{inv}}(\rho) = \max\{2,\nu(\rho)+1\}\).

\subsection{Completion and topological generation}

If
\(\Gamma=\langle g_1,\ldots,g_m\rangle\),
then
\(\bigcap_{a=1}^{m}\ker(\rho(g_a)-I)=V^\Gamma\).
For a continuous representation, density of \(\Gamma\) in \(G\) implies
\(V^\Gamma=V^G\).

\begin{proposition}[Comparison with topological generation]
\label{prop:completion-generation}
Let \(G\) be a topologically finitely generated compact group and let
\(\rho:G\to GL(V)\) be a continuous finite-dimensional representation. Then
\[
\nu(\rho)
\leq
d_{\mathrm{top}}(\rho(G))
\leq
d_{\mathrm{top}}(G).
\]
\end{proposition}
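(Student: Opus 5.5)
The plan is to prove the two inequalities separately. The first, \(\nu(\rho)\le d_{\mathrm{top}}(\rho(G))\), rests on the observation recorded just before the statement. Let \(d=d_{\mathrm{top}}(\rho(G))\). Since \(\rho\) is continuous and \(G\) is compact, \(\rho(G)\) is a compact subgroup of \(GL(V)\), so \(d_{\mathrm{top}}(\rho(G))\) makes sense. Choose \(h_1,\ldots,h_d\in\rho(G)\) generating a dense subgroup \(\Lambda\) of \(\rho(G)\), and write \(h_a=\rho(g_a)\) with \(g_a\in G\).

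Next I compute the common fixed space of these elements. A vector fixed by every \(h_a\) is fixed by every word in the \(h_a^{\pm1}\), so
\[
\bigcap_{a=1}^{d}\ker(\rho(g_a)-I)=V^{\Lambda}.
\]
The map \(A\mapsto Av-v\) is continuous on \(GL(V)\). Hence the stabiliser of \(v\) is closed, and a vector fixed by \(\Lambda\) is fixed by \(\overline\Lambda=\rho(G)\). This gives \(V^{\Lambda}=V^{\rho(G)}=V^G\). The elements \(g_1,\ldots,g_d\) therefore meet the condition in Definition~\ref{def:completion-number}, and \(\nu(\rho)\le d\). In the case \(d=0\), \(\rho(G)\) is trivial, so \(V^G=V\) and \(\nu(\rho)=0\).

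For the second inequality, \(d_{\mathrm{top}}(\rho(G))\le d_{\mathrm{top}}(G)\), take \(g_1,\ldots,g_n\) with \(n=d_{\mathrm{top}}(G)\) generating a dense subgroup \(\Gamma\le G\); such a set exists because \(G\) is topologically finitely generated. Then \(\rho(\Gamma)=\langle\rho(g_1),\ldots,\rho(g_n)\rangle\). By continuity, \(\rho(G)=\rho(\overline\Gamma)\subseteq\overline{\rho(\Gamma)}\subseteq\rho(G)\), where the last inclusion uses that \(\rho(G)\) is compact and hence closed. So \(\rho(\Gamma)\) is dense in \(\rho(G)\), and \(d_{\mathrm{top}}(\rho(G))\le n\).

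I expect no step to be a real obstacle. The only point needing care is the closedness of \(\rho(G)\) in \(GL(V)\), which comes from compactness and is what makes the density arguments valid. If one wants to avoid \(d_{\mathrm{top}}(\rho(G))\) entirely, the bound \(\nu\le d_{\mathrm{top}}(G)\) also follows directly by applying the fixed-space argument to the generators of a dense subgroup of \(G\).
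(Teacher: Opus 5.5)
Your proposal is correct and follows the paper's proof. For the first inequality you lift a dense generating tuple of \(\rho(G)\) to \(G\) and use that its common fixed space is \(V^{\rho(G)}=V^G\); for the second, continuity of \(\rho\) sends a dense generating tuple of \(G\) to one of \(\rho(G)\). Your extra care about closed stabilisers and the closedness of the compact image \(\rho(G)\) in \(GL(V)\) only makes explicit what the paper leaves implicit.
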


\begin{proof}
Since \(G\) is compact, \(\rho(G)\) is compact. A dense generating
\(m\)-tuple in \(\rho(G)\), with
\(m=d_{\mathrm{top}}(\rho(G))\), has common fixed space
\(V^{\rho(G)}=V^G\),
so
\(\nu(\rho)\leq d_{\mathrm{top}}(\rho(G))\).
If \(g_1,\ldots,g_d\) generate a dense subgroup of \(G\), continuity of
\(\rho\) implies that \(\rho(g_1),\ldots,\rho(g_d)\) generate a dense subgroup of \(\rho(G)\). Hence
\(d_{\mathrm{top}}(\rho(G))\leq d_{\mathrm{top}}(G)\).
\end{proof}

\begin{corollary}[Two-element completion for compact semisimple groups]
\label{cor:semisimple-two-completion}
Let \(G\) be a connected compact semisimple Lie group and let \(\rho:G\to U(V)\) be a finite-dimensional continuous unitary representation. Then
\[
\nu(\rho)\leq2.
\]
If \(G\) is nontrivial, then
\[
d_{\mathrm{top}}(G)=2.
\]
For \(G=SU(N)\), \(N\geq2\), \(\mathbb L_3\) has enough relative-gluing slots to realise \(V^{SU(N)}\) for every finite-dimensional continuous unitary representation.
\end{corollary}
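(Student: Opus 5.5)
The plan is to combine Proposition~\ref{prop:completion-generation} with the classical fact that a nontrivial connected compact semisimple Lie group is topologically generated by two elements. Once \(d_{\mathrm{top}}(G)\le 2\) is in hand, Proposition~\ref{prop:completion-generation} gives
\[
\nu(\rho)\le d_{\mathrm{top}}(\rho(G))\le d_{\mathrm{top}}(G)\le 2.
\]
The \(SU(N)\) statement then follows from Corollary~\ref{cor:origin-completion}. Taking \(k=3\) gives \(k-1=2\ge\nu(\rho)\), so \(\mathbb L_3\) carries a locally constant gluing system with relative matrices in \(\rho(SU(N))\) and transmitting sector \(V^{SU(N)}\). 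If fewer than two gluings suffice, the unused slots are padded with the identity.

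The main step is the two-generation claim, which I would prove by the standard Kuranishi/Auerbach argument. Take \(g_1\) a generic element of a maximal torus \(T\) with \(\overline{\langle g_1\rangle}=T\). Such elements exist by Kronecker's theorem, because elements with rationally independent angles form a dense, full-measure subset of \(T\). Then choose \(g_2\in G\) such that \(H:=\overline{\langle g_1,g_2\rangle}\) equals \(G\). The closed subgroup \(H\) contains \(T\), so it is a compact Lie group of full rank. Its identity component \(H_0\) has Lie algebra \(\mathfrak h\supseteq\mathfrak t\), which is a sum of \(\mathfrak t\) and certain root spaces.

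I would choose \(g_2\) so that \(\operatorname{Ad}(g_2)\mathfrak t\) is in general position, meaning its projection onto every root space is nonzero. The conjugate torus \(g_2Tg_2^{-1}\) lies in \(H\), so \(\mathfrak h\) contains \(\operatorname{Ad}(g_2)\mathfrak t\). Because \(\mathfrak h\) is \(\operatorname{Ad}(T)\)-invariant, it is a sum of weight spaces. Projecting shows that \(\mathfrak h\) contains every root space, so \(\mathfrak h=\mathfrak g\). Semisimplicity enters here: \(\mathfrak g=\mathfrak t\oplus\bigoplus_\alpha\mathfrak g_\alpha\) has no further central directions to reach. Connectedness of \(G\) then gives \(H\supseteq H_0=G\).

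The genericity condition on \(g_2\) is the complement of finitely many proper analytic subsets of \(G\), so it is nonempty. Checking this nonemptiness is the one point needing care, and I expect it to be the main technical obstacle. I would handle it by exhibiting a single element, for instance a suitable exponential of a sum of root vectors, whose \(\operatorname{Ad}\)-image of \(\mathfrak t\) meets each root space nontrivially. Each bad set is then proper and analytic in the connected analytic manifold \(G\), and a finite union of such sets cannot exhaust \(G\).

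For the lower bound when \(G\neq\{e\}\), note that \(G\) is non-abelian, since a nontrivial semisimple algebra is not abelian. The closure of a cyclic group is abelian, as already used in Proposition~\ref{prop:generator-bound}. Hence \(d_{\mathrm{top}}(G)\ge2\), and therefore \(d_{\mathrm{top}}(G)=2\). If \(G\) is trivial, \(\rho\) is trivial and \(\nu(\rho)=0\), so the bound \(\nu(\rho)\le2\) holds in that case as well.
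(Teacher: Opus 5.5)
Your proof is correct and has the same overall structure as the paper's proof. Topological two-generation gives \(\nu(\rho)\le2\): you pass through Proposition~\ref{prop:completion-generation}, while the paper reads off \(V^{g_1}\cap V^{g_2}=V^G\) directly, which is the same step. Non-commutativity, set against the fact that cyclic closures are abelian, gives \(d_{\mathrm{top}}(G)\ge2\). The \(SU(N)\) clause is Corollary~\ref{cor:origin-completion} with \(k=3\).

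The genuine difference is the key lemma. The paper cites the Breuillard--Gelander dense two-generation theorem for semisimple real Lie groups, a deep result that also covers noncompact groups. You prove two-generation directly by the classical compact-group mechanism. You take a Kronecker generator \(g_1\) of a maximal torus \(T\), and then a \(g_2\) with \(\operatorname{Ad}(g_2)\mathfrak t\) in general position. The \(\operatorname{Ad}(T)\)-stable Lie algebra \(\mathfrak h\supseteq\mathfrak t+\operatorname{Ad}(g_2)\mathfrak t\) then splits into weight spaces and must contain every root space. This is self-contained and shows exactly why two elements suffice; the citation buys brevity and generality beyond the compact case.

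Two small corrections to your version. First, semisimplicity does not enter at the root-decomposition step. The centre of \(\mathfrak g\) lies in \(\mathfrak t\), so your argument shows that every compact connected Lie group is topologically two-generated; semisimplicity is used only for the lower bound. Second, the nonemptiness step you flag as the main obstacle is routine. Take \(H\in\mathfrak t\) regular and \(Y\in\mathfrak g\) whose component \(Y_\beta\) in each root space \(\mathfrak g_\beta\) of \(\mathfrak g_{\mathbb C}\) is nonzero. Then \(\operatorname{Ad}(\exp sY)H=H+s[Y,H]+O(s^2)\) has \(\mathfrak g_\beta\)-component \(-s\beta(H)Y_\beta+O(s^2)\), which is nonzero for every root \(\beta\) once \(s\neq0\) is small. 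So the single explicit element \(g_2=\exp(sY)\) works for all roots at once, and the analytic-subset argument is unnecessary.
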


\begin{proof}
If \(G\) is trivial, then \(\nu(\rho)=0\). Assume that \(G\) is nontrivial.
The dense-two-generation theorem for connected semisimple real Lie groups
\citep{BreuillardGelander2003,BreuillardGelander2026Revisited}, applied to \(G\) as a dense subgroup of
itself, supplies \(g_1,g_2\in G\) with dense generated subgroup. Hence
\(V^{g_1}\cap V^{g_2}=V^G\),
which gives \(\nu(\rho)\leq2\). A nontrivial connected compact semisimple group is non-Abelian, while the closure of a cyclic subgroup is abelian, so one element cannot generate \(G\) densely.
\end{proof}

For nontrivial connected compact semisimple \(G\), dense generating pairs
form a Zariski-open dense subset of \(G^2\)
\citep{Chirvasitu2021GeneratingTuples}.

Both inequalities \(\nu(\rho) \leq d_{\mathrm{top}}(\rho(G)) \leq d_{\mathrm{top}}(G)\) can be strict. The first can be strict because a proper subgroup of
\(\rho(G)\) may have the same fixed space as \(\rho(G)\); the second can be
strict when \(\rho\) has nontrivial kernel or otherwise lowers the
topological generator number of the image.

\subsection{Optimal fixed-space ranks}

For a finite-dimensional unitary representation of a compact group, define
\begin{equation}
\label{eq:optimal-rank}
r_m(\rho)
:=
\min_{g_1,\ldots,g_m\in G}
\dim
\left(
\bigcap_{a=1}^{m}\ker(\rho(g_a)-I)
\right),
\qquad
r_0(\rho):=\dim V.
\end{equation}
Identity elements may be inserted, so this is also the minimum attainable with at most \(m\) nontrivial constraints.

\begin{proposition}[Optimal-rank envelope]
\label{prop:rank-profile-properties}
The sequence \(r_m(\rho)\) satisfies
\[
\dim V=r_0(\rho)
\geq r_1(\rho)
\geq r_2(\rho)
\geq\cdots
\geq\dim V^G,
\]
and
\[
\nu(\rho)
=
\min\{m:r_m(\rho)=\dim V^G\}.
\]
For \(m\geq\nu(\rho)\),
\[
r_m(\rho)=\dim V^G.
\]
\end{proposition}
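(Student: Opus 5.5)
The three assertions follow directly from the definition of \(r_m(\rho)\) in \eqref{eq:optimal-rank} together with Definition~\ref{def:completion-number}. I will prove the monotone chain first, then the characterisation of \(\nu(\rho)\), and finally the stabilisation.

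\emph{Monotone chain.} For \(m\geq0\), I will show \(r_{m+1}(\rho)\leq r_m(\rho)\) by padding. Take an \(m\)-tuple \(g_1,\ldots,g_m\) attaining \(r_m(\rho)\); the minimum is attained because dimensions form a nonempty subset of \(\mathbb N_0\). Append \(g_{m+1}=e\). Then \(\ker(\rho(e)-I)=V\), so the common fixed space does not change. This yields an \((m+1)\)-tuple with the same dimension. For \(m=0\) the empty intersection is \(V\), so \(r_0(\rho)=\dim V\) is consistent. For the lower bound, every \(g_a\) fixes \(V^G\), so each intersection contains \(V^G\). Hence \(r_m(\rho)\geq\dim V^G\) for every \(m\).

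\emph{Characterisation of \(\nu\).} The key point is that an intersection \(U\) satisfies \(U\supseteq V^G\) automatically. So \(U=V^G\) holds exactly when \(\dim U=\dim V^G\). Therefore \(r_m(\rho)=\dim V^G\) holds if and only if some \(m\)-tuple has common fixed space equal to \(V^G\). Taking the minimal such \(m\) gives precisely \(\nu(\rho)\) as in Definition~\ref{def:completion-number}. This minimum exists and is finite by Theorem~\ref{thm:finite-invariant-completion}, since \(\nu(\rho)\leq\dim V-\dim V^G\).

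\emph{Stabilisation.} For \(m\geq\nu(\rho)\), monotonicity gives \(r_m(\rho)\leq r_{\nu(\rho)}(\rho)=\dim V^G\). The lower bound gives \(r_m(\rho)\geq\dim V^G\), so \(r_m(\rho)=\dim V^G\).

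The only step needing any care is the containment argument, which turns equality of subspaces into equality of dimensions. It is immediate because \(V^G\) lies inside every fixed space. Compactness and unitarity are not needed beyond \(\rho\) being a representation on a finite-dimensional space.
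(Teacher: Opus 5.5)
Your proof is correct and follows the same route as the paper. You pad with the identity for monotonicity, use the containment \(V^G\subseteq\bigcap_a\ker(\rho(g_a)-I)\) for the lower bound, and unwind Definition~\ref{def:completion-number}, spelling out the containment-plus-equal-dimension step and the finiteness of \(\nu(\rho)\) (via Theorem~\ref{thm:finite-invariant-completion}) that the paper leaves implicit.
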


\begin{proof}
Appending an identity element shows
\(r_{m+1}(\rho)\leq r_m(\rho)\).
Every common fixed space contains \(V^G\), which gives the lower bound. The remaining statements are equivalent to Definition~\ref{def:completion-number}.
\end{proof}

The minima \(r_m(\rho)\) are optimised independently in \(m\); a fixed nested tuple has the actual ranks \(s_j(g_1,\ldots,g_j)\).

\begin{theorem}[Origin--rank correspondence]
\label{thm:origin-rank-correspondence}
Let \(\rho:G\to U(V)\) be a finite-dimensional unitary representation of a compact group. Among locally constant gluing systems on \(\mathbb L_k\) with relative matrices in \(\rho(G)\),
\[
\min_{\mathbf W}\operatorname{rank}T_{\mathbf W}
=
r_{k-1}(\rho).
\]
Exact \(G\)-invariant transmission is realizable if and only if
\[
k-1\geq\nu(\rho).
\]
\end{theorem}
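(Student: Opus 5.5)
The plan is to reduce both assertions to Theorem~\ref{thm:unitary-Lk-main} and the definitions of \(r_m(\rho)\) and \(\nu(\rho)\); no new analysis is needed. A locally constant gluing system on \(\mathbb L_k\) is, by Proposition~\ref{prop:unitary-Lk-normalization}, determined up to isomorphism by a normalized tuple \(\mathbf W=(W_2,\ldots,W_k)\). I will read the phrase ``relative matrices in \(\rho(G)\)'' as the requirement that each \(W_{a}=\rho(g_{a-1})\) for some \(g_{a-1}\in G\). For any such system, part (iv) of Theorem~\ref{thm:unitary-Lk-main} gives \(T_{\mathbf W}=P_{\mathbf W}\). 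Hence
\[
\operatorname{rank}T_{\mathbf W}=\dim V_{\mathbf W}=\dim\bigcap_{a=1}^{k-1}\ker(\rho(g_a)-I).
\]

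The admissible systems are thus in bijection with \((k-1)\)-tuples \((g_1,\ldots,g_{k-1})\in G^{k-1}\), and the rank is exactly the quantity minimised in \eqref{eq:optimal-rank} with \(m=k-1\). Taking minima on both sides gives
\[
\min_{\mathbf W}\operatorname{rank}T_{\mathbf W}=r_{k-1}(\rho).
\]
Both minima exist because the dimensions are nonnegative integers, so no compactness argument is needed.

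For the second assertion, exact \(G\)-invariant transmission means that some admissible \(\mathbf W\) has \(T_{\mathbf W}=P_G\), i.e.\ \(V_{\mathbf W}=V^G\). Every common fixed space of elements of \(\rho(G)\) contains \(V^G\), so \(V_{\mathbf W}=V^G\) holds if and only if \(\dim V_{\mathbf W}=\dim V^G\). Exact transmission is therefore realisable if and only if \(r_{k-1}(\rho)=\dim V^G\). By Proposition~\ref{prop:rank-profile-properties}, \(r_m(\rho)=\dim V^G\) holds exactly when \(m\geq\nu(\rho)\), since the sequence \(r_m\) is monotone and first attains \(\dim V^G\) at \(m=\nu(\rho)\). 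Setting \(m=k-1\) gives the criterion \(k-1\geq\nu(\rho)\). The sufficiency direction also follows directly from Corollary~\ref{cor:origin-completion}.

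The only point needing care is the bookkeeping identification between normalized gluing tuples on \(\mathbb L_k\) and \((k-1)\)-tuples of group elements. The tuple has exactly \(k-1\) free relative slots, and padding with identity elements is allowed because \(e\in G\). Once this identification is fixed, the rest is definitional.
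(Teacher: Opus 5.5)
Your proposal is correct and follows the paper's proof: Theorem~\ref{thm:unitary-Lk-main} identifies \(\operatorname{rank}T_{\mathbf W}\) with \(\dim\bigcap_{a=1}^{k-1}\ker(\rho(g_a)-I)\), minimising gives \(r_{k-1}(\rho)\), and the invariant-transmission criterion follows from the definition of \(\nu(\rho)\); your use of Proposition~\ref{prop:rank-profile-properties} and of \(V^G\subseteq V_{\mathbf W}\) simply spells out that last step. One small correction: when \(\rho\) has nontrivial kernel, the map \((g_1,\ldots,g_{k-1})\mapsto(\rho(g_1),\ldots,\rho(g_{k-1}))\) onto admissible tuples is surjective but not a bijection, and surjectivity is all the minimisation needs.
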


\begin{proof}
A normalized \(\mathbb L_k\) model has \(k-1\) relative matrices. Writing
\(W_{a+1}=\rho(g_a)\),
Theorem~\ref{thm:unitary-Lk-main} gives
\(\operatorname{rank}T_{\mathbf W} = \dim\bigcap_{a=1}^{k-1}\ker(\rho(g_a)-I)\).
Minimization gives the first identity; the second follows from the definition of \(\nu(\rho)\).
\end{proof}
\section{Invariant completion for compact symmetry groups}
\label{sec:compact-completion}

\subsection{One-element completion}

\begin{proposition}[Criterion for one-element completion]
\label{prop:one-element-completion}
Let \(\rho:G\to U(V)\) be a finite-dimensional unitary representation of a compact group. Then
\[
  \nu(\rho)\leq1
\]
if and only if there exists \(g\in G\) such that
\[
  \ker(\rho(g)-I)=V^G.
\]
Equivalently, there exists \(g\in G\) such that the restriction
\[
  \rho(g)|_{(V^G)^\perp}
\]
has no eigenvalue \(1\). Moreover,
\[
  \nu(\rho)=1
\]
if and only if \(V^G\neq V\) and such an element \(g\) exists.
\end{proposition}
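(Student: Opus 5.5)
The plan is to unwind Definition~\ref{def:completion-number} at \(m\le 1\) and then use unitarity to pass between the fixed space of \(\rho(g)\) and its restriction to \((V^G)^\perp\). No deep input is needed; the argument is bookkeeping around the identity \(\ker(\rho(g)-I)\supseteq V^G\), valid for every \(g\in G\).

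\emph{Step 1: the first equivalence.} Suppose \(\nu(\rho)\le1\). If \(\nu(\rho)=1\), the definition supplies \(g\) with \(\ker(\rho(g)-I)=V^G\). If \(\nu(\rho)=0\), then \(V=V^G\), and \(g=e\) works because \(\ker(\rho(e)-I)=V=V^G\). Conversely, such a \(g\) is a one-element tuple realising the minimum in Definition~\ref{def:completion-number}, so \(\nu(\rho)\le1\).

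\emph{Step 2: the reformulation.} Since \(V^G\) is \(\rho(G)\)-invariant and \(\rho(g)\) is unitary, \((V^G)^\perp\) is also \(\rho(g)\)-invariant. Hence \(\rho(g)\) is block diagonal with respect to \(V=V^G\oplus(V^G)^\perp\), and it acts as the identity on the first block. It follows that
\[
\ker(\rho(g)-I)=V^G\oplus\ker\bigl(\rho(g)|_{(V^G)^\perp}-I\bigr).
\]
So equality with \(V^G\) holds exactly when the restriction has no eigenvalue \(1\). When \((V^G)^\perp=0\) the condition is vacuous, which is consistent with Step 1.

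\emph{Step 3: the case \(\nu(\rho)=1\).} This means \(\nu(\rho)\le1\) and \(\nu(\rho)\neq0\). By the remark after Definition~\ref{def:completion-number}, \(\nu(\rho)=0\) holds exactly when \(V=V^G\). The characterisation therefore follows from Step 1.

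The only point needing care is the invariance of \((V^G)^\perp\) under \(\rho(g)\) in Step 2. This is where unitarity is used: a unitary operator preserves the orthogonal complement of an invariant subspace.
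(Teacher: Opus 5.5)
Your proposal is correct and follows essentially the same route as the paper: unitarity gives the orthogonal splitting \(\ker(\rho(g)-I)=V^G\oplus\ker(\rho(g)|_{(V^G)^\perp}-I)\), which is combined with unwinding the definition of \(\nu(\rho)\) at \(m\le 1\) and the fact that \(\nu(\rho)=0\) exactly when \(V=V^G\). Your explicit handling of the case \(\nu(\rho)=0\) by taking \(g=e\) is a small point of care that the paper leaves implicit.
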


\begin{proof}
Since \(V^G\) is fixed pointwise and the representation is unitary,
\(\ker(\rho(g)-I) = V^G \oplus \ker\!\left( \rho(g)|_{(V^G)^\perp}-I \right)\).
Hence a single group element has fixed space exactly \(V^G\) if and only if
the second summand vanishes. This is equivalent to \(\nu(\rho)\leq1\). If
\(V^G=V\), then \(\nu(\rho)=0\); otherwise one-element completion gives
\(\nu(\rho)=1\).
\end{proof}

\subsection{\texorpdfstring{\(U(1)\)}{U(1)} representations}

Let
\(V=\bigoplus_{q\in\mathbb Z}V_q, \qquad \rho(e^{i\theta})|_{V_q}=e^{iq\theta}I\),
be a finite-dimensional unitary representation of \(U(1)\). Then \(V^{U(1)}=V_0\) and
\[
\ker(\rho(e^{i\theta})-I)
=
\bigoplus_{\substack{q\in\mathbb Z\\q\theta\in2\pi\mathbb Z}}V_q.
\]

\begin{proposition}[\texorpdfstring{\(U(1)\)}{U(1)} completion]
\label{prop:U1-completion}
Every finite-dimensional nontrivial unitary representation of \(U(1)\) satisfies
\[
\nu(\rho)=1.
\]
\end{proposition}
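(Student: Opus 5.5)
The plan is to apply Proposition~\ref{prop:one-element-completion} with a single well-chosen angle. Because \(\rho\) is nontrivial, \(V\neq V_0=V^{U(1)}\), so \(\nu(\rho)\ge1\). It then suffices to produce one \(\theta\) with \(\ker(\rho(e^{i\theta})-I)=V_0\).

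By the displayed kernel formula, this amounts to finding \(\theta\) such that \(q\theta\notin2\pi\mathbb Z\) for every nonzero charge \(q\) occurring in \(V\), that is, for every \(q\) with \(V_q\neq0\). Finite dimensionality makes the set \(Q:=\{q\neq0:V_q\neq0\}\) finite.

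Take \(\theta=2\pi\alpha\) with \(\alpha\) irrational, for instance \(\alpha=\sqrt2\). Then \(q\theta\in2\pi\mathbb Z\) would give \(q\alpha\in\mathbb Z\), which forces \(q=0\). An alternative that avoids irrationals is \(\theta=2\pi/(N+1)\) with \(N=\max_{q\in Q}|q|\). In either case the kernel is exactly \(V_0\), so Proposition~\ref{prop:one-element-completion} gives \(\nu(\rho)\le1\), and together with \(V^G\neq V\) this yields \(\nu(\rho)=1\).

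No step is a real obstacle. The only point needing care is the finiteness of \(Q\), which is what makes the rational choice valid, and the irrational choice handles all charges uniformly regardless.
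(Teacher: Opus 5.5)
Your proof is correct and matches the paper's argument: the paper likewise chooses \(\theta\) outside the finitely many resonance sets \(\{\theta: q\theta\in2\pi\mathbb Z\}\) for the occurring nonzero weights, so that \(\ker(\rho(e^{i\theta})-I)=V_0=V^{U(1)}\), and nontriviality gives \(\nu(\rho)\geq1\). Your explicit choices (an irrational multiple of \(2\pi\), or \(2\pi/(N+1)\)) just make that choice concrete, and the irrational one does not even need the weight set to be finite.
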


\begin{proof}
Only finitely many nonzero weights occur. Choose \(\theta\) outside the finite union of sets on which \(e^{iq\theta}=1\) for an occurring nonzero weight \(q\). Then
\(\ker(\rho(e^{i\theta})-I)=V_0\).
\end{proof}

\subsection{Regular torus elements}

Let \(G\) be compact and connected, \(T\leq G\) a maximal torus, and \(V=\bigoplus_\lambda V_\lambda\) the corresponding weight decomposition. For \(t\in T\),
\(V^t = \bigoplus_{\lambda(t)=1}V_\lambda\).
Outside the finite union of resonance loci determined by the nonzero occurring weights,
\(V^t=V_0\).
If
\(V_0\supsetneq V^G\),
no element of \(T\) is invariant-complete, because every \(t\in T\) fixes
\(V_0\). For regular \(t\), its centralizer is \(T\); any further element
that removes residual zero-weight vectors must therefore lie outside the
centralizer of \(t\).

\subsection{\texorpdfstring{\(SU(2)\)}{SU(2)} representations}

Let \(V_j\) be the spin-\(j\) irreducible representation. Relative to a maximal torus, its weights are
\(-j,-j+1,\ldots,j\).
A generic torus element has fixed space
\[
V_j^t
=
\begin{cases}
\{0\}, & j\in\frac12+\mathbb Z_{\geq0},\\[1mm]
\mathbb C|j,0\rangle, & j\in\mathbb Z_{\geq0},
\end{cases}
\]
whereas
\[
V_j^{SU(2)}
=
\begin{cases}
\mathbb C, & j=0,\\
\{0\}, & j>0.
\end{cases}
\]
These are standard consequences of the weight decomposition
\citep{Hall2015,BrockerTomDieck1985}.

\begin{proposition}[Irreducible \texorpdfstring{\(SU(2)\)}{SU(2)} completion]
\label{prop:SU2-irrep-completion}
For \(j>0\),
\[
\nu(V_j)
=
\begin{cases}
1, & j\in\frac12+\mathbb Z_{\geq0},\\
2, & j\in\mathbb Z_{>0}.
\end{cases}
\]
The trivial representation satisfies \(\nu(V_0)=0\).
\end{proposition}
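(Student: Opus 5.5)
The plan splits into the half-integer and integer cases, and in each case proves an upper bound by exhibiting group elements and a lower bound by ruling out fewer elements. The tools are Proposition~\ref{prop:one-element-completion}, Corollary~\ref{cor:semisimple-two-completion}, and the weight description of generic torus fixed spaces recalled just before the statement. Since \(j>0\), \(V_j\) is a nontrivial irreducible representation, so \(V_j^{SU(2)}=\{0\}\). Hence \(\nu(V_j)\geq1\), because the empty family has fixed space \(V_j\neq\{0\}\). The trivial case \(\nu(V_0)=0\) holds by definition.

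\emph{Half-integer \(j\).} All weights \(-j,\ldots,j\) are nonzero. Choose \(t=\operatorname{diag}(e^{i\theta},e^{-i\theta})\) in the maximal torus with \(\theta\) avoiding the finitely many resonances \(2q\theta\in2\pi\mathbb Z\), where \(q\) ranges over the weights and the weight-\(q\) eigenvalue is \(e^{2iq\theta}\). The displayed formula for \(V_j^t\) then gives \(V_j^t=\{0\}=V_j^{SU(2)}\). By Proposition~\ref{prop:one-element-completion}, \(\nu(V_j)\leq1\), and therefore \(\nu(V_j)=1\).

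\emph{Integer \(j>0\).} The upper bound \(\nu(V_j)\leq2\) is immediate from Corollary~\ref{cor:semisimple-two-completion}, since \(SU(2)\) is connected, compact and semisimple. The main step is the lower bound \(\nu(V_j)\geq2\). By Proposition~\ref{prop:one-element-completion}, it suffices to show that every \(g\in SU(2)\) has \(\ker(\rho_j(g)-I)\neq\{0\}\). Every element of a compact connected Lie group is conjugate into the maximal torus, so write \(g=hth^{-1}\) with \(t\in T\). The eigenvalues of \(\rho_j(t)\) are \(e^{2iq\theta}\) for \(q=-j,\ldots,j\), and because \(j\) is an integer the weight \(q=0\) occurs. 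Thus \(\rho_j(t)\) fixes the nonzero vector \(|j,0\rangle\), and \(\rho_j(g)\) fixes \(\rho_j(h)|j,0\rangle\neq0\). No single element is therefore invariant-complete, and \(\nu(V_j)=2\).

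The main obstacle is this lower bound for integer \(j\). It rests on two facts: conjugation into the maximal torus, which holds for every element because \(SU(2)\) is connected and compact, and the presence of the zero weight exactly when \(j\) is an integer. The upper bound could be confirmed independently of Corollary~\ref{cor:semisimple-two-completion} by taking a generic torus element \(t\), whose fixed space is \(\mathbb C|j,0\rangle\), together with a second element \(g\) that moves \(|j,0\rangle\). Such a \(g\) exists because \(V_j\) is irreducible and \(j>0\), so \(|j,0\rangle\notin V_j^{SU(2)}\). Then the common fixed space of \(t\) and \(g\) is \(\mathbb C|j,0\rangle\cap\ker(\rho_j(g)-I)=\{0\}\), which matches the greedy argument of Theorem~\ref{thm:finite-invariant-completion}.
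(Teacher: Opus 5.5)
Your proof is correct and follows the paper's argument. A generic torus element with no fixed vector handles half-integral \(j\). For integral \(j>0\), conjugation into the maximal torus shows that every element fixes a zero-weight line, which gives \(\nu\geq2\), and Corollary~\ref{cor:semisimple-two-completion} gives \(\nu\leq2\). Your alternative upper bound is also valid and slightly more elementary, since it avoids the dense two-generation theorem: pair a generic torus element, whose fixed space is \(\mathbb C|j,0\rangle\), with any element that moves \(|j,0\rangle\).
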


\begin{proof}
For half-integral \(j\), a generic torus element has no fixed vector, so \(\nu(V_j)=1\). For positive integral \(j\), every element of \(SU(2)\) lies in a maximal torus and fixes the corresponding zero-weight line; hence \(\nu(V_j)\geq2\). Corollary~\ref{cor:semisimple-two-completion} gives \(\nu(V_j)\leq2\).
\end{proof}

For a general finite-dimensional representation
\(V\cong\bigoplus_j M_j\otimes V_j, \qquad V^{SU(2)}\cong M_0\).

\begin{proposition}[Reducible \texorpdfstring{\(SU(2)\)}{SU(2)} completion]
\label{prop:SU2-reducible-completion}
If \(M_j=0\) for every positive integral \(j\), then
\[
\nu(\rho)=
\begin{cases}
0,&V=V^{SU(2)},\\
1,&V\neq V^{SU(2)}.
\end{cases}
\]
If \(M_j\neq0\) for at least one positive integral \(j\), then
\[
\nu(\rho)=2.
\]
\end{proposition}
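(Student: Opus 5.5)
The plan is to reduce everything to the one-element criterion of Proposition~\ref{prop:one-element-completion} and the two-element bound of Corollary~\ref{cor:semisimple-two-completion}, using the isotypic decomposition \(V\cong\bigoplus_j M_j\otimes V_j\) with \(V^{SU(2)}\cong M_0\otimes V_0\). For an element \(g\) of \(SU(2)\) acting by \(\rho(g)=\bigoplus_j I_{M_j}\otimes\rho_j(g)\), the fixed space decomposes blockwise as
\[
\ker(\rho(g)-I)=\bigoplus_j M_j\otimes\ker(\rho_j(g)-I),
\]
so every question about fixed spaces reduces to the irreducible summands that actually occur.

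First, the case \(V=V^{SU(2)}\) gives \(\nu(\rho)=0\) by definition. Next, suppose that \(M_j=0\) for every positive integral \(j\) and that \(V\neq V^{SU(2)}\). Then every nontrivial summand \(V_j\) that occurs has half-integral \(j\). Choose a torus element \(t\) outside the finitely many resonance loci of the occurring nonzero weights. Each such weight is half-integral, hence nonzero, so \(\ker(\rho_j(t)-I)=0\) for every occurring half-integral \(j\). The blockwise formula then gives \(\ker(\rho(t)-I)=M_0\otimes V_0=V^{SU(2)}\), and Proposition~\ref{prop:one-element-completion} yields \(\nu(\rho)=1\).

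Finally, suppose that \(M_j\neq0\) for some positive integral \(j\). The upper bound \(\nu(\rho)\le2\) is Corollary~\ref{cor:semisimple-two-completion}. For the lower bound, take any single \(g\in SU(2)\). It is conjugate into the maximal torus, so it lies in some maximal torus and fixes the corresponding zero-weight line \(\ell_g\subseteq V_j\). Then \(M_j\otimes\ell_g\) is a nonzero subspace of \(\ker(\rho(g)-I)\). This subspace is not contained in \(V^{SU(2)}\), since it lies in the \(V_j\)-isotypic component with \(j>0\), which meets \(M_0\otimes V_0\) trivially. Hence no single element is invariant-complete. Since \(V\ne V^{SU(2)}\), we also have \(\nu\ne0\), and therefore \(\nu(\rho)=2\).

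The main point needing care is this lower bound, specifically that the zero-weight fixed line survives for every element and not only for torus elements. The fact that every element of the connected compact group \(SU(2)\) lies in a maximal torus, as already invoked in Proposition~\ref{prop:SU2-irrep-completion}, handles this. The remaining steps are direct applications of earlier results.
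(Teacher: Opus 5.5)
Your proposal is correct and follows essentially the same argument as the paper. The paper's proof has the same three ingredients: a generic torus element for one-element completion when only half-integral nontrivial spins occur; for a positive integral spin, every element lies in a maximal torus and fixes a zero-weight line outside \(V^{SU(2)}\); and Corollary~\ref{cor:semisimple-two-completion} for the upper bound \(2\). Your blockwise fixed-space decomposition simply makes explicit what the paper leaves implicit.
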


\begin{proof}
When no positive integral-spin summand occurs, a generic torus element fixes exactly \(M_0\). If a positive integral-spin summand occurs, every group element fixes a zero-weight line in that summand, so its fixed space strictly contains \(M_0\). One-element completion is therefore impossible. Corollary~\ref{cor:semisimple-two-completion} gives the upper bound \(2\).
\end{proof}

For the diagonal action on two spin-\(\frac12\) factors,
\(V_{\frac12}\otimes V_{\frac12} \cong V_0\oplus V_1\).
The invariant line is
\[
V^{SU(2)}
=
\operatorname{span}\{|\Psi^-\rangle\},
\qquad
|\Psi^-\rangle
=
\frac{|\uparrow\downarrow\rangle-|\downarrow\uparrow\rangle}{\sqrt2}.
\]
A generic torus element fixes the singlet and the zero-weight triplet state, so
\(r_1=2\).
Proposition~\ref{prop:SU2-reducible-completion} gives
\(\nu=2, \qquad r_2=1\).
An invariant-complete two-gluing realisation therefore has
\(T=|\Psi^-\rangle\langle\Psi^-|\).
The tensor product is used as an internal representation over a single spatial coordinate; no two-particle configuration-space dynamics is assumed.
\section{\texorpdfstring{\(SU(N)\)}{SU(N)} tensor invariants and transmission multiplicities}
\label{sec:SUN-tensors}

Let \(F:=\mathbb C^N\) denote the defining representation of \(SU(N)\), and let \(\overline F\)
denote its complex-conjugate representation. We consider the mixed tensor
spaces
\begin{equation}
\label{eq:Vpq}
V_{p,q}
:=
F^{\otimes p}\otimes\overline F^{\otimes q},
\qquad
p,q\geq0.
\end{equation}
The invariant sectors of these representations are standard objects of
classical invariant theory and Schur--Weyl duality
\citep{Weyl1997ClassicalGroups,FultonHarris1991,GoodmanWallach2009,
Howe1989ClassicalInvariantTheory}.
For invariant-complete gluing, Theorem~\ref{thm:unitary-Lk-main} identifies these invariant multiplicities with transmission ranks.

\subsection{The center obstruction}
\label{subsec:center-obstruction}

The center of \(SU(N)\) is
\[
Z(SU(N))
=
\left\{
\zeta I_N:\zeta^N=1
\right\}
\cong
\mathbb Z_N.
\]
On the defining representation,
\(\rho(\zeta I_N)=\zeta I_F\),
whereas on the conjugate representation,
\(\overline\rho(\zeta I_N) = \zeta^{-1}I_{\overline F}\).
The central element \(\zeta I_N\) acts on \(V_{p,q}\) by the scalar
\begin{equation}
\label{eq:center-action}
\zeta^{p-q}.
\end{equation}

\begin{proposition}[Center selection rule]
\label{prop:center-selection}
For
\[
V_{p,q}=F^{\otimes p}\otimes\overline F^{\otimes q},
\]
one has
\[
V_{p,q}^{SU(N)}\neq\{0\}
\quad\Longleftrightarrow\quad
p-q\equiv0\pmod N.
\]
\end{proposition}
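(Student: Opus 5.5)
The plan is to prove the two implications separately. The forward one is the center obstruction, and the reverse one is an explicit construction of a nonzero invariant.

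\emph{Necessity.} Let \(v\in V_{p,q}^{SU(N)}\) be nonzero. By \eqref{eq:center-action}, the central element \(\zeta I_N\) acts on \(v\) by the scalar \(\zeta^{p-q}\). Invariance then forces \(\zeta^{p-q}v=v\), so \(\zeta^{p-q}=1\) for every \(N\)-th root of unity \(\zeta\). Taking \(\zeta=e^{2\pi i/N}\) gives \(N\mid p-q\).

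\emph{Sufficiency.} Assume \(p-q=Ns\) with \(s\in\mathbb Z\). By symmetry under \(F\leftrightarrow\overline F\) (complex conjugation), I may take \(s\ge0\), so \(p=q+Ns\). The invariant will be a tensor product of two kinds of building blocks.
\begin{itemize}
\item There are \(q\) copies of the canonical pairing invariant in \(F\otimes\overline F\). Since the representation is unitary, \(\overline F\cong F^*\), and the identity map corresponds to the invariant tensor \(\sum_{i} e_i\otimes\overline e_i\).
\item There are \(s\) copies of the determinant invariant \(\varepsilon=\sum_{\sigma\in S_N}\operatorname{sgn}(\sigma)\,e_{\sigma(1)}\otimes\cdots\otimes e_{\sigma(N)}\in F^{\otimes N}\). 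This tensor spans \(\Lambda^NF\), and \(g\) acts on that line by \(\det g=1\).
\end{itemize}
The tensor product of all these blocks lies in \(F^{\otimes(q+Ns)}\otimes\overline F^{\otimes q}\) after a permutation of tensor factors. That permutation is an \(SU(N)\)-equivariant isomorphism, so the rearranged tensor lands in \(V_{p,q}\). A tensor product of nonzero vectors is nonzero, and each factor is invariant, so the result is a nonzero invariant. The edge cases \(p=q=0\) (the trivial line) and \(q=0\) (products of \(\varepsilon\) alone) are covered by the same construction.

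The only step that needs care is checking that the two building blocks are genuinely invariant.
\begin{itemize}
\item For \(\varepsilon\): \(g^{\otimes N}\) restricted to \(\Lambda^NF\) is multiplication by \(\det g\), and \(\det g=1\) on \(SU(N)\).
\item For the pairing tensor: \(\sum_i ge_i\otimes\overline{ge_i}=\sum_i e_i\otimes\overline e_i\) because \(g\) is unitary, so \(\{ge_i\}\) is again an orthonormal basis and the pairing tensor does not depend on the choice of orthonormal basis.
\end{itemize}
The case \(s<0\) is reduced to \(s>0\) by taking the conjugate invariant \(\overline\varepsilon\in\overline F^{\otimes N}\), which is invariant by the same determinant argument.
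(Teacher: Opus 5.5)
Your proposal is correct and follows the paper's proof essentially step for step. Necessity comes from the central character \(\zeta^{p-q}\) of \(\zeta I_N\), and sufficiency comes from permuting the factors of \(\delta^{\otimes q}\otimes\varepsilon_N^{\otimes\ell}\), or of \(\delta^{\otimes p}\otimes\overline{\varepsilon_N}^{\otimes(-\ell)}\) when \(\ell<0\). Your explicit checks that the pairing tensor is invariant by unitarity and that \(\varepsilon\) is invariant because \(\det g=1\) fill in details the paper leaves implicit.
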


\begin{proof}
If \(v\in V_{p,q}^{SU(N)}\), every central element \(\zeta I_N\) fixes \(v\), so \(\zeta^{p-q}v=v\) for every \(N\)-th root of unity \(\zeta\). A nonzero invariant therefore requires
\(p-q\equiv0\pmod N\).

Conversely, suppose \(p-q=\ell N\). Let
\(\delta=\sum_{i=1}^{N}e_i\otimes\overline e_i\in F\otimes\overline F\).
If \(\ell\geq0\), a permutation of the factors identifies a nonzero invariant tensor of the form \(\delta^{\otimes q}\otimes\varepsilon_N^{\otimes\ell}\) inside \(V_{p,q}\). If \(\ell<0\), use
\(\delta^{\otimes p}\otimes\overline{\varepsilon_N}^{\otimes(-\ell)}\).
Hence the congruence is sufficient.
\end{proof}

The congruence determines existence but not invariant multiplicity.

For invariant-complete gluing with relative matrices in \(\rho(SU(N))\),
Proposition~\ref{prop:center-selection} gives the following scattering
consequence.

\begin{corollary}[Complete reflection from the center obstruction]
\label{cor:center-reflection}
Suppose the relative gluings on \(\mathbb L_k\) are invariant-complete for the
representation \(V_{p,q}\). If
\[
p-q\not\equiv0\pmod N,
\]
then
\[
T=0,
\qquad
R=-I_{V_{p,q}}.
\]
\end{corollary}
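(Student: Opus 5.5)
The plan is to read the corollary directly off Proposition~\ref{prop:center-selection} together with the invariant-complete form of Theorem~\ref{thm:unitary-Lk-main}. By hypothesis the relative gluings \(W_{a+1}=\rho(g_a)\) on \(\mathbb L_k\) are invariant-complete for \(V_{p,q}\). That is,
\[
V_{\mathbf W}=\bigcap_{a}\ker(\rho(g_a)-I)=V_{p,q}^{SU(N)}.
\]
Theorem~\ref{thm:unitary-Lk-main}(iv) then gives \(T_{\mathbf W}=P_{\mathbf W}\) and \(R_{\mathbf W}=-Q_{\mathbf W}=-(I-P_{\mathbf W})\), with \(P_{\mathbf W}\) the orthogonal projector onto this invariant sector. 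This is the same identification recorded at the end of Section~\ref{sec:generated-group}, namely \(T=P_G\) and \(R=-(I-P_G)\).

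Next I apply the center selection rule. Since \(p-q\not\equiv0\pmod N\), Proposition~\ref{prop:center-selection} gives \(V_{p,q}^{SU(N)}=\{0\}\). Concretely, the central element \(\zeta I_N\) with \(\zeta=e^{2\pi i/N}\) acts by \(\zeta^{p-q}\neq1\), so it fixes no nonzero vector. Hence \(V_{\mathbf W}=\{0\}\), so \(P_{\mathbf W}=0\) and \(Q_{\mathbf W}=I_{V_{p,q}}\). This yields \(T=0\) and \(R=-I_{V_{p,q}}\).

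There is no real obstacle here. The only point needing care is to confirm that the hypothesis ``invariant-complete'' means exactly \(V_{\mathbf W}=V^G\), as in Corollary~\ref{cor:origin-completion}. This holds because \(V_{\mathbf W}\supseteq V^G\) always, so even a non-complete tuple would only make the sector larger. I would also remark that the conclusion \(T=0\) does not depend on the number of origins. Once the sector is complete it is \(\{0\}\), which is the smallest possible value, so nothing finer can occur. Physically, the whole tensor space is the Dirichlet sector \((H_{D,-}\oplus H_{D,+})\otimes I_{V_{p,q}}\), which gives total reflection.
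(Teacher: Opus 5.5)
Your proposal is correct and is the argument the paper intends: invariant-completeness gives \(V_{\mathbf W}=V_{p,q}^{SU(N)}\), Proposition~\ref{prop:center-selection} makes this sector zero when \(p-q\not\equiv0\pmod N\), and Theorem~\ref{thm:unitary-Lk-main} then gives \(T=P_{V_{\mathbf W}}=0\) and \(R=-I_{V_{p,q}}\). Your direct check that \(\zeta I_N\), with \(\zeta=e^{2\pi i/N}\), acts by the scalar \(\zeta^{p-q}\neq1\) is exactly the necessity half of that proposition, and that half is all the corollary needs.
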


The center obstruction therefore forces complete reflection in these tensor sectors.

\subsection{Balanced tensor powers}
\label{subsec:balanced-tensors}

For the balanced sectors
\(V_{m,m} = F^{\otimes m}\otimes\overline F^{\otimes m}\),
the central action is trivial.

The Hermitian structure on \(F\) identifies \(\overline F\cong F^\ast\) as \(SU(N)\)-representations. Hence there is a natural equivariant
identification
\begin{equation}
\label{eq:balanced-End}
F^{\otimes m}\otimes\overline F^{\otimes m}
\cong
\operatorname{End}(F^{\otimes m}),
\end{equation}
under which the \(SU(N)\) action becomes conjugation. Therefore
\begin{equation}
\label{eq:balanced-commutant}
\left(
F^{\otimes m}\otimes\overline F^{\otimes m}
\right)^{SU(N)}
\cong
\operatorname{End}_{SU(N)}(F^{\otimes m}).
\end{equation}

In the balanced tensor representation the central
\(U(1)\subset U(N)\) acts trivially. Since every element of \(U(N)\) can
be written as a central phase times an element of \(SU(N)\), the
\(SU(N)\)- and \(U(N)\)-invariant subspaces of \(V_{m,m}\) coincide.
Schur--Weyl duality then identifies the commutant of the \(U(N)\)-action
on \(F^{\otimes m}\) with the image of the place-permutation action of the
symmetric group \(S_m\)
\citep{FultonHarris1991,GoodmanWallach2009}.

For
\(\sigma\in S_m\),
let \(P_\sigma: F^{\otimes m}\longrightarrow F^{\otimes m}\) be the permutation operator
\begin{equation}
\label{eq:permutation-operator}
P_\sigma
(v_1\otimes\cdots\otimes v_m)
=
v_{\sigma^{-1}(1)}
\otimes\cdots\otimes
v_{\sigma^{-1}(m)}.
\end{equation}

\begin{theorem}[Balanced invariant multiplicity]
\label{thm:balanced-invariant-multiplicity}
Let \(N\geq m\). Then
\begin{equation}
\label{eq:balanced-invariant-space}
\left(
F^{\otimes m}\otimes\overline F^{\otimes m}
\right)^{SU(N)}
\cong
\operatorname{span}
\left\{
P_\sigma:\sigma\in S_m
\right\},
\end{equation}
and the operators \(P_\sigma\) are linearly independent. Hence
\begin{equation}
\label{eq:factorial-multiplicity}
\dim
\left(
F^{\otimes m}\otimes\overline F^{\otimes m}
\right)^{SU(N)}
=
m!.
\end{equation}
\end{theorem}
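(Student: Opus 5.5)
The plan has three steps: chain the identifications already set up before the statement, invoke Schur--Weyl duality for the spanning claim, and prove linear independence of the permutation operators directly when \(N\geq m\). First, by \eqref{eq:balanced-commutant} the invariant space is isomorphic to \(\operatorname{End}_{SU(N)}(F^{\otimes m})\). Because the central \(U(1)\) acts trivially on \(V_{m,m}\) and \(U(N)=U(1)\cdot SU(N)\), this equals \(\operatorname{End}_{U(N)}(F^{\otimes m})\). Indeed, conjugation by a scalar is trivial, so commuting with \(SU(N)\) is equivalent to commuting with \(U(N)\). Schur--Weyl duality (the double commutant theorem as in \citep{FultonHarris1991,GoodmanWallach2009}) then identifies this commutant with the image of \(\C[S_m]\) under \(\sigma\mapsto P_\sigma\). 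That image is exactly \(\operatorname{span}\{P_\sigma\}\), which gives \eqref{eq:balanced-invariant-space}.

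Second, I prove linear independence of the \(P_\sigma\) when \(N\geq m\). I evaluate a relation \(\sum_\sigma c_\sigma P_\sigma=0\) on the vector \(e_1\otimes e_2\otimes\cdots\otimes e_m\), which is available because the \(m\) basis vectors are distinct when \(N\geq m\). By \eqref{eq:permutation-operator}, \(P_\sigma\) sends this vector to \(e_{\sigma^{-1}(1)}\otimes\cdots\otimes e_{\sigma^{-1}(m)}\). These images are pairwise distinct elements of the standard tensor basis of \(F^{\otimes m}\) as \(\sigma\) ranges over \(S_m\), so they are linearly independent. Hence every \(c_\sigma=0\). It follows that the span has dimension \(|S_m|=m!\), which is \eqref{eq:factorial-multiplicity}.

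The only step requiring care is the Schur--Weyl input. The first fundamental theorem for \(U(N)\), namely that the commutant is spanned by permutations, holds for all \(N\), and only the independence needs \(N\geq m\). I will cite it directly rather than reprove it. If a self-contained route is preferred, the fallback is to compute the dimension as \(\sum_{\lambda\vdash m,\ \ell(\lambda)\leq N} (\dim S^\lambda)^2\). This sum equals \(m!\) precisely when every partition of \(m\) has at most \(N\) rows, which holds for \(N\geq m\). Combined with the inclusion \(\operatorname{span}\{P_\sigma\}\subseteq\operatorname{End}_{U(N)}(F^{\otimes m})\), which is immediate because \(U(N)\) acts diagonally and so commutes with place permutations, and with the independence proved above, this dimension count forces equality.
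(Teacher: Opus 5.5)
Your proposal is correct and follows essentially the same route as the paper's proof. You identify the invariants with the $SU(N)$-commutant via \eqref{eq:balanced-commutant}, use triviality of the central $U(1)$ to pass to the $U(N)$-commutant, cite Schur--Weyl duality for spanning, and get linear independence by applying $\sum_\sigma c_\sigma P_\sigma=0$ to $e_1\otimes\cdots\otimes e_m$; your fallback count $\sum_{\lambda\vdash m,\,\ell(\lambda)\le N}(\dim S^\lambda)^2=m!$ is a valid but unnecessary extra check.
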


\begin{proof}
By \eqref{eq:balanced-commutant}, the invariant space is the commutant of
the diagonal \(SU(N)\) action on \(F^{\otimes m}\). As observed above,
this commutant is the same as that for \(U(N)\). Schur--Weyl duality
identifies it with the image of the group algebra \(\mathbb C[S_m]\) under the permutation action \(\sigma\mapsto P_\sigma\).

It remains to show that the \(m!\) permutation operators are linearly
independent when \(N\geq m\). Choose linearly independent basis vectors
\(e_1,\ldots,e_m\in F\).
The vectors
\(P_\sigma (e_1\otimes\cdots\otimes e_m), \qquad \sigma\in S_m\),
are the \(m!\) distinct tensors obtained by permuting the \(m\) distinct
basis labels and are therefore linearly independent. If
\(\sum_{\sigma\in S_m}a_\sigma P_\sigma=0\),
application to \(e_1\otimes\cdots\otimes e_m\) forces every \(a_\sigma=0\). Hence the image of \(\mathbb C[S_m]\) has
dimension \(m!\).
\end{proof}

The equality with \(m!\) is the stable-range statement \(N\ge m\); for \(m>N\), permutation contractions satisfy linear relations.

\begin{corollary}[Factorial transmission multiplicity]
\label{cor:factorial-transmission}
Let \(N\geq m\), and suppose the relative gluings on \(\mathbb L_k\) are
invariant-complete for
\[
V_{m,m}
=
F^{\otimes m}\otimes\overline F^{\otimes m}.
\]
Then the transmission amplitude is the invariant projector
\[
T=P_{\mathrm{inv}},
\]
and
\begin{equation}
\label{eq:factorial-transmission-rank}
\operatorname{rank}T=m!.
\end{equation}
The complementary reflected sector has dimension
\[
N^{2m}-m!.
\]
\end{corollary}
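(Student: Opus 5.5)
The plan is to chain three earlier results, the only real work being the bookkeeping of the identifications. First, invariant completeness means that the relative matrices \(W_{a+1}=\rho(g_a)\) satisfy \(V_{\mathbf W}=\bigcap_a\ker(\rho(g_a)-I)=V_{m,m}^{SU(N)}\), either by definition of invariant-complete gluing or through Proposition~\ref{prop:group-invariant} when the \(g_a\) generate a dense subgroup. Theorem~\ref{thm:unitary-Lk-main}(4) then gives \(T_{\mathbf W}=P_{\mathbf W}\) and \(R_{\mathbf W}=-Q_{\mathbf W}\). Since \(V_{\mathbf W}\) is exactly the invariant subspace, \(P_{\mathbf W}\) is the orthogonal projector onto \(V_{m,m}^{SU(N)}\). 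I will also note that it coincides with the Haar average \(\int_{SU(N)}\rho(g)\,d\mu(g)\) from Section~\ref{sec:generated-group}, so the notation \(P_{\mathrm{inv}}\) is unambiguous.

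Second, the rank of an orthogonal projector is the dimension of its range. Theorem~\ref{thm:balanced-invariant-multiplicity} gives \(\dim V_{m,m}^{SU(N)}=m!\) in the stable range \(N\ge m\), and hence \(\operatorname{rank}T=m!\).

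Third, the reflected sector is \(\operatorname{Ran}Q_{\mathbf W}=(V_{m,m}^{SU(N)})^\perp\). Since \(\dim V_{m,m}=(\dim F)^{2m}=N^{2m}\) (the conjugate factor \(\overline F\) also has dimension \(N\)), its dimension is \(N^{2m}-m!\).

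The only point requiring care is the first step. The identification in Theorem~\ref{thm:balanced-invariant-multiplicity} goes through \(\operatorname{End}(F^{\otimes m})\), and one must check that this does not interfere with the orthogonal-projector statement. It does not, because the rank depends only on the dimension of the invariant subspace, not on the chosen isomorphism. One must also confirm that \(\rho\) on \(V_{m,m}\) is unitary, which holds because \(F\) and \(\overline F\) carry the induced Hermitian structures. This is needed so that Theorem~\ref{thm:unitary-Lk-main} applies with \(W_a\in U(N^{2m})\).
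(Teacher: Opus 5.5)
Your proposal is correct and follows the route the paper intends. Invariant completeness identifies $V_{\mathbf W}$ with $V_{m,m}^{SU(N)}$; Theorem~\ref{thm:unitary-Lk-main} turns this into $T=P_{\mathrm{inv}}$; Theorem~\ref{thm:balanced-invariant-multiplicity} gives $\operatorname{rank}T=m!$; and the reflected sector has dimension $N^{2m}-m!$ as the orthogonal complement. Your two extra checks, unitarity of the tensor representation and independence of the rank from the $\operatorname{End}(F^{\otimes m})$ identification, only make explicit what the paper leaves tacit.
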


Theorem~\ref{thm:unitary-Lk-main} realises this standard invariant multiplicity as a transmission rank.

\subsection{Permutation-contraction tensors}
\label{subsec:permutation-contractions}

The invariant space may equivalently be written in tensor form. For
\(\sigma\in S_m\),
define the contraction tensor \(D_\sigma\) by
\begin{equation}
\label{eq:D-sigma-components}
(D_\sigma)_{i_1\cdots i_m}^{\phantom{i_1\cdots i_m}j_1\cdots j_m}
=
\prod_{a=1}^{m}
\delta_{i_a}^{j_{\sigma(a)}}.
\end{equation}
Under the identification \eqref{eq:balanced-End}, \(D_\sigma\) corresponds
to the permutation operator \(P_\sigma\). For \(N\geq m\),
\[
\left\{
D_\sigma:\sigma\in S_m
\right\}
\]
is a basis of the balanced invariant sector.

Their Hermitian Gram matrix is determined by permutation cycle structure.

\begin{proposition}[Gram matrix of contraction invariants]
\label{prop:contraction-Gram}
For \(\sigma,\tau\in S_m\),
\begin{equation}
\label{eq:contraction-Gram}
\langle D_\sigma,D_\tau\rangle
=
N^{c(\sigma^{-1}\tau)},
\end{equation}
where \(c(\pi)\) denotes the number of cycles of the permutation
\(\pi\in S_m\), including one-cycles.
\end{proposition}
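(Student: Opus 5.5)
The plan is to compute the inner product through the identification \eqref{eq:balanced-End}, under which \(D_\sigma\) corresponds to the permutation operator \(P_\sigma\). The natural Hermitian inner product on \(F^{\otimes m}\otimes\overline F^{\otimes m}\) then becomes the Hilbert--Schmidt pairing \(\langle A,B\rangle=\operatorname{tr}(A^\ast B)\) on \(\operatorname{End}(F^{\otimes m})\). The first step is to confirm this by writing both sides in components. The component formula \eqref{eq:D-sigma-components} gives
\[
\langle D_\sigma,D_\tau\rangle
=\sum_{i,j}\prod_{a}\delta_{i_a}^{j_{\sigma(a)}}\,\delta_{i_a}^{j_{\tau(a)}},
\]
where the coefficients are real \(0/1\), so complex conjugation is harmless. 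This is exactly the Frobenius inner product of the corresponding \(0/1\) matrices.

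The second step is to reduce to a single permutation. Since \(P_\sigma\) is unitary with \(P_\sigma^\ast=P_{\sigma^{-1}}\), and \(\sigma\mapsto P_\sigma\) is a representation (with the convention \eqref{eq:permutation-operator}), we get \(\operatorname{tr}(P_\sigma^\ast P_\tau)=\operatorname{tr}(P_{\sigma^{-1}\tau})\). It therefore suffices to prove \(\operatorname{tr}P_\pi=N^{c(\pi)}\).

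The third step computes this trace in the basis \(e_{i_1}\otimes\cdots\otimes e_{i_m}\). The operator \(P_\pi\) maps basis tensors to basis tensors, so its trace counts the fixed index tuples, namely those with \(i_{\pi^{-1}(a)}=i_a\) for all \(a\). These are exactly the tuples constant on each cycle of \(\pi\), giving \(N^{c(\pi)}\) choices.

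The only real care is needed in the component version of the first step. There, the summed deltas force \(j_{\sigma(a)}=j_{\tau(a)}\), which means \(j\) is constant on the cycles of \(\sigma^{-1}\tau\) (equivalently of \(\tau\sigma^{-1}\), which has the same cycle type). Summing over \(i\) and then \(j\) gives the same count \(N^{c(\sigma^{-1}\tau)}\) directly. I would present this direct count as a cross-check, so that the result does not depend on convention choices in \eqref{eq:balanced-End}. The result holds for all \(N\); the hypothesis \(N\ge m\) is not needed here.
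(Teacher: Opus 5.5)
Your proposal is correct and is essentially the paper's proof: identify the tensor inner product with the Hilbert--Schmidt pairing, reduce to $\operatorname{Tr}(P_{\sigma^{-1}\tau})$, and count basis tensors whose labels are constant on cycles. Your component cross-check is a sensible addition. It actually gives constancy on the cycles of $\tau\sigma^{-1}$, which is conjugate to $\sigma^{-1}\tau$ and so gives the same count, though not literally the same cycles. This makes the result insensitive to whether the identification sends $D_\sigma$ to $P_\sigma$ or to $P_{\sigma^{-1}}$.
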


\begin{proof}
Under \eqref{eq:balanced-End}, the tensor inner product becomes the
Hilbert--Schmidt inner product. Hence
\[
\langle D_\sigma,D_\tau\rangle
=
\operatorname{Tr}(P_\sigma^\ast P_\tau)
=
\operatorname{Tr}(P_{\sigma^{-1}\tau}).
\]
A basis tensor \(e_{i_1}\otimes\cdots\otimes e_{i_m}\) is fixed by \(P_\pi\) precisely when its labels are constant along every
cycle of \(\pi\). There are \(N\) independent choices for the common label
on each cycle, so
\(\operatorname{Tr}(P_\pi) = N^{c(\pi)}\).
Taking \(\pi=\sigma^{-1}\tau\) gives \eqref{eq:contraction-Gram}.
\end{proof}

\subsection{The quadratic balanced sector}
\label{subsec:quadratic-balanced}

For \(m=2\), there are exactly two permutation invariants:
\(D_e, \qquad D_s\),
where \(s=(12)\). In components,
\[
(D_e)_{ij}^{kl}
=
\delta_i^k\delta_j^l,
\qquad
(D_s)_{ij}^{kl}
=
\delta_i^l\delta_j^k.
\]
Their Gram matrix is
\begin{equation}
\label{eq:m2-Gram}
\begin{pmatrix}
\langle D_e,D_e\rangle
&
\langle D_e,D_s\rangle
\\
\langle D_s,D_e\rangle
&
\langle D_s,D_s\rangle
\end{pmatrix}
=
\begin{pmatrix}
N^2&N\\
N&N^2
\end{pmatrix}.
\end{equation}

The symmetric and antisymmetric combinations \(D_+:=D_e+D_s, \qquad D_-:=D_e-D_s\) are orthogonal and satisfy
\begin{equation}
\label{eq:Dpm-norms}
\lVert D_+\rVert^2
=
2N(N+1),
\qquad
\lVert D_-\rVert^2
=
2N(N-1).
\end{equation}
Therefore
\begin{equation}
\label{eq:Dpm-normalized}
\widehat D_+
=
\frac{D_+}{\sqrt{2N(N+1)}},
\qquad
\widehat D_-
=
\frac{D_-}{\sqrt{2N(N-1)}}
\end{equation}
form an orthonormal basis of \(\left( F^{\otimes2}\otimes\overline F^{\otimes2} \right)^{SU(N)}\) for \(N\geq2\).

The corresponding invariant projector is therefore
\begin{equation}
\label{eq:m2-invariant-projector}
P_{\mathrm{inv}}^{(2)}
=
|\widehat D_+\rangle\langle\widehat D_+|
+
|\widehat D_-\rangle\langle\widehat D_-|.
\end{equation}

For \(N=3\), the norms become
\(\lVert D_+\rVert^2=24, \qquad \lVert D_-\rVert^2=12\).
For \(N=3\), these normalized directions span the quartic invariant sector used below.

\subsection{The determinant invariant}
\label{subsec:determinant-invariant}

Balanced tensors are not the only sectors allowed by the center rule. The
first purely fundamental tensor power for which the center obstruction
disappears is
\(F^{\otimes N}\).
The alternating tensor
\begin{equation}
\label{eq:epsilon-general}
\varepsilon_N
=
\sum_{\sigma\in S_N}
\operatorname{sgn}(\sigma)\,
e_{\sigma(1)}\otimes\cdots\otimes e_{\sigma(N)}
\end{equation}
is \(SU(N)\)-invariant because
\(g^{\otimes N}\varepsilon_N = \det(g)\,\varepsilon_N = \varepsilon_N\).

Standard tensor representation theory gives
\begin{equation}
\label{eq:determinant-invariant-space}
\left(F^{\otimes N}\right)^{SU(N)}
=
\mathbb C\varepsilon_N
\end{equation}
\citep{FultonHarris1991,GoodmanWallach2009}. Hence an invariant-complete
gluing on this tensor sector has
\(\operatorname{rank}T=1\).

For \(N=3\),
\((F^{\otimes3})^{SU(3)}=\mathbb C\varepsilon_3\).
Together with
\[
\dim(F\otimes\overline F)^{SU(3)}=1,
\qquad
\dim\left(F^{\otimes2}\otimes\overline F^{\otimes2}\right)^{SU(3)}=2,
\]
this fixes the terminal invariant dimensions used in Section~\ref{sec:SU3-fixed-spaces}.

\section{Fixed-space ranks in low \texorpdfstring{\(SU(3)\)}{SU(3)} tensor representations}
\label{sec:SU3-fixed-spaces}

Let
\(G=SU(3), \qquad F=\mathbb C^3\),
and let
\[
t=\operatorname{diag}(\lambda_1,\lambda_2,\lambda_3)\in SU(3),
\qquad
\lambda_1\lambda_2\lambda_3=1.
\]
For the finite set of tensor weights occurring below, call \(t\) \emph{generic} if its eigenvalues are distinct and no nontrivial occurring weight evaluates to \(1\). The excluded set is a finite union of proper closed character-resonance loci in the maximal torus.

\subsection{The defining representation}

For \(F\), a generic \(t\) has no eigenvalue \(1\), so
\(F^t=\{0\}=F^{SU(3)}\).

\begin{proposition}[Defining representation]
\label{prop:SU3-fundamental-completion}
For the defining representation,
\[
\nu(F)=1,
\qquad
r_1(F)=0.
\]
\end{proposition}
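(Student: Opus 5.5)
The plan is to exhibit one group element whose fixed space already equals the invariant space, and then invoke Proposition~\ref{prop:one-element-completion}. The statement has two parts. The first is $F^{SU(3)}=\{0\}$. The second is the existence of a single element $g\in SU(3)$ with $\ker(\rho(g)-I)=\{0\}$. Together these give $r_1(F)=0$, because $r_1(F)$ is a minimum of nonnegative dimensions. They also give $\nu(F)\le 1$.

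For $F^{SU(3)}=\{0\}$, use the center selection rule of Proposition~\ref{prop:center-selection} with $(p,q)=(1,0)$. Since $1\not\equiv 0 \pmod 3$, there is no nonzero invariant. A direct argument also works: the central element $\omega I_3$, with $\omega=e^{2\pi i/3}$, acts on $F$ as $\omega\neq 1$, so it fixes no nonzero vector.

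For the element, take a diagonal torus element $t=\operatorname{diag}(\lambda_1,\lambda_2,\lambda_3)$ with $\lambda_1\lambda_2\lambda_3=1$ and every $\lambda_i\neq 1$. A concrete choice is $\omega I_3$ itself, or a generic choice such as $\operatorname{diag}(e^{i\alpha},e^{i\beta},e^{-i(\alpha+\beta)})$ with $\alpha,\beta,\alpha+\beta\notin 2\pi\mathbb Z$. Then $\ker(t-I)=F^t=\{0\}=F^{SU(3)}$. This is precisely the one-element criterion of Proposition~\ref{prop:one-element-completion}, applied with $(V^G)^\perp=F$.

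To finish, since $F^{SU(3)}=\{0\}\neq F$, the zero-element intersection is $F$ and is not invariant-complete, so $\nu(F)\ge 1$. Proposition~\ref{prop:one-element-completion} then gives $\nu(F)=1$. For $r_1(F)$, the minimum in \eqref{eq:optimal-rank} is attained at $g=t$ with value $0$, so $r_1(F)=0$.

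I do not expect a real obstacle; the only care needed is checking the genericity condition. In the defining representation the occurring weights are just the three eigenvalue characters, so genericity reduces to $\lambda_i\ne1$. The central choice $\omega I_3$ avoids any appeal to resonance loci.
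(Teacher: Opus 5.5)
Your proposal is correct and follows the paper's argument: pick an element of $SU(3)$ with no eigenvalue $1$, so that $F^t=F^{SU(3)}=\{0\}$, which gives $r_1(F)=0$ and $\nu(F)=1$. The central choice $\omega I_3$ is valid since $\omega^3=1$, and you make explicit the lower bound $\nu(F)\ge 1$ from $F^{SU(3)}\neq F$, which the paper leaves implicit.
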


\begin{proof}
Choose \(t\in SU(3)\) with no eigenvalue \(1\). Then \(F^t=F^{SU(3)}=\{0\}\).
\end{proof}

\subsection{The quadratic mixed tensor}

For \(F\otimes\overline F\), the basis tensor \(e_i\otimes\overline e_j\) has weight \(\lambda_i\lambda_j^{-1}\). Genericity gives
\[
(F\otimes\overline F)^t
=
\operatorname{span}
\{e_1\otimes\overline e_1,
e_2\otimes\overline e_2,
e_3\otimes\overline e_3\},
\]
so
\(\dim(F\otimes\overline F)^t=3\).
Under
\(F\otimes\overline F\cong\operatorname{End}(F)\),
the \(SU(3)\)-invariant subspace is \(\mathbb C I_F\).

\begin{proposition}[Quadratic fixed-space ranks]
\label{prop:SU3-quadratic-nu}
For the conjugation representation,
\[
r_1(F\otimes\overline F)=3,
\qquad
\nu(F\otimes\overline F)=2,
\qquad
r_2(F\otimes\overline F)=1.
\]
\end{proposition}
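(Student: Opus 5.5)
The plan is to treat the three claims in order: $r_1=3$, $\nu\geq2$, and $r_2=1$. The upper bound $\nu\leq2$ is already supplied by Corollary~\ref{cor:semisimple-two-completion}. The invariant subspace is $\mathbb C I_F$, so $\dim V^{SU(3)}=1$. Once we have $r_1=3>1$, Proposition~\ref{prop:rank-profile-properties} gives $\nu\geq2$, hence $\nu=2$ and $r_2=\dim V^G=1$.

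The whole problem therefore reduces to computing $r_1(F\otimes\overline F)$. We work through the identification $F\otimes\overline F\cong\operatorname{End}(F)$, under which $g$ acts by $X\mapsto gXg^{-1}$. The fixed space of a single $g$ is then the centraliser $Z(g)=\{X:gX=Xg\}$ in $\operatorname{End}(F)$. Every $g\in SU(3)$ is unitary, hence diagonalisable in an orthonormal basis. Suppose its distinct eigenvalues have multiplicities $n_1,\dots,n_s$. Then $Z(g)\cong\bigoplus_i M_{n_i}(\mathbb C)$, with $\dim Z(g)=\sum n_i^2$. Since $\sum n_i=3$, the sum $\sum n_i^2$ is at least $3$, with equality exactly when the eigenvalues are distinct.

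It remains to check that this minimum is attained. The generic torus element displayed above has distinct eigenvalues, and its fixed space is the diagonal span, of dimension $3$. Hence $r_1=3$.

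There is no real obstacle here. The one point needing care is the lower bound $\dim Z(g)\geq3$ for every $g$, not just for torus elements; this follows because every element of $SU(3)$ is conjugate into the maximal torus and conjugation preserves the dimension of the fixed space. Alternatively, one can note that $g$ commutes with every polynomial in $g$, and these span a space of dimension equal to the degree of the minimal polynomial, but the centraliser count above is cleaner.
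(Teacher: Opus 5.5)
Your proof is correct and follows the paper's own argument: identify the fixed space of $g$ with its commutant in $\operatorname{End}(F)$, use $\sum m_\alpha^2\ge 3$ with equality for regular elements to get $r_1=3$, and then combine $\dim(F\otimes\overline F)^{SU(3)}=1$ with Corollary~\ref{cor:semisimple-two-completion} to obtain $\nu=2$ and $r_2=1$. Your aside about polynomials in $g$ would only give $\dim Z(g)\ge\deg$ of the minimal polynomial, which falls below $3$ for non-regular $g$, so the centraliser count is the right argument to keep.
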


\begin{proof}
For \(g\in SU(3)\), the fixed space in \(\operatorname{End}(F)\) is the commutant of \(g\). If the eigenvalue multiplicities of \(g\) are \(m_1,\ldots,m_s\), its commutant has dimension
\(\sum_{\alpha=1}^{s}m_\alpha^2\geq3\).
A generic regular element realises equality, hence \(r_1=3\). Since
\(\dim(F\otimes\overline F)^{SU(3)}=1\),
one element cannot be invariant-complete. Corollary~\ref{cor:semisimple-two-completion} gives \(\nu=2\), and therefore \(r_2=1\).
\end{proof}

\subsection{The cubic fundamental tensor}

A basis tensor \(e_i\otimes e_j\otimes e_k\) in \(F^{\otimes3}\) has weight
\(\lambda_i\lambda_j\lambda_k\).
For generic \(t\), this weight is \(1\) exactly when \((i,j,k)\) is a permutation of \((1,2,3)\). Hence
\(\dim(F^{\otimes3})^t=6\).
The full invariant sector is
\[
(F^{\otimes3})^{SU(3)}
=
\mathbb C\varepsilon,
\qquad
\varepsilon
=
\sum_{\sigma\in S_3}
\operatorname{sgn}(\sigma)\,
e_{\sigma(1)}\otimes e_{\sigma(2)}\otimes e_{\sigma(3)}.
\]

\begin{lemma}[Uniform cubic lower bound]
\label{lem:cubic-single-lower-bound}
For every \(g\in SU(3)\),
\[
\dim(F^{\otimes3})^g\geq6.
\]
\end{lemma}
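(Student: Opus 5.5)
Every \(g\in SU(3)\) is unitary, hence diagonalisable, and lies in a conjugate of the diagonal maximal torus. So I will write \(g=hth^{-1}\) with \(t=\operatorname{diag}(\lambda_1,\lambda_2,\lambda_3)\) and \(\lambda_1\lambda_2\lambda_3=1\). Since \((F^{\otimes3})^{g}=h^{\otimes3}(F^{\otimes3})^{t}\), the dimension is a conjugation invariant. It therefore suffices to prove \(\dim(F^{\otimes3})^t\geq6\) for every diagonal \(t\) in \(SU(3)\), with no genericity assumption.

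For diagonal \(t\), the operator \(t^{\otimes3}\) is diagonal in the basis \(e_i\otimes e_j\otimes e_k\), with eigenvalue \(\lambda_i\lambda_j\lambda_k\). Hence \((F^{\otimes3})^t\) is spanned by those basis tensors whose weight equals \(1\). For each permutation \(\sigma\in S_3\), the triple \((\sigma(1),\sigma(2),\sigma(3))\) has weight \(\lambda_1\lambda_2\lambda_3=\det t=1\), whatever the eigenvalues are. This gives six distinct basis tensors \(e_{\sigma(1)}\otimes e_{\sigma(2)}\otimes e_{\sigma(3)}\), all fixed by \(t\). They are linearly independent, so \(\dim(F^{\otimes3})^t\geq6\).

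Coincident eigenvalues, and in particular central elements where \(t=\zeta I\), only add further fixed basis tensors. They therefore cannot lower the bound, so no case split is needed. The only point needing care is the reduction to the torus. Conjugation of \(g\) into the diagonal torus uses the spectral theorem for the unitary matrix \(g\), with the diagonalising unitary rescaled by a scalar so that it lies in \(SU(3)\). The invariance of fixed-space dimension under conjugation is immediate.

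Combined with the generic computation \(\dim(F^{\otimes3})^t=6\) stated before the lemma, this shows \(r_1(F^{\otimes3})=6\). Since \(\dim(F^{\otimes3})^{SU(3)}=1<6\), one-element completion fails, and \(\nu(F^{\otimes3})=2\) follows from Corollary~\ref{cor:semisimple-two-completion}.
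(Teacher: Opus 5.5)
Your proof is correct and is essentially the paper's argument: the paper works directly in an eigenbasis \(v_1,v_2,v_3\) of \(g\), which amounts to your conjugation into the diagonal torus, and notes that the six linearly independent tensors \(v_{\sigma(1)}\otimes v_{\sigma(2)}\otimes v_{\sigma(3)}\) are fixed because \(\mu_1\mu_2\mu_3=1\). Rescaling the diagonalising unitary into \(SU(3)\) is harmless but unnecessary, since fixed-space dimension is invariant under conjugation by any invertible \(h\).
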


\begin{proof}
Let \(v_1,v_2,v_3\) be an eigenbasis of \(g\), with eigenvalues \(\mu_1,\mu_2,\mu_3\). Since \(\mu_1\mu_2\mu_3=1\), each of the six tensors
\(v_{\sigma(1)}\otimes v_{\sigma(2)}\otimes v_{\sigma(3)}, \qquad \sigma\in S_3\),
is fixed by \(g^{\otimes3}\). They are linearly independent.
\end{proof}

\begin{proposition}[Cubic fixed-space ranks]
\label{prop:SU3-cubic-nu}
For \(F^{\otimes3}\),
\[
r_1(F^{\otimes3})=6,
\qquad
\nu(F^{\otimes3})=2,
\qquad
r_2(F^{\otimes3})=1.
\]
\end{proposition}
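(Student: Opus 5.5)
The plan parallels the proof of Proposition~\ref{prop:SU3-quadratic-nu}. There are three facts to establish, in order. First, \(r_1(F^{\otimes3})=6\). Second, one gluing cannot complete, so \(\nu\ge2\). Third, \(\nu\le2\), and consequently \(r_2=\dim(F^{\otimes3})^{SU(3)}=1\).

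For \(r_1\), Lemma~\ref{lem:cubic-single-lower-bound} already gives \(\dim(F^{\otimes3})^g\ge6\) for every \(g\in SU(3)\). For the matching upper bound, take a generic torus element \(t\) in the sense fixed at the start of this section. The weight computation preceding the lemma shows that \(\lambda_i\lambda_j\lambda_k=1\) exactly when \((i,j,k)\) is a permutation of \((1,2,3)\). Hence \(\dim(F^{\otimes3})^t=6\), and therefore \(r_1=6\).

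The only point needing care is that such generic elements exist. The nontrivial occurring weights are of the form \(\lambda_i\lambda_j\lambda_k\) with a repeated index. After using \(\lambda_1\lambda_2\lambda_3=1\), each reduces to a nontrivial character of the two-dimensional torus, for example \(\lambda_1^3\) or \(\lambda_1^2\lambda_2\), or to a ratio \(\lambda_i\lambda_j^{-1}\). So each resonance locus is a proper closed subtorus, and a finite union of such loci does not cover \(T\).

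For \(\nu\ge2\), I will invoke Proposition~\ref{prop:one-element-completion}. Since \(\dim(F^{\otimes3})^{SU(3)}=1\) by the determinant-invariant discussion, while every single \(g\) fixes a subspace of dimension at least \(6>1\), no \(g\) satisfies \(\ker(\rho(g)-I)=V^G\). The representation is nontrivial, so \(\nu\ne0\) either.

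For \(\nu\le2\), apply Corollary~\ref{cor:semisimple-two-completion}: \(SU(3)\) is connected, compact and semisimple, so a densely generating pair \(g_1,g_2\) exists, and its common fixed space equals \(V^{SU(3)}=\mathbb C\varepsilon\). This gives \(\nu=2\), and Proposition~\ref{prop:rank-profile-properties} then yields \(r_2=\dim V^G=1\).

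I expect the only real obstacle to be the genericity check in the first step, namely confirming that no accidental weight coincidence forces extra fixed vectors. The argument above reduces this to the observation that each non-permutation weight is a nontrivial character of \(T\), which I will verify directly on the small list of index multisets \(\{i,i,i\}\) and \(\{i,i,j\}\).
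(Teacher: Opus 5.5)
Your proposal is correct and follows the paper's proof essentially step for step. You use Lemma~\ref{lem:cubic-single-lower-bound} together with a generic torus element for \(r_1=6\), the one-dimensional invariant line \(\mathbb C\varepsilon\) to exclude one-element completion, and Corollary~\ref{cor:semisimple-two-completion} for \(\nu\le2\), whence \(r_2=1\). One small imprecision in your genericity check: the kernel of a character such as \(\lambda_1^3\) is a finite union of cosets of a codimension-one subtorus, not a subtorus itself. It is still a proper closed subgroup of measure zero, so finitely many resonance loci cannot cover \(T\) and your conclusion stands.
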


\begin{proof}
The lemma gives \(r_1\geq6\), and a generic torus element realises equality. Since the invariant sector is one-dimensional, \(\nu\geq2\). Corollary~\ref{cor:semisimple-two-completion} gives \(\nu=2\), hence \(r_2=1\).
\end{proof}

\subsection{The quartic balanced tensor}

Set
\(V_{\mathrm{quart}} = F^{\otimes2}\otimes\overline F^{\otimes2}\).
The basis tensor \(e_i\otimes e_j\otimes\overline e_k\otimes\overline e_l\) has weight
\(\lambda_i\lambda_j\lambda_k^{-1}\lambda_l^{-1}\).
For generic \(t\), this weight is \(1\) exactly when the multisets \(\{i,j\}\) and \(\{k,l\}\) coincide. Three diagonal multisets contribute one fixed basis tensor each, and the three unordered pairs with distinct indices contribute four each. Therefore
\(\dim V_{\mathrm{quart}}^t = 3+3\cdot4 = 15\).
By Theorem~\ref{thm:balanced-invariant-multiplicity},
\(\dim V_{\mathrm{quart}}^{SU(3)}=2\).

\begin{lemma}[Uniform quartic lower bound]
\label{lem:quartic-lower-bound}
For every \(g\in SU(3)\),
\[
\dim V_{\mathrm{quart}}^g\geq15.
\]
A generic regular element realises equality.
\end{lemma}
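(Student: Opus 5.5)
The plan follows Lemma~\ref{lem:cubic-single-lower-bound}: diagonalise \(g\) and exhibit fifteen linearly independent fixed basis tensors built from an eigenbasis. Since \(g\in SU(3)\) is unitary, it is diagonalisable in an orthonormal eigenbasis \(v_1,v_2,v_3\) of \(F\) with eigenvalues \(\mu_1,\mu_2,\mu_3\) satisfying \(|\mu_i|=1\) and \(\mu_1\mu_2\mu_3=1\). On \(\overline F\) the conjugate vectors \(\overline v_1,\overline v_2,\overline v_3\) form an eigenbasis with eigenvalues \(\overline{\mu_i}=\mu_i^{-1}\). Hence the \(81\) tensors \(v_i\otimes v_j\otimes\overline v_k\otimes\overline v_l\) form a basis of \(V_{\mathrm{quart}}\) diagonalising \(g\), with eigenvalues \(\mu_i\mu_j\mu_k^{-1}\mu_l^{-1}\).

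Next, whenever the multisets \(\{i,j\}\) and \(\{k,l\}\) coincide, this eigenvalue equals \(1\) identically in \(g\), with no use even of the determinant condition. Counting these index tuples gives \(3\) tuples with \(i=j=k=l\), plus, for each of the \(3\) unordered pairs \(\{a,b\}\) with \(a\neq b\), the \(2\times2=4\) tuples with \((i,j)\in\{(a,b),(b,a)\}\) and \((k,l)\in\{(a,b),(b,a)\}\). That is \(3+12=15\) distinct basis tensors, all fixed by \(g\), so \(\dim V_{\mathrm{quart}}^g\geq15\). Equivalently, the fixed space contains \(\operatorname{End}\) of the multiset-graded pieces under the identification \(V_{\mathrm{quart}}\cong\operatorname{End}(F^{\otimes2})\), i.e. the commutant of \(g^{\otimes2}\) contains \(\bigoplus_\mu\operatorname{End}(E_\mu)\) for the eigenspaces \(E_\mu\) of \(g^{\otimes 2}\); this gives a second route if needed.

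For equality, take a generic regular torus element \(t\) in the sense fixed at the start of this section. The weight \(\lambda_i\lambda_j\lambda_k^{-1}\lambda_l^{-1}\) depends only on the difference of multiplicity vectors of \(\{i,j\}\) and \(\{k,l\}\), which is a nonzero character of the torus exactly when the multisets differ. Genericity excludes all of these finitely many nontrivial characters, so the fixed space of \(t\) is exactly the \(15\)-dimensional span above, matching the count already recorded before the lemma.

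The only point needing care is this last step: we must check that each multiset-difference character is genuinely nontrivial on the maximal torus of \(SU(3)\), not merely as a formal monomial, so that its resonance locus is a proper closed subset. The characters have the form \(\lambda^{\alpha}\) with \(\alpha\in\mathbb Z^3\), \(\sum\alpha_i=0\) and \(\alpha\neq0\), and such a character is trivial on \(\{\lambda_1\lambda_2\lambda_3=1\}\) only when \(\alpha\) is a multiple of \((1,1,1)\). That forces \(\alpha=0\), so every such character is nontrivial and the genericity assumption applies.
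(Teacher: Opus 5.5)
Your proof is correct, but it reaches the lower bound by a different mechanism from the paper.

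\emph{The paper's route.} The paper identifies \(V_{\mathrm{quart}}\cong\operatorname{End}(F^{\otimes2})\), so \(V_{\mathrm{quart}}^g\) is the commutant of \(g^{\otimes2}\). Its dimension is the sum of the squared eigenvalue multiplicities of \(g^{\otimes2}\). For generic \(g\) these multiplicities are \(1,1,1,2,2,2\), giving \(15\). For any other \(g\), eigenspaces can only merge, which increases the sum of squares.

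\emph{Your route.} You stay in the \(81\)-element weight basis \(v_i\otimes v_j\otimes\overline v_k\otimes\overline v_l\) and exhibit \(15\) basis tensors that are fixed identically in \(g\), exactly as in Lemma~\ref{lem:cubic-single-lower-bound}. You get equality from the section's character-genericity convention. You also check that every multiset-difference character is nontrivial on the maximal torus, so the excluded loci are genuinely proper; the paper only asserts this.

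\emph{Comparison.} Your argument needs no eigenspace-merging step, and the lower bound holds verbatim on all of \(U(3)\). The paper's argument yields more: the exact formula \(\dim V_{\mathrm{quart}}^g=\sum_\alpha m_\alpha^2\) for every \(g\). From it one reads off the precise equality criterion: the six products \(\mu_i\mu_j\) with \(i\le j\) must be pairwise distinct.

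Your fifteen tensors are exactly the block matrix units of that commutant. In your ``second route'' remark, the commutant of \(g^{\otimes2}\) equals \(\bigoplus_\mu\operatorname{End}(E_\mu)\), rather than merely containing it. What it contains is the direct sum of the \(\operatorname{End}\)'s of the multiset blocks.
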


\begin{proof}
Using
\(V_{\mathrm{quart}} \cong \operatorname{End}(F^{\otimes2})\),
the fixed space is the commutant of \(g^{\otimes2}\). For an eigenbasis \(v_1,v_2,v_3\) of \(g\), the generic eigenspace multiplicities of \(g^{\otimes2}\) are
\(1,1,1,2,2,2\):
the diagonal tensors \(v_i\otimes v_i\) are one-dimensional, while \(v_i\otimes v_j\) and \(v_j\otimes v_i\) share an eigenvalue for \(i\neq j\). The corresponding commutant dimension is
\(3(1^2)+3(2^2)=15\).
Further eigenvalue coincidences merge eigenspaces and increase the sum of squared multiplicities.
\end{proof}

\begin{proposition}[Quartic fixed-space ranks]
\label{prop:SU3-quartic-nu}
For \(V_{\mathrm{quart}}\),
\[
r_1(V_{\mathrm{quart}})=15,
\qquad
\nu(V_{\mathrm{quart}})=2,
\qquad
r_2(V_{\mathrm{quart}})=2.
\]
\end{proposition}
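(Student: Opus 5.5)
The proof follows the same three-step pattern as Propositions~\ref{prop:SU3-quadratic-nu} and \ref{prop:SU3-cubic-nu}.

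First, the value of \(r_1\). Lemma~\ref{lem:quartic-lower-bound} gives \(\dim V_{\mathrm{quart}}^g\geq15\) for every \(g\in SU(3)\). A generic torus element attains equality, by the weight count \(3+3\cdot4=15\) carried out before the lemma. Hence \(r_1(V_{\mathrm{quart}})=15\).

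Second, the value of \(\nu\). Theorem~\ref{thm:balanced-invariant-multiplicity} applies with \(N=3\geq m=2\) and gives \(\dim V_{\mathrm{quart}}^{SU(3)}=2!=2\). Since \(15>2\), no single element is invariant-complete, so \(\nu(V_{\mathrm{quart}})\geq2\). This is exactly the criterion of Proposition~\ref{prop:one-element-completion}. For the upper bound, \(SU(3)\) is a nontrivial connected compact semisimple Lie group, so Corollary~\ref{cor:semisimple-two-completion} gives \(\nu\leq2\). Hence \(\nu=2\).

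Third, the value of \(r_2\). By Proposition~\ref{prop:rank-profile-properties}, \(r_m=\dim V^G\) for every \(m\geq\nu\). Taking \(m=2\) gives \(r_2(V_{\mathrm{quart}})=2\).

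Concretely, any densely generating pair \(g_1,g_2\) realises the minimum. The common fixed space of such a pair is \(V^{\Gamma}=V^{SU(3)}\), which is spanned by \(\widehat D_\pm\) of \eqref{eq:Dpm-normalized}. This also identifies the rank-two transmission projector with \(P_{\mathrm{inv}}^{(2)}\) from \eqref{eq:m2-invariant-projector}.

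No step should be a real obstacle, because the substantive inputs are already in the paper. The point to check is the last sentence of Lemma~\ref{lem:quartic-lower-bound}: nongeneric \(g\) must only increase the commutant dimension, so that \(15\) is a genuine minimum rather than just the generic value. That claim holds because merging eigenspaces of \(g^{\otimes2}\) increases \(\sum m_\alpha^2\). Only the lower bound is actually needed for \(r_1\) and for \(\nu\geq2\), and the lemma already supplies it.
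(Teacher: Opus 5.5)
Your proposal is correct and takes the same approach as the paper's proof. You get \(r_1=15\) from Lemma~\ref{lem:quartic-lower-bound} plus the generic weight count, and \(\dim V_{\mathrm{quart}}^{SU(3)}=2<15\) rules out one-element completion. Corollary~\ref{cor:semisimple-two-completion} gives \(\nu\le 2\), and hence \(r_2=\dim V^G=2\). Your added check is also right: for any \(g\), the pairs \((i,j)\) and \((j,i)\) always share an eigenvalue of \(g^{\otimes 2}\), so nongeneric elements only coarsen the generic eigenspace partition and increase \(\sum m_\alpha^2\).
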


\begin{proof}
Lemma~\ref{lem:quartic-lower-bound} and the generic calculation give \(r_1=15\). Since the invariant sector has dimension \(2\), one element cannot be invariant-complete. Corollary~\ref{cor:semisimple-two-completion} gives \(\nu=2\), and hence \(r_2=2\).
\end{proof}

\subsection{Summary}

\begin{theorem}[Optimal ranks for low-degree \texorpdfstring{\(SU(3)\)}{SU(3)} tensors]
\label{thm:SU3-rank-collapse}
For the four representations considered above,
\[
\begin{array}{c|c|c|c|c}
V
&
r_0(V)
&
r_1(V)
&
r_2(V)
&
\nu(V)
\\
\hline
F
&
3&0&0&1
\\[1mm]
F\otimes\overline F
&
9&3&1&2
\\[1mm]
F^{\otimes3}
&
27&6&1&2
\\[1mm]
F^{\otimes2}\otimes\overline F^{\otimes2}
&
81&15&2&2
\end{array}
\]
A generic torus element realises each displayed value of \(r_1\). The values \(r_2\) for the last three rows are realised by invariant-complete two-element tuples.
\end{theorem}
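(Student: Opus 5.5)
The plan is to assemble the table row by row from the four propositions just proved, adding only the missing column entries. The column \(r_0(V)=\dim V\) is immediate from \(\dim F=3\) and \(\dim V_{p,q}=3^{p+q}\), which gives \(3,9,27,81\).

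The remaining entries come as follows.

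For \(F\), Proposition~\ref{prop:SU3-fundamental-completion} gives \(\nu=1\) and \(r_1=0\). Monotonicity and the lower bound \(\dim V^G=0\) from Proposition~\ref{prop:rank-profile-properties} then force \(r_2=0\).

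For \(F\otimes\overline F\), Proposition~\ref{prop:SU3-quadratic-nu} supplies \(r_1=3\), \(\nu=2\) and \(r_2=1\).

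For \(F^{\otimes3}\), Proposition~\ref{prop:SU3-cubic-nu} supplies \(r_1=6\), \(\nu=2\) and \(r_2=1\).

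For \(F^{\otimes2}\otimes\overline F^{\otimes2}\), Proposition~\ref{prop:SU3-quartic-nu} supplies \(r_1=15\), \(\nu=2\) and \(r_2=2\).

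In each case the value of \(r_2\) also follows from Proposition~\ref{prop:rank-profile-properties}: since \(m=2\geq\nu\), we have \(r_2=\dim V^G\). These invariant dimensions are \(1\), \(1\) and \(2\), taken respectively from \(\operatorname{End}(F)^{SU(3)}=\mathbb C I\), from \eqref{eq:determinant-invariant-space}, and from Theorem~\ref{thm:balanced-invariant-multiplicity} with \(N=3\geq m=2\).

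Next come the two realisation claims. For \(r_1\) the realiser is the generic torus element. Its fixed-space dimensions \(0,3,6,15\) were computed directly from the weight bookkeeping in the relevant subsections. These values match the uniform lower bounds: trivially for \(F\), the commutant bound \(\sum m_\alpha^2\geq3\) for \(F\otimes\overline F\), and Lemmas~\ref{lem:cubic-single-lower-bound} and~\ref{lem:quartic-lower-bound} for the other two. Hence the generic torus element attains the minimum.

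For \(r_2\) in the last three rows, Corollary~\ref{cor:semisimple-two-completion} provides a pair \(g_1,g_2\) generating a dense subgroup of \(SU(3)\). Proposition~\ref{prop:group-invariant} then gives that its common fixed space is \(V^{SU(3)}\), so this two-element tuple is invariant-complete and attains \(r_2=\dim V^G\).

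The only step needing care is the genericity used for the \(r_1\) column: one must be able to choose a single torus element avoiding all the character-resonance loci for the finitely many weights that occur. The weights that must avoid \(1\) are \(\lambda_i\), \(\lambda_i\lambda_j^{-1}\) with \(i\neq j\), \(\lambda_i\lambda_j\lambda_k\) for non-permutation triples, and \(\lambda_i\lambda_j\lambda_k^{-1}\lambda_l^{-1}\) with \(\{i,j\}\neq\{k,l\}\). Each of these is a nontrivial character of the torus \(\{\lambda_1\lambda_2\lambda_3=1\}\cong U(1)^2\). To see nontriviality I would rewrite each in the two free coordinates \(\lambda_1,\lambda_2\) and check it is not identically \(1\). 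The kernel of a nontrivial character is a proper closed subgroup, so a finite union of such kernels has empty interior, and a common generic element exists.
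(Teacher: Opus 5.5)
Your proposal is correct and follows the paper's route. The paper gives the theorem as a summary with no separate proof, and its entries are read off from Propositions~\ref{prop:SU3-fundamental-completion}, \ref{prop:SU3-quadratic-nu}, \ref{prop:SU3-cubic-nu} and~\ref{prop:SU3-quartic-nu}, with \(r_0=\dim V\) and the \(r_2\) column coming from Proposition~\ref{prop:rank-profile-properties} and Corollary~\ref{cor:semisimple-two-completion}, exactly as you assemble them.

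Your extra check that every listed weight is a nontrivial character of the maximal torus, so a single element avoiding the finitely many proper closed resonance subgroups exists, just spells out what the paper asserts in its definition of a generic torus element.
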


The displayed \(r_0,r_1,r_2\) are independently optimised tuple-length minima, not necessarily successive prefixes of one gluing tuple.

\section{The quartic \texorpdfstring{\(SU(3)\)}{SU(3)} invariant projector}
\label{sec:SU3-quartic-projector}

Let
\[
V_{\mathrm{quart}}
=
F^{\otimes2}\otimes\overline F^{\otimes2},
\qquad
F=\mathbb C^3.
\]
Theorem~\ref{thm:balanced-invariant-multiplicity} gives
\(\dim V_{\mathrm{quart}}^{SU(3)}=2\).
Equip \(F\) with its standard Hermitian inner product and \(V_{\mathrm{quart}}\) with the induced tensor-product inner product.

\subsection{Invariant basis}

Define
\[
(D_1)_{ij}^{\phantom{ij}kl}
=
\delta_i^k\delta_j^l,
\qquad
(D_2)_{ij}^{\phantom{ij}kl}
=
\delta_i^l\delta_j^k.
\]
These are the two permutation-contraction invariants corresponding to the identity and transposition in \(S_2\). Their inner products are
\[
\langle D_1,D_1\rangle
=
\langle D_2,D_2\rangle
=
9,
\qquad
\langle D_1,D_2\rangle=3.
\]
Set
\(S=D_1+D_2, \qquad A=D_1-D_2\).
Then
\(\langle S,A\rangle=0, \qquad \|S\|^2=24, \qquad \|A\|^2=12\),
so
\[
\widehat S=\frac{D_1+D_2}{\sqrt{24}},
\qquad
\widehat A=\frac{D_1-D_2}{\sqrt{12}}
\]
form an orthonormal basis of \(V_{\mathrm{quart}}^{SU(3)}\).

\begin{proposition}[Quartic invariant basis]
\label{prop:quartic-invariant-basis}
\[
V_{\mathrm{quart}}^{SU(3)}
=
\operatorname{span}\{\widehat S,\widehat A\}.
\]
\end{proposition}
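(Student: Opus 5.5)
The plan is to show that \(\widehat S\) and \(\widehat A\) are invariant, orthonormal, and that their number matches \(\dim V_{\mathrm{quart}}^{SU(3)}\). Since Theorem~\ref{thm:balanced-invariant-multiplicity} (with \(N=3\geq m=2\)) already gives \(\dim V_{\mathrm{quart}}^{SU(3)}=2\), it suffices to check that \(\widehat S,\widehat A\) lie in the invariant sector and are linearly independent; a two-element independent subset of a two-dimensional space spans it.

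For invariance, I would use the equivariant identification \eqref{eq:balanced-End} of \(V_{\mathrm{quart}}\) with \(\operatorname{End}(F^{\otimes2})\), under which \(D_1\) and \(D_2\) correspond to \(P_e=I\) and to the flip \(P_{(12)}\), as recorded in Subsection~\ref{subsec:permutation-contractions}. Both operators commute with every \(g^{\otimes2}\), so they are fixed by the conjugation action, hence \(SU(3)\)-invariant. As a sanity check one can also argue directly: contracting \(g_i^{\,a}g_j^{\,b}\overline{g_k^{\,c}}\,\overline{g_l^{\,d}}\) against \(\delta_a^c\delta_b^d\) or \(\delta_a^d\delta_b^c\) reproduces the same \(\delta\)-tensor by unitarity \(g g^\ast=I\). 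Since the invariant subspace is a linear subspace, \(S\), \(A\), and their normalizations are invariant as well.

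For orthonormality, I would apply Proposition~\ref{prop:contraction-Gram} with \(N=3\). It gives \(\langle D_1,D_1\rangle=\langle D_2,D_2\rangle=3^{c(e)}=9\) and \(\langle D_1,D_2\rangle=3^{c((12))}=3\), with the inner products real. Expanding then gives \(\|S\|^2=9+9+2\cdot3=24\), \(\|A\|^2=9+9-2\cdot3=12\), and \(\langle S,A\rangle=9-9+3-3=0\). So \(\widehat S,\widehat A\) are orthonormal and therefore linearly independent.

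The only step needing care is the identification of \(D_1,D_2\) with \(P_e,P_{(12)}\) under \eqref{eq:balanced-End}, together with the index convention in \eqref{eq:D-sigma-components}. For \(m=2\) the transposition is its own inverse, so no convention ambiguity arises. With that settled, the dimension count closes the argument.
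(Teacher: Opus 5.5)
Your proposal is correct and follows essentially the same route as the paper. The paper also combines the invariance of \(D_1,D_2\), the Gram-matrix computation giving orthonormality of \(\widehat S,\widehat A\), and \(\dim V_{\mathrm{quart}}^{SU(3)}=2\) from Theorem~\ref{thm:balanced-invariant-multiplicity} to conclude that two independent invariants span the invariant sector.
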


\begin{proof}
The tensors \(D_1,D_2\) are linearly independent \(SU(3)\)-invariants, and the invariant space has dimension \(2\).
\end{proof}

Under
\(F\otimes F = \operatorname{Sym}^2F\oplus\Lambda^2F\),
the vector \(\widehat S\) lies in the invariant line of
\(\operatorname{Sym}^2F\otimes\overline{\operatorname{Sym}^2F}\),
and \(\widehat A\) lies in the invariant line of
\(\Lambda^2F\otimes\overline{\Lambda^2F}\).

\subsection{Invariant projector}

The orthogonal projector onto the invariant sector is
\begin{equation}
\label{eq:quartic-projector}
P_{\mathrm{quart}}
=
|\widehat S\rangle\langle\widehat S|
+
|\widehat A\rangle\langle\widehat A|.
\end{equation}
Equivalently,
\[
P_{\mathrm{quart}}
=
\frac{1}{24}
|D_1+D_2\rangle\langle D_1+D_2|
+
\frac{1}{12}
|D_1-D_2\rangle\langle D_1-D_2|.
\]
In the nonorthogonal basis \((D_1,D_2)\),
\[
G
=
\begin{pmatrix}
9&3\\
3&9
\end{pmatrix},
\qquad
G^{-1}
=
\frac1{72}
\begin{pmatrix}
9&-3\\
-3&9
\end{pmatrix},
\]
and therefore
\(P_{\mathrm{quart}} = \sum_{a,b=1}^{2} |D_a\rangle(G^{-1})_{ab}\langle D_b|\).

\subsection{Invariant-complete scattering}

Choose \(g_1,g_2\in SU(3)\) with
\(V^{g_1}\cap V^{g_2} = V_{\mathrm{quart}}^{SU(3)}\);
a dense generating pair is sufficient
\citep{BreuillardGelander2003,BreuillardGelander2026Revisited}. Set
\(W_2=\rho(g_1), \qquad W_3=\rho(g_2)\).
Then
\(V_{\mathbf W} = \operatorname{span}\{\widehat S,\widehat A\}\),
and
\[
H_{\mathrm{quart}}
\cong
H_{\mathrm{free}}\otimes
I_{\operatorname{span}\{\widehat S,\widehat A\}}
\oplus
(H_{D,-}\oplus H_{D,+})
\otimes
I_{\operatorname{span}\{\widehat S,\widehat A\}^{\perp}}.
\]

\begin{theorem}[Quartic invariant-sector scattering]
\label{thm:quartic-scattering}
For invariant-complete gluing with relative matrices in the quartic representation of \(SU(3)\),
\[
T_{\mathrm{quart}}
=
|\widehat S\rangle\langle\widehat S|
+
|\widehat A\rangle\langle\widehat A|,
\qquad
R_{\mathrm{quart}}
=
-\left(I-T_{\mathrm{quart}}\right),
\]
and
\[
\operatorname{rank}T_{\mathrm{quart}}=2.
\]
\end{theorem}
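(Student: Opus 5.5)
The plan is to assemble three earlier results. The abstract fixed-space theorem for unitary gluing on \(\mathbb L_k\) (Theorem~\ref{thm:unitary-Lk-main}) gives the transmission operator. The identification of the transmitting sector with the invariant sector (Proposition~\ref{prop:group-invariant}) converts it into group invariants. The explicit orthonormal basis of the quartic invariants (Proposition~\ref{prop:quartic-invariant-basis}) then finishes the computation. No new analysis should be needed; the statement is the specialisation of the general invariant-sector scattering law to \(V_{\mathrm{quart}}\).

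\textbf{Step 1: the transmitting sector.} Fix \(g_1,g_2\in SU(3)\) realising invariant completion, so that \(V^{g_1}\cap V^{g_2}=V_{\mathrm{quart}}^{SU(3)}\). Such a pair exists by Corollary~\ref{cor:semisimple-two-completion}, or more concretely by Proposition~\ref{prop:SU3-quartic-nu}, which gives \(\nu(V_{\mathrm{quart}})=2\). Set \(W_2=\rho(g_1)\) and \(W_3=\rho(g_2)\) on \(\mathbb L_3\). More generally, take any \(\mathbb L_k\) tuple whose common fixed space equals the invariant sector, padding with identities as in Corollary~\ref{cor:origin-completion}. By definition \(V_{\mathbf W}=\ker(W_2-I)\cap\ker(W_3-I)=V^{g_1}\cap V^{g_2}=V_{\mathrm{quart}}^{SU(3)}\). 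If instead one reads ``invariant-complete'' as \(\Gamma_{\mathbf W}\) dense in \(\rho(SU(3))\), then Proposition~\ref{prop:group-invariant} gives the same identity.

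\textbf{Step 2: transmission and reflection.} Part (4) of Theorem~\ref{thm:unitary-Lk-main} gives \(T=P_{\mathbf W}\) and \(R=-Q_{\mathbf W}=-(I-P_{\mathbf W})\), where \(P_{\mathbf W}\) is the orthogonal projector onto \(V_{\mathbf W}\). This uses the fixed Hermitian tensor inner product on \(V_{\mathrm{quart}}\). Since \(\rho\) is unitary for that inner product, the projector is the orthogonal one used in the flat Friedrichs decomposition.

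\textbf{Step 3: explicit projector and rank.} Proposition~\ref{prop:quartic-invariant-basis} states that \(\widehat S,\widehat A\) form an orthonormal basis of \(V_{\mathrm{quart}}^{SU(3)}\). Orthonormality rests on the Gram computation \(\langle D_1,D_1\rangle=\langle D_2,D_2\rangle=9\) and \(\langle D_1,D_2\rangle=3\), which comes from Proposition~\ref{prop:contraction-Gram} with \(N=3\). Hence the projector is \(|\widehat S\rangle\langle\widehat S|+|\widehat A\rangle\langle\widehat A|\), which is \eqref{eq:quartic-projector}. Its rank is \(2=\dim V_{\mathrm{quart}}^{SU(3)}\), in agreement with Theorem~\ref{thm:balanced-invariant-multiplicity} (valid since \(N=3\geq m=2\)).

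The main point needing care is Step 1. Invariant completeness must be attainable, since \(r_1=15>2\) rules out a single gluing. The transmitting space must be exactly the invariant sector rather than merely contain it. I would handle both by citing \(\nu(V_{\mathrm{quart}})=2\) and the dense-generation theorem, with continuity of \(\rho\) giving \(V^{\Gamma}=V^{G}\).
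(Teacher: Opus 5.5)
Your proposal is correct and matches the paper's argument. The paper likewise fixes an invariant-complete pair so that \(V_{\mathbf W}=V_{\mathrm{quart}}^{SU(3)}\). It then invokes Theorem~\ref{thm:unitary-Lk-main} to identify \(T\) with the orthogonal projector onto that space and \(R\) with \(-(I-T)\), and reads off the explicit projector and rank \(2\) from the orthonormal basis \(\widehat S,\widehat A\) of Proposition~\ref{prop:quartic-invariant-basis}.
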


\begin{proof}
The common fixed space is \(V_{\mathrm{quart}}^{SU(3)}\); Theorem~\ref{thm:unitary-Lk-main} identifies its orthogonal projector with the transmission amplitude.
\end{proof}

For
\[
\xi=\alpha\widehat S+\beta\widehat A+\xi_\perp,
\qquad
\xi_\perp\perp V_{\mathrm{quart}}^{SU(3)},
\qquad
\|\xi\|=1,
\]
the transmission and reflection probabilities are
\(\mathcal T(\xi)=|\alpha|^2+|\beta|^2, \qquad \mathcal R(\xi)=\|\xi_\perp\|^2\).

\subsection{One- and two-gluing optima}

Proposition~\ref{prop:SU3-quartic-nu} gives
\(r_1(V_{\mathrm{quart}})=15, \qquad r_2(V_{\mathrm{quart}})=2\).
A generic torus element realises the one-gluing minimum, while an invariant-complete pair realises the two-gluing minimum. These are independently optimised values; no assertion is made that a fixed generic first element necessarily extends to a pair realising \(r_2\).

The quartic representation is the first \(SU(3)\) tensor example considered here with invariant multiplicity greater than one.

\subsection{Invariant couplings}

The bases \((D_1,D_2)\) and \((\mathcal S,\mathcal A)\) span the same invariant space. Since the flat transmission operator is the identity on that space, distinguishing internal coupling bases requires an additional operator acting nontrivially on \(V_{\mathrm{quart}}^{SU(3)}\).

\subsection{Relation to tetraquark color space}

In particle-physics notation,
\[
V_{\mathrm{quart}}
=
\mathbf3\otimes\mathbf3\otimes\overline{\mathbf3}\otimes\overline{\mathbf3}.
\]
The representation is the color space for two fundamental and two antifundamental color factors, and its two-dimensional invariant subspace is the standard two-singlet sector used in tetraquark color decompositions
\citep{Slansky1981,ParkLee2014}. The present model contains a single spatial coordinate and no Yang--Mills field, gluon exchange, confinement, spin, flavor, or four-body configuration-space dynamics. The tetraquark connection is therefore confined to the internal \(SU(3)\) representation and its invariant projector.
\section{Optimal ranks and origin thresholds}
\label{sec:rank-profiles}

Recall that \(r_m(\rho)\) is the least transmission rank attainable with \(m\) relative gluing matrices, optimised independently for each \(m\).

For the examples above,
\[
\begin{array}{c|c|c|c|c}
G
&
V
&
r_0
&
r_1
&
r_2
\\
\hline
SU(2)
&
V_j,\ j\in\frac12+\mathbb Z_{\geq0}
&
2j+1&0&0
\\[1mm]
SU(2)
&
V_j,\ j\in\mathbb Z_{>0}
&
2j+1&1&0
\\[1mm]
SU(2)
&
V_{\frac12}\otimes V_{\frac12}
&
4&2&1
\\[1mm]
SU(3)
&
F
&
3&0&0
\\[1mm]
SU(3)
&
F\otimes\overline F
&
9&3&1
\\[1mm]
SU(3)
&
F^{\otimes3}
&
27&6&1
\\[1mm]
SU(3)
&
F^{\otimes2}\otimes\overline F^{\otimes2}
&
81&15&2
\end{array}
\]
For the \(SU(3)\) rows, generic torus elements realise \(r_1\). For the rows with \(\nu=2\), invariant-complete pairs realise \(r_2=\dim V^G\).

Define the optimal-rank decrement
\(\Delta_m(\rho) = r_{m-1}(\rho)-r_m(\rho)\).
Then \(\Delta_m(\rho)\geq0\) and
\(\sum_{m=1}^{\nu(\rho)}\Delta_m(\rho) = \dim V-\dim V^G\).
Because the minima defining \(r_{m-1}\) and \(r_m\) may be attained by different tuples, \(\Delta_m\) is a difference between optimal envelopes, not necessarily the rank removed by appending one gluing to a fixed tuple.

For the three nontrivial \(SU(3)\) tensor examples with \(\nu=2\),
\[
\begin{array}{c|c|c}
V & \Delta_1 & \Delta_2\\
\hline
F\otimes\overline F & 6 & 2\\
F^{\otimes3} & 21 & 5\\
F^{\otimes2}\otimes\overline F^{\otimes2} & 66 & 13
\end{array}
\]

The origin threshold for invariant completion is
\(k_{\mathrm{inv}}(\rho) = \max\{2,\nu(\rho)+1\}\).
The intrinsic matrix-group threshold for dense generation of the gluing
image is
\(k_{\mathrm{grp}}(\rho) = \max\{2,d_{\mathrm{top}}(\rho(G))+1\}\).
Proposition~\ref{prop:completion-generation} gives
\(\nu(\rho) \leq d_{\mathrm{top}}(\rho(G)) \leq d_{\mathrm{top}}(G)\),
so
\(k_{\mathrm{inv}}(\rho) \leq k_{\mathrm{grp}}(\rho)\).
The ambient quantity \(d_{\mathrm{top}}(G)+1\) remains a sufficient bound
for dense generation of the image but need not be intrinsic when
\(\rho\) has nontrivial kernel.

For a nontrivial connected compact semisimple group \(G\),
\(d_{\mathrm{top}}(G)=2\),
hence every finite-dimensional continuous unitary representation satisfies
\(d_{\mathrm{top}}(\rho(G))\leq2\).
Accordingly, \(\mathbb L_3\) suffices for dense generation of the matrix image and
for invariant completion. Individual representations can satisfy
\(\nu(\rho)=1\), in which case invariant completion is already realizable
on \(\Ltwo\).

For a fixed nested gluing tuple, the ranks \(s_j(g_1,\ldots,g_j) = \dim\bigcap_{a=1}^{j}\ker(\rho(g_a)-I)\) record the actual origin-by-origin filtration. The optimised quantities \(r_j(\rho)\) record the best rank attainable at each tuple length.
\section{Representation-theoretic input and model dependence}
\label{sec:unitary-scope}

The representation-theoretic inputs above--Haar averaging, infinitesimal invariants, the \(SU(N)\) center rule, Schur--Weyl duality, and low-degree singlet multiplicities--are standard \citep{Hall2015,BrockerTomDieck1985,Weyl1997ClassicalGroups,FultonHarris1991,GoodmanWallach2009,Howe1989ClassicalInvariantTheory}; common fixed-space questions are likewise classical \citep{BernikEtAl2005CommonFixed}.

The finite-origin model realises the chain from a finite gluing tuple to its common fixed space, then by \(H^1\)-closure to the transmitting sector of the flat Hamiltonian. The \(k-1\) relative transition slots bound how far this chain can proceed on \(\mathbb L_k\).

Once the free--Dirichlet boundary decomposition has been obtained, its two-lead scattering matrix is standard point-interaction and quantum-graph scattering theory
\citep{KostrykinSchrader99,BerkolaikoKuchment2013,AlbeverioEtAl05}.
The non-Hausdorff step is the derivation of that decomposition from bundle compatibility and Sobolev trace closure.

The law \(T=P_{V_{\mathbf W}},\ R=-(I-P_{V_{\mathbf W}})\) is specific to the collapse measure, locally constant unitary gluing, and flat Dirichlet form; other connections, potentials, or defect interactions can retain further representation data.

For
\(F^{\otimes2}\otimes\overline F^{\otimes2}, \qquad F=\C^3\),
the invariant sector is the standard two-dimensional singlet color space used in tetraquark decompositions
\citep{Slansky1981,ParkLee2014}. No claim is made about four-body dynamics or QCD phenomenology.

Extensions that retain holonomy, resolve invariant multiplicity spaces, or introduce genuine multiparticle spatial degrees of freedom require additional geometric or operator-theoretic structure.

Part~\ref{part:detectability} compares the information retained by the propagation graph, measure, bundle domain, and operator realisation.

\clearpage
\part{Reduction and quantum detectability}
\label{part:detectability}

\section{Weighted Hausdorff reduction: topology and measure}
\label{sec:topology-versus-channels}
\label{sec:weighted-hausdorff-reduction}
\label{sec:measure-dependence}

Exceptional-fibre multiplicity is not itself a scalar propagation invariant: the boundaryless first-order reduction retains the metric Hausdorff graph and projective edge weights, while boundary variants also retain the closed scalar trace domain.

\subsection{Propagation graph and distinct multiplicities}
\label{subsec:propagation-graph}
\label{subsec:three-multiplicities}
\label{subsec:local-incidence-partitions}

Let \(q:X\longrightarrow\Gamma\) be a finite graph resolution. For a vertex \(v\) over which \(q\) has a
nontrivial fibre, define \(\kappa(v):=|q^{-1}(v)|\) and let \(d(v)\) be the number of incident graph germs. For a noncompact graph write \(N_{\mathrm{ext}}\) for the number of external half-lines, and for a connected finite graph set
\(\beta_1(\Gamma)=|E|-|V|+1\).

Exceptional multiplicity and graph valence are distinct. If the local quotient is one graph vertex, \(\mu(v)=|\mathcal S_v|\) whereas \(d(v)=|\mathcal P_-(v)|+|\mathcal P_+(v)|\); channel multiplicity is controlled by punctured-side partitions. \(\mathbb L_k\) has \((\mu,d)=(k,2)\), \(\mathbb B_k\) has \((k,k+1)\), and a two-lead \(k\)-arm theta graph has \(N_{\rm ext}=2\) and \(\beta_1=k-1\).

\begin{proposition}[Exceptional multiplicity does not determine channels]
\label{prop:multiple-origin-channel-invariance}
Within the multiple-origin lines, intervals, and circles considered in
Part~\ref{part:quantum-dynamics}, increasing the multiplicity of a fixed exceptional fibre does not
increase the local valence of the Hausdorff propagation graph. In the
branching resolutions of Part~\ref{part:quantum-dynamics}, the resolving incidence can change that
valence.
\end{proposition}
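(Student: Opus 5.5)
The plan is to verify the two assertions by direct inspection of the explicit Hausdorff reflections computed earlier, in two steps.

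\emph{Multiple-origin families.} For the multiple-origin families I will use the identity-glued construction. Replacing a point \(x_j\) (on \(\mathbb R\), on \([0,L]\), or on \(S^1_L\)) by \(k_j\) inseparable origins identifies all sheets on both punctured sides. The punctured neighbourhood of the exceptional fibre therefore has exactly two components, one on each side of \(x_j\), for every \(k_j\geq2\). The Hausdorff reflection has already been identified as the original line, interval or circle: Section~\ref{sec:Lk}, Proposition~\ref{prop:multisplit-basic}, Section~\ref{sec:multiple-origin-intervals}, Section~\ref{sec:circle-k} and Section~\ref{sec:multisplit-circle}. So the graph vertex over \(x_j\) has local valence \(d=2\). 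Equivalently, \(|\mathcal P_-(v)|=|\mathcal P_+(v)|=1\), while \(\mu(v)=k_j\) is arbitrary. Changing \(k_j\) changes only the cardinality of the fibre \(q^{-1}(x_j)\), not the reflection, so the valence is unchanged. This gives the first sentence; I will state it as the observation that \(d(v)\) depends on the punctured-side partitions and not on \(|\mathcal S_v|\).

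\emph{Branching resolutions.} For the second sentence I will exhibit the change of valence. For \(\mathbb B_k\), Theorem~\ref{thm:Bk-Hausdorff-reflection} shows that the reflection is the star \(\Gamma_k\), with vertex valence \(k+1\). The fibre cardinality here is also \(k\), so at fixed \(\mu=k\) the valence differs from the value \(2\) found for \(\mathbb L_k\). More generally, for \(\mathbb B_{\mathcal R}\), Theorem~\ref{thm:Bmn-Hausdorff-reflection} gives valence \(m+n\). This count is fixed by the incidence maps \(\ell,r\), that is, by the punctured-side partitions, independently of \(|\mathcal S|\).

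\emph{Where care is needed.} No step is deep. The only point requiring care is the precise meaning of ``does not increase'': the comparison must be made between resolutions with the same underlying graph point and only the fibre multiplicity varied. With that reading, the identity-gluing convention is exactly what forces each side partition to be a single block.
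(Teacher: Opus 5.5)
Your proposal is correct and matches the paper's own reasoning. The paper gives no separate proof; it relies on the paragraph just before the proposition, which records \((\mu,d)=(k,2)\) for \(\mathbb L_k\) and \((k,k+1)\) for \(\mathbb B_k\) and notes that \(d(v)=|\mathcal P_-(v)|+|\mathcal P_+(v)|\) is governed by the punctured-side partitions rather than by \(\mu(v)=|\mathcal S_v|\), which is exactly what you read off from the previously computed Hausdorff reflections. One small precision: what forces each side partition to be a single block is that the defining equivalence relation identifies all sheets on both punctured sides. This is a topological fact, independent of the choice of smooth (identity) transition maps.
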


\subsection{Weighted reduction theorem}
\label{subsec:weighted-Hausdorff-reduction}
\label{subsec:topology-through-measure}

Let \(\mathcal A_X\) be the descended scalar smooth core and write
\(\mathcal V_X := \overline{\mathcal A_X}^{\,H^1_\oplus(\Gamma)}\).
Assume that \(\mathcal V_X\) is a closed trace domain containing
\(C_c^\infty(\Gamma\setminus V(\Gamma))\). For the boundaryless finite smooth
graph resolutions of Section~\ref{sec:graph-resolutions},
Corollary~\ref{cor:universal-first-order-graph-closure} gives
\(\mathcal V_X=H^1_{\mathrm{cont}}(\Gamma)\).
Boundary models may impose additional ordinary endpoint trace conditions.

Let the marked analytic presentation of
Definition~\ref{def:marked-analytic-graph-resolution} carry measure
\(\widetilde\mu\), and assume that \(\mu_\Gamma=(\pi\circ q)_*\widetilde\mu\) has constant positive edge densities \(a_e>0\). Set \(\mathcal H_{\Gamma,\mathbf a} = \bigoplus_e L^2(e,a_e\,dx)\) and define
\[
  \mathfrak q_{\Gamma,\mathbf a,\mathcal V_X}[u]
  =
  \sum_e a_e\int_e|u_e'|^2\,dx,
  \qquad
  \mathcal D(\mathfrak q_{\Gamma,\mathbf a,\mathcal V_X})=\mathcal V_X.
\]

\begin{theorem}[Weighted Hausdorff reduction]
\label{thm:weighted-Hausdorff-reduction}
Under these hypotheses, the resolution-induced scalar kinetic form and its
associated Friedrichs Hamiltonian are completely determined by the reduced
datum
\[
  (\Gamma,\mathbf a,\mathcal V_X).
\]
If \(\mathcal V_X=H^1_{\mathrm{cont}}(\Gamma)\), the trace-domain entry is
universal and the reduced datum simplifies to the weighted metric graph
\((\Gamma,\mathbf a)\); the associated operator is then the weighted
Kirchhoff Laplacian of Theorem~\ref{thm:weighted-kirchhoff}.
\end{theorem}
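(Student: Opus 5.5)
The plan is to unwind the definitions and observe that nothing beyond \((\Gamma,\mathbf a,\mathcal V_X)\) enters the construction of the form. The first step is to pin down the Hilbert space. The scalar Hilbert space is \(\mathcal H_{\Gamma,\mathbf a}=\bigoplus_e L^2(e,a_e\,dx)\). By the hypothesis on the marked presentation, \(\mu_\Gamma=(\pi\circ q)_*\widetilde\mu\) has constant density \(a_e\) on each open edge, and vertices have \(\mu_\Gamma\)-measure zero. Hence \(L^2(\Gamma,\mu_\Gamma)\) equals \(\mathcal H_{\Gamma,\mathbf a}\) exactly, and the remaining features of \(\widetilde\mu\), the sheet labels, and the non-Hausdorff manifold \(X\) disappear at this stage. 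Proposition~\ref{prop:marked-measure-invariance} shows that only the projective class of \(\mathbf a\) matters, but we do not need that here.

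The second step is to show that the form is well defined and closed. The expression \(\sum_e a_e\int_e|u_e'|^2\) involves only the edge lengths, the weights, and edgewise derivatives. The shifted form norm is the weighted direct-sum \(H^1\)-norm. Since there are finitely many edges and every \(a_e>0\), this norm is equivalent to the unweighted \(H^1_\oplus\)-norm. By hypothesis \(\mathcal V_X\) is a closed subspace of \(H^1_\oplus(\Gamma)\), so the form is closed. It is also densely defined in \(\mathcal H_{\Gamma,\mathbf a}\), because \(\mathcal V_X\supseteq C_c^\infty(\Gamma\setminus V(\Gamma))\) and that space is dense in each weighted edge \(L^2\). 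The argument is the same as in Proposition~\ref{prop:weighted-form-closed}, now with a general closed trace domain.

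The third step is determinacy. Kato's representation theorem assigns a unique nonnegative self-adjoint operator to a closed, densely defined, nonnegative form. That operator depends only on the Hilbert space, the form expression, and the form domain, and these are determined by \((\Gamma,\mathbf a,\mathcal V_X)\). The only place the resolution could enter is through \(\mathcal V_X\), which is the \(H^1\)-closure of \(\mathcal A_X\) and is part of the datum. The main point to check is that \(\mathcal V_X\) is indeed the form domain. The form is initially defined on \(\mathcal A_X\), and by norm equivalence its closure is \(\overline{\mathcal A_X}^{H^1_\oplus}=\mathcal V_X\); this is the argument of Proposition~\ref{prop:form-closability} and Theorem~\ref{thm:closed-form-domain}, specialised to \(m=1\).

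The final step is the specialisation. In the boundaryless case, Corollary~\ref{cor:universal-first-order-graph-closure} gives \(\mathcal V_X=H^1_{\mathrm{cont}}(\Gamma)\). This domain is determined by \(\Gamma\) alone, so the datum reduces to \((\Gamma,\mathbf a)\). Theorem~\ref{thm:weighted-kirchhoff} then identifies the operator as the weighted Kirchhoff Laplacian.
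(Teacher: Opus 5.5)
Your proposal is correct and follows the paper's own argument. The pushforward measure fixes the Hilbert space and the form, the completed core fixes the closed domain \(\mathcal V_X\), the representation theorem then determines the operator, and when \(\mathcal V_X=H^1_{\mathrm{cont}}(\Gamma)\), Theorem~\ref{thm:weighted-kirchhoff} identifies it as the weighted Kirchhoff Laplacian. Your extra checks (vertices are null, the form is closed and dense, the closed domain equals \(\overline{\mathcal A_X}\)) only make explicit what the paper takes for granted. One small slip: the claim that only the projective weight class matters is Corollary~\ref{cor:weighted-common-rescaling}, not Proposition~\ref{prop:marked-measure-invariance}, though you rightly do not rely on it.
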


\begin{proof}
The completed scalar core fixes the closed form domain \(\mathcal V_X\). The
marked resolving-presentation measure fixes the Hilbert norm and derivative
form through its pushforward \(\mu_\Gamma\). The representation theorem for
closed nonnegative forms therefore determines the associated self-adjoint
operator from \((\Gamma,\mathbf a,\mathcal V_X)\). When
\(\mathcal V_X=H^1_{\mathrm{cont}}(\Gamma)\), integration by parts gives
exactly the weighted Kirchhoff realisation of
Theorem~\ref{thm:weighted-kirchhoff}. No higher scalar smooth data enter this
second-order operator once the reduced datum has been fixed.
\end{proof}

For the boundary variant of Section~\ref{sec:multiple-origin-intervals}, the
ordinary endpoint conditions are encoded in \(\mathcal V_X\); they are not
recoverable from \((\Gamma,[\mathbf a])\) alone.

A common positive rescaling of the weights is immaterial.

\begin{corollary}[Projective weight invariance]
\label{cor:weighted-common-rescaling}
Fix the metric graph and closed trace domain \(\mathcal V_X\). If
\[
  \widetilde a_e=c\,a_e
  \qquad
  (c>0)
\]
on every edge, then the two resulting scalar Hamiltonians are unitarily
equivalent. Hence, for fixed \(\mathcal V_X\), the scalar theory depends on
the projective weight class
\[
  [\mathbf a].
\]
\end{corollary}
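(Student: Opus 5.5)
The plan is to exhibit the unitary explicitly: a common rescaling of the weights acts on both the Hilbert norm and the kinetic form by the same factor $c$. Let $\mathcal H_{\mathbf a}=\bigoplus_e L^2(e,a_e\,dx)$ and $\mathcal H_{c\mathbf a}=\bigoplus_e L^2(e,ca_e\,dx)$. I will use the map $U:\mathcal H_{\mathbf a}\to\mathcal H_{c\mathbf a}$, $Uu:=c^{-1/2}u$, applied edgewise. Since $\|Uu\|_{c\mathbf a}^2=\sum_e ca_e\int_e|c^{-1/2}u_e|^2\,dx=\|u\|_{\mathbf a}^2$ and $U$ is clearly surjective, it is unitary.

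Next I compare the two forms. The closed trace domain $\mathcal V_X$ is a subspace of $H^1_\oplus(\Gamma)$ defined by linear trace conditions, so it is invariant under multiplication by a nonzero scalar; hence $U\mathcal V_X=\mathcal V_X$. For $u\in\mathcal V_X$,
\[
  \mathfrak q_{\Gamma,c\mathbf a,\mathcal V_X}[Uu]
  =\sum_e ca_e\int_e|c^{-1/2}u_e'|^2\,dx
  =\mathfrak q_{\Gamma,\mathbf a,\mathcal V_X}[u].
\]
Thus $U$ maps the form domain onto the form domain and intertwines the two forms; polarization extends this to the sesquilinear forms. Both forms are closed, since each form norm is equivalent to the $H^1_\oplus$-norm on the closed subspace $\mathcal V_X$, as in Proposition~\ref{prop:weighted-form-closed}.

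The Hamiltonians are then identified through the form representation theorem. The operator $H$ associated with a closed form is characterised by $\mathfrak q[\phi,u]=\langle\phi,Hu\rangle$ for all $\phi$ in the form domain. Given $u\in\mathcal D(H_{\mathbf a})$ and $\phi\in\mathcal V_X$, I write $\mathfrak q_{c\mathbf a}[U\phi,Uu]=\mathfrak q_{\mathbf a}[\phi,u]=\langle\phi,H_{\mathbf a}u\rangle_{\mathbf a}=\langle U\phi,UH_{\mathbf a}u\rangle_{c\mathbf a}$. As $\phi$ ranges over $\mathcal V_X$, $U\phi$ does as well, so $Uu\in\mathcal D(H_{c\mathbf a})$ and $H_{c\mathbf a}Uu=UH_{\mathbf a}u$. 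Applying the same argument to $U^{-1}$ gives the reverse inclusion, hence $H_{c\mathbf a}=UH_{\mathbf a}U^{-1}$. This shows that, for fixed $\mathcal V_X$, the Hamiltonian depends only on $[\mathbf a]$. Moreover, Theorem~\ref{thm:weighted-Hausdorff-reduction} shows that nothing beyond $(\Gamma,\mathbf a,\mathcal V_X)$ enters.

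No step is a real obstacle. The only point to check is that $\mathcal V_X$ is scaling-invariant and independent of the weights, and this holds because it is defined as an $H^1_\oplus$-closure, whose norm is equivalent for all positive weights.
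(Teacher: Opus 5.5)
Your proposal is correct and is essentially the paper's argument: the paper uses multiplication by $\sqrt c$ from $L^2(\Gamma,c\mathbf a\,dx)$ to $L^2(\Gamma,\mathbf a\,dx)$ (the inverse of your $U$) and observes that the form is carried to the original one. The form-representation step that gives $H_{c\mathbf a}=UH_{\mathbf a}U^{-1}$ is left implicit in the paper, and you spell it out explicitly.
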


\begin{proof}
Multiplication by \(\sqrt c\) gives a unitary map from
\(L^2(\Gamma,c\mathbf a\,dx)\) to
\(L^2(\Gamma,\mathbf a\,dx)\), and the form is carried to the original one.
\end{proof}

For branching, topology fixes the arms while the projective weights fix the bright-mode probabilities \(P_{j\leftarrow0}=a_j/a_0\).

\subsection{Scalar equivalence and information loss}
\label{subsec:scalar-equivalent-resolutions}
\label{subsec:what-survives-scalar}

\begin{corollary}[Scalar \(H^1\)-equivalence]
\label{cor:scalar-H1-equivalence}
Let
\[
  X_1\longrightarrow\Gamma_1,
  \qquad
  X_2\longrightarrow\Gamma_2
\]
satisfy the hypotheses of
Theorem~\ref{thm:weighted-Hausdorff-reduction}, with closed trace domains
\(\mathcal V_1\) and \(\mathcal V_2\). If a metric-graph isomorphism
\[
  F:\Gamma_1\longrightarrow\Gamma_2
\]
preserves the projective edge weights and carries \(\mathcal V_1\) onto
\(\mathcal V_2\), then the resulting scalar closed forms and Friedrichs
Hamiltonians are unitarily equivalent.
\end{corollary}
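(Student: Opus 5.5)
The plan is to build one explicit unitary out of the graph isomorphism \(F\), show that it carries the form domain and the form of the first resolution onto those of the second (up to a harmless constant), and then apply the uniqueness part of the representation theorem for closed nonnegative forms.

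By Theorem~\ref{thm:weighted-Hausdorff-reduction}, each Friedrichs Hamiltonian is the operator associated with the closed form \(\mathfrak q_{\Gamma_i,\mathbf a^{(i)},\mathcal V_i}\) on \(\mathcal H_{\Gamma_i,\mathbf a^{(i)}}\). So it suffices to produce a unitary \(U:\mathcal H_{\Gamma_2,\mathbf a^{(2)}}\to\mathcal H_{\Gamma_1,\mathbf a^{(1)}}\) with \(U\mathcal V_2=\mathcal V_1\) and \(\mathfrak q_1[Uu]=\mathfrak q_2[u]\). The representation theorem then gives \(H_1=UH_2U^{-1}\).

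First I reduce to equal weights. Projective preservation means \(a^{(2)}_{F(e)}=c\,a^{(1)}_e\) for one constant \(c>0\). By Corollary~\ref{cor:weighted-common-rescaling}, rescaling \(\mathbf a^{(2)}\) by \(c^{-1}\) changes the second Hamiltonian only up to unitary equivalence and leaves \(\mathcal V_2\) untouched. So I may assume \(a^{(2)}_{F(e)}=a^{(1)}_e\) on every edge.

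Next I define the pullback \((Uu)_e:=u_{F(e)}\circ F|_e\). A metric-graph isomorphism restricts on each edge to an isometry of intervals, either the identity or a reflection \(x\mapsto\ell_e-x\) when orientations differ. Hence \(U\) preserves each weighted \(L^2\) norm, and \(|(u\circ F)'|=|u'\circ F|\). Summing over edges with the matched weights gives unitarity of \(U\) and the identity \(\mathfrak q_1[Uu]=\mathfrak q_2[u]\) on \(H^1_\oplus\). The hypothesis that \(F\) carries \(\mathcal V_1\) onto \(\mathcal V_2\) is read as \(U\mathcal V_2=\mathcal V_1\). In the universal case \(\mathcal V_i=H^1_{\mathrm{cont}}(\Gamma_i)\), this follows directly, because \(F\) maps vertices to vertices and incident edge ends to incident edge ends, so it preserves vertex continuity.

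The step that needs the most care is the bookkeeping for edge ends. Parallel edges, loops, and edge reversals must be handled so that \(U\) is well defined edgewise and its inverse is the pullback by \(F^{-1}\), which is what makes \(U\) surjective. Once that is in place, the form identity extends by polarisation to the sesquilinear forms, and uniqueness of the associated operator finishes the proof.
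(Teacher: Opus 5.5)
Your proposal is correct and follows the route the paper intends: the paper gives no separate proof, treating the corollary as an immediate consequence of Theorem~\ref{thm:weighted-Hausdorff-reduction} (the reduced datum determines the closed form and Friedrichs operator) and Corollary~\ref{cor:weighted-common-rescaling} (common rescaling of the weights). Your explicit pullback unitary \(U\), together with the reading \(U\mathcal V_2=\mathcal V_1\) of the hypothesis, the edgewise isometries (necessarily the identity on external half-lines), and the uniqueness part of the representation theorem, supplies the details the paper leaves implicit.
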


Information loss is explicit: \(\mathbb L_k\) forgets origin multiplicity, while an \(m\)-to-\(n\) junction reduces its incidence matrix \(C=(C_{ij})\) to row and column sums. Equal marginals therefore give the same scalar vertex scattering, although the reduced graph still retains valence, lengths, cycles, and projective weights.

Observable equivalence is coarser still. Root incidence on a resolving tree suppresses the internal hierarchy, whereas reverse incidence can access dark sectors; recombination restores sensitivity through \(J(p),K(p)\), and compactification transfers the same path information to spectral quantization. Even then, leading Weyl asymptotics retain only total metric length.

Recombination restores direct phase sensitivity to internal path data. For the generalized theta graph, the external scattering law
depends on
\(J(p)=\sum_j a_j\csc(p\ell_j), \qquad K(p)=\sum_j a_j\cot(p\ell_j)\).
Cycles can restore observable sensitivity to internal metric structure
that a forward experiment on an acyclic network loses.

\subsection{Limits of the scalar invariant}
\label{subsec:limits-propagation-graph}

The weighted graph alone is complete only for \(H^1_{\rm cont}(\Gamma)\). In general the reduced scalar datum is \((\Gamma,[\mathbf a],\mathcal V_X)\), with \(\mathcal V_X\) recording additional closed trace conditions; this determines the flat scalar form and Friedrichs operator but not bundle or higher-smooth data.

The same \((\Gamma,[\mathbf a])\) can therefore support different scalar or bundle trace domains, while the oriented first-order derivative retains further orientation and deficiency data.

\section{Bundle domains, holonomy, and smooth resolution}
\label{sec:bundle-smooth-resolution}
\label{sec:bundle-versus-Hamiltonian}
\label{sec:smooth-structure-limits}

The weighted scalar graph does not determine bundle domains or higher smooth compatibility: unitary transport changes value traces, whereas branch reparametrizations may enter only higher jets.

\subsection{Rank-one transport and nodal collapse}
\label{subsec:bundle-fixed-graph}
\label{subsec:bundle-local-equaliser}
\label{subsec:bundle-jets-versus-H1}

Let \(q:X\longrightarrow\Gamma\) be a finite graph resolution with projective edge weights
\([\mathbf a]\), and let \(E\longrightarrow X\) be a rank-one Hermitian bundle with locally constant unitary transition
functions. After choosing frames on the regular graph edges, sections are
represented by scalar edge functions. The bundle changes the admissible
endpoint relations, not the local differential expressions.

Consider a multiple-origin defect joining the same two regular edge germs,
with relative transports
\(z_1,\ldots,z_k\in U(1)\).
The endpoint equaliser is
\[
  \mathcal E_v^0
  =
  \left\{
    (\psi_-,\psi_+)\in\mathbb C^2:
    \psi_-=z_i\psi_+
    \text{ for every }i
  \right\}.
\]

\begin{proposition}[Rank-one endpoint dichotomy]
\label{prop:rank-one-endpoint-dichotomy}
If
\[
  z_1=\cdots=z_k=:z,
\]
then
\[
  \mathcal E_v^0
  =
  \{(z\xi,\xi):\xi\in\mathbb C\}
\]
is one-dimensional. If two transport factors differ, then
\[
  \mathcal E_v^0=\{0\}.
\]
\end{proposition}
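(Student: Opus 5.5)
The plan is direct linear algebra on the defining relations of \(\mathcal E_v^0\), one case at a time. The space is the intersection of the \(k\) lines \(L_i:=\{(\psi_-,\psi_+)\in\mathbb C^2:\psi_-=z_i\psi_+\}\). Each \(L_i\) is a one-dimensional subspace spanned by \((z_i,1)\), because the defining equation fixes \(\psi_-\) once \(\psi_+\) is chosen.

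\emph{Coherent case.} If \(z_1=\cdots=z_k=z\), all the \(L_i\) coincide with \(L_1\). The intersection is therefore \(\{(z\xi,\xi):\xi\in\mathbb C\}\), and it is one-dimensional since the vector \((z,1)\) is nonzero.

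\emph{Incoherent case.} Suppose \(z_i\neq z_j\) for some pair \(i,j\). Any element of \(\mathcal E_v^0\) satisfies both \(\psi_-=z_i\psi_+\) and \(\psi_-=z_j\psi_+\). Subtracting gives \((z_i-z_j)\psi_+=0\), so \(\psi_+=0\). Then \(\psi_-=z_i\psi_+=0\) as well, and \(\mathcal E_v^0=\{0\}\). This is the value-trace analogue of the full-jet argument in Theorem~\ref{thm:multisplit-local-dichotomy} and Theorem~\ref{thm:Lk-H1-collapse}, restricted to order zero.

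There is no real obstacle here. The only point to check is the convention behind the relation \(\psi_-=z_i\psi_+\): it should be the normalized relative transport from the \(+\) edge germ to the \(-\) edge germ, obtained after the frame choices on the regular edges, exactly as in Proposition~\ref{prop:unitary-Lk-normalization} with \(r=1\). Once that convention is fixed, both cases follow from the elementary computation above. Unitarity of the \(z_i\) is used only to guarantee that each \(z_i\) is a well-defined nonzero scalar; the argument does not otherwise depend on it.
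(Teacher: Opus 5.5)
Your proof is correct and is essentially the paper's argument: the coherent case is immediate, and when \(z_i\neq z_j\) the relations \(z_i\psi_+=\psi_-=z_j\psi_+\) force \(\psi_+=0\) and hence \(\psi_-=0\). As a minor remark, the argument does not use unitarity or even nonvanishing of the \(z_i\), since \((z_i,1)\neq 0\) for every \(z_i\).
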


\begin{proof}
The coherent case is immediate. If \(z_i\neq z_j\), then \(z_i\psi_+=\psi_-=z_j\psi_+\) forces \(\psi_+=0\), hence also \(\psi_-=0\).
\end{proof}

In the identity-glued models, the same fibre relation holds at every smooth derivative order. Incompatible transport therefore forces a flat smooth core but only a node after \(H^1\)-closure. This collapses the \(2^{k-1}\) labelled spin structures on \(\mathbb L_k\) to two \(H^1\)-domains, and with several split points only the active defect set survives.

\subsection{Cycle holonomy and analytic cutting}
\label{subsec:bundle-cycle-holonomy}
\label{subsec:bundle-active-erases-holonomy}

A cycle can retain coherent bundle information that no individual local
equaliser detects. If all local transports are coherent, local frame changes
remove them except for their product
\(Z=e^{i\theta}\in U(1)\).
On a circle of length \(L\), the resulting domain is
\(H^1_Z(0,L) = \{\psi\in H^1(0,L):\psi(L)=Z\psi(0)\}\),
and the Friedrichs operator satisfies the corresponding derivative relation
\(\psi'(L)=Z\psi'(0)\).
Its spectrum is
\(\lambda_n(Z) = \left( \frac{2\pi n+\theta}{L} \right)^2, \qquad n\in\mathbb Z\).

\begin{theorem}[Holonomy under nodal cutting]
\label{thm:holonomy-cut-principle}
For the rank-one circle resolutions of
Sections~\ref{sec:circle-k} and \ref{sec:multisplit-circle}, coherent
\(U(1)\) holonomy affects the resulting \(H^1\)-theory precisely while the
corresponding cycle remains uncut by an active nodal defect. Once such a
defect imposes zero trace, the global holonomy no longer affects the
Friedrichs Hamiltonian.
\end{theorem}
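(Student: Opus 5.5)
The plan is to split into two cases according to whether the active set is empty. In each case I will read off the closed form domain from the earlier reduction theorems, and then check how the holonomy enters the form and the associated operator.

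\emph{Uncut case.} Here every fibre is coherent; for \(\mathbb C_k(L)\) this means all \(z_i\) agree. By Theorem~\ref{thm:circle-k-spin-classification} and its \(U(1)\) analogue, and the coherent case of Section~\ref{subsec:multisplit-circle-H1}, the arc gauges reduce all local transports to the single product \(Z=e^{i\theta}\). The closed domain is then \(H^1_Z(0,L)\).

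To show that the holonomy genuinely affects the \(H^1\)-theory, I will use the explicit spectrum \(\lambda_n(Z)=((2\pi n+\theta)/L)^2\). Distinct \(Z\) give distinct spectra: the lowest eigenvalue \(\min_n((2\pi n+\theta)/L)^2\) already determines \(\theta\) modulo the reflection \(\theta\mapsto-\theta\). So the Friedrichs Hamiltonian depends on \(Z\), at least up to complex conjugation. I will also note that the domains \(H^1_Z\) are distinct subspaces for distinct \(Z\), which is the domain-level statement.

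\emph{Cut case.} Suppose at least one fibre is active. By Proposition~\ref{prop:rank-one-endpoint-dichotomy}, each active fibre forces both one-sided traces to vanish. The cycle then decomposes into the closed arcs \(J_r\) between consecutive active points. On each \(J_r\) the remaining coherent transports are locally constant phases on a path component, and a path carries no cycle, so they are gauge-removable by a fibrewise unitary multiplication that is constant on each regular arc. This unitary preserves both \(L^2\) and the flat form \(\int|\psi'|^2\). It therefore carries the closure of the core to \(\bigoplus_r H^1_0(J_r)\), with no \(Z\) remaining, as stated in Section~\ref{subsec:multisplit-circle-H1}.

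The representation theorem then gives \(H_A=\bigoplus_r H_{D,J_r}\) independently of \(Z\). Two cut systems that differ only in holonomy, with the same active set, are therefore unitarily equivalent by a multiplication operator. When \(N=1\) and the fibre is active, I will instead cite the mixed-sector statement \(H^1_0(0,L)\) directly.

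The main obstacle is making precise what ``affects'' means, since the gauge identification is a unitary equivalence rather than literal equality. I would phrase the claim as follows: in the uncut case the isospectral class varies with \(Z\), while in the cut case the operator is independent of \(Z\) up to a canonical gauge unitary that is independent of the form. The remaining detail is that this gauge unitary is a constant phase on each arc and so commutes with differentiation; I expect this to be a routine check.
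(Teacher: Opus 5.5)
Your proposal is correct and follows essentially the same route as the paper. In the uncut case you reduce the coherent transports by arc gauges to the product \(Z\) and read off \(H^1_Z(0,L)\) and \(\lambda_n(Z)\). In the cut case you use the zero-trace decomposition into Dirichlet arcs, then constant phase gauges on each path component, to remove \(Z\) from the closed form and its Friedrichs operator. Your extra observation that the lowest eigenvalue recovers \(\theta\) up to \(\theta\mapsto-\theta\) usefully sharpens what ``affects'' means, which the paper leaves implicit. That bound is sharp: \(x\mapsto L-x\) unitarily intertwines the \(Z\) and \(\overline{Z}\) operators.
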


\begin{proof}
With no active defect, local frame changes reduce the coherent transports to
their product \(Z\), giving the quasiperiodic trace relation and the spectrum
displayed above. An active defect imposes zero trace and decomposes the form
domain into Dirichlet path components. Constant phase changes on those
components remove every remaining coherent local transport, so the closed
form and its Friedrichs operator are independent of \(Z\).
\end{proof}

\subsection{Bundle classes, Hamiltonians, and extension data}
\label{subsec:bundle-not-operator}
\label{subsec:bundle-first-order-different}
\label{subsec:extension-not-bundle}

\begin{proposition}[Hamiltonian equivalence criterion]
\label{prop:bundle-Hamiltonian-equivalence}
Let \(E_1,E_2\to X\) be rank-one bundle sectors over the same weighted
propagation graph. If their regular-locus Hilbert identifications agree and
\[
  \overline{\mathcal C_{E_1}}^{\,H^1}
  =
  \overline{\mathcal C_{E_2}}^{\,H^1},
\]
then their resulting kinetic forms and Friedrichs Hamiltonians coincide
under that identification.
\end{proposition}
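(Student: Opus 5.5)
The proof is essentially an application of the representation theorem for closed nonnegative forms, so the plan is to check that the two sectors produce literally the same form on the same Hilbert space. First, I would fix the common regular-locus identification. Both sectors are rank-one Hermitian bundles with locally constant unitary transitions over the same weighted propagation graph. After the chosen frames on the regular edges, each sector's $L^2$ space is identified unitarily with $\mathcal H_{\Gamma,\mathbf a}=\bigoplus_e L^2(e,a_e\,dx)$. The hypothesis says these identifications agree, so both sectors live in one Hilbert space with one inner product, fixed by the pushed-forward marked measure as in Theorem~\ref{thm:weighted-Hausdorff-reduction}.

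Second, I would show that the kinetic form is given by the same expression in both sectors. Because the transitions are locally constant and unitary, $d(g\psi)=g\,d\psi$ and $|g\psi'|=|\psi'|$ pointwise, exactly as in the flat-connection discussion of Section~\ref{sec:unitary-L2}. Hence on each smooth core $\mathcal C_{E_i}$ the flat derivative form is
\[
\sum_e a_e\int_e|u_e'|^2\,dx
\]
in the regular-edge frames. This expression is independent of $i$. Its form norm is equivalent to the weighted direct-sum $H^1$-norm, so by Proposition~\ref{prop:form-closability} each form is closable. Its closure has domain equal to the $H^1$-closure of the core, with the same expression extended by continuity.

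Third, I would invoke the hypothesis $\overline{\mathcal C_{E_1}}^{\,H^1}=\overline{\mathcal C_{E_2}}^{\,H^1}$. The two closed forms then have identical domains and are given by identical expressions. They are therefore the same closed nonnegative form on the same Hilbert space, and the representation theorem \citep{Kato1995} yields the same associated self-adjoint operator. That operator is the Friedrichs Hamiltonian in each sector, so the two Hamiltonians coincide.

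The only delicate point is the second step. One must make sure that the regular-edge frames used in the hypothesis are the ones in which both sectors' derivatives are computed, so that no residual constant phase enters the form. I would handle this by noting that constant unitary frame changes on an edge commute with $d/dx$ and preserve $|u'|^2$. Any discrepancy therefore affects only the Hilbert identification, and that identification is assumed to agree.
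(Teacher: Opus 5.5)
Your proposal is correct and follows the same route as the paper: the same Hilbert space and edge weights, the same flat edgewise expression $\sum_e a_e\int_e|u_e'|^2\,dx$, and the same closed form domain, so the representation theorem yields the same Friedrichs operator. You only spell out the flat-connection and closability steps that the paper's two-line proof leaves implicit. Note that Proposition~\ref{prop:form-closability} is stated for the adjunction setting, but its argument carries over verbatim to the weighted graph $H^1$-norm.
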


\begin{proof}
The two forms have the same edge weights, differential expression, and
closed form domain. Their associated self-adjoint operators are therefore
identical.
\end{proof}

The same statement applies to the geometrically selected first-order
operator once an orientation is fixed. However, the first-order and
second-order theories ask different questions. The positive form has a
unique self-adjoint Friedrichs realisation; the edgewise derivative may be
self-adjoint, may admit a family of self-adjoint extensions, or may have
unequal deficiency indices and no self-adjoint extension.

\subsection{\texorpdfstring{Smooth resolution beyond \(H^1\)}{Smooth resolution beyond H1}}
\label{subsec:smooth-data-beyond-graph}
\label{subsec:smooth-transported-jets}
\label{subsec:smooth-H1-universality}
\label{subsec:why-second-order-H1}

The smooth resolution can contain information that is absent even from the
bundle-sensitive \(H^1\)-domain. Near a resolved vertex, two branch
coordinates may be related by a smooth germ
\(h(t) = \lambda t+c_2t^2+c_3t^3+\cdots, \qquad \lambda\neq0\).
Derivative transport depends on the jet of \(h\):
\((f\circ h)'(0)=h'(0)f'(0)\),
\((f\circ h)''(0) = h'(0)^2f''(0)+h''(0)f'(0)\).
Value matching does not determine higher smooth compatibility.

Let \(E_\pi^r\) denote the transported order-\(r\) jet equaliser at a resolved vertex.
The scalar closure theorem, in particular
Corollary~\ref{cor:universal-first-order-graph-closure}, gives under the
hypotheses used here,
\(\overline{\mathscr A_M}^{\,H^1_\oplus(\Gamma)} = H^1_{\mathrm{cont}}(\Gamma)\).

\begin{theorem}[Second-order smooth-resolution blindness]
\label{thm:second-order-smooth-blindness}
Let
\[
  \pi_i:M_i\longrightarrow\Gamma,
  \qquad
  i=1,2,
\]
be scalar graph resolutions satisfying first-order scalar universality and
inducing the same projective edge weights and edge metrics. Then their
resolution-induced scalar kinetic forms and weighted Kirchhoff Hamiltonians are
unitarily equivalent. Differences carried solely by higher transported
endpoint jets are invisible to this second-order Friedrichs theory.
\end{theorem}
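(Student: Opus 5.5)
The plan is to reduce the statement to Theorem~\ref{thm:weighted-Hausdorff-reduction} together with Corollary~\ref{cor:weighted-common-rescaling}. First, for each \(i=1,2\), I will identify the closed form domain. The hypothesis of first-order scalar universality means that \(\overline{\mathscr A_{M_i}}^{\,H^1_\oplus(\Gamma)}=H^1_{\mathrm{cont}}(\Gamma)\); in the boundaryless finite smooth case this is Corollary~\ref{cor:universal-first-order-graph-closure}. The two resolutions therefore share the same trace domain \(\mathcal V_{M_1}=\mathcal V_{M_2}=H^1_{\mathrm{cont}}(\Gamma)\), which is a subspace of the common edgewise space. The edge metrics also agree, so both resolutions use the same intervals \(I_e\). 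It follows that the underlying sets and differential expressions coincide, and the only remaining difference is the weight vectors \(\mathbf a^{(1)},\mathbf a^{(2)}\).

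Second, I will use the equality of projective weight classes. This gives \(c>0\) with \(\mathbf a^{(2)}=c\,\mathbf a^{(1)}\). Multiplication by \(\sqrt c\) is a unitary from \(L^2(\Gamma,\mathbf a^{(2)}dx)\) to \(L^2(\Gamma,\mathbf a^{(1)}dx)\). It preserves \(H^1_{\mathrm{cont}}(\Gamma)\) and carries \(\mathfrak q_{\mathbf a^{(2)}}\) to \(\mathfrak q_{\mathbf a^{(1)}}\). By the uniqueness in the representation theorem for closed nonnegative forms, this unitary intertwines the associated Friedrichs operators. Theorem~\ref{thm:weighted-kirchhoff} identifies both operators as weighted Kirchhoff Laplacians, and the vertex conditions are invariant under a common rescaling of the weights. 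This gives the unitary equivalence.

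Third, I will address the blindness clause. Differences in higher transported jets change \(\mathfrak E_{\pi_i}^r\), and possibly \(\mathfrak J_{\pi_i}^r\) for \(r\ge1\). The \(H^1\) closure, however, sees only \(\mathfrak J_{\pi_i}^0=\mathfrak E_{\pi_i}^0\). By Proposition~\ref{prop:zeroth-order-prolongation} this is just vertex continuity, so by Theorem~\ref{thm:exact-graph-resolution-closure} with \(m=1\) it is independent of the transport germs. Since the reduced datum \((\Gamma,[\mathbf a],H^1_{\mathrm{cont}}(\Gamma))\) determines the operator, no data beyond it can influence the result.

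I expect the main obstacle to be making precise what ``inducing the same projective edge weights'' means when the two resolutions are not a priori graph-isomorphic. I will interpret it as equality of the two weight vectors up to one common constant over the fixed graph \(\Gamma\). Under that reading, the argument above needs no further identification of the marked presentations.
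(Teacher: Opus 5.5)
Your proposal is correct and follows essentially the same route as the paper. First-order scalar universality fixes both closed form domains as \(H^1_{\mathrm{cont}}(\Gamma)\), the common-rescaling unitary of Corollary~\ref{cor:weighted-common-rescaling} matches the Hilbert norms and kinetic forms, and the representation theorem transfers this to the Friedrichs operators; your third step, which traces the blindness clause through \(\mathfrak J_\pi^0=\mathfrak E_\pi^0\) and Theorem~\ref{thm:exact-graph-resolution-closure} at \(m=1\), spells out what the paper states in one line.
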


\begin{proof}
First-order scalar universality identifies both closed form domains with
\(H^1_{\mathrm{cont}}(\Gamma)\). Equality of the projective weights gives,
after the common unitary normalization of
Corollary~\ref{cor:weighted-common-rescaling}, the same Hilbert norm and
kinetic form. The associated self-adjoint operators are therefore unitarily
equivalent. Higher endpoint-jet relations do not enter either closed form.
\end{proof}

Variationally, constraints lost in the closed \(H^1\)-form domain are not restored merely because the associated operator acts edgewise on \(H^2\).

\subsection{Higher Sobolev and formal visibility}
\label{subsec:sobolev-jet-visibility}
\label{subsec:tangent-identity-return}
\label{subsec:formal-prolongation}
\label{subsec:second-order-formal-blindness}

Higher Sobolev orders successively expose endpoint jets: in one dimension \(u^{(j)}(v)\) is continuous on \(W^{m,p}\) when \(j<m-1/p\). A tangent-to-identity return germ \(h(t)=t+a t^{q+1}+O(t^{q+2})\) acts trivially on values but can alter these higher traces, and finite compatibility may still fail formal prolongation through \(P_\pi^r=E_\pi^r/\operatorname{pr}_r(E_\pi^\infty)\).

Finite-order compatibility can itself fail to prolong to the full formal
system. If
\(I_\pi^r := \operatorname{pr}_r(E_\pi^\infty)\),
then \(P_\pi^r:=E_\pi^r/I_\pi^r\) measures the order-\(r\) compatibility directions that fail to prolong
formally.

Sections~\ref{sec:extension-failure}--\ref{sec:formal-holonomy} quantify these higher defects; detecting them requires probes beyond the ordinary first- and second-order models.

\section{Hierarchy of quantum detectability}
\label{sec:detectability-hierarchy}

The preceding results define a hierarchy of equivalence relations indexed by the chosen probe.

\subsection{Scalar and bundle reductions}
\label{subsec:detectability-data}
\label{subsec:detectability-scalar}
\label{subsec:detectability-measure}
\label{subsec:detectability-bundles}

At \(H^1\)-level the scalar Friedrichs theory factors through \((\Gamma,[\mathbf a],\mathcal V_X)\), reducing to \((\Gamma,[\mathbf a])\) for boundaryless finite smooth graph resolutions. Rank-one transport refines this only through the value equaliser (plus one coherent holonomy on an uncut cycle), while finite-rank gluing replaces it by \(V_{\mathbf W}=\bigcap_{a=2}^k\ker(W_a-I)\).

\begin{theorem}[Rank-one \(H^1\)-reduction]
\label{thm:rank-one-H1-bundle-reduction}
For the rank-one multiple-origin systems treated in
Part~\ref{part:quantum-dynamics}, the resulting \(H^1\)-domain is determined
by the active nodal defects. In the circle models of
Sections~\ref{sec:circle-k} and \ref{sec:multisplit-circle}, when no active
defect cuts the circle, one coherent rank-one holonomy remains as additional
trace data. Distinct local transport data that induce the same value
equalisers and, where applicable, the same surviving circle holonomy produce
the same resulting \(H^1\)-domain.
\end{theorem}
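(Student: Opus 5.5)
The plan is to reduce the statement to the explicit closure theorems already proved for each rank-one family, and then observe that every one of those closures depends only on the value equalisers. First I would fix the common setting. The regular locus is a disjoint union of intervals, with the regular-locus frames chosen as in Sections~\ref{sec:Lk}--\ref{sec:multisplit-circle}. The smooth core is described by the full-jet relations at each exceptional fibre, and the relevant $H^1$-domain is its closure. By Proposition~\ref{prop:rank-one-endpoint-dichotomy} applied fibrewise, each fibre is either coherent (a one-dimensional value equaliser $\{(z\xi,\xi)\}$) or active (equaliser $\{0\}$). Theorem~\ref{thm:multisplit-local-dichotomy} and its $U(1)$ analogue show that the same dichotomy holds at every jet order. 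So the local data enter the smooth core only through this active/coherent label together with the coherent phase $z$.

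Second, I would run through the cases, using the earlier closure results. For $\mathbb L_k$ and the multi-split line, Theorems~\ref{thm:Lk-H1-collapse} and~\ref{thm:multisplit-H1-closure} identify the closure with $\mathcal V_D$. Coherent phases on a line can be gauged away by constant phase changes on the half-infinite and bounded regular components; this uses that the Hausdorff graph is a line and so has no cycle relation. For intervals the closure is the Dirichlet-subdivided space of Section~\ref{sec:multiple-origin-intervals}. For circles, Theorem~\ref{thm:circle-k-spin-classification} and Section~\ref{sec:multisplit-circle} give two possibilities. If no fibre is active, the closure is $H^1_Z(0,L)$ with $Z$ the product of the coherent phases. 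If some fibre is active, the closure is $\bigoplus_r H^1_0(J_r)$, and $Z$ drops out, as in Theorem~\ref{thm:holonomy-cut-principle}.

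Third, I would assemble these cases into the stated determination. In each case the domain is written in terms of the active set $D$, the ordinary boundary conditions of the model, and, only for an uncut circle, $Z$. Two transport data sets with the same value equalisers have the same active set, since the equaliser is $\{0\}$ exactly on active fibres. They also have the same coherent phases up to the gauge already quotiented out. If they additionally share $Z$ when the circle is uncut, then they give literally the same closed subspace of $H^1_\oplus$ under the fixed regular-locus identification. This is the last sentence of the statement.

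The main obstacle is making the gauge step precise while keeping the identification fixed. Removing coherent phases changes the regular-locus frames, so ``the same $H^1$-domain'' has to be understood after a constant unitary rephasing on each component. To handle this, I would first check directly that the coherent phases on a cut component can be absorbed componentwise. This relies on the fact that after cutting, each component is a path and so carries no holonomy. I would then state the conclusion as equality up to that canonical componentwise rephasing, consistent with Proposition~\ref{prop:bundle-Hamiltonian-equivalence}.
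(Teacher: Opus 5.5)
Your proposal is correct and follows the paper's own argument. The paper likewise applies the fibrewise coherent/active dichotomy of Proposition~\ref{prop:rank-one-endpoint-dichotomy}, then the exact first-order closure results for the line, interval, and circle models, with an uncut circle retaining only the product holonomy and an active node removing it via Theorem~\ref{thm:holonomy-cut-principle}.

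Your extra care about rephasing is harmless but more than the statement needs. When the value equalisers coincide in fixed regular-locus frames, the closures coincide literally: each closure is the set of $H^1_\oplus$ functions whose endpoint traces lie in those equalisers, together with the model's boundary conditions. The componentwise gauge is only needed to bring these domains into the normal forms $\mathcal V_D$ and $H^1_Z(0,L)$.
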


\begin{proof}
At each exceptional fibre,
Proposition~\ref{prop:rank-one-endpoint-dichotomy} leaves either the coherent
one-dimensional value equaliser or the zero trace. Applying the exact
first-order closure results for the line and circle models gives the active
nodal decomposition. On a circle without an active node, the local coherent
phases reduce to the product holonomy; with a node, that phase is removed by
Theorem~\ref{thm:holonomy-cut-principle}.
\end{proof}

\begin{theorem}[Finite-rank first-order bundle reduction]
\label{thm:finite-rank-H1-bundle-reduction}
For the finite-rank multiple-origin lines of
Theorem~\ref{thm:unitary-Lk-main}, the closed \(H^1\)-domain, flat
Friedrichs operator, and stationary scattering are determined by the common
fixed space \(V_{\mathbf W}\). In particular,
\[
  \mathcal H_{\mathbf W}^1
  =
  \{\psi\in H^1(\mathbb R;\mathbb C^r):\psi(0)\in V_{\mathbf W}\},
  \qquad
  T_{\mathbf W}=P_{V_{\mathbf W}}.
\]
The rank-one coherent/nodal dichotomy is the special case in which the
common fixed space has dimension \(1\) or \(0\).
\end{theorem}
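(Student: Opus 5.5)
The statement follows directly from Theorem~\ref{thm:unitary-Lk-main}, so the plan is to assemble that theorem's parts. Afterwards I will check that nothing in the construction depends on the tuple \(\mathbf W\) except through \(V_{\mathbf W}\).

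First, part (2) of Theorem~\ref{thm:unitary-Lk-main} gives the closed domain as
\(\mathcal H_{\mathbf W}^1=\{\psi\in H^1(\mathbb R;\mathbb C^r):\psi(0)\in V_{\mathbf W}\}\).
Everything else that enters the form \(\int\|\psi'\|^2\,dx\) is independent of \(\mathbf W\). The collapse measure and the flat connection are fixed in the reference chart \(U_1\); the connection is the ordinary derivative, because the transitions are locally constant. So the closed form is determined by \(V_{\mathbf W}\) alone. The representation theorem for closed nonnegative forms, already used in Theorem~\ref{thm:unitary-L2-Friedrichs}, then gives the Friedrichs operator
\(H_{\mathrm{free}}\otimes I_{V_{\mathbf W}}\oplus(H_{D,-}\oplus H_{D,+})\otimes I_{V_{\mathbf W}^\perp}\).
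This is part (3), and it depends only on \(V_{\mathbf W}\). The stationary scattering is read off sectorwise from this decomposition. Free transmission holds on \(V_{\mathbf W}\), and Dirichlet reflection with amplitude \(-1\) holds on \(V_{\mathbf W}^\perp\). Hence \(T_{\mathbf W}=P_{V_{\mathbf W}}\) and \(R_{\mathbf W}=-Q_{\mathbf W}\), which is part (4).

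For the rank-one specialisation, take \(r=1\) and \(W_a=z_a\in U(1)\). Then \(V_{\mathbf W}=\bigcap_a\ker(z_a-1)\), and there are two cases, matching Proposition~\ref{prop:rank-one-endpoint-dichotomy} in the reference frame normalised by Proposition~\ref{prop:unitary-Lk-normalization}.

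\begin{enumerate}
\item If every \(z_a=1\), i.e.\ the transport is coherent, then \(V_{\mathbf W}=\mathbb C\). The domain is \(H^1(\mathbb R)\), and we obtain the free line with \(T=1\).
\item Otherwise \(V_{\mathbf W}=0\). The domain is \(H^1_0(\mathbb R_-)\oplus H^1_0(\mathbb R_+)\), and we obtain a nodal Dirichlet defect with \(T=0\). This agrees with Theorem~\ref{thm:Lk-H1-collapse}.
\end{enumerate}

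One point needs care: coherence must be read after normalisation. In the normalised frame, a constant nonunit phase vector is still coherent relative to the reference chart only if every relative transition equals \(1\). I will therefore state the dichotomy in terms of the relative transitions \(W_a\), not the raw signs. With that convention it matches the ``constant modulo a common phase'' criterion of Section~\ref{sec:Lk}.

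I expect no genuine obstacle. The only step needing attention is to justify that the form really is \(\mathbf W\)-independent apart from its domain, that is, that the Hilbert identification \(L^2(\mathbb L_k,E)\cong L^2(\mathbb R;\mathbb C^r)\) does not depend on \(\mathbf W\). I will check this by noting that the identification uses only the reference chart \(U_1\). The origins have measure zero, and \(U_1\) covers the regular locus.
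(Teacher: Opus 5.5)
Your proposal is correct and follows the paper's own proof. That proof likewise reads the closed domain, the free--Dirichlet Friedrichs decomposition, and \(T_{\mathbf W}=P_{V_{\mathbf W}}\) off Theorem~\ref{thm:unitary-Lk-main}, and observes that the tuple \(\mathbf W\) enters only through \(V_{\mathbf W}\). Your explicit checks that the reference-chart \(L^2\)-identification and the flat form do not depend on \(\mathbf W\), and your reading of rank-one coherence as \(W_a=1\) for all \(a\) in the normalised frame, make explicit what the paper leaves implicit.
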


\begin{proof}
This is the reduction established in
Theorem~\ref{thm:unitary-Lk-main}: the invariant component is free across the
defect, while the orthogonal complement has zero trace and splits into
Dirichlet half-line sectors. Both the form domain and the scattering operator
therefore depend on the tuple \(\mathbf W\) only through
\(V_{\mathbf W}\).
\end{proof}

\subsection{Operator-order dependence}
\label{subsec:detectability-operator-order}

The same \(H^1\)-geometry can support a unique Friedrichs Hamiltonian but different first-order extension behaviour. The explicit indices are summarised below; the compact and open theta families both have \((k-1,k-1)\) despite different cycle rank.

\subsection{Observable dependence}
\label{subsec:detectability-scattering}
\label{subsec:detectability-spectrum}

Even after an operator is fixed, a restricted measurement can discard
further information.

For root incidence on a resolving tree,
\(P_{\lambda\leftarrow\mathrm{root}} = \frac{c_\lambda}{\sum_\mu c_\mu}\),
so the internal branching hierarchy is absent from the forward
probabilities. If root-to-leaf lengths also agree, the corresponding
root-incidence amplitudes agree. Reverse or leaf-to-leaf experiments can
still distinguish the graphs by exciting dark sectors.

\subsection{Smooth information beyond conventional dynamics}
\label{subsec:detectability-smooth}

Ordinary first- and second-order theories use \(H^1\)-closed trace domains. Higher Sobolev closures and formal prolongation can retain derivative and return data erased at this level, so operator indistinguishability does not imply equivalence of the underlying smooth non-Hausdorff resolutions.

\subsection{Detectability synthesis}
\label{subsec:quantum-detectability-theorem}

\begin{principle}[Probe-dependent detectability]
\label{thm:quantum-detectability}
For the finite graph-resolved one-dimensional systems and finite-rank bundle
sectors analysed in this paper:

\begin{enumerate}[label=(\roman*),leftmargin=2.2em]

  \item the resolution-induced scalar \(H^1\)-Friedrichs theory factors
        through \((\Gamma,[\mathbf a],\mathcal V_X)\), where
        \(\mathcal V_X\) is the completed scalar trace domain; in the
        boundaryless first-order-universal case this reduces to
        \((\Gamma,[\mathbf a])\);

  \item exceptional-fibre multiplicity is scalar-visible only through
        changes in the reduced graph, metric, projective weights, or a
        nonuniversal closed trace domain;

  \item finite-rank bundle transport can remain visible through value-level
        fibre equalisers such as \(V_{\mathbf W}\); in rank one this becomes
        the coherent/nodal dichotomy, and coherent holonomy can additionally
        survive on uncut cycles;

  \item for fixed Hilbert measure and flat kinetic form, bundle sectors with
        the same closed \(H^1\)-domain have the same associated Friedrichs
        Hamiltonian;

  \item naturally oriented first-order operators retain self-adjointness
        and deficiency information that is not equivalent to the
        second-order scattering or spectral data;

  \item scattering and spectra retain only information present in the
        reduced operator, and restricted measurements can discard still
        more;

  \item higher transported-jet and formal data erased by \(H^1\)-completion
        require higher Sobolev or higher-order probes.

\end{enumerate}
\end{principle}

The principle is restricted to the explicit classes treated here and does not assert a classification of arbitrary non-Hausdorff quantum systems.

\subsection{Non-inference consequences}
\label{subsec:detectability-noninference}

Several converse inferences fail without additional hypotheses.

Equality of scalar Hamiltonians does not imply equality of
exceptional-fibre multiplicities or resolving incidence. Equality of
resolution-induced bundle Hamiltonians does not imply bundle isomorphism when
different bundle classes collapse to the same \(H^1\)-domain. Equality of
deficiency indices does not determine the closed first-order operator.
Agreement of a restricted scattering experiment does not imply operator
equivalence, as the rooted-tree examples show. Finally, equality of the
conventional first- and second-order reductions does not determine higher
smooth transported jets.

\section{Conclusion}
\label{sec:conclusion}

The line with two origins exhibits the central analytic distinction between local smooth geometry and the data retained by quantum completion. Scalars factor through the ordinary line, whereas relative-sign transport produces a genuine component-extension defect: finite regularity gives a jet kernel and smooth regularity the flat ideal. For general finite branch systems, transported jets give the formal equaliser and Whitney realisability the residual extension condition; exponentially tangent branches show that these are distinct. Conical tameness removes the residual obstruction, while Sobolev completion retains only traces visible in the range \(j<m-c/p\).

For finite smooth graph resolutions, the scalar theory becomes an endpoint calculus. The descended smooth core realises the full formal endpoint equaliser, integer Sobolev closures are its finite projections, and first-order scalar closure is continuity on the Hausdorff graph. Higher orders can distinguish smooth resolutions with the same first-order reduction; cyclic transport adds signed return holonomy and finite prolongation profiles. With a marked presentation measure, the flat positive derivative form gives the weighted Kirchhoff Hamiltonian, while the oriented derivative has its own deficiency theory. Multiple origins, branching trees, and recombination thereby separate exceptional multiplicity, propagation valence, cycle structure, first-order indices, and spectral or scattering complexity.

Finite-rank unitary gluing supplies an independent internal reduction. On \(\mathbb L_k\), the common fixed space \(V_{\mathbf W}=\bigcap_{a=2}^k\ker(W_a-I)\) is exactly the transmitting \(H^1\)-sector of the flat Friedrichs operator. For compact-group gluing this becomes the invariant space of the generated subgroup. The invariant-completion number is bounded both dimensionally and by topological generation; connected compact semisimple groups require at most two relative gluing matrices, and the low-degree \(SU(3)\) examples give explicit optimal ranks and a quartic rank-two projector.

The resulting hierarchy of equivalences depends on the chosen probe. Scalar reduction retains the weighted Hausdorff graph and closed scalar trace domain; bundle transport can reintroduce value-level constraints and cycle holonomy; higher Sobolev and formal probes can retain smooth transported jets invisible to ordinary \(H^1\)-based dynamics. Every resolution-induced identification is relative to the marked analytic resolution, presentation measure, bundle transport, and chosen flat form. Other measures, connections, potentials, or self-adjoint extensions need not preserve these equivalences.
\appendix
\section{Conical cutoff estimates}
\label{app:conical-cutoffs}

Let \(S\subseteq M_2\) be a clean multiple-incidence stratum. Fix a Riemannian metric and a
tubular map
\(\tau: \mathcal N_\varepsilon \subseteq NS \longrightarrow \mathcal U\).
In a local trivialisation of the normal bundle, write points as
\((p,v), \qquad p\in S, \quad v\in N_pS\).
For \(v\neq0\), set
\(r=|v|, \qquad \omega=\frac{v}{|v|}\).

All estimates are proved in adapted coordinate charts and local bundle
frames. On relatively compact coordinate neighbourhoods, changes of
coordinates and frames have bounded derivatives through every fixed
finite order, so the conclusions are independent of these choices.

\subsection{Angular cutoffs}
\label{appsubsec:angular-cutoffs}

Let
\(\Omega_a^-\Subset_S\Omega_a^+ \subseteq \mathbb S(NS), \qquad a\in A_S\),
be the angular regions from
Definition~\ref{def:conical-tameness}. \(\overline{\Omega_a^-} \subseteq \Omega_a^+\),
and
\[
  \overline{\Omega_a^+}
  \cap
  \overline{\Omega_b^+}
  =
  \varnothing
  \qquad
  (a\neq b).
\]

\begin{lemma}[Angular cutoff construction]
\label{lem:angular-cutoff-construction}
For every
\[
  a\in A_S,
\]
there exists
\[
  \vartheta_a
  \in
  C^\infty
  \left(
    \mathbb S(NS),
    [0,1]
  \right)
\]
such that
\[
  \vartheta_a=1
  \quad
  \text{on }\Omega_a^-,
\]
and
\[
  \operatorname{supp}\vartheta_a
  \subseteq
  \Omega_a^+.
\]
The functions may be chosen with pairwise disjoint supports.
\end{lemma}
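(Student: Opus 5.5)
The plan is a smooth Urysohn construction on the unit sphere bundle \(\mathbb S(NS)\), which is a smooth manifold, hence normal and admitting smooth partitions of unity. For each \(a\in A_S\), the set \(\overline{\Omega_a^-}\) is closed in \(\mathbb S(NS)\) and contained in the open set \(\Omega_a^+\). The smooth Urysohn lemma (equivalently, a partition of unity subordinate to the cover \(\{\Omega_a^+,\ \mathbb S(NS)\setminus\overline{\Omega_a^-}\}\)) then gives \(\vartheta_a\in C^\infty(\mathbb S(NS),[0,1])\) with \(\vartheta_a=1\) on \(\overline{\Omega_a^-}\) and \(\operatorname{supp}\vartheta_a\subseteq\Omega_a^+\). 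No compactness of \(S\) is needed for this step.

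The one point to handle carefully is that the support must lie inside \(\Omega_a^+\), not merely that \(\vartheta_a\) vanishes off \(\Omega_a^+\). I will first choose an intermediate open set \(\Omega_a'\) with
\[
  \overline{\Omega_a^-}\subseteq\Omega_a'\subseteq\overline{\Omega_a'}\subseteq\Omega_a^+ .
\]
This is possible by normality of the manifold \(\mathbb S(NS)\), applied to the disjoint closed sets \(\overline{\Omega_a^-}\) and \(\mathbb S(NS)\setminus\Omega_a^+\). I then run the Urysohn construction with \(\Omega_a'\) in place of \(\Omega_a^+\), so that
\[
  \operatorname{supp}\vartheta_a\subseteq\overline{\Omega_a'}\subseteq\Omega_a^+ .
\]

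Pairwise disjointness of the supports is then automatic. For \(a\neq b\),
\[
  \operatorname{supp}\vartheta_a\cap\operatorname{supp}\vartheta_b
  \subseteq\overline{\Omega_a^+}\cap\overline{\Omega_b^+}=\varnothing
\]
by the separation hypothesis of Definition~\ref{def:conical-tameness}. Because the index set \(A_S\) is finite, no further compatibility argument is required.

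I expect no real obstacle. The only subtlety is the interpretation of \(\Omega^-\Subset_S\Omega^+\), which the paper defines as \(\overline{\Omega^-}\subseteq\Omega^+\) in \(\mathbb S(NS)\); this is exactly the closed-in-open inclusion the Urysohn construction needs. The later derivative estimates \(|\partial_p^\alpha\partial_v^\beta\chi_a|\le C|v|^{-|\beta|}\) are not part of this lemma. They should follow afterwards by setting \(\chi_a(p,v)=\vartheta_a(p,v/|v|)\), since the map \(v\mapsto v/|v|\) is homogeneous of degree zero.
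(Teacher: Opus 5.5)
Your proposal is correct and matches the paper's argument: apply the smooth Urysohn lemma on \(\mathbb S(NS)\) to the closed set \(\overline{\Omega_a^-}\) inside the open set \(\Omega_a^+\), and get disjoint supports from \(\overline{\Omega_a^+}\cap\overline{\Omega_b^+}=\varnothing\). The intermediate open set \(\Omega_a'\) is a harmless extra precaution, since the standard smooth bump-function lemma already gives \(\operatorname{supp}\vartheta_a\subseteq\Omega_a^+\) directly.
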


\begin{proof}
The smooth Urysohn lemma on the unit normal sphere bundle gives a
function equal to one on the closed set
\(\overline{\Omega_a^-}\) and supported in \(\Omega_a^+\). Since the
sets \(\Omega_a^+\) have pairwise disjoint closures, the resulting
supports are pairwise disjoint.
\end{proof}

After decreasing the positive radius function \(\eta\), assume that \(\tau(p,v)\) is defined whenever
\(|v|<2\eta(p)\).
This shrinking preserves the conical-tameness inclusions.

Choose \(\beta\in C^\infty([0,\infty),[0,1])\) such that
\(\beta(t)=1 \quad (0\leq t\leq1)\),
and
\(\beta(t)=0 \quad (t\geq2)\).
For \(v\neq0\), define
\[
  \chi_a(p,v)
  :=
  \beta
  \left(
    \frac{|v|}{\eta(p)}
  \right)
  \vartheta_a
  \left(
    p,\frac{v}{|v|}
  \right).
\]
The function \(\chi_a\) need not extend continuously to the zero
section. It will only be multiplied by sections whose prescribed jets
vanish there.

Near \(S\), conical tameness gives
\(\chi_a=1 \quad \text{on }V_a\),
and
\(\chi_a=0 \quad \text{on }V_b, \qquad b\neq a\).

\begin{lemma}[Derivative bounds for angular cutoffs]
\label{lem:angular-cutoff-derivative-bounds}
Let
\[
  K\Subset S.
\]
For every pair of multi-indices \(\alpha,\beta\), there exists
\[
  C_{K,\alpha,\beta}>0
\]
such that
\[
  \boxed{
  \left|
    \partial_p^\alpha
    \partial_v^\beta
    \chi_a(p,v)
  \right|
  \leq
  C_{K,\alpha,\beta}
  |v|^{-|\beta|}
  }
\]
whenever
\[
  p\in K,
  \qquad
  0<|v|<2\eta(p).
\]
\end{lemma}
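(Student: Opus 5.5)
The function \(\chi_a\) is a product of a radial factor \(\beta(|v|/\eta(p))\) and an angular factor \(\vartheta_a(p,v/|v|)\). I would prove the bound by showing each factor satisfies it, then combining them with the Leibniz rule. Since \(\partial_p^{\alpha'}\partial_v^{\beta'}\) bounds of order \(|v|^{-|\beta'|}\) multiply to \(|v|^{-|\beta'|-|\beta''|}=|v|^{-|\beta|}\), the class of functions satisfying such symbol-type estimates is closed under products. The work is all in adapted coordinates \((p,v)\) over a relatively compact chart containing \(K\), with a local orthonormal frame of \(NS\) identifying \(v\in\mathbb R^c\). Orthonormality is chosen so that \(|v|\) is the Euclidean norm in these coordinates, and smooth dependence of the frame on \(p\) is harmless on compacta.

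\emph{Angular factor.} The map \(\omega(v)=v/|v|\) is homogeneous of degree \(0\) in \(v\) and independent of \(p\). Each derivative \(\partial_v^\gamma\omega\) is homogeneous of degree \(-|\gamma|\) and smooth on \(|v|=1\), so \(|\partial_v^\gamma\omega|\le C_\gamma|v|^{-|\gamma|}\). Because \(\vartheta_a\) is smooth on \(\mathbb S(NS)\), its derivatives are bounded over the compact set \(\{(p,\omega):p\in K\}\). The multivariate chain rule (Faà di Bruno) then expresses \(\partial_p^\alpha\partial_v^\beta[\vartheta_a(p,\omega(v))]\) as a finite sum. Each term is a derivative of \(\vartheta_a\) times a product \(\prod_i\partial_v^{\gamma_i}\omega\) with \(\sum|\gamma_i|=|\beta|\), hence is bounded by \(C|v|^{-|\beta|}\). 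The \(p\)-derivatives fall only on the first slot of \(\vartheta_a\).

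\emph{Radial factor.} Set \(g(p,v)=|v|/\eta(p)\). On \(K\), \(\eta\) is smooth and bounded below by a positive constant, so \(1/\eta\) has bounded derivatives. The function \(|v|\) is homogeneous of degree \(1\), which gives \(|\partial_v^\gamma|v||\le C|v|^{1-|\gamma|}\). Hence \(|\partial_p^\alpha\partial_v^\gamma g|\le C|v|^{1-|\gamma|}\), which is at most \(C'|v|^{-|\gamma|}\) when \(|v|<2\eta\le C''\). Applying Faà di Bruno to \(\beta\circ g\), where \(\beta\) has bounded derivatives, yields the same \(|v|^{-|\beta|}\) bound.

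The main obstacle is only bookkeeping: checking that the exponent never exceeds \(|\beta|\) and that \(p\)-derivatives cost no powers of \(|v|\). The homogeneity argument handles the first point. The second follows because \(\omega\) is \(p\)-independent in an orthonormal frame; a nonorthonormal frame would produce smooth \(p\)-dependent coefficients of degree \(0\), which also cost nothing. Finally, the Leibniz combination gives the boxed estimate, with constants depending on \(K\), \(\alpha\), \(\beta\), \(\inf_K\eta\), and finitely many derivatives of \(\vartheta_a\), \(\beta\), \(\eta\).
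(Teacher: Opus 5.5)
Your proposal is correct and follows essentially the same route as the paper: you split \(\chi_a\) into the radial factor \(\beta(|v|/\eta(p))\) and the angular factor \(\vartheta_a(p,v/|v|)\), bound the angular factor through the degree-zero homogeneity of \(v/|v|\) and the radial factor through compactness of \(K\), and then combine the two bounds by Leibniz. The only difference is in the radial factor. The paper converts powers of \(\eta(p)^{-1}\) into powers of \(|v|^{-1}\) using \(|v|\asymp\eta(p)\) on the support of the derivatives of \(\beta\), whereas you absorb the surplus powers of \(|v|\) using \(|v|<2\sup_K\eta\) together with \(\inf_K\eta>0\); both arguments work.
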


\begin{proof}
The map \(v\longmapsto\frac{v}{|v|}\) is homogeneous of degree zero. Its derivative of order \(\ell\) is
homogeneous of degree \(-\ell\), and hence is bounded by \(C_\ell|v|^{-\ell}\) on the punctured normal fibre. Repeated differentiation therefore
gives
\[
  \left|
    \partial_p^{\alpha_1}
    \partial_v^{\beta_1}
    \vartheta_a
    \left(
      p,\frac{v}{|v|}
    \right)
  \right|
  \leq
  C
  |v|^{-|\beta_1|}.
\]

Over the compact set \(K\), the function \(\eta\), its reciprocal, and
all derivatives needed at the fixed order are bounded. Derivatives of \(\beta \left( \frac{|v|}{\eta(p)} \right)\) are therefore bounded by
\(C|v|^{-|\beta_2|}\).
For terms involving derivatives of \(\beta\), one has \(|v|\asymp\eta(p)\) on their support, so powers of \(\eta(p)^{-1}\) may be replaced by
corresponding powers of \(|v|^{-1}\).

Leibniz' rule gives the stated estimate.
\end{proof}

\subsection{Taylor decay along the zero section}
\label{appsubsec:taylor-decay}

Let \(F\longrightarrow\mathcal U\) be a smooth finite-rank vector bundle.

\begin{lemma}[Normal Taylor decay]
\label{lem:normal-taylor-decay}
Let
\[
  r\in\Gamma^k(\mathcal U,F)
\]
and suppose that its full ambient \(k\)-jet vanishes along \(S\):
\[
  j_S^k r=0.
\]
Then, in every adapted coordinate chart and local bundle frame,
\[
  \boxed{
  \partial_p^\alpha
  \partial_v^\beta r(p,v)
  =
  o
  \left(
    |v|^{k-|\alpha|-|\beta|}
  \right)
  }
\]
locally uniformly in \(p\), whenever
\[
  |\alpha|+|\beta|\leq k.
\]
\end{lemma}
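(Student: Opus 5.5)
The plan is to reduce the estimate to the ordinary Taylor theorem with Peano remainder, applied along normal segments issuing from the zero section. Work in a fixed adapted chart \((p,v)\), with \(S=\{v=0\}\) and a local frame of \(F\), so that \(r\) becomes a \(C^k\) vector-valued function. We may shrink to a product neighbourhood \(K\times\{|v|<\delta\}\) with \(K\Subset S\) compact. By hypothesis \(j_S^kr=0\), so every derivative \(\partial_p^\alpha\partial_v^\beta r(p,0)\) with \(|\alpha|+|\beta|\le k\) vanishes.

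Fix multi-indices with \(|\alpha|+|\beta|=\ell\le k\), and set \(g:=\partial_p^\alpha\partial_v^\beta r\), a \(C^{k-\ell}\) function. First I will check that the normal derivatives of \(g\) along \(S\) vanish through order \(k-\ell\). For \(|\gamma|\le k-\ell\) we have \(\partial_v^\gamma g(p,0)=\partial_p^\alpha\partial_v^{\beta+\gamma}r(p,0)\). This is a derivative of \(r\) of total order at most \(k\) evaluated on \(S\), so it is zero; the mixed partials commute because \(r\in C^k\).

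Next I apply Taylor's formula in \(v\) with integral remainder at order \(k-\ell\). When \(k-\ell\ge1\) this gives
\[
 g(p,v)=\sum_{|\gamma|=k-\ell}\frac{k-\ell}{\gamma!}\,v^\gamma\int_0^1(1-s)^{k-\ell-1}\bigl(\partial_v^\gamma g(p,sv)-\partial_v^\gamma g(p,0)\bigr)\,ds ,
\]
where I have subtracted the vanishing terms \(\partial_v^\gamma g(p,0)=0\) to put the remainder in this form. The factors \(\partial_v^\gamma g\) are continuous, hence uniformly continuous on the compact set \(K\times\{|v|\le\delta/2\}\). So their oscillation \(\omega(|v|)\) tends to \(0\) uniformly in \(p\in K\). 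This yields \(|g(p,v)|\le C\,\omega(|v|)\,|v|^{k-\ell}=o(|v|^{k-\ell})\) locally uniformly. In the case \(\ell=k\) the claim is just \(g(p,v)\to0\), which follows directly from uniform continuity of \(g\) and \(g(p,0)=0\).

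The only point needing care, which I take to be the main obstacle although it is mild, is uniformity in \(p\) and independence of the chart and frame. Uniformity in \(p\) comes from the compactness argument above. For chart and frame independence, note that the statement is made in every adapted chart. A change of adapted coordinates and frames over a relatively compact set has bounded derivatives. It therefore maps \(o(|v|^{k-|\alpha|-|\beta|})\) bounds into the same class, because a derivative of order \(\ell\) in the new variables is a bounded combination of old derivatives of order at most \(\ell\), and \(|v|\) is comparable in the two charts. Alternatively, the argument above can be run directly in any given adapted chart, since only the vanishing of the jet on \(S\) was used.
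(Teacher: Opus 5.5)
Your proof is correct and follows the paper's argument: you Taylor-expand \(\partial_p^\alpha\partial_v^\beta r\) in the normal variable through order \(k-|\alpha|-|\beta|\), note that every coefficient vanishes because \(j_S^k r=0\), and get local uniformity in \(p\) from uniform continuity of the order-\(k\) derivatives on compact sets. The only difference is that you use the integral form of the remainder, which makes explicit the uniformity the paper gets from the Peano remainder together with uniform continuity.
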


\begin{proof}
Fix \(\alpha\) and \(\beta\), and set
\(d:=|\alpha|+|\beta|\).
For fixed \(p\), apply Taylor's theorem in the normal variable to \(v\longmapsto \partial_p^\alpha \partial_v^\beta r(p,v)\) through order \(k-d\). Every Taylor coefficient is an ambient
derivative of \(r\) along the zero section of total order at most
\(k\), and hence vanishes. The Peano remainder gives
\(\partial_p^\alpha \partial_v^\beta r(p,v) = o \left( |v|^{k-d} \right)\).
Uniformity for \(p\) in compact subsets follows from uniform
continuity of the derivatives of order \(k\).
\end{proof}

If \(r\) is smooth and flat along \(S\), then for every \(N\),
\[
  \left|
    \partial_p^\alpha
    \partial_v^\beta r(p,v)
  \right|
  \leq
  C_{K,N,\alpha,\beta}
  |v|^N
\]
for \(p\) in a fixed compact subset of \(S\) and \(v\) sufficiently
small.

\subsection{Quantitative cutoff-flatness}
\label{appsubsec:detailed-cutoff-flatness}

Let
\(r\in\Gamma^k(\mathcal U,F), \qquad j_S^k r=0\).
On \(\mathcal U\setminus S\), set
\(u_a:=\chi_ar\).

\begin{proposition}[Quantitative cutoff-flatness]
\label{prop:quantitative-cutoff-flatness}
For every pair of multi-indices \(\alpha,\beta\) satisfying
\[
  |\alpha|+|\beta|\leq k,
\]
one has
\[
  \boxed{
  \partial_p^\alpha
  \partial_v^\beta u_a(p,v)
  =
  o
  \left(
    |v|^{k-|\alpha|-|\beta|}
  \right)
  }
\]
locally uniformly in \(p\).

The zero extension
\[
  [\chi_ar]_0(x)
  :=
  \begin{cases}
    \chi_a(x)r(x),&x\notin S,\\
    0,&x\in S
  \end{cases}
\]
belongs to
\[
  \Gamma^k(\mathcal U,F)
\]
and satisfies
\[
  j_S^k[\chi_ar]_0=0.
\]
\end{proposition}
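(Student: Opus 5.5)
The estimate follows from the Leibniz rule, by combining Lemma~\ref{lem:angular-cutoff-derivative-bounds} with Lemma~\ref{lem:normal-taylor-decay}. Fix a compact $K\Subset S$ and multi-indices with $d:=|\alpha|+|\beta|\le k$. On $\mathcal U\setminus S$, in an adapted chart and local frame, expand
\[
\partial_p^\alpha\partial_v^\beta(\chi_a r)=\sum_{\alpha'\le\alpha,\ \beta'\le\beta}\binom{\alpha}{\alpha'}\binom{\beta}{\beta'}\,\partial_p^{\alpha'}\partial_v^{\beta'}\chi_a\;\partial_p^{\alpha-\alpha'}\partial_v^{\beta-\beta'}r .
\]
Each cutoff factor is bounded by $C|v|^{-|\beta'|}$. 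Each remainder factor has total order $d-|\alpha'|-|\beta'|\le k$, so it is $o(|v|^{k-d+|\alpha'|+|\beta'|})$ locally uniformly in $p$. The product of the two bounds is therefore $o(|v|^{k-d+|\alpha'|})$, which is $o(|v|^{k-d})$ since $|v|<1$ near $S$. Summing the finitely many terms gives the displayed estimate. Note that the tangential derivatives of $\chi_a$ cost nothing, and that the decay of $r$ absorbs the normal singularity of the cutoff exactly.

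Next I show that the zero extension $u=[\chi_ar]_0$ is $C^k$ with vanishing $k$-jet along $S$. I proceed by induction on the order $d\le k$. The inductive claim is that every classical partial derivative of order $d$ exists on all of $\mathcal U$, equals the pointwise formula off $S$, and equals $0$ on $S$.

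At order zero this is the case $d=0$ of the estimate: $u=o(|v|^k)$, so $u\to0$ at $S$ (for $k\ge0$; when $k=0$ the $o(1)$ gives continuity).

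For the inductive step, suppose a derivative $w=\partial^\gamma u$ of order $d<k$ is continuous, vanishes on $S$, and satisfies $w=o(|v|^{k-d})$. Consider one further derivative at a point $(p_0,0)\in S$.
\begin{itemize}
\item \emph{Tangential direction.} The difference quotient is identically $0$, because $w$ vanishes on the whole zero section.
\item \emph{Normal direction $v_j$.} The difference quotient is $w(p_0,te_j)/t=o(|t|^{k-d-1})$. This tends to $0$ because $k-d-1\ge0$, with the little-$o$ handling the borderline case $k-d-1=0$.
\end{itemize}
So the derivative exists and vanishes on $S$. Off $S$ it is given by the formula, which is $o(|v|^{k-d-1})\to0$ as $v\to0$ uniformly for $p$ in compacts. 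This yields continuity up to $S$, and the induction hypothesis propagates.

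At $d=k$, continuity of the top-order derivatives follows from the $o(|v|^0)=o(1)$ estimate. Hence $u\in\Gamma^k(\mathcal U,F)$ and all derivatives through order $k$ vanish on $S$, i.e.\ $j^k_S u=0$. Coordinate and frame independence follows from the remark at the start of the appendix: changes act by smooth triangular jet maps with bounded coefficients on relatively compact charts.

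The main subtlety is the borderline top order. There only $o(1)$ rather than $O(|v|^{\epsilon})$ is available, so I must keep the little-$o$ form, and the uniformity in $p$ from Lemma~\ref{lem:normal-taylor-decay}, throughout the Leibniz product. A bounded-by-constant argument would not suffice.

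I also need the estimates to hold uniformly on a full punctured neighbourhood of $K\times\{0\}$, not only on $0<|v|<2\eta(p)$. This is harmless: for $|v|\ge2\eta(p)$ the cutoff factor $\beta(|v|/\eta(p))$ vanishes, so $u\equiv0$ there, and near $S$ we are inside the tube anyway.
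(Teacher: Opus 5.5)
Your proposal is correct and follows the paper's proof. The paper also applies the Leibniz rule, using Lemma~\ref{lem:angular-cutoff-derivative-bounds} for the cutoff factor and Lemma~\ref{lem:normal-taylor-decay} for the remainder factor, and then obtains the $C^k$ zero extension by induction on derivative order. Your difference-quotient argument in the tangential and normal directions simply writes out the induction step that the paper only states.
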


\begin{proof}
Leibniz' rule gives
\[
  \partial_p^\alpha
  \partial_v^\beta
  (\chi_ar)
  =
  \sum_{
    \substack{
      \alpha_1+\alpha_2=\alpha\\
      \beta_1+\beta_2=\beta
    }
  }
  C_{\alpha_1,\alpha_2,\beta_1,\beta_2}
  \left(
    \partial_p^{\alpha_1}
    \partial_v^{\beta_1}\chi_a
  \right)
  \left(
    \partial_p^{\alpha_2}
    \partial_v^{\beta_2}r
  \right).
\]
By
Lemma~\ref{lem:angular-cutoff-derivative-bounds},
the first factor is
\(O \left( |v|^{-|\beta_1|} \right)\).
By
Lemma~\ref{lem:normal-taylor-decay},
the second factor is
\(o \left( |v|^{ k-|\alpha_2|-|\beta_2| } \right)\).
Their product is therefore
\[
  o
  \left(
    |v|^{
      k-|\alpha_2|-|\beta_1|-|\beta_2|
    }
  \right)
  =
  o
  \left(
    |v|^{
      k-|\alpha_2|-|\beta|
    }
  \right).
\]
Since
\(|\alpha_2|\leq|\alpha|\),
this implies
\(o \left( |v|^{k-|\alpha|-|\beta|} \right)\).
Summing the finitely many Leibniz terms proves the estimate.

In particular, every derivative through order \(k\) tends
continuously to zero along the zero section. Defining all those
derivatives to be zero on \(S\), and inducting on the derivative order
in adapted coordinates, shows that the zero extension is \(C^k\).
Every derivative through order \(k\) vanishes on \(S\), so its
\(k\)-jet there is zero.
\end{proof}

If \(r\) is smooth and flat along \(S\), the same argument at every
finite order gives \([\chi_ar]_0 \in \Gamma^\infty(\mathcal U,F)\) and
\(j_S^\infty[\chi_ar]_0=0\).

\section{Sobolev perturbation estimates}
\label{app:sobolev-perturbations}

\subsection{Insertion of a prescribed supercritical jet}
\label{appsubsec:supercritical-insertion}

Work first in a product neighbourhood
\(\mathbb R^{n-c}_y \times \mathbb R^c_z, \qquad Z=\{z=0\}\).
Let \(F\) be a finite-dimensional coefficient space, and let
\[
  a
  \in
  C_c^\infty
  \left(
    \mathbb R^{n-c};
    F\otimes
    \operatorname{Sym}^j(\mathbb R^c)^*
  \right).
\]
Define the corresponding homogeneous polynomial by
\(P_a(y,z) := \frac{1}{j!} a(y) \left[ z^{\otimes j} \right]\).
With the standard symmetric-derivative convention,
\(\partial_z^{(j)}P_a(y,0)=a(y)\).

Choose \(\psi\in C_c^\infty(\mathbb R^c)\) with \(\psi=1\) near the origin, and set
\(v_\varepsilon(y,z) := P_a(y,z) \psi \left( \frac{z}{\varepsilon} \right)\).
By homogeneity,
\[
  v_\varepsilon(y,z)
  =
  \varepsilon^j
  P_a
  \left(
    y,\frac{z}{\varepsilon}
  \right)
  \psi
  \left(
    \frac{z}{\varepsilon}
  \right).
\]

\begin{lemma}[Trace of the inserted jet]
\label{lem:trace-inserted-jet}
For every \(\varepsilon>0\),
\[
  \partial_z^{(\ell)}
  v_\varepsilon(y,0)
  =
  0
  \qquad
  (0\leq\ell<j),
\]
and
\[
  \partial_z^{(j)}
  v_\varepsilon(y,0)
  =
  a(y).
\]
\end{lemma}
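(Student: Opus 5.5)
The plan is to use the fact that \(\psi\) equals one on a neighbourhood of the origin. Then, for fixed \(\varepsilon>0\), the function \(v_\varepsilon\) coincides with \(P_a\) on a full neighbourhood \(\{|z|<\delta\varepsilon\}\) of \(Z\). Every normal derivative at \(z=0\) is a local quantity, determined by the germ in \(z\) at the origin. It therefore suffices to compute the normal traces of the homogeneous polynomial \(P_a\) itself.

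Two steps complete the argument.

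First, \(P_a(y,\cdot)\) is a homogeneous polynomial of degree \(j\) in \(z\) with coefficients depending smoothly on \(y\). Any derivative of order \(\ell<j\) is homogeneous of degree \(j-\ell\ge1\) in \(z\), hence vanishes at \(z=0\). This gives \(\partial_z^{(\ell)}v_\varepsilon(y,0)=0\) for \(0\le\ell<j\).

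Second, the order-\(j\) symmetric derivative of \(z\mapsto\frac1{j!}a(y)[z^{\otimes j}]\) is the constant symmetric form \(a(y)\). This is the standard identity \(\partial_z^{(j)}\bigl(\tfrac1{j!}b[z^{\otimes j}]\bigr)=b\) for symmetric \(b\), obtained by polarisation or by checking on monomials with the convention already fixed in the definition of \(P_a\). Evaluating at \(z=0\) yields \(a(y)\).

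The only point requiring care is the product rule. We could write \(\partial_z^{(\ell)}(P_a\,\psi(\cdot/\varepsilon))\) via Leibniz. Every term in which at least one derivative falls on \(\psi(\cdot/\varepsilon)\) vanishes near \(z=0\), since \(\psi\) is constant there. This makes the locality reduction rigorous without appealing to germs. The \(\varepsilon^j\) rescaled representation is not needed for the traces; it serves only the later norm estimate.
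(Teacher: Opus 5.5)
Your proposal is correct and matches the paper's proof: since \(\psi=1\) near the origin, \(v_\varepsilon=P_a\) on a neighbourhood of \(Z\), and both trace identities follow from the homogeneity of \(P_a\) and the normalisation \(\partial_z^{(j)}P_a(y,0)=a(y)\). Your added details, namely that the lower derivatives are homogeneous of positive degree and the Leibniz remark, only spell out what the paper leaves implicit.
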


\begin{proof}
Near \(z=0\), the cutoff \(\psi(z/\varepsilon)\) is identically one,
so
\(v_\varepsilon=P_a\).
The assertion follows from the homogeneity and normalization of
\(P_a\).
\end{proof}

\begin{lemma}[Jet-insertion scaling]
\label{lem:general-jet-insertion-scaling}
For
\[
  |\alpha|+|\beta|\leq m,
\]
one has
\[
  \left\|
    \partial_y^\alpha
    \partial_z^\beta
    v_\varepsilon
  \right\|_{L^p}
  \leq
  C_{\alpha,\beta}
  \varepsilon^{j-|\beta|+c/p}.
\]
For
\[
  0<\varepsilon\leq1,
\]
\[
  \boxed{
  \|v_\varepsilon\|_{W^{m,p}}
  \leq
  C
  \varepsilon^{j-m+c/p}.
  }
\]
\end{lemma}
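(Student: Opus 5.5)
The plan is to estimate each derivative of \(v_\varepsilon\) by the product rule, then integrate over the shrinking support \(|z|\lesssim\varepsilon\).

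First fix supports. Since \(a\) has compact support in \(y\) and \(\psi\) has compact support in \(\mathbb R^c\), the function \(v_\varepsilon\) is supported in \(\operatorname{supp}a\times\{|z|\le R\varepsilon\}\), where \(R\) bounds \(\operatorname{supp}\psi\). This region has Lebesgue measure at most \(C\varepsilon^c\). All \(y\)-derivatives land only on \(a\), because neither \(\psi(z/\varepsilon)\) nor the monomial structure in \(z\) depends on \(y\). Hence \(\partial_y^\alpha v_\varepsilon=P_{\partial^\alpha a}(y,z)\,\psi(z/\varepsilon)\), which has the same form as \(v_\varepsilon\) with \(a\) replaced by \(\partial^\alpha a\). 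It therefore suffices to treat \(\alpha=0\) with constants uniform over the finitely many functions \(\partial^\alpha a\), \(|\alpha|\le m\).

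Next use the homogeneity rewriting
\[
v_\varepsilon(y,z)=\varepsilon^jP_a(y,z/\varepsilon)\,\psi(z/\varepsilon)=\varepsilon^j\Phi(y,z/\varepsilon),\qquad \Phi(y,w):=P_a(y,w)\psi(w).
\]
Here \(\Phi\) is smooth and compactly supported in \((y,w)\). The chain rule gives
\[
\partial_z^\beta v_\varepsilon=\varepsilon^{j-|\beta|}(\partial_w^\beta\Phi)(y,z/\varepsilon).
\]
Change variables \(z=\varepsilon w\), so that \(dz=\varepsilon^c\,dw\). This yields
\[
\|\partial_y^\alpha\partial_z^\beta v_\varepsilon\|_{L^p}=\varepsilon^{j-|\beta|+c/p}\,\|\partial_y^\alpha\partial_w^\beta\Phi\|_{L^p},
\]
which is the first claim with \(C_{\alpha,\beta}=\|\partial_y^\alpha\partial_w^\beta\Phi\|_{L^p}<\infty\). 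This is in fact an identity, so no Leibniz bookkeeping is needed.

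For the \(W^{m,p}\) bound, sum over \(|\alpha|+|\beta|\le m\). For \(0<\varepsilon\le1\), each exponent satisfies \(j-|\beta|+c/p\ge j-m+c/p\), so \(\varepsilon^{j-|\beta|+c/p}\le\varepsilon^{j-m+c/p}\). Summing finitely many terms gives \(\|v_\varepsilon\|_{W^{m,p}}\le C\varepsilon^{j-m+c/p}\).

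The only point to check carefully is the \(\beta\)-scaling when derivatives fall on \(P_a\) rather than on \(\psi\). Writing everything through the single profile \(\Phi\) avoids splitting into those cases, so no real obstacle remains.
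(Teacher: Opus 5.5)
Your proposal is correct and follows the paper's own argument: both write each derivative as $\varepsilon^{j-|\beta|}$ times a fixed compactly supported profile evaluated at $(y,z/\varepsilon)$, substitute $z=\varepsilon w$ to pick up $\varepsilon^{c/p}$, and then use $0<\varepsilon\le 1$ to reduce every exponent to the smallest one, $j-m+c/p$. Your single profile $\Phi(y,w)=P_a(y,w)\psi(w)$ is a tidier packaging of the paper's terms $b_{\alpha,\beta}(y,z/\varepsilon)$, and it shows that the first estimate actually holds with equality.
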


\begin{proof}
Differentiation in \(y\) replaces \(a(y)\) by one of finitely many
compactly supported derivatives. After \(|\beta|\) differentiations in
\(z\), every resulting term has the form
\[
  \varepsilon^{j-|\beta|}
  b_{\alpha,\beta}
  \left(
    y,\frac{z}{\varepsilon}
  \right),
\]
where \(b_{\alpha,\beta}\) is smooth, compactly supported in \(y\), and
supported in a fixed bounded subset of the rescaled normal variable.

Changing variables \(z=\varepsilon w\) gives
\[
  \left\|
    \partial_y^\alpha
    \partial_z^\beta
    v_\varepsilon
  \right\|_{L^p}^p
  \leq
  C
  \varepsilon^{p(j-|\beta|)+c}.
\]
Taking \(p\)-th roots proves the first estimate. Among all derivatives
of total order at most \(m\), the smallest exponent is
\(j-m+\frac{c}{p}\),
which gives the second estimate.
\end{proof}

\begin{corollary}[Supercritical erasure]
\label{cor:supercritical-erasure}
If
\[
  j>m-\frac{c}{p},
\]
then
\[
  v_\varepsilon
  \longrightarrow0
  \qquad
  \text{in }W^{m,p},
\]
while its order-\(j\) normal trace remains equal to \(a\).

An independently variable pure order-\(j\) trace condition is not
closed in \(W^{m,p}\) in the supercritical range.
\end{corollary}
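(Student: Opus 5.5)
The statement to prove is Corollary~\ref{cor:supercritical-erasure}. I would deduce it directly from Lemma~\ref{lem:general-jet-insertion-scaling} and Lemma~\ref{lem:trace-inserted-jet}, so the plan is essentially an exponent check followed by a closedness argument.

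\emph{Step 1: convergence to zero.} If \(j>m-c/p\), the exponent \(j-m+c/p\) is strictly positive. For \(0<\varepsilon\le1\), Lemma~\ref{lem:general-jet-insertion-scaling} gives
\(\|v_\varepsilon\|_{W^{m,p}}\le C\varepsilon^{j-m+c/p}\), and this tends to zero as \(\varepsilon\to0\). Each \(v_\varepsilon\) is smooth. It is also compactly supported, since \(a\) has compact support in \(y\) and \(\psi(z/\varepsilon)\) cuts off in \(z\).

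\emph{Step 2: the trace stays fixed.} By Lemma~\ref{lem:trace-inserted-jet}, \(\partial_z^{(j)}v_\varepsilon(\cdot,0)=a\) for every \(\varepsilon\), and all lower normal traces vanish. These are pointwise statements about smooth functions, so no trace continuity on \(W^{m,p}\) is needed. That matters, because in the supercritical range the order-\(j\) trace is not a bounded functional.

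\emph{Step 3: non-closedness.} Consider the condition \(\mathcal C=\{u:\gamma^{(j)}_Zu=0\}\), imposed on smooth (or \(W^{m,p}\cap C^j\)) functions. Take any \(u\) in this class with \(\gamma^{(j)}_Zu=0\); independent variability of the order-\(j\) trace lets one choose \(a\ne0\). Then \(u+v_\varepsilon\) has order-\(j\) trace equal to \(a\ne0\) and converges to \(u\) in \(W^{m,p}\). Equivalently, \(u-v_\varepsilon\) lies outside the constrained set yet converges to \(u\). Either way, elements violating the constraint accumulate at points satisfying it, so neither the constraint set nor its complement within the smooth core is closed.

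\emph{Passing to a manifold.} Via a tubular chart, a frame trivialisation and a fixed cutoff in \(y\), the localised \(v_\varepsilon\) keep the \(W^{m,p}\) bound up to a constant. This uses only that multiplication by smooth compactly supported coefficients and diffeomorphic changes of coordinates are bounded on \(W^{m,p}\).

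The only point needing care is Step~3. ``Independently variable'' must be taken to mean that \(u+v_\varepsilon\) remains in the ambient smooth class with its lower traces unchanged. Lemma~\ref{lem:trace-inserted-jet} guarantees exactly that the lower traces are unchanged, so I expect no real obstacle.
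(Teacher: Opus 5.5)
Your Steps 1 and 2 are what the paper relies on. The corollary is read off from Lemma~\ref{lem:general-jet-insertion-scaling}, which supplies the positive exponent $j-m+c/p$, and Lemma~\ref{lem:trace-inserted-jet}, which pins the order-$j$ trace at $a$ with zero lower traces. Your localisation remark matches the paper's closing sentence on tubular charts and partitions of unity.

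The gap is in Step 3, which carries the non-closedness claim. Both of your sequences, $u+v_\varepsilon$ and $u-v_\varepsilon$ with $\gamma_Z^{(j)}u=0$, consist of elements \emph{violating} the constraint and converge to an element \emph{satisfying} it. That shows only that the complement of the constraint set is not closed, i.e.\ the constraint set is not open. Every proper linear subspace has this property, so it says nothing about closedness: $1/n\notin\{0\}$ converges to $0$, yet $\{0\}\subset\mathbb{R}$ is closed. Indeed, in the subcritical range $j<m-c/p$ you also have violators $u+tv_1\to u$ as $t\to0$, although the constraint set is closed there by Proposition~\ref{prop:visibility-closed}. So the inference ``violators accumulate at satisfiers, hence the constraint set is not closed'' proves too much. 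In your direction the fact that the trace stays equal to $a$ plays no role, since $u+tv_1$ would serve equally well. What you need is the reverse: elements satisfying the constraint accumulating at one that violates it.

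The repair uses the same two lemmas. Start from any $u$ in the smooth class with $\gamma_Z^{(j)}u=a\neq0$, for instance $u=v_1$. Then $u-v_\varepsilon$ has these properties:
\begin{itemize}
\item it is smooth and has the same lower normal traces as $u$;
\item it satisfies $\gamma_Z^{(j)}(u-v_\varepsilon)=a-a=0$;
\item it converges to $u$ in $W^{m,p}$ by Step 1.
\end{itemize}
Hence $u$ lies in the $W^{m,p}$-closure of $\{\gamma_Z^{(j)}=0\}$ without belonging to it. Here the fixed trace value along the sequence is exactly what makes the argument work.

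This is also the sense of ``erasure'' used in Theorem~\ref{thm:visibility-erasure}\textup{(ii)}. Every element can be approximated, with its lower visible traces unchanged, by elements satisfying the order-$j$ condition, so the condition disappears under completion.
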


Adapted tubular charts, local frames, and a finite partition of unity give the same scaling exponent for compactly supported bundle data.

\subsection{One-sided flat cutoff convergence}
\label{appsubsec:flat-cutoff-convergence}

Let \(\varrho\in C^\infty(\mathbb R)\) be flat at \(0\) and satisfy
\(\varrho(t)=0 \qquad (t\leq0)\).
Choose \(\chi\in C^\infty(\mathbb R,[0,1])\) such that
\(\chi(t)=0 \quad (t\leq1)\),
and
\(\chi(t)=1 \quad (t\geq2)\).
Set
\(\chi_\varepsilon(t) := \chi(t/\varepsilon)\).

Let \(\eta\in C_c^\infty(\mathbb R^d)\) and define
\(v(x) := \eta(x)\varrho(x_1)\),
\(v_\varepsilon(x) := \eta(x) \chi_\varepsilon(x_1) \varrho(x_1)\).

\begin{lemma}[One-sided flat cutoff estimate]
\label{lem:one-sided-flat-cutoff-estimate}
For every
\[
  m\in\mathbb N_0,
  \qquad
  1\leq p<\infty,
\]
and every \(N\in\mathbb N\), there exists
\[
  C_{m,N}>0
\]
such that
\[
  \boxed{
  \|v-v_\varepsilon\|_{W^{m,p}}
  \leq
  C_{m,N}
  \varepsilon^{N-m+1/p}
  }
\]
for all sufficiently small \(\varepsilon>0\). In particular,
\[
  v_\varepsilon
  \longrightarrow
  v
  \qquad
  \text{in }W^{m,p}(\mathbb R^d).
\]
\end{lemma}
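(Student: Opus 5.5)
The difference is \(w_\varepsilon:=v-v_\varepsilon=\eta(x)\,(1-\chi_\varepsilon(x_1))\,\varrho(x_1)\). Since \(\varrho\) vanishes for \(x_1\le0\) and \(1-\chi_\varepsilon\) vanishes for \(x_1\ge2\varepsilon\), the function \(w_\varepsilon\) is supported in the slab \(\{0\le x_1\le 2\varepsilon\}\cap\operatorname{supp}\eta\). The plan is therefore to bound every derivative pointwise on that slab and then integrate. The slab has \(d\)-dimensional measure \(O(\varepsilon)\) because \(\eta\) has compact support; this is where the factor \(\varepsilon^{1/p}\) comes from.

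For \(|\gamma|\le m\), I write \(\gamma=(\gamma_1,\gamma')\) and expand \(\partial^\gamma w_\varepsilon\) by Leibniz. The result is a finite sum of terms
\[
(\partial^{\gamma-\delta}\eta)\,\partial_{x_1}^{i}(1-\chi_\varepsilon)\,\varrho^{(l)},
\qquad i+l=\delta_1\le m .
\]
The cutoff factor satisfies \(|\partial_{x_1}^{i}(1-\chi_\varepsilon)|\le C_i\varepsilon^{-i}\). The \(\eta\)-factors are uniformly bounded.

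The key input is flatness of \(\varrho\) at \(0\). Because \(\varrho\) vanishes to infinite order there, Taylor's theorem applied to \(\varrho^{(l)}\) gives, for every \(M\) and every \(0\le t\le 2\), the bound \(|\varrho^{(l)}(t)|\le C_{l,M}\,t^{M}\). I apply this with \(M=N\). On the slab this gives \(|\varrho^{(l)}(x_1)|\le C(2\varepsilon)^N\). Each Leibniz term is then bounded by \(C\varepsilon^{N-i}\), which is at least as small as \(C\varepsilon^{N-m}\) for \(\varepsilon\le1\).

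Finally, I raise the pointwise bound to the \(p\)-th power and integrate over the slab, which has measure at most \(C\varepsilon\). This gives
\[
\|\partial^\gamma w_\varepsilon\|_{L^p}\le C\varepsilon^{N-m+1/p}.
\]
Summing over the finitely many \(|\gamma|\le m\) yields the boxed estimate. Since \(N\) is arbitrary, I take \(N>m\), and then \(w_\varepsilon\to0\) in \(W^{m,p}\).

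The only point needing care is that the Taylor constant is uniform in \(x_1\in[0,2\varepsilon]\) and independent of \(\varepsilon\). This holds because the relevant derivatives of \(\varrho\) are bounded on the fixed compact interval \([0,2]\). No genuine obstacle is expected.
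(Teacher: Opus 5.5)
Your proposal is correct and follows the paper's proof essentially step for step. Both arguments localise the difference to the slab \(\{0\le x_1\le 2\varepsilon\}\cap\operatorname{supp}\eta\), expand by Leibniz using the \(\varepsilon^{-i}\) cutoff bounds, use flatness to get \(|\varrho^{(l)}(x_1)|\le C\varepsilon^N\) there, and take the extra factor \(\varepsilon^{1/p}\) from the \(O(\varepsilon)\) slab measure.
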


\begin{proof}
Set
\[
  r_\varepsilon
  :=
  v-v_\varepsilon
  =
  \eta(x)
  \left(
    1-\chi_\varepsilon(x_1)
  \right)
  \varrho(x_1).
\]
Because \(\varrho\) vanishes on the negative half-line,
\[
  \operatorname{supp}r_\varepsilon
  \subseteq
  \left\{
    0\leq x_1\leq2\varepsilon
  \right\}
  \cap
  \operatorname{supp}\eta.
\]

For
\(|\alpha|\leq m\),
Leibniz' rule expresses \(D^\alpha r_\varepsilon\) as a finite sum of
terms bounded by
\[
  C
  \varepsilon^{-\ell}
  \left|
    \varrho^{(q)}(x_1)
  \right|,
  \qquad
  0\leq\ell,q\leq m.
\]
Flatness gives, for every \(N\), \(\left| \varrho^{(q)}(t) \right| \leq C_{N,q}|t|^N\) near the origin. On the support of \(r_\varepsilon\),
\(|x_1|\leq2\varepsilon\),
and hence
\(|D^\alpha r_\varepsilon| \leq C_{\alpha,N} \varepsilon^{N-m}\).

The support has measure \(O(\varepsilon)\) in the shrinking coordinate
and remains in a fixed compact set in the remaining variables.
Therefore
\(\|D^\alpha r_\varepsilon\|_{L^p} \leq C_{\alpha,N} \varepsilon^{N-m+1/p}\).
Summing over \(|\alpha|\leq m\) proves the estimate. Choosing \(N>m-\frac1p\) gives convergence.
\end{proof}

A finite partition of unity gives the same estimate for compactly supported bundle sections.

\subsection{Application to the exponentially tangent system}
\label{appsubsec:exponential-application}

Retain the notation of
Section~\ref{subsec:exponentially-tangent-branches}.
Let \(s_1\) be the compactly supported localisation used in
Section~\ref{subsec:exponential-sobolev-erasure}, and define
\(s_{1,\varepsilon}\) by replacing \(\rho(x)\) on the second source
component with
\(\chi_\varepsilon(x)\rho(x)\).

\begin{proposition}[Extendability of the approximants]
\label{prop:extendability-exponential-approximants}
For every
\[
  \varepsilon>0,
\]
the section \(s_{1,\varepsilon}\) is smooth, compactly supported, and
belongs to
\[
  \mathfrak R_A^\infty.
\]
\end{proposition}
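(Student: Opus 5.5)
The plan is to verify the three assertions separately: smoothness with compact support, and then membership in \(\mathfrak R_A^\infty\). Membership will come from Theorem~\ref{thm:exact-smooth-whitney}, or more directly from an explicit companion on \(M_2\).

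\emph{Smoothness and support.} On the first source component \(s_{1,\varepsilon}\) is identically zero. On the second component \(\mathbb R^2_{(1)}\) it is \(\eta(x,y)\chi_\varepsilon(x)\rho(x)\), where \(\eta\) is the compactly supported localising cutoff from Section~\ref{subsec:exponential-sobolev-erasure}. This is a product of smooth functions, because \(\rho\in C^\infty(\mathbb R)\) (flat at \(0\), zero for \(x\le0\)) and \(\chi_\varepsilon\) is smooth. It is compactly supported because \(\eta\) is. Moreover the factor \(\chi_\varepsilon\) makes it vanish on the strip \(\{x\le\varepsilon\}\).

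\emph{Extendability.} I would construct the companion \(s_2\) on \(M_2=\mathbb R^2\) explicitly rather than checking Whitney conditions. The forced section is \(\sigma[s_{1,\varepsilon}]\). It equals \(0\) on \(V_0\). On \(V_1\) it equals \(\eta(x,y-\delta(x))\chi_\varepsilon(x)\rho(x)\), obtained through the prolonged inverse \(g_1(x,y)=(x,y-\delta(x))\).

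Define \(s_2:=\zeta\cdot\bigl(\eta\circ g_1\bigr)\,\chi_\varepsilon(x)\rho(x)\). Here \(\zeta\in C^\infty(\mathbb R^2)\) equals \(1\) on \(V_1\cap\{x\ge\varepsilon\}\) and \(0\) on \(V_0\cap\{x\ge\varepsilon\}\). Such a \(\zeta\) exists because on \(\{x\ge\varepsilon\}\cap\operatorname{supp}\eta\) the two ribbons are separated by a positive distance. Indeed, \(\delta(x)-2w(x)\ge c_\varepsilon>0\) there, since \(\delta\) is bounded below for \(x\ge\varepsilon\). So a smooth Urysohn function exists on the closed separated sets. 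On the region \(x<\varepsilon\) the factor \(\chi_\varepsilon\) kills everything, so the definition of \(\zeta\) there is irrelevant.

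With this choice \(s_2\) is smooth on \(\mathbb R^2\). It agrees with \(\sigma[s_{1,\varepsilon}]\) on \(U_2=V_0\sqcup V_1\): it is \(0\) on \(V_0\), matches on \(V_1\) for \(x\ge\varepsilon\), and both vanish for \(x<\varepsilon\). Hence \((s_{1,\varepsilon},s_2)\) is a compatible smooth pair, and \(s_{1,\varepsilon}\in\operatorname{Im}\operatorname{res}_1^\infty=\mathfrak R_A^\infty\). Equivalently, the assembled field is locally the full jet of a single smooth section near every boundary point: near \(o\) all jets are zero, and elsewhere only one branch is incident. Theorem~\ref{thm:exact-smooth-whitney} therefore applies.

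\emph{Main obstacle.} The only delicate step is the separation of the two ribbon closures away from \(o\), uniformly on the compact support. The ribbons' boundaries \(B_2\) for \(x>0\) are single-incidence, and the two closures meet only at \(o\). I would handle this by noting that \(\overline{V_0}\cap\{x\ge\varepsilon\}\cap K\) and \(\overline{V_1}\cap\{x\ge\varepsilon\}\cap K\), for \(K\) a compact set containing the relevant supports, are disjoint compact sets. Compact disjoint sets have positive distance, which justifies the choice of \(\zeta\).
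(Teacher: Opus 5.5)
Your proposal is correct and is essentially the paper's argument: the paper also takes the globally prolonged second-branch representative \(q_\varepsilon=\overline{\sigma}_1[s_{1,\varepsilon}]\) (your \((\eta\circ g_1)\chi_\varepsilon\rho\)) and multiplies it by a smooth cutoff equal to one near \(\operatorname{supp}q_\varepsilon\cap\overline{V_1}\) and zero near \(\overline{V_0}\), using that \(\chi_\varepsilon\) vanishes for \(x\le\varepsilon\) so that the two ribbon closures are separated on the relevant region. Your Urysohn function on \(\{x\ge\varepsilon\}\) is a harmless variant; the two closed sets are in fact disjoint globally, so the compactness step is not even needed. Your alternative appeal to Theorem~\ref{thm:exact-smooth-whitney} matches the remark the paper makes in the main text.
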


\begin{proof}
Let \(q_\varepsilon := \overline{\sigma}_1[s_{1,\varepsilon}]\) be the globally defined prolonged representative of the second branch
on \(M_2=\mathbb R^2\). It is smooth and compactly supported. Since
\(\chi_\varepsilon(x)=0 \qquad (x\leq\varepsilon)\),
the support of \(q_\varepsilon\) is bounded away from the
multiple-incidence point.

Set
\(K_\varepsilon := \operatorname{supp}q_\varepsilon \cap \overline{V_1}^{\,M_2}\).
This is compact. On the compact range of \(x\)-coordinates meeting
\(K_\varepsilon\), the two ribbon closures \(\overline{V_0}^{\,M_2} \quad\text{and}\quad \overline{V_1}^{\,M_2}\) are separated by a positive distance. Hence
\(K_\varepsilon \cap \overline{V_0}^{\,M_2} = \varnothing\).

Choose \(\theta_\varepsilon \in C^\infty(\mathbb R^2,[0,1])\) such that \(\theta_\varepsilon=1\) on a neighbourhood of \(K_\varepsilon\), and \(\theta_\varepsilon=0\) on a neighbourhood of
\(\overline{V_0}^{\,M_2}\). Define
\(s_{2,\varepsilon} := \theta_\varepsilon q_\varepsilon\).

On \(V_0\),
\(s_{2,\varepsilon}=0\),
which is the forced value from the first source branch. On \(V_1\), if
\(q_\varepsilon\neq0\), then the point belongs to
\(K_\varepsilon\) and \(\theta_\varepsilon=1\); if
\(q_\varepsilon=0\), the equality is automatic. Therefore
\(s_{2,\varepsilon}|_{U_2} = \sigma[s_{1,\varepsilon}]\).
The pair \((s_{1,\varepsilon},s_{2,\varepsilon})\) is a compatible smooth pair.
\end{proof}

\begin{proposition}[Convergence of the approximants]
\label{prop:convergence-exponential-approximants}
For every
\[
  m\in\mathbb N_0,
  \qquad
  1\leq p<\infty,
\]
\[
  \boxed{
  s_{1,\varepsilon}
  \longrightarrow
  s_1
  \qquad
  \text{in }W^{m,p}(M_1,E_1).
  }
\]
\end{proposition}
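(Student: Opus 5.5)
The plan is to reduce the claim directly to Lemma~\ref{lem:one-sided-flat-cutoff-estimate}. The difference \(s_1-s_{1,\varepsilon}\) vanishes identically on the first source component \(\mathbb R^2_{(0)}\), since both sections are zero there. On the second component \(\mathbb R^2_{(1)}\), the compactly supported localisation of Section~\ref{subsec:exponential-sobolev-erasure} writes \(s_1=\eta(x,y)\rho(x)\) for a fixed \(\eta\in C_c^\infty(\mathbb R^2)\). Hence
\[
  s_1-s_{1,\varepsilon}
  =
  \eta(x,y)\bigl(1-\chi_\varepsilon(x)\bigr)\rho(x).
\]
This is exactly the expression \(v-v_\varepsilon\) of the lemma, with \(d=2\), \(x_1=x\), and \(\varrho=\rho\).

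The hypotheses of the lemma are checked as follows. The function \(\rho\) is smooth on \(\mathbb R\). It vanishes for \(x\le0\) by definition. It is flat at \(0\), because every derivative of \(e^{-1/(2x^2)}\) is a polynomial in \(1/x\) times that exponential, and so tends to zero as \(x\downarrow0\). The cutoff \(\chi\) has the normalisation required in Appendix~\ref{appsubsec:flat-cutoff-convergence}.

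The lemma then gives the estimate \(\|s_1-s_{1,\varepsilon}\|_{W^{m,p}}\le C_{m,N}\varepsilon^{N-m+1/p}\) on the second component, for every \(N\). Choosing \(N>m\) gives convergence.

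The step needing some care is passing from the Euclidean \(W^{m,p}(\mathbb R^2)\)-norm to the norm of \(W^{m,p}(M_1,E_1)\). Here \(M_1\) is a disjoint union of two planes, \(E_1\) is trivial, and the fixed density, bundle metric and connection are smooth. On the fixed compact set \(\operatorname{supp}\eta\), which contains all the supports involved, the two norms are therefore equivalent, with constants independent of \(\varepsilon\). The norm on \(M_1\) is also the sum of the norms on the two components, and the first-component contribution is zero.

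The case \(m=0\) is covered by the same lemma, so no separate argument is needed. I expect no genuine obstacle. The only point to check is that the localising factor \(\eta\) is fixed independently of \(\varepsilon\), which holds by the construction in Section~\ref{subsec:exponential-sobolev-erasure}.
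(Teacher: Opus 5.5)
Your proposal is correct and matches the paper's proof. The difference vanishes on the first source component and equals \(\eta(x,y)\bigl(1-\chi_\varepsilon(x)\bigr)\rho(x)\) on the second, so Lemma~\ref{lem:one-sided-flat-cutoff-estimate} applies. Your added checks are details the paper leaves implicit: the flatness of \(\rho\), and the equivalence, on the fixed compact support, of the Euclidean and bundle \(W^{m,p}\)-norms.
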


\begin{proof}
The difference vanishes on the first source component. In the fixed
coordinates on the second source component, it has the form
\(\eta(x,y) \left( 1-\chi_\varepsilon(x) \right) \rho(x)\),
where \(\eta\) is smooth and compactly supported. Since \(\rho\) is
flat at the origin,
Lemma~\ref{lem:one-sided-flat-cutoff-estimate}
applies.
\end{proof}

Together, the preceding propositions give \(s_1 \in \overline{ \mathfrak R_A^{\infty;m,p} }^{\,W^{m,p}}\) for every finite Sobolev order, even though
\(s_1\notin\mathfrak R_A^\infty\).

\subsection{Scope of the perturbation arguments}
\label{appsubsec:perturbation-scope}

\begin{enumerate}[label=\textup{(\roman*)}]
  \item The supercritical insertion changes a prescribed
        order-\(j\) trace while converging to zero in \(W^{m,p}\). It
        applies when \(j>m-\frac{c}{p}\)         and when that coefficient can be varied independently of all
        lower visible trace data.

  \item The flat-cutoff approximation removes a smooth cross-branch
        obstruction supported arbitrarily close to the
        multiple-incidence locus. Its convergence depends on
        infinite-order flatness, not on the supercritical trace
        inequality.
\end{enumerate}

Neither argument determines the critical case
\(j=m-\frac{c}{p}\),
and neither implies that every smooth Whitney obstruction disappears
under finite Sobolev completion.

\bibliographystyle{plainnat}
\bibliography{references}

\end{document}